\newcommand{\myTitle}{Matchings in Geometric Graphs\xspace}
\newcommand{\myName}{Ahmad Biniaz\xspace}
\newcommand{\myUni}{Carleton University\xspace}
\newcounter{dummy} 
\providecommand{\mLyX}{L\kern-.1667em\lower.25em\hbox{Y}\kern-.125emX\@}
\newtheoremstyle{caps}
  {\topsep}   
  {\topsep}   
  {\itshape}  
  {0pt}       
  {}          
  {.}         
  {5pt plus 1pt minus 1pt} 
  {\spacedlowsmallcaps{\thmname{#1}\thmnumber{ #2}}\thmnote{ (#3)}} 
\theoremstyle{caps}
\newtheorem{definition}{Definition}
\newtheorem{theorem}[definition]{Theorem}
\newtheorem{lemma}[definition]{Lemma}
\newtheorem{observation}[definition]{Observation}
\newtheorem{corollary}[definition]{Corollary}
\newtheorem{conjecture}{Conjecture}
\newtheorem{proposition}{Proposition}
\newtheoremstyle{algo}
  {\topsep}   
  {\topsep}   
  {\upshape}  
  {0pt}       
  {}          
  {}          
  {\newline}  
  {\spacedlowsmallcaps{\thmname{#1}\thmnumber{ #2}}\thmnote{ (#3)}} 
\theoremstyle{algo}
\newtheorem{algorithm}{Algorithm}
\numberwithin{definition}{chapter} 
\numberwithin{figure}{chapter} 
\newcommand{\backrefnotcitedstring}{\relax}
\newcommand{\backrefcitedsinglestring}[1]{(Cited on page~#1.)}
\newcommand{\backrefcitedmultistring}[1]{(Cited on pages~#1.)}
		   \renewcommand*{\backref}[1]{}  
		   \renewcommand*{\backrefalt}[4]{
		      \ifcase #1 %
		         \backrefnotcitedstring%
		      \or%
		         \backrefcitedsinglestring{#2}%
		      \else%
		         \backrefcitedmultistring{#2}%
		      \fi}%
\def\ttl@tocpart{%
  \def\ttl@a{\protect\numberline{\thepart}\@gobble{}}}
\renewcommand{\href}[2]{#2}
\renewcommand{\url}[1]{#1}
\newcommand{\MC}{M_\times}
\newcommand{\MOPT}{M^*}
\newcommand{\bt}{\lambda}
\newcommand{\btopt}{\lambda^*}
\newcommand{\cw}{cw}
\newcommand{\CD}[2]{D[#1,#2]}
\newcommand{\require}{\textbf{Input: }}
\newcommand{\ensure}{\textbf{Output: }}
\newcommand{\LS}[1]{\text{LS$(#1)$}}
\newcommand{\ConvexShape}{\bigtriangledown}
\newcommand{\kTD}[2]{$#1$\text{-}TD#2}
\newcommand{\kDT}[2]{$#1$\text{-}DG#2}
\newcommand{\kDG}[2]{$#1$\text{-}DG#2}
\newcommand{\kGG}[2]{$#1$\text{-}GG#2}
\newcommand{\kRNG}[2]{$#1$\text{-}RNG#2}
\newcommand{\Tm}{t_{min}}
\newcommand{\WS}[1]{\text{WS$(#1)$}}
\newcommand{\BC}{\text{\em BalancedCut}}
\newcommand{\RGB}{\text{\em RGB-matching}}
\newcommand{\RB}{\text{\em RB-matching}}
\newcommand{\CH}[1]{\text{$CH(#1)$}}
\newcommand{\HC}[1]{\text{$H(#1)$}}
\newcommand{\M}[2]{\text{$M_{#1}(#2)$}}
\newcommand{\Min}[2]{\text{\sf Min$(#1,#2)$}}
\newcommand{\Cut}[2]{\text{\sf Cut$(#1,#2)$}}
\newcommand{\Tangent}[2]{\text{\sf Tangent$(#1,#2)$}}
\newcommand{\LC}[1]{{\em left$(#1)$}}
\newcommand{\RC}[1]{{\em right$(#1)$}}
\newcommand{\Kn}[1]{K#1}
\newcommand{\e}[2]{e_{#1}{(#2)}}
\newcommand{\pmp}[1]{pmp(#1)}
\newcommand{\MST}{\text{\em MST}}
\newcommand{\SMGG}{\text{\em Strong-matching}}
\newcommand{\dg}[1]{deg(#1)}
\newcommand{\tr}[1]{t(#1)}
\newcommand{\trmin}{t}
\newcommand{\emin}{e}
\newcommand{\cmin}{D(u,v)}
\newcommand{\cone}[2]{C^{#1}_{#2}}
\newcommand{\hex}[2]{X(#1,#2)}
\newcommand{\tra}[2]{{#1}({#2})}
\newcommand{\G}[2]{G_{#1}({#2})}
\newcommand{\Inf}[1]{\text{Inf}(#1)}
\newcommand{\disc}{\Circle}
\newcommand{\discs}{\scriptsize \Circle}
\newcommand{\ddisc}{\ominus}
\newcommand{\ddiscs}{\scriptsize \ominus}
\newcommand{\sqr}{\Box}
\newcommand{\sqrs}{\scriptsize\Box}
\newcommand{\trid}{\bigtriangledown}
\newcommand{\trids}{\scriptsize\bigtriangledown}
\newcommand{\triu}{\bigtriangleup}
\newcommand{\trius}{\scriptsize\bigtriangleup}
\newcommand{\GUD}{G_{\bigtriangledown \hspace*{-8.2pt} \bigtriangleup}}
\newcommand{\SP}[2]{S^{\text{\tiny +#1}}_\text{\tiny #2}}
\newcommand{\SM}[2]{S^{\text{-\tiny #1}}_\text{\tiny #2}}
\titleformat{\chapter}[display]  {\relax}{\vspace*{-3\baselineskip}\makebox[\linewidth][r]{\color{halfgray}\chapterNumber\thechapter}}{0pt}  {\raggedright\spacedallcaps}[\normalsize\vspace*{.8\baselineskip}\titlerule]
\begin{document}
\frenchspacing
\raggedbottom
\selectlanguage{canadian} 
\pagenumbering{roman}
\pagestyle{plain}
\begin{titlepage}
  \begin{addmargin}[-1cm]{-3cm}
    \begin{center}
        \large  

        \hfill

        \vfill

        \begingroup
            \color{Maroon}\spacedallcaps{\myTitle} \\ \bigskip
        \endgroup

        \spacedlowsmallcaps{\myName}

        \vfill


        {\footnotesize A thesis submitted to the Faculty of Graduate and Post Doctoral Affairs\\
         in partial fulfillment of the requirements for the degree of} \\ \medskip
        Doctor of Philosophy {\footnotesize in} Computer Science \\ \bigskip
        \myUni \\ 
        {\footnotesize Ottawa, Ontario, Canada, 2016}

        \vfill                      

        {\footnotesize \textcopyright\ 2016 \myName}
    \end{center}  
  \end{addmargin}       
\end{titlepage}   

\cleardoublepage
\phantomsection
\begingroup
\let\clearpage\relax
\let\cleardoublepage\relax
\let\cleardoublepage\relax

\chapter*{Abstract}
A {\em geometric graph} is a graph whose vertex set is a set of points in the plane and whose edge set contains straight-line segments. 
A {\em matching} in a graph is a subset of edges of the graph with no shared vertices. A matching is called {\em perfect} if it matches all the vertices of the underling graph. A {\em geometric matching} is a matching in a geometric graph. In this thesis, we study matching problems in various geometric graphs. Among the family of geometric graphs we look at complete graphs, complete bipartite graphs, complete multipartite graphs, Delaunay graphs, Gabriel graphs, and $\Theta$-graphs. The classical {\em matching problem} is to find a matching of maximum size in a given graph. We study this problem as well as some of its variants on geometric graphs. The {\em bottleneck matching problem} is to find a maximum matching that minimizes the length of the longest edge. The {\em plane matching problem} is to find a maximum matching so that the edges in the matching are pairwise
non-crossing. A geometric matching is {\em strong} with respect to a given shape $S$ if we can assign to each edge in the matching a scaled version of $S$ such that the shapes representing the edges are pairwise disjoint. The {\em strong matching problem} is to find the maximum strong matching with respect to a given shape. The {\em matching packing problem} is to pack as many edge-disjoint perfect matchings as possible into a geometric graph. We study these problems and establish lower and upper bounds on the size of different kinds of matchings in various geometric graphs. We also present algorithms for computing such matchings. Some of the presented bounds are tight, while the others need to be sharpened.
\endgroup			

\vfill

\cleardoublepage
\phantomsection


\bigskip

\begingroup
\let\clearpage\relax
\let\cleardoublepage\relax
\let\cleardoublepage\relax
\chapter*{Acknowledgments}

I would not have been able to write this thesis without the help and support of many people.

I wish to express my profound and sincere gratitude to Anil Maheshwari and Michiel Smid. They are all I could have wished for and more in my supervisors. Their continuous encouragement and belief in me were vital in the advancement of my graduate career, and made these past four years into a journey of exploration and excitement.

I also want to thank the other members and students of the Computational Geometry lab at Carleton University for making it such a nice place to work (and occasionally not work). 

Finally, I am deeply thankful to my family for all the things they have done for my sake. I am grateful to my parents for their unconditional support, constant encouragement, and unflinching faith in me throughout my entire life. I am also deeply thankful to my wife, Mohadeseh, for her unfailing love and support, and to my lovely son, Mehrdad.

Thank you!

\endgroup

\pagestyle{scrheadings}
\cleardoublepage
\refstepcounter{dummy}
\setcounter{tocdepth}{2} 
\setcounter{secnumdepth}{3} 
\manualmark
\markboth{\spacedlowsmallcaps{\contentsname}}{\spacedlowsmallcaps{\contentsname}}
\tableofcontents 
\automark[section]{chapter}
\renewcommand{\chaptermark}[1]{\markboth{\spacedlowsmallcaps{#1}}{\spacedlowsmallcaps{#1}}}
\renewcommand{\sectionmark}[1]{\markright{\thesection\enspace\spacedlowsmallcaps{#1}}}

\cleardoublepage
\cleardoublepage
\phantomsection
\addcontentsline{toc}{chapter}{\tocEntry{Preface}}
\begingroup
\let\clearpage\relax
\let\cleardoublepage\relax
\let\cleardoublepage\relax

\chapter*{Preface}
This thesis is in ``integrated article format'' in which each chapter is based on published papers, conference proceedings, or papers awaiting publication.



\begin{itemize}
 \item Chapter~\ref{ch:gg} considers the matching problems in Gabriel graphs. This chapter is a combination of results that have been published in the journal of Theoretical Computer Science~\cite{Biniaz2015-ggmatching-TCS} and results that have been presented in the 32nd European Workshop on Computational Geometry (EuroCG'16)~\cite{Biniaz2016-ewcg}. 

 \item Chapter~\ref{ch:td} considers matching problems in triangular-distance Delaunay graphs. This chapter presents the results that have been published in the journal of Computational Geometry: Theory and Applications~\cite{Biniaz2015-hotd-CGTA}. A preliminary version of these results have been published in the proceedings of the First International Conference on Algorithms and Discrete Applied Mathematics (CALDAM 2015)~\cite{Biniaz2015-hotd-CALDAM}.

 \item Chapter~\ref{ch:sm} considers the strong matching problem. This chapter is based on the results that have been accepted for publication in the journal of Computational Geometry: Theory and Applications, special issue in memoriam: Ferran Hurtado~\cite{Biniaz2015-strong}. 

 \item Chapter~\ref{ch:bm} considers the non-crossing bottleneck matching problem in a point set. The results of this chapter have been published in the journal of Computational Geometry: Theory and Applications~\cite{Abu-Affash2015-bottleneck}. 

 \item Chapter~\ref{ch:bmb} considers the non-crossing bottleneck matching problem in bipartite geometric graphs. The results of this chapter have appeared in the proceedings of the 26th Canadian Conference on Computational Geometry (CCCG 2014) \cite{Biniaz2014-bichromatic}. 

 \item Chapter~\ref{ch:pm} considers the non-crossing maximum matching problem in complete multipartite geometric graphs. The results of this chapter have been published in the proceedings of the 14th International Symposium on Algorithms and Data Structures (WADS 2015) \cite{Biniaz2015-RGB}.

 \item Chapter~\ref{ch:mp} considers the problem of packing edge-disjoint perfect matchings in to a complete geometric graph. The results of this chapter have been published in the journal of Discrete Mathematics {\&} Theoretical Computer Science~\cite{Biniaz2015-packing}. 
\end{itemize}
\endgroup
\bibliographystyle{abbrv}
\bibliography{../thesis}

\pagenumbering{arabic}
\cleardoublepage
\chapter{Introduction}
\label{ch:summary}

A matching in a graph is a subset of edges of the graph with no shared vertices. A geometric graph is a graph whose vertex set is a set of points in the plane and whose edge set contains straight-line segments between the points.
In this thesis we consider matching problems in various geometric graphs. 

In this section we provide a summary of the thesis.
First we give a description of various geometric graphs that are considered in this thesis. We also provide the definition of different kinds of matching problems that we are looking at. Then we provide a brief review of the previous work on each problem, followed by a summary of the obtained results. Moreover, we present a brief description of the main ideas that are used to obtain these results.
\section{Geometric Graphs}
Let $P$ be a set of $n$ points in the plane. A {\em geometric graph}, $G=(P,E)$, is a graph whose vertex set is $P$ and whose edge set $E$ contains straight-line segments with endpoints in $P$. 
A {\em complete geometric
graph}, $K(P)$, is a geometric graph on $P$ that contains a straight-line edge between every pair
of points in $P$. Let $\{P_1,\dots,P_k\}$, where $k\ge 2$, be a partition of $P$. A {\em complete multipartite geometric
graph}, $K(P_1,\dots, P_k)$, is a geometric graph on $P$ that contains a straight-line edge between every point in $P_i$ and every point in $P_j$ where $i\neq j$. 

For two points $p$ and $q$ in $P$ let $D(p,q)$ be the closed disk having $\overline{pq}$ as diameter, and let $\bigtriangledown(p,q)$ (resp. $\bigtriangleup(p,q)$) be the smallest downward (resp. upward) equilateral triangle having $p$ and $q$ on its boundary. Moreover, let the {\em lune} $L(p,q)$ be the intersection of two disks of radius $|pq|$ that are centered at $p$ and $q$, where $|pq|$ denote the Euclidean distance between $p$ and $q$. See Figure~\ref{graphs-intro} for illustration of the following geometric graphs. Assume $P$ does not contain any four co-circular points. Then the {\em Delaunay triangulation} on $P$, denoted by $DT(P)$, has an edge between two points $p$ and $q$ if and only if there exists a circle that contains $p$ and $q$ on its boundary and whose interior does not contain any point of $P$. The {\em Gabriel graph} on $P$, denoted by $GG(P)$, is defined to have an edge between two
points $p$ and $q$ if and only if $D(p,q)$ is empty. The {\em relative neighborhood graph} on $P$, denoted by $RNG(P)$, has an edge $(p,q)$ if and only if $L(p,q)$ is empty. Similarly, the {\em triangular-distance Delaunay graph} on $P$, denoted by $TD(P)$, is defined to have an edge between two
points $p$ and $q$ if and only if $\bigtriangledown(p,q)$ is empty. The  theta-six graph on $P$, denoted by $\Theta_6(P)$, has an edge $(p,q)$ if and only if $\bigtriangledown(p,q)$ or $\bigtriangleup(p,q)$ is empty. The {\em  $L_\infty$-Delaunay graph} on $P$ is defined to have an edge between two
points $p$ and $q$ if and only if there exists an empty axis-parallel square that has $p$ and $q$ on its boundary. 

The {\em order-$k$ Gabriel graph} on $P$, denoted by $\kGG{k}$, has an edge $(p,q)$ if and only if $D(p,q)$ contains at most $k$ points of $P\setminus\{p,q\}$. Note that $\kGG{0}$ is equal to $GG(P)$. The $\kRNG{k}$ graph is defined similarly. The {\em order-$k$ triangular-distance Delaunay graph} on $P$, denoted by $\kTD{k}$, has an edge $(p,q)$ if and only if $\bigtriangledown(p,q)$ contains at most $k$ points of $P\setminus\{p,q\}$. Note that $\kTD{0}$ is equal to $TD(P)$.

Given a radius $r\ge 0$, the {\em disk graph} on $P$, denoted by $DG(P,r)$, has an edge $(p,q)$ if and only if the Euclidean distance between $p$ and $q$ is at most $r$. The {\em unit disk graph} on $P$, which is denoted by $UDG(P)$, is $DG(P,1)$. 

\begin{figure}[H]
 
$\begin{tabular}{ccc}
\multicolumn{1}{m{.31\textwidth}}{\centering\includegraphics[width=.25\textwidth]{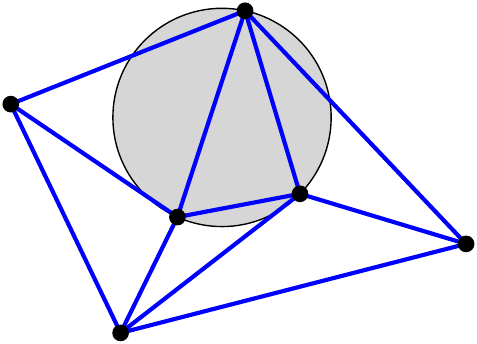}}
&\multicolumn{1}{m{.31\textwidth}}{\centering\includegraphics[width=.25\textwidth]{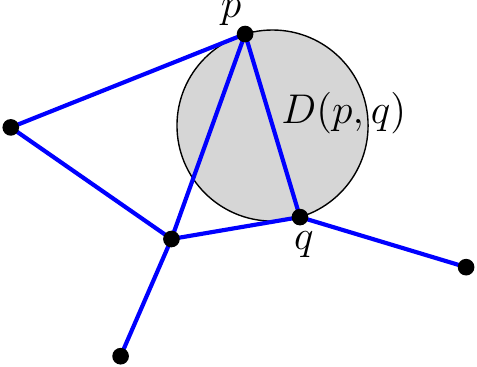}} &\multicolumn{1}{m{.31\textwidth}}{\centering\includegraphics[width=.25\textwidth]{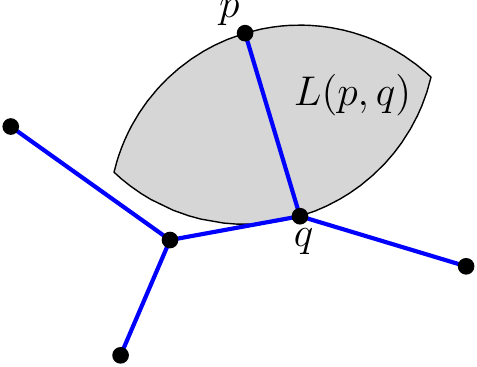}}
\\
 DT & GG (\kGG{0})& RNG
\\
\\
\multicolumn{1}{m{.31\textwidth}}{\centering\includegraphics[width=.25\textwidth]{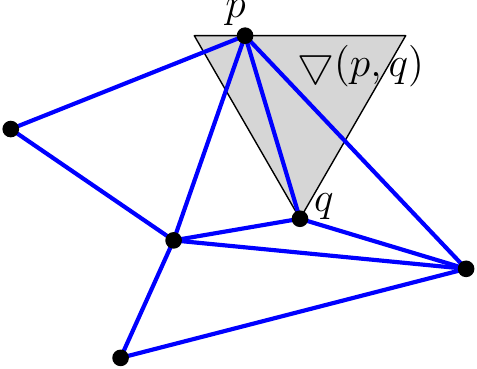}}
&\multicolumn{1}{m{.31\textwidth}}{\centering\includegraphics[width=.25\textwidth]{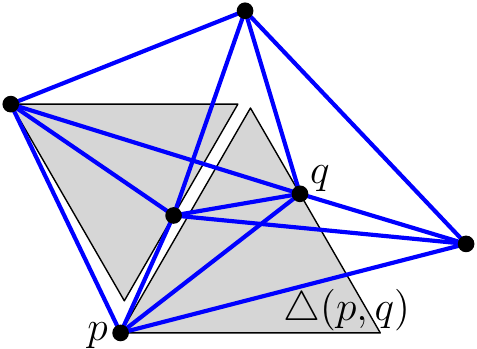}} &\multicolumn{1}{m{.31\textwidth}}{\centering\includegraphics[width=.25\textwidth]{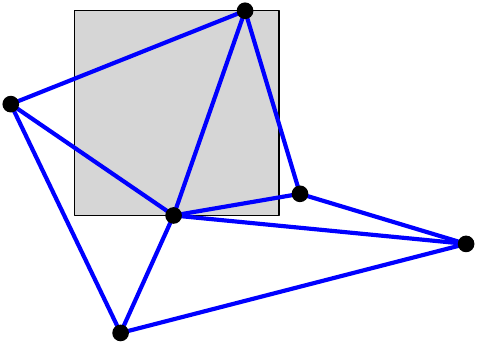}}
\\
 TD (\kTD{0})&$\Theta_6$  & $L_\infty$\text{-Delaunay}
\\
\\
\multicolumn{1}{m{.31\textwidth}}{\centering\includegraphics[width=.25\textwidth]{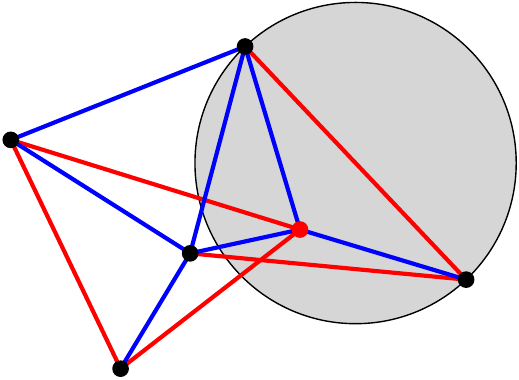}}
&\multicolumn{1}{m{.31\textwidth}}{\centering\includegraphics[width=.25\textwidth]{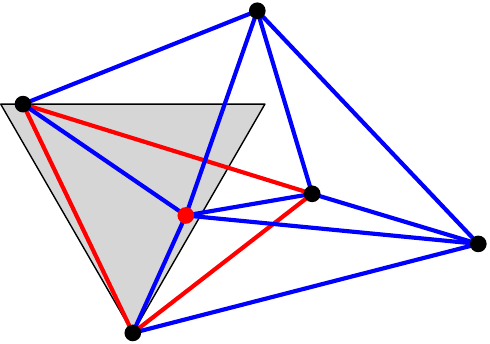}} &\multicolumn{1}{m{.31\textwidth}}{\centering\includegraphics[width=.25\textwidth]{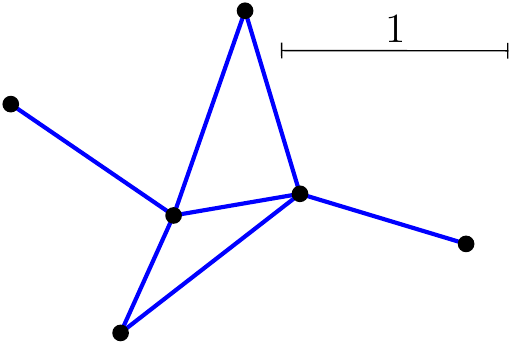}}
\\
\kGG{1} & \kTD{1} & UDG
\end{tabular}$
  \caption{Some geometric graphs.}

\label{graphs-intro}
\end{figure}

\section{Matching Problems}
Consider a geometric graph $G$. We say that two edges of $G$ {\em cross} if
they have a point in common that is interior to both edges. Two edges are {\em disjoint} if they have
no point in common. A {\em matching} in $G$ is a set of edges that do not share vertices. A {\em maximum
matching} is a matching with maximum cardinality. A {\em perfect matching} is a matching that matches all the vertices of $G$. When we talk about perfect matchings we assume that $G$ has an even number of points. A {\em plane matching} is a matching whose edges do not cross. A {\em bottleneck matching} is a maximum matching in $G$ in which the length of the longest edge is minimized. Given a matching $M$ and geometric shape $S$, we say that $M$ is a {\em strong matching} if we can assign to each edge $(p,q)$ in $M$ a scaled copy of the shape $S$ such that $p$ and $q$ are on the boundary of $S$ and all the shapes are pairwise disjoint (do not overlap). A {\em bottleneck biconnected subgraph} of $G$ is a 2-connected spanning subgraph of $G$ in which the length of the longest edge is minimized. A {\em bottleneck Hamiltonian cycle} is defined similarly.

Let $\{P_1,\dots,P_k\}$, where $k\ge 2$, be a partition of $P$. Assume the points in $P_i$ are colored $C_i$. $P$ is called {\em color-balanced} if no color is in strict majority, i.e., $|P_i|\le |P|/2$ for all $i\in\{1,\dots,k\}$. A matching in $K(P_1,\dots,P_k)$ is called a {\em colored matching}. A {\em balanced cut} is a line $\ell$ that partitions a color-balanced point set $P$ into two color-balanced point sets $Q_1$ and $Q_2$ such that $\max\{|Q_1|,|Q_2|\}\le \frac{2}{3}|P|$. See Figure~\ref{matchings-intro}(a). The ham-sandwich cut is a balanced cut; given a set of red and blue points, there exists a line that simultaneously bisects the red points and the blue points. A ham-sandwich cut can be computed in $O(n)$ time~\cite{Lo1994}.

\begin{figure}[H]
  \centering
$\begin{tabular}{cc}
\multicolumn{1}{m{.55\textwidth}}{\centering\includegraphics[width=.38\textwidth]{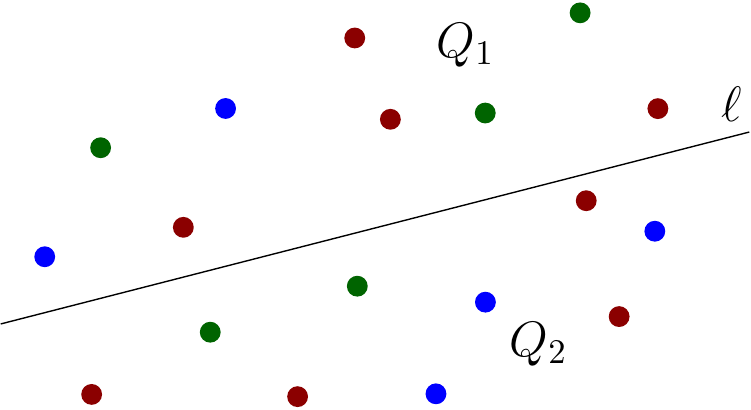}}
&\multicolumn{1}{m{.45\textwidth}}{\centering\includegraphics[width=.3\textwidth]{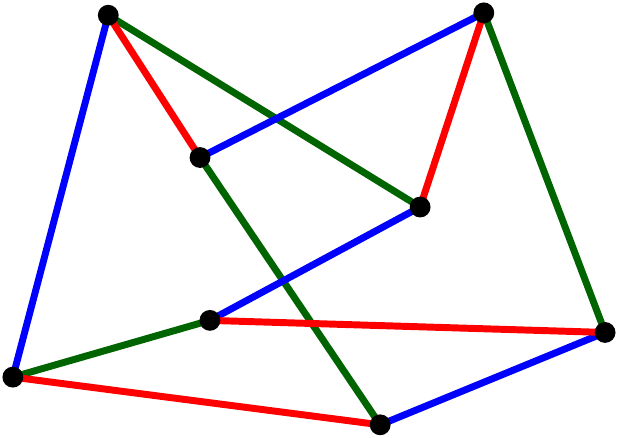}} 
\\
(a) & (b)
\end{tabular}$
  \caption{(a) A balanced cut. (b) A packing example of three edge-disjoint plane perfect matchings.}
\label{matchings-intro}
\end{figure}

We say that a set of subgraphs is {\em packed into} $K(P)$, if the subgraphs in the set are pairwise edge-disjoint (do not share any edge). In a packing problem, we ask for the largest number of subgraphs of a given type that can be packed into $K(P)$. Among all subgraphs of $K(P)$, plane perfect matchings, plane spanning trees, and plane spanning paths are of interest. See Figure~\ref{matchings-intro}(b). We say that a graph $G$ is {\em matching persistent} if by removing any perfect matching $M$ from $G$, the resulting graph, $G-M$, has a perfect
matching. We define the {\em matching persistency} of $G$ as the size of the smallest set $\mathcal{M}$ of edge-disjoint perfect matchings that can be removed from $G$ such that $G - \mathcal{M}$ does not have any perfect matching. The {\em plane matching persistency} of $G$ is defined similarly while each of the matchings in $\mathcal{M}$ is plane.
\addtocontents{toc}{\protect\setcounter{tocdepth}{0}}
\section{Previous Work}
\addtocontents{toc}{\protect\setcounter{tocdepth}{2}}
In a (strong) matching problem, usually we are looking for a (strong) matching of maximum cardinality in a specific geometric graph. Dillencourt~\cite{Dillencourt1990} proved that the Delaunay triangulation of $P$, denoted by $DT(P)$, contains a perfect matching. 
\'{A}brego et al. \cite{Abrego2004} proved that $DT(P)$ has a strong circle matching of size at least $\lceil(n-1)/8\rceil$.
\'{A}brego et al. \cite{Abrego2004, Abrego2009} proved that $L_\infty{\textrm -}DT(P)$ has a perfect matching and a strong matching of size at least $\lceil n/5\rceil$.

A problem related to the bottleneck matching problem is the {\em containment problem} in which we want to determine the smallest value $k^*$ for which \kGG{k^*}{} contains a bottleneck perfect matching of every point set $P$ (or of $K(P)$). Chang et al.~\cite{Chang1992} proved that a bottleneck perfect matching of $P$ is contained in $\kRNG{16}$. Since for a given point set $\kRNG{k}$ is a subgraph of $\kGG{k}$, this implies that $\kGG{16}$ contains a bottleneck matching of $P$. Chang et al.~\cite{Chang1992b} showed that a bottleneck biconnected subgraph of $P$ is contained in $\kRNG{1}$, and hence in $\kGG{1}$.
In~\cite{Abellanas2009} it is shown that a bottleneck Hamiltonian cycle for $P$ is contained in $\kGG{15}$. Recently, Kaiser et
al.~\cite{Kaiser2015} improved the bound by showing that $\kGG{10}$ contains a bottleneck Hamiltonian cycle of $P$. This implies that $\kGG{10}$ has a perfect matching. They also showed that $\kGG{5}$ may not contain any bottleneck Hamiltonian cycle of $P$.

The bottleneck plane matching problem is to compute a plane perfect matching in a geometric graph in which the length of the longest edge is minimized. 
Let $\{R,B\}$ be a partition of $P$ such that $|R|=|B|$. Carlsson et al.~\cite{Carlsson2010} showed that the bottleneck plane perfect matching problem in $K(R,B)$ is NP-hard. When $R\cup B$ is in general position, they presented an $O(n^4\log n)$-time algorithm.
Abu-Affash et al.~\cite{Abu-Affash2014} showed that the bottleneck plane perfect matching problem in $K(P)$ is NP-hard and does not admit a PTAS (Polynomial Time Approximation Scheme), unless P=NP. Let $\lambda^*$ be the length of the longest edge in an optimal bottleneck plane perfect matching in $K(P)$. Abu-Affash et al.~\cite{Abu-Affash2014} presented a polynomial-time algorithm that computes a plane perfect matching whose edges have length at most $2\sqrt{10}\lambda^*$.

A plane perfect matching in $K(R,B)$, with $|R|=|B|$, can be computed optimally in $\Theta(n\log n)$ time by recursively applying the ham-sandwich theorem~\cite{Lo1994}. Let $\{P_1,\dots,P_k\}$ with $k\ge 2$, be a partition of $P$. Aichholzer et al.~\cite{Aichholzer2010} and Kano et al.~\cite{Kano2013} showed that $K(P_1,\dots,P_k)$ has a plane colored matching if and only if $P$ is color-balanced. Kano et al.~\cite{Kano2013} presented an algorithm that computes such a matching in $O(n^2\log n)$ time.

In a packing problem we usually ask for the maximum number of edge-disjoint structures that can be packed into a complete graph. Aichholzer et al.~\cite{Aichholzer2014-packing} considered the problem of packing plane spanning trees and plane Hamiltonian paths into $K(P)$. They showed that at least $\sqrt{n}$ plane spanning trees can be packed into $K(P)$. They also showed how to pack 2 plane Hamiltonian paths into $K(P)$.
\addtocontents{toc}{\protect\setcounter{tocdepth}{0}}
\section{Obtained Results}
\addtocontents{toc}{\protect\setcounter{tocdepth}{2}}
We looked at the maximum matching problem in Gabriel graphs and TD-Delau\-nay graphs. We showed that $\kGG{0}$, $\kGG{1}$, and $\kGG{2}$ has a matching of size at least $\frac{n-1}{4}$, $\frac{2(n-1)}{5}$, and $\frac{n-1}{2}$, respectively~\cite{Biniaz2015-ggmatching-TCS}. We  also showed that $\kTD{0}$, $\kTD{1}$, $\kTD{2}$ contains a matching of size at least $\frac{n-1}{3}$, $\frac{2(n-1)}{5}$, and $\frac{n-1}{2}$, respectively~\cite{Biniaz2015-hotd-CALDAM, Biniaz2015-hotd-CGTA}.
As for Euclidean bottleneck matching, we showed that a bottleneck perfect matching of $P$ is contained in $\kGG{9}$, but $\kGG{8}$ may not have any bottleneck perfect matching~\cite{Biniaz2015-ggmatching-TCS, Biniaz2016-ewcg}. As for triangular-distance bottleneck matching, we showed that a bottleneck perfect matching of $P$ is contained in $\kTD{6}$, but $\kTD{5}$ may not have any bottleneck perfect matching~\cite{Biniaz2015-hotd-CALDAM, Biniaz2015-hotd-CGTA}. In~\cite{Babu2013-WALCOM, Babu2014} we considered the maximum matching problem in $\kTD{0}$.

In~\cite{Biniaz2015-strong} we considered the strong matching problem. We showed that if shape $S$ is a diametral-disk (a disk whose diameter is a line segment between two input points), a downward equilateral triangle, and an axis-aligned square then $P$ admits a strong matching of size at least $\frac{n -1}{17}$, $\frac{n - 1}{9}$, and $\frac{n - 1}{4}$, respectively. If both downward and upward equilateral triangles are allowed, we compute a strong matching of size at least $\frac{n - 1}{4}$.

For the plane matching problem in complete multipartite geometric graphs, we show how to compute a plane maximum matching in $K(P_1,\allowbreak\dots,\allowbreak P_k)$ in $\Theta(n\log n)$ time~\cite{Biniaz2015-RGB}. We also extended this results to the case where the points are in the interior of a simple polygon~\cite{Biniaz2015-geodesic, Biniaz2016-CGTA-geodesic}. 

In~\cite{Abu-Affash2015-bottleneck} we considered the plane bottleneck matching problem in $K(P)$, which is NP-hard. We show how to compute a plane bottleneck matching of size at least $\frac{n}{5}$ in $K(P)$ with edges of length at most $\lambda^*$ in $O(n \log^2 n)$ time, where $\lambda^*$ is the length of the longest edge in an optimal bottleneck plane perfect matching. We also presented an $O(n \log n)$-time approximation algorithm that computes a plane matching of size at least $\frac{2n}{5}$ in $K(P)$ whose edges have length at most $(\sqrt{2}+\sqrt{3})\lambda^*$. In~\cite{Biniaz2014-bichromatic} we considered the bottleneck matching problem in complete bipartite geometric graphs.

In~\cite{Biniaz2015-packing} we considered the problem of packing plane edge-disjoint matchings into $K(P)$. We proved that at least $\lfloor\log_2n\rfloor-1$ edge-disjoint plane perfect matchings can be packed into $K(P)$. 

\section{Summary of the thesis}
In this section we describe the obtained results in more detail. Moreover we give a brief description of the approaches used to obtain these results.   
\subsection{Maximum Matchings}
In~\cite[Chapter~\ref{ch:gg}]{Biniaz2015-ggmatching-TCS,Biniaz2016-ewcg} and \cite[Chapter~\ref{ch:td}]{Biniaz2015-hotd-CGTA, Biniaz2015-hotd-CALDAM} we consider the maximum matching and bottleneck matching containment problems in $\kGG{k}$ and $\kTD{k}$ graphs. In~\cite{Biniaz2015-ggmatching-TCS} we showed that $\kGG{0}$ has a matching of size at least $\frac{n-1}{4}$ and this bound is tight. We also proved that $\kGG{1}$ has a matching of size at least
$\frac{2(n-1)}{5}$ and $\kGG{2}$ has a perfect matching. As for Euclidean bottleneck matching, we showed that a bottleneck matching of $P$ is contained in $\kGG{9}$, but $\kGG{8}$ may not have any bottleneck matching; see~\cite{Biniaz2015-ggmatching-TCS, Biniaz2016-ewcg}. In~\cite{Biniaz2016-ewcg} we also showed that $\kGG{7}$ may not contain any bottleneck Hamiltonian cycle of $P$.
In~\cite{Babu2014, Babu2013-WALCOM} we have shown that $\kTD{0}$ contains a matching of size at least $\frac{n-1}{3}$ and this bound is tight (this result is not included in the thesis). In~\cite{Biniaz2015-hotd-CGTA, Biniaz2015-hotd-CALDAM} we proved that $\kTD{1}$ has a matching of size at least $\frac{2(n-1)}{5}$ and $\kTD{2}$ has a perfect matching. As for triangular-distance bottleneck matching, we showed that a bottleneck matching of $P$ is contained in $\kTD{6}$, but $\kTD{5}$ may not have any bottleneck matching. We also showed that a bottleneck biconnected subgraph of $P$ is contained in $\kTD{1}$. In addition, we showed that $\kTD{7}$ contains a bottleneck Hamiltonian cycle of $P$ while $\kTD{5}$ may not contain any. Tables~\ref{table1-intro} and~\ref{table2-intro} summarize the results.
\begin{table}[htb]
\centering
\caption{Lower bounds on the size of maximum (strong) matchings.}
\vspace{3pt}
\label{table1-intro}
    \begin{tabular}{|c|c|c||c|c|}
         \hline
             Graph 	& Max matching & Reference& Strong matching &Reference \\  \hline \hline  
	      {$DT$}& 	${\lfloor \frac{n}{2}\rfloor}$&\cite{Dillencourt1990}& $\lceil\frac{n-1}{8}\rceil$&\cite{Abrego2004}\\\hline
	      
	    \multirow{2}{*}{$L_\infty{\textrm -}DT$}& \multirow{2}{*}{$\lfloor \frac{n}{2}\rfloor$} &\multirow{2}{*}{\cite{Abrego2004, Abrego2009}}&
	    $\lceil\frac{n}{5}\rceil$ & \cite{Abrego2004, Abrego2009}       \\ 
		& &	& $\lceil\frac{n-1}{4}\rceil$& \cite[Section~\ref{sm:infty-Delaunay-section}]{Biniaz2015-strong}\\ 
	    \hline
	      {$\kGG{0}$} &${\lceil \frac{n-1}{4}\rceil}$&\cite[Section~\ref{gg:max-matching-section}]{Biniaz2015-ggmatching-TCS} &$\lceil\frac{n-1}{17}\rceil$&\cite[Section~\ref{sm:Gabriel-section}]{Biniaz2015-strong}\\  
	      {$\kGG{1}$} &${\lceil \frac{2n-2}{5}\rceil}$&\cite[Section~\ref{gg:max-matching-section}]{Biniaz2015-ggmatching-TCS} &-&-\\  
	      {$\kGG{2}$} &${\lfloor \frac{n}{2}\rfloor}$&\cite[Section~\ref{gg:max-matching-section}]{Biniaz2015-ggmatching-TCS} &-&-\\ \hline              
	    {$\kTD{0}$} &${\lceil \frac{n-1}{3}\rceil}$&\cite{Babu2014} &$\lceil\frac{n-1}{9}\rceil$&\cite[Section~\ref{sm:half-theta-six-section}]{Biniaz2015-strong}\\ 
	    {$\kTD{1}$} &${\lceil \frac{2n-2}{5}\rceil}$&\cite[Section~\ref{td:matching2}]{Biniaz2015-hotd-CGTA} &-&-\\
	    {$\kTD{2}$} &${\lfloor \frac{n}{2}\rfloor}$&\cite[Section~\ref{td:matching2}]{Biniaz2015-hotd-CGTA} &-&-\\ 
	    {$\Theta_6$} &$\lceil \frac{n-1}{3}\rceil$& \cite{Babu2014} &$\lceil\frac{n-1}{4}\rceil$&\cite[Section~\ref{sm:theta-six-section}]{Biniaz2015-strong}\\ 
	    \hline
    \end{tabular}
\end{table}

\begin{paragraph}{Our approach:}In order to provide a lower bound on the size of a maximum matching in $\kGG{k}$ and $\kTD{k}$, we first give a lower bound on the number of components that result after removing a set $S$ of vertices from $\kGG{k}$ and $\kTD{k}$. Then we use the following theorem of Tutte and Berge.
For a graph $G=(V,E)$ and $S\subseteq V$, let $G-S$ be the subgraph obtained from $G$ by removing all vertices in $S$, and let $o(G-S)$ be the number of odd components in $G-S$, i.e., connected components with an odd number of vertices. In a graph $G$, the {\em deficiency}, $\text{def}_G(S)$, is $o(G-S)-|S|$. Let $\text{def}(G)=\max_{S\subseteq V}{\text{def}_G(S)}$.
\end{paragraph}
\begin{theorem}[Tutte-Berge formula~\cite{Berge1958}] 
\label{gg:Berge-intro} 
The size of a maximum matching in $G$ is $$\frac{1}{2}(n-\mathrm{def}(G)).$$
\end{theorem}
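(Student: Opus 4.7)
The plan is to prove the upper and lower bounds on the maximum matching size separately and to show that they coincide at $\tfrac{1}{2}(n-\text{def}(G))$. For the upper bound, I would fix an arbitrary matching $M$ and an arbitrary $S\subseteq V$. Each odd component $C$ of $G-S$ has odd order, so $M$ cannot saturate $V(C)$ internally; at least one vertex of $C$ is either unmatched by $M$ or matched by $M$ to a vertex of $S$. The vertices of $S$ absorb at most $|S|$ such pairings, so $M$ leaves at least $o(G-S)-|S|=\text{def}_G(S)$ vertices of $G$ unmatched. Maximizing over $S$ gives $|M|\le\tfrac{1}{2}(n-\text{def}(G))$.

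For the lower bound I would reduce to Tutte's $1$-factor theorem. Set $d=\text{def}(G)$; then $d\ge 0$ (take $S=\emptyset$) and $d\equiv n\pmod 2$, because $n-|S|$ equals the sum of the orders of the components of $G-S$, so $o(G-S)-|S|\equiv n\pmod 2$ for every $S$. Form $G'$ from $G$ by adding $d$ new vertices that constitute a clique $D$ and are all adjacent to every vertex of $G$; then $|V(G')|=n+d$ is even. I would then verify $o(G'-S')\le|S'|$ for every $S'\subseteq V(G')$. If $D\not\subseteq S'$, any surviving dummy is adjacent to every surviving vertex of $V(G)$ and to the other surviving dummies, so $G'-S'$ is connected; thus $o(G'-S')\le 1\le|S'|$ when $S'\neq\emptyset$, while for $S'=\emptyset$ the sole component has even order $n+d$ and $o=0$. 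If instead $D\subseteq S'$, write $S'=D\cup T$ with $T\subseteq V(G)$; then $G'-S'=G-T$ and $o(G'-S')=o(G-T)\le|T|+d=|S'|$ by the definition of $d$. Tutte's theorem therefore yields a perfect matching $M'$ of $G'$. Discarding its edges incident to $D$ leaves a matching $M$ in $G$ with at most $d$ unmatched vertices, so $|M|\ge\tfrac{1}{2}(n-d)$, which meets the upper bound.

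The main obstacle in this plan is that Tutte's $1$-factor theorem is used as a black box; a fully self-contained proof would either prove Tutte's theorem first (via Berge's augmenting-path lemma together with a minimal counterexample argument) or replace the reduction entirely with a direct Gallai--Edmonds style analysis of alternating and augmenting paths rooted at unmatched vertices. Both alternatives are substantially more intricate, so since the Tutte--Berge formula is used here only as a tool for bounding matchings in concrete geometric graphs, the reduction through Tutte's theorem is the most economical route.
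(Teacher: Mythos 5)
The paper does not prove this theorem at all: in both places where the Tutte--Berge formula appears (here and as Theorem~\ref{gg:Berge} in Chapter~\ref{ch:gg}), the paper simply cites Berge~\cite{Berge1958} and uses the statement as a black-box tool, in the same way it treats Tutte's theorem. So there is no proof in the paper against which to compare. That said, your proposed proof is correct and is the standard textbook derivation. The upper bound argument (each odd component of $G-S$ either has an exposed vertex or spends an edge into $S$, and $S$ absorbs at most $|S|$ such edges, so at least $\mathrm{def}_G(S)$ vertices are exposed) is right, and the lower bound via augmenting $G$ with a clique $D$ of $d=\mathrm{def}(G)$ universal dummy vertices and then verifying Tutte's condition on $G'$ is also right. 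Your parity check that $d\equiv n\pmod 2$, your case split on whether $D\subseteq S'$, and the final step of dropping from the perfect matching $M'$ of $G'$ the at most $d$ edges meeting $D$ are all sound. Your closing remark is also apt: since the paper itself invokes Tutte's theorem (Theorem~\ref{gg:Tutte}) as a cited result rather than proving it, leaning on Tutte's theorem as a black box here is consistent with the paper's level of self-containment, and is indeed the economical route.
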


\begin{table}
\centering
\caption{Bottleneck structures containment in $\kGG{k}$ and $\kTD{k}$.}
\vspace{3pt}
\label{table2-intro}
    \begin{tabular}{|c|c|c||c|c|c|}
         \hline
             Distance 	& Bott. struct.&$\notin$ & Reference& $\in$ &Reference \\  \hline \hline 
	      	      
	    \multirow{3}{*}{Euclidean}
	  &biconnected subgraph& {$\kGG{0}$} &\cite{Chang1992b}& $\kGG{1}$& \cite{Chang1992b} \\ 
	  &matching& {$\kGG{8}$} &\cite[Section~\ref{8GGsec}]{Biniaz2015-ggmatching-TCS}& $\kGG{9}$& \cite[Section~\ref{9GGsec}]{Biniaz2016-ewcg} \\
	  &cycle& {$\kGG{7}$} &\cite[Section~\ref{7GGsec}]{Biniaz2016-ewcg}& $\kGG{10}$& \cite{Kaiser2015} \\
	    \hline

	  \multirow{3}{*}{Triangular}&biconnected subgraph& 	$\kTD{0}$&\cite[Section~\ref{td:biconnected-section}]{Biniaz2015-hotd-CGTA}& $\kTD{1}$&\cite[Section~\ref{td:biconnected-section}]{Biniaz2015-hotd-CGTA}\\ 
	&matching& 	$\kTD{5}$&\cite[Section~\ref{td:bottleneck-matching-section}]{Biniaz2015-hotd-CGTA}& $\kTD{6}$&\cite[Section~\ref{td:bottleneck-matching-section}]{Biniaz2015-hotd-CGTA}\\
	&cycle& 	$\kTD{5}$&\cite[Section~\ref{td:Hamiltonicity}]{Biniaz2015-hotd-CGTA}& $\kTD{7}$&\cite[Section~\ref{td:Hamiltonicity}]{Biniaz2015-hotd-CGTA}\\
	    \hline
    \end{tabular}
\end{table}

In order to show that $\kGG{9}$ and $\kTD{6}$ contain bottleneck matchings, we do the following.
For a matching $M$ we define the {\em length sequence} of $M$, \LS{M}, as the sequence containing the lengths of the edges of $M$ in non-increasing order. A matching $M_1$ is said to be less than a matching $M_2$ if \LS{M_1} is lexicographically smaller than \LS{M_2}. Then, we show that the matching with the minimum length sequence is contained in $\kGG{10}$ and $\kTD{6}$.

\subsection{Strong Matchings}
A geometric matching is {\em strong} with respect to a given shape $S$ if we can assign to each edge in the matching a scaled version of $S$ with the endpoints of the edge on the boundary, such that the shapes representing the edges are pairwise disjoint.

In~\cite[Chapter~\ref{ch:sm}]{Biniaz2015-strong} we considered the strong matching problem with respect to a given geometric object $S$. We proved that
if $S$ is a diametral-disk (a disk whose diameter is a line segment between two input points), then $P$ (and hence $GG(P)$) has a strong matching of size at least $\lceil\frac{n -1}{17}\rceil$, and if $S$ is a downward equilateral triangle, then $P$ (and hence $TD(P)$) has a strong matching of size at least $\lceil\frac{n - 1}{9}\rceil$. In case both downward and upward equilateral triangles are allowed, we compute a strong matching of size at least $\lceil\frac{n - 1}{4}\rceil$ in $P$ (and hence in $\Theta_6(P)$). If $S$ is an axis-aligned square, then we compute a strong matching of size at least $\lceil\frac{n - 1}{4}\rceil$ in $P$ (and hence in $L_\infty{\textrm -}DT(P)$); this improves the previous bound of $\lceil\frac{n}{5}\rceil$. The results are summarized in Table~\ref{table1-intro}.

\begin{figure}[H]
  \centering
$\begin{tabular}{ccc}
\multicolumn{1}{m{.31\textwidth}}{\centering\includegraphics[width=.22\textwidth]{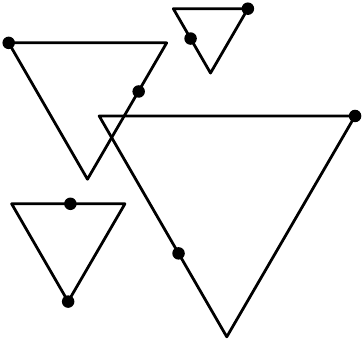}}
&\multicolumn{1}{m{.31\textwidth}}{\centering\includegraphics[width=.22\textwidth]{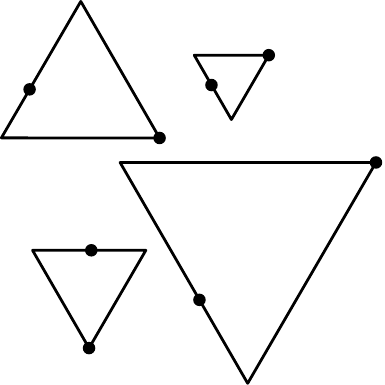}} &\multicolumn{1}{m{.31\textwidth}}{\centering\includegraphics[width=.22\textwidth]{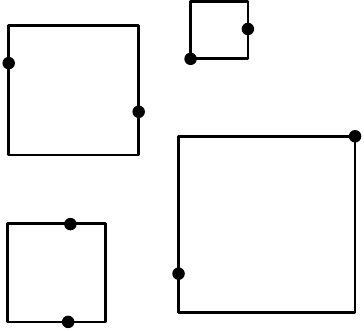}}
\\
(a) & (b)& (c)
\end{tabular}$
  \caption{Point set $P$ and (a) a non-strong perfect matching in $\G{\trids}{P}$ (two shapes overlap), (b) a perfect strong matching in $\GUD(P)$, and (c) a perfect strong matching in $\G{\sqr}{P}$.}
\label{sm:strong-example-intro}
\end{figure}

\begin{paragraph}{Our approach:}To compute a strong matching with diametral-disks and equilateral triangles we present the following algorithm that is depicted in Algorithm~\ref{sm:alg1-intro}. For a given point set $P$ and a geometric shape $S\in\{\ddisc, \trid\}$, the edge weighted geometric graph $\G{S}{P}$ on $P$ is defined to have an edge between two points of $P$ if and only if there exists an empty shape $S$ having the two points on its boundary. The weight of each edge $e$ is equal to the area of $S(e)$, where $S(e)$ is a smallest scaled copy of $S$ containing $e$. Let $T$ be a minimum spanning tree of $G_S(P)$. For each edge $e\in T$ we denote by $T(e^+)$ the set of all edges in $T$ whose weight is at least $w(e)$. Moreover, we define the {\em influence set} of $e$, as the set of all edges in $T(e^+)$ whose representing shapes overlap with $S(e)$, i.e.,
$$\Inf{e}=\{e': e'\in T(e^+), S(e')\cap S(e)\neq \emptyset\}.$$
\end{paragraph}
Note that $\Inf{e}$ is not empty, as $e\in \Inf{e}$. Consequently, we define the {\em influence number} of $T$ to be the maximum size of a set among the influence sets of edges in $T$, i.e.,
$$\Inf{T}=\max\{|\Inf{e}|: e\in T\}.$$

Algorithm~\ref{sm:alg1-intro} receives $\G{S}{P}$ as input and computes a strong matching in $\G{S}{P}$ as follows. The algorithm starts by computing a minimum spanning tree $T$ of $\G{S}{P}$, where the weight of each edge is equal to the area of its representing shape. Then it initializes a forest $F$ by $T$, and a matching $\mathcal{M}$ by an empty set. Afterwards, as long as $F$ is not empty, the algorithm adds to $\mathcal{M}$, the smallest edge $\emin$ in $F$, and removes the influence set of $e$ from $F$. Finally, it returns $\mathcal{M}$.
\begin{algorithm}[H]                  
\caption{\SMGG$(\G{S}{P})$}          
\label{sm:alg1-intro} 
\begin{algorithmic}[1]
      \State $T\gets \MST(G_S(P))$
      \State $F\gets T$
      \State $\mathcal{M}\gets \emptyset$
      \While {$F\neq \emptyset$}
	  \State $\emin\gets $ smallest edge in $F$
	  \State $\mathcal{M}\gets \mathcal{M}\cup \{\emin\}$
	  \State $F\gets F - \Inf{\emin}$
	  \EndWhile
    \State \Return $\mathcal{M}$
\end{algorithmic}
\end{algorithm}
\begin{theorem}
\label{sm:GS-thr-intro}
Given a set $P$ of $n$ points in the plane and a shape $S\in\{\ddisc, \trid\}$, Algorithm~\ref{sm:alg1-intro} computes a strong matching of size at least $\lceil\frac{n-1}{\emph{Inf}(T)}\rceil$ in $\G{S}{P}$, where $T$ is a minimum spanning tree of $\G{S}{P}$. 
\end{theorem}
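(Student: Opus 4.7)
The plan is to verify three claims: (i) the returned set $\mathcal{M}$ is a matching, (ii) $\mathcal{M}$ is strong, i.e.\ the shapes $\{S(e): e\in\mathcal{M}\}$ are pairwise disjoint, and (iii) $|\mathcal{M}|\geq \lceil (n-1)/\mathrm{Inf}(T)\rceil$. Each claim is a short consequence of the definitions and the way the while-loop processes edges, so the proof is essentially a careful bookkeeping argument.

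The cleanest step is (iii). Since $T$ is a spanning tree on $n$ vertices, it has exactly $n-1$ edges, and $F$ is initialized to $T$ and only shrinks. In each iteration the algorithm removes the set $\mathrm{Inf}(e_{\min})$ from $F$, which by definition of $\mathrm{Inf}(T)$ has size at most $\mathrm{Inf}(T)$. The loop terminates only when $F=\emptyset$, so the number of iterations — which equals $|\mathcal{M}|$ — is at least $\lceil (n-1)/\mathrm{Inf}(T)\rceil$.

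Claim (ii) is the crux. I plan to argue inductively on iterations that whenever an edge $e$ is added to $\mathcal{M}$, its representing shape $S(e)$ is disjoint from $S(e')$ for every edge $e'$ already in $\mathcal{M}$. The key observation is that $e_{\min}$ is the \emph{smallest} edge currently in $F$, so every edge $e'\in F$ satisfies $w(e')\geq w(e_{\min})$ and therefore $e'\in T(e_{\min}^+)$. Hence any edge of $F$ whose representing shape meets $S(e_{\min})$ lies in $\mathrm{Inf}(e_{\min})$ and is removed from $F$ in the same iteration. Since all edges subsequently added to $\mathcal{M}$ are drawn from $F$ after this removal, none of them can have a shape overlapping $S(e_{\min})$. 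Applying this observation at every iteration yields the desired pairwise-disjointness.

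Finally, (i) follows from (ii) because if two edges $e,e'$ shared an endpoint $p$, then $p\in S(e)\cap S(e')$ (each representing shape contains the endpoints of its edge on its boundary), contradicting disjointness. The main subtlety, and the only point I would spell out carefully, is the weight comparison in (ii): one should fix a consistent tie-breaking rule for edges of equal area so that ``smallest edge of $F$'' is well-defined and the implication ``$w(e')\geq w(e_{\min})\Rightarrow e'\in T(e_{\min}^+)$'' is unambiguous. Beyond that, everything reduces to the definitions of $S(e)$, $T(e^+)$, and $\mathrm{Inf}(e)$.
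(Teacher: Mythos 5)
Your proof is correct and follows essentially the same argument as the paper: the key observation that $e_{\min}$ being the smallest edge in $F$ forces $F\subseteq T(e_{\min}^+)$, so every edge of $F$ whose shape meets $S(e_{\min})$ lies in $\Inf{e_{\min}}$ and is removed, together with the trivial count of at most $\Inf{T}$ removals per iteration. Your explicit derivation of the matching property from disjointness of the shapes is a small clarification the paper leaves implicit, and the worry about tie-breaking is actually moot since $T(e^+)$ is defined with a weak inequality, but neither point changes the substance.
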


In order to compute a strong matching with diametral-disks and equilateral triangles, we show that if $S$ is $\ddisc$ then $\Inf{T}\le 17$, and if $S$ is $\trid$, then $\Inf{T}\le 9$.

The bounds for the strong matching when $S$ is an square is proved by induction on the area of the smallest axis-aligned square containing $P$. The bounds for the strong matching when $S$ is allowed to be upward or downward equilateral triangle is proved similarly.
\subsection{Bottleneck Matchings}
A {\em bottleneck matching} in a graph $G$ is a maximum matching in which the length of the longest edge is minimized. A {\em plane bottleneck matching} is a bottleneck matching that is non-crossing.
 
\begin{table}[H]
\centering
\caption{Approximating plane bottleneck matching.}
\vspace{3pt}
\label{table3-intro}
    \begin{tabular}{|c|c|c|c|}
         \hline
             time complexity&plane bottleneck   & size of matching& Reference  \\ \hline\hline
             $O(n^{1.5}\sqrt{\log n})$  & $2\sqrt{10}\lambda^*$ & $n/2$& \cite{Abu-Affash2014} \\
             $O(n \log^2 n)$& $\lambda^*$ & $n/5$ &\cite[Section~\ref{bm:bottleneck-five-over-two}]{Abu-Affash2015-bottleneck}\\
             $O(n\log n)$ & $(\sqrt{2}+\sqrt{3})\lambda^*$ & $2n/5$&\cite[Section~\ref{bm:bottleneck-five-over-four}]{Abu-Affash2015-bottleneck}\\
         \hline
    \end{tabular}
\end{table}

Problems related to computing bottleneck plane matchings in geometric graphs are considered in \cite[Chapter~\ref{ch:bm}]{Abu-Affash2015-bottleneck} and \cite[Chapter~\ref{ch:bmb}]{Biniaz2014-bichromatic}. Computing a bottleneck plane matching in $K(P)$ is NP-hard. In~\cite{Abu-Affash2015-bottleneck}, we present an $O(n \log n)$-time algorithm that computes a plane matching of size at least $\frac{n-1}{5}$ in a connected disk graph. Using this algorithm we obtain a bottleneck plane matching of size at least $\frac{n}{5}$ in $K(P)$ with edges of length at most $\lambda^*$ in $O(n \log^2 n)$ time, where $\lambda^*$ is the length of the longest edge in an optimal bottleneck
plane perfect matching in $K(P)$. We also presented an $O(n \log n)$-time approximation algorithm that computes a plane matching of size at least $\frac{2n}{5}$ in $K(P)$ whose edges have length at most $(\sqrt{2}+\sqrt{3})\lambda^*$. Table~\ref{table3-intro} summarizes the results.
In~\cite{Biniaz2014-bichromatic} we considered the bottleneck plane matching problem in $K(R,B)$ with $|R|=|B|$. This problem is NP-hard; we provided polynomial-time algorithms that compute exact solutions for some special cases of the problem. When $R\cup B$ is in convex position we solve this problem in $O(n^3)$ time that improves upon the previous algorithm of~\cite{Carlsson2010} by a factor of $n\log n$. If $R\cup B$ is on the boundary of a circle we solve this problem in $O(n\log n)$ time. If the points in $R$ are on a line and the points in $B$ are on one side of the line, we solve this problem in $O(n^4)$ time. Table~\ref{table4-intro} summarizes the results.

\begin{paragraph}{Our approach:}In order to compute a bottleneck plane matching in $K(P)$ we do the following. First we present an $O(n \log n)$-time algorithm that computes a plane matching of size at least $\frac{n-1}{5}$ in any connected disk graph. Using this algorithm we obtain a bottleneck plane matching of size at least $\frac{n}{5}$ in $K(P)$ with edges of length at most $\lambda^*$ in $O(n \log^2 n)$ time.
\end{paragraph}

We show that every minimum spanning tree on $K(P)$ is a minimum spanning tree of $UDG(P)$. Monma et al. \cite{Monma1992} proved that every set of points in the plane admits a minimum spanning tree of degree at most five that can be computed in $O(n\log n)$ time. Now we present an algorithm that extracts a plane matching $M$ from a minimum spanning tree $T$ of $UDG(P)$ with vertices of degree at most five. We define the {\em skeleton tree}, ${T'}$, as the tree obtained from $T$ by removing all its leaves; see Figure \ref{bm:empty-skeleton-fig-intro}. Clearly ${T'} \subseteq T \subseteq UDG(P)$. For clarity we use $u$ and $v$ to refer to the leaves of $T$ and $T'$ respectively. In addition, let $v$ and $v'$, respectively, refer to the copies of a vertex $v$ in $T$ and $T'$. In each step, pick an arbitrary leaf $v'\in T'$. By the definition of ${T'}$, it is clear that the copy of $v'$ in $T$, i.e. $v$, is connected to vertices $u_1,\dots, u_k$, for some $1\le k \le 4$, that are leaves of $T$ (if $T'$ has one vertex then $k\le5$). Pick an arbitrary leaf $u_i$ and add $(v, u_i)$ as a matched pair to $M$. For the next step we update $T$ by removing $v$ and all its adjacent leaves. We also compute the new skeleton tree and repeat this process. 
In the last iteration, $T'$ is empty and we may be left with a tree $T$ consisting of one single vertex or one single edge. If $T$ consists of one single vertex, we disregard it, otherwise we add its only edge to $M$. $M$ has size at least $\frac{n-1}{5}$ and can be computed in $O(n\log n)$ time.
\begin{figure}[ht]
  \centering
    \includegraphics[width=0.6\textwidth]{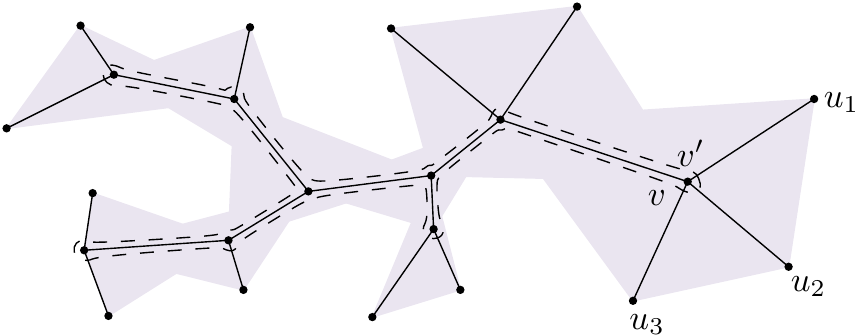}
  \caption{Minimum spanning tree $T$ with union of empty convex hulls. The skeleton tree $T'$ is surrounded by dashed line, and $v'$ is a leaf in $T'$. }
\label{bm:empty-skeleton-fig-intro}
\end{figure}

Then we compute a (possibly crossing) bottleneck perfect matching $\MC$ of $K(P)$ using the algorithm in \cite{Chang1992}. Let $\bt_{\MC}$ denote the length of the bottleneck edge in $\MC$. It is obvious that the bottleneck length of any plane perfect matching is not less than $\bt_{\MC}$. Therefore, $\btopt\ge\bt_{\MC}$. We consider a ``unit'' disk graph $DG(\bt_{\MC}, P)$ over $P$, in which there is an edge between two vertices $p$ and $q$ if $|pq|\le\bt_{\MC}$. Note that $DG(\bt_{\MC}, P)$ is not necessarily connected. Let $G_1,\dots,G_k$ be the connected components of $DG(\bt_{\MC}, P)$. For each component $G_i$, consider a minimum spanning tree $T_i$ of degree at most five. We show how to extract from $T_i$ a plane matching $M_i$ of proper size $\frac{n}{5}$ and length $\btopt$. We show how to compute $M$ in $O(n\log^2 n)$ time.

To compute a plane matching of size at least $\frac{2n}{5}$ in $K(P)$ with edges of length at most $(\sqrt{2}+\sqrt{3})\lambda^*$, we do the following.
Let $DT(P)$ denote the Delaunay triangulation of $P$. Let the edges of $DT(P)$ be, in sorted order of their lengths, $e_1, e_2, \dots$. Initialize a forest $F$ consisting of $n$ trees, each one being a single node for one point of $P$. Run Kruskal's algorithm on the edges of $DT(P)$ and terminate as soon as every tree in $F$ has an even number of nodes. Let $e_l$ be the last edge that is added by Kruskal's algorithm. Observe that $e_l$ is the longest edge in $F$. Denote the trees in $F$ by $T_1 ,\dots, T_k$ and for $1\le i \le k$, let $P_i$ be the vertex set of $T_i$ and let $n_i=|P_i|$. Then we prove the following lemma. 

\begin{lemma}
\label{longest-edge-intro}
 $\btopt \ge |e_l|.$
\end{lemma}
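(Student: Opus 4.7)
The plan is to combine a simple parity argument with a standard Delaunay/Gabriel connectivity property. First I would analyze the state of Kruskal's forest $F$ immediately before the edge $e_l$ is added. Since the algorithm stops as soon as every component has even size, after $e_l$ is added all trees are even, whereas just before at least one tree was odd. Because $e_l$ is a Delaunay edge that does not form a cycle, adding it merges exactly two trees of $F$ and leaves all other components untouched. The only way this single merge can turn ``at least one odd tree'' into ``all even trees'' is if the two trees being joined were both odd. Let $C_1$ denote one of these odd-sized components of $F$ in the state just before $e_l$ is processed.

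Second, I would show that the components of $F$ at that moment coincide with the components of the geometric graph $H$ on $P$ whose edges are \emph{all} straight-line segments (not just Delaunay edges) of Euclidean length strictly less than $|e_l|$. One direction is immediate: every edge used by Kruskal's up to this point has length less than $|e_l|$. For the other direction I would invoke the classical Gabriel/Delaunay recursion: for any two points $p,q\in P$, if the disk $D(p,q)$ with diameter $\overline{pq}$ is empty then $(p,q)\in DT(P)$; otherwise pick any $x\in P\cap D(p,q)\setminus\{p,q\}$, for which $|px|,|xq|<|pq|$, and recurse on $(p,x)$ and $(x,q)$. Inducting on the number of points inside $D(p,q)$ shows that every pair $p,q$ with $|pq|<|e_l|$ is joined by a $DT(P)$-path of edges each of length $<|e_l|$. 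All such Delaunay edges were processed by Kruskal's strictly before $e_l$, so they cannot cross components of $F$.

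To finish, let $M^*$ be an optimal bottleneck plane perfect matching of $K(P)$, with bottleneck $\lambda^*$. Since $|C_1|$ is odd, $M^*$ cannot match $C_1$ entirely within itself, so there exists a matched pair $(p,q)\in M^*$ with $p\in C_1$ and $q\notin C_1$. By the second step, $p$ and $q$ lie in distinct components of $H$, hence $|pq|\ge |e_l|$. Therefore $\lambda^*\ge |pq|\ge |e_l|$, which is the claim.

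Notice that planarity of $M^*$ is never used; the parity step works for any perfect matching, and the bound for plane matchings follows a fortiori. The only genuinely delicate point, and what I would expect to be the main obstacle, is the second step: one has to be careful with ties in edge lengths to make sure the strict inequality ``$<|e_l|$'' is preserved throughout the Gabriel recursion and that Kruskal's has indeed processed every relevant Delaunay edge before reaching $e_l$. Once that is pinned down, the rest of the argument is essentially combinatorial book-keeping.
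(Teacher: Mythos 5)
Your proof is correct and takes essentially the same route as the paper's: parity forces the optimal matching $M^*$ to contain an edge crossing the cut determined by the odd Kruskal component $C_1$ (which is exactly $P'_i$ in the paper's notation), and the Delaunay/MST cut property — which you derive via the Gabriel recursion, while the paper simply asserts $|e_l|=\min\{|pq|:p\in P'_i,\ q\in P\setminus P'_i\}$ — then lower-bounds any such crossing edge by $|e_l|$. One small slip worth noting: your claim that the Kruskal-forest components \emph{coincide} with those of the strict graph $H$ is too strong in the direction $F\Rightarrow H$ (ties could allow an edge of length exactly $|e_l|$ to have been added to $F$), but the argument only uses the other inclusion — if $|pq|<|e_l|$ then the Gabriel recursion yields a Delaunay path of edges of length strictly below $|e_l|$, all processed before $e_l$, so $p$ and $q$ already share a component of $F$ — and that direction is sound, so there is no actual gap.
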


By Lemma \ref{longest-edge-intro} the length of the longest edge in $F$ is at most $\btopt$. For each $T_i\in F$, where $1\le i\le k$, we compute a plane matching $M_i$ of $P_i$ of size at least $\frac{2n_i}{5}$ with edges of length at most $(\sqrt{2}+\sqrt{3})\btopt$ and return $\bigcup_{i=1}^{k} M_i$. 
This gives a plane matching of $P$ of size at least $\frac{2}{5}n$ with bottleneck at most $(\sqrt{2}+\sqrt{3})\btopt$. This matching can be computed in $O(n\log n)$ time.

The plane bottleneck matchings for the special cases of $R\cup B$ in $K(R,B)$ are computed by dynamic programming.
\subsection{Plane Matchings}
There has been much research to extend the well-known ham-sandwich theorem\textemdash that partitions a two colored point set\textemdash to more colors; see~\cite{Bereg2015, Bereg2012, Kano2013}. For a color-balanced point set $P$ in the plane we showed how to compute a balanced cut in linear time~\cite[Chapter~\ref{ch:pm}]{Biniaz2015-RGB}. Moreover, by applying balanced cuts recursively, we computed a plane matching in $K(P_1,\dots,P_k)$ in $\Theta(n\log n)$ time. We have also extended this notion for points that are in the interior of a simple polygon~\cite{Biniaz2015-geodesic} (this extension is not included in the thesis).

\begin{table}
\centering
\caption{Plane colored matchings.}
\vspace{3pt}
\label{table4-intro}
\begin{tabular}{|c|c|c|c|}
\hline
{Problem}& Point set &{Time complexity}& {Reference} \\ \hline\hline 
\multirow{5}{*}{\begin{tabular}[c]{@{}c@{}}bottleneck 2-colored\\plane perfect matching\end{tabular} } 
&general position & NP-hard & \cite{Carlsson2010} \\ 
& convex position & $O(n^4\log n)$	 & \cite{Carlsson2010}\\
& convex position & $O(n^3)$ & \cite[Section~\ref{bmb:convex}]{Biniaz2014-bichromatic}\\ 
& on circle & $O(n\log n)$ & \cite[Section~\ref{bmb:circle}]{Biniaz2014-bichromatic}\\
& one color on a line & $O(n^4)$ & \cite[Section~\ref{bmb:line}]{Biniaz2014-bichromatic}\\  \hline\hline
\multirow{2}{*}{plane maximum matching}&2-colored 	& $\Theta(n\log n)$ & \cite{Hershberger1992} \\ 
& $k$-colored	& $\Theta(n\log n)$	 & \cite[Section~\ref{algorithm-section}]{Biniaz2015-RGB}\\ 
\hline
\end{tabular}
\end{table}

\begin{paragraph}{Our approach:}Let $\{R, B\}$ be a partition of $P$ such that $|R| = |B|$. Assume the points in $R$ are colored red and the points in $B$ are colored blue.
A plane perfect matching in $K_n(R,B)$ can be computed in $\Theta(n\log n)$ time by recursively applying the following Ham Sandwich Theorem.
\end{paragraph}

\begin{theorem}[Ham Sandwich Theorem]
\label{ham-sandwich-thr-intro}
 For a point set $P$ in general position in the plane that is partitioned into sets $R$ and $B$, there exists a line that simultaneously bisects $R$ and $B$.
\end{theorem}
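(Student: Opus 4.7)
The plan is to prove the statement by a discrete intermediate-value argument parametrized by the direction of the bisecting line, which is the classical approach to the planar Ham Sandwich Theorem. For each angle $\theta \in [0, 2\pi)$, I would select a distinguished \emph{$R$-bisector} $\ell(\theta)$ perpendicular to the unit vector $(\cos\theta, \sin\theta)$: if $|R|$ is even, $\ell(\theta)$ is a line with exactly $|R|/2$ red points strictly on each side; if $|R|$ is odd, it passes through one red point with $(|R|-1)/2$ red points strictly on each side. General position (no two points share the same projection onto any direction except at finitely many $\theta$) guarantees that $\ell(\theta)$ is uniquely determined except at discretely many critical directions.

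Next, fix a consistent orientation of the normal to $\ell(\theta)$, and define
\[
  h(\theta) = \#\{b \in B : b \text{ strictly on the positive side of } \ell(\theta)\} - \#\{b \in B : b \text{ strictly on the negative side of } \ell(\theta)\}.
\]
The crucial antipodal property is $h(\theta+\pi) = -h(\theta)$, since rotating the direction by $\pi$ swaps the two half-planes and leaves $\ell(\theta)$ unchanged as an unoriented line. Consequently, if $h(0) \neq 0$ then $h(0)$ and $h(\pi)$ are integers of opposite sign, so $h$ must pass through $0$ somewhere on $[0,\pi]$, yielding a line that bisects both $R$ and $B$.

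The main obstacle will be handling the discrete jumps of $h$ at critical directions carefully: at such a $\theta$ the red bisector may pivot simultaneously with one or more blue points crossing $\ell(\theta)$, so $h$ can jump by more than one. The standard remedy is to observe that at each such transition there is a line through the two colliding points (or through a coincident red-blue pair) which, by a short case analysis, realizes the intermediate value; this is where general position is used to keep the case analysis finite. A slicker alternative, which I would mention as a shortcut, is to invoke the one-dimensional Borsuk-Ulam theorem: any continuous odd function on $S^1$ vanishes, and a limiting/perturbation version for integer-valued antipodally odd functions gives the desired bisecting line directly, bypassing the enumeration of critical transitions.
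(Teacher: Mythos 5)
The paper does not prove this theorem: it is stated as a classical result (the planar Ham Sandwich Theorem for finite point sets), and the accompanying citation to Lo, Matou\v{s}ek, and Steiger~\cite{Lo1994} is for the linear-time algorithm, not for a proof of existence. So there is no in-paper argument to compare against; your proposal must be judged on its own.

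Your outline is the standard rotating-direction argument and is essentially sound, but two places need tightening before it becomes a proof. First, when $|R|$ is even, a line perpendicular to $\theta$ with $|R|/2$ red points strictly on each side is not uniquely determined --- for a generic direction there is an open slab of such lines --- so you must fix a canonical selection rule (for instance, the perpendicular through the midpoint of the projections of the two middle red points), and verify that this rule is invariant under $\theta \mapsto \theta + \pi$ so that the antipodal identity $h(\theta+\pi) = -h(\theta)$ actually holds. Second, $h$ is integer-valued and has a fixed parity away from critical directions (the parity of $|B|$ minus the number of blue points on $\ell(\theta)$), so when $|B|$ is odd $h$ can never equal $0$ with an empty line; the conclusion in that case is a line through one blue point, and the ``discrete intermediate value'' step must be phrased to deliver a critical $\theta$ at which a blue point lies on $\ell(\theta)$ and both open sides have at most $\lfloor|B|/2\rfloor$ blue points. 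You flag the jump analysis as the main obstacle, which is the right instinct; that case analysis, together with the selection-rule symmetry, is exactly where the content of the proof lives, and it is currently left at the level of a sketch. The Borsuk--Ulam shortcut you mention is a legitimate alternative, but the passage from continuous Borsuk--Ulam to the integer-valued, discontinuous $h$ again requires a perturbation or limiting argument that would have to be spelled out.
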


Let $\{P_1,\dots, P_k\}$, with $k > 2$, be a partition of $P$. A necessary and sufficient for the existence of a plane perfect matching in $K_n(P_1,\dots,P_k)$ is obtained by the following theorem.

\begin{theorem}[Aichholzer et al.~\cite{Aichholzer2010}, and Kano et al.~\cite{Kano2013}]
\label{Aichholzer-thr-intro}
Let $k\ge 2$ and consider a partition $\{P_1,\dots,P_k\}$ of a point set $P$, where $|P|$ is even. Then, $K_n(P_1,\dots,P_k)$ has a plane colored perfect matching if and only if $P$ is color-balanced. 
\end{theorem}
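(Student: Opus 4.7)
The forward (necessity) direction is immediate: if $K_n(P_1,\dots,P_k)$ admits a plane colored perfect matching $M$, then every edge of $M$ has endpoints of distinct colors, so each point of color $C_i$ is matched to a point outside $P_i$. This forces $|P_i|\le n-|P_i|$, i.e. $|P_i|\le n/2$ for every $i$, so $P$ is color-balanced.

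For the converse, the plan is strong induction on $n=|P|$, using the balanced-cut notion from the preliminaries as the workhorse. The base case $n=2$ is immediate: color-balance forces the two points to lie in different classes, and the segment joining them is the desired matching. For the inductive step with $n\ge 4$ and $n$ even, I would find a directed line $\ell$ that (i) passes through either $0$ or $2$ points of $P$ (so that both sides retain even parity), (ii) cuts the remaining points into two nonempty sets $Q_1$ and $Q_2$ that are each color-balanced, and (iii) if $\ell$ meets two points of $P$, those two points carry different colors. One would apply induction to $Q_1$ and $Q_2$, and match the two on-line points to each other when present. Because $Q_1$ and $Q_2$ lie in opposite open half-planes bounded by $\ell$ and the cross-edge (if any) lies on $\ell$ itself, the resulting matching is plane.

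The core step is thus the existence of such a line $\ell$. When $k=2$, color-balance forces $|P_1|=|P_2|=n/2$, and Theorem~\ref{ham-sandwich-thr-intro} directly yields a line simultaneously bisecting the two classes; a perturbation argument places it onto a good configuration of at most two points of mixed color. For $k\ge 3$ I would split into two subcases. If some class $P_i$ has exactly $n/2$ points, pair $P_i$ against $P\setminus P_i$ and apply Theorem~\ref{ham-sandwich-thr-intro} to this $2$-coloring; each side then contains exactly $n/4$ points of color $i$ and $n/4$ points from the remaining classes, so no class can exceed half on either side. Otherwise every class has at most $n/2-1$ points, and the slack is enough that a continuous rotation/sweep of a bisecting line produces an orientation where both sides stay color-balanced.

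The main obstacle is this last subcase together with the simultaneous parity and on-line color constraints: ensuring that $|Q_1|$ and $|Q_2|$ are both even, both color-balanced, both of size at least $2$, and that the at most two points of $P$ lying on $\ell$ are of different colors. This requires a careful discrete-continuous argument on rotations of $\ell$, and is precisely where the $\max\{|Q_1|,|Q_2|\}\le \tfrac{2}{3}|P|$ clause in the balanced-cut definition is essential, since it guarantees that each $Q_j$ is a proper subset and keeps the recursion bounded.
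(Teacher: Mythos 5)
Your necessity argument is fine. For sufficiency, the route you sketch (induct by repeatedly cutting with a line that keeps both sides color-balanced, even, and nonempty, then recurse) is essentially the approach of Kano et al.\ and is also the approach the thesis itself develops in Chapter~\ref{ch:pm} via the Balanced Cut Theorem (Theorem~\ref{balanced-cut-thr-intro}). However, the paper does not prove Theorem~\ref{Aichholzer-thr-intro} this way: it cites a cleaner, one-shot argument due to Aichholzer et al.\ — take a \emph{minimum total-length} colored perfect matching (which exists by Sitton's theorem once $P$ is color-balanced) and observe that it is automatically non-crossing, since any crossing pair $(p,q)$, $(r,s)$ admits a bichromatic uncrossing (at least one of $\{(p,r),(q,s)\}$, $\{(p,s),(q,r)\}$ is properly colored) that strictly shortens the matching. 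That global exchange argument entirely sidesteps the need for any cut lemma.

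The more serious issue is that the step you yourself flag as ``the main obstacle'' is exactly where the real work lives, and you do not close it. Claiming that ``the slack is enough that a continuous rotation/sweep of a bisecting line produces an orientation where both sides stay color-balanced'' is not a proof — there is no continuity-style intermediate value argument offered, and it is not obvious that one exists, especially once you also demand even parity on both sides and that the (at most two) on-line points be of distinct colors. The thesis's proof of the balanced-cut statement (Lemma~\ref{k-to-3}, Lemma~\ref{balanced-cut-lemma}, Theorem~\ref{even-cut-cor}) is nontrivial and uses a completely different mechanism: first merge the two smallest color classes repeatedly to reduce to exactly three colors, then pad the middle class with part of the smallest class so that it matches the largest class, and apply a ham-sandwich cut to the two (now equal-size) groups; a separate parity-adjustment step then slides the line past one point. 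Nothing resembling a rotation argument appears. Moreover, the size bound $\max\{|Q_1|,|Q_2|\}\le\frac{2}{3}|P|+1$ is used in Chapter~\ref{ch:pm} only to bound the running time of the recursion at $\Theta(n\log n)$; it is not ``essential'' for mere existence of a plane colored matching, as you suggest — any feasible cut keeping both sides color-balanced and even with $|Q_i|\ge 2$ would make the induction terminate. So the proposal is a reasonable high-level outline of the recursive route, but it leaves the central lemma unproven and attributes a role to the $\frac{2}{3}$ bound that it does not actually play.
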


Based on that, we first prove the existence of a balanced cut in a color-balanced point set.

\begin{figure}[htb]
  \centering
\includegraphics[width=.4\columnwidth]{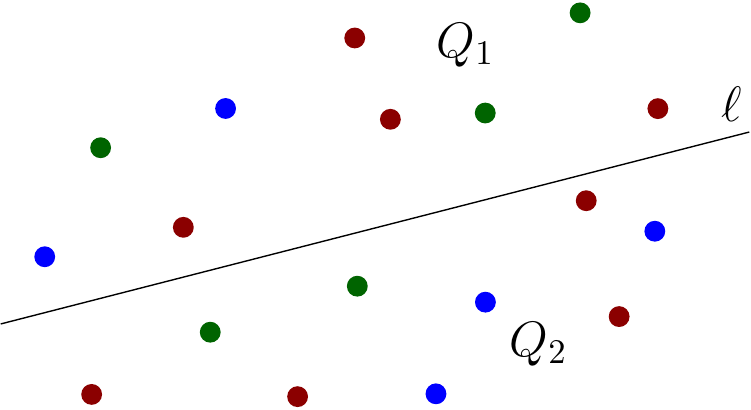}
\caption{Illustrating the balanced cut theorem. }
\label{balanced-cut-fig-intro}
\end{figure}

\begin{theorem}[Balanced Cut Theorem]
\label{balanced-cut-thr-intro}
Let $P$ be a color-balanced point set of $n\ge 4$ points in general position in the plane. In $O(n)$ time we can compute a line $\ell$ such that
\begin{enumerate}
  \item $\ell$ does not contain any point of $P$.
  \item $\ell$ partitions $P$ into two point sets $Q_1$ and $Q_2$, where
      \begin{enumerate}
	\item both $Q_1$ and $Q_2$ are color-balanced,
	\item both $Q_1$ and $Q_2$ contains at most $\frac{2}{3}n+1$ points.
	\item if $|P|$ is even, then both $|Q_1|$ and $|Q_2|$ are even.
      \end{enumerate}
\end{enumerate}
\end{theorem}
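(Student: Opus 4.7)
My plan is to apply the ham-sandwich theorem in different ways depending on the structure of the color classes of $P$. Let $C$ denote a largest color class.

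In the first case, suppose $|C| = n/2$, which forces $n$ to be even and $|P \setminus C| = n/2$. I would apply the ham-sandwich theorem to $C$ and $P \setminus C$, obtaining a line $\ell$ that simultaneously bisects both sets. After a slight perturbation, permissible by general position, I may assume $\ell$ avoids every point of $P$. Each side $Q_i$ then contains $n/4$ points of $C$ and $n/4$ points from the remaining colors, so $|Q_i| = n/2 \le 2n/3 + 1$ and no color appears more than $n/4 = |Q_i|/2$ times on a side, yielding color-balance. When $n \equiv 0 \pmod 4$ both sides are already even; when $n \equiv 2 \pmod 4$ I translate $\ell$ past a single point to obtain sides of sizes $n/2 - 1$ and $n/2 + 1$, and I verify that this single-point shift preserves color-balance because the threshold on the enlarged side loosens by the same amount.

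In the second case, suppose $|C| \le n/2 - 1$. Here I would use a rotation argument: for each direction $\theta$, consider a line $\ell_\theta$ perpendicular to $\theta$ that bisects $P$ as evenly as possible. Rotating $\theta$ by $\pi$ swaps the two sides, so the integer-valued imbalance $|c \cap Q_1(\theta)| - |Q_1(\theta)|/2$ of any color $c$ changes sign over $[0,\pi]$; combining this with a ham-sandwich cut applied to the two largest color classes (to force simultaneous bisection of both) lets me locate a direction at which every color $c$ satisfies $|c \cap Q_i| \le |Q_i|/2$ on both sides. Since $|Q_i| = \lceil n/2 \rceil \le 2n/3 + 1$ the size bound holds, and parity is enforced by an at-most-one-point translation analogous to the first case.

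The main obstacle is the second case when several color classes simultaneously approach the threshold $n/2 - 1$: naively applying the ham-sandwich theorem to just two colors can leave a third color in strict majority on one side, an issue already visible for color sizes $(5,4,3)$ with $n = 12$. So the careful choice of the two colors to bisect, together with the verification that all remaining colors stay color-balanced, is the main technical hurdle. The $O(n)$ time bound then follows from the linear-time ham-sandwich algorithm combined with linear-time median and halving-line computations.
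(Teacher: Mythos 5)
There is a genuine gap. Your Case 1 ($|C|=n/2$) is fine and in fact coincides with a boundary instance of the paper's argument. But Case 2 is not a proof: the rotation sketch, as you yourself note, does not control the third color, and "combining this with a ham-sandwich cut applied to the two largest color classes" is exactly the naive idea you then concede can fail. You have identified the technical hurdle but not crossed it. A continuity/rotation argument over a single direction parameter is one degree of freedom; with three independent color-balance constraints plus a size constraint you cannot expect an intermediate-value argument in one parameter to satisfy all of them simultaneously, and no mechanism in the proposal forces the third color to cooperate.

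The idea you are missing is the padding trick. Reduce first to three colors $R,G,B$ with $|B|\le|G|\le|R|$ by repeatedly merging the two smallest color classes (one must also check that the merged classes stay below the $\lfloor n/2\rfloor$ threshold, which holds when $k\ge 4$). Then choose a subset $X\subseteq B$ of size $|R|-|G|$ and apply the ham-sandwich theorem to the two sets $R$ and $G\cup X$, which now have equal cardinality. This one application bisects both $R$ and $G\cup X$ exactly, which immediately bounds $|R|$ and $|G|$ on each side; and because $|B|\le|G|$ the leftover $Y=B\setminus X$ cannot push $B$ into majority on either side, no matter how $Y$ is split. Your Case 1 is just the instance $X=B$. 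The size bound $\le 2n/3+1$ falls out because $R$ is the largest class, hence $|R|\ge n/3$, so each side gets at least roughly $|R|$ points from $R\cup G\cup X$ alone. Parity is then fixed, as you anticipated, by sliding the line past one point, but the verification that color-balance survives the slide requires the explicit inequalities from the padding construction, not just a threshold-loosening heuristic. Also note your proposal never addresses the reduction from $k\ge 4$ colors to three, which the theorem statement allows and the paper handles explicitly.
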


Then, by applying the Balanced Cut Theorem recursively, we can compute a plane perfect matching in matching in $K_n(P_1,\dots,P_k)$ in $\Theta(n\log n)$ time.
\subsection{Matching Packing}
Recall that in a matching packing problem, we ask for the largest number of matchings that can be packed into $K(P)$. A plane matching packing problem is to pack non-crossing matchings to $K(P)$.

In~\cite[Chapter~\ref{ch:mp}]{Biniaz2015-packing} we consider the problem of packing perfect matchings into $K(P)$. Let $n$ be the number of points in $P$, and assume that $n$ is an even number. We proved that if $P$ is in general position, then at least $\lfloor\log_2n\rfloor-1$ plane perfect matchings can be packed into $K(P)$. Moreover, we show that for some point set $P$ in general position, no more than $\lceil\frac{n}{3}\rceil$ can be packed into $K(P)$. If $P$ is in convex position we show that the maximum number of plane perfect matchings that can be packed to $K(P)$ is $\frac{n}{2}$, and $\frac{n}{2}-1$ when $P$ is in regular wheel configuration. As for matching persistencey, we showed that the matching persistency of $K(P)$ is $\frac{n}{2}$ if $n\equiv 2 \mod 4$, and $\frac{n}{2}+1$ if $n\equiv 0 \mod 4$. As for plane matchings we showed that if $P$ is in convex position then the plane matching persistency of $K(P)$ is 2. We also show the existence of a set of points in general position with plane matching persistence of at least 3. 

\begin{figure}[htb]
  \centering
\includegraphics[width=.33\columnwidth]{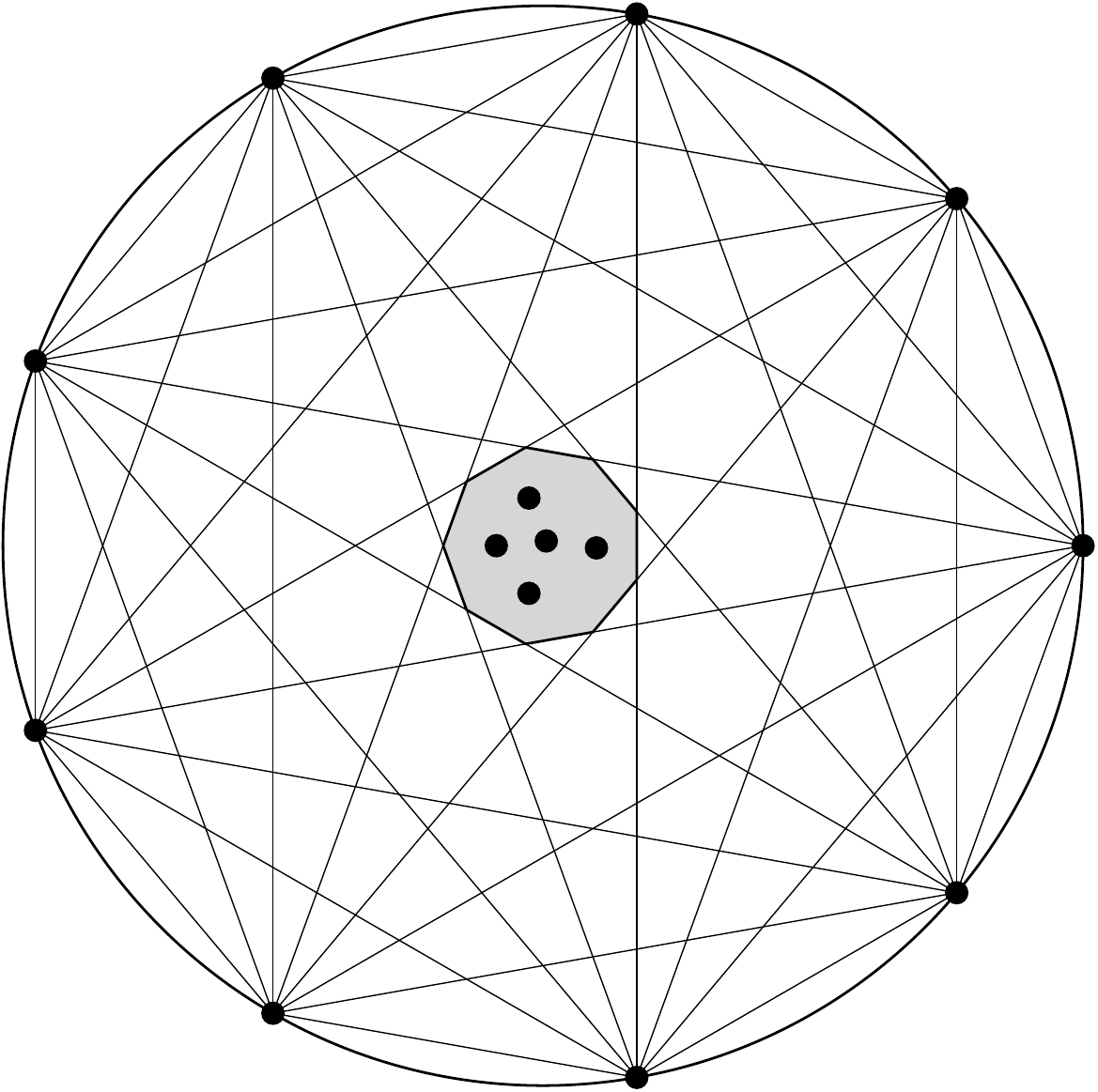}
  \caption{About $1/3$ of the points are in the middle and the boundary has an odd number of points.}
\label{n-over-3-fig-intro}
\end{figure}

\begin{paragraph}{Our approach:}In order to show the upper bound, we provide an example, as depicted in Figure~\ref{n-over-3-fig-intro}, that does not contain more than $\lceil\frac{n}{3}\rceil$ edge-disjoint plane perfect matchings.
\end{paragraph}

\begin{figure}[H]
  \centering
  \includegraphics[width=.6\columnwidth]{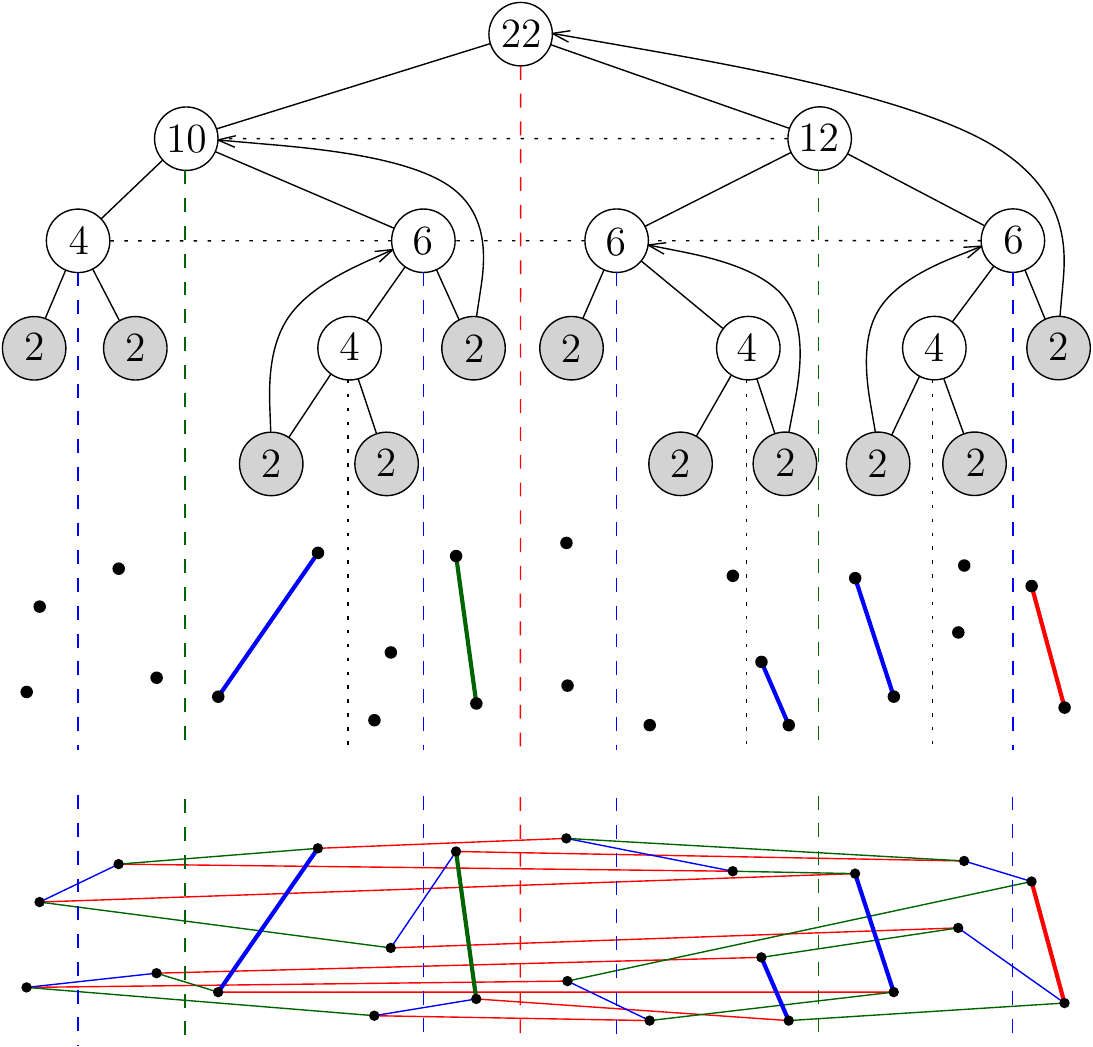}
 \caption{The points in $P$ are assigned, in pairs, to the leaves of $T$, from left to right. Internal nodes store the number of points in their subtree. The three edge disjoint plane perfect matchings are shown in red, green, and blue.}
  \label{matching-example-fig-intro}
\end{figure}

To prove the lower bound of $\lfloor\log_2n\rfloor-1$, we first build a binary tree $T$ on the point set $P$, then we assign the points of $P$ to the leaves of $T$, and then we extract a plane matching from the internal nodes in each level of $T$. See Figure~\ref{matching-example-fig-intro}.

\bibliographystyle{abbrv}
\bibliography{../thesis}
\chapter{Matching in Gabriel Graphs}
\label{ch:gg}

Given a set $P$ of $n$ points in the plane, the order-$k$ Gabriel graph on $P$, denoted by \kGG{k}{}, has an edge between two points $p$ and $q$ if and only if the closed disk with diameter $pq$ contains at most $k$ points of $P$, excluding $p$ and $q$. We study matching problems in \kGG{k}{} graphs. We show that a Euclidean bottleneck matching of $P$ is contained in \kGG{9}{}, but \kGG{8}{} may not have any Euclidean bottleneck matching. In addition we show that \kGG{0}{} has a matching of size at least $\frac{n-1}{4}$ and this bound is tight. We also prove that \kGG{1}{} has a matching of size at least $\frac{2(n-1)}{5}$ and \kGG{2}{} has a perfect matching. Finally we consider the problem of blocking the edges of \kGG{k}{}.

\vspace{10pt}
This chapter is a combination of results that have been published in the journal of Theoretical Computer Science~\cite{Biniaz2015-ggmatching-TCS} and results that have been presented in the 32nd European Workshop on Computational Geometry (EuroCG'16)~\cite{Biniaz2016-ewcg}. 

\section{Introduction}
Let $P$ be a set of $n$ points in the plane. For any two points $p,q\in P$, let $\CD{p}{q}$ denote the closed disk which has the line segment $\overline{pq}$ as diameter. Let $|pq|$ be the Euclidean distance between $p$ and $q$.
The {\em Gabriel graph} on $P$, denoted by $GG(P)$, is defined to have an edge between two points $p$ and $q$ if $\CD{p}{q}$ is empty of points in $P\setminus\{p,q\}$. Let $C(p,q)$ denote the circle which has $\overline{pq}$ as diameter. Note that if there is a point of $P\setminus\{p,q\}$ on $C(p,q)$, then $(p,q)\notin GG(P)$. That is, $(p,q)$ is an edge of $GG(P)$ if and only if $$|pq|^2<|pr|^2+|rq|^2\quad\quad \forall r\in P,\quad\quad r\neq p,q.$$

Gabriel graphs were introduced by Gabriel and Sokal \cite{Gabriel1969} and can be computed in $O(n\log n)$ time \cite{Matula1980}. Every Gabriel graph has at most $3n-8$ edges, for $n\ge 5$, and this bound is tight \cite{Matula1980}. 

A {\em matching} in a graph $G$ is a set of edges without common vertices. A {\em maximum matching} in $G$ is a matching of maximum cardinality, i.e., maximum number of edges. A {\em perfect matching} is a matching which matches all the vertices of $G$. 
In the case that $G$ is an edge-weighted graph, a {\em bottleneck matching} is defined to be a perfect matching in $G$ in which the weight of the maximum-weight edge is minimized. For a perfect matching $M$, we denote the {\em bottleneck} of $M$, i.e., the length of the longest edge in $M$, by $\lambda(M)$. For a point set $P$, a {\em Euclidean bottleneck matching} is a perfect matching which minimizes the length of the longest edge. 

In this chapter we consider perfect matching and bottleneck matching admissibility of higher order Gabriel Graphs. The {\em order-$k$ Gabriel graph} on $P$, denoted by \kGG{k}{}, is the geometric graph which has an edge between two points $p$ and $q$ iff $\CD{p}{q}$ contains at most $k$ points of $P\setminus\{p,q\}$. The standard Gabriel graph, $GG(P)$, corresponds to \kGG{0}{}. It is obvious that \kGG{0}{} is plane, but \kGG{k}{} may not be plane for $k\ge 1$. Su and Chang \cite{Su1990} showed that \kGG{k}{} can be constructed in $O(k^2n\log n)$ time and contains $O(k(n-k))$ edges. In \cite{Bose2013}, the authors proved that \kGG{k}{} is $(k+1)$-connected.
\addtocontents{toc}{\protect\setcounter{tocdepth}{1}}
\subsection{Previous Work}
\addtocontents{toc}{\protect\setcounter{tocdepth}{2}}
It is well-known that a maximum matching in a graph with $n$ vertices and $m$ edges can be computed in $O(m\sqrt{n})$ time, e.g., by Edmonds algorithm (see \cite{Edmonds1965, Micali1980}). Any Gabriel graph is planar, and thus, has $O(n)$ edges. Therefore a maximum matching in a Gabriel graph can be computed in $O(n^{1.5})$ time. Mucha and Sankowski \cite{Mucha2006} showed that a maximum matching in a planar graph can be found in $O(n^{\omega/2})$ time, where $\omega$ is the exponent of matrix multiplication. Since $\omega<2.38$ (see \cite{Williams2012}) a maximum matching in a Gabriel graph can be computed in $O(n^{1.18})$ time.

For any two points $p$ and $q$ in $P$, the {\em lune} of $p$ and $q$, denoted by $L(p,q)$, is defined as the intersection of the open disks of radius $|pq|$ centred at $p$ and $q$.
The {\em order-$k$ Relative Neighborhood Graph} on $P$, denoted by \kRNG{k}{}, is the geometric graph which has an edge $(p,q)$ iff $L(p,q)$ contains at most $k$ points of $P$.  
The {\em order-$k$ Delaunay Graph} on $P$, denoted by \kDG{k}{}, is the geometric graph which has an edge $(p,q)$ iff there exists a circle through $p$ and $q$ which contains at most $k$ points of $P$ in its interior. 
It is obvious that $$\text{\kRNG{k}{}}\subseteq\text{\kGG{k}{}}\subseteq\text{\kDG{k}{}}.$$

The problem of determining whether a geometric graph has a (bottleneck) perfect matching is quite of interest. Dillencourt showed that the Delaunay triangulation (\kDG{0}{}) admits a perfect matching \cite{Dillencourt1990}. Chang et al. \cite{Chang1992} proved that a Euclidean bottleneck perfect matching of $P$ is contained in \kRNG{16}{}.\footnote{They defined \kRNG{k}{} in such a way that $L(p,q)$ contains at most $k-1$ points of $P$.} This implies that \kGG{16}{} and \kDG{16}{} contain a (bottleneck) perfect matching of $P$. In \cite{Abellanas2009} the authors showed that \kGG{15}{} is Hamiltonian. Recently, Kaiser et al.~\cite{Kaiser2015} improved the bound by showing that \kGG{10}{} is Hamiltonian. This implies that \kGG{10}{} has a perfect matching. 

Given a geometric graph $G(P)$ on a set $P$ of $n$ points, we say that a set $K$ of points {\em blocks} $G(P)$ if in $G(P\cup K)$ there is no edge connecting two points in $P$, in other words, $P$ is an independent set in $G(P\cup K)$.
Aichholzer et al.~\cite{Aichholzer2013-blocking} considered the problem of blocking the Delaunay triangulation (i.e. \kDG{0}{}) for $P$ in general position. They show that $\frac{3n}{2}$ points are sufficient to block DT($P$) and at least $n-1$ points are necessary. To block a Gabriel graph, $n-1$ points are sufficient, and $\frac{3}{4}n-o(n)$ points are sometimes necessary \cite{Aronov2013}.

\addtocontents{toc}{\protect\setcounter{tocdepth}{1}}
\subsection{Our Results}
\addtocontents{toc}{\protect\setcounter{tocdepth}{2}}
In this chapter we consider the following three problems: (a) for which values of $k$ does every \kGG{k}{} have a Euclidean bottleneck matching of $P$? (b) for a given value $k$, what is the size of a maximum matching in \kGG{k}{}? (c) how many points are sufficient/necessary to block a \kGG{k}{}? In Section~\ref{gg:preliminaries} we review and prove some graph-theoretic notions. In Section~\ref{gg:bottleneck-section} we consider the problem (a) and prove that a Euclidean bottleneck matching of $P$ is contained in \kGG{9}{}. In addition, we show that for some point sets, \kGG{8}{} does not have any Euclidean bottleneck matching. Moreover, we show that for some point sets, \kGG{7}{} does not have any Euclidean Hamiltonian cycle; this improves the previous bound of \kGG{5}{} that is obtained in~\cite{Kaiser2015}. In Section~\ref{gg:max-matching-section} we consider the problem (b) and give some lower bounds on the size of a maximum matching in \kGG{k}{}. We prove that \kGG{0}{} has a matching of size at least $\frac{n-1}{4}$, and this bound is tight. In addition we prove that \kGG{1}{} has a matching of size at least $\frac{2(n-1)}{5}$ and \kGG{2}{} has a perfect matching. In Section~\ref{gg:blocking-section} we consider the problem (c). We show that at least $\lceil\frac{n-1}{3}\rceil$ points are necessary to block a Gabriel graph and this bound is tight. We also show that at least $\lceil\frac{(k+1)(n-1)}{3}\rceil$ points are necessary and $(k+1)(n-1)$ points are sufficient to block a \kGG{k}{}. The open problems and concluding remarks are presented in Section~\ref{gg:conclusion}.

\section{Preliminaries}
\label{gg:preliminaries}
Let $G$ be an edge-weighted graph with vertex set $V$ and weight function $w:E\rightarrow\mathbb{R^+}$. Let $T$ be a minimum spanning tree of $G$, and let $w(T)$ be the total weight of $T$. 

\begin{lemma}
\label{gg:not-mst-edge}
Let $\delta(e)$ be a cycle in $G$ that contains an edge $e\in T$. Let $\delta'$ be the set of edges in $\delta(e)$ which do not belong to $T$ and let $e'_{max}$ be the largest edge in $\delta'$. Then, $w(e)\le w(e'_{max})$.
\end{lemma}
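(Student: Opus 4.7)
The plan is to use the standard cycle/cut exchange argument for minimum spanning trees. First I would delete $e$ from $T$. Since $T$ is a tree, $T-e$ consists of exactly two connected components, call them $T_1$ and $T_2$, with the two endpoints of $e$ lying one in each. This gives me a cut of the vertex set $V$, and $e$ is the unique edge of $T$ that crosses it.

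Next I would trace the cycle $\delta(e)$ starting from the endpoint of $e$ in $T_1$, traversing $\delta(e)\setminus\{e\}$, and ending at the endpoint of $e$ in $T_2$. This is a walk from $T_1$ to $T_2$, so at least one edge of this walk must cross the cut $(T_1,T_2)$. Call such an edge $e^*$. The key observation is that $e^*$ cannot lie in $T$: every edge of $T$ other than $e$ has both endpoints in the same component of $T-e$, and therefore does not cross the cut. Hence $e^* \in \delta(e)\setminus T = \delta'$.

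Finally I would apply the MST exchange: $T - e + e^*$ is again a spanning subgraph of $G$ that is connected (because $e^*$ reconnects $T_1$ and $T_2$) and has $|V|-1$ edges, so it is a spanning tree. Minimality of $T$ then forces $w(T) \le w(T - e + e^*)$, i.e.\ $w(e) \le w(e^*)$. Combining with $w(e^*) \le w(e'_{\max})$ (since $e^* \in \delta'$ and $e'_{\max}$ is the heaviest edge of $\delta'$) yields $w(e) \le w(e'_{\max})$, as required.

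There is really no main obstacle here; this is a textbook cycle-property argument, and the only place that needs a sentence of care is the claim that some edge of the path $\delta(e)\setminus\{e\}$ crossing the cut must be non-tree, which follows immediately from the fact that $e$ is the only edge of $T$ joining $T_1$ to $T_2$.
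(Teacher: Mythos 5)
Your proposal is correct and is essentially the same argument as the paper's: delete $e$ to create a cut, find an edge $e'$ of $\delta'$ crossing it, and perform the standard exchange $T - e + e'$ to conclude $w(e) \le w(e') \le w(e'_{\max})$. The only difference is cosmetic: you explicitly justify via the walk argument that some non-tree edge of the cycle crosses the cut, whereas the paper simply asserts the existence of such an $e'\in\delta'$.
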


\begin{proof}
Let $e=(u,v)$ and let $T_u$ and $T_v$ be the two trees obtained by removing $e$ from $T$. Let $e'=(x,y)$ be an edge in $\delta'$ such that one of $x$ and $y$ belongs to $T_u$ and the other one belongs to $T_v$. By definition of $e'_{max}$, we have $w(e')\le w(e'_{max})$. Let $T'=T_u\cup T_v \cup\{(x,y)\}$. Clearly, $T'$ is a spanning tree of $G$. If $w(e')<w(e)$ then $w(T')<w(T)$; contradicting the minimality of $T$. Thus, $w(e)\le w(e')$, which completes the proof of the lemma. 
\end{proof}
 
For a graph $G=(V,E)$ and $S\subseteq V$, let $G-S$ be the subgraph obtained from $G$ by removing all vertices in $S$, and let $o(G-S)$ be the number of odd components in $G-S$, i.e., connected components with an odd number of vertices. The following theorem by Tutte~\cite{Tutte1947} gives a characterization of the graphs which have perfect matching: 

\begin{theorem}[Tutte~\cite{Tutte1947}] 
\label{gg:Tutte} 
$G$ has a perfect matching if and only if $o(G-S)\le |S|$ for all $S\subseteq V$.
\end{theorem}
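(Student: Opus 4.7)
The plan is to prove the two implications separately. The necessity direction is straightforward: fix any perfect matching $M$ of $G$ and any $S \subseteq V$. In each odd component $C$ of $G-S$, not all vertices can be matched within $C$ (since $|V(C)|$ is odd), so at least one vertex of $C$ is matched by $M$ to a vertex lying outside $C$. This outside vertex cannot lie in another component of $G-S$ (otherwise the matching edge would be an edge of $G-S$ joining different components), so it must lie in $S$. Distinct odd components consume distinct vertices of $S$, giving $o(G-S) \le |S|$.

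For sufficiency I would argue by contradiction. Suppose $G$ satisfies $o(G-S) \le |S|$ for every $S$ but has no perfect matching. Taking $S = \emptyset$ forces $o(G) \le 0$, so $|V|$ is even. Adding an edge can only merge components and never create new ones, so $o(G-S)$ is monotone non-increasing under edge additions on the vertex set $V$, and the Tutte condition is preserved. I would therefore pass to an edge-maximal supergraph $G^*$ on $V$ which still satisfies the Tutte condition and still has no perfect matching. By maximality, adding any missing edge to $G^*$ yields a graph with a perfect matching.

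Let $U := \{v \in V : \deg_{G^*}(v) = |V|-1\}$ be the set of vertices adjacent to all others. The key structural claim is that every connected component of $G^* - U$ is a complete graph. To prove it, suppose some component contains $x,y,z$ with $xy, yz \in E(G^*)$ but $xz \notin E(G^*)$; since $y \notin U$, there also exists $w \notin \{x,z\}$ with $yw \notin E(G^*)$. By maximality, $G^* + xz$ has a perfect matching $M_1$ containing $xz$ and $G^* + yw$ has a perfect matching $M_2$ containing $yw$. The symmetric difference $M_1 \triangle M_2$ is a disjoint union of even alternating cycles. If $xz$ and $yw$ lie in different cycles, swapping the $M_2$-edges along the cycle containing $yw$ for the $M_1$-edges on that cycle produces a perfect matching of $G^*$, a contradiction. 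If instead they lie in the same cycle $C$, remove $xz$ and $yw$ from $C$: depending on the cyclic order of $x,z,y,w$ along $C$, either $yz$ (when $w$ lies between $z$ and $y$ on $C$) or $xy$ (when $y$ lies between $z$ and $w$ on $C$) can be used to pair up two of these four vertices; the remaining vertices of $C$ then split into two paths each having an even number of vertices, and each path admits a perfect matching using the appropriate alternating edges of $C$, all of which lie in $G^* \setminus \{xz, yw\}$. Either way $G^*$ has a perfect matching, a contradiction.

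Having shown that every component of $G^* - U$ is a clique, I would apply the Tutte condition with $S = U$ to get $o(G^* - U) \le |U|$. Then I construct a perfect matching of $G^*$ as follows: match one vertex from each odd component of $G^* - U$ to a distinct vertex of $U$; pair up the remaining (now even) vertices inside each component using the clique structure; finally pair up the leftover vertices of $U$ among themselves, which is possible because $U$ induces a clique in $G^*$ and a parity argument using $|V|$ even shows the leftover count is even. This contradicts the construction of $G^*$, completing the proof. The main obstacle will be the same-cycle subcase of the clique claim: one must track the alternation of $M_1/M_2$ along $C$ and choose between the auxiliary edges $xy$ and $yz$ so that the two residual paths both have even length.
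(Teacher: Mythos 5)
The paper does not prove this theorem: it is stated as a cited result of Tutte~\cite{Tutte1947} and used later only as a black box (together with the Tutte--Berge formula), so there is no ``paper proof'' to compare against. Your argument is a correct and complete proof of the 1-factor theorem along the classical lines: the direct count for necessity, and for sufficiency the passage to an edge-maximal counterexample $G^*$, the claim that every component of $G^*-U$ is a clique (where $U$ is the set of universal vertices), and the assembly of a perfect matching by sending one vertex of each odd component into $U$. The delicate step, the same-cycle sub-case of the clique claim, is handled correctly: you track the $M_1/M_2$ alternation along $C$ and pick between the auxiliary edges $xy$ and $yz$ so that the two residual arcs (after also deleting one endpoint of the chosen auxiliary edge from each) have even length and can be matched by the appropriate alternating $C$-edges, all of which lie in $G^*$. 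The closing parity remark is also sound, though it deserves one more line: $o(G^*-U)$ has the same parity as $|V|-|U|$ (odd components contribute $1$ to each side modulo $2$), so $|U|-o(G^*-U)\equiv |V|\equiv 0\pmod 2$, and the leftover vertices of $U$ can indeed be paired inside the clique $U$.
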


Berge~\cite{Berge1958} extended Tutte's theorem to a formula (known as the Tutte-Berge formula) for the maximum size of a matching in a graph. In a graph $G$, the {\em deficiency}, $\text{def}_G(S)$, is $o(G-S)-|S|$. Let $\text{def}(G)=\max_{S\subseteq V}{\text{def}_G(S)}$.

\begin{theorem}[Tutte-Berge formula; Berge~\cite{Berge1958}] 
\label{gg:Berge} 
The size of a maximum matching in $G$ is $$\frac{1}{2}(n-\mathrm{def}(G)).$$
\end{theorem}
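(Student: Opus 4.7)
The plan is to prove the Tutte--Berge formula by establishing two matching inequalities, invoking Tutte's Theorem (Theorem \ref{gg:Tutte}) for the harder direction.

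For the upper bound, I would fix any matching $M$ in $G$ and any $S \subseteq V$. In each odd component $C$ of $G - S$, at least one vertex of $C$ is either $M$-unmatched or matched by $M$ to a vertex of $S$, because the $M$-edges with both endpoints in $C$ cover an even number of vertices while $|C|$ is odd. The $|S|$ vertices of $S$ can absorb at most $|S|$ such cross-edges in total, so the number of $M$-unmatched vertices is at least $o(G-S) - |S| = \text{def}_G(S)$. Hence $2|M| \le n - \text{def}_G(S)$, and maximising over $S$ gives $|M| \le \tfrac{1}{2}(n - \text{def}(G))$.

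For the lower bound, let $d := \text{def}(G)$ and build an auxiliary graph $G'$ by adjoining $d$ new vertices, each joined to every other vertex of $G'$. A parity observation comes first: summing $n = |S| + \sum_C |C|$ modulo $2$ gives $o(G-S) \equiv n + |S| \pmod{2}$, so $\text{def}_G(S) \equiv n \pmod{2}$ for every $S$, and hence $n + d$ is even. Now I would verify Tutte's condition for $G'$. Let $A$ be the set of the $d$ added vertices and take any $S' \subseteq V(G')$. If $A \not\subseteq S'$, then a vertex of $A \setminus S'$ is adjacent in $G' - S'$ to every other remaining vertex, so $G' - S'$ is connected; its single component has size $n + d - |S'|$, which has the same parity as $|S'|$, and therefore $o(G' - S') \le |S'|$ (the case $S' = \emptyset$ being handled by the parity of $n + d$). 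If $A \subseteq S'$, write $S' = S \cup A$ with $S \subseteq V$; then $G' - S' = G - S$ and the definition of $d$ yields $o(G' - S') = o(G - S) \le |S| + d = |S'|$. By Theorem \ref{gg:Tutte}, $G'$ has a perfect matching $M'$. At most $d$ edges of $M'$ are incident to $A$, so deleting those edges leaves a matching in $G$ of size at least $\tfrac{n+d}{2} - d = \tfrac{1}{2}(n - d)$, which meets the upper bound.

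The main obstacle is setting up the auxiliary graph so that Tutte's condition is verifiable in both cases and cleanly disposing of the degenerate situations $d = 0$ and $S' = \emptyset$; here the parity identity $\text{def}_G(S) \equiv n \pmod 2$ is what ensures $n + d$ is even and thus that $G'$ even has the right size for a perfect matching to be sought.
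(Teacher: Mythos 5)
The paper does not prove this theorem; it is stated with a citation to Berge~\cite{Berge1958} and used as a known result, so there is no in-paper proof to compare against. Your argument is a correct and complete rendition of the standard proof of the Tutte--Berge formula. The upper bound is sound: picking one ``special'' vertex per odd component of $G-S$ (one that is unmatched or matched into $S$), these $o(G-S)$ vertices are distinct, at most $|S|$ of them can be matched into $S$, so at least $o(G-S)-|S|$ are exposed, giving $2|M|\le n-\mathrm{def}_G(S)$ for every $S$. For the lower bound, the parity identity $\mathrm{def}_G(S)\equiv n\pmod 2$ (from $n=|S|+\sum_C|C|$ and the fact that only odd components contribute $1$ mod $2$) correctly ensures $n+d$ is even, and your two-case Tutte verification on the auxiliary graph $G'=G+K_d$ is right, including the observation that the degenerate case $d=0$ forces $A=\emptyset\subseteq S'$ and so always falls under Case~2. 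One cosmetic point: in the upper bound you say ``maximising over $S$'' -- strictly, the displayed inequality holds for all $S$, and you specialize to the maximizer of $\mathrm{def}_G(S)$, which \emph{minimizes} the right-hand side; the conclusion is unaffected, but the phrasing could be tightened.
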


For an edge-weighted graph $G$ we define the {\em weight sequence} of $G$, \WS{G}, as the sequence containing the weights of the edges of $G$ in non-increasing order. A graph $G_1$ is said to be less than a graph $G_2$ if \WS{G_1} is lexicographically smaller than \WS{G_2}.

\section{Euclidean Bottleneck Matching}
\label{gg:bottleneck-section}
In Subsection~\ref{9GGsec} we prove the following theorem.

\begin{theorem}
 \label{9-GG-thr}
For every point set $P$ in the plane, \kGG{9}{} contains a Euclidean bottleneck matching of $P$.
\end{theorem}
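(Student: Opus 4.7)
The plan is to argue by contradiction using an extremal bottleneck matching. Among all Euclidean bottleneck perfect matchings of $P$, let $M^*$ be one whose length sequence is lexicographically smallest (I extend the notation \WS{\cdot} from Section~\ref{gg:preliminaries} to matchings and write $\LS{M}$ for the non-increasing sequence of edge lengths of $M$). I will show that every edge of $M^*$ belongs to \kGG{9}. Suppose to the contrary that some $(p,q)\in M^*$ is not in \kGG{9}, so $\CD{p}{q}$ contains a set $A\subseteq P\setminus\{p,q\}$ with $|A|\ge 10$. Each $a\in A$ satisfies $|ap|\le|pq|$ and $|aq|\le|pq|$, and since $M^*$ is bottleneck, $\lambda^*:=\lambda(M^*)\ge|pq|$. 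The objective is to construct a perfect matching $M'$ with $\LS{M'}$ strictly lexicographically smaller than $\LS{M^*}$, contradicting the choice of $M^*$.

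The matching $M'$ is obtained by a local swap. For $a\in A$, let $\mu(a)$ be its partner in $M^*$, and split $A$ into $A_{\mathrm{in}}=\{a:\mu(a)\in A\cup\{p,q\}\}$ and $A_{\mathrm{out}}=A\setminus A_{\mathrm{in}}$. If $A_{\mathrm{in}}\ne\emptyset$, pick any $a\in A_{\mathrm{in}}$: all four points $p,q,a,\mu(a)$ lie in $\CD{p}{q}$, so both non-$M^*$ pairings of these four points have both edges shorter than $|pq|$ by the diametral-disk containment; replacing $(p,q)$ and $(a,\mu(a))$ by the appropriate such pairing produces $M'$. Otherwise every point of $A$ is matched outside, and the plan is to partition $\CD{p}{q}$ into a small number of angular sectors around the midpoint of $\overline{pq}$ and apply pigeonhole: with $|A|\ge 10$, two points $a_1,a_2$ fall in the same (or in a specified adjacent pair of) sector(s), forcing $|a_1a_2|<|pq|$. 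The three $M^*$-edges $(p,q),(a_1,\mu(a_1)),(a_2,\mu(a_2))$ are then replaced by $(a_1,a_2)$ together with one of $\{(p,\mu(a_1)),(q,\mu(a_2))\}$ or $\{(p,\mu(a_2)),(q,\mu(a_1))\}$; a geometric case analysis on the sector positions of $a_1,a_2$ and of $\mu(a_1),\mu(a_2)$ (which lie outside $\CD{p}{q}$) shows that at least one of these choices yields three edges each strictly shorter than $\max\{|pq|,|a_1\mu(a_1)|,|a_2\mu(a_2)|\}$, again giving a lexicographic improvement over $\LS{M^*}$.

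The main obstacle is calibrating the angular decomposition so that the constant $9$ is achievable. The sectors must be narrow enough that any two points sharing one (or a designated pair) are strictly closer to each other than $|pq|$, yet coarse enough that $|A|\ge 10$ forces co-occupancy by pigeonhole. Simultaneously, the sector geometry must still permit the rematching argument against external partners $\mu(a_1),\mu(a_2)$ whose only \emph{a priori} bound is $|a_i\mu(a_i)|\le\lambda^*$; here one exploits the triangle inequality together with the fact that both $\mu(a_i)$ and $a_i$ lie on opposite sides of the boundary of $\CD{p}{q}$, so the swap to $(p,\mu(a_i))$ or $(q,\mu(a_i))$ shortens the cross-edge by at least $|pq|-|a_i\mu(a_i)|+\varepsilon$. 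Ten is the threshold at which all these constraints close up, which is in harmony with Section~\ref{8GGsec}, where a point set is constructed whose bottleneck matching is not contained in \kGG{8}, certifying that the bound is sharp. Once the improving swap is produced in each subcase, the contradiction with the minimality of $\LS{M^*}$ forces every edge of $M^*$ into \kGG{9}, proving the theorem.
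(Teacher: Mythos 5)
Your extremal setup --- taking $M^*$ to have lexicographically smallest length sequence --- matches the paper, and so does the easy $A_{\mathrm{in}}$ subcase. But the $A_{\mathrm{out}}$ subcase, which is the heart of the argument, has a genuine gap: the swap you propose goes in the wrong direction. You want to replace $(p,q)$, $(a_1,\mu(a_1))$, $(a_2,\mu(a_2))$ with $(a_1,a_2)$ together with cross-edges like $(p,\mu(a_1))$ and $(q,\mu(a_2))$. The trouble is that $\mu(a_1)$ and $\mu(a_2)$ can both lie far from one and the same endpoint of $pq$, in which case one of the cross-edges is enormous --- much longer than $\max\{|pq|,|a_1\mu(a_1)|,|a_2\mu(a_2)|\}$. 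For a concrete failure, normalize $|pq|=1$ and put $a_1,a_2$ near $q$ inside the disk with $\mu(a_1),\mu(a_2)$ just outside the disk on the far side of $q$: then both $|p\mu(a_1)|$ and $|p\mu(a_2)|$ are close to $2$, so neither of your two rematchings improves, yet nothing prevents this configuration a priori. Also, your angular decomposition around the \emph{midpoint} of $pq$ is the wrong lens --- narrow sectors there can still contain pairs of points whose distance is nearly $|pq|$ (for instance two nearly antipodal boundary points in a thin sector through the center), so pigeonhole over sectors will not reliably produce $|a_1a_2|<|pq|$ either.

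The paper's proof avoids both problems by running the swap the \emph{other} way: it replaces $(a,b)$, $(r_i,s_i)$, $(r_j,s_j)$ with $(a,r_i)$, $(b,r_j)$, $(s_i,s_j)$. Here the first two new edges automatically have length less than $|ab|$ because $r_i,r_j$ lie in $\CD{a}{b}$; minimality of $M^*$ then forces $|s_is_j|$ to be \emph{large} (Lemma~\ref{ss-lemma}), and a companion swap shows each $s_i$ must be far from both $a$ and $b$ (Lemma~\ref{sab-lemma}). In other words, the paper does not look for an improving swap; it uses the impossibility of one to deduce that the partners $s_1,\dots,s_k$, together with $a$ and $b$, have pairwise distances at least $1$. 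After projecting far-away $s_i$ onto the circle of radius $1.5$ (which only shrinks distances to at worst $1$), all $k+2$ points fit inside a disk of radius $1.5$. Fodor's theorem (that twelve points with pairwise distances $\ge 1$ cannot fit in a disk of radius $1.5$) then gives $k+2\le 11$, i.e.\ $k\le 9$. This last step is a sharp circle-packing bound, not a pigeonhole over angular sectors, and is what makes the constant $9$ come out. Your write-up is missing all three ingredients: the reversed swap, the two lemmas it yields about the partners $s_i$, and the external packing theorem.
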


In Subsection~\ref{8GGsec} we prove the following proposition.

\begin{proposition}
\label{8-GG-pro}
There exist point sets $P$ in the plane such that \kGG{8}{} does not contain any Euclidean bottleneck matching of $P$.
\end{proposition}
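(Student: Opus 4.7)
The plan is to exhibit a specific point set $P$ with $|P|$ even such that every Euclidean bottleneck perfect matching of $P$ contains at least one edge $(u,v)$ for which $\CD{u}{v}$ contains at least $9$ other points of $P$. I would build $P$ as the union of three nearly collinear pieces: a left cluster $L$ and a right cluster $R$, each of odd cardinality, and a central cluster $C$ of exactly $9$ very tightly packed points placed near the midpoint of the segment joining $L$ and $R$. The parities are tuned so that $|L|+|C|+|R|$ is even, and the geometry is chosen so that $C\subset \CD{u}{v}$ for every ``spanning'' edge with $u\in L$ and $v\in R$.

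The first step of the analysis is a parity argument: because $|L|$ and $|R|$ are both odd while their sum is even, every perfect matching of $P$ either contains at least one spanning edge between $L$ and $R$, or uses cluster points of $C$ to bridge the parity between $L$ and $R$. In the first case, any such spanning edge $(u,v)$ has $C\subset \CD{u}{v}$ and so automatically witnesses the claim with $9$ points inside. In the second case, I would argue that the bridging edges themselves must traverse most of $C$ by the collinear arrangement, so at least one of them captures $9$ cluster points in its diametral disk.

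The main obstacle is ruling out ``shortcut'' bottleneck matchings that avoid any long edge through $C$. Since every $c\in C$ lies in $\CD{u}{v}$ we have $|uc|,|cv|\le|uv|$, so a priori one could replace a spanning pair $(u,v)$ by two shorter edges of the form $(u,c)$ and $(c',v)$ with $c,c'\in C$, seemingly lowering the bottleneck. To block this, I would choose $|L|$ and $|R|$ large enough and spread them apart so that, for every way of using $k$ cluster points to bridge between $L$ and $R$, an unmatched odd residue is left within $L$ or within $R$, forcing either a second crossing edge through $C$ or a long intra-cluster edge. The remaining work is a careful numerical balancing of inter-cluster distances against the spread of $C$, ensuring that every candidate bottleneck matching attains its minimum only via an edge whose diametral disk engulfs all of $C$. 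This combinatorial-geometric balancing act is the technical crux of the argument.
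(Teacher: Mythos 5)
Your plan is incomplete and has two problems, one minor and one fundamental.

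The minor problem is a parity slip: you want $|L|$ and $|R|$ both odd with $|C|=9$ also odd, but then $|L|+|C|+|R|$ is odd, so no perfect matching exists at all. Fixing this (say $|L|$ odd and $|R|$ even) already disturbs the intended symmetry of the parity argument.

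The fundamental problem is the one you yourself flag in the last paragraph: you have not ruled out ``shortcut'' matchings, and that is precisely the content of the proposition. With three nearly collinear clusters, a bottleneck matching is strongly incentivized to pair points within the same cluster and to pair stragglers to nearby points in $C$, so there is no obvious reason any bottleneck matching needs an edge whose diametral disk swallows all nine points of $C$. Your sketch defers this to ``a careful numerical balancing,'' but no mechanism is proposed that would force the long edge. The paper's construction supplies exactly such a mechanism: it uses only 20 points --- a pair $a,b$ at distance $1$, nine points $u_1,\ldots,u_9$ just inside $D[a,b]$, and nine far points $r_1,\ldots,r_9$ at distance $1+\epsilon$ from their unique nearest interior point $u_j$ and at distance greater than $1+\epsilon$ from everything else. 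The matching $\{(a,b)\}\cup\{(r_j,u_j)\}$ shows the optimum bottleneck is at most $1+\epsilon$; conversely, if $a$ is not matched to $b$, then $a$ either takes an $r_j$ (too long) or steals some $u_j$, which strands $r_j$ with no partner within $1+\epsilon$. Every bottleneck matching is therefore forced to use $(a,b)$, whose diametral disk contains $9$ points, so $(a,b)\notin$ \kGG{8}{}. The crucial trick --- each interior point is the \emph{unique} acceptable partner of a dedicated exterior point --- is absent from your construction, and without something playing that role your collinear-cluster approach does not close.
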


In Subsection~\ref{7GGsec} we prove the following proposition.

\begin{proposition}
\label{7-GG-pro}
There exist point sets $P$ in the plane such that \kGG{7}{} does not contain any Euclidean bottleneck Hamiltonian cycle of $P$.
\end{proposition}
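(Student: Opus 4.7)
The plan is to exhibit an explicit point configuration serving as a counterexample, paralleling in spirit the construction used to establish Proposition~\ref{8-GG-pro} but tailored to the more rigid Hamiltonicity constraint. I would design a point set $P$ together with a candidate value $\lambda^*$ such that: (i) $P$ admits some Hamiltonian cycle whose longest edge has length at most $\lambda^*$, so the bottleneck of any bottleneck Hamiltonian cycle is at most $\lambda^*$; and (ii) every Hamiltonian cycle of $P$ whose longest edge has length at most $\lambda^*$ is forced to contain an edge $(p,q)$ with $|D[p,q]\cap (P\setminus\{p,q\})|\ge 8$, and hence $(p,q)\notin \kGG{7}$.

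First, I would build $P$ by combining a compact ``cluster'' of roughly eight to ten points with a small number of carefully placed ``outlier'' points. The cluster would be arranged in a near-symmetric pattern inside a disk of diameter approximately $\lambda^*$, so that the diametral disk of any long-ish pair of cluster points sweeps over essentially the entire cluster and thus catches at least eight witnesses. The outliers would be positioned so that the only pairwise distances lying in the critical range $(\lambda^*_{\text{short}},\lambda^*]$ are a carefully controlled set of ``outlier-to-cluster'' edges that any Hamiltonian cycle is forced to use in order to enter and leave the cluster. All remaining pairwise distances would be pushed to be either clearly smaller than $\lambda^*$ (and safely inside $\kGG{7}$) or strictly larger than $\lambda^*$ (and therefore unusable).

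Second, I would analyze the combinatorial structure of any Hamiltonian cycle $H$ of $P$ with $\lambda(H)\le\lambda^*$. Because $H$ is $2$-regular and spanning, each outlier has exactly two neighbours in $H$; the distance constraint limits each outlier's admissible neighbours to a short list, which pins down the ``entry and exit'' pattern of $H$ into the cluster. A short case analysis — or, if convenient, an exhaustive enumeration over the finite set of admissible outlier-incidence patterns — then shows that in every admissible $H$, some cluster-internal edge $(p,q)$ must span two points that are near-diametrically opposite in the cluster. By construction, $D[p,q]$ then contains at least eight further cluster points, certifying $(p,q)\notin\kGG{7}$.

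The main obstacle will be geometric calibration: I must position the cluster points densely enough that any cluster-spanning edge used by $H$ carries at least eight witnesses inside its diametral disk, yet sparsely enough that the ``short'' cluster edges $H$ does use sit comfortably inside $\kGG{7}$; and I must arrange the outliers so that no clever rerouting can avoid invoking a cluster-spanning edge. Because Hamiltonian cycles are far more constrained than matchings, the construction likely requires a denser cluster and more outliers than the $\kGG{8}$ matching counterexample, and checking that no admissible routing evades the bad edge will be the bulk of the argument.
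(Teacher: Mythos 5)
Your high-level strategy---construct an explicit point set, show it admits a Hamiltonian cycle with small bottleneck $\lambda^*$, then argue any bottleneck Hamiltonian cycle must use an edge with $\ge 8$ points in its diametral disk---matches the paper's strategy exactly. But the proposal stops at the level of a plan: you write ``I would design\ldots'' and ``I would analyze\ldots'', defer the key difficulties (``geometric calibration,'' ``the bulk of the argument'') to later, and never produce the construction. That is not a proof of the proposition; it is a description of how one might go about finding a proof.

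The gap is not merely a matter of filling in parameters. Your ``dense cluster plus outliers'' picture would, if carried out naively, face a serious obstacle that you gesture at but don't resolve: a Hamiltonian cycle has $n$ edges and enormous combinatorial freedom, so it is not at all obvious that a handful of outliers suffices to ``pin down the entry and exit pattern'' so tightly that a cluster-spanning edge is unavoidable. The paper circumvents this by a quite different device: almost all of the cycle is forced \emph{outright} before any case analysis begins. Specifically, it places a chain of points $S$ (with $s_5$ repeated nine times) so that each point of $S$ is connected in the cycle to its \emph{unique} first and second nearest neighbours at distance exactly $1+\epsilon$, with every other incident distance strictly larger. Any Hamiltonian cycle of bottleneck at most $1+\epsilon$ must therefore contain the entire chain $B$, with endpoints $r_1$ and $r_8$. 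That reduces the free choices to how to thread the \emph{three} remaining points $a$, $b$, $x$ between $r_1$ and $r_8$, and a two-line case analysis then forces the edge $(a,b)$, whose disk $D[a,b]$ contains the eight points $r_1,\dots,r_8$. Your proposal contains no analogue of this ``force almost the whole cycle'' mechanism, nor the auxiliary third point $x$ that makes the final parity/degree argument close (with only $a$ and $b$ remaining, $a$ could connect to $r_1$ and $r_8$ and never touch $b$; $x$ is exactly what rules this out). Without such a mechanism, the ``short case analysis or exhaustive enumeration'' you promise over ``admissible outlier-incidence patterns'' has no reason to be short or even finite in a useful sense, and the argument does not go through as written.
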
 
Proposition~\ref{7-GG-pro} improves the previous bound of \kGG{5}{} that is obtained by Kaiser~{\em et al.}~\cite{Kaiser2015}

\addtocontents{toc}{\protect\setcounter{tocdepth}{1}}
\subsection{Proof of Theorem~\ref{9-GG-thr}}
\addtocontents{toc}{\protect\setcounter{tocdepth}{2}}
\label{9GGsec}

In this section we prove Theorem~\ref{9-GG-thr}. The proofs for Lemmas~\ref{sab-lemma} and \ref{ss-lemma} are similar to the proofs in \cite{Chang1992} which are adjusted for Gabriel graphs. The proof of Lemma~\ref{distance-lemma} is based on a similar technique that is used in \cite{Kaiser2015} for the Hamiltonicity of Gabriel graphs. 

Let $\mathcal{M}$ be the set of all perfect matchings of the complete graph with vertex set $P$. For a matching $M\in \mathcal{M}$ we define the {\em weight sequence} of $M$, \WS{M}, as the sequence containing the weights of the edges of $M$ in non-increasing order. A matching $M_1$ is said to be less than a matching $M_2$ if \WS{M_1} is lexicographically smaller than \WS{M_2}. We define a total order on the elements of $\mathcal{M}$ by their weight sequence. If two elements have exactly the same weight sequence, break ties arbitrarily to get a total order.

Let $M^* = \{(a_1, b_1),\dots, (a_{\frac{n}{2}}, b_{\frac{n}{2}})\}$ be a matching in $\mathcal{M}$ with minimum weight sequence. Observe that $M^*$ is a Euclidean bottleneck matching for $P$. In order to prove Theorem~\ref{9-GG-thr}, we will show that all edges of $M^*$ are in \kGG{9}{}. Consider any edge $(a, b)$ in $M^*$. If $\CD{a}{b}$ contains no point of $P\setminus\{a,b\}$, then $(a,b)$ is an edge of \kGG{9}{}. Suppose that $\CD{a}{b}$ contains $k$ points of $P\setminus\{a,b\}$. We are going to prove that $k\le 9$. Let $R = \{r_1, r_2,\dots, r_k\}$ be the set of points of $P\setminus\{a,b\}$ that are in $\CD{a}{b}$. Let $S=\{s_1, s_2,\dots, s_k\}$ represent the points for which $(r_i,s_i)\in M^*$.  

Without loss of generality, we assume that $\CD{a}{b}$ has diameter 1 and is centered at the origin $o=(0,0)$, and $a=(-0.5,0)$ and $b=(0.5,0)$. For any point $p$ in the plane, let $\|p\|$ denote the distance of $p$ from $o$. Note that $|ab|=1$, and for any point $x\in \CD{a}{b}\setminus\{a,b\}$ we have $\max\{|xa|,|xb|\}<1$.

\begin{lemma}
\label{sab-lemma}
For each point $s_i\in S$, $\min\{|s_ia|, |s_ib|\}\ge 1$. 
\end{lemma}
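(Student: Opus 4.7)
The plan is to exploit the minimality of $M^*$ with respect to the lexicographic order on weight sequences by exhibiting a swap that would produce a smaller matching if the lemma failed. Assume, for the sake of contradiction, that some point $s_i \in S$ satisfies $\min\{|s_ia|,|s_ib|\} < 1$, and suppose without loss of generality that $|s_ia| < 1$ (the case $|s_ib| < 1$ is symmetric via $a \leftrightarrow b$). Define the matching
\[
M' \;=\; \bigl(M^* \setminus \{(a,b),(r_i,s_i)\}\bigr) \cup \{(a,s_i),(b,r_i)\}.
\]
First I would verify that $M'$ is indeed a perfect matching of $P$: the four points $a,b,r_i,s_i$ are distinct (because $r_i \in \CD{a}{b}\setminus\{a,b\}$ and $s_i \notin \{a,b\}$ since $a,b$ are already matched to each other), and the swap exchanges their pairing while leaving all other edges of $M^*$ untouched.

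Next I would compare the weights. The two edges removed, $(a,b)$ and $(r_i,s_i)$, have lengths $|ab|=1$ and $|r_is_i|$. The two edges inserted, $(a,s_i)$ and $(b,r_i)$, satisfy $|as_i| < 1$ by the standing assumption, and $|br_i| < 1$ because $r_i \in \CD{a}{b}\setminus\{a,b\}$ and the excerpt already notes that $\max\{|xa|,|xb|\}<1$ for every such $x$. Hence both inserted edges are strictly shorter than $1$, while at least one removed edge (namely $(a,b)$) has length exactly $1$.

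The main step is then to turn this into a strict lex-decrease of the weight sequence. Let $C$ denote the multiset of edge lengths shared by $M^*$ and $M'$, and let $k$ be the number of elements of $C$ that are $\ge 1$. Because $|as_i|<1$ and $|br_i|<1$, the multiset $\WS{M'}$ contains exactly $k$ entries that are $\ge 1$, whereas $\WS{M^*}$ contains at least $k+1$ such entries (the $k$ from $C$, plus the length $|ab|=1$, plus possibly $|r_is_i|$). The first $k$ terms of the two sorted-decreasing sequences therefore agree (they are the $k$ entries of $C$ that are $\ge 1$ listed in decreasing order), but the $(k{+}1)$-st term of $\WS{M^*}$ is $\ge 1$ while the $(k{+}1)$-st term of $\WS{M'}$ is $<1$. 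Consequently $M' < M^*$ in the lexicographic order on weight sequences, contradicting the choice of $M^*$ as the matching with the minimum weight sequence. This contradiction forces $|s_ia|\ge 1$, and by the symmetric argument $|s_ib|\ge 1$, completing the proof. I do not expect any genuine obstacle here beyond carefully confirming that the swap yields a valid matching and that the lex comparison is carried out with the correct bookkeeping of ties.
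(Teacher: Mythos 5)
Your proof uses exactly the same swap as the paper---delete $(a,b)$ and $(r_i,s_i)$, add $(a,s_i)$ and $(b,r_i)$---and the same appeal to minimality of the weight sequence, so it is essentially the paper's proof with the lexicographic comparison written out more explicitly. One small caveat: your claim that the first $k$ entries of $\WS{M^*}$ and $\WS{M'}$ agree can fail when $|r_is_i|$ exceeds some of the shared lengths in $C$, but in that case $\WS{M^*}$ pointwise dominates $\WS{M'}$ even earlier in the sequence, so the lex-decrease conclusion still holds and the argument is correct.
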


\begin{proof}
The proof is by contradiction; suppose that $|s_ia|<1$. Let $M$ be the perfect matching obtained from $M^*$ by deleting $\{(a,b), (r_i,s_i)\}$ and adding $\{(s_i,a), (r_i,b)\}$. The lengths of the two new edges are smaller than 1, and hence both $(s_i,a)$ and $(r_i,b)$ are shorter than $(a,b)$. Thus, $\WS{M}<_{\mathrm{lex}}\WS{M^*}$, which contradicts the minimality of $M^*$.
\end{proof}

As a corollary of Lemma~\ref{sab-lemma}, $R$ and $S$ are disjoint.

\begin{lemma}
\label{ss-lemma}
For each pair of points $s_i,s_j\in S$, $|s_is_j|\ge \allowbreak\max\allowbreak\{|r_is_i|, \allowbreak |r_js_j|, \allowbreak 1\}$.
\end{lemma}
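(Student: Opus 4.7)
The plan is to proceed by contradiction. Assuming $|s_is_j|<\max\{|r_is_i|,|r_js_j|,1\}$, I would construct a perfect matching $M$ of $P$ whose weight sequence is strictly lexicographically smaller than $\WS{M^*}$, contradicting the minimality of $M^*$. The key geometric fact I will use throughout is that $r_i,r_j\in\CD{a}{b}$, a closed disk of diameter $1$, so $|r_ir_j|\le 1$, and moreover $|ar_i|,|br_j|<1$ strictly since $r_i\ne b$ and $r_j\ne a$. By Lemma~\ref{sab-lemma} together with $R\cap S=\emptyset$, the six points $a,b,r_i,s_i,r_j,s_j$ are pairwise distinct, so each swap described below indeed yields a valid perfect matching on $P$.

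In the first case I would handle $|s_is_j|<1$. Here I would apply the 3-edge swap that removes $(a,b),(r_i,s_i),(r_j,s_j)$ from $M^*$ and inserts $(a,r_i),(b,r_j),(s_i,s_j)$. Each inserted edge then has length strictly less than $1$ (using $|ar_i|,|br_j|<1$ and the hypothesis $|s_is_j|<1$), while the removed edge $(a,b)$ has length exactly $1$ and is no shorter than the other two removed edges. Consequently, the maximum length among inserted edges is strictly less than the maximum length among removed edges, so the weight sequence of $M$ is strictly lexicographically less than that of $M^*$.

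In the second case, $|s_is_j|\ge 1$. Combining this with the standing contradiction assumption forces $\max\{|r_is_i|,|r_js_j|\}>|s_is_j|\ge 1$. I would then apply the 2-edge swap that removes $(r_i,s_i),(r_j,s_j)$ and inserts $(s_i,s_j),(r_i,r_j)$. The diameter bound gives $|r_ir_j|\le 1<\max\{|r_is_i|,|r_js_j|\}$, and the standing hypothesis gives $|s_is_j|<\max\{|r_is_i|,|r_js_j|\}$, so both inserted edges are strictly shorter than the longer of the two removed edges. Hence the weight sequence of $M$ again strictly decreases.

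The subtlety I expect to be the main obstacle is arranging all of the inequalities to be strict without invoking a general-position assumption on $P$. This is exactly what dictates the split into the two cases: when $|s_is_j|<1$, it is the removed edge $(a,b)$ of length exactly $1$ that guarantees the strict lexicographic drop; when $|s_is_j|\ge 1$, it is instead the strict inequality $\max\{|r_is_i|,|r_js_j|\}>1$ (derived from the case hypothesis) that makes the bound $|r_ir_j|\le 1$ bite. Together the two cases are exhaustive, and in each the minimality of $M^*$ is violated, establishing the lemma.
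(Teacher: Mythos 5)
Your argument is correct, but the two-case split is unnecessary: the paper establishes the lemma with the single $3$-edge swap you use in your first case, applied unconditionally. The reason is that $\max\{|ar_i|,|br_j|,|s_is_j|\}<\max\{|r_is_i|,|r_js_j|,|ab|\}$ holds \emph{whether or not} $|s_is_j|<1$; the terms $|ar_i|,|br_j|$ are strictly below $1=|ab|$ (since $r_i,r_j\in \CD{a}{b}\setminus\{a,b\}$), and $|s_is_j|$ is strictly below $\max\{|r_is_i|,|r_js_j|,1\}$ by the contradiction hypothesis, so the maximum of the three inserted lengths drops strictly below the maximum of the three removed lengths in every case — exactly the condition that forces a strict lexicographic decrease in the weight sequence. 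Your second case (swap to $(s_i,s_j),(r_i,r_j)$) is a valid alternative but buys nothing; the strictness worry that prompted it does not actually arise. One minor misstatement in your first case: the claim that $(a,b)$ ``is no shorter than the other two removed edges'' need not hold (e.g.\ $r_i$ near the top of $\CD{a}{b}$ and $s_i$ just outside both unit circles gives $|r_is_i|<1$, but $|r_is_i|>1$ is equally possible). Fortunately you never need it: it suffices that every inserted edge is shorter than $1$, which is $\le$ the maximum removed length regardless of how $|r_is_i|$ and $|r_js_j|$ compare with $1$. Dropping that claim and collapsing to a single case reproduces the paper's proof exactly.
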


\begin{proof}
The proof is by contradiction; suppose that $|s_is_j|< \max\{|r_is_i|,\allowbreak  |r_js_j|,\allowbreak 1\}$. Let $M$ be the perfect matching obtained from $M^*$ by deleting $\{(a,b),\allowbreak (r_i,s_i),\allowbreak (r_j,s_j)\}$ and adding $\{(a, r_i), \allowbreak (b, r_j),(s_i,s_j)\}$. Note that $\allowbreak \max \{|ar_i|,\allowbreak |br_j|,\allowbreak |s_is_j|\}<\max\{|r_is_i|,\allowbreak |r_js_j|,\allowbreak |ab|\}$. Thus, we get $\WS{M} \allowbreak{<_{\mathrm{lex}}}\allowbreak \WS{M^*}$, which contradicts the minimality of $M^*$.
\end{proof}

Let $C(x,r)$ (resp.\ $D(x,r)$) be the circle (resp.\ closed disk) of radius $r$ that is centered at a point $x$ in the plane.
For $i\in\{1,\dots,k\}$, let $s'_i$ be the intersection point between $C(o,1.5)$ and the ray with origin at $o$ passing through $s_i$. 
Let the point $p_i$ be $s_i$, if $\|s_i\|< 1.5$, and $s'_i$, otherwise. See Figure~\ref{distance-fig}. Let $S'=\{a, b, p_1,\dots,p_k\}$. 

\begin{observation}
\label{obs}
Let $s_j$ be a point in $S$, where $\|s_j\|\ge 1.5$. Then, the disk $D(s_j, \|s_j\|-0.5)$ is contained in the disk $D(s_j,|s_jr_j|)$. Moreover, the disk $D(p_j,1)$ is contained in the disk $D(s_j, \|s_j\|-0.5)$. See Figure~\ref{distance-fig}.
\end{observation}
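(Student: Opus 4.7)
The plan is to verify the two claimed containments by direct comparison of radii and centers, since in each case one disk is a concentric (or nearly concentric) enlargement of the other.

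For the first containment $D(s_j,\|s_j\|-0.5)\subseteq D(s_j,|s_jr_j|)$, the two disks share the same center $s_j$, so I only need to show $\|s_j\|-0.5\le |s_jr_j|$. The key observation is that $r_j\in R$ lies inside $\CD{a}{b}$, which is centered at the origin $o$ with radius $0.5$, so $\|r_j\|\le 0.5$. Applying the triangle inequality gives $|s_jr_j|\ge \|s_j\|-\|r_j\|\ge \|s_j\|-0.5$, which is exactly what is needed.

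For the second containment $D(p_j,1)\subseteq D(s_j,\|s_j\|-0.5)$, the hypothesis $\|s_j\|\ge 1.5$ means $p_j=s'_j$, the point on $C(o,1.5)$ on the ray from $o$ through $s_j$. Since $p_j$ lies on the segment $\overline{os_j}$, I have $|s_jp_j|=\|s_j\|-\|p_j\|=\|s_j\|-1.5$. For any point $x\in D(p_j,1)$, the triangle inequality gives $|s_jx|\le |s_jp_j|+|p_jx|\le (\|s_j\|-1.5)+1=\|s_j\|-0.5$, so $x\in D(s_j,\|s_j\|-0.5)$, proving the claim.

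There is no real obstacle here: the whole observation is a routine triangle-inequality calculation that uses only two facts fixed by the setup, namely that $\CD{a}{b}$ has radius $0.5$ centered at $o$ (hence $\|r_j\|\le 0.5$) and that $p_j$ is the radial projection of $s_j$ onto $C(o,1.5)$ so $p_j$ lies between $o$ and $s_j$ at distance exactly $1.5$ from $o$. The two containments will be used later to conclude that the disks $D(p_j,1)$ for distinct indices $j$ have disjoint interiors (together with $\CD{a}{b}$), which is how a bound on $k$ will eventually be extracted, but establishing the observation itself requires nothing beyond the two inequalities above.
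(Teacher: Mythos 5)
Your proof is correct and is exactly the routine verification the paper leaves implicit (the observation is stated with only a reference to the figure, not an explicit argument). Both containments follow from the two facts you isolated — $\|r_j\|\le 0.5$ because $r_j\in\CD{a}{b}$, and $|s_jp_j|=\|s_j\|-1.5$ because $p_j$ is the radial projection onto $C(o,1.5)$ lying between $o$ and $s_j$ — plus the standard criterion that $D(c_1,\rho_1)\subseteq D(c_2,\rho_2)$ whenever $|c_1c_2|+\rho_1\le\rho_2$, which in the second case holds with equality (internal tangency), as your computation shows.
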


\begin{figure}[htb]
  \centering
  \includegraphics[width=.55\columnwidth]{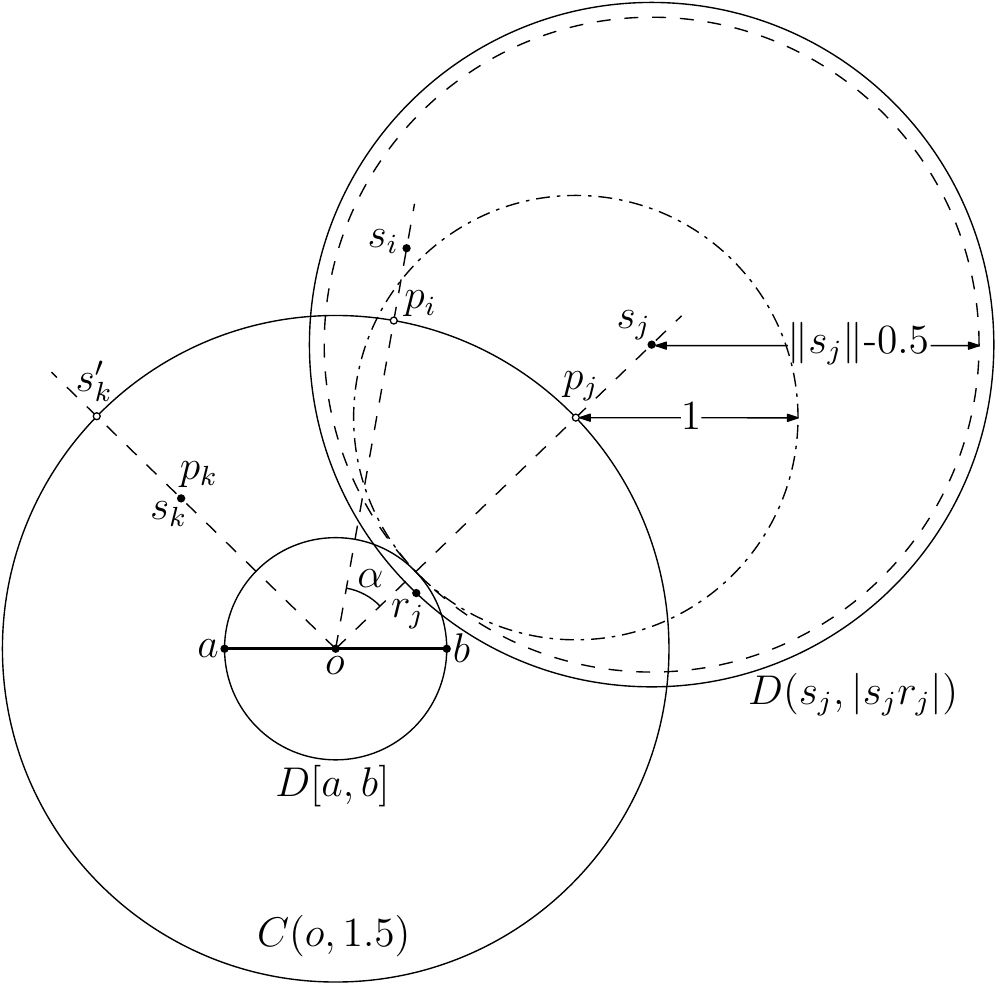}
 \caption{Proof of Lemma~\ref{distance-lemma}; $p_i=s'_i$, $p_j=s'_j$, and $p_k=s_k$.}
  \label{distance-fig}
\end{figure}

\begin{lemma}
\label{distance-lemma}
The distance between any pair of points in $S'$ is at least 1.
\end{lemma}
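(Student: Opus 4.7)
The plan is to check each pairwise distance in $S' = \{a, b, p_1, \ldots, p_k\}$ by case analysis based on whether each $p_i$ is the original point $s_i$ (when $\|s_i\| < 1.5$) or its projection $s'_i$ onto the circle $C(o, 1.5)$. The easy bounds come first: $|ab| = 1$ by construction; for $|a p_i|$ (and symmetrically $|b p_i|$), if $p_i = s_i$ then Lemma~\ref{sab-lemma} applies directly, and if $p_i = s'_i \in C(o, 1.5)$ then by the triangle inequality $|a p_i| \ge \|p_i\| - \|a\| = 1.5 - 0.5 = 1$.

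For $|p_i p_j|$ with $i \ne j$, I would split into three sub-cases. Sub-case~(a): both are originals, $p_i = s_i$ and $p_j = s_j$. Here Lemma~\ref{ss-lemma} immediately gives $|p_i p_j| = |s_i s_j| \ge 1$. Sub-case~(b): exactly one is projected, say $p_i = s_i$ and $p_j = s'_j$. Suppose for contradiction that $|s_i p_j| < 1$, so $s_i$ lies in the open disk $D^{\circ}(p_j, 1)$. By Observation~\ref{obs}, $D(p_j, 1) \subseteq D(s_j, \|s_j\| - 0.5) \subseteq D(s_j, |s_j r_j|)$; since the first inclusion is an internal tangency (the two closed disks meet only at the single point $0.5\, \hat{u}_j$ on the ray from $o$ to $s_j$, where $\hat{u}_j = s_j/\|s_j\|$), any point in the open disk $D^{\circ}(p_j, 1)$ is forced strictly inside $D(s_j, \|s_j\| - 0.5)$. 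Hence $|s_i s_j| < \|s_j\| - 0.5 \le |s_j r_j|$, contradicting Lemma~\ref{ss-lemma}.

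Sub-case~(c), where both $p_i = s'_i$ and $p_j = s'_j$, is the main obstacle: Observation~\ref{obs} alone does not preclude two projected points on $C(o, 1.5)$ from being close, so I would resort to a direct trigonometric calculation. Write $s_i = \alpha_i \hat{u}_i$ and $s_j = \alpha_j \hat{u}_j$ with $\alpha_i, \alpha_j \ge 1.5$, and let $\theta$ denote the angle at $o$ between $\hat{u}_i$ and $\hat{u}_j$. Then
\[
  |p_i p_j|^2 = 2(1.5)^2(1 - \cos\theta), \qquad |s_i s_j|^2 = \alpha_i^2 + \alpha_j^2 - 2\alpha_i \alpha_j \cos\theta.
\]
Combining Lemma~\ref{ss-lemma} with the estimate $|s_j r_j| \ge \alpha_j - 0.5$ (which follows from $\|r_j\| \le 0.5$, since $r_j \in \CD{a}{b}$) and assuming without loss of generality $\alpha_i \ge \alpha_j$, I would solve for $\cos\theta$ to obtain
\[
  \cos\theta \le \frac{\alpha_j^2 + \alpha_i - 0.25}{2 \alpha_i \alpha_j},
\]
and verify by a routine monotonicity argument (first in $\alpha_i$, then in $\alpha_j$) that this upper bound is maximized over the region $\alpha_i \ge \alpha_j \ge 1.5$ at $\alpha_i = \alpha_j = 1.5$, where it equals $7/9$. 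This yields $\cos\theta \le 7/9$ and hence $|p_i p_j|^2 \ge 4.5 \cdot (2/9) = 1$, completing the case. The sharpness of the constant $9$ in Theorem~\ref{9-GG-thr} is concentrated precisely at the equality case $\alpha_i = \alpha_j = 1.5$ in this final sub-case.
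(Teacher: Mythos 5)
Your proposal follows essentially the same approach as the paper's: the same case decomposition, the same use of Lemmas~\ref{sab-lemma} and~\ref{ss-lemma} and Observation~\ref{obs} for the easier cases, and in the hardest sub-case the same combination of the law of cosines with the lower bound on $|s_i s_j|$ derived from the disk containment. The only stylistic difference is that you present sub-case~(c) as a direct upper bound on $\cos\theta$ followed by a two-variable monotonicity argument, whereas the paper assumes $|p_i p_j| < 1$ for contradiction, deduces $\cos\alpha > 7/9$, and factors the resulting single-variable quadratic in $\|s_i\|$. These are algebraically equivalent reorganizations of the same computation and both give the correct threshold $7/9$ at $\alpha_i = \alpha_j = 1.5$.

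There is one slip you should fix in sub-case~(c). You cite the estimate $|s_j r_j| \ge \alpha_j - 0.5$, but under your convention $\alpha_i \ge \alpha_j$ the lower bound on $|s_i s_j|$ must come from the \emph{larger} point, i.e., $|s_i s_j| \ge |s_i r_i| \ge \alpha_i - 0.5$. (The paper uses the opposite labelling $\|s_i\| \le \|s_j\|$ and correspondingly the bound $\|s_j\| - 0.5$.) Using the weaker bound $\alpha_j - 0.5$ would give $\cos\theta \le (\alpha_i^2 + \alpha_j - 0.25)/(2\alpha_i \alpha_j)$, which is increasing in $\alpha_i$ and becomes vacuous for $\alpha_i$ large, so the argument would not close. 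Your displayed formula $\cos\theta \le (\alpha_j^2 + \alpha_i - 0.25)/(2\alpha_i\alpha_j)$ is the correct one and is in fact what $|s_i r_i| \ge \alpha_i - 0.5$ produces; only the index in the cited estimate is wrong. With that correction the monotonicity argument (decreasing in $\alpha_i$, then decreasing along the diagonal $\alpha_i=\alpha_j=t$) is valid and the proof is complete.
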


\begin{proof}
Let $x$ and $y$ be two points in $S'$. We are going to prove that $|xy|\ge 1$. We distinguish between the following three cases. 
\begin{itemize}
\item $\{x,y\}=\{a,b\}$. In this case the claim is trivial.
\item $x\in \{a,b\}, y\in \{p_1,\dots, p_k\}$. If $\|y\|=1.5$, then $y$ is on $C(o,1.5)$, and hence $|xy|\ge 1$. If $\|y\|<1.5$, then $y$ is a point in $S$. Therefore, by Lemma~\ref{sab-lemma}, $|xy|\ge 1$.

\item $x,y\in\{p_1,\dots,p_k\}$. Without loss of generality assume $x=p_i$ and $y= p_j$, where $1\le i<j\le k$. We differentiate between three cases:

\vspace{5pt}
{\em Case }($i$): $\|p_i\|< 1.5$ and $\|p_j\|<1.5$. In this case $p_i$ and $p_j$ are two points in $S$. Therefore, by Lemma~\ref{ss-lemma}, $|p_ip_j|\ge 1$.

\vspace{5pt}
{\em Case }($ii$): $\|p_i\|< 1.5$ and $\|p_j\|=1.5$. In this case $p_i$ is a point in $S$. By Observation~\ref{obs}, the disk $D(p_j, 1)$ is contained in the disk $D(s_j,|s_jr_j|)$, and by Lemma~\ref{ss-lemma}, $p_i$ is not in the interior of $D(s_j,|s_jr_j|)$. Therefore, $p_i$ is not in the interior of $D(p_j,1)$, which implies that $|p_ip_j|\ge 1$.

\vspace{5pt}
{\em Case }($iii$): $\|p_i\|= 1.5$ and $\|p_j\|=1.5$. In this case $\|s_i\|\ge 1.5$ and $\|s_j\|\ge 1.5$. Without loss of generality assume $\|s_i\|\le \|s_j\|$. For the sake of contradiction assume that $|p_ip_j|<1$; see Figure~\ref{distance-fig}. Then, for the angle $\alpha=\angle s_i o s_j$ we have $\sin(\alpha/2)< \frac{1}{3}$. Then, $\cos(\alpha)= 1-2\sin^2(\alpha/2)>\frac{7}{9}$. By the law of cosines in the triangle $\bigtriangleup s_ios_j$, we have
\begin{equation}
\label{ineq1}
|s_is_j|^2<\|s_i\|^2+\|s_j\|^2-\frac{14}{9}\|s_i\|\|s_j\|.
\end{equation}
By Observation~\ref{obs}, the disk $D(s_j,\|s_j\|-0.5)$ is contained in the disk $D(s_j,|s_jr_j|)$, and by Lemma~\ref{ss-lemma}, $s_i$ is not in the interior of $D(s_j,|s_jr_j|)$. Therefore, $s_i$ is not in the interior of $D(s_j,\|s_j\|-0.5)$. Thus, $|s_is_j|\ge \|s_j\|-0.5$. In combination with Inequality~(\ref{ineq1}), this implies
\begin{equation}
 \label{ineq2}
\|s_j\|\left(\frac{14}{9}\|s_i\|-1\right) < \|s_i\|^2-\frac{1}{4}.
\end{equation}

In combination with the assumption $\|s_i\| \le \|s_j\|$, Inequality~(\ref{ineq2}) implies
$$\frac{5}{9}\|s_i\|^2-\|s_i\|+\frac{1}{4}<0,$$
i.e., 
$$\frac{5}{9}\left(\|s_i\|-\frac{3}{10}\right)\left(\|s_i\|-\frac{3}{2}\right)<0.$$
This is a contradiction, because, since $\|s_i\|\ge 1.5$, the left-hand side is non-negative. Thus $|p_ip_j|\ge 1$, which completes the proof of the lemma.
\end{itemize}
\end{proof}
By Lemma~\ref{distance-lemma}, the points in $S'$ have mutual distance at least 1. Moreover, the points in $S'$ lie in $D(o,1.5)$.
Fodor~\cite{Fodor2000} proved that the smallest circle which contains 12 points with mutual distances at least 1 has radius 1.5148. 
Therefore, $S'$ contains at most $11$ points. Since $a,b\in S'$, this implies that $k\le 9$. Therefore, $S$, and consequently $R$, contains at most 9 points. Thus, $(a,b)$ is an edge in \kGG{9}{}. This completes the proof of Theorem~\ref{9-GG-thr}.
\addtocontents{toc}{\protect\setcounter{tocdepth}{1}}
\subsection{Proof of Proposition~\ref{8-GG-pro}}
\addtocontents{toc}{\protect\setcounter{tocdepth}{2}}
\label{8GGsec}

In this section we prove Proposition~\ref{8-GG-pro}. We show that for some point sets $P$, \kGG{8}{} does not contain any Euclidean bottleneck matching of $P$.

Consider Figure~\ref{gg:8-GG-fig} which shows a configuration of a set $P$ of 20 points. The closed disk $\CD{a}{b}$ is centred at $c$ and has diameter one, i.e., $|ab|=1$. $\CD{a}{b}$ contains 9 points $U=\{u_1, \dots, u_9\}$ which lie on a circle with radius $\frac{1}{2}-\epsilon$ which is centred at $c$. Nine points in $U'=\{r_1,\dots,r_9\}$ are placed on a circle with radius 1.5 which is centred at $c$ in such a way that $|r_ju_j|= 1+\epsilon$, $|r_ja|>1+\epsilon$, $|r_jb|>1+\epsilon$, and $|r_jr_k|>1+\epsilon$ for $1\le j, k\le 9$ and $j\neq k$. Consider a perfect matching $M=\{(a,b)\}\cup \{(r_i, u_i): i=1,\dots, 9\}$ where each point $r_i\in U'$ is matched to its closest point $u_i$. It is obvious that $\lambda(M)=1+\epsilon$, and hence the bottleneck of any bottleneck perfect matching is at most $1+\epsilon$. We will show that any Euclidean bottleneck matching of $P$ contains $(a,b)$. By contradiction, let $M^*$ be a Euclidean bottleneck matching which does not contain $(a,b)$. In $M^*$, $a$ is matched to a point $x\in U\cup U'$. If $x \in U'$, then $|ax|>1+\epsilon$. If $x\in U$, w.l.o.g. assume that $x = u_1$. Thus, in $M^*$ the point $r_1$ is matched to a point $y$ where $y\neq u_1$. Since $u_1$ is the closest point to $r_1$ and $|r_1u_1|=1+\epsilon$, $|r_1y|>1+\epsilon$. In both cases $\lambda(M^*)> 1+\epsilon$, which is a contradiction. Therefore, $M^*$ contains $(a,b)$. Since $\CD{a}{b}$ contains 9 points of $P\setminus\{a,b\}$, $(a,b)\notin\text{\kGG{8}{}}$. Therefore \kGG{8}{} does not contain any Euclidean bottleneck matching of $P$. 
\begin{figure}[htb]
  \centering
  \includegraphics[width=.7\columnwidth]{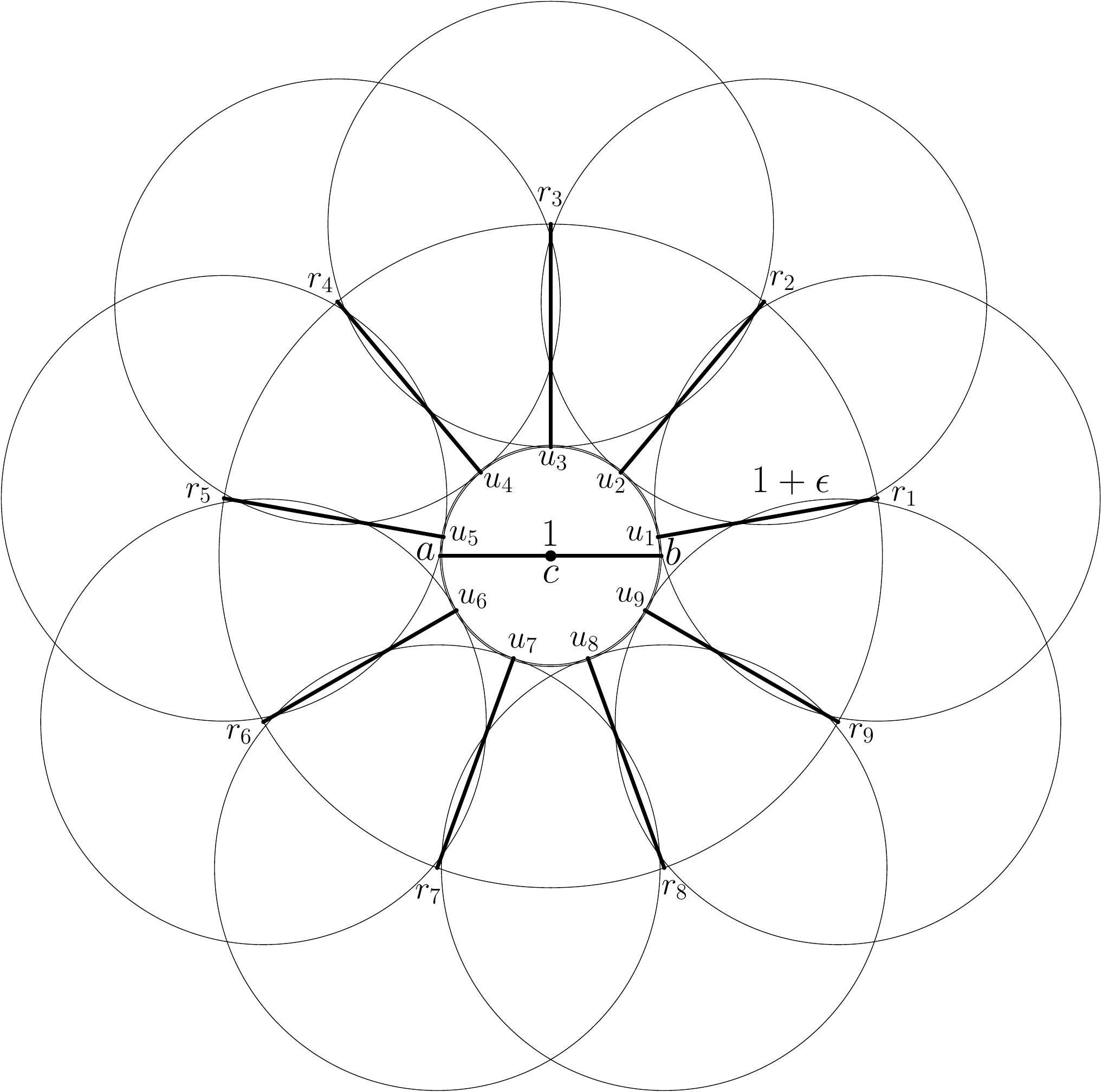}
 \caption{A set of 20 points such that \kGG{8}{} does not contain any Euclidean bottleneck matching.}
  \label{gg:8-GG-fig}
\end{figure}

\addtocontents{toc}{\protect\setcounter{tocdepth}{1}}
\subsection{Proof of Proposition~\ref{7-GG-pro}}
\addtocontents{toc}{\protect\setcounter{tocdepth}{2}}
\label{7GGsec}
In this section we prove Proposition~\ref{7-GG-pro}. We show that for some point sets $P$, \kGG{7}{} does not contain any Euclidean bottleneck Hamiltonian cycle of $P$.

Figure~\ref{7-GG-fig} shows a configuration of a multiset $P=\{a, b,x, r_1, \dots, r_8,\allowbreak s_1,\allowbreak \dots, s_7\}$ of 26 points, where $s_5$ is repeated nine times. The closed disk $\CD{a}{b}$ is centered at $o$ and has diameter one, i.e., $|ab|=1$. $\CD{a}{b}$ contains all 8 points of the set $R=\{r_1, \dots, r_8\}$; these points lie on the circle with radius $\frac{1}{2}-\epsilon$ that is centered at $o$; all points of $R$ are in the interior of $\CD{a}{b}$. Let $S=\{s_1,\dots, s_7\}$ be the multiset of 15 points, where $s_5$ is repeated nine times. The red circles have radius 1 and are centered at points in $S$. Each point in $S$ is connected to its first and second closest point (the black edges in Figure~\ref{7-GG-fig}). Let $B$ the chain formed by these edges. Note that $r_1$ and $r_8$ are the endpoints of $B$. Specifically, $|r_1s_1|=|r_8s_7|=1$, and for each point $r_i$, where $2\le i\le 7$, $|s_ia|>1$, $|s_ib|>1$, $|s_ix|>1$, and $|r_is_{i-1}|=|r_is_i|=1$ (here by $s_5$ we mean the first and last endpoints of the chain defined by points labeled $s_5$). Consider the Hamiltonian cycle $H=B\cup\{(r_1,a),(a,b),(b,x),(x,r_8)\}$. The longest edge in $H$ has length 1. Therefore, the length of the longest edge in any bottleneck Hamiltonian cycle for $P$ is at most 1. In the rest we will show\textemdash by contradiction\textemdash that any bottleneck Hamiltonian cycle of $P$ contains $(a,b)$. Since in $B$ each point of $S$ is connected to its first and second closest point, every bottleneck Hamiltonian cycle of $P$ contains $B$, because otherwise, one of the points in $S$ should be connected to a point that is farther than its second closest point, and hence that edge is longer than 1. Now we consider possible ways to construct a bottleneck Hamiltonian cycle, say $H^*$, using the edges in $B$ and the points $a, b,x$. Assume $(a,b)\notin H^*$. Then, in $H^*$, $a$ is connected to two points in $\{r_1,r_8,x\}$. We differentiate between two cases: 
\begin{itemize}
 \item $(a,x)\in H^*$. In this case $|ax|>1$, and hence the longest edge in $H^*$ is longer than 1, which is a contradiction.
  \item $(a,x)\notin H^*$. In this case $(a,r_1)\in H^*$ and $(a,r_8)\in H^*$. This means that $H^*$ does not contain $x$ and $b$, which is a contradiction. 
\end{itemize}

Therefore, we conclude that $H^*$, and consequently any bottleneck Hamiltonian cycle of $P$, contains $(a,b)$.
Since $\CD{a}{b}$ contains 8 points of $P\setminus\{a,b\}$, $(a,b)\notin\text{\kGG{7}{}}$. Therefore \kGG{7}{} does not contain any Euclidean bottleneck Hamiltonian cycle of $P$. 

\begin{figure*}[htb]
  \centering
  \includegraphics[width=.7\columnwidth]{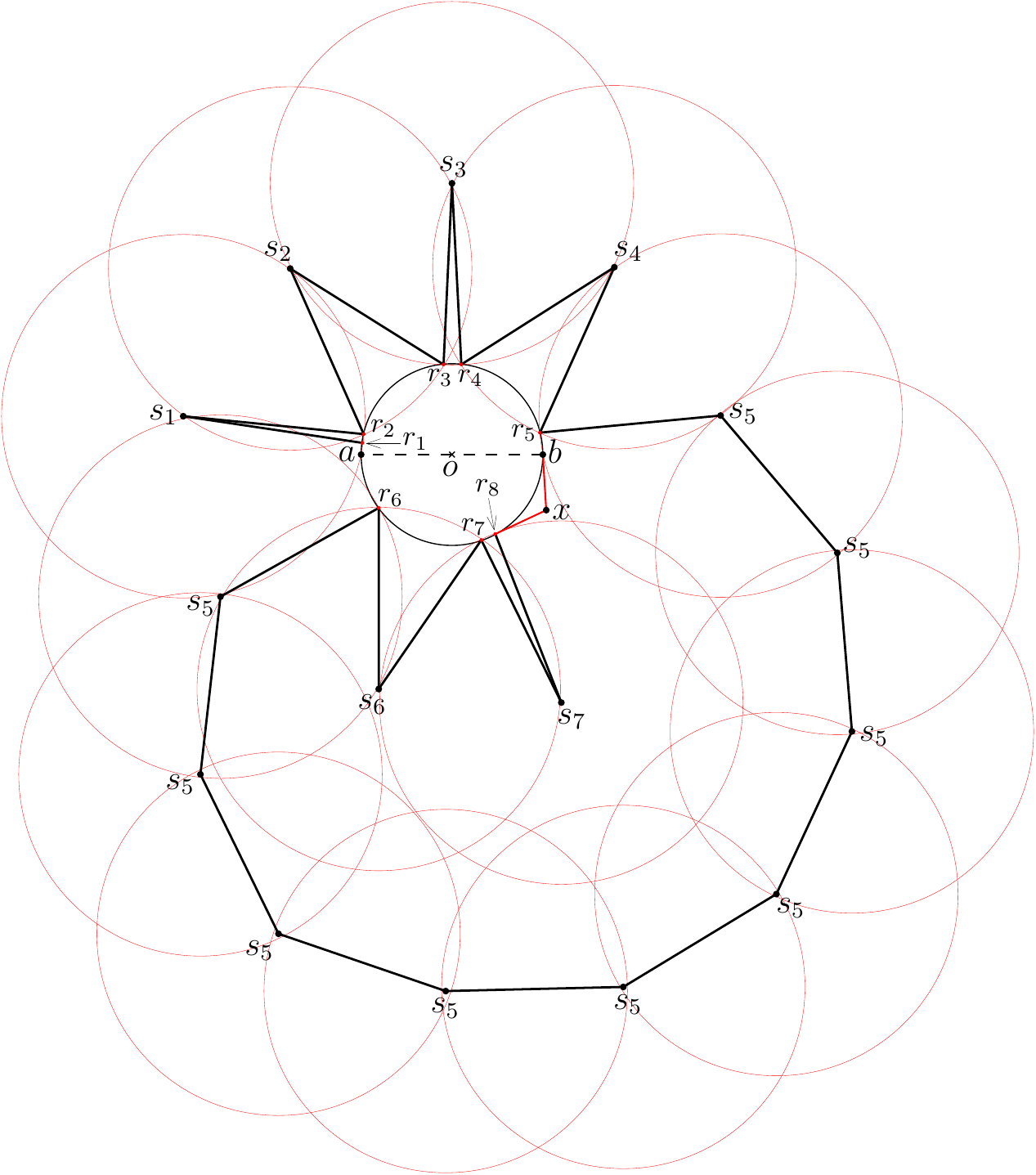}
 \caption{Proof of Proposition~\ref{7-GG-pro}. The bold-black edges belong to $B$. $\CD{a}{b}$ contains 8 points.}
  \label{7-GG-fig}
\end{figure*}

\section{Maximum Matching}
\label{gg:max-matching-section}
Let $P$ be a set of $n$ points in the plane. In this section we will prove that \kGG{0}{} has a matching of size at least $\frac{n-1}{4}$; this bound is tight. We also prove that \kGG{1}{} has a matching of size at least $\frac{2(n-1)}{5}$ and \kGG{2}{} has a perfect matching (when $n$ is even).

First we give a lower bound on the number of components that result after removing a set $S$ of vertices from \kGG{k}{}. Then we use Theorem~\ref{gg:Tutte} and Theorem~\ref{gg:Berge}, respectively presented by Tutte~\cite{Tutte1947} and Berge~\cite{Berge1958}, to prove a lower bound on the size of a maximum matching in \kGG{k}{}. 

\begin{figure}[htb]
  \centering
\setlength{\tabcolsep}{0in}
  $\begin{tabular}{cc}
 \multicolumn{1}{m{.5\columnwidth}}{\centering\includegraphics[width=.38\columnwidth]{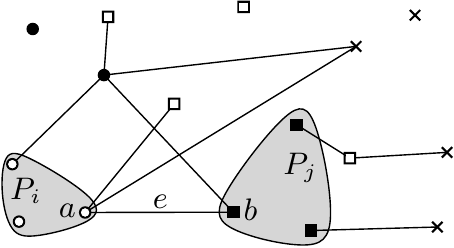}}
&\multicolumn{1}{m{.5\columnwidth}}{\centering\includegraphics[width=.38\columnwidth]{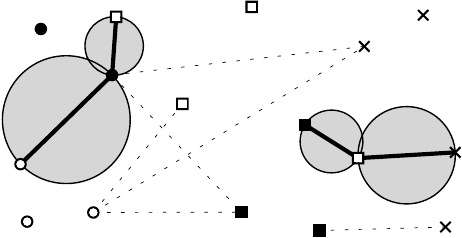}}\\
(a) & (b)
\end{tabular}$
  \caption{The point set $P$ of 16 points is partitioned into white/black disks, white/black squares, and crosses. (a) The graph $G(\mathcal{P})$, (b) The set $\mathcal{T}$ of straight-line edges corresponding to $MST(G(\mathcal{P}))$ is in bold, and the set $\mathcal{D}$ of their corresponding disks.}
\label{gg:partition-fig}
\end{figure}

Let $\mathcal{P}=\{P_1, P_2,\dots\}$ be a partition of the points in $P$. For two sets $P_i$ and $P_j$ in $\mathcal{P}$ define the distance $d(P_i,P_j)$ as the smallest Euclidean distance between a point in $P_i$ and a point in $P_j$, i.e., $d(P_i,P_j)=\min\{|ab|:a\in P_i, b\in P_j\}$.
Let $G(\mathcal{P})$ be the complete edge-weighted graph with vertex set $\mathcal{P}$. For each edge $e=(P_i,P_j)$ in $G(\mathcal{P})$, let $w(e)=d(P_i,P_j)$. This edge $e$ is defined by two points $a$ and $b$, where $a\in P_i$ and $b\in P_j$. Therefore, an edge $e\in G(\mathcal{P})$ corresponds to a straight line edge between two points $a,b\in P$; see Figure~\ref{gg:partition-fig}(a). Let $MST(G(\mathcal{P}))$ be a minimum spanning tree of $G(\mathcal{P})$. It is obvious that each edge $e$ in $MST(G(\mathcal{P}))$ corresponds to a straight line edge between $a,b\in P$. Let $\mathcal{T}$ be the set of all these straight line edges. Let $\mathcal{D}$ be the set of disks which have the edges of $\mathcal{T}$ as diameter, i.e., $\mathcal{D}=\{D[a,b]: (a,b)\in \mathcal{T}\}$. See Figure~\ref{gg:partition-fig}(b).

\begin{observation}
 \label{gg:T-plane}
$\mathcal{T}$ is a subgraph of a minimum spanning tree of $P$, and hence $\mathcal{T}$ is plane.
\end{observation}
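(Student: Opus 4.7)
The plan is to show that $\mathcal{T}$ is contained in some Euclidean minimum spanning tree of $P$, from which the planarity claim follows immediately: every Euclidean MST is plane, because two crossing segments in a spanning tree could, by the quadrilateral/triangle inequality argument, be swapped for two strictly shorter segments producing a lighter spanning tree. To establish the inclusion in some MST, I would combine the cut property applied to $MST(G(\mathcal{P}))$ with a standard exchange argument on the complete Euclidean graph over $P$.

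First I would prove a cut lemma tailored to each edge of $\mathcal{T}$. Fix $e=(a,b)\in\mathcal{T}$ with $a\in P_i$ and $b\in P_j$; by definition the edge $(P_i,P_j)$ lies in $MST(G(\mathcal{P}))$ and $|ab|=d(P_i,P_j)$. Removing $(P_i,P_j)$ from $MST(G(\mathcal{P}))$ splits $\mathcal{P}$ into two vertex sets $\mathcal{A}\ni P_i$ and $\mathcal{B}\ni P_j$, inducing a cut $(A,B)$ of $P$ with $A=\bigcup_{P_k\in\mathcal{A}}P_k$ and $B=\bigcup_{P_k\in\mathcal{B}}P_k$. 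For any edge $(a',b')$ of the complete graph on $P$ with $a'\in A\cap P_{i'}$ and $b'\in B\cap P_{j'}$, the definition of $d$ gives $|a'b'|\ge d(P_{i'},P_{j'})$, and the cut property applied to $MST(G(\mathcal{P}))$ gives $d(P_{i'},P_{j'})\ge d(P_i,P_j)$, since $(P_{i'},P_{j'})$ crosses $(\mathcal{A},\mathcal{B})$ in $G(\mathcal{P})$. Thus $|ab|\le|a'b'|$, so $(a,b)$ is a minimum-weight edge of the complete Euclidean graph on $P$ across the cut $(A,B)$. A crucial byproduct of the construction is that $(a,b)$ is the \emph{only} edge of $\mathcal{T}$ that crosses $(A,B)$, because under the natural bijection between $\mathcal{T}$ and the edges of $MST(G(\mathcal{P}))$, every other edge of $\mathcal{T}$ corresponds to an edge of $MST(G(\mathcal{P}))$ lying entirely within $\mathcal{A}$ or entirely within $\mathcal{B}$.

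Now I would close out with an exchange argument. Start with an arbitrary Euclidean MST $T^*$ of $P$. For each $(a,b)\in\mathcal{T}\setminus T^*$, adding $(a,b)$ to $T^*$ creates a unique cycle, which must contain some edge $(a',b')$ crossing the cut $(A,B)$. The cut lemma gives $|a'b'|\ge|ab|$, and the minimality of $T^*$ forces equality; swapping $(a',b')$ for $(a,b)$ therefore yields another MST. By the uniqueness byproduct above, $(a',b')\notin\mathcal{T}$, so edges of $\mathcal{T}$ already present in $T^*$ are never evicted. Iterating over all edges of $\mathcal{T}$ produces an MST of $P$ that contains $\mathcal{T}$, completing the proof. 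The main delicate point is precisely this bookkeeping — verifying that each exchange ejects only edges outside $\mathcal{T}$ — but it is handled cleanly by the observation that only one edge of $\mathcal{T}$ crosses the cut used at each step.
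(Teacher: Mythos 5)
Your proof is correct, and it fills in a justification that the paper leaves implicit (the statement is asserted as an Observation with no proof; its content is only invoked indirectly, e.g.\ inside the proof of Lemma~\ref{gg:D-empty} and Lemma~\ref{gg:not-mst-edge}). The two pieces you need — that each edge of $\mathcal{T}$ is a minimum-weight edge across its induced cut, and that it is the \emph{unique} edge of $\mathcal{T}$ crossing that cut — are exactly the right pieces: the first alone would only put each edge of $\mathcal{T}$ in \emph{some} MST individually, not all of $\mathcal{T}$ in a \emph{single} MST, so the uniqueness bookkeeping in the exchange argument is where the real content lies, and you get it right by exploiting the bijection with $MST(G(\mathcal{P}))$ (a tree, hence any cut is crossed by exactly one of its edges). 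The only small thing worth noting is that the exchange step must also observe that the edges of $\mathcal{T}$ already installed in previous iterations never recross a later cut, which again follows from the same uniqueness fact; you do flag this but it deserves the emphasis you give it, since it is the one place a careless version of the argument would fail. The planarity remark (crossing segments in a spanning tree can be uncrossed to strictly reduce weight) is the standard folklore fact and your one-line sketch is adequate.
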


\begin{lemma}
 \label{gg:D-empty}
A disk $\CD{a}{b}\in\mathcal{D}$ does not contain any point of $P\setminus\{a,b\}$.
\end{lemma}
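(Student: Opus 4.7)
The plan is to argue by contradiction: suppose some point $c \in P\setminus\{a,b\}$ lies in $\CD{a}{b}$ for an edge $(a,b)\in\mathcal{T}$ corresponding to an MST-edge $e=(P_i,P_j)$ of $G(\mathcal{P})$, where $a\in P_i$ and $b\in P_j$. The first step is the standard Thales observation: since $c$ lies in the closed disk with diameter $\overline{ab}$ and $c\notin\{a,b\}$, the angle $\angle acb\ge 90^\circ$ and $|ac|^2+|bc|^2\le|ab|^2$ with both $|ac|>0$ and $|bc|>0$, so $|ac|<|ab|$ and $|bc|<|ab|$ strictly. I would then let $P_\ell$ be the class of $\mathcal{P}$ that contains $c$ and split into three cases according to whether $\ell=i$, $\ell=j$, or $\ell\notin\{i,j\}$.

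The two easy cases ($\ell=i$ or $\ell=j$) are immediate from the definition of the weight function $d(\cdot,\cdot)$ on $G(\mathcal{P})$: for example if $\ell=i$ then $c\in P_i$ and $b\in P_j$, so $d(P_i,P_j)\le|cb|<|ab|$, contradicting $|ab|=w(e)=d(P_i,P_j)$; the case $\ell=j$ is symmetric. The main step is the third case, $\ell\notin\{i,j\}$, where I would invoke Lemma~\ref{gg:not-mst-edge} on the triangle-cycle $\delta(e)$ in $G(\mathcal{P})$ consisting of $e=(P_i,P_j)$ together with the two edges $(P_i,P_\ell)$ and $(P_\ell,P_j)$. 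By definition of $d(\cdot,\cdot)$ the weights of these two edges are at most $|ac|$ and $|bc|$ respectively, hence strictly smaller than $w(e)=|ab|$.

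Lemma~\ref{gg:not-mst-edge} states that the largest non-tree edge in $\delta(e)$ has weight at least $w(e)$; since both $(P_i,P_\ell)$ and $(P_\ell,P_j)$ have weight strictly less than $w(e)$, neither of them can be non-tree, so both must belong to $MST(G(\mathcal{P}))$. But then $MST(G(\mathcal{P}))$ contains the 3-cycle $(P_i,P_j),(P_j,P_\ell),(P_\ell,P_i)$, contradicting the fact that a tree is acyclic. This finishes all three cases and proves that $\CD{a}{b}$ is empty of points of $P\setminus\{a,b\}$. I do not expect any serious obstacle here; the only subtle point is checking that the inequalities $|ac|<|ab|$ and $|bc|<|ab|$ are strict even when $c$ lies on the boundary circle, which is why the Thales calculation is needed rather than just the disk containment.
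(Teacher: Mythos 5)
Your proof is correct, but it takes a genuinely different route from the one the paper displays. The paper's proof is a one-liner: it cites Observation~\ref{gg:T-plane} (that $\mathcal{T}$ is a subgraph of a Euclidean minimum spanning tree of $P$) together with the well-known containment $\mathrm{MST}(P)\subseteq GG(P)$, and the emptiness of the disks follows immediately from the definition of the Gabriel graph. You instead argue from scratch inside the auxiliary contracted graph $G(\mathcal{P})$: the Thales inequality $\max\{|ac|,|bc|\}<|ab|$ for any $c\in\CD{a}{b}\setminus\{a,b\}$, then a case split on which class $c$ lies in, where the cases $c\in P_i$ and $c\in P_j$ contradict $w(P_i,P_j)=d(P_i,P_j)=|ab|$ and the remaining case contradicts the minimality of $MST(G(\mathcal{P}))$ via Lemma~\ref{gg:not-mst-edge} applied to the triangle $(P_i,P_j),(P_j,P_\ell),(P_\ell,P_i)$. (Curiously, this case analysis is essentially what the authors originally drafted and then commented out in favor of the shorter containment argument.) The trade-off: the paper's version is shorter and more conceptual, but it leans on Observation~\ref{gg:T-plane} and the background fact about Gabriel graphs; yours is self-contained, needing only the definition of $G(\mathcal{P})$ and elementary MST properties already established in Section~\ref{gg:preliminaries}.

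One minor simplification you could make in your third case: you do not really need the full statement of Lemma~\ref{gg:not-mst-edge} and the ``so both must be tree edges, hence a cycle in the tree'' detour. Since $e=(P_i,P_j)$ has \emph{strictly} larger weight than both $(P_i,P_\ell)$ and $(P_\ell,P_j)$, it is the unique heaviest edge of that 3-cycle, and the standard cycle property of minimum spanning trees immediately forbids $e$ from being in any MST of $G(\mathcal{P})$, which is the desired contradiction. Your version is correct; this is just tighter.
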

\begin{proof}
By Observation~\ref{gg:T-plane}, $\mathcal{T}$ is a subgraph of a minimum spanning tree of $P$. It is well known that any minimum spanning tree of $P$ is a subgraph of \kGG{0}{(P)}. Thus, $\mathcal{T}$ is a subgraph of \kGG{0}{(P)}, and hence, any disk $\CD{a}{b}\in\mathcal{D}$\textemdash representing an edge of $\mathcal{T}$\textemdash does not contain any point of $P\setminus\{a,b\}$. 
\end{proof}

\begin{lemma}
\label{gg:center-in-lemma}
 For each pair $D_i$ and $D_j$ of disks in $\mathcal{D}$, $D_i$ (resp. $D_j$) does not contain the center of $D_j$ (resp $D_i$).
\end{lemma}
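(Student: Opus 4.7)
\medskip
\noindent\textbf{Proof proposal.} Write $D_i=\CD{a}{b}$ with center $c_i=(a+b)/2$ and $D_j=\CD{c}{d}$ with center $c_j=(c+d)/2$. The plan is to prove the two claims simultaneously by the following asymmetric reduction. Without loss of generality assume $|ab|\ge |cd|$. If $c_i\in D_j$, then $|c_ic_j|\le |cd|/2\le |ab|/2$, which forces $c_j\in D_i$; so it suffices to prove $c_j\notin D_i$. I will derive a contradiction from the assumption $c_j\in D_i$.

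The first step is a metric identity. Since $c_j$ is the midpoint of $cd$, Apollonius's theorem applied to the medians from $a$ and from $b$ in the triangles $acd$ and $bcd$ gives
\begin{equation*}
|ac|^2+|ad|^2=2|ac_j|^2+\tfrac{1}{2}|cd|^2,\qquad |bc|^2+|bd|^2=2|bc_j|^2+\tfrac{1}{2}|cd|^2.
\end{equation*}
The assumption $c_j\in D_i$ is equivalent to $\angle ac_jb\ge 90^\circ$, i.e.\ $|ac_j|^2+|bc_j|^2\le |ab|^2$. Adding the two identities and substituting yields the key inequality
\begin{equation*}
|ac|^2+|ad|^2+|bc|^2+|bd|^2 \;\le\; 2|ab|^2+|cd|^2.\qquad(\star)
\end{equation*}

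The second step is an application of the MST cycle property. Because $\mathcal{T}$ is contained in a minimum spanning tree $T$ of $P$, both $(a,b)$ and $(c,d)$ are edges of $T$. Consider the two $4$-cycles $a,b,c,d,a$ and $a,b,d,c,a$ in the complete Euclidean graph on $P$. In each cycle, $(a,b)$ and $(c,d)$ are tree edges, so at least one of the two remaining edges is a non-tree edge; Lemma~\ref{gg:not-mst-edge} then gives
\begin{equation*}
|ab|\;\le\;\max(|bc|,|ad|)\qquad\text{and}\qquad |ab|\;\le\;\max(|bd|,|ac|).
\end{equation*}
Thus at least one edge from $\{|bc|,|ad|\}$ and at least one edge from $\{|bd|,|ac|\}$ is $\ge |ab|$.

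The third step is a short case analysis. In each of the four possible combinations, the two ``large'' edges share a common endpoint $x\in\{a,b,c,d\}$, and the remaining two edges of the quadruple $\{|ac|,|ad|,|bc|,|bd|\}$ share the opposite endpoint $y$. Subtracting the two lower bounds $\ge |ab|^2$ each from $(\star)$ leaves
\begin{equation*}
(\text{sum of squares of the two remaining edges})\;\le\;|cd|^2\;\le\;|ab|^2.
\end{equation*}
But this inequality says exactly that $\angle$ at $y$ in the triangle on $\{y\}$ and the endpoints of the edge opposite to $x$ is at least $90^\circ$, i.e.\ $y$ lies in the diametral disk $\CD{\cdot}{\cdot}\in\{D_i,D_j\}$ whose diameter is the opposite edge. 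For instance, if the two large edges are $|ac|$ and $|ad|$ (common vertex $a$), one obtains $|bc|^2+|bd|^2\le |cd|^2$, so $b\in D_j$; the other three cases put $a\in D_j$, $d\in D_i$, $c\in D_i$ respectively. Each case contradicts Lemma~\ref{gg:D-empty}, which states that $D_i$ and $D_j$ are empty of points of $P$ other than their defining points. This completes the plan; the only delicate point is ensuring the case analysis is exhaustive, which follows from the ``one edge from each pair'' conclusion above.
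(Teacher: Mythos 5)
Your proof is correct, and it takes a genuinely different route from the paper's. The paper argues directly about the geometry of the two circles: assuming (w.l.o.g.) that $C_j$ is the larger circle and contains the center $c_i$, it introduces the intersection points $x,y$ of $C_i$ and $C_j$ and the secondary points $x_i,x_j,y_i,y_j$, and shows by a careful angle/arc argument that both cross-edges satisfy $\max\{|a_ia_j|,|b_ib_j|\}<\max\{|a_ib_i|,|a_jb_j|\}$; this directly contradicts Lemma~\ref{gg:not-mst-edge} applied to the single $4$-cycle $(a_i,a_j,b_j,b_i,a_i)$. You instead use the cycle property ``in the forward direction'' on \emph{both} $4$-cycles through $(a,b)$ and $(c,d)$ to get lower bounds $|ab|\le\max(|bc|,|ad|)$ and $|ab|\le\max(|bd|,|ac|)$, combine these with the Apollonius identity and the containment hypothesis to force $|e_3|^2+|e_4|^2\le|cd|^2\le|ab|^2$ for the two remaining cross-edges, and then the final contradiction is with Lemma~\ref{gg:D-empty} (disk emptiness), not with the cycle property itself. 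Replacing the paper's circle-intersection construction with the algebraic Apollonius step is cleaner and more elementary, at the cost of invoking one extra lemma. One small remark that applies to your argument exactly as much as to the paper's: both implicitly assume $a,b,c,d$ are four distinct points so that the $4$-cycles are simple (two edges of $\mathcal{T}$ can share an endpoint). This degeneracy is not handled in either proof, so it is not a gap specific to yours.
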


\begin{proof}
 Let $(a_i,b_i)$ and $(a_j,b_j)$ respectively be the edges of $\mathcal{T}$ which correspond to $D_i$ and $D_j$. Let $C_i$ and $C_j$ be the circles representing the boundary of $D_i$ and $D_j$. W.l.o.g. assume that $C_j$ is the bigger circle, i.e., $|a_ib_i|<|a_jb_j|$. By contradiction, suppose that $C_j$ contains the center $c_i$ of $C_i$. Let $x$ and $y$ denote the intersections of $C_i$ and $C_j$. Let $x_i$ (resp. $x_j$) be the intersection of $C_i$ (resp. $C_j$) with the line through $y$ and $c_i$ (resp. $c_j$). Similarly, let $y_i$ (resp. $y_j$) be the intersection of $C_i$ (resp. $C_j$) with the line through $x$ and $c_i$ (resp. $c_j$). 

\begin{figure}[htb]
  \centering
  \includegraphics[width=.45\columnwidth]{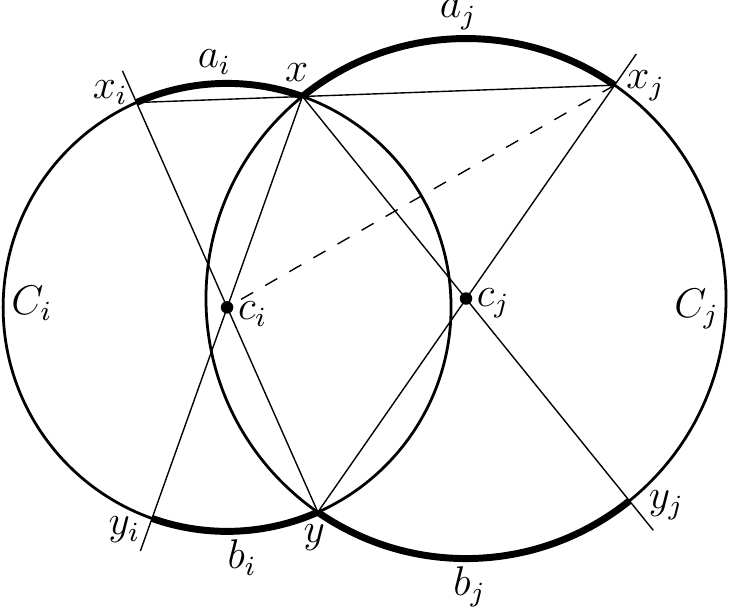}
 \caption{Illustration of Lemma~\ref{gg:center-in-lemma}: $C_i$ and $C_j$ intersect, and $C_j$ contains the center of $C_i$.}
  \label{gg:center-in-fig}
\end{figure}

As illustrated in Figure~\ref{gg:center-in-fig}, the arcs $\widehat{x_ix}$, $\widehat{y_iy}$, $\widehat{x_jx}$, and $\widehat{y_jy}$ are the potential positions for the points $a_i$, $b_i$, $a_j$, and $b_j$, respectively. First we will show that the line segment $x_ix_j$ passes through $x$ and $|a_ia_j|\leq|x_ix_j|$. The angles $\angle x_ixy$ and $\angle x_jxy$ are right angles, thus the line segment $x_ix_j$ goes through $x$. Since $\widehat{x_ix}<\pi$ (resp. $\widehat{x_jx}<\pi$), for any point $a_i\in \widehat{x_ix}, |a_ix|\leq|x_ix|$ (resp. $a_j\in \widehat{x_jx}, |a_jx|\leq|x_jx|$). Therefore, $$|a_ia_j|\leq|a_ix|+|xa_j|\leq|x_ix|+|xx_j|=|x_ix_j|.$$
Consider triangle $\bigtriangleup x_ix_jy$ which is partitioned by segment $c_ix_j$ into $t_1=\bigtriangleup x_ix_jc_i$ and $t_2=\bigtriangleup c_ix_jy$. Since $x_iy$ is a diameter of $C_i$ which passes through the center $c_i$, the length of the segment $x_ic_i$ of $t_1$ is equal to the length of the segment $c_iy$ of $t_2$. The segment $c_ix_j$ is shared by $t_1$ and $t_2$. Since $c_i$ is inside $C_j$ and $\widehat{yx_j}=\pi$, the angle $\angle yc_ix_j>\frac{\pi}{2}$. Thus, $\angle x_ic_ix_j$ in $t_1$ is smaller than $\frac{\pi}{2}$ (and hence smaller than $\angle yc_ix_j$ in $t_2$). Therefore, the segment $x_ix_j$ of $t_1$ is smaller than the segment $x_jy$ of $t_2$. Thus,

$$|a_ia_j|\leq|x_ix_j|<|x_jy|=|a_jb_j|.$$

By symmetry $|b_ib_j|<|a_jb_j|$. Therefore $\max\{|a_ia_j|,|b_ib_j|\}\allowbreak<\allowbreak\max\{\allowbreak|a_ib_i|,|a_jb_j|\}$. In addition $\delta=(a_i,a_j,b_j,b_i,a_i)$ is a cycle and at least one of $(a_i,a_j)$ and $(b_i,b_j)$ does not belong to $\mathcal{T}$. This contradicts Lemma~\ref{gg:not-mst-edge} (Note that by Observation~\ref{gg:T-plane}, $\mathcal{T}$ is a subgraph of a minimum spanning tree of $P$).
\end{proof}

Now we show that the intersection of every four disks in $\mathcal{D}$ is empty. In other words, every point in the plane cannot lie in more than three disks in $\mathcal{D}$. In Section~\ref{gg:proof-section} we prove the following theorem, and in Section~\ref{gg:lower-bounds-section} we present the lower bounds on the size of a maximum matching in \kGG{k}{}.

\begin{theorem}
 \label{gg:four-circle-theorem}
For every four disks $D_1,D_2,D_3,D_4\in\mathcal{D}$, $D_1\cap D_2\cap D_3\cap D_4=\emptyset$.
\end{theorem}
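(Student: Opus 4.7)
I would argue by contradiction: assume that there is a point $x$ lying in four disks $D_1,D_2,D_3,D_4\in \mathcal{D}$, where $D_i$ corresponds to the edge $(a_i,b_i)\in\mathcal{T}$, has center $c_i$ and radius $r_i=|a_ib_i|/2$, and set $\rho_i=|xc_i|\le r_i$. The plan is to show that the positions of the four centers around $x$ are so constrained by Lemmas \ref{gg:D-empty} and \ref{gg:center-in-lemma} that one of these lemmas must be violated.

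\emph{Step 1 (angle at centers).} For any $i\neq j$, Lemma \ref{gg:center-in-lemma} gives $|c_ic_j|\ge\max(r_i,r_j)\ge\max(\rho_i,\rho_j)$. Assuming WLOG $\rho_j\ge\rho_i>0$ and applying the law of cosines in $\triangle c_ixc_j$, the inequality $|c_ic_j|^2\ge\rho_j^2$ rearranges to
\[
\cos(\angle c_i x c_j)\;\le\;\frac{\rho_i}{2\rho_j}\;\le\;\frac{1}{2},
\]
so $\angle c_ixc_j\ge\pi/3$. Degeneracies ($x=c_i$, or collinear $x,c_i,c_j$) will be handled directly by Thales and Lemma \ref{gg:D-empty}.

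\emph{Step 2 (cyclic ordering and a short arc).} Order $c_1,\ldots,c_4$ cyclically around $x$; the four consecutive angles sum to $2\pi$, and Step 1 forces each of them to lie in $[\pi/3,\pi]$. In particular, at least one consecutive pair $c_i,c_j$ satisfies $\angle c_ixc_j\le\pi/2$, because otherwise the four angles would exceed $2\pi$.

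\emph{Step 3 (using emptiness).} Here I invoke Thales: $x\in D_i$ iff $\angle a_ixb_i\ge\pi/2$, so each edge $(a_i,b_i)$ has an aperture of at least $\pi/2$ at $x$, and the two rays $xa_i,xb_i$ straddle the ray $xc_i$. I would then show that for the consecutive pair $c_i,c_j$ isolated in Step 2, the endpoint of $e_i$ lying on the side of the ray $xc_i$ nearest $c_j$ — together with the aperture constraint and $\rho_i\le r_i,\rho_j\le r_j$ — must fall strictly inside $D_j$, contradicting Lemma \ref{gg:D-empty}. The calculation reduces to comparing the distance from that endpoint to $c_j$ against $r_j$, using the law of cosines in the triangle formed by $c_j,x$ and the endpoint.

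\emph{Main obstacle.} The pure angular count in Step 1 is not tight enough to exclude four disks (since $4\cdot\pi/3<2\pi$), so the real work is in Step 3: leveraging the emptiness of the disks (Lemma \ref{gg:D-empty}) together with the aperture $\ge\pi/2$ to squeeze the geometry. The delicate part will be choosing the right endpoint (the one rotated toward the neighboring center) and verifying, in every subcase of the cyclic ordering of the 8 endpoint-rays around $x$, that this endpoint lies inside the neighboring disk. I expect a short case analysis — essentially governed by whether the two edges $e_i,e_j$ have interleaved or nested endpoint-rays at $x$ — and in each case a direct law-of-cosines computation yields the contradiction.
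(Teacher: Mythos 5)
Your approach is genuinely different from the paper's, but Step~3 contains a gap that I do not believe can be closed with Lemmas~\ref{gg:D-empty} and~\ref{gg:center-in-lemma} alone. The paper's proof never argues about the angular separation of the centers; it classifies the aperture angles $\alpha_i=\angle a_ixb_i$ as disjoint or nested (Observation~\ref{gg:inclusion-exclusion}), shows at least one is ``blocked'' (otherwise the four edges would close up into a cycle, contradicting the tree property via Lemma~\ref{gg:not-mst-edge}), confines $x$ to a small ``trap'' region near the short side of $a_ib_i$ with $\alpha_i\ge150^\circ$, shows traps are pairwise disjoint, and finally invokes Lemma~\ref{gg:not-mst-edge} once more to rule out the one remaining configuration. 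The crucial resource there is the minimum-spanning-tree structure of $\mathcal{T}$; Lemma~\ref{gg:D-empty} (emptiness of the disks) is nowhere near strong enough by itself.

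Here is a concrete obstruction to your Step~3. The difficulty is that the ray $xc_i$ can sit arbitrarily close to one of $xa_i,xb_i$: if the diameter $a_ib_i$ is nearly aligned with the ray $xc_i$, the aperture $\alpha_i$ approaches $\pi$ but one endpoint hugs $x$ and the other is pushed far beyond $c_i$. Concretely, take $x=(0,0)$, $c_i=(1,0)$, $r_i=1.5$, with the diameter nearly along $xc_i$, say $a_i\approx(-0.5,0.03)$ and $b_i\approx(2.5,-0.03)$. Take $c_j=2(\cos 80^\circ,\sin 80^\circ)\approx(0.35,1.97)$ and $r_j=\rho_j=2$. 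Then $\angle c_ixc_j=80^\circ\le\pi/2$, $|c_ic_j|\approx2.08>\max(r_i,r_j)$ so Lemma~\ref{gg:center-in-lemma} holds, yet $|a_ic_j|\approx2.12>r_j$ and $|b_ic_j|\approx2.93>r_j$, so neither endpoint of $e_i$ lies in $D_j$; and one can likewise place a diameter $a_jb_j$ of $D_j$ roughly perpendicular to $xc_j$ so that $|a_jc_i|,|b_jc_i|>r_i$. So the consecutive pair with $\angle c_ixc_j\le\pi/2$ produces \emph{no} violation of Lemma~\ref{gg:D-empty}, and your Step~3 calculation cannot go through as stated. The nested case (here $\alpha_j\subset\alpha_i$) is exactly where the paper has to pass from emptiness to the trap construction and Lemma~\ref{gg:not-mst-edge}; without introducing the MST property at that stage, I don't see a way to finish your argument.
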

\addtocontents{toc}{\protect\setcounter{tocdepth}{1}}
\subsection{Proof of Theorem~\ref{gg:four-circle-theorem}}
\addtocontents{toc}{\protect\setcounter{tocdepth}{2}}
\label{gg:proof-section}
The proof is by contradiction. Let $D_1$, $D_2$, $D_3$, and $D_4$ be four disks in $\mathcal{D}$. Let $\mathcal{X}=D_1\cap D_2\cap D_3\cap D_4$  and let $x$ be a point in $\mathcal{X}$. For $i=1,2,3,4$, let $c_i$ be the center of $D_i$, let $C_i$ be the boundary of $D_i$, and let $(a_i,b_i)$ be the edge in $\mathcal{T}$ which corresponds to $D_i$. Denote the angle $\angle a_ixb_i$ by $\alpha_i$, for $i=1,2,3,4$. Since $(a_i,b_i)$ is a diameter of $D_i$ and $x$ lies in $D_i$, $\alpha_i \ge\frac{\pi}{2}$. First we prove the following observation.
\begin{observation}
\label{gg:inclusion-exclusion}
 For $i,j\in\{1,2,3,4\}$, where $i\neq j$, the angles $\alpha_i$ and $\alpha_j$ are either disjoint or one is completely contained in the other.
\end{observation}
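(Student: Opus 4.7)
My plan is to argue by contradiction. Suppose $\alpha_i$ and $\alpha_j$ partially overlap, so that the rays $xa_i, xa_j, xb_i, xb_j$ appear in interleaved cyclic order around $x$; without loss of generality the counterclockwise cyclic order is $a_i, a_j, b_i, b_j$. The goal is to derive a contradiction from this assumption using $x \in D_i \cap D_j$ (which yields $\alpha_i, \alpha_j \geq \pi/2$), the planarity of $\mathcal{T}$ (Observation~\ref{gg:T-plane}), Lemma~\ref{gg:D-empty}, and Lemma~\ref{gg:center-in-lemma}.

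The first concrete step is to extract four acute-angle bounds from Lemma~\ref{gg:D-empty}. Since $a_j, b_j \notin D_i$ and $a_i, b_i \notin D_j$, and a point lies outside a disk of $\mathcal{D}$ if and only if it subtends a strictly acute angle at the endpoints of the diameter, the angles $\angle a_ia_jb_i$, $\angle a_ib_jb_i$, $\angle a_ja_ib_j$, $\angle a_jb_ib_j$ are each strictly less than $\pi/2$. These are precisely the interior angles at the vertices of the polygon with sides $a_ia_j, a_jb_i, b_ib_j, b_ja_i$---the ``quadrilateral'' read off the interleaved cyclic order at $x$---so their sum is strictly less than $2\pi$. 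Since the interior angles of any simple quadrilateral sum to exactly $2\pi$, the four points cannot be in convex position with cyclic order $a_i, a_j, b_i, b_j$ on their convex hull. However, for four points in convex position, the cyclic order around any external viewpoint coincides (up to reflection) with the convex hull order; combined with the interleaved cyclic order at $x$, this forces the four points to not be in convex position at all: one of them must lie strictly inside the triangle of the other three.

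The main obstacle is to rule out this remaining non-convex case. If, say, $a_j$ lies inside triangle $a_ib_ib_j$, then the segments $a_ib_i$ and $a_jb_j$ are both contained in this triangle and do not cross, so planarity of $\mathcal{T}$ cannot directly supply the contradiction. Instead, I plan to exploit the hypothesis $x \in D_i \cap D_j$ together with the interleaved cyclic order at $x$ to locate $x$ in a region constrained by the triangle, and then show via the size conditions $\alpha_i, \alpha_j \geq \pi/2$ that the disks $D_i, D_j$ would have to overlap so deeply that $c_j$ ends up inside $D_i$ (or, symmetrically, $c_i$ inside $D_j$), contradicting Lemma~\ref{gg:center-in-lemma}. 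The hard part will be this final geometric accounting: handling all subcases of which point lies inside the triangle, and leveraging the cone-size lower bounds together with the position of $x$ in both disks to place the disk centers in the forbidden region.
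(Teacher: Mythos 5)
Your approach is genuinely different from the paper's, but as written it has a real gap that you do not resolve. The paper's proof is a short, direct case analysis: assuming the two cones partially overlap (so, w.l.o.g., the ray $xb_i$ lies inside $\alpha_j$ and $xb_j$ lies inside $\alpha_i$), it distinguishes three cases. If $b_i\in\triangle xa_jb_j$, then $b_i$ lies in $D_j$ because $\triangle xa_jb_j\subseteq D_j$ once $x\in D_j$, contradicting Lemma~\ref{gg:D-empty}; symmetrically for $b_j\in\triangle xa_ib_i$; otherwise the segments $a_ib_i$ and $a_jb_j$ must cross, contradicting Observation~\ref{gg:T-plane}. This needs only disk emptiness and planarity of $\mathcal{T}$; it never invokes Lemma~\ref{gg:center-in-lemma} or the bounds $\alpha_i,\alpha_j\geq\pi/2$.

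Your reduction to the angle-sum inequality is fine, but the step that follows is incorrect. You assert that for four points in convex position, the cyclic order around any external viewpoint coincides (up to reflection) with the convex hull order, and use this to conclude the four points cannot be in convex position. That assertion is false: from a viewpoint outside the hull the cyclic order need not be a rotation or reflection of the hull order. For instance, for $p_1=(0,0)$, $p_2=(1,0)$, $p_3=(1,1)$, $p_4=(0,1)$ and viewpoint $(10,9)$ the counterclockwise angular order is $p_4,p_3,p_1,p_2$, which is not a cyclic rotation or reflection of $p_1,p_2,p_3,p_4$. So your argument does not eliminate the case in which $\{a_i,b_i,a_j,b_j\}$ are in convex position with $a_ib_i$ and $a_jb_j$ as opposite, non-crossing edges while $x$, sitting outside the hull, still sees the interleaved angular order; you would need to show this configuration is incompatible with $x\in D_i\cap D_j$, which you do not. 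On top of that, your step~5 for the genuinely non-convex case is left as a plan rather than a proof, and you acknowledge the geometric accounting there is the hard part. By contrast, the paper's three cases exhaust all possibilities and each is dispatched in one line; in particular, the third case already handles (and implicitly rules out) the configurations you are worried about, by showing the two segments would have to cross.
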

\begin{proof}
The proof is by contradiction. Suppose that $\alpha_i$ and $\alpha_j$ are not disjoint and none of them is completely contained in the other. Thus $\alpha_i$ and $\alpha_j$ share some part and w.l.o.g. assume that $b_i$ is in the cone which is obtained by extending the edges of $\alpha_j$, and $b_j$ is in the cone which is obtained by extending the edges of $\alpha_i$. Three cases arise:
\begin{itemize}
 \item $b_i\in\bigtriangleup xa_jb_j$. In this case $b_i$ is inside $D_j$ which contradicts Lemma~\ref{gg:D-empty}.
 \item $b_j\in\bigtriangleup xa_ib_i$. In this case $b_j$ is inside $D_i$ which contradicts Lemma~\ref{gg:D-empty}.
 \item $b_i\notin\bigtriangleup xa_jb_j$ and $b_j\notin\bigtriangleup xa_ib_i$. In this case $(a_i,b_i)$ intersects $(a_j,b_j)$ which contradicts Observation~\ref{gg:T-plane}.
\end{itemize}
\end{proof}

We call $\alpha_i$ a {\em blocked angle} if $\alpha_i$ is contained in an angle $\alpha_j$ for some $j\in\{1,2,3,4\}$, where $j\neq i$. Otherwise, we call $\alpha_i$ a {\em free angle}.

\begin{lemma}
\label{gg:not-all-free-angles}
At least one $\alpha_i$, for $i\in\{1,2,3,4\}$, is blocked.
\end{lemma}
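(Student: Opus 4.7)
The plan is to proceed by contradiction: assume every $\alpha_i$ is free, so that by Observation~\ref{gg:inclusion-exclusion} the four angles $\alpha_1,\alpha_2,\alpha_3,\alpha_4$ have pairwise disjoint interiors at $x$. Each $\alpha_i\ge\pi/2$ because $(a_i,b_i)$ is a diameter of $D_i$ and $x\in D_i$, so $\sum_{i=1}^{4}\alpha_i\ge 2\pi$. Since the four angles emanate from a single point and are pairwise interior-disjoint, $\sum_{i=1}^{4}\alpha_i\le 2\pi$. Therefore each $\alpha_i=\pi/2$ exactly, the four sectors tile the plane around $x$ into four quadrants, and consecutive sectors share a boundary ray; in particular $x$ lies on each circle $C_i$.

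Next I would exploit a shared ray. Say the ray common to the boundaries of $\alpha_1$ and $\alpha_2$ contains $b_1$ (endpoint of the diameter of $D_1$) and $a_2$ (endpoint of the diameter of $D_2$), both at positive distance from $x$. If $|xb_1|<|xa_2|$, then since $x,a_2\in C_2$ the segment $\overline{xa_2}$ is a chord of $D_2$, placing $b_1$ inside $D_2$; as $b_1\in P\setminus\{a_2,b_2\}$, this contradicts Lemma~\ref{gg:D-empty}. The symmetric inequality $|xa_2|<|xb_1|$ is excluded analogously by swapping the roles of $D_1$ and $D_2$, so $b_1=a_2$. Repeating this for each of the four shared rays pairs up the eight diameter endpoints into four distinct points $p_0,p_1,p_2,p_3$ (one per ray) such that the diameters of $D_1,D_2,D_3,D_4$ are exactly the segments $(p_0,p_1),(p_1,p_2),(p_2,p_3),(p_3,p_0)$.

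These four segments form a $4$-cycle in $\mathcal{T}$, but by Observation~\ref{gg:T-plane}, $\mathcal{T}$ is a subgraph of a minimum spanning tree of $P$ and hence a forest, which cannot contain such a cycle. This is the desired contradiction, so at least one $\alpha_i$ must be blocked. The main obstacle I anticipate is the boundary case $\alpha_i=\pi/2$ in the shared-ray step, since $x$ then lies on $C_i$ rather than strictly inside $D_i$; handling it cleanly requires noting that the chord $\overline{xa_2}$ of $C_2$ still lies in the closed disk $D_2$, so the argument still places $b_1$ in $D_2$ and triggers Lemma~\ref{gg:D-empty}.
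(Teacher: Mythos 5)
Your proof is correct and follows the same overall strategy as the paper's: assume all four angles are free, deduce $\sum_{i}\alpha_i = 2\pi$ (hence each $\alpha_i = \pi/2$), and derive a $4$-cycle among the edges of $\mathcal{T}$, contradicting Observation~\ref{gg:T-plane}. Where you go further is in actually justifying the cycle: the paper asserts without argument that the four edges ``form a cycle,'' while you supply the missing step, showing via the shared-ray-plus-Lemma~\ref{gg:D-empty} argument that the two diameter endpoints on each common boundary ray must coincide, so the eight endpoints collapse to four vertices of a genuine graph cycle. Your handling of the boundary case $\alpha_i=\pi/2$ (noting $x\in C_i$, so $\overline{xa_2}$ is a chord of the closed disk $D_2$) is also the right care to take, since Lemma~\ref{gg:D-empty} forbids boundary points as well as interior ones.
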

\begin{proof}
Suppose that all angles $\alpha_i$, where $i\in\{1,2,3,4\}$, are free. This implies that the $\alpha_i$s are pairwise disjoint and $\alpha=\sum_{i=1}^{4}{\alpha_i} \ge 2\pi$. If $\alpha > 2\pi$, we obtain a contradiction to the fact that the sum of the disjoint angles around $x$ is at most $2\pi$. If $\alpha = 2\pi$, then the four edges $(a_i,b_i)$ where $i\in\{1,2,3,4\}$, form a cycle which contradicts the fact that $\mathcal{T}$ is a subgraph of a minimum spanning tree of $P$.
\end{proof}

\begin{figure}[htb]
  \centering
\setlength{\tabcolsep}{0in}
  $\begin{tabular}{cc}
 \multicolumn{1}{m{.7\columnwidth}}{\centering\includegraphics[width=.38\columnwidth]{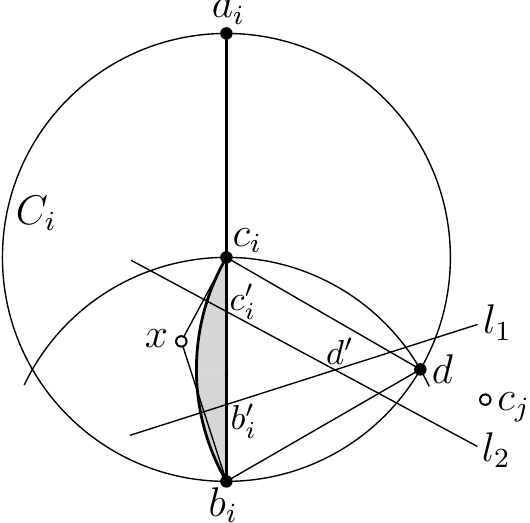}}
&\multicolumn{1}{m{.3\columnwidth}}{\centering\includegraphics[width=.056\columnwidth]{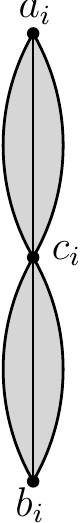}}\\
(a) & (b)
\end{tabular}$
  \caption{(a) The point $x$ should be inside the circle segment $\mathrm{sc}(c_i,b_i)$. (b) The $\mathrm{trap}(a_i,b_i)$ which consists of two lenses $\mathrm{lens}(a_i,c_i)$ and $\mathrm{lens}(b_i,c_i)$.}
\label{gg:trap-fig}
\end{figure}

By Lemma~\ref{gg:not-all-free-angles} at least one of the angles is blocked. Hereafter, assume that $\alpha_j$ is blocked by $\alpha_i$ where $1\le i,j\le 4$ and $i\neq j$. W.l.o.g. assume that $a_ib_i$ is a vertical line segment and the point $x$ (which belongs to $\mathcal{X}$) is to the left of $a_ib_i$. Thus, $a_jb_j$ and $c_j$ are to the right of $a_ib_i$. This implies that $a_ib_i\cap D_j\neq \emptyset$. See Figure~\ref{gg:trap-fig}(a). By Lemma~\ref{gg:center-in-lemma}, $c_i$ cannot be inside $D_j$, thus either $a_ic_i\cap D_j\neq \emptyset$ or $c_ib_i\cap D_j \neq\emptyset$, but not both. W.l.o.g. assume that $c_ib_i\cap D_j\neq \emptyset$. Let $C'$ be the circle with radius $|c_ib_i|$ which is centered at $b_i$. Let $d$ denote the intersection of $C'$ with $C_i$ which is to the right of $c_ib_i$. Consider the circle $C''$ with radius $|db_i|$ centered at $d$. Note that $C''$ goes through $b_i$ and $c_i$. Let $\mathrm{sc}(c_i,b_i)$ be the segment of the circle $C''$ which is between the chord $c_ib_i$ and the arc $\widehat{c_ib_i}$ as shown in Figure~\ref{gg:trap-fig}(a).

We show that $x$ cannot be outside $\mathrm{sc}(c_i,b_i)$. By contradiction suppose that $x$ is outside $\mathrm{sc}(c_i,b_i)$ (and to the left of $c_ib_i$). Let $l_1$ and $l_2$ respectively be the perpendicular bisectors of $xb_i$ and $xc_i$. Let $b'_i$ and $c'_i$ respectively be the intersection of $l_1$ and $l_2$ with $c_ib_i$ and let $d'$ be the intersection point of $l_1$ and $l_2$. Since $x$ is outside $\mathrm{sc}(c_i,b_i)$, the intersection point $d'$ is to the left of (the vertical line through) $d$ and inside triangle $\bigtriangleup b_ic_id$. If $c_j$ is below $l_1$ then $|c_jb_i|<|c_jx|$ and $D_j$ contains $b_i$ which contradicts Lemma~\ref{gg:center-in-lemma}. If $c_j$ is above $l_2$ then $|c_jb_i|<|c_jx|$ and $D_j$ contains $c_i$ which contradicts Lemma~\ref{gg:center-in-lemma}. Thus, $c_j$ is above $l_1$ and below $l_2$, and (by the initial assumption) to the right of $c_ib_i$. That is, $c_j$ is in triangle $\bigtriangleup b'_ic'_id'$. Since $\bigtriangleup b'_ic'_id'\subseteq \bigtriangleup b_ic_id\subseteq D_i$, $c_j$ lies inside $D_i$ which contradicts Lemma~\ref{gg:center-in-lemma}. Therefore, $x$ is contained in $\mathrm{sc}(c_i,b_i)$. 

By symmetry $D_j$ can intersect $a_ic_i$ and/or $c_j$ can be to the left of $a_ib_i$ as well. Therefore, if $\alpha_i$ blocks $\alpha_j$, the point $x$ can be in $\mathrm{sc}(c_i,b_i)$ or any of the symmetric segments of the circles. For an edge $a_ib_i$ we denote the union of these segments by $\mathrm{trap}(a_i,b_i)$ which is shown in Figure~\ref{gg:trap-fig}(b). For each disk $D_i$, let $\mathrm{trap}(D_i)= \mathrm{trap}(a_i,b_i)$ where $(a_i, b_i)$ is the edge in $\mathcal{T}$ corresponding to $D_i$. Therefore $x$ is contained in $\mathrm{trap}(D_i)$ which implies that $$\mathcal{X}\subseteq \mathrm{trap}(D_i).$$ Note that $\mathrm{trap}(D_i)$ consists of two symmetric lenses $\mathrm{lens}(a_i,c_i)$ and $\mathrm{lens}(b_i,c_i)$, i.e., $\mathrm{trap}(D_i)=\mathrm{lens}(a_i,c_i)\cup \mathrm{lens}(b_i,c_i)$.

\begin{lemma}
\label{gg:angle-in-trap}
 For any point $x\in \mathrm{trap}(a_i,b_i)$, $\angle a_ixb_i \ge 150^\circ$.
\end{lemma}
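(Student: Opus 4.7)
The plan is to reduce the bound to a single inscribed-angle computation on one circular segment, after which symmetry finishes everything off. Recall that $\mathrm{trap}(a_i,b_i)$ was assembled as the union of the two lenses $\mathrm{lens}(a_i,c_i)$ and $\mathrm{lens}(b_i,c_i)$, and each lens consists of two congruent circular segments of the shape $\mathrm{sc}(c_i,b_i)$ defined right above the lemma. Hence it suffices to show $\angle a_ixb_i \ge 150^\circ$ for $x \in \mathrm{sc}(c_i,b_i)$; the three symmetric copies then follow by reflecting across the line $a_ib_i$ and/or swapping the roles of $a_i$ and $b_i$.

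The first step I would carry out is to observe that the circle $C''$ bounding $\mathrm{sc}(c_i,b_i)$ has $\bigtriangleup b_ic_id$ inscribed as an equilateral triangle. Indeed, $C''$ is centered at $d$ with radius $|db_i|$, and $d$ lies on both $C'$ (the circle of radius $r:=|c_ib_i|$ around $b_i$) and on $C_i$, so $|db_i|=|dc_i|=|b_ic_i|=r$. This gives the equilateral triangle and the central angle $\angle c_idb_i = 60^\circ$. Applying the inscribed angle theorem to the chord $c_ib_i$ of $C''$, a point on the arc lying on the opposite side of the chord from $d$ -- which is precisely the boundary arc of $\mathrm{sc}(c_i,b_i)$ -- sees the chord at angle $180^\circ - 30^\circ = 150^\circ$. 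A point strictly inside the segment is even closer to the chord and so sees it at an even larger angle, yielding $\angle c_ixb_i \ge 150^\circ$ for all $x \in \mathrm{sc}(c_i,b_i)$. Finally, since $c_i$ is the midpoint of segment $a_ib_i$, the ray $xc_i$ lies in the interior of the angle $\angle a_ixb_i$, so $\angle a_ixb_i \ge \angle c_ixb_i \ge 150^\circ$.

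The step I expect to need the most care, and hence the main (minor) obstacle, is identifying which arc of $C''$ bounds $\mathrm{sc}(c_i,b_i)$: since the central angle at $d$ is $60^\circ$, one needs the \emph{minor} arc of angular measure $60^\circ$, which by the construction is the arc on the side of the chord $c_ib_i$ away from $d$. One must also justify that $xc_i$ separates $xa_i$ from $xb_i$, but this is immediate because $c_i$ lies strictly between $a_i$ and $b_i$ on a segment and $x$ does not lie on that line (otherwise $\angle a_ixb_i$ is undefined or equal to $180^\circ$, in which case the claim is trivial). Everything else is a routine inscribed-angle calculation.
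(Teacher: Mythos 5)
Your proposal is correct and follows essentially the same approach as the paper: both identify that $\bigtriangleup b_ic_id$ is equilateral so the central angle at $d$ is $60^\circ$, conclude via the inscribed-angle theorem (the paper phrases it as the angle-sum in $\bigtriangleup x'c_ib_i$) that points on the bounding arc of $\mathrm{sc}(c_i,b_i)$ see the chord $c_ib_i$ at $150^\circ$, observe that interior points see it at an even larger angle, and finish with $\angle a_ixb_i \ge \angle c_ixb_i$ since $c_i$ lies between $a_i$ and $b_i$.
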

\begin{proof}
 See Figure~\ref{gg:trap-fig}(a). The angle $\angle b_idc_i = 60^\circ$, which implies that $\widehat{c_ib_i}= 60^\circ$. Thus, for any point $x'$ on the arc $\widehat{c_ib_i}$, $\angle x'c_ib_i + \angle x'b_ic_i = 30^\circ$, and hence for any point $x$ in the segment $\mathrm{sc}(c_i,b_i)$, $\angle xc_ib_i + \angle xb_ic_i \le 30^\circ$. This implies that in $\bigtriangleup xb_ic_i$, $\angle b_ixc_i \ge 150^\circ$. On the other hand $\angle b_ixc_i \le \angle b_ixa_i$, which proves the lemma.
\end{proof}

\begin{figure}[htb]
  \centering
\setlength{\tabcolsep}{0in}
  $\begin{tabular}{ccc}
 \multicolumn{1}{m{.33\columnwidth}}{\centering\includegraphics[width=.12\columnwidth]{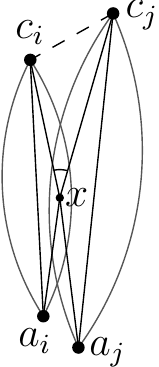}}
&\multicolumn{1}{m{.33\columnwidth}}{\centering\includegraphics[width=.12\columnwidth]{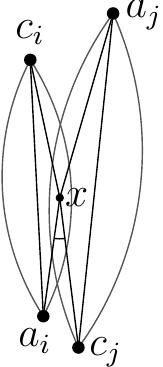}}
&\multicolumn{1}{m{.33\columnwidth}}{\centering\includegraphics[width=.12\columnwidth]{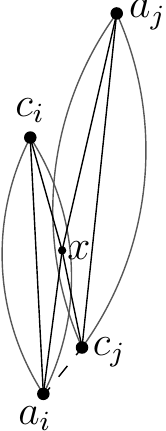}}\\
(a) & (b)& (c)
\end{tabular}$
  \caption{Illustration of Lemma~\ref{gg:intersecting-trap}.}
\label{gg:trap-intersection-fig}
\end{figure}

\begin{lemma}
\label{gg:intersecting-trap}
For any two disks $D_i$ and $D_j$ in $\mathcal{D}$, $\mathrm{trap}(D_i)\cap \mathrm{trap}(D_j)=\emptyset$.
\end{lemma}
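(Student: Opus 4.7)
The plan is to argue by contradiction. Suppose some point $x$ lies in $\mathrm{trap}(D_i) \cap \mathrm{trap}(D_j)$. Lemma~\ref{gg:angle-in-trap} then gives $\angle a_i x b_i \ge 150^\circ$ and $\angle a_j x b_j \ge 150^\circ$. The argument that was used to prove Observation~\ref{gg:inclusion-exclusion} relied only on the non-crossing of $\mathcal{T}$-edges (Observation~\ref{gg:T-plane}) and the emptiness of disks in $\mathcal{D}$ (Lemma~\ref{gg:D-empty}); the very same reasoning applied to the pair $D_i, D_j$ shows that the two angles at $x$ are either disjoint or one is nested inside the other.

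Next I would use a pigeonhole step to extract a small ``cross-angle.'' In the disjoint case, the two $150^\circ$-plus angles together occupy at least $300^\circ$ of the $360^\circ$ around $x$, so the two complementary gap angles between the pairs sum to at most $60^\circ$; hence in cyclic order there is at least one gap angle $\angle p x q \le 30^\circ$ with $p \in \{a_i, b_i\}$ and $q \in \{a_j, b_j\}$. In the nested case (say $\angle a_j x b_j$ is contained in $\angle a_i x b_i$), the two shell angles on either side of the inner wedge sum to $\angle a_i x b_i - \angle a_j x b_j \le 180^\circ - 150^\circ = 30^\circ$, so each of them is at most $30^\circ$, again producing a pair $(p,q)$ with $\angle p x q \le 30^\circ$.

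I would then invoke the MST structure. Because $\angle a_i x b_i \ge 150^\circ$, the segment $a_i b_i$ is the longest side of $\bigtriangleup a_i x b_i$, giving $|xp| \le |a_i b_i|$; similarly $|xq| \le |a_j b_j|$. The law of cosines in $\bigtriangleup p x q$ then yields
\[
|pq|^2 \le |xp|^2 + |xq|^2 - \sqrt{3}\,|xp||xq|,
\]
and a short calculation shows that $|pq| < \max\{|a_i b_i|, |a_j b_j|\}$. Now the edges $(a_i, b_i)$ and $(a_j, b_j)$ correspond to two MST edges of $G(\mathcal{P})$ of weights $|a_i b_i|$ and $|a_j b_j|$, while the two ``cross'' edges of the $4$-cycle on the corresponding partitions are non-MST and have weights at most $|pq|$ and the length of the remaining cross-segment, respectively. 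Applying Lemma~\ref{gg:not-mst-edge} to the longer of the two MST edges forces its weight to be at most the maximum of the two cross-edge weights, contradicting the strict inequality just derived.

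The main obstacle will be the sharpness of the length comparison in the third paragraph: making $|pq| < \max\{|a_i b_i|, |a_j b_j|\}$ hold uniformly requires using not only the $30^\circ$ bound on $\angle p x q$ but also the fact that $x$ cannot simultaneously be at the ``tip'' of both traps (so that $|xp|$ and $|xq|$ cannot both be close to their maximal values). A secondary technical issue is that one must also control the \emph{other} cross-segment in the chosen $4$-cycle (since Lemma~\ref{gg:not-mst-edge} compares with the \emph{maximum} non-MST edge); this is handled by picking the cycle whose two non-MST edges both involve our short pair $(p,q)$, or by running the argument on both possible $4$-cycles through the two MST edges. Finally, some bookkeeping is needed when the four endpoints $a_i, b_i, a_j, b_j$ do not lie in four distinct partitions of $\mathcal{P}$, but in those degenerate cases the cycle argument simplifies because the shared endpoint already forces a triangle in $G(\mathcal{P})$.
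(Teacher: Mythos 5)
Your plan takes a genuinely different route than the paper's proof. The paper works with the lens endpoints rather than the full diameters: taking WLOG $x \in \mathrm{lens}(a_i,c_i) \cap \mathrm{lens}(a_j,c_j)$, it uses $\angle a_i x c_i, \angle a_j x c_j > 150^\circ$ and a case split on $\angle c_i x c_j$ to place one of $a_i, a_j$ inside the other disk (contradicting Lemma~\ref{gg:D-empty}) or one of $c_i, c_j$ inside the other disk (contradicting Lemma~\ref{gg:center-in-lemma}). You instead work with the full diameter angles $\angle a_i x b_i$, $\angle a_j x b_j$ and aim a $4$-cycle against Lemma~\ref{gg:not-mst-edge}, which is closer in spirit to how the paper proves Lemma~\ref{gg:center-in-lemma} itself than to how it proves the present lemma. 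The paper's route is somewhat lighter, since a single small angle immediately lands a point inside a disk, with no need to control two cross-segments at once.

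The worries you flag at the end are all non-issues, and the argument does close. You do not need the pigeonhole step down to $30^\circ$, nor any claim that $x$ cannot be near the tips of both traps. In the disjoint case the two gap angles sum to at most $60^\circ$, so each is at most $60^\circ$; in the nested case the two shell angles sum to at most $30^\circ$. A $60^\circ$ bound already suffices: with $s = |xp| \le t = |xq|$ and $\angle p x q \le 60^\circ$, the law of cosines gives $|pq|^2 \le s^2 + t^2 - st = t^2 - s(t-s) \le t^2$, so $|pq| \le t$; and since $\angle a_i x b_i \ge 150^\circ > 90^\circ$ forces $x$ strictly interior to $D_i$ (and similarly for $D_j$), we have $t < \max\{|a_i b_i|, |a_j b_j|\}$, hence a strict inequality. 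Running this for both cross-edges of the $4$-cycle with tree edges $(a_i,b_i)$ and $(a_j,b_j)$ shows the longest tree edge in the cycle is strictly longer than every non-tree edge of the cycle, contradicting Lemma~\ref{gg:not-mst-edge} regardless of which subset of cross-edges happens also to lie in the MST. When $a_i = a_j$ the cycle collapses to a triangle with a single cross-edge $(b_i,b_j)$ subtending an angle $\le 60^\circ$ at $x$ by the same count, and the argument only simplifies.
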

\begin{proof}
 We prove this lemma by contradiction. Suppose $x\in \mathrm{trap}(D_i)\cap \mathrm{trap}(D_j)$ and w.l.o.g. assume that $x\in \mathrm{lens}(a_i,c_i)\cap \mathrm{lens}(a_j,c_j)$ as shown in Figure~\ref{gg:trap-intersection-fig}. Connect $x$ to $a_i$, $c_i$, $a_j$, and $c_j$ ($a_i$ may be identified with $a_j$). As shown in the proof of Lemma~\ref{gg:angle-in-trap}, $\min\{\angle a_ixc_i,\allowbreak \angle a_jxc_j\}\allowbreak >\allowbreak 150^\circ$. Two configurations may arise: 
\begin{itemize}
 \item $\angle c_ixc_j \le 60^{\circ}$. In this case $|c_ic_j|\le \max\{|xc_i|,|xc_j|\}$. W.l.o.g. assume that $|xc_i|\le|xc_j|$ which implies that $|c_ic_j|\le |xc_j|$; see Figure ~\ref{gg:trap-intersection-fig}(a). Clearly $|xc_j|<|c_ja_j|$, and hence $|c_ic_j|<|c_ja_j|$. Thus, $D_j$ contains $c_i$ which contradicts Lemma~\ref{gg:center-in-lemma}.
  \item $\angle c_ixc_j > 60^{\circ}$. In this case $\angle a_ixc_j \le 60^\circ$ and $\angle a_jxc_i \le 60^\circ$, hence $|a_ic_j|\le \max\{|a_ix|,|c_jx|\}$ and $|a_jc_i|\le \max\{|a_jx|,|c_ix|\}$. Three configurations arise:

\begin{itemize}
 \item $|a_ix|<|c_jx|$, in this case $|a_ic_j|< |c_jx|<|c_ja_j|$ and hence $D_j$ contains $a_i$. See Figure~\ref{gg:trap-intersection-fig}(b).
  \item $|a_jx|<|c_ix|$, in this case $|a_jc_i|< |c_ix|< |c_ia_i|$ and hence $D_i$ contains $a_j$. 
  \item $|a_ix|\ge|c_jx|$ and $|a_jx|\ge|c_ix|$, in this case w.l.o.g. assume that $|a_ix|\le|a_jx|$. Thus $|a_ic_j|\le |a_ix|\le |a_jx|<|a_jc_j|$ which implies that $D_j$ contains $a_i$. See Figure~\ref{gg:trap-intersection-fig}(b).
\end{itemize}
All cases contradict Lemma~\ref{gg:D-empty}. 
\end{itemize}
\end{proof}

Recall that each blocking angle represents a trap. Thus, by Lemma \ref{gg:not-all-free-angles} and Lemma~\ref{gg:intersecting-trap}, we have the following corollary:

\begin{corollary}
\label{gg:one-blocked-angle}
Exactly one $\alpha_i$, where $1\le i\le 4$, is blocked.
\end{corollary}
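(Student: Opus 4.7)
Lemma~\ref{gg:not-all-free-angles} already supplies the lower bound "at least one $\alpha_i$ is blocked", so the remaining task is to show that at most one is blocked. The core idea of my plan is to exploit two facts that are already in hand: (a)~the derivation immediately preceding Lemma~\ref{gg:intersecting-trap} shows that whenever an angle $\alpha_i$ blocks some other angle, the common intersection point $x$ must lie in $\mathrm{trap}(D_i)$; and (b)~Lemma~\ref{gg:intersecting-trap} asserts that the traps of distinct disks in $\mathcal{D}$ are pairwise disjoint. Combining these will let me propagate each additional blocked angle into a trap incidence that violates (b).

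First I would assume for contradiction that two distinct angles $\alpha_j$ and $\alpha_l$ are both blocked, and pick blockers $\alpha_p \supsetneq \alpha_j$ and $\alpha_q \supsetneq \alpha_l$ (these exist by the definition of "blocked"). Fact~(a) then forces $x \in \mathrm{trap}(D_p)$ and $x \in \mathrm{trap}(D_q)$, and if $p \neq q$ fact~(b) immediately yields the desired contradiction. The only remaining possibility is $p = q =: i$, so that $\alpha_j, \alpha_l \subsetneq \alpha_i$. I would split this on Observation~\ref{gg:inclusion-exclusion}: if $\alpha_j$ and $\alpha_l$ are nested, say $\alpha_l \subsetneq \alpha_j$, then $\alpha_j$ is itself a blocker (of $\alpha_l$) distinct from $\alpha_i$, and the $p \neq q$ case above applies.

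The last, and hardest, subcase is that $\alpha_j$ and $\alpha_l$ are disjoint sub-angles of $\alpha_i$. Since each of $\alpha_j,\alpha_l$ has measure at least $\pi/2$, additivity gives $\alpha_i \geq \alpha_j + \alpha_l \geq \pi$. On the other hand $\alpha_i = \angle a_ixb_i$ is subtended at the interior point $x \in D_i$ by the diameter $a_ib_i$, so $\alpha_i \leq \pi$, with equality only in the degenerate situation that $x$ lies on the open segment $a_ib_i$. This is the main obstacle: I would need to rule out the degenerate equality cleanly. My approach would be to observe that $\alpha_i = \pi$ forces $\alpha_j = \alpha_l = \pi/2$ exactly, placing $x$ simultaneously on segment $a_ib_i$ and on both boundary circles $C_j$ and $C_l$; together with the fact (from the trap derivation) that $D_j$ and $D_l$ lie on one side of the line through $a_ib_i$, and with Lemma~\ref{gg:D-empty} and Lemma~\ref{gg:center-in-lemma}, this coincidence is a non-generic configuration that is excluded by a standard general-position assumption on $P$. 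Once this final subcase is dismissed we conclude that at most one $\alpha_i$ can be blocked, which together with Lemma~\ref{gg:not-all-free-angles} gives exactly one.
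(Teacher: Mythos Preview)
Your plan matches the paper's argument, which is literally the one sentence ``Recall that each blocking angle represents a trap. Thus, by Lemma~\ref{gg:not-all-free-angles} and Lemma~\ref{gg:intersecting-trap}, we have the following corollary.'' Your case analysis (distinct blockers; common blocker with nesting; common blocker with disjoint sub-angles) is the natural way to unpack that sentence, and the first two cases are dispatched correctly by producing two distinct traps that both contain $x$.

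The only soft spot is the last subcase. You arrive at $\alpha_i=\pi$ and $\alpha_j=\alpha_l=\pi/2$ and then dismiss it by invoking ``a standard general-position assumption on $P$''. This chapter does not state such an assumption, and the degeneracy you have derived is a condition on the auxiliary point $x$ (it must lie on the open segment $a_ib_i$ and on the boundary circles $C_j,C_l$), not directly a three-points-collinear statement about $P$. So the appeal to general position is not justified as written. To be fair, the paper's one-liner does not address this boundary case either; but if you want the argument to be self-contained you should replace that appeal with something concrete. Two options: (i) from $\alpha_j\cup\alpha_l=\alpha_i$ one boundary ray of $\alpha_j$ must lie along the line through $a_ib_i$, so some endpoint of $(a_j,b_j)$ lies on that line, and then the chord from $x$ to that endpoint in $D_j$ together with Lemma~\ref{gg:D-empty} forces a shared vertex with $(a_i,b_i)$, which can be pushed to a contradiction; or (ii) note that the surrounding proof only needs $\mathcal X=\bigcap_m D_m$ to be empty, so when $\mathcal X$ has nonempty interior you may pick $x$ there, making every $\alpha_m>\pi/2$ strictly and killing the disjoint subcase outright, with the empty-interior case handled separately.
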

Recall that $\alpha_j$ is blocked by $\alpha_i$, $a_ib_i$ is vertical line segment, $c_j$ is to the right of $a_ib_i$, and $x\in\mathrm{sc}(c_i,b_i)$. As a direct consequence of Corollary~\ref{gg:one-blocked-angle}, $\alpha_i$, $\alpha_k$, and $\alpha_l$ are free angles, where $1\le\allowbreak i,j,\allowbreak k,\allowbreak l\le\allowbreak 4$ and $i\neq j\neq k\neq l$. In addition, $c_k$ and $c_l$ are to the left of $a_ib_i$. It is obvious that $$\mathcal{X}\subseteq \mathrm{trap}(D_i)\cap D_k \cap D_l.$$

\begin{figure}[htb]
  \centering
  \includegraphics[width=.42\columnwidth]{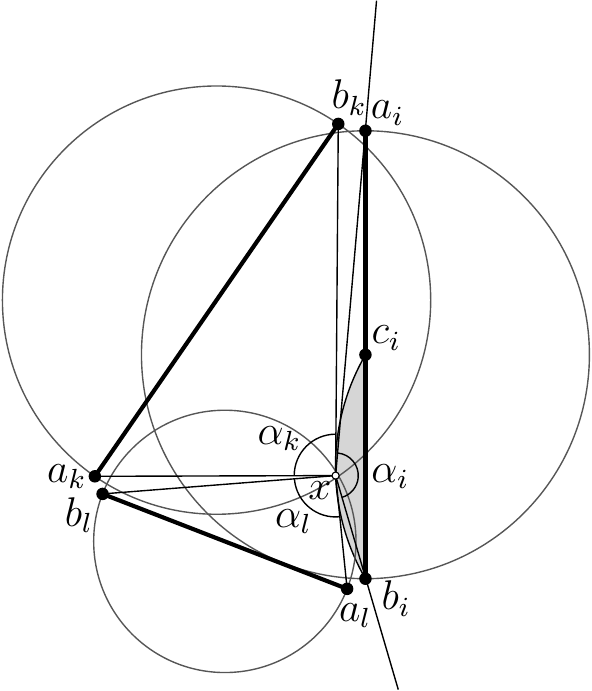}
 \caption{Illustration of Lemma~\ref{gg:intersecting-trap2}.}
  \label{gg:trap2-fig}
\end{figure}

\begin{lemma}
\label{gg:intersecting-trap2}
For a blocking angle $\alpha_i$ and free angles $\alpha_k$ and $\alpha_l$, $\mathrm{trap}(D_i)\allowbreak\cap D_k \allowbreak\cap D_l=\emptyset$. 
\end{lemma}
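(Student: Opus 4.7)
The plan is a proof by contradiction. Suppose, for the sake of contradiction, that a point $x$ lies in $\mathrm{trap}(D_i) \cap D_k \cap D_l$. I would first establish the angular constraints at $x$: since $x \in \mathrm{trap}(D_i)$, Lemma~\ref{gg:angle-in-trap} gives $\alpha_i = \angle a_ixb_i \geq 150^\circ$; since $(a_k,b_k)$ and $(a_l,b_l)$ are diameters of $D_k$ and $D_l$ and $x$ lies in both disks, $\alpha_k, \alpha_l \geq 90^\circ$. By Corollary~\ref{gg:one-blocked-angle}, each of $\alpha_i, \alpha_k, \alpha_l$ is free, so by Observation~\ref{gg:inclusion-exclusion} they are pairwise disjoint around $x$, whence $\alpha_i + \alpha_k + \alpha_l \leq 360^\circ$. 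Combined with the lower bounds, this squeezes $\alpha_i \leq 180^\circ$ and leaves at most $30^\circ$ of total angular gap between the three cones.

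The next step would be to exploit the geometric position of $x$. By the four-fold symmetry of the trap, I may assume $x \in \mathrm{sc}(c_i, b_i)$, i.e., $x$ lies in the narrow circular segment to the left of the chord $c_ib_i$ and therefore very close to the segment $a_ib_i$, in particular close to $c_i$. Because the cone $\alpha_i$ opens toward the right (containing $a_i$ and $b_i$), the cones $\alpha_k$ and $\alpha_l$ are both squeezed into the remaining $\leq 210^\circ$ angular window on the left, consistent with the hypothesis that $c_k$ and $c_l$ themselves lie to the left of the line through $a_ib_i$. Applying Lemma~\ref{gg:center-in-lemma} to each pair among $D_i, D_k, D_l$ yields $|c_ic_k|, |c_ic_l| \geq r_i$ and $|c_kc_l| \geq \max\{r_k, r_l\}$ (where $r_i, r_k, r_l$ are the radii of the three disks), while $x \in D_k \cap D_l$ gives the upper bounds $|xc_k| \leq r_k$ and $|xc_l| \leq r_l$.

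The final step is the geometric contradiction, via a case split on the angle $\angle c_kxc_l$, mirroring the strategy used in Lemma~\ref{gg:intersecting-trap}. If $\angle c_kxc_l$ is small, the triangle inequality combined with the bounds $|xc_k| \leq r_k$, $|xc_l| \leq r_l$ and $|c_kc_l| \geq \max\{r_k, r_l\}$ forces one of $c_k, c_l$ into the other disk, contradicting Lemma~\ref{gg:center-in-lemma}. If $\angle c_kxc_l$ is large, then because $x$ is close to $c_i$, at least one of $c_k, c_l$ is pushed within distance less than $r_i$ of $c_i$, contradicting $|c_ic_k|, |c_ic_l| \geq r_i$; alternatively, one of the endpoints $a_k, b_k, a_l, b_l$ is forced inside $D_i$, contradicting Lemma~\ref{gg:D-empty}. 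The principal obstacle is executing the trigonometric computations across all sub-cases and the symmetric components of $\mathrm{trap}(D_i)$, but the geometry is tightly constrained by the $\leq 30^\circ$ angular-gap budget established in the first step, so the estimates should closely parallel those already used in the proof of Lemma~\ref{gg:intersecting-trap}.
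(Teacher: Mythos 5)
Your opening moves are correct and match the paper's: use Lemma~\ref{gg:angle-in-trap} to get $\alpha_i\ge150^\circ$, use the diameter property to get $\alpha_k,\alpha_l\ge90^\circ$, and observe via Corollary~\ref{gg:one-blocked-angle} and Observation~\ref{gg:inclusion-exclusion} that the three cones are pairwise disjoint, leaving a total angular gap of at most $30^\circ$ around $x$. From there, however, the paper goes in a different direction than you do, and yours has a genuine gap.

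The paper works with the six \emph{endpoints} $a_i,b_i,a_k,b_k,a_l,b_l$, not the centers. Since each of $\angle a_ixb_k$, $\angle a_kxb_l$, $\angle a_lxb_i$ is $\le30^\circ$, and since $\max\{|xa_i|,|xb_i|\}<|a_ib_i|$ (etc.) because each edge is a diameter of a disk containing $x$, each ``gap'' chord $|a_ib_k|,|a_kb_l|,|a_lb_i|$ is strictly shorter than $\max\{|a_ib_i|,|a_kb_k|,|a_lb_l|\}$. The six points then form a $6$-cycle $\delta=(a_i,b_i,a_l,b_l,a_k,b_k,a_i)$ that alternates between $\mathcal{T}$-edges and these strictly shorter gap chords, and at least one gap chord is not in $\mathcal{T}$. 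This contradicts Lemma~\ref{gg:not-mst-edge}, which says that the largest non-tree edge of a cycle through a tree edge must be at least as heavy as that tree edge. No case split is needed.

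Your proposal instead works with the centers $c_i,c_k,c_l$ and splits on $\angle c_kxc_l$. The small-angle branch is fine: $\angle c_kxc_l\le60^\circ$ with $|xc_k|\le r_k$, $|xc_l|\le r_l$ forces one center inside the other disk, contradicting Lemma~\ref{gg:center-in-lemma}. But the large-angle branch does not work as stated. You argue that ``because $x$ is close to $c_i$, at least one of $c_k,c_l$ is pushed within distance less than $r_i$ of $c_i$.'' This premise is false: $x\in\mathrm{sc}(c_i,b_i)$ only places $x$ in a narrow circular segment between $c_i$ and $b_i$, and $x$ may lie arbitrarily close to $b_i$, hence at distance close to $r_i$ from $c_i$. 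Moreover, nothing in the angular budget pins down where $c_i$ sits relative to the cone boundaries $xa_i$ and $xb_i$, so you cannot conclude that $\angle c_ixc_k$ or $\angle c_ixc_l$ is small; the ``alternatively, force an endpoint into $D_i$'' fallback is equally unestablished. Indeed, a generic configuration with two $90^\circ$ cones adjacent to the $\ge150^\circ$ cone puts $\angle c_kxc_l$ near $90^\circ$, which is exactly the branch you have not closed. The missing idea is to abandon the centers entirely, pass to the six endpoints, and convert the three small gap angles into an MST-cycle violation via Lemma~\ref{gg:not-mst-edge}.
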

\begin{proof}
Since $\alpha_i$ is a blocking angle and $\alpha_k$, $\alpha_l$ are free angles, $c_k$ and $c_l$ are on the same side of $a_ib_i$.
 By contradiction, suppose that $x\in \mathrm{trap}(D_i)\cap D_j \cap D_k$. See Figure~\ref{gg:trap2-fig}. It is obvious that $\max\{|xa_i|, |xb_i|\}< |a_ib_i|$, $\max\{|xa_k|, |xb_k|\}< |a_kb_k|$, and $\max\{|xa_l|, |xb_l|\}<|a_lb_l|$. By Lemma~\ref{gg:angle-in-trap}, $\alpha_i \ge 150^\circ$. In addition $\alpha_k, \alpha_l \allowbreak\ge\allowbreak 90^\circ$. Thus, $\max\{\angle a_ixb_k,\allowbreak \angle a_kxb_l,\allowbreak \angle a_lxb_i\}\allowbreak\le\allowbreak 30^\circ$. Hence, $|a_ib_k|\allowbreak<\allowbreak\max\{|xa_i|,|xb_k|\}$, $|a_kb_l|<\max\{|xa_k|,|xb_l|\}$, and $|a_lb_i|<\max\{|xa_l|,\allowbreak|xb_i|\}$. Therefore, $\max\{|a_ib_k|,\allowbreak|a_kb_l|,\allowbreak |a_lb_i|\}\allowbreak<\max\allowbreak\{|a_ib_i|,\allowbreak|a_kb_k|,|a_lb_l|\}$. In addition $\delta=(a_i,\allowbreak b_i,a_l,\allowbreak b_l,a_k,\allowbreak b_k,a_i)$ is a cycle and at least one of $(a_i,b_k)$, $(a_k,b_l)$ and $(a_l,b_i)$ does not belong to $\mathcal{T}$. This contradicts Lemma~\ref{gg:not-mst-edge}. 
\end{proof}
Thus, $\mathcal{X}=\emptyset$; which contradicts the fact that $x\in \mathcal{X}$. This completes the proof of Theorem~\ref{gg:four-circle-theorem}.
\addtocontents{toc}{\protect\setcounter{tocdepth}{1}}
\subsection{Lower Bounds}
\addtocontents{toc}{\protect\setcounter{tocdepth}{2}}
\label{gg:lower-bounds-section}

In this section we present some lower bounds on the size of a maximum matching in \kGG{2}{}, \kGG{1}{}, and \kGG{0}{}.

\begin{theorem}
 \label{gg:matching-2GG}
For a set $P$ of an even number of points, \kGG{2}{} has a perfect matching.
\end{theorem}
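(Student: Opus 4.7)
The plan is to invoke Tutte's theorem (Theorem~\ref{gg:Tutte}): letting $G$ denote the graph \kGG{2}{}, it suffices to prove that $o(G - S) \le |S|$ for every $S \subseteq P$. Given such an $S$, I would set $P' = P \setminus S$ and let $\{P_1,\ldots,P_c\}$ be the partition of $P'$ given by the connected components of $G - S$. I would then apply the construction developed just before this theorem to the point set $P'$ under this partition, producing a set $\mathcal{T}$ of $c - 1$ straight-line edges between points of $P'$ together with the family $\mathcal{D}$ of $c - 1$ diametral disks.

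Next I would argue that each disk in $\mathcal{D}$ must contain at least three points of $S$. Fix an edge $(a,b) \in \mathcal{T}$: its endpoints lie in distinct components of $G - S$, so $(a,b)$ is not an edge of $G$, which by the definition of \kGG{2}{} means that $D[a,b]$ contains at least three points of $P \setminus \{a,b\}$. On the other hand, Lemma~\ref{gg:D-empty}, applied to the partition $\{P_1,\ldots,P_c\}$ of the point set $P'$, states that $D[a,b]$ contains no point of $P' \setminus \{a,b\}$. Hence all the points of $D[a,b] \cap (P \setminus \{a,b\})$ must in fact lie in $S$.

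To finish, I would invoke Theorem~\ref{gg:four-circle-theorem}, which says that no point of the plane belongs to four disks of $\mathcal{D}$; in particular every point of $S$ belongs to at most three of them. Double counting incidences between $S$ and $\mathcal{D}$ therefore yields $3(c - 1) \le 3|S|$, i.e., $c \le |S| + 1$. A parity observation then closes the remaining gap: since $|P|$ is even, $\sum_{i} |P_i| = |P| - |S|$ has the same parity as $|S|$, so $o(G - S) \equiv |S| \pmod 2$. Combined with $o(G - S) \le c \le |S| + 1$, this rules out the possibility $o(G - S) = |S| + 1$ and forces $o(G - S) \le |S|$, which is exactly what Tutte's theorem requires.

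The step I expect to require the most care is applying Lemma~\ref{gg:D-empty} and Theorem~\ref{gg:four-circle-theorem} to $P'$ in place of $P$. Both statements are phrased for ``the'' underlying point set, and I would want to verify that every ingredient of their proofs (notably the fact that $\mathcal{T}$ is a subgraph of a minimum spanning tree of the underlying set, together with auxiliary results such as Observation~\ref{gg:T-plane} and Lemma~\ref{gg:center-in-lemma}) truly remains valid after this substitution. Once that is confirmed, the counting plus parity argument is a short closing move.
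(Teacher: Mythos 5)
Your proof is correct and follows essentially the same route as the paper's: build $\mathcal{T}$ and $\mathcal{D}$ from the component partition of $P\setminus S$, use Lemma~\ref{gg:D-empty} together with the definition of \kGG{2}{} to place at least three points of $S$ in each disk, double count against Theorem~\ref{gg:four-circle-theorem} to get $c\le |S|+1$, then close with the parity observation. The subtlety you flag about applying Lemma~\ref{gg:D-empty} and Theorem~\ref{gg:four-circle-theorem} with $P'=P\setminus S$ as the ambient point set is real but benign: those results (and Observation~\ref{gg:T-plane} and Lemma~\ref{gg:center-in-lemma} behind them) are proved for an arbitrary finite point set with an arbitrary partition, and the paper makes exactly the same implicit substitution in its own proof.
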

\begin{proof}
First we show that by removing a set $S$ of points from \kGG{2}{}, at most $|S|+1$ components are generated. Then we show that at least one of these components must be even. Using Theorem~\ref{gg:Tutte}, we conclude that \kGG{2}{} has a perfect matching.

Let $S$ be a set vertices removed from \kGG{2}{}, and let $\mathcal{C}=\{C_1, \dots,\allowbreak C_{m(S)}\}$ be the resulting $m(S)$ components. Then, $\mathcal{P}=\{V(C_1),\dots, V(C_{m(S)})\}$ is a partition of the vertices in $P\setminus S$. 

\begin{paragraph}{Claim 1.}$m(S)\le |S|+1$.\end{paragraph} 

Let $G(\mathcal{P})$ be the complete graph with vertex set $\mathcal{P}$ which is constructed as described above. Let $\mathcal{T}$ be the set of all edges between points in $P$ corresponding to the edges of $MST(G(\mathcal{P}))$ and let $\mathcal{D}$ be the set of disks corresponding to the edges of $\mathcal{T}$. It is obvious that $\mathcal{T}$ contains $m(S)-1$ edges and hence $|\mathcal{D}|=m(S)-1$. Let $F=\{(p,D):p\in S, D\in \mathcal{D}, p\in D\}$ be the set of all (point, disk) pairs where $p\in S$, $D\in \mathcal{D}$, and $p$ is inside $D$. By Theorem~\ref{gg:four-circle-theorem} each point in $S$ can be inside at most three disks in $\mathcal{D}$. Thus, $|F|\le 3\cdot|S|$.
Now we show that each disk in $\mathcal{D}$ contains at least three points of $S$ in its interior.  
Consider any disk $D\in \mathcal{D}$ and let $e=(a,b)$ be the edge of $\mathcal{T}$ corresponding to $D$. By Lemma~\ref{gg:D-empty}, $D$ does not contain any point of $P\setminus S$. Therefore, $D$ contains at least three points of $S$, because otherwise $(a,b)$ is an edge in \kGG{2}{} which contradicts the fact that $a$ and $b$ belong to different components in $\mathcal{C}$. Thus, each disk in $\mathcal{D}$ has at least three points of $S$. That is, $3\cdot|\mathcal{D}|\le|F|$. Therefore, $3(m(S)-1)\le |F|\le 3|S|$, and hence $m(S)\le |S|+1$.

\begin{paragraph}{Claim 2.}$o(\mathcal{C})\le |S|$.\end{paragraph}

By Claim 1, $|\mathcal{C}|=m(S)\le |S|+1$. If $|\mathcal{C}|\le |S|$, then $o(\mathcal{C})\le |S|$. Assume that $|\mathcal{C}|=|S|+1$. Since $P=S\cup \{\bigcup^{|S|+1}_{i=1}{V(C_i)}\}$, the total number of vertices of $P$ is equal to $n=|S|+\sum_{i=1}^{|S|+1}{|V(C_i)|}$. Consider two cases where (i) $|S|$ is odd, (ii) $|S|$ is even. In both cases if all the components in $\mathcal{C}$ are odd, then $n$ is odd; contradicting our assumption that $P$ has an even number of vertices. Thus, $\mathcal{C}$ contains at least one even component, which implies that $o(\mathcal{C})\le |S|$.

Finally, by Claim 2 and Theorem~\ref{gg:Tutte}, we conclude that \kGG{2}{} has a perfect matching.
\end{proof}

\begin{theorem}
\label{gg:matching-1GG}
For every set $P$ of $n$ points, \kGG{1}{} has a matching of size at least $\frac{2(n-1)}{5}$.
\end{theorem}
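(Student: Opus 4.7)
The plan is to mimic the strategy used for \kGG{2}{} in Theorem~\ref{gg:matching-2GG}, but now apply the Tutte--Berge formula (Theorem~\ref{gg:Berge}) instead of Tutte's theorem, since we want a lower bound on the maximum matching rather than a perfect matching. So I would fix an arbitrary $S\subseteq V(\text{\kGG{1}{}})$, let $\mathcal{C}=\{C_1,\dots,C_{m(S)}\}$ be the components of $\text{\kGG{1}{}}-S$, and aim to upper bound the deficiency $\mathrm{def}_G(S)=o(G-S)-|S|$ by $\frac{n+4}{5}$. This would yield a maximum matching of size at least $\frac{1}{2}(n-\frac{n+4}{5})=\frac{2(n-1)}{5}$.

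As in the proof of Theorem~\ref{gg:matching-2GG}, I would form the partition $\mathcal{P}=\{V(C_1),\dots,V(C_{m(S)})\}$, build the graph $G(\mathcal{P})$, extract the edge set $\mathcal{T}$ of the minimum spanning tree $MST(G(\mathcal{P}))$, and let $\mathcal{D}$ be the $m(S)-1$ corresponding diametral disks. The crucial change from the \kGG{2}{} argument is the per-disk lower bound on $|S|$-points: since the two endpoints of any edge $(a,b)\in\mathcal{T}$ belong to distinct components of $\text{\kGG{1}{}}-S$, the edge $(a,b)$ cannot be in \kGG{1}{}, and by Lemma~\ref{gg:D-empty} the disk $D[a,b]$ is free of $P\setminus S$; hence $D[a,b]$ must contain at least $2$ points of $S$. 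Combined with Theorem~\ref{gg:four-circle-theorem}, which guarantees that each point of $S$ lies in at most three disks of $\mathcal{D}$, double-counting the incidences gives
\[
2(m(S)-1)\;\le\;3|S|,\qquad\text{i.e.}\qquad m(S)\;\le\;\frac{3|S|+2}{2}.
\]

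To finish, I would combine this with the trivial bound $o(G-S)\le n-|S|$ (since the odd components contain disjoint vertices of $P\setminus S$). This gives two upper bounds on $o(G-S)$, namely $\frac{3|S|+2}{2}$ and $n-|S|$, and hence two upper bounds on the deficiency, namely $\frac{|S|+2}{2}$ and $n-2|S|$. The first is increasing in $|S|$ and the second is decreasing; optimizing by taking the minimum of the two and evaluating at the crossover $|S|=\frac{2(n-1)}{5}$ yields $\mathrm{def}_G(S)\le\frac{n+4}{5}$ for every $S$, which is exactly what is needed.

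I expect no serious obstacle: Theorem~\ref{gg:four-circle-theorem} already does the heavy geometric lifting, and the rest is a routine counting/optimization argument driven by the observation that the edge $(a,b)\in\mathcal{T}$ fails to be in \kGG{1}{} forces at least $2$ points of $S$ into $D[a,b]$. The only place to be slightly careful is the two-regime analysis of $\mathrm{def}_G(S)$, where one must verify that both branches meet the bound $\frac{n+4}{5}$ and handle the integrality/parity of $|S|$ cleanly.
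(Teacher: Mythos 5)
Your proposal is correct and follows essentially the same route as the paper's own proof: both apply the Tutte--Berge formula, build the MST-based disk set $\mathcal{D}$, lower-bound the number of $S$-points per disk by $2$ (because the corresponding edge fails to be in $\text{\kGG{1}{}}$), invoke Theorem~\ref{gg:four-circle-theorem} for the upper bound of $3$ disks per $S$-point, and then combine $m(S)\le\tfrac{3|S|}{2}+1$ with $m(S)\le n-|S|$ (equivalently $o(G-S)\le n-|S|$) and optimize at $|S|=\tfrac{2(n-1)}{5}$. There is no meaningful divergence from the paper's argument.
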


\begin{proof}
Let $S$ be a set of vertices removed from \kGG{1}{}, and let $\mathcal{C}=\{C_1, \dots, C_{m(S)}\}$ be the resulting $m(S)$ components. Then, $\mathcal{P}=\{V(C_1),\allowbreak \dots, V(C_{m(S)})\}$ is a partition of the vertices in $P\setminus S$. Note that $o(\mathcal{C})\le m(S)$.
Let $M^*$ be a maximum matching in \kGG{1}{}. By Theorem~\ref{gg:Berge}, 

\begin{align}
\label{gg:align0}
|M^*|&= \frac{1}{2}(n-\text{def}(\text{\kGG{1}{}})),
\end{align}
where
\begin{align}
\label{gg:align1}
\text{def}(\text{\kGG{1}{}})&= \max\limits_{S\subseteq P}(o(\mathcal{C})-|S|)\le \max\limits_{S\subseteq P}(|\mathcal{C}|-|S|)= \max\limits_{0\le |S|\le n}(m(S)-|S|).
\end{align}
Define $G(\mathcal{P})$, $\mathcal{T}$, $\mathcal{D}$, and $F$ as in the proof of Theorem~\ref{gg:matching-2GG}. By Theorem~\ref{gg:four-circle-theorem}, $|F|\le 3\cdot|S|$.
By the same reasoning as in the proof of Theorem~\ref{gg:matching-2GG}, each disk in $\mathcal{D}$ has at least two points of $S$ in its interior. Thus, $2|\mathcal{D}|\le|F|$. Therefore, $2(m(S)-1)\le |F| \le 3|S|$, and hence
\begin{equation}
\label{gg:ineq1}
 m(S)\le\frac{3|S|}{2}+1.
\end{equation} 
In addition, $|S|+m(S)=|S|+|\mathcal{C}|\le |P|=n$, and hence
\begin{equation}
\label{gg:ineq2}               
m(S)\le n-|S|.
\end{equation}
By Inequalities~(\ref{gg:ineq1}) and ~(\ref{gg:ineq2}), 
\begin{equation}
\label{gg:ineq3}               
m(S)\le \min\left\{\frac{3|S|}{2}+1, n-|S|\right\}.
\end{equation}
Thus, by (\ref{gg:align1}) and (\ref{gg:ineq3})
\begin{align}
\label{gg:align2}
\text{def}(\text{\kGG{1}{}})&\le \max\limits_{0\le |S|\le n}(m(S)-|S|)\le \max\limits_{0\le |S|\le n}\left\{\min\left\{\frac{3|S|}{2}+1, n-|S|\right\}-|S|\right\}\nonumber\\
&= \max\limits_{0\le |S|\le n}\left\{\min\left\{\frac{|S|}{2}+1, n-2|S|\right\}\right\}= \frac{n+4}{5},
\end{align}

where the last equation is achieved by setting $\frac{|S|}{2}+1$ equal to $n-2|S|$, which implies $|S|=\frac{2(n-1)}{5}$. Finally by substituting (\ref{gg:align2}) in Equation (\ref{gg:align0}) we have
$$
|M^*|\ge \frac{2(n-1)}{5}.
$$
\end{proof}

By similar reasoning as in the proof of Theorem~\ref{gg:matching-1GG} we have the following Theorem.

\begin{figure}[htb]
  \centering
  \includegraphics[width=.56\columnwidth]{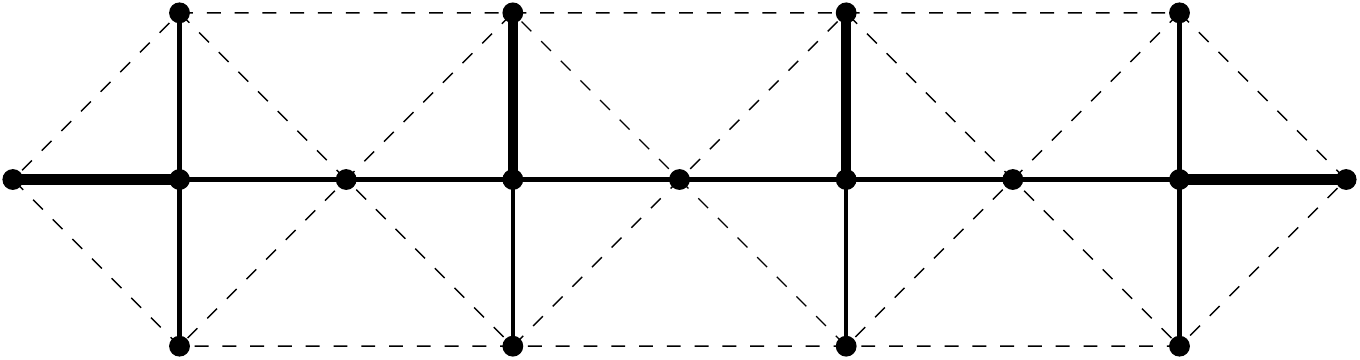}
 \caption{A \kGG{0}{} of  $n = 17$ points with a maximum matching of size $\frac{n-1}{4}=4$ (bold edges). The dashed edges do not belong to the graph because any of their corresponding closed disks has a point on its boundary.}
  \label{gg:tight-0GG}
\end{figure}

\begin{theorem}
\label{gg:matching-0GG}
For every set $P$ of $n$ points, \kGG{0}{} has a matching of size at least $\frac{n-1}{4}$.
\end{theorem}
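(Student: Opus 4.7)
The plan is to mirror the argument used to prove Theorem~\ref{gg:matching-1GG}, changing only the bound on how many points of $S$ each disk in $\mathcal{D}$ must contain. Fix an arbitrary $S \subseteq P$ and let $\mathcal{C}=\{C_1,\dots,C_{m(S)}\}$ be the components of $\text{\kGG{0}{}}-S$. Build the partition $\mathcal{P}=\{V(C_1),\dots,V(C_{m(S)})\}$, the complete edge-weighted graph $G(\mathcal{P})$, the straight-line edge set $\mathcal{T}$ corresponding to $MST(G(\mathcal{P}))$, and the disk family $\mathcal{D}$ as in the earlier proof. Let $F=\{(p,D):p\in S,\,D\in\mathcal{D},\,p\in D\}$.

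The key change is the lower bound on $|F|$. For each disk $D\in\mathcal{D}$ with corresponding edge $(a,b)\in\mathcal{T}$, Lemma~\ref{gg:D-empty} says $D$ contains no point of $P\setminus S$; since $a$ and $b$ lie in different components of $\text{\kGG{0}{}}-S$, the edge $(a,b)$ cannot be in $\text{\kGG{0}{}}$, so $D$ must contain at least one point of $S$. Hence $|\mathcal{D}|\le |F|$. Combined with Theorem~\ref{gg:four-circle-theorem}, which gives $|F|\le 3|S|$, this yields
\[
m(S)-1 \;=\; |\mathcal{D}| \;\le\; |F| \;\le\; 3|S|,
\]
i.e., $m(S)\le 3|S|+1$. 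Together with the trivial inequality $m(S)\le n-|S|$, I get
\[
m(S)\;\le\;\min\{3|S|+1,\; n-|S|\}.
\]

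Plugging this into the Tutte--Berge bookkeeping from the proof of Theorem~\ref{gg:matching-1GG}, and noting $o(\mathcal{C})\le m(S)$,
\[
\mathrm{def}(\text{\kGG{0}{}})\;\le\;\max_{0\le |S|\le n}\bigl(m(S)-|S|\bigr)\;\le\;\max_{0\le |S|\le n}\min\{2|S|+1,\; n-2|S|\}.
\]
The inner minimum is maximized when $2|S|+1=n-2|S|$, i.e., $|S|=\tfrac{n-1}{4}$, giving $\mathrm{def}(\text{\kGG{0}{}})\le \tfrac{n+1}{2}$. Then by Theorem~\ref{gg:Berge},
\[
|M^*|\;\ge\;\tfrac{1}{2}\bigl(n-\mathrm{def}(\text{\kGG{0}{}})\bigr)\;\ge\;\tfrac{1}{2}\!\left(n-\tfrac{n+1}{2}\right)\;=\;\tfrac{n-1}{4}.
\]

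There is no genuine obstacle here, since all heavy lifting (the three-disks-per-point bound of Theorem~\ref{gg:four-circle-theorem} and the emptiness of the $\mathcal{D}$-disks) has already been done. The only substantive adjustment is the disk-content inequality: for $\text{\kGG{k}{}}$ each disk in $\mathcal{D}$ must contain at least $k+1$ points of $S$, so the $\text{\kGG{0}{}}$ case merely requires the weakest such bound, $|\mathcal{D}|\le |F|$, and the rest of the optimization is a one-variable calculus exercise. Tightness (mentioned in the theorem preamble) is witnessed by the configuration in Figure~\ref{gg:tight-0GG}, which I would cite without reproducing its details.
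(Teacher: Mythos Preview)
Your proposal is correct and is precisely the ``similar reasoning'' the paper alludes to: the paper gives no separate proof for Theorem~\ref{gg:matching-0GG} beyond pointing back to the argument of Theorem~\ref{gg:matching-1GG}, and your write-up carries out exactly that adaptation (each disk in $\mathcal{D}$ must contain at least one point of $S$, giving $m(S)\le 3|S|+1$, and the Tutte--Berge optimization then yields $(n-1)/4$).
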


The bound in Theorem~\ref{gg:matching-0GG} is tight, as can be seen from the graph in Figure~\ref{gg:tight-0GG}, for which the maximum matching has size $\frac{n-1}{4}$. Actually this is a Gabriel graph of maximum degree four which is a tree. The dashed edges do not belong to \kGG{0}{} because any closed disk which has one of these edges as diameter has a point on its boundary. Observe that each edge in any matching is adjacent to one of the vertices of degree four.

\begin{paragraph}{Note:}For a point set $P$, let $\nu_k(P)$ and $\alpha_k(P)$ respectively denote the size of a maximum matching and a maximum independent set in \kGG{k}{}. For every edge in the maximum matching, at most one of its endpoints can be in the maximum independent set. Thus,$$\alpha_k(P)\le |P| - \nu_k(P).$$
By combining this formula with the results of Theorems ~\ref{gg:matching-0GG}, \ref{gg:matching-1GG}, \ref{gg:matching-2GG}, respectively, we have $\alpha_0(P)\le \frac{3n+1}{4}$, $\alpha_1(P)\le \frac{3n+2}{5}$, and $\alpha_2(P)\le \lceil\frac{n}{2}\rceil$. The \kGG{0}{} graph in Figure~\ref{gg:tight-0GG} has an independent set of size $\frac{3n+1}{4}=13$, which shows that this bound is tight for \kGG{0}{}. On the other hand, \kGG{0}{} is planar and every planar graph is 4-colorable; which implies that $\alpha_0(P)\ge \lceil\frac{n}{4}\rceil$. There are some examples of \kGG{0}{} in \cite{Matula1980} such that $\alpha_0(P)= \lceil\frac{n}{4}\rceil$, which means that this bound is tight as well.
\end{paragraph}

\section{Blocking Higher-Order Gabriel Graphs}
\label{gg:blocking-section}
In this section we consider the problem of blocking higher-order Gabriel graphs. Recall that a point set $K$ blocks \kGG{k}{(P)} if in \kGG{k}{$(P\cup K)$} there is no edge connecting two points in $P$. 

\begin{theorem}
\label{gg:blocking-thr1}
For every set $P$ of $n$ points, at least $\lceil\frac{n-1}{3}\rceil$ points are necessary to block \kGG{0}{(P)}.
\end{theorem}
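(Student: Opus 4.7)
\begin{proofof}{Theorem~\ref{gg:blocking-thr1} (proof sketch)}
The plan is to reduce blocking of $GG(P)$ to blocking the edges of a minimum spanning tree of $P$, and then to apply Theorem~\ref{gg:four-circle-theorem} as a covering-multiplicity bound.

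First, let $T$ be a minimum spanning tree of $P$; it has exactly $n-1$ edges. It is well known (and immediate from the characterization $|pq|^2<|pr|^2+|rq|^2$) that $T$ is a subgraph of $GG(P)$. Hence every set $K$ that blocks $GG(P)$ must, in particular, block every edge of $T$: for each edge $(a,b)\in T$, the closed disk $\CD{a}{b}$ must contain at least one point of $K$, for otherwise $(a,b)$ would still be an edge of $GG(P\cup K)$.

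Next, I would apply Theorem~\ref{gg:four-circle-theorem} with the trivial partition $\mathcal{P}=\{\{p\}:p\in P\}$. With this choice, the graph $G(\mathcal{P})$ is the complete graph on $P$ with Euclidean weights, so $MST(G(\mathcal{P}))$ coincides with $T$, and the associated disk family $\mathcal{D}=\{\CD{a}{b}:(a,b)\in T\}$ has $n-1$ members. Theorem~\ref{gg:four-circle-theorem} then asserts that no point of the plane lies in four of these disks simultaneously, so every point $k\in K$ lies in at most three disks of $\mathcal{D}$, i.e., blocks at most three edges of $T$.

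Finally, a simple double count on the set of pairs
$\{(k,D):k\in K,\ D\in\mathcal{D},\ k\in D\}$
gives $3|K|\ge |\mathcal{D}|=n-1$, so $|K|\ge\lceil(n-1)/3\rceil$. The only potential subtlety is checking that Theorem~\ref{gg:four-circle-theorem} is indeed applicable to the singleton partition and that the closed-disk condition (rather than the open one) is what is relevant for blocking, both of which are immediate from the definitions; there is no further obstacle.
\end{proofof}
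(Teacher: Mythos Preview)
Your proof is correct and follows essentially the same approach as the paper: take the singleton partition $\mathcal{P}=\{\{p\}:p\in P\}$, observe that the $n-1$ MST edges lie in $GG(P)$ so each of their diametral disks must receive a blocking point, and invoke Theorem~\ref{gg:four-circle-theorem} to bound each blocking point's coverage by three disks, giving $|K|\ge\lceil(n-1)/3\rceil$ via double counting. The paper's argument is identical in structure, differing only in that it cites Lemma~\ref{gg:D-empty} explicitly for the containment $T\subseteq GG(P)$ rather than calling it well known.
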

\begin{proof}
Let $K$ be a set of $m$ points which blocks \kGG{0}{(P)}. Let $G(\mathcal{P})$ be the complete graph with vertex set $\mathcal{P}=P$. Let $\mathcal{T}$ be a minimum spanning tree of $G(\mathcal{P})$ and let $\mathcal{D}$ be the set of closed disks corresponding to the edges of $\mathcal{T}$. Since $G(\mathcal{P})$ has $n$ vertices, $\mathcal{T}$ has $n-1$ edges. Thus, $|\mathcal{D}|=n-1$. By Lemma~\ref{gg:D-empty} each disk $\CD{a}{b}\in\mathcal{D}$ does not contain any point of $P\setminus\{a,b\}$, thus,  $\mathcal{T}\subseteq\text{\kGG{0}{(P)}}$. To block each edge of $\mathcal{T}$, corresponding to a disk in $\mathcal{D}$, at least one point is necessary. By Theorem~\ref{gg:four-circle-theorem} each point in $K$ can lie in at most three disks of $\mathcal{D}$. Therefore, $m\ge\lceil\frac{n-1}{3}\rceil$, which implies that at least $\lceil\frac{n-1}{3}\rceil$ points are necessary to block all the edges of $\mathcal{T}$ and hence \kGG{0}{(P)}.
\end{proof}
\begin{figure}[htb]
  \centering
\setlength{\tabcolsep}{0in}
  $\begin{tabular}{cc}
 \multicolumn{1}{m{.75\columnwidth}}{\centering\includegraphics[width=.6\columnwidth]{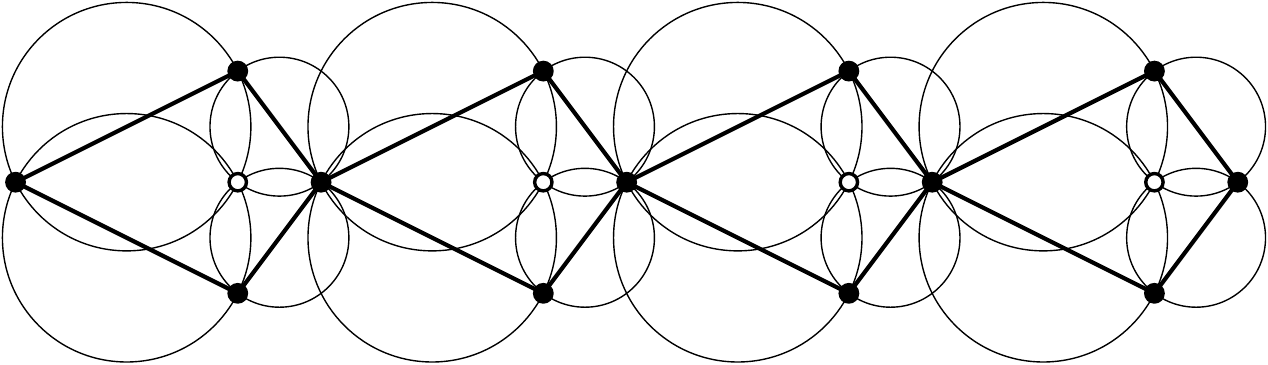}}
&\multicolumn{1}{m{.25\columnwidth}}{\centering\includegraphics[width=.22\columnwidth]{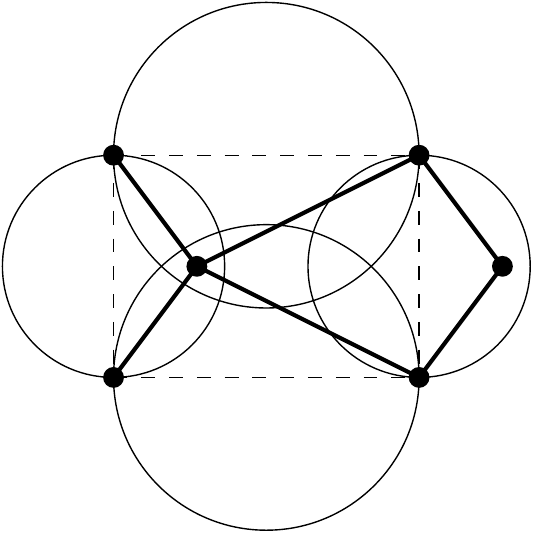}}\\
(a)&(b)
\end{tabular}$
  \caption{(a) \kGG{0}{} graph of $n=13$ points (in bold edges) which is blocked by $\lceil\frac{n-1}{3}\rceil=4$ white points, (b) dashed edges do not belomg to \kGG{0}{}.}
\label{gg:blocking-fig}
\end{figure}

Figure~\ref{gg:blocking-fig}(a) shows a \kGG{0}{} with $n=13$ (black) points which is blocked by $\lceil\frac{n-1}{3}\rceil=4$ (white) points. Note that all the disks, corresponding to the edges of every cycle, intersect at the same point in the plane (where we have placed the white points). As shown in Figure~\ref{gg:blocking-fig}(b), the dashed edges do not belong to \kGG{0}{}. Thus, the lower bound provided by Theorem~\ref{gg:blocking-thr1} is tight. It is easy to generalize the result of Theorem~\ref{gg:blocking-thr1} to higher-order Gabriel graphs. Since in a \kGG{k}{} we need at least $k+1$ points to block an edge of $\mathcal{T}$ and each point can be inside at most three disks in $\mathcal{D}$, we have the following corollary:

\begin{corollary}
For every set $P$ of $n$ points, at least $\lceil\frac{(k+1)(n-1)}{3}\rceil$ points are necessary to block \kGG{k}{(P)}.
\end{corollary}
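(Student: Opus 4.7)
The plan is to directly generalize the argument used to prove Theorem~\ref{gg:blocking-thr1}, with only one substantive change: each disk in the auxiliary family now needs $k+1$ blockers instead of a single one. First, I would let $K$ be an arbitrary point set that blocks \kGG{k}{(P)}, set $m=|K|$, and construct the same objects as in the $k=0$ case: the complete graph $G(\mathcal{P})$ on vertex set $\mathcal{P}=P$, a minimum spanning tree $\mathcal{T}$ of $G(\mathcal{P})$, and the family $\mathcal{D}$ of $n-1$ closed disks whose diameters are the edges of $\mathcal{T}$. Lemma~\ref{gg:D-empty} then says that no disk in $\mathcal{D}$ contains a point of $P$ other than its two defining endpoints, so every edge of $\mathcal{T}$ lies in \kGG{0}{(P)} and hence in \kGG{k}{(P)}.

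Next I would argue that each disk $\CD{a}{b}\in\mathcal{D}$ must contain at least $k+1$ points of $K$. Indeed, since $\CD{a}{b}\cap P=\{a,b\}$, the edge $(a,b)$ would survive in the order-$k$ Gabriel graph on $P\cup K$ unless the number of points of $K$ lying inside $\CD{a}{b}$ is at least $k+1$. The blocking hypothesis therefore forces this count for every one of the $n-1$ disks in $\mathcal{D}$.

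The final step is the same double-counting argument as in Theorem~\ref{gg:blocking-thr1}. Counting incidences $F=\{(p,D):p\in K,\ D\in\mathcal{D},\ p\in D\}$, the previous paragraph gives $|F|\ge (k+1)(n-1)$, while Theorem~\ref{gg:four-circle-theorem} gives $|F|\le 3m$, because each point of $K$ lies in at most three disks of $\mathcal{D}$. Combining these bounds yields $m\ge \lceil (k+1)(n-1)/3\rceil$, establishing the corollary. I do not expect any serious obstacle here: the two main tools, Lemma~\ref{gg:D-empty} and Theorem~\ref{gg:four-circle-theorem}, carry over unchanged, and the only modification relative to the $k=0$ proof is to upgrade the per-disk blocker count from $1$ to $k+1$.
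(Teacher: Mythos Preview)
Your proposal is correct and mirrors the paper's approach exactly: the paper simply remarks that the argument of Theorem~\ref{gg:blocking-thr1} generalizes because each edge of $\mathcal{T}$ now requires $k+1$ blockers while Theorem~\ref{gg:four-circle-theorem} still caps each point of $K$ at three disks of $\mathcal{D}$. You have spelled out the double-counting in full, but the ingredients and structure are identical.
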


In \cite{Aronov2013} the authors showed that every Gabriel graph can be blocked by a set $K$ of $n-1$ points by putting a point slightly to the right of each point of $P$, except for the rightmost one. Every disk with diameter determined by two points of $P$ will contain a point of $K$. Using a similar argument one can block a \kGG{k}{} by putting $k+1$ points slightly to the right of each point of $P$, except for the rightmost one. Thus,

\begin{corollary}
 For every set $P$ of $n$ points, there exists a set of $(k+1)(n-1)$ points that blocks \kGG{k}{(P)}.
\end{corollary}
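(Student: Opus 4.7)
The plan is to lift the construction of~\cite{Aronov2013} (which blocks ordinary Gabriel graphs with $n-1$ points) to order-$k$ Gabriel graphs by placing a \emph{cluster} of $k+1$ blocking points near each non-rightmost point of $P$, instead of a single blocking point. First, after a generic rotation if necessary so that no two points of $P$ share an $x$-coordinate, I would label the points of $P$ as $p_1, \dots, p_n$ in increasing order of $x$-coordinate. For each $i \in \{1, \dots, n-1\}$ I would fix a sufficiently small radius $\varepsilon_i > 0$ and place $k+1$ distinct points of $K$ inside the open half-disk of radius $\varepsilon_i$ that is centred at $p_i$ and lies strictly to the right of $p_i$. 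This yields $|K| = (k+1)(n-1)$, as claimed.

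Next, I would verify that $K$ blocks $k$-GG$(P)$. Consider any two points $p_i, p_j \in P$ with $i < j$. The disk $D[p_i, p_j]$ has $p_i$ on its boundary, and its boundary curves rightward at $p_i$, so some right-side neighbourhood of $p_i$ lies entirely in the interior of $D[p_i, p_j]$. Choosing each $\varepsilon_i$ strictly smaller than $\min_{\ell \ne i} |p_i p_\ell|$ guarantees that the entire right half-disk of radius $\varepsilon_i$ around $p_i$ is contained in the interior of $D[p_i, p_j]$ for every $j > i$. Hence the $k+1$ blocking points planted near $p_i$ all lie inside $D[p_i, p_j]$, so $D[p_i, p_j]$ contains at least $k+1$ points of $(P \cup K) \setminus \{p_i, p_j\}$. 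Consequently $(p_i, p_j)$ is not an edge of $k$-GG$(P \cup K)$, and since the pair was arbitrary, $K$ blocks $k$-GG$(P)$.

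There is essentially no major obstacle here: the construction is a direct generalisation of the $k=0$ argument of~\cite{Aronov2013}, and the only care needed is to quantify what ``slightly to the right'' means, which is handled by the explicit choice of the $\varepsilon_i$ above. A minor cosmetic step is to perturb the $k+1$ points inside each half-disk so that $P \cup K$ remains in general position; this is routine since each half-disk has non-empty two-dimensional interior.
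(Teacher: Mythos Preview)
Your approach matches the paper's exactly: place $k+1$ blocking points ``slightly to the right'' of each non-rightmost point of $P$. However, your attempt to make ``slightly to the right'' precise contains a geometric error.

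The claim that the \emph{entire} open right half-disk of radius $\varepsilon_i$ around $p_i$ lies inside $D[p_i,p_j]$ is false, no matter how small $\varepsilon_i$ is. Take $p_i=(0,0)$ and $p_j=(\delta,1)$ for tiny $\delta>0$: then $p_j$ is to the right of $p_i$, but the tangent to $C(p_i,p_j)$ at $p_i$ is nearly horizontal, and a point such as $q=(\epsilon,-\epsilon)$ lies in your right half-disk yet outside $D[p_i,p_j]$ for all small $\epsilon>0$. Your bound $\varepsilon_i<\min_{\ell\ne i}|p_ip_\ell|$ controls only the distance from $p_i$, not the direction, so it cannot repair this.

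The fix is to constrain the direction as well. A short computation shows that $q=p_i+t(1,0)$ lies in $D[p_i,p_j]$ if and only if $0\le t\le x(p_j)-x(p_i)$. Hence placing the $k+1$ blocking points on (or in a sufficiently thin neighbourhood of) the horizontal ray from $p_i$, with $\varepsilon_i<x(p_{i+1})-x(p_i)$, makes the argument go through. With this correction your proof is complete and coincides with the paper's, which simply writes ``slightly to the right'' without spelling out the quantification.
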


Note that this upper bound is tight, because if the points of $P$ are on a line, the disks representing the minimum spanning tree are disjoint and each disk needs $k+1$ points to block the corresponding edge.

\section{Conclusions}
\label{gg:conclusion}
In this chapter, we considered the bottleneck and perfect matching admissibility of higher-order Gabriel graphs. We proved that
\begin{itemize}
  \item \kGG{9}{} contains a Euclidean bottleneck matching of $P$ and \kGG{8}{} may not have any.
  \item \kGG{0}{} has a matching of size at least $\frac{n-1}{4}$ and this bound is tight.
  \item \kGG{1}{} has a matching of size at least $\frac{2(n-1)}{5}$.
  \item \kGG{2}{} has a perfect matching.
    \item At least $\lceil\frac{n-1}{3}\rceil$ points are necessary to block \kGG{0}{} and this bound is tight.
  \item $\lceil\frac{(k+1)(n-1)}{3}\rceil$ points are necessary and $(k+1)(n-1)$ points are sufficient to block \kGG{k}{}.
\end{itemize}

\bibliographystyle{abbrv}
\bibliography{../thesis}
\chapter{Matching in TD-Delaunay Graphs}
\label{ch:td}

We consider an extension of the triangular-distance Delaunay graphs (TD-Delaunay) on a set $P$ of points in general position in the plane. In TD-Delaunay, the convex distance is defined by a fixed-oriented equilateral triangle $\ConvexShape$, and there is an edge between two points in $P$ if and only if there is an empty homothet of $\ConvexShape$ having the two points on its boundary. We consider higher-order triangular-distance Delaunay graphs, namely \kTD{k}{}, which contains an edge between two points if the interior of the smallest homothet of $\ConvexShape$ having the two points on its boundary contains at most $k$ points of $P$. We consider the connectivity, Hamiltonicity and perfect-matching admissibility of \kTD{k}{}. Finally we consider the problem of blocking the edges of \kTD{k}{}.

\vspace{10pt}
This chapter was first published in the proceedings of the First International Conference on Algorithms and Discrete Applied Mathematics (CALDAM 2015)~\cite{Biniaz2015-hotd-CALDAM},
and was subsequently published in the journal of Computational Geometry: Theory and Applications~\cite{Biniaz2015-hotd-CGTA}. 

\vspace{10pt}
We also obtained lower and upper bounds on the size of maximum matching in \kTD{0}{} graphs, which are not included in this thesis. The results were first published in the proceedings of the 7th International Workshop on Algorithms and Computation (WALCOM 2013)~\cite{Babu2013-WALCOM},
and was subsequently invited and accepted to a special issue of Theoretical Computer Science~\cite{Babu2014}.
\section{Introduction}
The {\em triangular-distance Delaunay graph} of a point set $P$ in the plane, TD-Delaunay for short, was introduced by Chew \cite{Chew1989}. A TD-Delaunay is a graph whose convex distance function is defined by a fixed-oriented equilateral triangle. Let $\ConvexShape$ be a downward equilateral triangle whose barycenter is the origin and one of whose vertices is on the negative $y$-axis. A {\em homothet} of $\ConvexShape$ is obtained by scaling $\ConvexShape$ with respect to the origin by some factor $\mu\ge 0$, followed by a translation to a point $b$ in the plane: $b+\mu\ConvexShape=\{b+\mu a:a\in\ConvexShape\}$.
In the TD-Delaunay graph of $P$, there is a straight-line edge between two points $p$ and $q$ if and only if there exists a homothet of $\ConvexShape$ having $p$ and $q$ on its boundary and whose interior does not contain any point of $P$. In other words, $(p,q)$ is an edge of TD-Delaunay graph if and only if there exists an empty downward equilateral triangle having $p$ and $q$ on its boundary. In this case, we say that the edge $(p,q)$ has the {\em empty triangle property}. 

We say that $P$ is in general position if the line passing through any two points from $P$ does not make angles $0^\circ$, $60^\circ$, and $120^\circ$ with horizontal. In this chapter we consider point sets in general position and our results assume this pre-condition.
If $P$ is in general position, then the TD-Delaunay graph is a planar graph, see \cite{Bose2010}.
We define $t(p,q)$ as the smallest homothet of $\ConvexShape$ having $p$ and $q$ on its boundary. See Figure~\ref{td:TD}(a). Note that $t(p,q)$ has one of $p$ and $q$ at a vertex, and the other one on the opposite side. Thus,

\begin{observation}
\label{td:side-point-obs}
 Each side of $t(p,q)$ contains either $p$ or $q$.
\end{observation}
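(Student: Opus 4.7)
The plan is to argue by contradiction, leveraging the \emph{minimality} built into the definition of $t(p,q)$. Suppose some side $s$ of $t(p,q)$ contains neither $p$ nor $q$. Let $v$ denote the vertex of $t(p,q)$ opposite to $s$, and let $s_1, s_2$ be the two sides of $t(p,q)$ incident to $v$. Since $p$ and $q$ lie on the boundary $\partial t(p,q)$ but not on $s$, they must lie on $s_1 \cup s_2$; moreover, neither can coincide with an endpoint of $s_1$ or $s_2$ that is shared with $s$, because such endpoints themselves belong to $s$.

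The key construction is the one-parameter family of homothets obtained by scaling $t(p,q)$ toward $v$: for $\lambda \in (0,1]$, let $t_\lambda = v + \lambda\bigl(t(p,q) - v\bigr)$. Every $t_\lambda$ is a homothet of $\ConvexShape$, $t_1 = t(p,q)$, and $t_\lambda$ is strictly smaller than $t(p,q)$ when $\lambda < 1$. The crucial geometric feature is that the two sides of $t_\lambda$ incident to $v$ lie on the \emph{same} supporting lines as $s_1$ and $s_2$; only their length shrinks by the factor $\lambda$, while $s$ moves strictly inward.

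It remains to verify that, for $\lambda$ slightly less than $1$, both $p$ and $q$ still lie on $\partial t_\lambda$. This is a small case analysis. If a point, say $p$, sits at $v$, it remains at a vertex of $t_\lambda$ and hence on its boundary. Otherwise $p$ lies in the relative interior of $s_1$ (say), at some distance $d_p$ from $v$ that is strictly less than $|s_1|$, because the far endpoint of $s_1$ lies on $s$ and has already been excluded. Hence $p$ lies on the shrunken side of $t_\lambda$ as soon as $\lambda > d_p/|s_1|$. The identical reasoning applies to $q$. Choosing any $\lambda$ strictly between the maximum of the two resulting thresholds and $1$ produces a strictly smaller homothet of $\ConvexShape$ with both $p$ and $q$ on its boundary, contradicting the minimality of $t(p,q)$.

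The only real obstacle is bookkeeping: one must make sure the cases ``$p$ or $q$ at $v$'' and ``$p$ or $q$ in the relative interior of $s_1$ or $s_2$'' are handled uniformly, and that the endpoints of $s_1, s_2$ lying on $s$ are correctly excluded from the start. No nontrivial geometric estimate is required; the proof is driven entirely by the fact that scaling toward $v$ preserves the two supporting lines through $v$ while moving the third side strictly inward.
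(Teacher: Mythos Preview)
Your argument is correct. The paper takes a slightly different route: immediately before the observation it asserts (without detailed justification) the stronger fact that $t(p,q)$ has one of $p,q$ at a vertex and the other on the opposite side, and the observation then follows trivially since a vertex lies on two sides. Your shrinking-toward-$v$ argument is essentially the justification one would supply for that stronger assertion, and in fact recovers it via pigeonhole: once each of the three sides contains one of the two points, some point lies on two sides and hence at a vertex, forcing the other onto the opposite side. So the two routes are the same idea read in opposite directions; yours is simply more explicit about why minimality forces every side to be touched.
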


A graph $G$ is {\em connected} if there is a path between any pair of vertices in $G$. Moreover, $G$ is $k$-{\em connected} if there does not exist a set of at most $k-1$ vertices whose removal disconnects $G$. In case $k=2$, $G$ is called {\em biconnected}. In other words a graph $G$ is biconnected iff there is a simple cycle between any pair of its vertices. A {\em matching} in $G$ is a set of edges in $G$ without common vertices. A {\em perfect matching} is a matching which matches all the vertices of $G$. A {\em Hamiltonian cycle} in $G$ is a cycle (i.e., closed loop) through $G$ that visits each vertex of $G$ exactly once. For $H\subseteq G$ we denote the {\em bottleneck} of $H$, i.e., the length of the longest edge in $H$, by $\lambda(H)$.

Let $K_n(P)$ be a complete edge-weighted geometric graph on a point set $P$
which contains a straight-line edge between any pair of points in $P$. For an edge $(p,q)$ in $K_n(P)$ let $w(p,q)$ denote the weight of $(p,q)$.
A {\em bottleneck matching} (resp. {\em bottleneck Hamiltonian cycle}) in $P$ is defined to be a perfect matching (resp. Hamiltonian cycle) in $K_n(P)$, in which the weight of the maximum-weight edge is minimized. A {\em bottleneck biconnected spanning subgraph} of $P$ is a spanning subgraph, $G(P)$, of $K_n(P)$ which is biconnected and the weight of the longest edge in $G(P)$ is minimized. 

A tight lower bound on the size of a maximum matching in a TD-Delaunay graph, i.e. \kTD{0}{}, is presented in \cite{Babu2014}. In this chapter we study higher-order TD-Delaunay graphs. The {\em order-k TD-Delaunay graph} of a point set $P$, denoted by \kTD{k}{}, is a geometric graph which has an edge $(p,q)$ iff the interior of $t(p,q)$ contains at most $k$ points of $P$; see Figure~\ref{td:TD}(b). The standard TD-Delaunay graph corresponds to \kTD{0}{}. We consider graph-theoretic properties of higher-order TD-Delaunay graphs, such as connectivity, Hamiltonicity, and perfect-matching admissibility. We also consider the problem of blocking TD-Delaunay graphs.

\begin{figure}[htb]
  \centering
\setlength{\tabcolsep}{0in}
  $\begin{tabular}{ccc}
\multicolumn{1}{m{.33\columnwidth}}{\centering\includegraphics[width=.31\columnwidth]{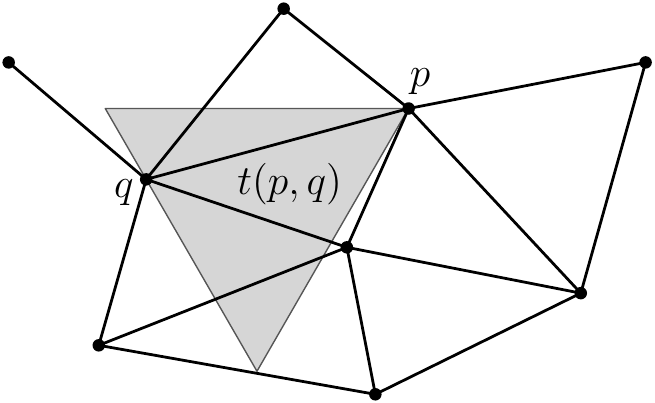}}
&\multicolumn{1}{m{.33\columnwidth}}{\centering\includegraphics[width=.31\columnwidth]{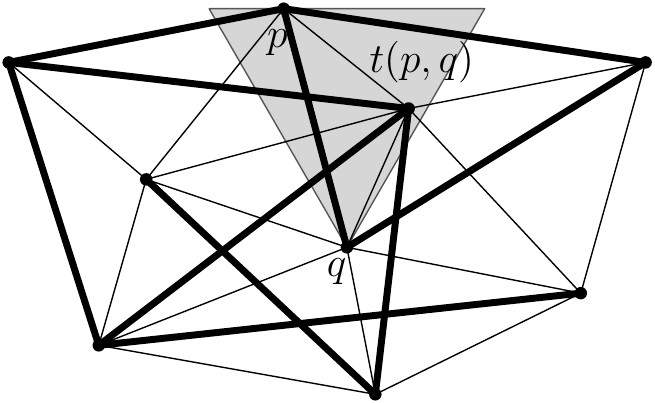}} &\multicolumn{1}{m{.33\columnwidth}}{\centering\includegraphics[width=.31\columnwidth]{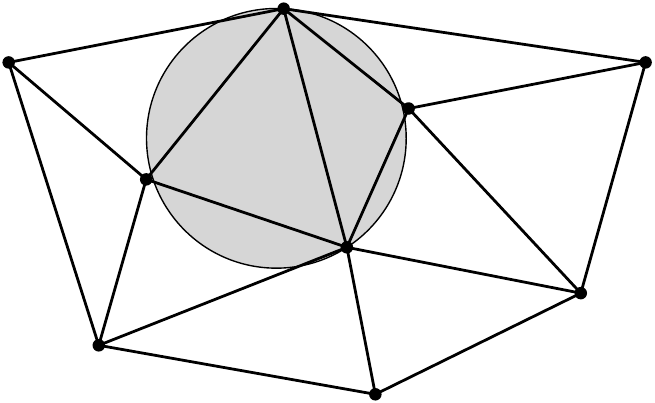}}
\\
(a) & (b)& (c)
\end{tabular}$
  \caption{(a) Triangular-distance Delaunay graph (\kTD{0}{}), (b) \kTD{1}{} graph, the light edges belong to \kTD{0}{} as well, and (c) Delaunay triangulation.}
\label{td:TD}
\end{figure}
\addtocontents{toc}{\protect\setcounter{tocdepth}{1}}
\subsection{Previous Work}
\addtocontents{toc}{\protect\setcounter{tocdepth}{2}}
\label{td:previous-work}
A {\em Delaunay triangulation} (DT) of $P$ (which does not have any four co-circular
points) is a graph whose distance function is defined by a fixed circle {\footnotesize $\bigcirc$} centered at the origin. DT has an edge between two points $p$ and $q$ iff there exists a homothet of {\footnotesize $\bigcirc$} having $p$ and $q$ on its boundary and whose interior does not contain any point of $P$; see Figure~\ref{td:TD}(c). In this case the edge $(p,q)$ is said to have the {\em empty circle property}. The {\em order-k
Delaunay Graph} on $P$, denoted by \kDT{k}{}, is defined to have an edge $(p, q)$ iff there exists a homothet of {\footnotesize $\bigcirc$} having $p$ and $q$ on its boundary and whose interior contains at most $k$ points of $P$. The standard Delaunay triangulation corresponds to \kDT{0}{}.  

For each pair of points $p,q\in P$ let $D[p,q]$ be the closed disk having $pq$ as diameter. A {\em Gabriel Graph} on $P$ is a geometric graph which has an edge between two points $p$ and $q$ iff $D[p,q]$ does not contain any point of $P\setminus\{p,q\}$. The {\em order-$k$ Gabriel Graph} on $P$, denoted by \kGG{k}{}, is defined to have an edge $(p,q)$ iff $D[p,q]$ contains at most $k$ points of $P\setminus\{p,q\}$.

For each pair of points $p,q\in P$, let $L(p,q)$ be the intersection of the two open disks with radius $|pq|$ centered at $p$ and $q$, where $|pq|$ is the Euclidean distance between $p$ and $q$. A {\em Relative Neighborhood Graph} on $P$ is a geometric graph which has an edge between two points $p$ and $q$ iff $L(p,q)$ does not contain any point of $P$. The {\em order-$k$ Relative Neighborhood Graph} on $P$, denoted by \kRNG{k}{}, is defined to have an edge $(p,q)$ iff $L(p,q)$ contains at most $k$ points of $P$. It is obvious that for a fixed point set, $\text{\kRNG{k}{}}$ is a subgraph of
$\text{\kGG{k}{}}$, and $\text{\kGG{k}{}}$ is a subgraph of \text{\kDT{k}{}}.

The problem of determining whether an order-$k$ geometric graph always has a (bottleneck) perfect matching or a (bottleneck) Hamiltonian cycle is of interest. In order to show the importance of this problem we provide the following example. Gabow and Tarjan~\cite{Gabow1988} showed that a bottleneck matching of maximum cardinality in a graph can be computed in $O(m\cdot(n\log n)^{0.5})$ time, where $m$ is the number of edges in the graph. Using their algorithm, a bottleneck perfect matching of a point set can be computed in $O(n^2\cdot(n\log n)^{0.5})$ time; note that the complete graph on $n$ points has $\Theta(n^2)$ edges. Chang et al.~\cite{Chang1992} showed that a bottleneck perfect matching of a point set is contained in \kDT{16}{}; this graph has $\Theta(n)$ edges and can be computed in $O(n\log n)$ time. Thus, by running the algorithm of Gabow and Tarjan on \kDT{16}{}, a bottleneck perfect matching of a point set can be computed in $O(n\cdot (n\log n)^{0.5})$ time.

If for each edge $(p,q)$ in $K_n(P)$, $w(p,q)$ is equal the Euclidean distance
between $p$ and $q$, then Chang et al. \cite{Chang1991, Chang1992b, Chang1992} proved that a bottleneck biconnected spanning graph, bottleneck perfect matching, and bottleneck Hamiltonian cycle of $P$ are contained in \kRNG{1}{}, \kRNG{16}{}, \kRNG{19}{}, respectively. This implies that \kRNG{16}{} has a perfect matching and \kRNG{19}{} is Hamiltonian. Since \kRNG{k}{} is a subgraph of \kGG{k}{}, the same results hold for \kGG{16}{} and \kGG{19}{}. It is known that \kGG{k}{} is $(k+1)$-connected \cite{Bose2013} and 
\kGG{10}{} (and hence \kDT{10}{}) is Hamiltonian~\cite{Kaiser2015}. Dillencourt showed that every Delaunay triangulation (\kDT{0}{}) admits a perfect matching \cite{Dillencourt1990} but it can fail to be Hamiltonian \cite{Dillencourt1987a}. 

Given a geometric graph $G(P)$ on a set $P$ of $n$ points, we say that a set $K$ of points {\em blocks} $G(P)$ if in $G(P\cup K)$ there is no edge connecting two points in $P$. Actually $P$ is an independent set in $G(P\cup K)$.
Aichholzer et al.~\cite{Aichholzer2013-blocking} considered the problem of blocking the Delaunay triangulation (i.e. \kDT{0}{}) for a given point set $P$ in which no four points are co-circular. They show that $\frac{3n}{2}$ points are sufficient to block \kDT{0}{} and $n-1$ points are necessary. To block a Gabriel graph, $n-1$ points are sufficient \cite{Aronov2013}.

\addtocontents{toc}{\protect\setcounter{tocdepth}{1}}
\subsection{Our Results}
\addtocontents{toc}{\protect\setcounter{tocdepth}{2}}
\label{td:our-results}
We consider some graph-theoretical properties of higher-order triangular distance Delaunay graphs on a given set $P$ of $n$ points in general position in the plane.  
We show for which values of $k$, \kTD{k}{} contains a bottleneck biconnected spanning graph, a bottleneck Hamiltonian cycle, and a (bottleneck) perfect-matching; for the bottleneck structures we assume that the weight of any edge $(p,q)$ in $K_n(P)$ is equal to the area of the smallest homothet of $\ConvexShape$ having $p$ and $q$ on its boundary. In Section~\ref{td:connectivity} we prove that every \kTD{k}{} graph is $(k+1)$-connected. In addition we show that a bottleneck biconnected spanning graph of $P$ is contained in \kTD{1}{}. Using a similar approach as in \cite{Abellanas2009, Chang1991}, in Section~\ref{td:Hamiltonicity} we show that a bottleneck Hamiltonian cycle of $P$ is contained in \kTD{7}{}. We also show a configuration of a point set $P$ such that \kTD{5}{} fails to have a bottleneck Hamiltonian cycle. In Section~\ref{td:matching} we prove that a bottleneck perfect matching of $P$ is contained in \kTD{6}{}, and we show that for some point set $P$, \kTD{5}{} does not have a bottleneck perfect matching. In Section~\ref{td:matching2} we prove that \kTD{2}{} has a perfect matching and \kTD{1}{} has a matching of size at least $\frac{2(n-1)}{5}$. In Section~\ref{td:blocking-section} we consider the problem of blocking \kTD{k}{}. We show that at least $\lceil\frac{n-1}{2}\rceil$ points are necessary and $n-1$ points are sufficient to block a \kTD{0}{}. The open problems and concluding remarks are presented in Section~\ref{td:conclusion}.

\section{Preliminaries}
\label{td:preliminaries}

Bonichon et al. \cite{Bonichon2010} showed that the half-$\Theta_6$ graph of a point set $P$ in the plane is equal to the TD-Delaunay graph of $P$. They also showed that every plane triangulation is TD-Delaunay realizable. 

\begin{figure}[htb]
  \begin{center}
\includegraphics[width=.38\textwidth]{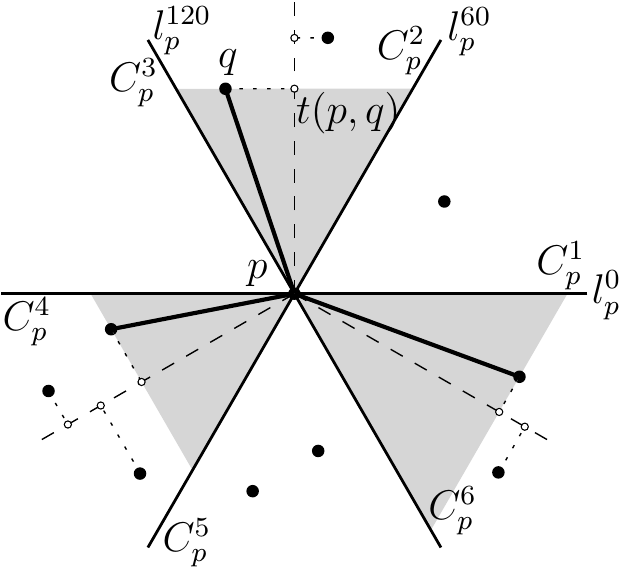}
  \end{center}
  \caption{The construction of the TD-Delaunay graph.}
\label{td:cones}
\end{figure}

The half-$\Theta_6$ graph (or equivalently a TD-Delaunay graph) on a point set $P$ can be constructed in the following way. For each point $p$ in $P$, let $l_p$ be the horizontal line through $p$. Define $l_p^{\gamma}$ as the line obtained by rotating $l_p$ by $\gamma$-degrees in counter-clockwise direction around $p$. Actually $l_p^0=l_p$. Consider three lines $l_p^{0}$, $l_p^{60}$, and $l_p^{120}$ which partition the plane into six disjoint cones with apex $p$. Let $C_p^1, \dots, C_p^6$ be the cones in counter-clockwise order around $p$ as shown in Figure~\ref{td:cones}. $C_p^1$, $C_p^3$, $C_p^5$ will be
referred to as {\em odd cones}, and $C_p^2$, $C_p^4$, $C_p^6$ will be referred to as {\em even cones}. For each even cone $C_p^i$, connect $p$ to the ``nearest'' point $q$ in $C_p^i$. The {\em distance} between $p$ and $q$, $d(p,q)$, is defined as the Euclidean distance between $p$ and the orthogonal projection of $q$ onto the bisector of $C_p^i$. See Figure~\ref{td:cones}. The resulting graph is the half-$\Theta_6$ graph which is defined by even cones \cite{Bonichon2010}. Moreover, the resulting graph is the TD-Delaunay graph defined with respect to homothets of $\ConvexShape$. By considering the odd cones, another half-$\Theta_6$ graph is obtained. The well-known $\Theta_6$ graph is the union of half-$\Theta_6$ graphs defined by odd and even cones. 

Recall that $t(p,q)$ is the smallest homothet of $\ConvexShape$ having $p$ and $q$ on its boundary. In other words, $t(p,q)$ is the smallest downward equilateral triangle through $p$ and $q$. Similarly we define $t'(p,q)$ as the smallest upward equilateral triangle having $p$ and $q$ on its boundary. It is obvious that the even cones correspond to downward triangles and odd cones correspond to upward triangles.   
We define an order on the equilateral triangles: for each two equilateral triangles $t_1$ and $t_2$ we say that $t_1\prec t_2$ if the area of $t_1$ is less than the area of $t_2$. Since the area of $t(p,q)$ is directly related to $d(p,q)$, 
$$d(p,q)<d(r,s) \quad\text{ if and only if }\quad t(p,q)\prec t(r,s).$$

\begin{figure}[htb]
  \centering
\setlength{\tabcolsep}{0in}
  $\begin{tabular}{ccc}
\multicolumn{1}{m{.33\columnwidth}}{\centering\includegraphics[width=.23\columnwidth]{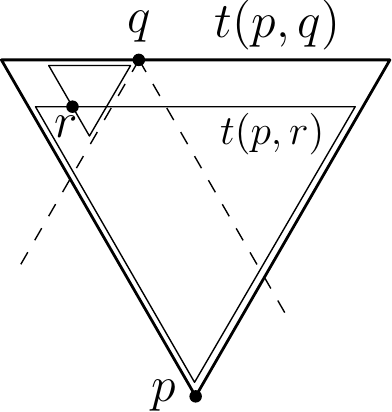}}
&\multicolumn{1}{m{.33\columnwidth}}{\centering\includegraphics[width=.23\columnwidth]{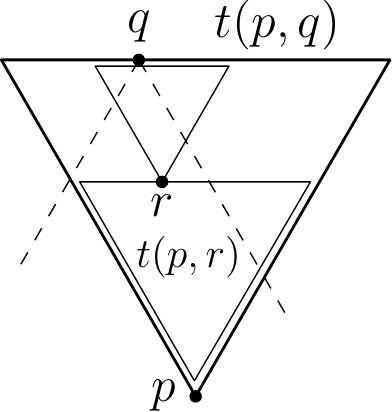}} &\multicolumn{1}{m{.33\columnwidth}}{\centering\includegraphics[width=.23\columnwidth]{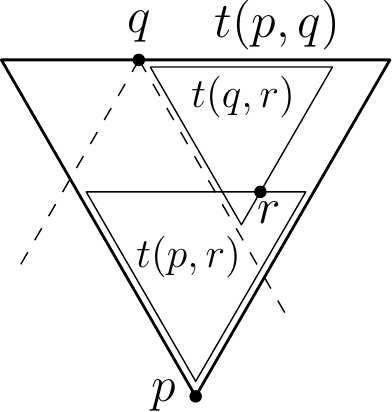}}
\\
(a) & (b)& (c)
\end{tabular}$
  \caption{Illustration of Observation~\ref{td:obs1}: the point $r$ is contained in $t(p,q)$. The triangles $t(p,r)$ and $t(q,r)$ are inside $t(p,q)$.}
\label{td:smaller-triangle-fig}
\end{figure}

For a set $\{t_1,\dots, t_m\}$ of equilateral triangles we define $\max\{t_1,\dots,\allowbreak t_m\}$ to be the triangle with the largest area. As shown in Figure~\ref{td:smaller-triangle-fig} we have the following observation:

\begin{observation}
\label{td:obs1}
 If $t(p,q)$ contains a point $r$, then $t(p,r)$ and $t(q,r)$ are contained in $t(p,q)$.
\end{observation}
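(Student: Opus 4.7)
The plan is to reduce the containment question to pointwise comparison of support values in the three outward-normal directions of $\ConvexShape$. Let $n_1, n_2, n_3$ denote the three outward unit normals of the sides of $\ConvexShape$, and for a subset $S \subset \mathbb{R}^2$ write $h_i(S) = \sup_{x\in S}\langle n_i, x\rangle$. Every downward equilateral triangle $T$ is a translate of a positive scaling of $\ConvexShape$, hence is the intersection of three half-planes $\{x : \langle n_i, x\rangle \le h_i(T)\}$ with these fixed normals. Consequently, for two downward equilateral triangles one has $T \subseteq T'$ if and only if $h_i(T) \le h_i(T')$ for $i = 1, 2, 3$.

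First I would identify the three support values of $t(p,q)$ explicitly: I claim that
\[
h_i(t(p,q)) = \max\bigl(\langle n_i, p\rangle,\, \langle n_i, q\rangle\bigr)
\]
for each $i$. The inequality ``$\ge$'' is immediate from $p, q \in t(p,q)$. For the reverse inequality, Observation~\ref{td:side-point-obs} tells us that the side of $t(p,q)$ with outward normal $n_i$ contains either $p$ or $q$, so the support value $h_i(t(p,q))$ is attained at one of these two points, giving $h_i(t(p,q)) \le \max(\langle n_i, p\rangle, \langle n_i, q\rangle)$.

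With this identification in hand, the conclusion is immediate. If $r \in t(p,q)$, then for each $i$ we have $\langle n_i, r\rangle \le h_i(t(p,q)) = \max(\langle n_i, p\rangle, \langle n_i, q\rangle)$, and therefore
\[
h_i(t(p,r)) = \max\bigl(\langle n_i, p\rangle, \langle n_i, r\rangle\bigr) \le \max\bigl(\langle n_i, p\rangle, \langle n_i, q\rangle\bigr) = h_i(t(p,q)).
\]
By the containment criterion this yields $t(p,r) \subseteq t(p,q)$, and exchanging the roles of $p$ and $q$ gives $t(q,r) \subseteq t(p,q)$.

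I do not foresee a genuine obstacle: the argument is essentially bookkeeping about support functions of homothets of a fixed triangle. The only substantive step is the identification of $h_i(t(p,q))$ with the pointwise maximum of $\langle n_i, p\rangle$ and $\langle n_i, q\rangle$, and that step is handed to us by Observation~\ref{td:side-point-obs}, which is already available.
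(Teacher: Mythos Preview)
Your argument is correct. The paper does not actually supply a proof of this observation; it simply refers the reader to Figure~\ref{td:smaller-triangle-fig} and states the claim as visually evident. Your support-function formulation is a clean formalization of exactly the intuition the figure conveys: a homothet of $\ConvexShape$ is determined by its three supporting half-planes with the fixed outward normals, and by Observation~\ref{td:side-point-obs} the smallest one through two given points has each supporting line passing through one of the two points, which is precisely your identity $h_i(t(p,q)) = \max(\langle n_i,p\rangle,\langle n_i,q\rangle)$. From there the containment is immediate. So your approach is not so much different from the paper's as it is a rigorous version of what the paper leaves to a picture; the benefit is that it makes explicit why Observation~\ref{td:side-point-obs} is the only ingredient needed.
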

As a direct consequence of Observation \ref{td:obs1}, if a point $r$ is contained in $t(p,q)$, then $\max\{t(p,r), \allowbreak t(q,r)\}\prec t(p,q)$. It is obvious that,

\begin{observation}
 \label{td:equal-triangles}
 For each two points $p,q\in P$, the area of $t(p,q)$ is equal to the area of $t'(p,q)$.
\end{observation}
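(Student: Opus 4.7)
The plan is to exploit a $180^\circ$ rotational symmetry between upward and downward equilateral triangles. The key geometric fact I would use is that rotation by $180^\circ$ about any point in the plane sends any downward equilateral triangle to an upward equilateral triangle of the same area (the set of three side-directions, namely horizontal, $60^\circ$, and $120^\circ$, is preserved as a set under a half-turn, but the vertex that was pointing down now points up), and vice versa.

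Given this, the proof reduces to a two-line minimality argument. Let $m$ be the midpoint of the segment $\overline{pq}$ and let $\rho$ denote the $180^\circ$ rotation about $m$. Then $\rho$ is an area-preserving isometry that interchanges $p$ and $q$, so $\rho(t(p,q))$ is an upward equilateral triangle whose area equals the area of $t(p,q)$ and whose boundary contains both $p$ and $q$. By the minimality in the definition of $t'(p,q)$, this gives $\mathrm{area}(t'(p,q)) \le \mathrm{area}(t(p,q))$. Applying the symmetric argument to $t'(p,q)$ (rotating by $180^\circ$ about the midpoint of $\overline{pq}$, which now sends the upward triangle to a downward one through $p$ and $q$) yields the reverse inequality, so the two areas are equal.

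There is essentially no obstacle. The only thing that must be checked carefully is the elementary claim that a half-turn interchanges the two orientations of equilateral triangles whose sides lie in the three canonical directions, which is immediate from the fact that each side direction is mapped to its negation and the set of three directions is symmetric under negation.
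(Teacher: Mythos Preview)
Your argument is correct. The paper itself gives no proof of this statement: it is labeled an Observation, introduced only by the phrase ``It is obvious that,'' so there is nothing to compare against line by line. Your $180^\circ$-rotation argument about the midpoint of $\overline{pq}$ is a clean and rigorous way to justify the observation, and the key geometric check (that a half-turn preserves the three canonical side directions while reversing the triangle's orientation) is indeed immediate. An alternative one-line justification, closer in spirit to the paper's setup, is to note from Observation~\ref{td:side-point-obs} that $t(p,q)$ has one of $p,q$ at a vertex and the other on the opposite side, so its side length is $\tfrac{2}{\sqrt{3}}$ times the distance from that vertex to the opposite side along the cone bisector, i.e., $\tfrac{2}{\sqrt{3}}\,d(p,q)$; the same computation applies to $t'(p,q)$, giving equal side lengths and hence equal areas.
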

Thus, we define $X(p,q)$ as a regular hexagon centered at $p$ which has $q$ on its boundary, and its sides are parallel to $l_p^0$, $l_p^{60}$, and $l_p^{120}$. 
\begin{observation}
\label{td:obs2}
 If $X(p,q)$ contains a point $r$, then $t(p,r)\prec t(p,q)$.
\end{observation}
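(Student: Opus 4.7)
The plan is to realize $X(p,q)$ as the ``unit ball'' at $p$ of the triangular-distance function $r \mapsto \mathrm{area}(t(p,r))$, by decomposing the hexagon into six equilateral triangles whose common area equals $\mathrm{area}(t(p,q))$. First I would draw the three diagonals of $X(p,q)$ through $p$; since they lie along $l_p^0$, $l_p^{60}$, $l_p^{120}$ and the sides of $X(p,q)$ are parallel to these same three lines, the diagonals partition the hexagon into six equilateral triangles meeting at $p$, with each hexagon side serving as the side opposite $p$ in one of them. Going counter-clockwise around $p$, the orientations alternate: three are downward (call them $D_1, D_2, D_3$) and three are upward ($U_1, U_2, U_3$), and all six are congruent.

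Next I would argue that each of these six small triangles has area equal to $\mathrm{area}(t(p,q))$. By general position, $q$ lies in the relative interior of exactly one side of $X(p,q)$, hence on the outer side of exactly one small triangle $T_q$. If $T_q=D_i$, then $T_q$ is a downward equilateral triangle with $p$ at a vertex and $q$ on the opposite side, so by the definition of $t(p,q)$ we have $T_q=t(p,q)$. If instead $T_q=U_j$, the same reasoning gives $T_q=t'(p,q)$, whose area equals $\mathrm{area}(t(p,q))$ by Observation~\ref{td:equal-triangles}. Either way, congruence forces each of $D_1, D_2, D_3, U_1, U_2, U_3$ to have area $\mathrm{area}(t(p,q))$.

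With the decomposition in hand, take any $r$ in the interior of $X(p,q)$ and suppose $r$ lies in the closure of some $D_i$. Scaling $D_i$ homothetically towards $p$ produces a one-parameter family of downward equilateral triangles, each with $p$ at the scaled vertex; since $r$ is strictly inside $X(p,q)$, the opposite side of the scaled triangle sweeps past $r$ at some scale factor strictly less than $1$, producing a downward triangle $T$ with $p$ and $r$ both on its boundary and $\mathrm{area}(T) < \mathrm{area}(D_i)=\mathrm{area}(t(p,q))$. By definition $t(p,r) \preceq T$, so $t(p,r) \prec t(p,q)$. If instead $r$ lies in some $U_j$, the analogous shrinking yields an upward triangle $T'$ certifying $t'(p,r) \prec t'(p,q)$, and Observation~\ref{td:equal-triangles} transfers the strict inequality back to $t(p,r) \prec t(p,q)$.

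The main obstacle is the area-matching claim, namely that every one of the six pieces of the decomposition is congruent to $t(p,q)$. This rests on the parallelism of the hexagon's sides with $l_p^0, l_p^{60}, l_p^{120}$ and on invoking Observation~\ref{td:equal-triangles} to bridge the downward and upward cases; once this is pinned down, the homothety-and-containment step is routine.
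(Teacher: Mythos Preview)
The paper states this as an Observation without proof, treating it as immediate from the picture of $X(p,q)$ as the triangular-distance ``ball'' around $p$. Your argument is correct and makes that intuition explicit: the three main diagonals of a regular hexagon whose sides are parallel to $l_p^0,l_p^{60},l_p^{120}$ do lie along those same three directions, the six resulting congruent equilateral triangles alternate between downward and upward orientation, and the one carrying $q$ on its outer side coincides with $t(p,q)$ (even cone) or $t'(p,q)$ (odd cone) by the uniqueness implicit in Observation~\ref{td:side-point-obs}. Observation~\ref{td:equal-triangles} then equalizes the six areas, and the homothety step finishes the job.

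One small imprecision: in the shrinking step you write that ``the opposite side of the scaled triangle sweeps past $r$'', but if $r$ happens to lie on one of the three diagonals (a radial edge shared by some $D_i$ and $U_j$), the opposite side never reaches it; $r$ stays on a side adjacent to $p$ throughout the scaling and eventually sits at a vertex of the shrunk triangle. The conclusion is unaffected, since $r$ is still on the boundary of a strictly smaller downward triangle through $p$. In the paper's applications this edge case cannot occur anyway, because general position forbids any point of $P$ from lying on $l_p^0$, $l_p^{60}$, or $l_p^{120}$.
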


We construct \kTD{k}{} as follows. For each point $p\in P$, imagine the six cones having their apex at $p$, as described earlier. Then connect $p$ to its $(k+1)$ nearest neighbors in each even cone around $p$. For each edge $(p,q)$ in \kTD{k}{} we define its {\em weight}, $w(p,q)$, to be equal to the area of $t(p,q)$. The resulting graphs is \kTD{k}{}, which has $O(kn)$ edges. The \kTD{k}{} can be constructed in $O(n\log n+kn\log \log n)$-time, using the algorithm introduced by Lukovszki~\cite{Lukovszki1999} for computing fault tolerant spanners.

For a graph $G=(V,E)$ and $K\subseteq V$, let $G-K$ be the subgraph obtained from $G$ by removing the vertices in $K$, and let $o(G-K)$ be the number of odd components in $G-K$. The following theorem by Tutte~\cite{Tutte1947} gives a characterization of the graphs which have a perfect matching: 

\begin{theorem}[Tutte~\cite{Tutte1947}] 
\label{td:Tutte} 
$G$ has a perfect matching if and only if $o(G-K)\le |K|$ for all $K\subseteq V$.
\end{theorem}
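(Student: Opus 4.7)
This is the classical Tutte theorem, so the plan is to give the standard two-direction argument with emphasis on where the work really lies. The necessity direction is essentially a counting observation, while sufficiency is the substantive part and is handled by an edge-maximality argument.

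\textbf{Necessity.} First I would handle the easy direction. Assume $G$ has a perfect matching $M$ and fix any $K \subseteq V$. Every odd component $C$ of $G-K$ has an odd number of vertices, so $M$ cannot match all vertices of $C$ inside $C$; at least one vertex of $C$ must be matched by $M$ to a vertex of $K$. Distinct odd components contribute distinct neighbors in $K$ (since $M$ is a matching), so $o(G-K) \le |K|$.

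\textbf{Sufficiency.} For the converse, suppose $G$ satisfies Tutte's condition but has no perfect matching; we derive a contradiction. Note that $n = |V|$ must be even, since taking $K = \emptyset$ forces $o(G) \le 0$, so every component has even order. The main trick is to pass to an edge-maximal counterexample: add edges to $G$ (keeping $V$ fixed) one at a time as long as the resulting graph still has no perfect matching. The condition $o(H-K) \le |K|$ is preserved under adding edges (adding an edge can only merge components, never increase $o(H-K)$), so we obtain an edge-maximal supergraph $G^*$ on $V$ that satisfies Tutte's condition but has no perfect matching. It suffices to derive a contradiction about $G^*$.

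\textbf{Structure via the set of universal vertices.} Let $U$ be the set of vertices of $G^*$ adjacent to every other vertex. I would split into two cases depending on whether $G^* - U$ is a disjoint union of complete graphs. In the ``complete components'' case, I would directly exhibit a perfect matching: apply Tutte's condition with $K = U$ to conclude $o(G^* - U) \le |U|$; then match one vertex of each odd component to a distinct vertex of $U$, pair the remaining vertices inside each (now even) component using completeness, and pair any leftover vertices of $U$ among themselves using the fact that $|U|$ must have the correct parity (forced by $n$ being even together with the component counts). This contradicts $G^*$ having no perfect matching.

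\textbf{The hard case.} The main obstacle is the case where $G^* - U$ has a component $C$ that is not complete. Then there exist vertices $a,b,c \in C$ with $ab, bc \in E(G^*)$ but $ac \notin E(G^*)$, and since $b \notin U$ there exists $d \in V$ with $bd \notin E(G^*)$. By edge-maximality, $G^* + ac$ and $G^* + bd$ each have a perfect matching, say $M_1$ and $M_2$. The edges $ac \in M_1$ and $bd \in M_2$ are the only ``new'' edges used, so considering the symmetric difference $M_1 \triangle M_2$ yields a subgraph whose components are even cycles alternating between $M_1$ and $M_2$, at least one of which contains both $ac$ and $bd$. A careful case analysis of this alternating cycle, using the edge $ab$ or $bc$ to reroute, produces a perfect matching of $G^*$ itself, contradicting the choice of $G^*$. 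This alternating-cycle surgery is the crux of the argument and the place where one must be most careful about the two subcases (depending on the relative positions of $ac$ and $bd$ on the alternating cycle).
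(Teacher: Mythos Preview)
Your argument is the standard Lov\'asz-style proof of Tutte's theorem and is correct as outlined (the alternating-cycle surgery in the hard case does go through with the usual two subcases). However, the paper does not prove this statement at all: it is simply quoted from Tutte~\cite{Tutte1947} as a classical result and then invoked as a black box (e.g., in the proofs of Theorems~\ref{gg:matching-2GG} and~\ref{td:mt-thr}). So there is nothing to compare against---your proposal supplies a proof where the paper deliberately gives none.
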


Berge~\cite{Berge1958} extended Tutte's theorem to a formula (known as Tutte-Berge formula) for the maximum size of a matching in a graph. In a graph $G$, the {\em deficiency}, $\text{def}_G(K)$, is $o(G-K)-|K|$. Let $\text{def}(G)=\max_{K\subseteq V}{\text{def}_G(K)}$.

\begin{theorem}[Tutte-Berge formula; Berge~\cite{Berge1958}] 
\label{td:Berge} 
The size of a maximum matching in $G$ is $$\frac{1}{2}(n-\mathrm{def}(G)).$$
\end{theorem}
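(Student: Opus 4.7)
The plan is to deduce the Tutte--Berge formula from Tutte's theorem (Theorem~\ref{td:Tutte}) by establishing matching upper and lower bounds on $|M^*|$, where $M^*$ is a maximum matching in $G$.

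For the upper bound $|M^*| \le \frac{1}{2}(n - \mathrm{def}(G))$, I would fix any $K \subseteq V$ and use a simple counting argument: each odd component of $G-K$ contains at least one vertex that is either unmatched by $M^*$ or matched across into $K$, and since the vertices of $K$ can absorb at most $|K|$ such cross-edges, at least $o(G-K) - |K| = \mathrm{def}_G(K)$ vertices of $G$ remain unmatched. Hence $2|M^*| \le n - \mathrm{def}_G(K)$, and taking the supremum over $K$ gives the desired upper bound.

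For the lower bound, the strategy is to inflate $G$ just enough to force Tutte's condition to hold, then discard the inflation. Set $d = \mathrm{def}(G)$. A parity observation is that $\mathrm{def}_G(K) \equiv n \pmod 2$ for every $K$, because even components of $G-K$ contribute evenly to the vertex count; in particular $d \equiv n \pmod 2$, so $n+d$ is even. Form $G'$ by adding $d$ new vertices, each joined to every vertex of $G$. I would then verify Tutte's condition on $G'$ by cases on $K' \subseteq V(G')$: if $K'$ omits some added vertex, every remaining vertex of $G'-K'$ is joined through that vertex, so $o(G'-K') \le 1 \le |K'|$ (with the $K' = \emptyset$ case relying on $G'$ being connected of even order); if $K'$ contains all $d$ added vertices, write $K = K' \cap V$ and note $G'-K' = G-K$, giving $o(G'-K') = o(G-K) \le |K| + d = |K'|$ by the definition of $d$. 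By Theorem~\ref{td:Tutte}, $G'$ admits a perfect matching of size $(n+d)/2$; deleting the at most $d$ edges incident to the added vertices yields a matching of $G$ of size at least $(n+d)/2 - d = (n-d)/2$.

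The main obstacle I foresee is a clean treatment of the case analysis verifying Tutte's condition on $G'$, especially the boundary cases ($K'$ empty, or $K'$ swallowing all of $V(G)$ while leaving some added vertices exposed) and the parity check that guarantees $n+d$ is even so that perfection of the auxiliary matching is even meaningful. Once those are handled, the two bounds coincide and give $|M^*| = \tfrac{1}{2}(n - \mathrm{def}(G))$.
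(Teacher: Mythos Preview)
Your argument is correct and is essentially the standard textbook derivation of the Tutte--Berge formula from Tutte's theorem: the counting upper bound, followed by the ``add $d$ universal vertices'' trick to force Tutte's condition and then strip the auxiliary edges. The boundary case you flag (where $K'$ contains all of $V(G)$ but misses some added vertices, which are then isolated if you only join them to $V(G)$ and not to each other) is indeed the one spot requiring care; it goes through because $\mathrm{def}(G) \le n$, or alternatively you can simply make the $d$ added vertices a clique so that $G'-K'$ is connected whenever any added vertex survives.

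As for comparison with the paper: there is nothing to compare. The paper does not prove Theorem~\ref{td:Berge}; it is quoted as a classical result of Berge, stated in the preliminaries with a citation and then used as a black box (in the proofs of Theorems~\ref{gg:matching-1GG} and~\ref{td:matching-1TD}, for instance). So your proposal supplies a proof where the paper supplies none.
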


For an edge-weighted graph $G$ we define the {\em weight sequence} of $G$, $\WS{G}$, as the sequence containing the weights of the edges of $G$ in non-increasing order. For two
graphs $G_1$ and $G_2$ we say that $\WS{G_1}\prec \WS{G_2}$ if $\WS{G_1}$ is lexicographically
smaller than $\WS{G_2}$. A graph $G_1$ is said to be less than a graph $G_2$ if $\WS{G_1}\prec \WS{G_2}$.

\section{Connectivity}
\label{td:connectivity}
In this section we consider the connectivity of higher-order triangular-distance Delaunay graphs.
\addtocontents{toc}{\protect\setcounter{tocdepth}{1}}
\subsection{$(k+1)$-connectivity}
\addtocontents{toc}{\protect\setcounter{tocdepth}{2}}
\label{td:connectivity-k-plus-1}
For a set $P$ of points in the plane, the TD-Delaunay graph, i.e., \kTD{0}{}, is not necessarily a triangulation \cite{Chew1989}, but it is connected and internally triangulated \cite{Babu2014}, i.e., all internal faces are triangles. As shown in Figure~\ref{td:TD}(a), \kTD{0}{} may not be biconnected. As a warm up exercise we show that every \kTD{k}{} is $(k+1)$-connected. 

\begin{theorem}
\label{td:k-connectivity-thr}
For every point set $P$ in general position in the plane, \kTD{k}{} is $(k+1)$-connected. In addition, for every $k$, there exists a point set $P$ such that \kTD{k}{} is not $(k+2)$-connected.
\end{theorem}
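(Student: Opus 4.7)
The plan is to establish the two claims of the theorem separately: first $(k+1)$-connectivity by an inductive path-finding argument on $t(u,v)$, and then tightness by exhibiting a vertex of degree exactly $k+1$.

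For the first claim, I would fix an arbitrary vertex subset $S\subseteq P$ with $|S|\le k$ and show that every pair $u,v\in P\setminus S$ is joined by a path in $\kTD{k}{}-S$. The argument proceeds by strong induction on the area of $t(u,v)$ (equivalently, on its rank among the finitely many distinct triangle areas arising from pairs in $P$). In the base case, the interior of $t(u,v)$ meets $P\setminus(S\cup\{u,v\})$ in the empty set, so $t(u,v)$ contains at most $|S|\le k$ points of $P\setminus\{u,v\}$ in its interior; hence $(u,v)$ is an edge of $\kTD{k}{}$, and since neither endpoint lies in $S$ the edge survives the deletion. For the inductive step, pick any $w\in (P\setminus(S\cup\{u,v\}))$ lying in the interior of $t(u,v)$. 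By Observation~\ref{td:obs1} and the strict inequality noted immediately after it, both $t(u,w)\prec t(u,v)$ and $t(v,w)\prec t(u,v)$ hold. The induction hypothesis then furnishes a $u$-$w$ path and a $w$-$v$ path in $\kTD{k}{}-S$, whose concatenation is the required $u$-$v$ path.

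For the second claim, I would construct, for each $k$, a point set $P$ in general position containing a vertex whose degree in $\kTD{k}{}$ equals exactly $k+1$. Place one point $p$ and a cluster of $n-1\ge k+2$ further points inside the single even cone $C_p^{4}$ of $p$ (for concreteness, along a slightly perturbed horizontal segment sitting well below $p$, so that no three are collinear and none of the forbidden angles of Section~\ref{td:preliminaries} arise). The cones $C_p^{2}$ and $C_p^{6}$ are then empty of points of $P\setminus\{p\}$, so by the construction of $\kTD{k}{}$ recalled in Section~\ref{td:preliminaries}, $p$'s neighbors are precisely its $k+1$ nearest points in $C_p^{4}$. Deleting this set of $k+1$ vertices isolates $p$, so $\kTD{k}{}$ becomes disconnected after the removal of only $k+1$ vertices, proving it is not $(k+2)$-connected.

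The only real obstacle is the inductive argument in the first part; once the strict area decrease from Observation~\ref{td:obs1} is invoked, everything else is bookkeeping. The tightness construction is straightforward and mainly requires verifying the general-position hypothesis, which is automatic after a small generic perturbation.
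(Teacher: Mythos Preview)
Your argument is correct. For the first claim, your explicit strong induction on the area of $t(u,v)$ is the direct-proof counterpart of the paper's minimal-counterexample argument: the paper assumes the deletion of $K$ produces at least two components, takes the pair $(a,b)$ from distinct components minimizing $t(a,b)$, and shows that $t(a,b)$ can contain no point of $P\setminus K$ (else a smaller inter-component triangle would exist), forcing $(a,b)$ to be an edge of \kTD{k}{}. The two arguments are the same idea viewed from opposite ends.

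For the second claim you take a genuinely different route. The paper builds a \emph{cut set}: it places a set $A$, then $K$ with $|K|=k+1$ inside $\bigcap_{p\in A}C_p^4$, and then $B$ inside $\bigcap_{q\in K}C_q^4$; every triangle $t(a,b)$ with $a\in A$, $b\in B$ then contains all of $K$, so removing $K$ separates $A$ from $B$. You instead build a \emph{low-degree vertex}: placing all of $P\setminus\{p\}$ in the single even cone $C_p^4$ forces $p$ to lie in an odd cone of every other point, so the only edges incident to $p$ are the $k+1$ nearest-neighbor edges that $p$ itself contributes in $C_p^4$. Your construction is simpler and leans on the $\Theta$-graph description of \kTD{k}{} given in Section~\ref{td:preliminaries}; the paper's construction stays closer to the triangle definition. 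Both establish non-$(k+2)$-connectivity equally well.
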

\begin{proof}
We prove the first part of this theorem by contradiction. Let $K$ be the set of (at most) $k$ vertices removed from \kTD{k}{}, and let $\mathcal{C}=\{C_1, C_2, \dots, C_m\}$, where $m>1$, be the resulting maximal connected components. Let $T$ be the set of all triangles defined by any pair of points belonging to different components, i.e., $T=\{t(a,b): a\in C_i, b\in C_j, i\neq j\}$. Consider the smallest triangle $\Tm\in T$. Assume that $\Tm$ is defined by two points $a$ and $b$, i.e., $\Tm=t(a,b)$, where $a\in C_i$, $b\in C_j$, and $i\neq j$.

{\em Claim 1}: $\Tm$ does not contain any point of $P\setminus K$ in its interior.
By contradiction, suppose that $\Tm$ contains a point $c\in P\setminus K$ in its interior. Three cases arise: (i) $c\in C_i$, (ii) $c\in C_j$, (iii) $c\in C_l$, where $l\neq i$ and $l\neq j$. In case (i) the triangle $t(c,b)$ between $C_i$ and $C_j$ is contained in $t(a,b)$. In case (ii) the triangle $t(a,c)$ between $C_i$ and $C_j$ is contained in $t(a,b)$. In case (iii) both triangles $t(a,c)$ and  $t(c,b)$ are contained in $t(a,b)$. All cases contradict the minimality of $t(a,b)=\Tm$. Thus, $\Tm$ contains no point of $P\setminus K$ in its interior, proving Claim 1.

By Claim 1, $\Tm=t(a,b)$ may only contain points of $K$. Since $|K|\le k$, there must be an edge between $a$ and $b$ in \kTD{k}{}. This contradicts that $a$ and $b$ belong to different components $C_i$ and $C_j$ in $\mathcal{C}$. Therefore, \kTD{k}{} is $(k+1)$-connected.

We present a constructive proof for the second part of theorem. Let $P=A\cup B\cup K$, where $|A|,|B|\ge 1$ and $|K|=k+1$. Place the points of $A$ in the plane. Let $C_A^{4}=\bigcap_{p\in A}{C_p^4}$. Place the points of $K$ in $C_A^4$. Let $C_K^4=\bigcap_{p\in K}{C_p^4}$. Place the points of $B$ in $C_K^4$. Consider any pair $(a,b)$ of points where $a\in A$ and $b\in B$. It is obvious that any path between $a$ and $b$ in \kTD{k}{} goes through the vertices in $K$. Thus by removing the vertices in $K$, $a$ and $b$ become disconnected. Therefore, \kTD{k}{} of $P$ is not $(k+2)$-connected. 
\end{proof}
\addtocontents{toc}{\protect\setcounter{tocdepth}{1}}
\subsection{Bottleneck Biconnected Spanning Graph}
\label{td:biconnected-section}
\addtocontents{toc}{\protect\setcounter{tocdepth}{2}}
As shown in Figure~\ref{td:TD}(a), \kTD{0}{} may not be biconnected. By Theorem~\ref{td:k-connectivity-thr}, \kTD{1}{} is biconnected. In this section we show that a bottleneck biconnected spanning graph of $P$ is contained in \kTD{1}{}. 

\begin{theorem}
\label{td:biconnectivity-thr}
 For every point set $P$ in general position in the plane, \kTD{1}{} contains a bottleneck biconnected spanning graph of $P$.
\end{theorem}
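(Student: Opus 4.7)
My plan is to adapt the weight-sequence minimization technique introduced in Section~\ref{td:preliminaries} to the biconnected setting. Let $G^*$ be a bottleneck biconnected spanning subgraph of $K_n(P)$ that, among all bottleneck biconnected spanning subgraphs, minimizes $\WS{G^*}$ lexicographically (where the weight of an edge $(p,q)$ is the area of $t(p,q)$). It then suffices to show that every edge $(a,b) \in G^*$ has at most one point of $P\setminus\{a,b\}$ in the interior of $t(a,b)$; this is exactly the condition for $(a,b)$ to belong to \kTD{1}{}, so $G^* \subseteq \kTD{1}{}$, which proves the theorem.

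I would argue by contradiction. Suppose some edge $(a,b) \in G^*$ has two distinct points $c_1, c_2 \in P\setminus\{a,b\}$ in the interior of $t(a,b)$. Consider the modified graph
\[
G' = \bigl(G^* \setminus \{(a,b)\}\bigr) \cup \{(a,c_1),(b,c_1),(a,c_2),(b,c_2)\}.
\]
By Observation~\ref{td:obs1}, each of $t(a,c_1), t(b,c_1), t(a,c_2), t(b,c_2)$ is strictly contained in $t(a,b)$, so each (potentially) inserted edge has weight strictly less than $w(a,b)$. Consequently $G'$ deletes one edge of weight $w(a,b)$ and inserts at most four strictly lighter ones, which forces $\WS{G'} \prec \WS{G^*}$. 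This will contradict the minimality of $G^*$ as soon as $G'$ is shown to be biconnected.

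The main obstacle, and the heart of the proof, will be verifying biconnectivity of $G'$, which I would do by checking that $G' - v$ is connected for every $v \in V(G^*)$, splitting cases by the relation of $v$ to $\{a,b,c_1,c_2\}$. For $v \notin \{a,b,c_1,c_2\}$, the graph $G^* - v$ is connected by biconnectivity of $G^*$, and $G' - v$ differs from $G^* - v$ only by the removal of $(a,b)$ and the insertion of the two $a$--$b$ paths $a\text{--}c_1\text{--}b$ and $a\text{--}c_2\text{--}b$, so $G' - v$ remains connected. For $v \in \{a,b\}$, the graph $G' - v$ is a supergraph of the connected graph $G^* - v$, hence connected. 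The only delicate case is $v = c_1$ (and symmetrically $v = c_2$): here
\[
G' - c_1 = \bigl((G^* - c_1) \setminus \{(a,b)\}\bigr) \cup \{(a,c_2),(b,c_2)\},
\]
which is connected because $G^* - c_1$ is connected by biconnectivity of $G^*$, and the deleted edge $(a,b)$ is restored by the detour $a\text{--}c_2\text{--}b$ through the still-present vertex $c_2$. This is precisely why two interior points are needed rather than one: a lone detour vertex could itself be the one deleted, leaving no replacement path. With biconnectivity of $G'$ established, the contradiction closes and the theorem follows.
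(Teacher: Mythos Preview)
Your proof is correct and follows the same overall strategy as the paper's: pick a weight-sequence-minimal biconnected spanning subgraph $G^*$, assume an edge $(a,b)$ has two interior points in $t(a,b)$, replace $(a,b)$ by the four detour edges through those points, and derive a contradiction from $\WS{G'} \prec \WS{G^*}$ once biconnectivity of $G'$ is verified.

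Where you diverge from the paper is in the biconnectivity check, and your route is cleaner. The paper uses the ``simple cycle between every pair of vertices'' characterization: it fixes a pair $p,q$, takes a cycle $C^*$ through $(a,b)$ in $G^*$, and then does a four-way case split on whether none, one, or both of $x,y$ lie on $C^*$ (with a further four sub-cases in the last situation) to reroute $C^*$ into a cycle in $G$. You instead use the ``$G-v$ connected for every $v$'' characterization, which reduces to three easy cases depending on whether $v$ lies outside $\{a,b,c_1,c_2\}$, in $\{a,b\}$, or in $\{c_1,c_2\}$; in each case the deleted edge $(a,b)$ is either irrelevant or replaced by a surviving two-edge detour. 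Your explicit remark that two interior points are needed precisely so one detour vertex survives deletion of the other is exactly the right insight, and it explains transparently why the bound is \kTD{1}{} rather than \kTD{0}{}.

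One small point: the paper minimizes $\WS{\cdot}$ over \emph{all} biconnected spanning subgraphs rather than only over bottleneck ones; this is slightly cleaner, since the lexicographic minimizer is automatically a bottleneck subgraph (the first entry of $\WS{G}$ is the bottleneck), and then $\WS{G'}\prec\WS{G^*}$ gives an immediate contradiction. Your formulation still works, but you are implicitly using that $G'$ has bottleneck at most that of $G^*$, hence $G'$ is itself a bottleneck biconnected subgraph and thus a valid competitor; it would be worth saying this in one line.
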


\begin{proof}                                                                                       
Let $\mathcal{G}$ be the set of all biconnected spanning graphs with vertex set $P$. We define a total order on the elements of $\mathcal{G}$ by their weight sequence. If two elements have the same weight sequence, we break the ties arbitrarily to get a total order.
Let $G^* = (P, E)$ be a graph in $\mathcal{G}$ with minimal weight sequence. Clearly, $G^*$ is a bottleneck biconnected spanning graph of $P$. We will show that all edges of $G^*$ are in \kTD{1}{}. By contradiction suppose that some edges in $E$ do not belong to \kTD{1}{}, and let $e = (a, b)$ be the longest one (by the area of the triangle $t(a,b)$). If the graph $G^*-\{e\}$ is biconnected, then by removing $e$, we obtain a biconnected spanning graph $G$ with $\WS{G}\prec \WS{G^*}$; this contradicts the minimality of $G^*$. Thus, there is a pair $\{p,q\}$ of points such that any cycle between $p$ and $q$ in $G^*$ goes through $e$. Since $(a,b)\notin$~\kTD{1}{}, $t(a,b)$ contains at least two points of $P$, say $x$ and $y$. Let $G$ be the graph obtained from $G^*$ by removing the edge $(a,b)$ and adding the edges $(a,x)$, $(b,x)$, $(a,y)$, $(b,y)$. We show that in $G$ there is a cycle $C$ between $p$ and $q$ which does not go through $e$. Consider a cycle $C^*$ in $G^*$ between two points $p$ and $q$ (which goes through $e$). If none of $x$ and $y$ belong to $C^*$, then $C=(C^*-\{(a,b)\})\allowbreak\cup\allowbreak\{(a,x),(b,x)\}$ is a cycle in $G$ between $p$ and $q$. If one of $x$ or $y$, say $x$, belongs to $C^*$, then $C=(C^*-\{(a,b)\})\allowbreak\cup\allowbreak\{(a,y),(b,y)\}$ is a cycle in $G$ between $p$ and $q$. If both $x$ and $y$ belong to $C^*$, w.l.o.g. assume that $x$ is between $b$ and $y$ in the path $C^*- \{(a,b)\}$. Consider the partition of $C^*$ into four parts: (a) edge $(a,b)$, (b) path $\delta_{bx}$ between $b$ and $x$, (c) path $\delta_{xy}$ between $x$ and $y$, and (d) path $\delta_{ya}$ between $y$ and $a$. 
There are four cases:
\begin{enumerate}
 \item None of $p$ and $q$ are on $\delta_{xy}$. Let $C= \delta_{bx}\cup \delta_{ya} \cup \{(a,x),(b,y)\}$.
 \item Both $p$ and $q$ are on $\delta_{xy}$. Let $C=\delta_{xy}\cup \{(a,x),(a,y)\}$.
 \item One of $p,q$ is on $\delta_{xy}$ while the other is on $\delta_{bx}$. Let $C=\delta_{bx}\cup\delta_{xy}\cup\{(b,y)\}$.
 \item One of $p,q$ is on $\delta_{xy}$ while the other is on $\delta_{ya}$. Let $C=\delta_{xy}\cup\allowbreak\delta_{ya}\cup\allowbreak\{(a,x)\}$.
\end{enumerate}

In all cases, $C$ is a cycle in $G$ between $p$ and $q$. Thus, between any pair of points in $G$ there exists a cycle, and hence $G$ is biconnected. Since $x$ and $y$ are inside $t(a,b)$, by Observation~\ref{td:obs1}, $\max\allowbreak\{t(a,x),\allowbreak t(a, y),\allowbreak t(b,x),\allowbreak t(b,y)\}\prec \allowbreak t(a,b)$. Therefore, $\WS{G}\prec \WS{G^*}$; this contradicts the minimality of $G^*$.  
\end{proof}

\section{Hamiltonicity}
\label{td:Hamiltonicity}

In this section we show that \kTD{7}{} contains a bottleneck Hamiltonian cycle. In addition, we will show that for some point sets, \kTD{5}{} does not contain any bottleneck Hamiltonian cycle.

\begin{theorem}
\label{td:hamiltonicity-thr}
For every point set $P$ in general position in the plane, \kTD{7}{} contains a bottleneck Hamiltonian cycle.
\end{theorem}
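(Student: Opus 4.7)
I will follow the standard exchange/packing scheme used, e.g., in the analogous proof of Theorem~\ref{9-GG-thr} for Gabriel graphs. Let $\mathcal{H}$ denote the set of all Hamiltonian cycles of $K_n(P)$, where every edge $(p,q)$ carries weight equal to the area of $t(p,q)$. Order $\mathcal{H}$ lexicographically by weight sequence (breaking ties arbitrarily), and let $H^{\ast}$ be the minimum element. Then $H^{\ast}$ is a bottleneck Hamiltonian cycle of $P$, and it suffices to show that every edge of $H^{\ast}$ lies in \kTD{7}{}. Assume, for contradiction, that some $(a,b)\in H^{\ast}$ is not in \kTD{7}{}; then the interior of $t(a,b)$ contains a set $R=\{r_1,\dots,r_k\}$ of at least $k\ge 8$ points of $P\setminus\{a,b\}$. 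Deleting $(a,b)$ from $H^{\ast}$ yields a Hamiltonian path $\pi$ from $a$ to $b$; for each $r_i$, let $u_i$ (resp.\ $v_i$) be the neighbour of $r_i$ along $\pi$ on the side of $a$ (resp.\ of $b$).

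The first step is a single-edge exchange. Removing $(a,b)$ and $(r_i,v_i)$ from $H^{\ast}$ splits it into two paths, which can be reconnected in a unique non-trivial way into a new Hamiltonian cycle $H'\in\mathcal{H}$ by inserting the edges $(a,v_i)$ and $(r_i,b)$. Since $r_i$ lies inside $t(a,b)$, Observation~\ref{td:obs1} gives $t(r_i,b)\prec t(a,b)$, so the minimality of $H^{\ast}$ forces $t(a,v_i)\succeq t(a,b)$, which by Observation~\ref{td:obs2} means that $v_i\notin X(a,b)$. The symmetric swap removing $(r_i,u_i)$ and inserting $(r_i,a)$, $(b,u_i)$ yields $u_i\notin X(b,a)$. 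Thus each $r_i$ forces one neighbour out of a hexagon centred at $a$ and the other out of a hexagon centred at $b$.

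The second step is a pairwise exchange. For any two indices $i\ne j$, delete $(a,b)$, $(r_i,v_i)$ and $(r_j,u_j)$ from $H^{\ast}$; this decomposes the cycle into three arcs, which can be glued back into a Hamiltonian cycle $H''\in\mathcal{H}$ by inserting three edges. Choosing the reconstruction so that two of the new edges are of the form $(r_i,\cdot)$ or $(r_j,\cdot)$ with $\cdot\in\{a,b\}$, those two edges are strictly lighter than $(a,b)$ by Observation~\ref{td:obs1}, so minimality forces the third, of the form $(v_i,u_j)$, to satisfy $t(v_i,u_j)\succeq t(a,b)$. Running this over all pairs, the multiset $S=\{a,b,u_1,v_1,\dots,u_k,v_k\}$ (after identifying coincident neighbours) consists of points that lie in a bounded dilate of $t(a,b)$ and are pairwise at triangular distance at least that of $(a,b)$.

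Finally, I would rescale so that $t(a,b)$ has unit area and perform a direct hexagonal packing estimate: the hexagon $X(a,b)\cup X(b,a)$ and a controlled neighbourhood of it can contain only a bounded number of points that are pairwise at triangular distance $\ge 1$ and that also lie outside $X(a,b)$ and $X(b,a)$. The target is to show that at most $2+7=9$ such points exist, which, together with the existence of the $2k+2$ constrained points above, forces $k\le 7$ and contradicts $k\ge 8$. I expect the main obstacle to be this last packing bound: making it tight enough to yield exactly the constant $7$ (rather than something larger, in which case the theorem would only give a weaker variant such as \kTD{8}{} or \kTD{9}{}), and handling the degenerate cases where a single vertex of $\pi$ plays several of the roles $u_i,v_i,u_j,v_j$ simultaneously---namely when some $r_i,r_j$ are consecutive on $\pi$, or when $r_i$ is a neighbour of $a$ or $b$---so that the swaps producing $H'$ and $H''$ remain valid Hamiltonian cycles and the weight comparisons still go through.
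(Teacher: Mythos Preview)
Your framework is correct and matches the paper's: take a lexicographically minimum Hamiltonian cycle $H^\ast$ and use edge exchanges to constrain the neighbours of the interior points. However, the implementation you sketch diverges from the paper's in two ways that prevent you from reaching the constant $7$.

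First, the paper does \emph{not} use both neighbours of each interior point. It uses only the \emph{successor} $r_j$ of each interior point $u_j$ (in one fixed direction along the path), and all constraints are one-sided, relative to the single endpoint $a_{i+1}$. Your symmetric set $S=\{a,b,u_1,v_1,\dots,u_k,v_k\}$ leads to arithmetic that cannot work: even granting your packing bound of $9$, the inequality $2k+2\le 9$ gives $k\le 3$, not $k\le 7$; and after identifying coincident neighbours the size of $S$ no longer controls $k$ linearly. Moreover, the exchange you describe for the pairwise step only yields a bound on $t(v_i,u_j)$ when $r_j$ precedes $r_i$ on $\pi$; it does not give all pairwise separations, and the ``bounded dilate'' containment is never established (indeed the successors need not lie in any bounded region).

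Second, and more importantly, you record only the weak consequence $t(a,v_i)\succeq t(a,b)$ of the single swap. The paper's Claim~2 is stronger: $t(r_j,a_{i+1})\succeq\max\{t(a_i,a_{i+1}),\,t(u_j,r_j)\}$, and the second term $t(u_j,r_j)$ is essential. It is precisely this that excludes the three ``opposite'' cones $A_1,A_2,A_3$ around $a_{i+1}$ (if $r_j$ were there, the upward triangle $t'(u_j,r_j)$ would contain $a_{i+1}$, forcing $t(r_j,a_{i+1})\prec t(u_j,r_j)$). With those three cones and the hexagon $X(a_{i+1},a_i)$ removed, the remainder of the plane is cut into seven explicit regions $D_1,D_2,D_3,D_4,C_1,C_2,B$, and the pairwise inequality (Claim~3, again in the strong form involving $t(u_j,r_j)$ and $t(u_k,r_k)$) shows that each region can hold at most one $r_j$. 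No packing-in-a-bounded-domain argument is used; the regions are unbounded, and the bound of one point per region comes from the specific triangular geometry. This region decomposition is the idea your plan is missing.
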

\begin{proof}

Let $\mathcal{H}$ be the set of all Hamiltonian cycles through the points of $P$. Define a total
order on the elements of $\mathcal{H}$ by their weight sequence. If two elements have exactly the same weight sequence, break ties arbitrarily to get a total order. 
Let $H^* = a_0, a_1,\dots, a_{n-1},a_0$ be a cycle in $\mathcal{H}$ with minimal weight sequence. It is obvious that $H^*$ is a bottleneck Hamiltonian cycle of $P$. We will show that all the edges of $H^*$ are in \kTD{7}{}. Consider any edge $e = (a_i, a_{i+1})$ in $H^*$ and let $t(a_i,a_{i+1})$ be the triangle corresponding to $e$ (all the index manipulations are modulo $n$).

\vspace{5pt}

{\em Claim 1}: None of the edges of $H^*$ can be completely in the interior $t(a_i,a_{i+1})$. Suppose there is an edge $f=(a_j, a_{j+1})$ inside $t(a_i,a_{i+1})$. Let $H$ be a cycle obtained from $H^*$ by deleting $e$ and $f$, and adding $(a_i, a_j)$ and $(a_{i+1}, a_{j+1})$. By Observation~\ref{td:obs1}, $t(a_i, a_{i+1}) \succ \max\{t(a_i, a_j), \allowbreak t(a_{i+1},\allowbreak a_{j+1})\}$, and hence $\WS{H}\prec \WS{H^*}$. This contradicts the minimality of $H^*$.

\vspace{5pt}

Therefore, we may assume that no edge of $H^*$ lies completely inside $t(a_i,a_{i+1})$. Suppose there are $w$ points of $P$ inside $t(a_i,a_{i+1})$. Let $U = u_1, u_2,\dots, u_w$ represent these points indexed in the order we would encounter them on $H^*$ starting from $a_i$. Let $R = \{r_1, r_2,\dots, r_w\}$ represent the vertices where $r_i$ is the vertex succeeding $u_i$ in the cycle. All the vertices in $R$, probably except $r_w$, are different from $a_i$ (and $a_{i+1}$).  
Without loss of generality assume that $a_i\in C_{a_{i+1}}^4$, and $t(a_i,a_{i+1})$ is anchored at $a_{i+1}$, as shown in Figure~\ref{td:hamiltonicity-fig1}. 

\vspace{5pt}

{\em Claim 2}: For each $r_j\in R$, $t(r_j, a_{i+1}) \succeq \max\allowbreak\{t(a_i, a_{i+1}), t(u_j, r_j)\}$. Suppose there is a point $r_j\in R$ such that $t(r_j, a_{i+1}) \prec  \max\allowbreak\{t(a_i, a_{i+1}),\allowbreak t(u_j, r_j)\}$. Construct a new cycle $H$ by removing the edges $(u_j, r_j)$, $(a_i, a_{i+1})$ and adding the edges $(a_{i+1}, r_j)$ and $(a_i, u_j)$. Since the two new edges have length strictly less than $\max\{t(a_i, a_{i+1}), t(u_j, r_j)\}$, $\WS{H} \prec  \WS{H^*}$; which is a contradiction.

\vspace{5pt}

{\em Claim 3}: For each $r_j,r_k\in R$, $t(r_j, r_k)\succeq \allowbreak\max\{t(a_i, a_{i+1}),\allowbreak t(u_j,r_j),\allowbreak t(u_k,r_k)\}$. Suppose there is a pair $r_j$ and $r_k$ such that $t(r_j, r_k)\prec  \allowbreak\max\{t(a_i, a_{i+1}),\allowbreak t(u_j,r_j),\allowbreak d(u_k,r_k)\}$.  Construct a cycle $H$ from $H^*$ by first deleting $(u_j,r_j)$, $(u_k,r_k)$,  $(a_i, a_{i+1})$. This results in three paths. One of the paths must contain both $a_i$ and either $r_j$ or $r_k$. W.l.o.g. suppose that $a_i$ and $r_k$ are on the same path. Add the edges $(a_i, u_j)$, $(a_{i+1}, u_k)$, $(r_j, r_k)$. Since $\max\{t(u_j,r_j),\allowbreak t(u_k,r_k),\allowbreak d(a_i, a_{i+1})\}\succ\max\allowbreak  \{\allowbreak t(a_i, u_j),\allowbreak t(a_{i+1}, u_k),\allowbreak t(r_j, r_k)\}$, $\WS{H} \prec \allowbreak \WS{H^*}$; we get a contradiction.

\vspace{5pt}
\begin{figure}[htb]
  \centering
  \includegraphics[width=.65\columnwidth]{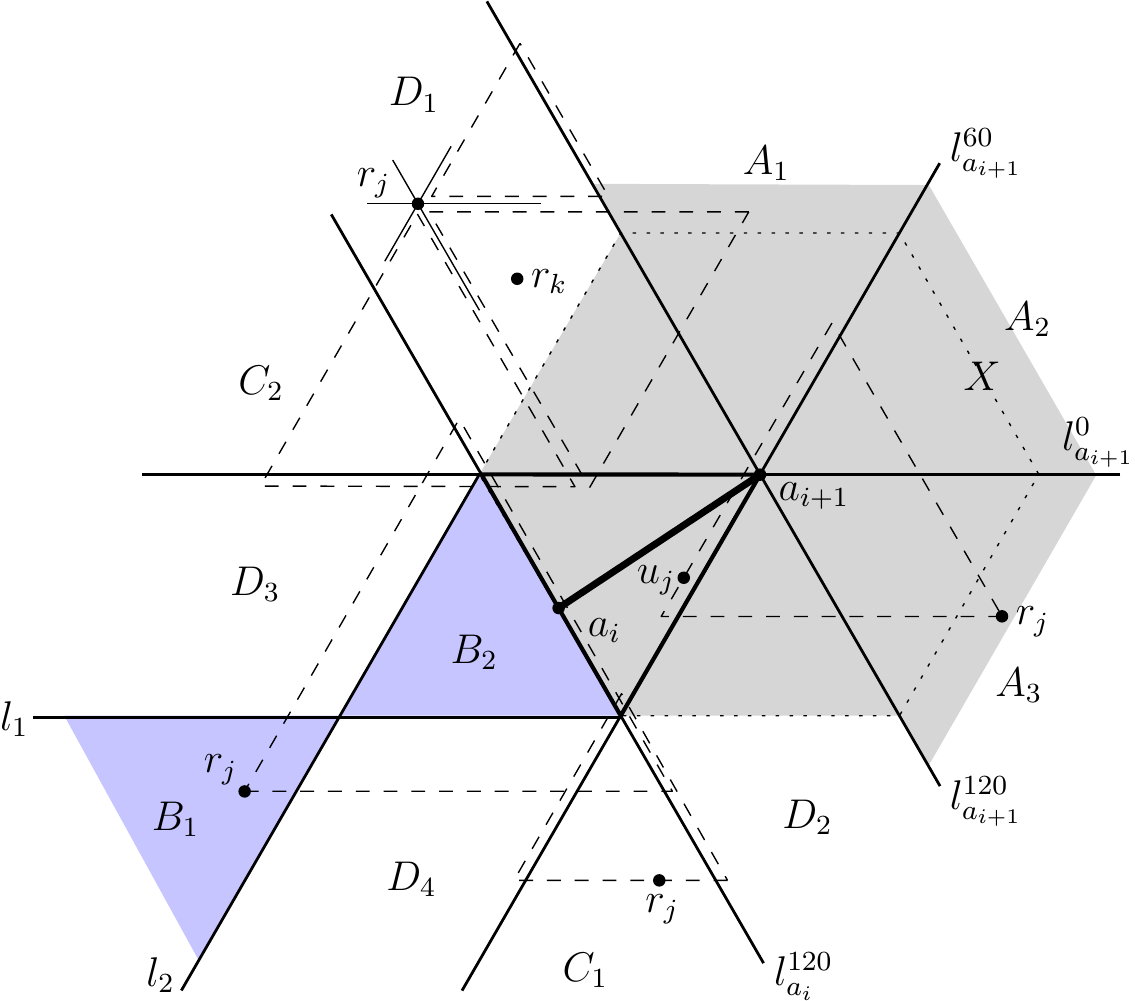}
 \caption{Illustration of Theorem~\ref{td:hamiltonicity-thr}.}
  \label{td:hamiltonicity-fig1}
\end{figure}

We use Claim 2 and Claim 3 to show that the size of $R$ is at most seven, and consequently $w\le 7$.
Consider the lines $l_{a_{i+1}}^0$, $l_{a_{i+1}}^{60}$, $l_{a_{i+1}}^{120}$, and $l_{a_{i}}^{120}$ as shown in Figure~\ref{td:hamiltonicity-fig1}. Let $l_1$ and $l_2$ be the rays starting at the corners of $t(a_i, a_{i+1})$ opposite to $a_{i+1}$ and parallel to $l_{a_{i+1}}^0$ and $l_{a_{i+1}}^{60}$ respectively. These lines and rays partition the plane into 12 regions, as shown in Figure~\ref{td:hamiltonicity-fig1}. We will show that each of the regions $D_1$, $D_2$, $D_3$, $D_4$, $C_1$, $C_2$, and $B=B_1\cup B_2$ contains at most one point of $R$, and the other regions do not contain any point of $R$. Consider the hexagon $X(a_{i+1},a_i)$. By Claim 2 and Observation~\ref{td:obs2}, no point of $R$ can be inside $X(a_{i+1},a_i)$. Moreover, no point of $R$ can be inside the cones $A_1$, $A_2$, or $A_3$, because if $r_j\in \{A_1 \cup A_2\cup A_3\}$, the (upward) triangle $t'(u_j, r_j)$ contains $a_{i+1}$. Then by Observation~\ref{td:obs2}, $t(r_j, a_{i+1}) \prec  t(u_j, r_j)$; which contradicts Claim 2.

 We show that each of the regions $D_1$, $D_2$, $D_3$, $D_4$ contains at most one point of $R$. Consider the region $D_1$; by similar reasoning we can prove the claim for $D_2$, $D_3$, $D_4$. Using contradiction, let $r_j$ and $r_k$ be two points in $D_1$, and w.l.o.g. assume that $r_j$ is the farthest to $l_{a_{i+1}}^{60}$. Then $r_k$ can lie inside any of the cones $C_{r_j}^1$, $C_{r_j}^5$, and $C_{r_j}^6$ (but not in $X$). If $r_k \in C_{r_j}^1$, then $t'(r_j, r_k)$ is smaller than $t'(a_i, a_{i+1})$ which means that $t(r_j, r_k)\prec  t(a_i,a_{i+1})$. If $r_k \in C_{r_j}^5$, then $t'(u_j,r_j)$ contains $r_k$, that is $t(r_j, r_k)\prec  t(u_j, r_j)$. If $r_k \in C_{r_j}^6$, then $t(u_j,r_j)$ contains $r_k$, that is $t(r_j, r_k)\prec  t(u_j, r_j)$. All cases contradict Claim 3. 

Now consider the region $C_1$ (or $C_2$). By contradiction assume that it contains two points $r_j$ and $r_k$. Let $r_j$ be the farthest from $l_{a_{i+1}}^{0}$. It is obvious that $t'(u_j, r_j)$ contains $r_k$, that is $t(r_j, r_k)\prec  t(u_j, r_j)$; which contradicts Claim 3. 

Consider the region $B=B_1\cup B_2$. Note that it is possible for $r_j$ or $r_k$ to be $a_i$. If both $r_j$ and $r_k$ belong to $B_2$, then $t'(r_j, r_k)$ is smaller that $t(a_i, a_{i+1})$. If $r_j\in B_1$ and $r_k\in B_2$, then $t'(u_j,r_j)$ contains $r_k$, and hence $t(r_j, r_k)\prec t(u_j, r_j)$. If both $r_j$ and $r_k$ belong to $B_1$, let $r_j$ be the farthest from $l_{a_i}^{120}$. Clearly, $t(u_j, r_j)$ contains $r_k$ and hence $t(r_j, r_k)\prec t(u_j, r_j)$. All cases contradict Claim 3.

Therefore, any of the regions $D_1$, $D_2$, $D_3$, $D_4$, $C_1$, $C_2$, and $B=B_1\cup B_2$ contains at most one point of $R$. Thus, $|R|\le 7$ and $w \le 7$, and $t(a_i, a_{i+1})$ contains at most 7 points of $P$. Therefore, $e=(a_i, a_{i+1})$ is an edge of \kTD{7}{}.
\end{proof}

As a direct consequence of Theorem~\ref{td:hamiltonicity-thr} we have shown that:
\begin{corollary}
 \kTD{7}{} is Hamiltonian.
\end{corollary}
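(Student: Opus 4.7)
The plan is almost immediate from the preceding theorem. By Theorem~\ref{td:hamiltonicity-thr}, for every point set $P$ in general position, the graph $\kTD{7}{}$ on $P$ contains (as a subgraph) a bottleneck Hamiltonian cycle of $P$. By definition a bottleneck Hamiltonian cycle is a Hamiltonian cycle in $K_n(P)$; in particular it is a Hamiltonian cycle through the vertex set $P$. Since all its edges are edges of $\kTD{7}{}$, the graph $\kTD{7}{}$ itself contains a Hamiltonian cycle, i.e., $\kTD{7}{}$ is Hamiltonian.

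So the proof is a single-line deduction: first I would invoke Theorem~\ref{td:hamiltonicity-thr} to produce a bottleneck Hamiltonian cycle $H^*$ whose edges lie in $\kTD{7}{}$, then note that forgetting the bottleneck-minimality attribute leaves us with an ordinary Hamiltonian cycle in $\kTD{7}{}$, which is exactly the definition of Hamiltonicity. There is no real obstacle here; the only thing to be careful about is the logical direction, namely that ``contains a bottleneck Hamiltonian cycle of $P$'' is a strictly stronger statement than ``contains some Hamiltonian cycle of $P$,'' so the implication goes the easy way and no extra work is needed.
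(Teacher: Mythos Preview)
Your proposal is correct and matches the paper's approach exactly: the paper states this corollary as a direct consequence of Theorem~\ref{td:hamiltonicity-thr} with no additional argument, which is precisely the one-line deduction you give.
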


\begin{figure}[htb]
  \centering
  \includegraphics[width=.8\columnwidth]{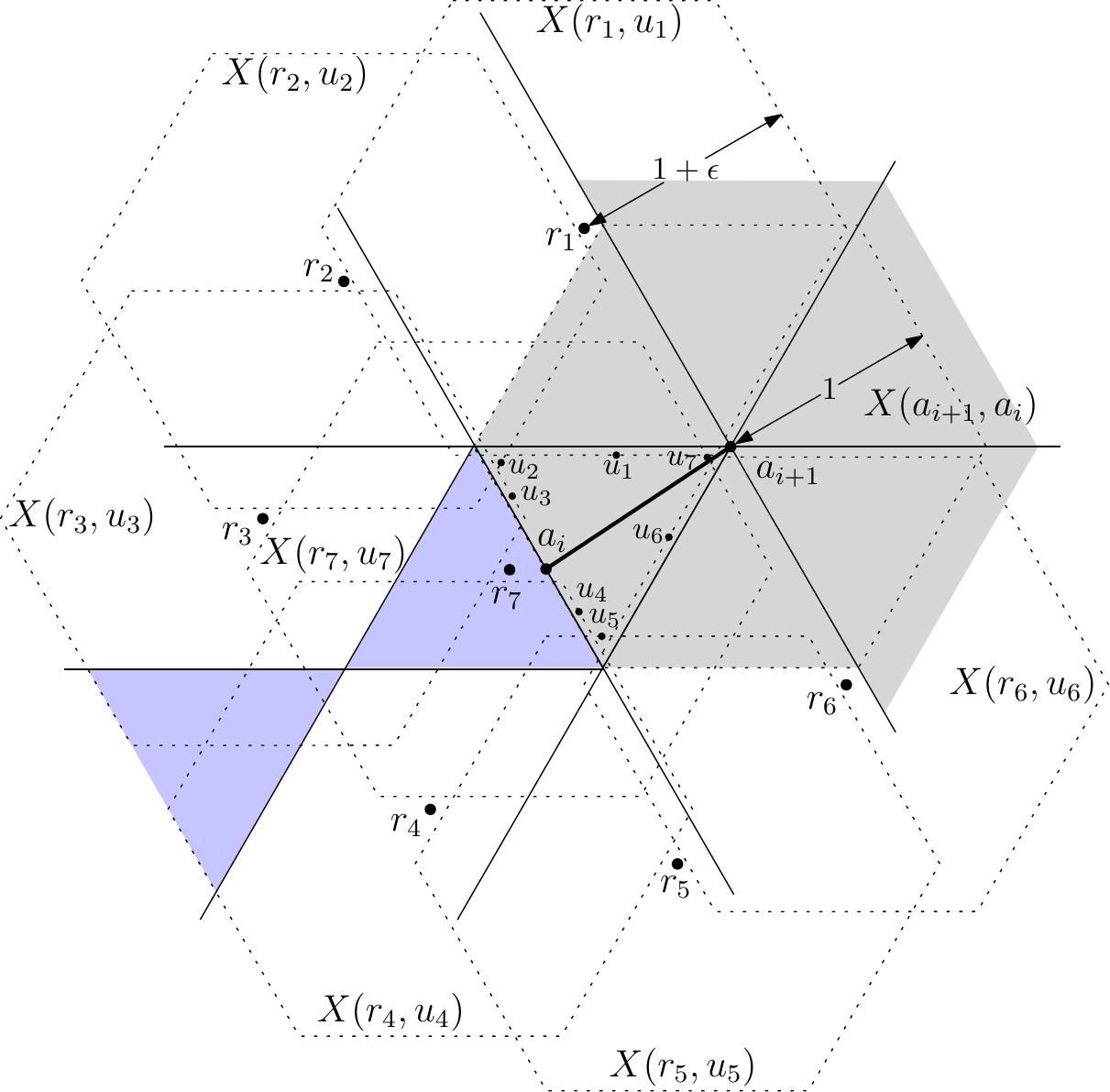}
 \caption{$t(a_i,a_{i+1})$ contains 7 points while the conditions in the proof of Theorem~\ref{td:hamiltonicity-thr} hold.}
  \label{td:hamiltonicity-fig2}
\end{figure}

An interesting question is to determine if \kTD{k}{} contains a bottleneck Hamiltonian cycle for $k<7$.
Figure~\ref{td:hamiltonicity-fig2} shows a configuration where $t(a_i, a_{i+1})$ contains 7 points while the conditions of Claim 1, Claim 2, and Claim 3 in the proof of Theorem~\ref{td:hamiltonicity-thr} hold. In Figure~\ref{td:hamiltonicity-fig2}, $d(a_i, a_{i+1})=1$, $d(r_i, u_i)= 1+\epsilon$, $d(r_i, r_j)>1+\epsilon$, $d(r_i, a_{i+1})> 1+\epsilon$ for $i,j=1,\dots 7$ and $i\neq j$.

\begin{figure}[htb]
  \centering
  \includegraphics[width=.8\columnwidth]{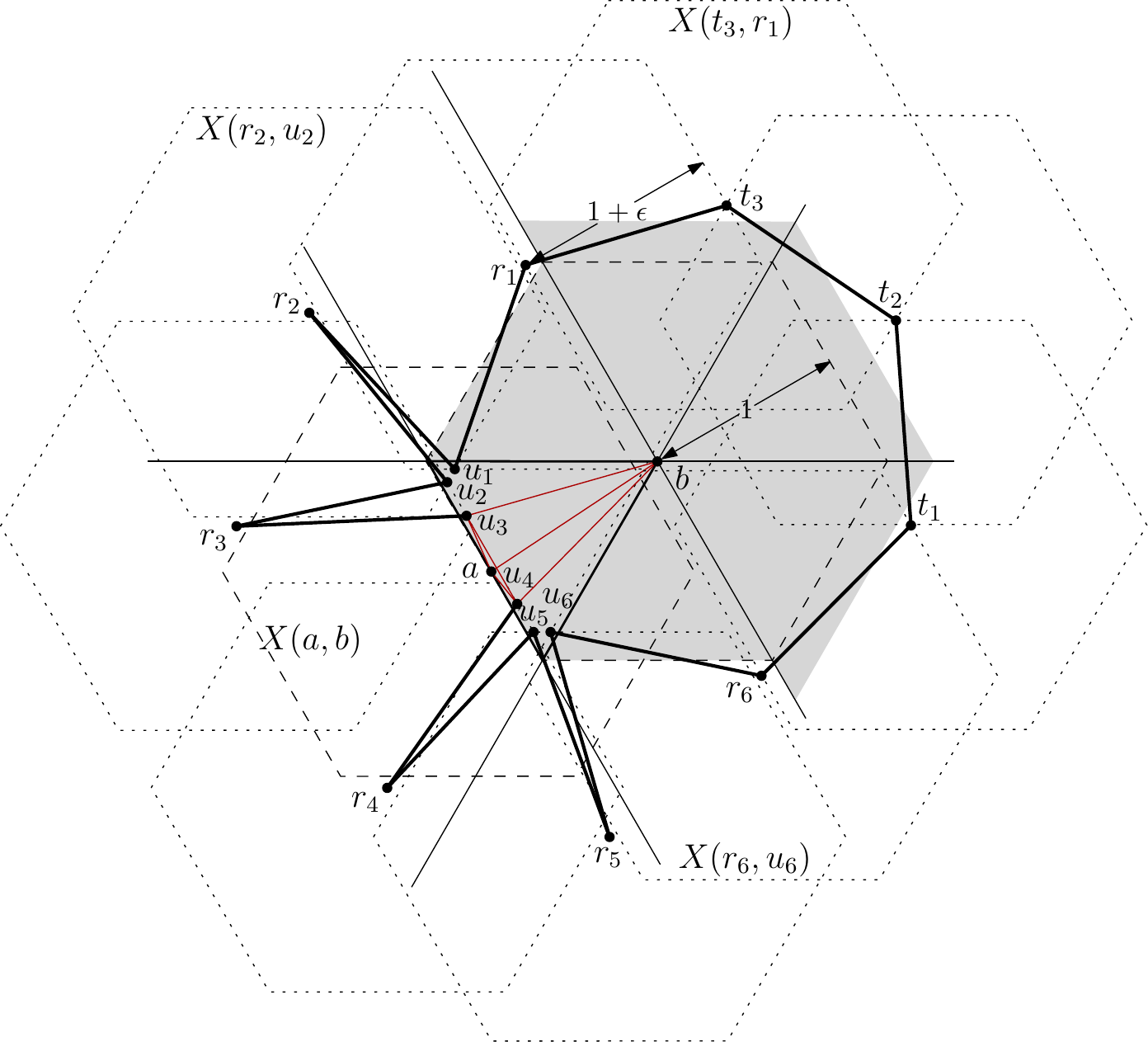}
 \caption{The points $\{r_1,\dots,r_6,t_1,t_2,t_3\}$ are connected to their first and second closest point (the bold edges). The edge $(a,b)$ should be in any bottleneck Hamiltonian cycle, while $t(a,b)$ contains 6 points.}
  \label{td:hamiltonicity-fig3}
\end{figure}

\begin{theorem}
There exists an arbitrary large point set such that its \kTD{5}{} does not contain any bottleneck Hamiltonian cycle.
\end{theorem}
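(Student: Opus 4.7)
The plan is to mimic the construction used in Proposition~\ref{7-GG-pro} for Gabriel graphs, adapted to the triangular-distance setting, and then show how to blow it up to obtain arbitrarily many points. First I will describe a base configuration of a constant number of points in which every bottleneck Hamiltonian cycle must use a particular edge $(a,b)$ whose triangle $t(a,b)$ contains six other points of $P$; since such an edge cannot belong to \kTD{5}{}, this gives the conclusion for the base case.

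For the base construction I would follow Figure~\ref{td:hamiltonicity-fig3}. Place $a$ and $b$ so that $t(a,b)$ is the downward triangle anchored at $b$ with $a$ on its opposite side, normalized so that the area of $t(a,b)$ equals $1$. Inside $t(a,b)$ insert six points $r_1,\dots,r_6$ in a symmetric pattern, with small perturbations so that for every pair $r_i,r_j$ we have $t(r_i,r_j)\succ t(a,b)$, and for every $r_i$ we have $t(r_i,a)\succ t(a,b)$ and $t(r_i,b)\succ t(a,b)$. Outside $t(a,b)$, place three ``attachment'' points $t_1,t_2,t_3$ so that each $r_i$ has two preferred neighbors in $\{t_1,t_2,t_3,a,b\}$, and so that the forced edges\textemdash the two smallest edges incident to each $r_i$ and to each $t_j$\textemdash form a simple path $B$ that uses the $r_i$'s and $t_j$'s and has its two endpoints equal to $a$ and $b$. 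All weights along $B$ should be strictly below $w(a,b)=1$, while every edge not in $B\cup\{(a,b)\}$ between two points of $\{a,b,r_1,\dots,r_6,t_1,t_2,t_3\}$ should have weight strictly greater than $1$.

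The key step is then to argue that any bottleneck Hamiltonian cycle $H^*$ of $P$ must contain $(a,b)$. Since each point $r_i$ and $t_j$ has only two incident edges of weight at most $1$\textemdash namely the two edges of $B$ incident to it\textemdash the cycle $H^*$ must use both of those edges at every internal vertex of $B$, because otherwise $H^*$ would contain an edge of weight strictly greater than $1$ at that vertex, contradicting that the Hamiltonian cycle $B\cup\{(a,b)\}$ achieves bottleneck $1$. Hence $H^*$ contains all of $B$, which is a Hamiltonian path on $P\setminus\{\text{possible extra points}\}$ from $a$ to $b$, and the only way to close it into a Hamiltonian cycle without introducing a non-$B$ edge incident to some $r_i$ or $t_j$ is to add the edge $(a,b)$. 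Because $t(a,b)$ contains the six points $r_1,\dots,r_6$ in its interior, $(a,b)\notin$~\kTD{5}{}, so \kTD{5}{} contains no bottleneck Hamiltonian cycle of $P$.

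To make $P$ arbitrarily large I would append a far-away cluster of additional points attached through a long path whose edges are all shorter than $w(a,b)$, for example by replicating suitable copies of the gadget and connecting them by short bridges. The main obstacle will be arranging the coordinates of $t_1,t_2,t_3$ and the slight perturbations of $r_1,\dots,r_6$ so that simultaneously (i)~all ``intended'' edges of $B$ are strictly shorter than $(a,b)$, (ii)~all unintended edges among $\{a,b,r_i,t_j\}$ are strictly longer than $(a,b)$, and (iii)~each internal vertex of $B$ has \emph{exactly} two incident edges of weight at most $w(a,b)$; this is essentially a tight packing argument in the TD-metric, but the symmetry of $\ConvexShape$ and the freedom to pick scales similar to those used in Figure~\ref{td:hamiltonicity-fig3} should make it realizable. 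The extension to arbitrarily many points then reduces to a routine gluing argument that preserves properties (i)\textendash(iii).
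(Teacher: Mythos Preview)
Your construction has a geometric impossibility built into it. You place $r_1,\dots,r_6$ \emph{inside} $t(a,b)$ and simultaneously require $t(r_i,a)\succ t(a,b)$ and $t(r_i,b)\succ t(a,b)$. But Observation~\ref{td:obs1} says exactly the opposite: whenever a point $r$ lies in $t(a,b)$, both $t(a,r)$ and $t(b,r)$ are contained in $t(a,b)$ and hence strictly smaller. So every one of your six interior points automatically has edges to both $a$ and $b$ of weight less than $w(a,b)=1$, which destroys the property that each internal vertex of $B$ has \emph{exactly} two incident edges of weight at most $1$. There is then no obstruction to rerouting the cycle through $a$ and $b$ via some $r_i$'s instead of using the edge $(a,b)$, and your forcing argument collapses.

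The paper's construction avoids this by reversing the roles: the six points $u_1,\dots,u_6$ inside $t(a,b)$ are \emph{not} the ones with forced neighbors. Instead, nine outside points $R=\{r_1,\dots,r_6,t_1,t_2,t_3\}$ are placed so that each has its two closest points at distance exactly $1+\epsilon$; the bold path $B$ is made of these forced edges, and the reference Hamiltonian cycle has bottleneck $1+\epsilon>w(a,b)$, not $1$. The argument then analyzes the two neighbors of $b$ in a putative bottleneck cycle $H^*$ not using $(a,b)$: both must lie in $U$ (since $R$ is too far from $b$), and a short case analysis on whether they lie in $\{u_3,u_4\}$ or in $\{u_1,u_2,u_5,u_6\}$ shows that some $r_j$ is deprived of one of its two closest points, forcing an edge longer than $1+\epsilon$ in $H^*$. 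Finally, arbitrarily many extra points are added near $t_2$, far enough from $b$ that the analysis is unaffected. If you want to salvage your approach, you must let the six interior points have cheap edges to $a$ and $b$ (which is unavoidable) and instead force the cycle via external anchor points whose two nearest neighbors are at distance slightly above $w(a,b)$.
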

\begin{proof}
In order to prove the theorem, we provide such a point set.
Figure~\ref{td:hamiltonicity-fig3} shows a configuration of $P$ with 17 points such that \kTD{5}{} does not contain a bottleneck Hamiltonian cycle. In Figure~\ref{td:hamiltonicity-fig3}, $d(a,b)=1$ and $t(a,b)$ contains 6 points $U=\{u_1, \dots, u_6\}$. In addition $d(r_i, u_i)= 1+\epsilon$, $d(r_i, r_j)>1+\epsilon$, $d(r_i, b)> 1+\epsilon$ for $i,j=1,\dots 6$ and $i\neq j$. Let $R=\{t_1, t_2,t_3, r_1,\dots, r_6\}$. The dashed hexagons are centered at $a$ and $b$ and have diameter 1. The dotted hexagons are centered at vertices in $R$ and have diameter $1+\epsilon$. Each point in $R$ is connected to its first and second closest points by edges of length $1+\epsilon$ (the bold edges). Let $B$ be the set of these edges. Let $H$ be a cycle formed by $B\cup\{(u_3,b),(b,a),(a,u_5)\}$, i.e., $H=(u_4,r_4,u_5,r_5,u_6,r_6,t_1,t_2,t_3,r_1,u_1,\allowbreak r_2,\allowbreak u_2,r_3,u_3,a,b,u_4)$. It is obvious that $H$ is a Hamiltonian cycle for $P$ and $\lambda(H)=1+\epsilon$. Thus, the bottleneck of any bottleneck Hamiltonian cycle for $P$ is at most $1+\epsilon$. We will show that any bottleneck Hamiltonian cycle for $P$ contains the edge $(a,b)$ which does not belong to \kTD{5}{}. By contradiction, let $H^*$ be a bottleneck Hamiltonian cycle which does not contain $(a,b)$. In $H^*$, $b$ is connected to two vertices $b_l$ and $b_r$, where $b_l\neq a$ and $b_r\neq a$. Since the distance between $b$ and any vertex in $R$ is strictly bigger than $1+\epsilon$ and $\lambda(H^*)\le 1+\epsilon$, $b_l\notin R$ and $b_r\notin R$. Thus $b_l$ and $b_r$ belong to $U$. Let $U'=\{u_1,u_2,u_5,u_6\}$. Consider two cases:

\begin{itemize}
 \item $b_l\in U'$ or $b_r\in U'$. W.l.o.g. assume that $b_l\in U'$ and $b_l=u_1$. Since $u_1$ is the first/second closest point of $r_1$ and $r_2$, in $H^*$ one of $r_1$ and $r_2$ must be connected by an edge $e$ to a point that is farther than its second closet point; $e$ has length strictly greater than $1+\epsilon$.
 \item $b_l\notin U'$ and $b_r\notin U'$. Thus, both $b_l$ and $b_r$ belong to $\{u_3,u_4\}$. That is, in $H^*$, $a$ should be connected to a point $c$ where $c\in R\cup U'$. If $c\in R$ then the edge $(a,c)$ has length more than $1+\epsilon$. If $c\in U'$, w.l.o.g. assume $c=u_1$; by the same argument as in the previous case, one of $r_1$ and $r_2$ must be connected by an edge $e$ to a point that is farther than its second closet point; $e$ has length strictly greater than $1+\epsilon$.
\end{itemize}

Since $e\in H^*$, both cases contradicts that $\lambda(H^*)\le 1+\epsilon$. Therefore, every bottleneck Hamiltonian cycle contains edge $(a,b)$. Since $(a,b)$ is not an edge in \kTD{5}{}, a bottleneck Hamiltonian cycle of $P$ is not contained in \kTD{5}{}.  
We can construct larger point sets by adding new points very close to $t_2$, and at distance at least $1+2\epsilon$ from $b$.
\end{proof}

\section{Perfect Matching Admissibility}
\label{td:matching}
In this section we consider the matching problem in higher-order triangular-distance Delaunay graphs. In Subsection~\ref{td:bottleneck-matching-section} we show that \kTD{6}{} contains a bottleneck perfect matching. We also show that for some point sets $P$, \kTD{5}{} does not contain any bottleneck perfect matching. In Subsection~\ref{td:matching2} we prove that every \kTD{2}{} has a perfect matching when $P$ has an even number of points, and \kTD{1}{} contains a matching of size at least $\frac{2(n-1)}{5}$.

\subsection{Bottleneck Perfect Matching}
\label{td:bottleneck-matching-section}
\begin{theorem}
\label{td:matching-thr}
 For a set $P$ of an even number of points in general position in the plane, \kTD{6}{} contains a bottleneck perfect matching.
\end{theorem}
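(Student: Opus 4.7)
The plan is to mimic the weight-sequence swap strategy already used in the proof of Theorem~\ref{td:hamiltonicity-thr} (and the Gabriel analogue of Theorem~\ref{9-GG-thr}), and then finish with a hexagonal packing argument that replaces the Euclidean disk packing of Fodor used in the Gabriel setting. Concretely, let $\mathcal{M}$ be the set of all perfect matchings of $K_n(P)$, totally ordered by their weight sequence $\WS{\cdot}$ (with ties broken arbitrarily), and let $M^*$ be the matching with the lexicographically smallest weight sequence. Then $M^*$ is automatically a bottleneck perfect matching, so it suffices to show that every edge of $M^*$ lies in \kTD{6}{}. Fix $(a,b)\in M^*$; let $R=\{r_1,\dots,r_k\}$ be the points of $P\setminus\{a,b\}$ lying in the interior of $t(a,b)$, and let $S=\{s_1,\dots,s_k\}$ with $(r_i,s_i)\in M^*$. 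The goal is to prove $k\le 6$.

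First I would establish two swap lemmas. Lemma~A: for every $s_i$, $\min\{t(a,s_i),t(b,s_i)\}\succeq t(a,b)$. Indeed, if say $t(a,s_i)\prec t(a,b)$, then replacing $\{(a,b),(r_i,s_i)\}$ by $\{(a,s_i),(b,r_i)\}$ gives a new perfect matching whose two new edges satisfy $t(a,s_i)\prec t(a,b)$ (by assumption) and $t(b,r_i)\prec t(a,b)$ (by Observation~\ref{td:obs1}, since $r_i\in t(a,b)$), contradicting minimality of $M^*$. Lemma~B: for every $i\ne j$, $t(s_i,s_j)\succeq\max\{t(a,b),t(r_i,s_i),t(r_j,s_j)\}$; otherwise the swap $\{(a,b),(r_i,s_i),(r_j,s_j)\}\to\{(a,r_i),(b,r_j),(s_i,s_j)\}$ again beats $M^*$ lexicographically, using Observation~\ref{td:obs1} to dominate both $t(a,r_i)$ and $t(b,r_j)$ by $t(a,b)$. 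By Observation~\ref{td:obs2}, Lemma~A is equivalent to saying that each $s_i$ lies outside the interiors of the hexagons $X(a,b)$ and $X(b,a)$; Lemma~B plays the role of pairwise separation.

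Next I would pass from $S$ to a bounded configuration. Normalize so that $t(a,b)$ has unit side length and its barycenter is at the origin $o$. For each $s_i$, if $s_i$ lies inside the hexagon $X(o,\rho)$ of "triangular radius" $\rho=\tfrac{3}{2}$ about $o$, set $p_i=s_i$; otherwise let $p_i$ be the intersection of the ray $\overrightarrow{o\,s_i}$ with the boundary of $X(o,\rho)$. Exactly as in the proof of Lemma~\ref{distance-lemma}, one checks case by case (both $p_i,p_j$ unprojected / one projected / both projected), using Lemma~A, Lemma~B, and the nesting argument from Observation~\ref{obs} adapted to hexagonal balls, that the set $S'=\{a,b,p_1,\dots,p_k\}$ has pairwise triangular distance at least one, i.e.\ no hexagon $X(x,y)$ of unit radius around any $x\in S'$ contains another $y\in S'$ in its interior.

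Finally I would invoke a hexagonal packing bound: in a hexagonal ball of triangular radius $\tfrac{3}{2}$, at most $8$ points can have pairwise triangular distance at least $1$. This is the analogue of Fodor's circle-packing estimate used in Section~\ref{9GGsec}, but here it is essentially tight and elementary, because the unit ball of the triangular distance is itself a regular hexagon, and a hexagon of radius $\tfrac{3}{2}$ is covered by $1$ central hexagon of radius $\tfrac{1}{2}$ together with $6$ hexagons of the same radius arranged around it, so by pigeonhole any $9$ pairwise-separated points would force two to share one such cell, contradicting the separation. Since $a,b\in S'$, this gives $k\le 6$, hence $(a,b)\in\text{\kTD{6}{}}$, completing the proof.

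The main obstacle I expect is the final hexagonal packing step: making the bound of $8$ rigorous and verifying that the boundary projection in the previous paragraph preserves the separation in the "mixed" case where one $s_i$ is inside $X(o,\tfrac{3}{2})$ and one is on its boundary. This is the direct analogue of Case~(ii) in Lemma~\ref{distance-lemma}, and the delicate inequality there (the law-of-cosines computation) must be replaced by its triangular-distance counterpart using Observation~\ref{td:obs1}; getting the constants to match $\rho=\tfrac{3}{2}$ exactly so that $k\le 6$ (rather than a weaker bound) is where the argument has to be tuned carefully.
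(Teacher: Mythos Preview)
Your swap lemmas (Lemma~A and Lemma~B) are essentially the paper's Claims~2 and~3, and the overall strategy of taking the lexicographically minimal matching $M^*$ is the same. However, the paper does \emph{not} finish with a projection-and-packing argument in the style of the Gabriel proof. Instead, it partitions the plane directly into a small number of regions determined by the two hexagons $X(a,b)$ and $X(b,a)$ together with the cone lines through $a$ and $b$. The stronger form of Claim~2 (namely $\min\{t(s_i,a),t(s_i,b)\}\succeq\max\{t(a,b),t(r_i,s_i)\}$, which your swap argument actually proves but your statement of Lemma~A omits) forces every $s_i$ outside \emph{both} hexagons and also out of a further region $B$, and then Claim~3 shows that each of the six remaining regions $D_1,D_2,D'_3,D'_4,C_1,C_2$ contains at most one $s_i$. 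This is exactly the same region analysis as in Theorem~\ref{td:hamiltonicity-thr}, except that having both hexagons available (rather than just one) prunes one extra region, dropping the bound from $7$ to $6$.

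Your proposed hexagonal packing step, by contrast, has a genuine gap. The claim that a hexagon of radius $\tfrac{3}{2}$ is covered by one central hexagon of radius $\tfrac{1}{2}$ together with six more of the same radius is false by area: seven hexagons of radius $\tfrac{1}{2}$ have total area $\tfrac{7}{4}$ times the unit-hexagon area, while the big hexagon has area $\tfrac{9}{4}$ times as much. Even setting aside the covering, two points sharing a radius-$\tfrac{1}{2}$ hexagon can have triangular distance up to $1$, not strictly less, so the pigeonhole would need closed cells and a boundary argument. More seriously, to make the projection step work you would need the triangular-distance analogue of the law-of-cosines computation in Lemma~\ref{distance-lemma}, and there is no reason to expect the constant $\rho=\tfrac{3}{2}$ to be the right one; the triangular metric does not behave like the Euclidean one under radial projection, since the unit ball is a hexagon rather than a disk. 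The paper's region-partition approach sidesteps all of this by exploiting the anisotropy of the triangular distance directly, and that is what makes the tight bound of $6$ fall out cleanly.
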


\begin{proof}
Let $\mathcal{M}$ be the set of all perfect matchings through the points of $P$. Define a total order on the elements of $\mathcal{M}$ by their weight sequence. If two elements have exactly the same weight sequence, break ties arbitrarily to get a total order.
Let $M^* = \{(a_1, b_1),\dots, (a_{\frac{n}{2}}, b_{\frac{n}{2}})\}$ be a perfect matching in $\mathcal{M}$ with minimal weight sequence. It is obvious that $M^*$ is a bottleneck perfect matching for $P$. We will show that all edges of $M^*$ are in \kTD{6}{}. Consider any edge $e = (a_i, b_i)$ in $M^*$ and its corresponding triangle $t(a_i,b_i)$.

\begin{figure}[htb]
  \centering
  \includegraphics[width=.5\columnwidth]{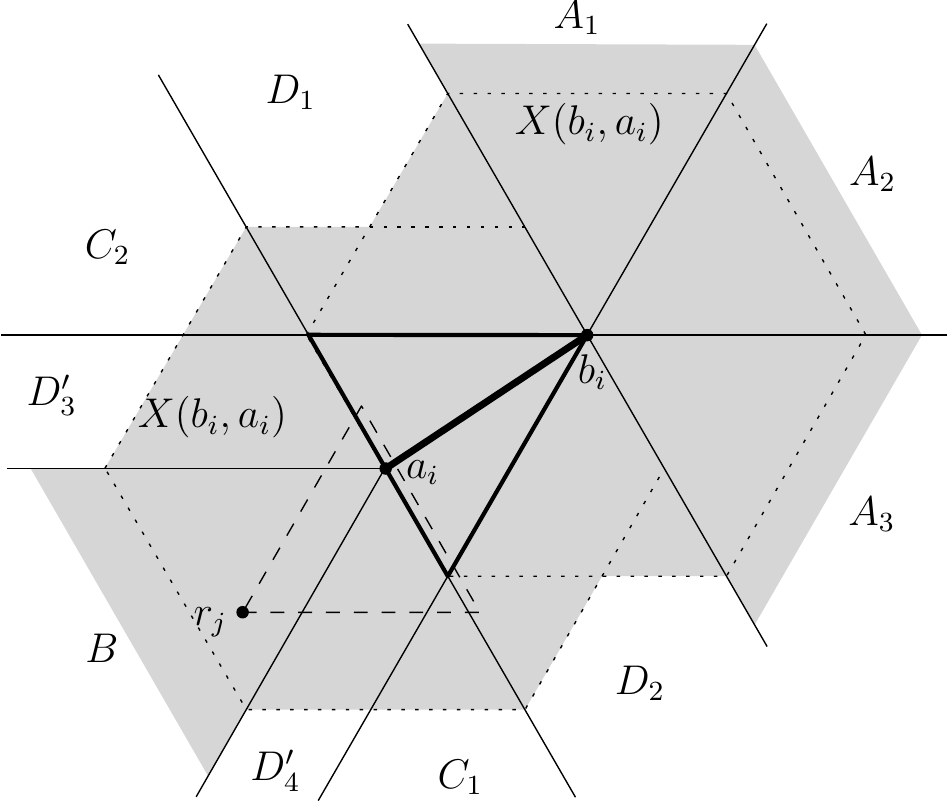}
 \caption{Proof of Theorem~\ref{td:matching-thr}.}
  \label{td:matching-fig1}
\end{figure}

{\em Claim 1}: None of the edges of $M^*$ can be inside $t(a_i,b_i)$. Suppose there is an edge $f=(a_j, b_j)$ inside $t(a_i,b_i)$. Let $M$ be a perfect matching obtained from $M^*$ by deleting $\{e, f\}$, and adding $\{(a_i, a_j), (b_i, b_j)\}$. By Observation~\ref{td:obs1}, the two new edges are smaller than the old ones. Thus, $\WS{M}\prec \WS{M^*}$ which contradicts the minimality of $M^*$.

Therefore, we may assume that no edge of $M^*$ lies completely inside $t(a_i,b_i)$. Suppose there are $w$ points of $P$ inside $t(a_i,b_i)$. Let $U = u_1, u_2,\dots, u_w$ represent the points inside $t(a_i, b_i)$, and $R=r_1, r_2,\dots, r_w$ represent the points where $(r_i,u_i)\in M^*$. W.l.o.g. assume that $a_i\in C^4_{b_i}$, and $t(a_i,b_i)$ is anchored at $b_i$ as shown in Figure~\ref{td:matching-fig1}.  

{\em Claim 2}: For each $r_j\in R$, $\min\{t(r_j, a_i),\allowbreak  t(r_j, b_i)\} \succeq \max\{t(a_i, b_i),\allowbreak  t(u_j, r_j)\}$. Otherwise, by a similar argument as in the proof of Claim 2 in Theorem \ref{td:hamiltonicity-thr} we can either match $r_j$ with $a_i$ or $b_i$ to obtain a smaller matching $M$; which is a contradiction.

{\em Claim 3}: For each pair $r_j$ and $r_k$ of points in $R$, $t(r_j, r_k)\succeq \max\{\allowbreak t(a_i, b_i),\allowbreak  t(r_j, u_j), t(r_k, u_k)\}$. The proof is similar to the proof of Claim 3 in Theorem \ref{td:hamiltonicity-thr}.

Consider Figure~\ref{td:matching-fig1} which partitions the plane into eleven regions. As a direct consequence of Claim 2, the hexagons $X(b_i, a_i)$ and $X(a_i,\allowbreak b_i)$ do not contain any point of $R$. By a similar argument as in the proof of Theorem \ref{td:hamiltonicity-thr}, the regions $A_1$, $A_2$, $A_3$ do not contain any point of $R$. In addition, the region $B$ does not contain any point $r_j$ of $R$, because otherwise $t'(r_j,u_j)$ contains $a_i$, that is $t(r_j, a_i)\prec  t(u_j, r_j)$ which contradicts Claim 2. As shown in the proof of Theorem \ref{td:hamiltonicity-thr} each of the regions $D_1$, $D_2$, $D'_3$, $D'_4$, $C_1$, and $C_2$ contains at most one point of $R$ (note that $D'_3\subset D_3$ and $D'_4\subset D_4$). Thus, $w \le 6$, and $t(a_i, b_i)$ contains at most 6 points of $P$. Therefore, $e=(a_i, b_i)$ is an edge of \kTD{6}{}.
\end{proof}

As a direct consequence of Theorem~\ref{td:matching-thr} we have shown that:
\begin{corollary}
 For a set $P$ of even number of points in general position in the plane, \kTD{6}{} has a perfect matching.
\end{corollary}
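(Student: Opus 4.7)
The plan is to observe that this corollary is an immediate consequence of Theorem~\ref{td:matching-thr}. A bottleneck perfect matching is, by definition, a perfect matching; the adjective ``bottleneck'' only adds the requirement that it minimizes the length of the longest edge over all perfect matchings of $K_n(P)$. So the only thing the corollary needs to cash in on is the existence claim, not the optimality claim.

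More precisely, Theorem~\ref{td:matching-thr} asserts that every edge of some bottleneck perfect matching $M^*$ of $P$ belongs to \kTD{6}{}. I would simply note that $M^*$ is then a set of edges of \kTD{6}{} that pairwise share no vertex and that covers all vertices of $P$ (since $|P|$ is even and $M^*$ is a perfect matching of $K_n(P)$). Hence $M^*$ is itself a perfect matching of the subgraph \kTD{6}{}, which proves the corollary.

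There is essentially no obstacle here: the corollary is a one-line rewording of the previous theorem in which the bottleneck qualifier is discarded. The only thing to be careful about is to make explicit that ``\kTD{6}{} contains a bottleneck perfect matching'' is to be read as ``all edges of a bottleneck perfect matching are edges of \kTD{6}{},'' so that $M^*$ inherits the status of perfect matching in \kTD{6}{} and not merely in $K_n(P)$. With that reading clarified, the proof is complete.
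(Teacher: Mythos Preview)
Your proposal is correct and matches the paper's approach exactly: the paper simply states the corollary as a direct consequence of Theorem~\ref{td:matching-thr} without further argument. Your explicit unpacking of why ``contains a bottleneck perfect matching'' yields a perfect matching in \kTD{6}{} is precisely the intended reading.
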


\begin{figure}[htb]
  \centering
  \includegraphics[width=.65\columnwidth]{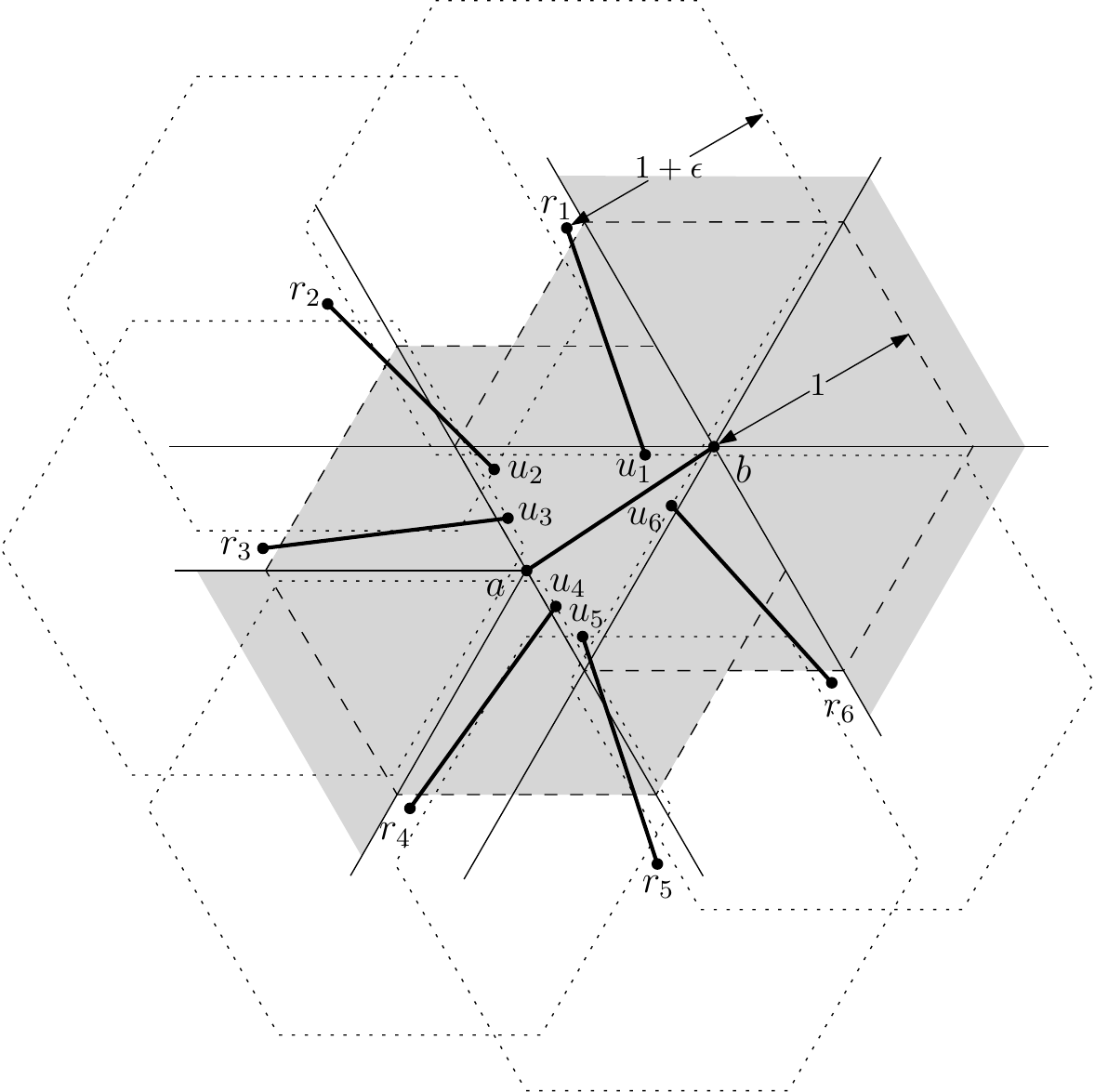}
 \caption{The points $\{r_1,\dots,r_6\}$ are matched to their closest point. The edge $(a, b)$ should be an edge in any bottleneck perfect matching, while $t(a, b)$ contains 6 points.}
  \label{td:matching-fig3}
\end{figure}

In the following theorem, we show that the bound $k=6$ proved in Theorem~\ref{td:matching-thr} is tight. 
\begin{theorem}
There exists an arbitrarily large point set such that its \kTD{5}{} does not contain any bottleneck perfect matching.
\end{theorem}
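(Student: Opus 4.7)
The plan is to exhibit an explicit construction analogous to the one used in the preceding Hamiltonicity result (Figure~\ref{td:hamiltonicity-fig3}), but tailored so that \emph{every} bottleneck perfect matching is forced to contain a particular edge $(a,b)$ whose defining triangle $t(a,b)$ contains exactly $6$ points of $P$; such an edge cannot belong to \kTD{5}{}.

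First I would describe the base configuration of $14$ points suggested by Figure~\ref{td:matching-fig3}. Place two points $a$ and $b$ with $d(a,b)=1$, and put six points $U=\{u_1,\dots,u_6\}$ tightly clustered inside $t(a,b)$. Outside $t(a,b)$, place six points $R=\{r_1,\dots,r_6\}$ so that each $r_j$ has $u_j$ as its unique closest point at distance exactly $1+\epsilon$, while the pairwise distances $d(r_j,r_k)$, $d(r_j,a)$, $d(r_j,b)$ are all strictly greater than $1+\epsilon$, and $d(u_j,u_k)$, $d(u_j,a)$, $d(u_j,b)$ are all strictly less than $1$. The positions of the $r_j$ can be chosen at the six suitable ``pockets'' around the boundary of $t(a,b)$, mirroring the layout of the previous section. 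The matching $M=\{(a,b)\}\cup\{(r_j,u_j):j=1,\dots,6\}$ has bottleneck exactly $1+\epsilon$, so the bottleneck of any bottleneck perfect matching $M^*$ of $P$ is at most $1+\epsilon$.

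Next I would argue that every such $M^*$ must contain the edge $(a,b)$. Suppose for contradiction that $(a,b)\notin M^*$. Then $a$ is matched in $M^*$ to some point $x\in U\cup R$. If $x\in R$, then $d(a,x)>1+\epsilon$ by construction, contradicting $\lambda(M^*)\le 1+\epsilon$. If $x\in U$, say $x=u_j$, then $r_j$ must be matched to some point $y\neq u_j$; since $u_j$ is the unique point at distance $1+\epsilon$ from $r_j$ and all other points are farther, $d(r_j,y)>1+\epsilon$, again contradicting $\lambda(M^*)\le 1+\epsilon$. Hence $(a,b)\in M^*$, but $t(a,b)$ contains $6$ points of $P$, so $(a,b)\notin$~\kTD{5}{}, and therefore \kTD{5}{} does not contain any bottleneck perfect matching of $P$.

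Finally, to obtain arbitrarily large point sets with the same property, I would add extra points in pairs inside a tiny disk of radius much smaller than $\epsilon$ placed around one of the $u_j$'s (away from $a$, $b$, and all $r_k$, at distance less than $\epsilon$ from one another but more than $1+\epsilon$ from every original point except that $u_j$). Each added pair contributes an edge of length at most $\epsilon$ to any bottleneck perfect matching, so the earlier forcing argument for $(a,b)$ is unchanged. The main subtlety is checking that the perturbation keeps general position, preserves the comparison $d(r_j,u_j)<d(r_j,\cdot)$ for every other candidate partner, and keeps $t(a,b)$ containing exactly the six original $u_j$'s (the new points can be chosen to lie outside $t(a,b)$, for example just beyond one of its sides); these are routine perturbation arguments, so the only real work is the careful choice of coordinates for the base configuration.
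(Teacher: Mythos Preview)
Your base construction and the forcing argument for the $14$-point configuration are correct and essentially identical to the paper's proof: exhibit the matching $M=\{(a,b)\}\cup\{(r_j,u_j)\}$ with bottleneck $1+\epsilon$, then argue by contradiction that any bottleneck matching $M^*$ must use $(a,b)$ because otherwise some $r_j$ is forced to match at distance strictly greater than $1+\epsilon$. The paper argues via $b$ rather than $a$, but the two are symmetric.

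The gap is in your extension to arbitrarily large point sets. Your description is internally inconsistent: you place the new points ``around one of the $u_j$'s'' yet require them to be ``more than $1+\epsilon$ from every original point except that $u_j$'' and also ``outside $t(a,b)$.'' Since the $u_j$ are clustered inside $t(a,b)$ with $d(a,b)=1$, any point close to $u_j$ is automatically within distance roughly $1$ of $a$, $b$, and the other $u_k$, so these conditions cannot be met simultaneously. Worse, placing new points near $u_j$ can break the forcing argument: if $p$ is a new point with $d(r_j,p)\le 1+\epsilon$, then after $a$ (or $b$) steals $u_j$, the point $r_j$ can legally match $p$ instead, and the contradiction disappears.

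The paper's fix is much simpler and you should adopt it: add the extra points in pairs \emph{far away} from the entire current configuration, at distance at least $1+2\epsilon$ from every existing point but within distance at most $1+\epsilon$ of one another. Then in any matching with bottleneck at most $1+\epsilon$, each new point can only be matched to another new point, so the original $14$-point forcing argument survives untouched. This also automatically keeps the new points outside $t(a,b)$, so the count of six interior points is preserved.
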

\begin{proof}
In order to prove the theorem, we provide such a point set.
Figure~\ref{td:matching-fig3} shows a configuration of a set $P$ with 14 points such that $d(a,b)=1$ and $t(a,b)$ contains six points $U=\{u_1, \dots, u_6\}$. In addition $d(r_i, u_i)= 1+\epsilon$, $d(r_i,x)>1+\epsilon$ where $x\neq u_i$, for $i=1,\dots 6$. Let $R=\{r_1,\dots, r_6\}$. In Figure~\ref{td:matching-fig3}, the dashed hexagons are centered at $a$ and $b$, each of diameter 1, and the dotted hexagons centered at vertices in $R$, each of diameter $1+\epsilon$. Consider a perfect matching $M=\{(a,b)\}\cup \{(r_i, u_i): i=1,\dots, 6\}$ where each point $r_i\in R$ is matched to its closest point $u_i$. It is obvious that $\lambda(M)=1+\epsilon$, and hence the bottleneck of any bottleneck perfect matching is at most $1+\epsilon$. We will show that any bottleneck perfect matching for $P$ contains the edge $(a,b)$ which does not belong to \kTD{5}{}. By contradiction, let $M^*$ be a bottleneck perfect matching which does not contain $(a,b)$. In $M^*$, $b$ is matched to a point $c\in R\cup U$. If $c \in R$, then $d(b,c)>1+\epsilon$. If $c\in U$, w.l.o.g. assume $c = u_1$. Thus, in $M^*$ the point $r_1$ is matched to a point $d$ where $d\neq u_1$. Since $u_1$ is the unique closest point to $r_1$ and $d(r_1,u_1)=1+\epsilon$, $d(r_1,d)>1+\epsilon$. Both cases contradicts that $\lambda(M^*)\le 1+\epsilon$. Therefore, every bottleneck perfect matching contains $(a,b)$. Since $(a,b)$ is not an edge in \kTD{5}{}, a bottleneck perfect matching of $P$ is not contained in \kTD{5}{}. 
We can construct larger point sets by adding new points\textemdash which are within distance $1+\epsilon$ from each other\textemdash at distance at least $1+2\epsilon$ from the current point set.
\end{proof}
\subsection{Perfect Matching}
\label{td:matching2}
In \cite{Babu2014} the authors proved a tight lower bound of $\lceil\frac{n-1}{3}\rceil$ on the size of a maximum matching in \kTD{0}{}. In this section we prove that \kTD{1}{} has a matching of size $\frac{2(n-1)}{5}$ and \kTD{2}{} has a perfect matching when $P$ has an even number of points.

\begin{figure}[htb]
  \centering
\setlength{\tabcolsep}{0in}
  $\begin{tabular}{cc}
 \multicolumn{1}{m{.5\columnwidth}}{\centering\includegraphics[width=.34\columnwidth]{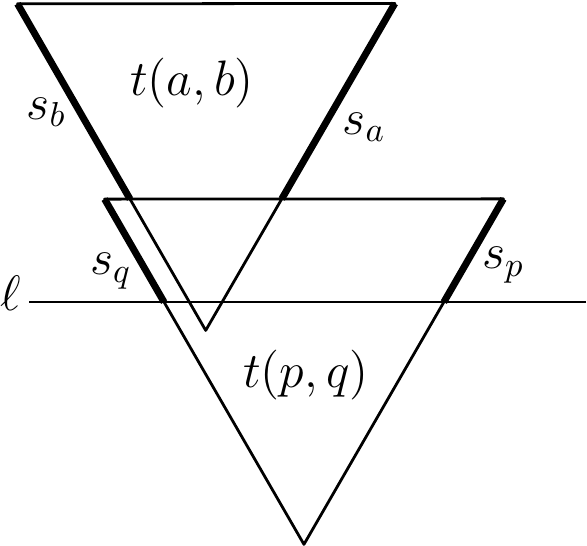}}
&\multicolumn{1}{m{.5\columnwidth}}{\centering\includegraphics[width=.36\columnwidth]{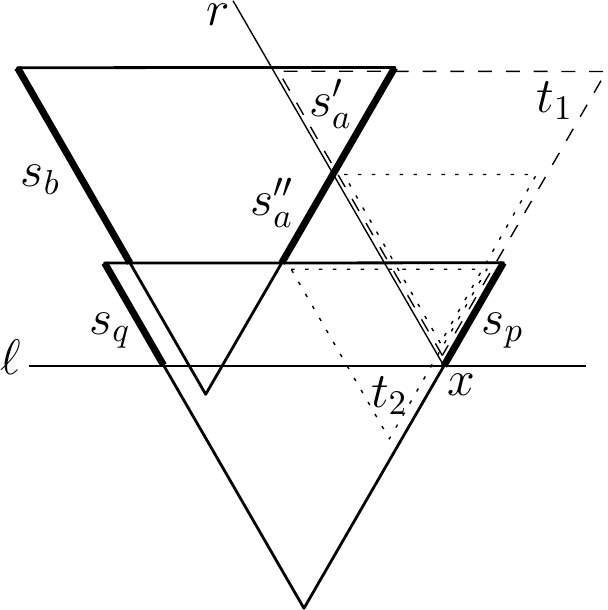}}\\
(a) & (b)
\end{tabular}$
  \caption{(a) Illustration of Lemma~\ref{td:triangle3}, and (b) proof of Lemma~\ref{td:triangle3}.}
\label{td:intersection-fig}
\end{figure}

For a triangle $t(a,b)$ through the points $a$ and $b$, let $top(a,b)$, $left(a,b)$, and $right(a,b)$ respectively denote the top, left, and right sides of $t(a,b)$. Refer to Figure~\ref{td:intersection-fig}(a) for the following lemma.
\begin{lemma}
\label{td:triangle3}
Let $t(a,b)$ and $t(p,q)$ intersect a horizontal line $\ell$, and $t(a,b)$ intersects $top(p,q)$ in such a way that $t(p,q)$ contains the lowest corner of $t(a,b)$. Let $a$ (resp. $p$) lie on $right(a,b)$ (resp. $right(p,q)$. If $a$ and $b$ lie above $top(p,q)$, and $p$ and $q$ lie above $\ell$, then, $\max\{t(a,p), t(b,q)\} \allowbreak \prec \allowbreak \max\{t(a,b), t(p,q)\}$.
\end{lemma}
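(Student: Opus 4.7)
The plan is to use a geometric uncrossing argument, of the same flavour as the classic swap lemmas for Euclidean bottleneck matchings, but carried out in the convex-distance induced by $\ConvexShape$. The key point I would rely on is that $a \in right(a,b)$ forces $b$ to sit at the top-left vertex of $t(a,b)$ (otherwise $t(a,b)$ could be shrunk, contradicting its minimality); symmetrically $q$ is the top-left vertex of $t(p,q)$. I would then fix coordinates so that both triangles have horizontal top sides and the slanted sides are at $\pm 60^\circ$, and write $h$ for the horizontal line through $top(p,q)$. The hypotheses then say: $a,b$ lie strictly above $h$, the bottom vertex of $t(a,b)$ lies strictly below $h$ inside $t(p,q)$, and $\ell$ lies strictly below both $p$ and $q$.

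First I would bound $t(b,q)$. Since $b$ lies above $h$ and $q$ is the leftmost point of $top(p,q)$ on $h$, the smallest downward equilateral triangle through $b$ and $q$ has $b$ at its top-left vertex and $q$ on one of its lower sides. For the bottom corner of $t(a,b)$ to land inside $t(p,q)$ while $b$ stays above $h$, the point $b$ must sit in the wedge whose apex is at $q$ and whose boundary consists of the vertical ray through $q$ and a ray parallel to $left(p,q)$; in this wedge one checks that $q$ is horizontally no farther from $b$ than $a$ is, and vertically no farther below $b$ than the bottom vertex of $t(a,b)$. Both inequalities are strict because the crossing is proper, so $t(b,q) \prec t(a,b)$, and in particular $t(b,q) \prec \max\{t(a,b), t(p,q)\}$.

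Next, a symmetric argument handles $t(a,p)$. Since $p$ lies on $right(p,q)$ above $\ell$ and $a$ lies on $right(a,b)$ above $h$, the smallest downward triangle through $a$ and $p$ has its top-left vertex at $a$ and contains $p$ on a lower side. Using that the bottom vertex of $t(a,b)$ lies in the interior of $t(p,q)$ (so the right side of $t(a,b)$ enters $t(p,q)$ through $top(p,q)$), one shows that $t(a,p)$ fits strictly inside $t(p,q)$ widened only upward to the level of $a$, and a direct comparison of side lengths gives $t(a,p) \prec t(p,q) \preceq \max\{t(a,b), t(p,q)\}$. Taking the maximum of the two strict inequalities yields the claim.

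The main obstacle will be a clean case analysis on the horizontal and vertical position of $b$ relative to the rays through $q$ parallel to the sides of $\ConvexShape$ (and symmetrically of $a$ relative to $p$), since the combinatorial type of each of $t(b,q)$ and $t(a,p)$ changes depending on which side or vertex each point lands on. In every sub-case strictness must be derived from the open-interior conditions in the hypotheses\,---\,namely that the bottom vertex of $t(a,b)$ is in the open interior of $t(p,q)$ and that $p,q$ strictly lie above $\ell$\,---\,and this bookkeeping is the technically delicate step. Once these configurations are enumerated, the two containment inequalities above follow by a short comparison of triangle side lengths measured along the horizontal direction.
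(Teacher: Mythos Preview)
Your opening structural claim is already wrong: $a\in right(a,b)$ does \emph{not} force $b$ to be the top-left vertex of $t(a,b)$. By the basic observation on $t(\cdot,\cdot)$, exactly one of $a,b$ is a corner; it can just as well be $a$ at the top-right corner with $b$ strictly in the interior of $left(a,b)$. The same holds for $p,q$. More seriously, the two separate bounds you aim for, $t(b,q)\prec t(a,b)$ and $t(a,p)\prec t(p,q)$, are simply false. Take $t(p,q)$ with top side from $(0,0)$ to $(10,0)$ (so $q=(0,0)$), $\ell$ the line $y=-3$, and $t(a,b)$ with top side from $(3,1)$ to $(5,1)$; all hypotheses of the lemma hold, yet $t(b,q)$ has side length $3+1/\sqrt{3}\approx 3.58$ while $t(a,b)$ has side length $2$. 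So $t(b,q)\succ t(a,b)$, and your first paragraph collapses. A symmetric example (swap the roles so $t(a,b)$ is the larger triangle) kills $t(a,p)\prec t(p,q)$ in the same way.

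The paper's proof does not try to compare $t(a,p)$ and $t(b,q)$ to a \emph{fixed} one of $t(a,b),t(p,q)$; which of the two wins depends on where $a$ sits along $right(a,b)$. The correct case split uses the auxiliary point $x=\ell\cap right(p,q)$: draw the ray from $x$ parallel to $left(p,q)$, splitting the admissible segment for $a$ on $right(a,b)$ into two pieces $s'_a$ and $s''_a$. If $a\in s'_a$ one builds a downward triangle $t_1$ anchored at $x$ with its top at the level of $top(a,b)$; then $t_1\prec t(a,b)$ (its bottom vertex is on $\ell$, above the bottom vertex of $t(a,b)$) and $t_1$ contains both $s'_a$ and $s_p$, so $t(a,p)\preceq t_1\prec t(a,b)$. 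If $a\in s''_a$ one instead builds a triangle $t_2$ anchored at $right(a,b)\cap top(p,q)$ with one side on the line through $right(p,q)$; this triangle is contained in $t(p,q)$ and, after sliding, has $a$ at its top-left corner and $p$ on its right side, giving $t(a,p)\prec t(p,q)$. The symmetric argument handles $t(b,q)$. What you are missing is precisely this switching of the comparison target.
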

\begin{proof}
Recall that $t(a,b)$ is the smallest downward triangle through $a$ and $b$. By Observation~\ref{td:side-point-obs} each side of $t(a,b)$ contains either $a$ or $b$. 
In Figure~\ref{td:intersection-fig}(a) the set of potential positions for point $a$ on the boundary of $t(a,b)$ is shown by the line segment $s_a$; and similarly by $s_b$, $s_p$, $s_q$ for $b$, $p$, $q$, respectively. We will show that $t(a,p)\prec \max\{t(a,b), t(p,q)\}$. By similar reasoning we can show that $t(b,q)\prec \max\{t(a,b), t(p,q)\}$. Let $x$ denote the intersection of $\ell$ and $right(p,q)$. Consider a ray $r$ initiated at $x$ and parallel to $left(p,q)$ which divides $s_a$ into (at most) two parts $s'_a$ and $s''_a$ as shown in Figure~\ref{td:intersection-fig}(b). Two cases may appear:

\begin{itemize}
 \item $a\in s'_a$. Let $t_1$ be a downward triangle anchored at $x$ which has its $top$ side on the line through $top(a,b)$ (the dashed triangle in Figure~\ref{td:intersection-fig}(b)). The top side of $t_1$ and $t(a,b)$ lie on the same horizontal line. The bottommost corner of $t_1$ is on $\ell$ while the bottommost corner of $t(a,b)$ is below $\ell$. Thus, $t_1\prec t(a,b)$. In addition, $t_1$ contains $s'_a$ and $s_p$, thus, for any two points $a\in s'_a$ and $p\in s_p$, $t(a,p)\preceq t_1$. Therefore, $t(a,p)\prec  t(a,b)$.
 \item $a \in s''_a$. Let $t_2$ be a downward triangle anchored at the intersection of $right(a,b)$ and $top(p,q)$ which has one side on the line through $right(p,q)$ (the dotted triangle in Figure~\ref{td:intersection-fig}(b)). This triangle is contained in $t(p,q)$, and has $s_p$ on its right side. If we slide $t_2$ upward while its top-left corner remains on $s''_a$, the segment $s_p$ remains on the right side of $t_2$. Thus, any triangle connecting a point $a\in s''_a$ to a point $p\in s_p$ has the same size as $t_2$. That is, $t(a,p)=t_2\prec t(p,q)$. 
\end{itemize}

Therefore, we have $t(a,p)\prec \max\{t(a,b), t(p,q)\}$. By similar argument we conclude that $t(b,q)\prec \max\{t(a,b), t(p,q)\}$.  
\end{proof}
Let $\mathcal{P}=\{P_1, P_2,\dots\}$ be a partition of the points in $P$.
Let $G(\mathcal{P})$ be the complete graph with vertex set $\mathcal{P}$. For each edge $e=(P_i,P_j)$ in $G(\mathcal{P})$, let $w(e)$ be equal to the area of the smallest triangle between a point in $P_i$ and a point in $P_j$, i.e. $w(e)=\min\{t(a,b):a\in P_i, b\in P_j\}$. That is, the weight of an edge $e\in G(\mathcal{P})$ corresponds to the size of the smallest triangle $t(e)$ defined by the endpoints of $e$. Let $\mathcal{T}$ be a minimum spanning tree of $G(\mathcal{P})$. Let $T$ be the set of triangles corresponding to the edges of $\mathcal{T}$, i.e. $T=\{t(e): e\in \mathcal{T}\}$. 

\begin{lemma}
 \label{td:empty-triangle-lemma}
The interior of any triangle in $T$ does not contain any point of $P$.
\end{lemma}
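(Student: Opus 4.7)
The plan is to argue by contradiction, mirroring the strategy used for Lemma~\ref{gg:D-empty} in the Gabriel graph chapter, and combining Observation~\ref{td:obs1} with the minimum spanning tree property of $\mathcal{T}$. Assume some triangle $t(e) \in T$, with $e = (P_i, P_j) \in \mathcal{T}$ realized by points $a \in P_i$ and $b \in P_j$, has a point $c \in P$ in its interior. Since $\mathcal{P}$ partitions $P$, the point $c$ lies in some class $P_\ell$, and I would split into cases on $\ell$.

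First I would handle the ``same class'' cases: if $c \in P_i$, Observation~\ref{td:obs1} gives $t(c,b) \prec t(a,b)$, and since $(c,b) \in P_i \times P_j$ this contradicts $w(e) = \min\{t(a',b') : a' \in P_i, b' \in P_j\}$ being attained by $t(a,b)$. The case $c \in P_j$ is symmetric, using $t(a,c) \prec t(a,b)$.

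Next I would handle the ``new class'' case $\ell \notin \{i,j\}$. Here Observation~\ref{td:obs1} yields both $t(a,c) \prec t(a,b)$ and $t(c,b) \prec t(a,b)$, so the edges $e_1 = (P_i, P_\ell)$ and $e_2 = (P_\ell, P_j)$ of $G(\mathcal{P})$ have weights $w(e_1) \le \mathrm{area}(t(a,c)) < w(e)$ and $w(e_2) \le \mathrm{area}(t(c,b)) < w(e)$. Removing $e$ from $\mathcal{T}$ disconnects it into two subtrees $\mathcal{T}_i \ni P_i$ and $\mathcal{T}_j \ni P_j$. Depending on whether $P_\ell \in \mathcal{T}_i$ or $P_\ell \in \mathcal{T}_j$, adding $e_2$ or $e_1$ respectively reconnects the two subtrees into a spanning tree of strictly smaller total weight, contradicting the minimality of $\mathcal{T}$.

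The argument is essentially bookkeeping once Observation~\ref{td:obs1} is invoked; the only mildly delicate step is the third case, where one must notice that whichever side of the tree-cut contains $P_\ell$, one of the two cheaper edges $e_1, e_2$ always spans the cut, so the MST swap is always available. No other machinery beyond Observation~\ref{td:obs1} and the definition of $G(\mathcal{P})$ is needed.
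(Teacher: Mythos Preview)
Your proof is correct and follows essentially the same approach as the paper: a contradiction argument splitting into the three cases $c\in P_i$, $c\in P_j$, and $c\in P_\ell$ with $\ell\notin\{i,j\}$, using Observation~\ref{td:obs1} in each. In the third case the paper simply states that the existence of a shorter edge between $P_i$ and $P_\ell$ (or $P_\ell$ and $P_j$) contradicts $e\in\mathcal{T}$, while you spell out the MST swap explicitly; this is just extra detail, not a different argument.
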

\begin{proof}
  By contradiction, suppose there is a triangle $\tau\in T$ which contains a point $c\in P$. Let $e=(P_i,P_j)$ be the edge in $\mathcal{T}$ which corresponds to $\tau$. Let $a$ and $b$ respectively be the points in $P_i$ and $P_j$ which define $\tau$, i.e. $\tau=t(a,b)$ and $w(e)=t(a,b)$. Three cases arise: (i) $c\in P_i$, (ii) $c\in P_j$, (iii) $c\in P_l$ where $l\neq i$ and $l\neq j$. In case (i) the triangle $t(c,b)$ between $c\in P_i$ and $b\in P_j$ is smaller than $t(a,b)$; contradicts that $w(e)=t(a,b)$ in $G(\mathcal{P})$.  In case (ii) the triangle $t(a,c)$ between $a\in P_i$ and $c\in P_j$ is smaller than $t(a,b)$; contradicts that $w(e)=t(a,b)$ in $G(\mathcal{P})$. In case (iii) the triangle $t(a,c)$ (resp. $t(c,b)$) between $P_i$ and $P_l$ (resp. $P_l$ and $P_j$) is smaller than $t(a,b)$; contradicts that $e$ is an edge in $\mathcal{T}$. 
\end{proof}

\begin{lemma}
 \label{td:intersection-lemma}
Each point in the plane can be in the interior of at most three triangles in $T$. 
\end{lemma}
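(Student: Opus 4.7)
I will prove this by contradiction, using Lemma~\ref{td:triangle3} to ``swap'' edges in the minimum spanning tree $\mathcal{T}$ and contradict its minimality. Suppose some point $x$ in the plane lies in the interior of four distinct triangles $\tau_1, \tau_2, \tau_3, \tau_4 \in T$, where $\tau_i = t(a_i, b_i)$ corresponds to an edge $e_i$ of $\mathcal{T}$.

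Since every triangle in $T$ is a homothet of the same fixed $\ConvexShape$, I would begin by drawing through $x$ the three lines parallel to the sides of $\ConvexShape$. These partition the plane around $x$ into six open sectors in cyclic order, call them $S_1,\ldots,S_6$. For any downward triangle whose interior contains $x$, its three corners lie in the three alternating sectors (say $S_1, S_3, S_5$) while its three sides cross the complementary sectors $S_2, S_4, S_6$. By Observation~\ref{td:side-point-obs} the two endpoints $a_i, b_i$ of $\tau_i$ lie on the boundary of $\tau_i$; hence each endpoint falls into one of the six sectors around $x$ or on one of the three separating lines.

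Next I would apply a pigeonhole argument: the four triangles contribute eight endpoints distributed among the six sectors, and the triangles themselves all share the same angular ``footprint'' around $x$. This forces the existence of two triangles, $\tau_i$ and $\tau_j$, and a relabeling $(a_i,b_i,a_j,b_j)$ such that a suitable horizontal line $\ell$ through (or just below) $x$ witnesses the configuration of Lemma~\ref{td:triangle3}: both triangles meet $\ell$, $\tau_j$ contains the lowest corner of $\tau_i$ and meets $top(a_i,b_i)$, while $a_i, a_j$ lie on the right sides and $b_i, b_j$ sit above $\ell$ and above $top(a_j,b_j)$ respectively. Verifying that one of the three rotational choices of ``horizontal'' always produces such a pair from any arrangement of the eight endpoints is the main obstacle, and requires enumerating which sectors can simultaneously host the endpoint pairs of two overlapping $\ConvexShape$-homothets.

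Once such a pair is identified, Lemma~\ref{td:triangle3} gives
$$\max\{t(a_i,a_j),\, t(b_i,b_j)\} \prec \max\{t(a_i,b_i),\, t(a_j,b_j)\}.$$
To conclude, I would argue as in Lemma~\ref{gg:not-mst-edge}: the points $a_i, b_i, a_j, b_j$ belong to parts of $\mathcal{P}$, and the edges $(a_i,a_j)$ and $(b_i,b_j)$ correspond to edges in $G(\mathcal{P})$ of weight at most $t(a_i,a_j)$ and $t(b_i,b_j)$ respectively. At least one of these corresponding edges of $G(\mathcal{P})$ forms, together with $\mathcal{T}$, a cycle that contains the heavier of $e_i, e_j$; replacing that heavier tree edge with the new edge yields a spanning tree of $G(\mathcal{P})$ of strictly smaller total weight, contradicting the minimality of $\mathcal{T}$. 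This contradiction establishes that no point lies in the interior of four triangles of $T$.
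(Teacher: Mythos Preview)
Your overall scaffold matches the paper's: contradiction, find two triangles to which Lemma~\ref{td:triangle3} applies, then swap in $\mathcal{T}$ to beat the MST. The gap is precisely the step you flag as ``the main obstacle.'' Pigeonholing the eight endpoints into six angular sectors around $x$ does not by itself isolate a pair of triangles in the configuration of Lemma~\ref{td:triangle3}; two endpoints sharing a sector can come from the same triangle, or from triangles that do not sit in the nested ``one's lowest corner inside the other'' relationship that the lemma requires. You never use Lemma~\ref{td:empty-triangle-lemma}, and without emptiness you have no control over \emph{how} two overlapping homothets of $\ConvexShape$ sit relative to one another, which is why your enumeration does not close.

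The paper's key observation is exactly this missing ingredient: because each side of $t(a_i,b_i)$ carries a point of $P$ (Observation~\ref{td:side-point-obs}) and the triangles are empty of $P$ (Lemma~\ref{td:empty-triangle-lemma}), no side of one triangle can lie entirely inside another. Hence whenever two triangles of $T$ overlap, one of them contains a \emph{corner} of the other; equivalently, one penetrates the other through a single side $d\in\{top,right,left\}$. This gives a direction label to every overlapping pair. With four triangles all containing $x$, pigeonhole on the three directions forces one of two concrete configurations: either three triangles form a chain $t_i\prec_d t_j\prec_d t_k$ in the same direction, or one triangle has the other three entering through its three different sides. In the chain case, Lemma~\ref{td:triangle3} applies to the two outer triangles with the inner one's $top$ side serving as $\ell$; in the star case, a short check on where the defining points of the central triangle can sit reduces again to Lemma~\ref{td:triangle3}. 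This direction-based pigeonhole is what makes the case analysis finite and clean; your sector-based pigeonhole on endpoints does not.
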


\begin{proof}
For each $t(a,b)\in T$, the sides $top(a,b)$, $right(a,b)$, and $left(a,\allowbreak b)$ contains at least one of $a$ and $b$. In addition, by Lemma~\ref{td:empty-triangle-lemma}, $t(a,b)$ does not contain any point of $P$ in its interior. Thus, none of $top(a,b)$, $right(a,b)$, and $left(a,b)$ is completely inside the other triangles. Therefore, the only possible way that two triangles $t(a,b)$ and $t(p,q)$ can share a point is that one triangle, say $t(p,q)$, contains a corner of $t(a,b)$ in such a way that $a$ and $b$ are outside $t(p,q)$. In other words $t(a,b)$ intersects $t(p,q)$ through one of the sides $top(p,q)$, $right(p,q)$, or $left(p,q)$. If $t(a,b)$ intersects $t(p,q)$ through a direction $d\in \{top, right, left\}$ we say that $t(p,q)\prec_{d} t(a,b)$. 

By contradiction, suppose there is a point $c$ in the plane which is inside four triangles $\{t_1,t_2,\allowbreak t_3,\allowbreak t_4\}\subseteq T$. Out of these four, either (i) three of them are like $t_i\prec_d t_j \prec_d t_k$ or (ii) there is a triangle $t_l$ such that $t_l\prec_{top} t_i, t_l\prec_{right} t_j, t_l\prec_{left} t_k$, where $1\le i,j, k,l\le 4$ and $i\neq j \neq k \neq l$. Figure~\ref{td:configuration-fig} shows the two possible configurations (note that all other configurations obtained by changing the indices of triangles and/or the direction are symmetric to Figure~\ref{td:configuration-fig}(a) or Figure~\ref{td:configuration-fig}(b)).
\begin{figure}[htb]
  \centering
\setlength{\tabcolsep}{0in}
  $\begin{tabular}{cc}
 \multicolumn{1}{m{.5\columnwidth}}{\centering\includegraphics[width=.24\columnwidth]{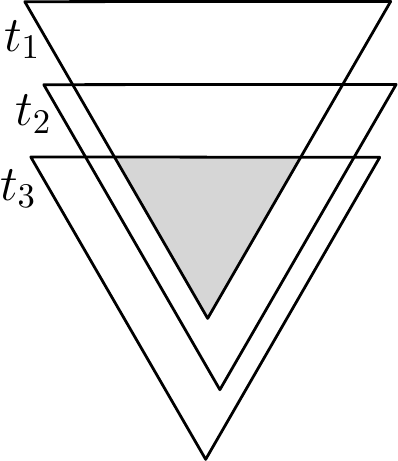}}
&\multicolumn{1}{m{.5\columnwidth}}{\centering\includegraphics[width=.34\columnwidth]{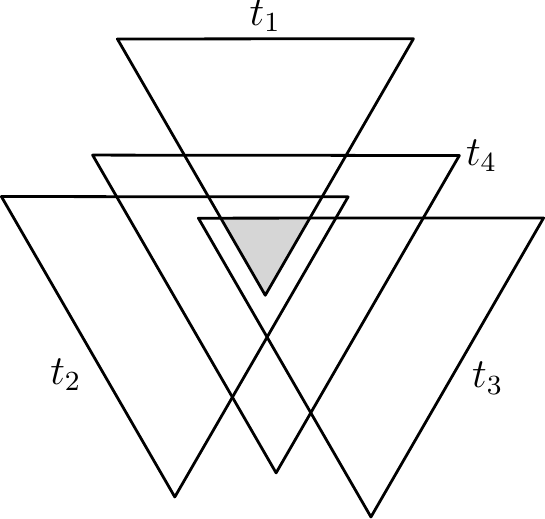}}\\
(a) & (b)
\end{tabular}$
  \caption{Two possible configurations: (a) $t_3 \prec_{top} t_2 \prec_{top} t_1$, (b) $t_4\prec_{top} t_1, t_4\prec_{left} t_2, t_4\prec_{right} t_3$.}
\label{td:configuration-fig}
\end{figure}

 Recall that each of $t_1,t_2,t_3,t_4$ corresponds to an edge in $\mathcal{T}$. In the configuration of Figure~\ref{td:configuration-fig}(a) consider $t_1$, $t_2$, and $top(t_3)$ which is shown in more detail in Figure~\ref{td:matching3-fig}(a). Suppose $t_1$ (resp. $t_2$) is defined by points $a$ and $b$ (resp. $p$ and $q$). By Lemma~\ref{td:empty-triangle-lemma}, $p$ and $q$ are above $top(t_3)$, $a$ and $b$ are above $top(t_2)$. By Lemma~\ref{td:triangle3}, $\max\{t(a,p),t(b,q)\} \prec \max\{t(a,b), t(p,q)\}$. This contradicts the fact that both of the edges representing $t(a,b)$ and $t(p,q)$ are in $\mathcal{T}$, because by replacing $\max\{t(a,b), t(p,q)\}$ with $t(a,p)$ or $t(b,q)$, we obtain a tree $\mathcal{T'}$ which is smaller than $\mathcal{T}$. In the configuration of Figure~\ref{td:configuration-fig}(b), consider all pairs of potential positions for two points defining $t_4$ which is shown in more detail in Figure~\ref{td:matching3-fig}(b). The pairs of potential positions on the boundary of $t_4$ are shown in red, green, and orange. Consider the red pair, and look at $t_2$, $t_4$, and $left(t_1)$. By Lemma~\ref{td:triangle3} and the same reasoning as for the previous configuration, we obtain a smaller tree $\mathcal{T'}$;  which contradicts the minimality of $\mathcal{T}$. By symmetry, the green and orange pairs lead to a contradiction.
Therefore, all configurations are invalid; which proves the lemma.

\begin{figure}[htb]
  \centering
\setlength{\tabcolsep}{0in}
  $\begin{tabular}{cc}
 \multicolumn{1}{m{.5\columnwidth}}{\centering\includegraphics[width=.34\columnwidth]{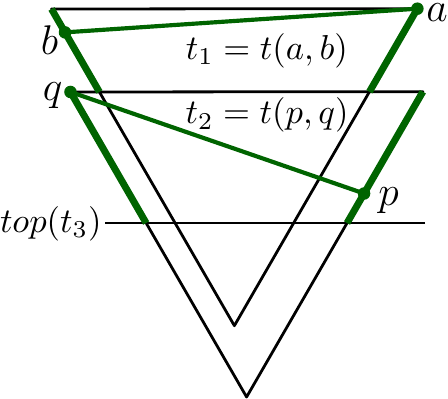}}
&\multicolumn{1}{m{.5\columnwidth}}{\centering\includegraphics[width=.4\columnwidth]{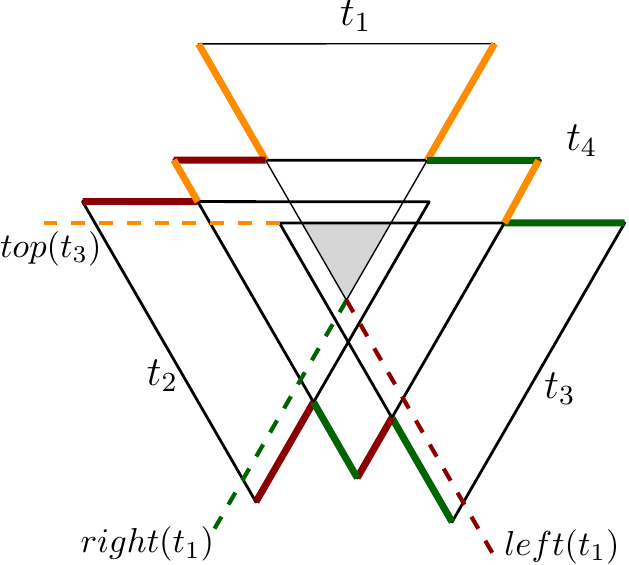}}\\
(a) & (b)
\end{tabular}$
  \caption{Illustration of Lemma~\ref{td:intersection-lemma}.}
\label{td:matching3-fig}
\end{figure}
\end{proof}

Our results in this section are based on Lemma~\ref{td:empty-triangle-lemma}, Lemma~\ref{td:intersection-lemma}, Theorem~\ref{td:Tutte}, and Theorem~\ref{td:Berge}. 

Now we prove that \kTD{2}{} has a perfect matching.

\begin{theorem}
 \label{td:mt-thr}
For a set $P$ of an even number of points in general position in the plane, \kTD{2}{} has a perfect matching.
\end{theorem}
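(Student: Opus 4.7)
The plan is to mimic the proof of Theorem~4.6 for $\kGG{2}{}$ in Section~\ref{gg:max-matching-section} and combine Tutte's Theorem~\ref{td:Tutte} with the counting machinery provided by Lemmas~\ref{td:empty-triangle-lemma} and~\ref{td:intersection-lemma}. Fix any $K \subseteq P$ and let $\mathcal{C}=\{C_1,\dots,C_m\}$ be the connected components of $\kTD{2}{}-K$; let $\mathcal{P}=\{V(C_1),\dots,V(C_m)\}$ be the induced partition of $P\setminus K$. I would then build the weighted complete graph $G(\mathcal{P})$ and take a minimum spanning tree $\mathcal{T}$, with $T$ the corresponding set of $m-1$ empty downward triangles on pairs of points of $P\setminus K$.

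The counting step would proceed as follows. For each edge of $\mathcal{T}$, realized by a triangle $t(a,b)$ with $a\in V(C_i)$, $b\in V(C_j)$ and $i\ne j$, the fact that $(a,b)\notin \kTD{2}{}$ forces the interior of $t(a,b)$ to contain at least three points of $P\setminus\{a,b\}$; by Lemma~\ref{td:empty-triangle-lemma} none of these lie in $P\setminus K$, so $t(a,b)$ contains at least three points of $K$. Double-counting incidences between $K$ and $T$, and invoking Lemma~\ref{td:intersection-lemma} (each point of $K$ lies in at most three triangles of $T$), yields
\[
3(m-1) \;\le\; |\{(p,\tau) : p\in K,\ \tau\in T,\ p\in \mathrm{int}(\tau)\}| \;\le\; 3|K|,
\]
so $m \le |K|+1$.

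The final step is a parity argument to convert the bound on $m$ into the bound $o(\kTD{2}{}-K)\le |K|$ demanded by Tutte's theorem. If $m\le |K|$ there is nothing to prove. Otherwise $m=|K|+1$, and since $|P|=|K|+\sum_{i=1}^{m}|V(C_i)|$ is even, a quick case distinction on the parity of $|K|$ shows that not all components can be odd; hence $o(\kTD{2}{}-K)\le m-1\le |K|$. Applying Theorem~\ref{td:Tutte} then gives the perfect matching.

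The main obstacle is not in the parity or double-counting step\textemdash both are essentially routine\textemdash but in making sure Lemmas~\ref{td:empty-triangle-lemma} and~\ref{td:intersection-lemma} apply cleanly after vertex removal: the MST $\mathcal{T}$ is built on $P\setminus K$, so the ``empty triangle'' guarantee only excludes points of $P\setminus K$, which is exactly what is needed to conclude that any blocker must come from $K$. Once this is verified, the rest of the argument is a direct transcription of the Gabriel-graph proof into the TD-Delaunay setting.
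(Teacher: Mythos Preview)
Your proposal is correct and follows essentially the same approach as the paper's own proof: build the MST on the partition of $P\setminus K$ induced by the components of $\kTD{2}{}-K$, double-count incidences between points of $K$ and the $m-1$ triangles of $T$ using Lemmas~\ref{td:empty-triangle-lemma} and~\ref{td:intersection-lemma} to get $m\le |K|+1$, then finish with the parity argument and Tutte. Your ``main obstacle'' paragraph correctly identifies and resolves the only subtlety\textemdash that the emptiness lemma is being applied to the partition of $P\setminus K$ rather than $P$\textemdash which the paper handles identically (and implicitly).
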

\begin{proof}
First we show that by removing a set $K$ of points from \kTD{2}{}, at most $|K|+1$ components are generated. Then we show that at least one of these components must be even. Finally by Theorem~\ref{td:Tutte} we conclude that \kTD{2}{} has a perfect matching.

Let $K$ be a set of vertices removed from \kTD{2}{}, and let $\mathcal{C}=\{C_1, \dots,\allowbreak  C_{m(K)}\}$ be the resulting $m(K)$ components. Then, $\mathcal{P}=\{V(C_1),\dots, V(C_{m(K)})\}$ is a partition of the vertices in $P\setminus K$. 

\begin{paragraph}{Claim 1.}$m(K)\le |K|+1$.\end{paragraph}

Let $G(\mathcal{P})$ be the complete graph with vertex set $\mathcal{P}$ which is constructed as described above. Let $\mathcal{T}$ be a minimum spanning tree of $G(\mathcal{P})$ and let $T$ be the set of triangles corresponding to the edges of $\mathcal{T}$. It is obvious that $\mathcal{T}$ contains $m(K)-1$ edges and hence $|T|=m(K)-1$. Let $F=\{(p,t):p\in K, t\in T, p\in t\}$ be the set of all (point, triangle) pairs where $p\in K$, $t\in T$, and $p$ is inside $t$. By Lemma~\ref{td:intersection-lemma} each point in $K$ can be inside at most three triangles in $T$. Thus, $|F|\le 3\cdot|K|$.
Now we show that each triangle in $T$ contains at least three points of $K$.  
Consider any triangle $\tau\in T$. Let $e=(V(C_i),V(C_j))$ be the edge of $\mathcal{T}$ which is corresponding to $\tau$, and let $a\in V(C_i)$ and $b\in V(C_j)$ be the points defining $\tau$. By Lemma~\ref{td:empty-triangle-lemma}, $\tau$ does not contain any point of $P\setminus K$ in its interior. Therefore, $\tau$ contains at least three points of $K$, because otherwise $(a,b)$ is an edge in \kTD{2}{} which contradicts the fact that $a$ and $b$ belong to different components in $\mathcal{C}$. Thus, each triangle in $T$ contains at least three points of $K$ in its interior. That is, $3\cdot|T|\le|F|$. Therefore, $3(m(K)-1)\le |F|\le 3|K|$, and hence $m(K)\le |K|+1$.

\begin{paragraph}{Claim 2.}$o(\mathcal{C})\le |K|$.\end{paragraph}

By Claim 1, $|\mathcal{C}|=m(K)\le |K|+1$. If $|\mathcal{C}|\le |K|$, then $o(\mathcal{C})\le |K|$. Assume that $|\mathcal{C}|=|K|+1$. Since $P=K\cup \{\bigcup^{|K|+1}_{i=1}{V(C_i)}\}$, the total number of vertices of $P$ can be defined as $n=|K|+\sum_{i=1}^{|K|+1}{|V(C_i)|}$. Consider two cases where (i) $|K|$ is odd, (ii) $|K|$ is even. In both cases if all the components in $\mathcal{C}$ are odd, then $n$ is odd; this contradicts our assumption that $P$ has an even number of vertices. Thus, $\mathcal{C}$ contains at least one even component, which implies that $o(\mathcal{C})\le |K|$.

Finally, by Claim 2 and Theorem~\ref{td:Tutte}, we conclude that \kTD{2}{} has a perfect matching.
\end{proof}

\begin{theorem}
\label{td:matching-1TD}
 For every set $P$ of points in general position in the plane, \kTD{1}{} has a matching of size $\frac{2(n-1)}{5}$.
\end{theorem}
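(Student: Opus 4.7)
The plan is to mirror the strategy used for Theorem~\ref{td:matching-2GG} / Theorem~\ref{td:mt-thr}, but now apply the Tutte--Berge formula (Theorem~\ref{td:Berge}) rather than Tutte's theorem (Theorem~\ref{td:Tutte}), since \kTD{1}{} need not admit a perfect matching. Fix $K\subseteq P$ and let $\mathcal{C}=\{C_1,\dots,C_{m(K)}\}$ be the components of $\kTD{1}{}-K$, giving a partition $\mathcal{P}=\{V(C_1),\dots,V(C_{m(K)})\}$ of $P\setminus K$. Build the weighted graph $G(\mathcal{P})$, let $\mathcal{T}$ be its minimum spanning tree, and let $T$ be the set of triangles corresponding to the edges of $\mathcal{T}$, so $|T|=m(K)-1$.

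First I would bound $m(K)$ in terms of $|K|$. Consider the incidence set $F=\{(p,\tau):p\in K,\ \tau\in T,\ p\in\tau\}$. By Lemma~\ref{td:intersection-lemma}, each $p\in K$ lies in at most three triangles of $T$, so $|F|\le 3|K|$. For the lower bound on $|F|$, pick any $\tau=t(a,b)\in T$ with $a\in V(C_i)$, $b\in V(C_j)$, $i\ne j$. By Lemma~\ref{td:empty-triangle-lemma}, $\tau$ is free of points of $P\setminus K$; moreover, if $\tau$ contained at most one point of $K$, the edge $(a,b)$ would lie in \kTD{1}{}, contradicting $a$ and $b$ belonging to different components. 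Hence each triangle of $T$ contains at least two points of $K$, giving $2(m(K)-1)\le |F|\le 3|K|$, i.e.
\begin{equation*}
m(K)\le \tfrac{3|K|}{2}+1.
\end{equation*}

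Next, since the components of $\kTD{1}{}-K$ are nonempty, $|K|+m(K)\le n$, so $m(K)\le n-|K|$ as well. Combining the two bounds,
\begin{equation*}
o(\kTD{1}{}-K)-|K|\ \le\ m(K)-|K|\ \le\ \min\!\left\{\tfrac{|K|}{2}+1,\ n-2|K|\right\}.
\end{equation*}
The right-hand side is maximized when the two expressions are equal, namely at $|K|=\tfrac{2(n-1)}{5}$, where its value equals $\tfrac{n+4}{5}$. Hence $\mathrm{def}(\kTD{1}{})\le \tfrac{n+4}{5}$, and by Theorem~\ref{td:Berge} the size of a maximum matching in \kTD{1}{} is at least
\begin{equation*}
\tfrac{1}{2}\!\left(n-\tfrac{n+4}{5}\right)=\tfrac{2(n-1)}{5}.
\end{equation*}

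No serious obstacle is expected: the two geometric ingredients (Lemmas~\ref{td:empty-triangle-lemma} and~\ref{td:intersection-lemma}) are already proved, so the argument is a routine counting application of the Tutte--Berge formula, parallel to the Gabriel-graph proof of Theorem~\ref{td:matching-1GG}. The only care needed is in verifying the ``$\ge 2$ points of $K$ per triangle'' step, which is precisely where the order parameter $k=1$ (as opposed to $k=2$ in Theorem~\ref{td:mt-thr}) produces the weaker bound $m(K)\le \tfrac{3|K|}{2}+1$ rather than $m(K)\le |K|+1$, and consequently the fractional matching guarantee $\tfrac{2(n-1)}{5}$ instead of a perfect matching.
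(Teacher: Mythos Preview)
Your proposal is correct and follows essentially the same approach as the paper: you use the Tutte--Berge formula together with Lemmas~\ref{td:empty-triangle-lemma} and~\ref{td:intersection-lemma} to bound $m(K)\le\min\{\tfrac{3|K|}{2}+1,\,n-|K|\}$, then optimize to get $\mathrm{def}(\kTD{1}{})\le\tfrac{n+4}{5}$, exactly as the paper does.
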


\begin{proof}
Let $K$ be a set of vertices removed from \kTD{1}{}, and let $\mathcal{C}=\{C_1, \dots, C_{m(K)}\}$ be the resulting $m(K)$ components. Then, $\mathcal{P}=\{V(C_1),\allowbreak \dots, V(C_{m(K)})\}$ is a partition of the vertices in $P\setminus K$. Note that $o(\mathcal{C})\le m(K)$.
Let $M^*$ be a maximum matching in \kTD{1}{}. By Theorem~\ref{td:Berge}, 

\begin{align}
\label{td:align0}
|M^*|&= \frac{1}{2}(n-\text{def}(\text{\kTD{1}{}})),
\end{align}
where
\begin{align}
\label{td:align1}
\text{def}(\text{\kTD{1}{}})&= \max\limits_{K\subseteq P}(o(\mathcal{C})-|K|)
 \le \max\limits_{K\subseteq P}(|\mathcal{C}|-|K|)= \max\limits_{0\le |K|\le n}(m(K)-|K|).
\end{align}
Define $G(\mathcal{P})$, $\mathcal{T}$, $T$, and $F$ as in the proof of Theorem~\ref{td:mt-thr}. Then, by Lemma~\ref{td:intersection-lemma}, $|F|\le 3\cdot|K|$.
By the same reasoning as in the proof of Theorem~\ref{td:mt-thr}, each triangle in $T$ has at least two points of $K$ in its interior. Thus, $2|T|\le|F|$. Therefore, $2(m(K)-1)\le |F| \le 3|K|$, and hence

\begin{equation}
\label{td:ineq1}
 m(K)\le\frac{3|K|}{2}+1.
\end{equation} 
In addition, $|K|+m(K)=|K|+|\mathcal{C}|\le |P|=n$, and hence
\begin{equation}
\label{td:ineq2}               
m(K)\le n-|K|.
\end{equation}
By Inequalities~(\ref{td:ineq1}) and ~(\ref{td:ineq2}), 
\begin{equation}
\label{td:ineq3}               
m(K)\le \min\left\{\frac{3|K|}{2}+1, n-|K|\right\}.
\end{equation}
Thus, by (\ref{td:align1}) and (\ref{td:ineq3})
\begin{align}
\label{td:align2}
\text{def}(\text{\kTD{1}{}})&\le \max\limits_{0\le |K|\le n}(m(K)-|K|)\le \max\limits_{0\le |K|\le n}\left\{\min\left\{\frac{3|K|}{2}+1, n-|K|\right\}-|K|\right\}\nonumber\\
&= \max\limits_{0\le |K|\le n}\left\{\min\left\{\frac{|K|}{2}+1, n-2|K|\right\}\right\}= \frac{n+4}{5},
\end{align}

where the last equation is achieved by setting $\frac{|K|}{2}+1$ equal to $n-2|K|$, which implies $|K|=\frac{2(n-1)}{5}$. Finally by substituting (\ref{td:align2}) in Equation (\ref{td:align0}) we have
$$
|M^*|\ge \frac{2(n-1)}{5}.
$$
\end{proof}
\section{Blocking TD-Delaunay graphs}
\label{td:blocking-section}
In this section we consider the problem of blocking TD-Delaunay graphs. Let $P$ be a set of $n$ points in the plane such that no pair of points of $P$ is collinear in the $l^{0}$, $l^{60}$, and $l^{120}$ directions. Recall that a point set $K$ blocks \kTD{k}{$(P)$} if in \kTD{k}{$(P\cup K)$} there is no edge connecting two points in $P$. That is, $P$ is an independent set in \kTD{k}{$(P\cup K)$}.

\begin{theorem}
\label{td:blocking-thr1}
At least $\lceil\frac{(k+1)(n-1)}{3}\rceil$ points are necessary to block \kTD{k}{$(P)$}.
\end{theorem}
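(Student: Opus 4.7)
\bigskip

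\noindent\textbf{Proof plan.}
The approach mirrors Theorem~\ref{td:blocking-thr1}'s analog for Gabriel graphs (Theorem~\ref{gg:blocking-thr1}): exhibit $n-1$ edges that must be present in \kTD{k}{(P)} unless heavily obstructed, then bound the obstruction budget via a double-counting argument using the fact that each new point can simultaneously lie in only a few relevant triangles. Specifically, let $K$ be any set that blocks \kTD{k}{(P)} and write $m=|K|$; the goal is to show $(k+1)(n-1)\le 3m$.

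First, I would apply the machinery of Section~\ref{td:matching2} to the trivial partition $\mathcal{P}=\{\{p\}:p\in P\}$. The complete graph $G(\mathcal{P})$ is then the complete weighted graph on $P$ whose edge weights are the areas of the triangles $t(p,q)$. Let $\mathcal{T}$ be a minimum spanning tree of $G(\mathcal{P})$ and let $T=\{t(e):e\in\mathcal{T}\}$ be the $n-1$ triangles corresponding to its edges. By Lemma~\ref{td:empty-triangle-lemma}, for each $e=(a,b)\in\mathcal{T}$ the triangle $t(a,b)$ has no point of $P\setminus\{a,b\}$ in its interior. Consequently, in \kTD{k}{(P\cup K)} the edge $(a,b)$ is present unless the interior of $t(a,b)$ contains at least $k+1$ points of $K$, since only points of $K$ can contribute to the at-most-$k$ allowance and blocking means this allowance is exceeded.

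Next, I would set up the incidence set
\[
F=\{(p,\tau):p\in K,\ \tau\in T,\ p\in\operatorname{int}(\tau)\}.
\]
Counting $F$ by triangles and using the blocking requirement from the previous paragraph yields $|F|\ge(k+1)\,|T|=(k+1)(n-1)$. Counting $F$ by points of $K$ and invoking Lemma~\ref{td:intersection-lemma}, which says that every point in the plane lies in the interior of at most three triangles of $T$, gives $|F|\le 3|K|=3m$. Combining the two bounds,
\[
(k+1)(n-1)\ \le\ |F|\ \le\ 3m,
\]
so $m\ge\lceil(k+1)(n-1)/3\rceil$, which is exactly the desired lower bound.

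The only mildly delicate point, and the one I would want to be explicit about, is boundary behaviour: a blocking point lying on the boundary of some $t(a,b)$ rather than in its interior does not reduce the allowance defining \kTD{k}{}. Such a point therefore cannot be used to obstruct $(a,b)$, so it is effectively wasted; discarding boundary points only decreases $|K|$, and the inequality $(k+1)(n-1)\le 3|K_{\mathrm{int}}|\le 3|K|$ still yields the stated bound. No other step requires new ideas beyond Lemmas~\ref{td:empty-triangle-lemma} and~\ref{td:intersection-lemma}, which were the heart of the matching arguments in Section~\ref{td:matching2}.
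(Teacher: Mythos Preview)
Your proposal is correct and follows essentially the same approach as the paper: take the minimum spanning tree of the complete triangular-distance graph on $P$, use Lemma~\ref{td:empty-triangle-lemma} to conclude each of its $n-1$ triangles must receive at least $k+1$ blocking points, and use Lemma~\ref{td:intersection-lemma} to bound each blocking point's contribution by three, yielding $m\ge\lceil(k+1)(n-1)/3\rceil$ by double counting. Your explicit incidence-set formulation and the boundary-point remark are minor elaborations on the paper's more terse presentation, but the argument is the same.
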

\begin{proof}
Let $K$ be a set of $m$ points which blocks \kTD{k}{$(P)$}. Let $G(\mathcal{P})$ be the complete graph with vertex set $\mathcal{P}=P$. Let $\mathcal{T}$ be a minimum spanning tree of $G(\mathcal{P})$ and let $T$ be the set of triangles corresponding to the edges of $\mathcal{T}$. It is obvious that $|T|=n-1$. By Lemma~\ref{td:empty-triangle-lemma} the triangles in $T$ are empty, thus, the edges of $\mathcal{T}$ belong to any \kTD{k}{$(P)$} where $k\ge 0$. To block each edge, corresponding to a triangle in $T$, at least $k+1$ points are necessary. By Lemma~\ref{td:intersection-lemma} each point in $K$ can lie in at most three triangles of $T$. Therefore, $m\ge\lceil\frac{(k+1)(n-1)}{3}\rceil$, which implies that at least $\lceil\frac{(k+1)(n-1)}{3}\rceil$ points are necessary to block all the edges of $\mathcal{T}$ and hence \kTD{k}{$(P)$}.
\end{proof}
\begin{figure}[htb]
  \centering
\setlength{\tabcolsep}{0in}
  $\begin{tabular}{cc}
 \multicolumn{1}{m{.5\columnwidth}}{\centering\includegraphics[width=.4\columnwidth]{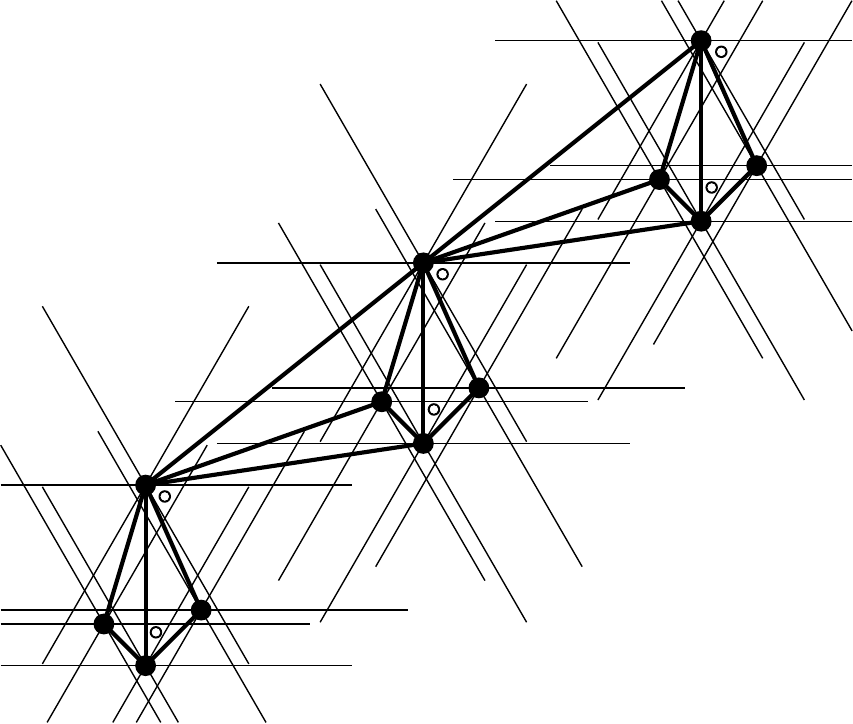}}
&\multicolumn{1}{m{.5\columnwidth}}{\centering\includegraphics[width=.4\columnwidth]{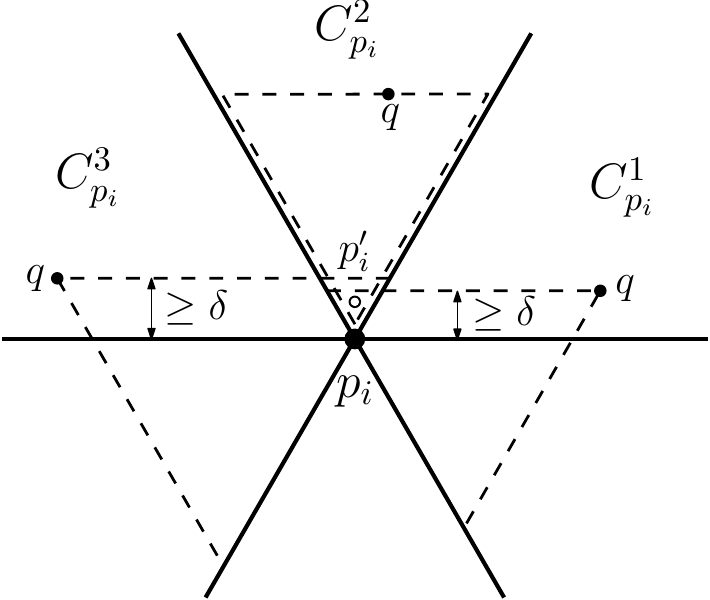}}\\
(a) & (b)
\end{tabular}$
  \caption{(a) A \kTD{0}{} graph which is shown in bold edges is blocked by $\lceil\frac{n-1}{2}\rceil$ white points, (b) $p'_i$ blocks all the edges connecting $p_i$ to the vertices above $l^0_{p_i}$.}
\label{td:blocking-fig}
\end{figure}

Theorem~\ref{td:blocking-thr1} gives a lower bound on the number of points that are necessary to block a TD-Delaunay graph. By this theorem, at least $\lceil\frac{n-1}{3}\rceil$, $\lceil\frac{2(n-1)}{3}\rceil$, $n-1$ points are necessary to block $0\text{-}$, $1\text{-}$, \kTD{2}{$(P)$} respectively. Now we introduce another formula which gives a better lower bound for \kTD{0}{}. For a point set $P$, let $\nu_k(P)$ and $\alpha_k(P)$ respectively denote the size of a maximum matching and a maximum independent set in \kTD{k}{$(P)$}. For every edge in the maximum matching, at most one of its endpoints can be in the maximum independent set. Thus,
\begin{equation}
\label{td:matching-independent}
 \alpha_k(P)\le |P| - \nu_k(P).
\end{equation}
Let $K$ be a set of $m$ points which blocks \kTD{k}{$(P)$}. By definition there is no edge between points of $P$ in \kTD{k}{$(P\cup K)$}. That is, $P$ is an independent set in \kTD{k}{$(P\cup K)$}. Thus, 
\begin{equation}
\label{td:ineq4}
 n\le \alpha_k(P\cup K).
\end{equation}
By (\ref{td:matching-independent}) and (\ref{td:ineq4}) we have
\begin{equation}
\label{td:ineq5}
 n\le \alpha_k(P\cup K)\le (n+m)-\nu_k(P\cup K).
\end{equation}
\begin{theorem}
\label{td:blocking-0TD-thr}
  At least $\lceil\frac{n-1}{2}\rceil$ points are necessary to block \kTD{0}{$(P)$}.
\end{theorem}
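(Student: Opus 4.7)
The plan is to combine inequality~(\ref{td:ineq5}) (which the authors have already derived just above the statement) with the tight lower bound $\nu_0(Q)\ge \lceil (|Q|-1)/3\rceil$ on the size of a maximum matching in the TD-Delaunay graph of any point set $Q$ in general position, cited from \cite{Babu2014} in Section~\ref{td:matching2}. Since (\ref{td:ineq5}) is an inequality that \emph{bounds $m$ from below in terms of $\nu_k(P\cup K)$}, the entire proof reduces to picking a strong enough matching bound on $\nu_0(P\cup K)$ and doing one line of arithmetic.

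Concretely, let $K$ be any set of $m$ points that blocks \kTD{0}{$(P)$}. Specializing (\ref{td:ineq5}) to $k=0$ gives
\[
n\;\le\;(n+m)-\nu_0(P\cup K),\qquad\text{i.e.}\qquad \nu_0(P\cup K)\;\le\;m.
\]
I would then invoke the matching lower bound of \cite{Babu2014} on the point set $P\cup K$, which has $n+m$ points in general position:
\[
\nu_0(P\cup K)\;\ge\;\left\lceil\frac{n+m-1}{3}\right\rceil.
\]
Chaining these two inequalities yields $\lceil (n+m-1)/3\rceil\le m$, hence $n+m-1\le 3m$, which rearranges to $m\ge (n-1)/2$. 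Since $m$ is an integer, $m\ge\lceil (n-1)/2\rceil$, proving the theorem.

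There is no real obstacle here: the nontrivial ingredient is the $\lceil (n-1)/3\rceil$ matching bound for \kTD{0}{}, which is already imported from \cite{Babu2014}, and the machinery of (\ref{td:ineq5}) is set up immediately before the statement. The only minor care point is confirming that $P\cup K$ still satisfies the ``no two points collinear in the three forbidden directions'' assumption needed to apply the matching bound; this can be arranged by perturbing the points of $K$ infinitesimally, which does not change the blocking property since blocking is an open condition (it only requires that certain triangles contain at least one point of $K$ in their interior).
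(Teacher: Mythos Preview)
Your proposal is correct and is essentially identical to the paper's own proof: both combine inequality~(\ref{td:ineq5}) with the bound $\nu_0(P\cup K)\ge\lceil (n+m-1)/3\rceil$ from \cite{Babu2014} and do the same arithmetic. Your extra remark about perturbing $K$ to ensure general position for $P\cup K$ is a detail the paper leaves implicit.
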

\begin{proof}
Let $K$ be a set of $m$ points which blocks \kTD{0}{$(P)$}. Consider \kTD{0}{$(P\cup K)$}. It is known that $\nu_0(P\cup K) \ge\lceil\frac{n+m-1}{3}\rceil$; see~\cite{Babu2014}. By Inequality~(\ref{td:ineq5}), $$n\le (n+m)-\left\lceil\frac{n+m-1}{3}\right\rceil\le \frac{2(n+m)+1}{3},$$ and consequently $m\ge \lceil\frac{n-1}{2}\rceil$ (note that $m$ is an integer number).
\end{proof}

Figure~\ref{td:blocking-fig}(a) shows a \kTD{0}{} graph on a set of 12 points which is blocked by 6 points. By removing the topmost point we obtain a set with odd number of points which can be blocked by 5 points. Thus, the lower bound provided by Theorem~\ref{td:blocking-0TD-thr} is tight. 

Now let $k=1$. By Theorem~\ref{td:matching-1TD} we have $\nu_1(P\cup K)\ge \frac{2((n+m)-1)}{5}$, and by Inequality~(\ref{td:ineq5}) $$n\le (n+m)-\frac{2((n+m)-1)}{5}=\frac{3(n+m)+2}{5},$$ and consequently $m\ge \lceil\frac{2(n-1)}{3}\rceil$; the same lower bound as in Theorem~\ref{td:blocking-thr1}. 

Now let $k=2$. By Theorem~\ref{td:mt-thr} we have $\nu_2(P\cup K)= \lfloor\frac{n+m}{2}\rfloor$ (note that $n+m$ may be odd). By Inequality~(\ref{td:ineq5}) $$n\le (n+m)-\left\lfloor\frac{n+m}{2}\right\rfloor=\left\lceil\frac{n+m}{2}\right\rceil,$$ and consequently $m\ge n$, where $n+m$ is even, and $m\ge n-1$, where $n+m$ is odd.  

\begin{theorem}
\label{td:blocking-thr2}
 There exists a set $K$ of $n-1$ points that blocks \kTD{0}{$(P)$}.
\end{theorem}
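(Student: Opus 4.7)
The plan is to adapt the $n-1$-point construction that blocks Gabriel graphs~\cite{Aronov2013}, replacing the ``nudge to the right'' by a ``nudge upward'' tailored to downward equilateral triangles. By the general-position hypothesis no two points of $P$ have equal $y$-coordinates, so I would label the points $p_1,\dots,p_n$ in strictly increasing order of $y$-coordinate. I then define, for each $i\in\{1,\dots,n-1\}$, a blocking point $p'_i$ obtained by translating $p_i$ vertically upward by a common small quantity $\varepsilon>0$, and set $K=\{p'_1,\dots,p'_{n-1}\}$. Since $|K|=n-1$, the only thing left to prove is that for every pair $i<j$ the point $p'_i$ lies in the interior of the smallest downward equilateral triangle $t(p_i,p_j)$, because then the edge $(p_i,p_j)$ cannot appear in \kTD{0}{$(P\cup K)$}.

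The main obstacle is the local geometric verification. Since $p_j$ lies strictly above $l^{0}_{p_i}$, it belongs to exactly one of the three upper cones $C^1_{p_i}$, $C^2_{p_i}$, $C^3_{p_i}$, and I would split the argument accordingly. If $p_j\in C^2_{p_i}$, then by Observation~\ref{td:side-point-obs} $p_i$ must sit at the bottom vertex of $t(p_i,p_j)$ and the interior of the triangle near $p_i$ is the upward-opening $60^{\circ}$-wedge bounded by the rays at angles $60^{\circ}$ and $120^{\circ}$, which evidently contains a short vertical step. If $p_j\in C^1_{p_i}$, a case analysis based on Observation~\ref{td:side-point-obs} and the fixed orientation of $\ConvexShape$ forces $p_j$ to occupy the top-right corner of $t(p_i,p_j)$ while $p_i$ lies on the left slanted side; the inward normal of that side points at angle $30^{\circ}$, and since $\cos(90^{\circ}-30^{\circ})=1/2>0$, the vertical direction has a strictly positive inward component, so a small vertical translate of $p_i$ remains inside the triangle. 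The case $p_j\in C^3_{p_i}$ is mirror symmetric.

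Once this pointwise geometric claim is in hand, the proof concludes by choosing $\varepsilon>0$ smaller than the minimum, over all pairs $i<j$, of the vertical depth of $p_i$ inside $t(p_i,p_j)$, and generic enough that no $p'_i$ lands on the boundary of any $t(p_k,p_l)$ and that the general-position hypothesis on $P\cup K$ is preserved. These are finitely many constraints, each excluding only a measure-zero set of values of $\varepsilon$, so a valid $\varepsilon$ exists. The resulting $K$ then contains $p'_i$ in the interior of every $t(p_i,p_j)$ with $j>i$, hence $K$ blocks every potential edge of \kTD{0}{$(P)$} while having cardinality exactly $n-1$, which combined with Theorem~\ref{td:blocking-0TD-thr} pins the exact blocking number for \kTD{0}{$(P)$} to within a factor of two.
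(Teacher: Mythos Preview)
Your proposal is correct and follows essentially the same construction and argument as the paper: sort $P$ by $y$-coordinate, place $p'_i$ a small $\varepsilon$ directly above $p_i$ for $i=1,\dots,n-1$, and argue that $p'_i$ lies in the interior of $t(p_i,p_j)$ whenever $j>i$. The paper simply asserts that ``the downward triangle between $p_i$ and any point $q\in C^1_{p_i}\cup C^2_{p_i}\cup C^3_{p_i}$ contains $p'_i$'' and takes $\varepsilon<\min_{p,q}d^0(p,q)$, whereas you spell out the three-cone case analysis explicitly; the two arguments are otherwise identical.
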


\begin{proof}
Let $d^0(p,q)$ be the Euclidean distance between $l^0_p$ and $l^0_q$. Let $\delta = \min\{d^0(p,q): p,q\in P\}$.
 For each point $p\in P$ let $p(x)$ and $p(y)$ respectively denote the $x$ and $y$ coordinates of $p$ in the plane. Let $p_1, \dots, p_n$ be the points of $P$ in the increasing order of their $y$-coordinate. Let $K=\{p'_i: p'_i(x)=p_i(x), p'_i(y)=p_i(y)+\epsilon, \epsilon<\delta, 1\le i\le n-1\}$. See Figure~\ref{td:blocking-fig}(b). For each point $p_i$, let $E_{p_i}$ (resp. $\overline{E_{p_i}}$) denote the edges of \kTD{0}{$(P)$} between $p_i$ and the points above $l^0_{p_i}$ (resp. below $l^0_{p_i}$). It is easy to see that the downward triangle between $p_i$ and any point $q$ above $l^0_{p_i}$ (i.e. any point $q\in C^1_{p_i}\cup C^2_{p_i}\cup C^3_{p_i}$) contains $p'_i$. Thus, $p'_i$ blocks all the edges in $E_{p_i}$. In addition, the edges in $\overline{E_{p_i}}$ are blocked by $p'_1, \dots, p'_{i-1}$. Therefore, all the edges of \kTD{0}{$(P)$} are blocked by the $n-1$ points in $K$.
\end{proof}

We can extend the result of Theorem~\ref{td:blocking-thr2} to \kTD{k}{$(P)$} where $k\ge 1$. For each point $p_i$ we put $k+1$ copies of $p'_i$ very close to $p_i$. Thus, 

\begin{theorem}
 There exists a set $K$ of $(k+1)(n-1)$ points that blocks \kTD{k}{$(P)$}.
\end{theorem}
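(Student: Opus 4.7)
\begin{proofof}{Theorem (extension to \kTD{k}{$(P)$})}
The plan is to generalize the construction used in the proof of Theorem~\ref{td:blocking-thr2} in a straightforward fashion: instead of placing a single blocker $p'_i$ just above each non-topmost point $p_i$, we place $k+1$ blockers in a small neighborhood above $p_i$. This yields a set $K$ of cardinality $(k+1)(n-1)$, matching the bound claimed in the theorem.

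More concretely, sort the points of $P$ as $p_1,\dots,p_n$ in increasing order of $y$-coordinate, and let $\delta$ be defined as in the proof of Theorem~\ref{td:blocking-thr2}. For every $i\in\{1,\dots,n-1\}$, pick a sufficiently small $\epsilon_i>0$ (to be described below) and place $k+1$ points $p'_{i,1},\dots,p'_{i,k+1}$ slightly above $p_i$, for instance at positions $(p_i(x)+j\epsilon_i^2,\; p_i(y)+j\epsilon_i)$ for $j=1,\dots,k+1$. The displacements are chosen only to keep the blockers in general position with one another; what actually matters is that all $k+1$ blockers lie inside every downward triangle that already contains the single blocker $p'_i$ of Theorem~\ref{td:blocking-thr2}.

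The verification step mimics the original proof. Fix any $p_i$ and consider an edge $(p_i,p_j)$ with $j>i$, so that $p_j$ lies (strictly) above $l^0_{p_i}$. As argued in the proof of Theorem~\ref{td:blocking-thr2}, there is an open neighborhood $U_j$ of $p_i$ lying in the interior of the downward triangle $t(p_i,p_j)$. Since $P$ is finite, the intersection $U=\bigcap_{j>i} U_j$ is still an open neighborhood of $p_i$; choosing $\epsilon_i$ small enough guarantees that all $k+1$ blockers $p'_{i,1},\dots,p'_{i,k+1}$ lie in $U$, hence in the interior of every $t(p_i,p_j)$ with $j>i$. Symmetrically, for any edge $(p_j,p_i)$ with $j<i$, the $k+1$ blockers placed near $p_j$ lie in the interior of $t(p_j,p_i)$. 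Consequently, for every pair $p_i,p_j\in P$, the triangle $t(p_i,p_j)$ contains at least $k+1$ points of $K$ in its interior, and thus $(p_i,p_j)$ is not an edge of \kTD{k}{$(P\cup K)$}. This shows that $K$ blocks \kTD{k}{$(P)$}, and $|K|=(k+1)(n-1)$.

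There is no real obstacle here beyond bookkeeping: the geometric content is inherited from Theorem~\ref{td:blocking-thr2}, and the only thing to be careful about is choosing the $\epsilon_i$ small enough (and the secondary perturbations small enough) so that the blockers lie strictly inside the relevant downward triangles and nothing degenerate happens. Since the interiors involved are open sets and there are only finitely many triangles to accommodate per point, a compactness/continuity argument suffices.
\end{proofof}
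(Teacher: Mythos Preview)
Your approach is essentially identical to the paper's: the paper simply remarks ``For each point $p_i$ we put $k+1$ copies of $p'_i$ very close to $p_i$'' and invokes Theorem~\ref{td:blocking-thr2}. One small wording slip: you write that ``there is an open neighborhood $U_j$ of $p_i$ lying in the interior of $t(p_i,p_j)$,'' but $p_i$ is on the boundary of $t(p_i,p_j)$, so no full neighborhood of $p_i$ can lie in the interior---what you need (and what the paper's Theorem~\ref{td:blocking-thr2} actually gives) is that the single point $p'_i$ directly above $p_i$ lies in the interior, and then openness lets you fit $k+1$ distinct perturbations of $p'_i$ in the common interior of the finitely many triangles.
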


This bound is tight. Consider the case where $k=0$. In this case \kTD{0}{$(P)$} can be a path representing $n-1$ disjoint triangles and for each triangle we need at least one point to block its corresponding edge.

\section{Conclusions}
\label{td:conclusion}
In this chapter, we considered some combinatorial properties of higher order triangular-distance Delaunay graphs of a point set $P$. We proved that
\begin{itemize}

  \item \kTD{k}{} is $(k+1)$ connected.
  \item \kTD{1}{} contains a bottleneck biconnected spanning graph of $P$.
  \item \kTD{7}{} contains a bottleneck Hamiltonian cycle and \kTD{5}{} may not have any.
  \item \kTD{6}{} contains a bottleneck perfect matching and \kTD{5}{} may not have any.
  \item \kTD{1}{} has a matching of size at least $\frac{2(n-1)}{5}$.
  \item \kTD{2}{} has a perfect matching when $P$ has an even number of points.
  \item $\lceil\frac{n-1}{2}\rceil$ points are necessary to block \kTD{0}{}.
  \item $\lceil\frac{(k+1)(n-1)}{3}\rceil$ points are necessary and $(k+1)(n-1)$ points are sufficient to block \kTD{k}{}.
\end{itemize}

We leave a number of open problems:
\begin{itemize}

  \item What is a tight lower bound for the size of maximum matching in \kTD{1}{}?
  \item Does \kTD{6}{} contain a bottleneck Hamiltonian cycle?
 \item As shown in Figure~\ref{td:TD}(a) \kTD{0}{} may not have a Hamiltonian cycle. For which values of $k=1,\dots, 6$, is the graph \kTD{k}{} Hamiltonian?
\end{itemize}

\bibliographystyle{abbrv}
\bibliography{../thesis}
\chapter{Strong Matching of Points with Geometric Shapes}
\label{ch:sm}

Let $P$ be a set of $n$ points in general position in the plane. Given a convex geometric shape $S$, a geometric graph $\G{S}{P}$ on $P$ is defined to have an edge between two points if and only if there exists an empty homothet of $S$ having the two points on its boundary. A matching in $\G{S}{P}$ is said to be {\em strong}, if the homothests of $S$ representing the edges of the matching, are pairwise disjoint, i.e., do not share any point in the plane. We consider the problem of computing a strong matching in $\G{S}{P}$, where $S$ is a diametral-disk, an equilateral-triangle, or a square. We present an algorithm which computes a strong matching in $\G{S}{P}$; if $S$ is a diametral-disk, then it computes a strong matching of size at least $\lceil \frac{n-1}{17} \rceil$, and if $S$ is an equilateral-triangle, then it computes a strong matching of size at least $\lceil \frac{n-1}{9} \rceil$. If $S$ can be a downward or an upward equilateral-triangle, we compute a strong matching of size at least $\lceil \frac{n-1}{4} \rceil$ in $\G{S}{P}$. When $S$ is an axis-aligned square we compute a strong matching of size $\lceil \frac{n-1}{4} \rceil$ in $\G{S}{P}$, which improves the previous lower bound of $\lceil \frac{n}{5} \rceil$. 

\vspace{10pt}
The results that are presented in this chapter are accepted to be published in the journal of Computational Geometry: Theory and Applications, special issue in memoriam: Ferran Hurtado~\cite{Biniaz2015-strong}. 

\section{Introduction}
\label{sm:intro}

Let $S$ be a compact and convex set in the plane that contains the origin in its interior. A {\em homothet} of $S$ is obtained by scaling $S$ with respect to the origin by some factor $\mu\ge 0$, followed by a translation to a point $b$ in the plane: $b+\mu S=\{b+\mu a: a\in S\}$.
For a point set $P$ in the plane, we define $\G{S}{P}$ as the geometric graph on $P$ which has an straight-line edge between two points $p$ and $q$ if and only if there exists a homothet of $S$ having $p$ and $q$ on its boundary and whose interior does not contain any point of $P$. If $P$ is in ``general position'', i.e., no four points of $P$ lie on the boundary of any homothet of $S$, then $\G{S}{P}$ is plane~\cite{Bose2010}. Hereafter, we assume that $P$ is a set of $n$ points in the plane, which is in general position with respect to $S$ (we will define the general position in Section~\ref{sm:preliminaries}). 
If $S$ is a disk $\disc$ whose center is the origin, then $\G{\discs}{P}$ is the Delaunay triangulation of $P$. If $S$ is an equilateral triangle $\trid$ whose barycenter is the origin, then $\G{\trids}{P}$ is the triangular-distance Delaunay graph of $P$ which is introduced by Chew~\cite{Chew1989}.

A {\em matching} in a graph $G$ is a set of edges which do not share any vertices. A {\em maximum matching} is a matching with maximum cardinality. A {\em perfect matching} is a matching which matches all the vertices of $G$. Let $\mathcal{M}$ be a matching in $\G{S}{P}$. $\mathcal{M}$ is referred to as a {\em matching of points with shape $S$}, e.g., a matching in $\G{\discs}{P}$ is a matching of points with with disks. Let $\mathcal{S}_{\mathcal{M}}$ be a set of homothets of $S$ representing the edges of $\mathcal{M}$. $\mathcal{M}$ is called a {\em strong matching} if there exists a set $\mathcal{S}_{\mathcal{M}}$ whose elements are pairwise disjoint, i.e., the objects in $\mathcal{S}_{\mathcal{M}}$ do not share any point in the plane. Otherwise, $\mathcal{M}$ is a {\em weak matching}. See Figure~\ref{sm:strong-example}. To be consistent with the definition of the matching in the graph theory, we use the term ``matching'' to refer to a weak matching. Given a point set $P$ in the plane and a shape $S$, the {\em (strong) matching problem} is to compute a (strong) matching of maximum cardinality in $\G{S}{P}$.
In this chapter we consider the strong matching problem of points in general position in the plane with respect to a given shape $S\in \{\ddisc,\trid, \sqr\}$ (see Section~\ref{sm:preliminaries} for the definition), where by $\ddisc$ we mean the line segment between the two points on the boundary of the disk is a diameter of that disk.

\begin{figure}[htb]
  \centering
\setlength{\tabcolsep}{0in}
  $\begin{tabular}{ccc}
\multicolumn{1}{m{.33\columnwidth}}{\centering\includegraphics[width=.2\columnwidth]{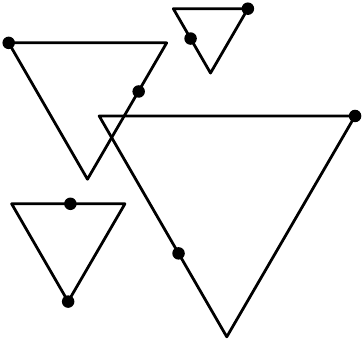}}
&\multicolumn{1}{m{.33\columnwidth}}{\centering\includegraphics[width=.2\columnwidth]{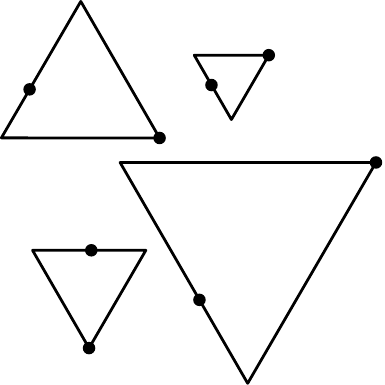}} &\multicolumn{1}{m{.33\columnwidth}}{\centering\includegraphics[width=.2\columnwidth]{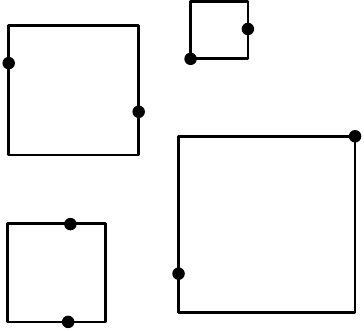}}
\\
(a) & (b)& (c)
\end{tabular}$
  \caption{Point set $P$ and (a) a perfect weak matching in $\G{\trids}{P}$, (b) a perfect strong matching in $\GUD(P)$, and (c) a perfect strong matching in $
\G{\sqr}{P}$.}
\label{sm:strong-example}
\end{figure}
\addtocontents{toc}{\protect\setcounter{tocdepth}{1}}
\subsection{Previous Work}
\addtocontents{toc}{\protect\setcounter{tocdepth}{2}}
\label{sm:previous-work}
The problem of computing a maximum matching in $\G{S}{P}$ is one of the fundamental problems in computational geometry and graph theory \cite{Abrego2004, Abrego2009, Babu2014, Bereg2009, Biniaz2015-ggmatching-TCS, Biniaz2015-hotd-CGTA, Dillencourt1990}. 
Dillencourt~\cite{Dillencourt1990} and \'{A}brego et al. \cite{Abrego2004} considered the problem of matching points with disks. Let $S$ be a closed disk $\disc$ whose center is the origin, and let $P$ be a set of $n$ points in the plane which is in general position with respect to $\disc$. Then, $\G{\discs}{P}$ is the graph which has an edge between two points $p,q\in P$ if there exists a homothet of $\disc$ having $p$ and $q$ on its boundary and does not contain any point of $P\setminus\{p,q\}$. $\G{\discs}{P}$ is equal to the Delaunay triangulation on $P$, $DT(P)$. Dillencourt~\cite{Dillencourt1990} proved that $\G{\discs}{P}$ contains a perfect (weak) matching. \'{A}brego et al. \cite{Abrego2004} proved that $\G{\discs}{P}$ has a strong matching of size at least $\lceil(n-1)/8\rceil$. They also showed that there exists a set $P$ of $n$ points in the plane with arbitrarily large $n$, such that $\G{\discs}{P}$ does not contain a strong matching of size more than $\frac{36}{73}n$.

For two points $p$ and $q$, the disk which has the line segment $pq$ as its diameter is called the diametral-disk between $p$ and $q$. We denote a diametral-disk by $\ddisc$. Let $\G{\ddiscs}{P}$ be the graph which has an edge between two points $p,q\in P$ if the diametral-disk between $p$ and $q$ does not contain any point of $P\setminus\{p,q\}$. $\G{\ddiscs}{P}$ is equal to the Gabriel graph on $P$, $GG(P)$. Biniaz et al.~\cite{Biniaz2015-ggmatching-TCS} proved that $\G{\ddiscs}{P}$ has a matching of size at least $\lceil(n-1)/4\rceil$, and this bound is tight.

The problem of matching of points with equilateral triangles has been considered by Babu et al.~\cite{Babu2014}.
Let $S$ be a downward equilateral triangle $\trid$ whose barycenter is the origin and one of its vertices is on the negative $y$-axis. Let $P$ be a set of $n$ points in the plane which is in general position with respect to $\trid$. Let $\G{\trids}{P}$ be the graph which has an edge between two points $p,q\in P$ if there exists a homothet of $\trid$ having $p$ and $q$ on its boundary and does not contain any point of $P\setminus\{p,q\}$. $\G{\trids}{P}$ is equal to the triangular-distance Delaunay graph on $P$, which was introduced by Chew~\cite{Chew1989}. Bonichon et al.~\cite{Bonichon2010} showed that $\G{\trids}{P}$ is equal to the half-theta six graph on $P$, $\frac{1}{2}\Theta_6(P)$. Babu et al.~\cite{Babu2014} proved that $\G{\trids}{P}$ has a matching of size at least $\lceil(n-1)/3\rceil$, and this bound is tight. If we consider an upward triangle $\triu$, then $\G{\trius}{P}$ is defined similarly. Let $\GUD(P)$ be the graph on $P$ which is the union of $\G{\trids}{P}$ and $\G{\trius}{P}$. Bonichon et al.~\cite{Bonichon2010} showed that $\GUD(P)$ is equal to the theta six graph on $P$, $\Theta_6(P)$. Since $\G{\trids}{P}$ is a subgraph of $\GUD(P)$, the lower bound of $\lceil(n-1)/3\rceil$ on the size of maximum matching in $\G{\trids}{P}$ holds for $\GUD(P)$.  

The problem of strong matching of points with axis-aligned rectangles is trivial. An obvious algorithm is to repeatedly match the two leftmost points. The problem of matching points with axis-aligned squares was considered by \'{A}brego et al. \cite{Abrego2009}.
Let $S$ be an axis-aligned square $\sqr$ whose center is the origin. Let $P$ be a set of $n$ points in the plane which is in general position with respect to $\sqr$. Let $\G{\sqrs}{P}$ be the graph which has an edge between two points $p,q\in P$ if there exists a homothet of $\sqr$ having $p$ and $q$ on its boundary and does not contain any point of $P\setminus\{p,q\}$. $\G{\sqrs}{P}$ is equal to the $L_\infty$-Delaunay graph on $P$. \'{A}brego et al. \cite{Abrego2004, Abrego2009} proved that $\G{\sqrs}{P}$ has a perfect (weak) matching and a strong matching of size at least $\lceil n/5\rceil$. Further, they showed that there exists a set $P$ of $n$ points in the plane with arbitrarily large $n$, such that $\G{\sqrs}{P}$ does not contain a strong matching of size more than $\frac{5}{11}n$. Table~\ref{sm:table1} summarizes the results.

Bereg et al.~\cite{Bereg2009} concentrated on matching points of $P$ with axis-aligned rectangles and squares, where $P$ is not necessarily in general position. 
They proved that any set of $n$ points in the plane has a strong rectangle matching of size at least $\lfloor\frac{n}{3}\rfloor$, and such a matching can be computed in $O(n \log n)$ time. As for squares, they presented a $\Theta(n\log n)$ time algorithm that decides whether a given matching has a weak square realization, 
and an $O(n^2\log n)$ time algorithm for the strong square matching realization. They also proved that it is NP-hard to decide whether a given point set has a perfect strong square-matching. 

\begin{table}
\centering
\caption{Lower bounds on the size of weak and strong matchings in $\G{S}{P}$.}
\label{sm:table1}
    \begin{tabular}{|c|c|c||c|c|}
         \hline
             $S$ 	& Weak match. & Ref.& Strong match. &Ref. \\  \hline  \hline
	      {$\disc$}& 	${\lfloor \frac{n}{2}\rfloor}$&\cite{Dillencourt1990}& $\lceil\frac{n-1}{8}\rceil$&\cite{Abrego2004}\\
	      {$\ominus$} &${\lceil \frac{n-1}{4}\rceil}$&\cite{Biniaz2015-ggmatching-TCS} &$\lceil\frac{n-1}{17}\rceil$&Theorem~\ref{sm:Gabriel-thr}\\  
             {$\bigtriangledown$} &${\lceil \frac{n-1}{3}\rceil}$&\cite{Babu2014} &$\lceil\frac{n-1}{9}\rceil$&Theorem~\ref{sm:half-theta-six-thr}\\  
	    {$\trid$ or $\triu$} &$\lceil \frac{n-1}{3}\rceil$& \cite{Babu2014} &$\lceil\frac{n-1}{4}\rceil$&Theorem~\ref{sm:theta-six-thr}\\ \hline\hline
	    \multirow{2}{*}{$\Square$}& \multirow{2}{*}{$\lfloor \frac{n}{2}\rfloor$} &\multirow{2}{*}{\cite{Abrego2004, Abrego2009}}&
	    $\lceil\frac{n}{5}\rceil$ & \cite{Abrego2004, Abrego2009}       \\ 
		& &	& $\lceil\frac{n-1}{4}\rceil$& Theorem~\ref{sm:infty-Delaunay-thr} \\ \hline
    \end{tabular}
\end{table}
\addtocontents{toc}{\protect\setcounter{tocdepth}{1}}
\subsection{Our results}
\addtocontents{toc}{\protect\setcounter{tocdepth}{2}}
\label{sm:our-results}
In this chapter we consider the problem of computing a strong matching in $\G{S}{P}$, where $S\in \{\ddisc,\trid, \sqr\}$. In Section~\ref{sm:preliminaries}, we provide some observations and prove necessary Lemmas. Given a point set $P$ in which is in general position with respect to a given shape $S$, in Section~\ref{sm:algorithm-section}, we present an algorithm which computes a strong matching in $\G{S}{P}$. In Section~\ref{sm:Gabriel-section}, we prove that if $S$ is a diametral-disk, then the algorithm of Section~\ref{sm:algorithm-section} computes a strong matching of size at least $\lceil(n-1)/17\rceil$ in $\G{\ddisc}{P}$. In Section~\ref{sm:half-theta-six-section}, we prove that if $S$ is an equilateral triangle, then the algorithm of Section~\ref{sm:algorithm-section} computes a strong matching of size at least $\lceil(n-1)/9\rceil$ in $\G{\trids}{P}$. In Section~\ref{sm:theta-six-section}, we compute a strong matching of size at least $\lceil (n-1)/4\rceil$ in $\GUD(P)$. In Section~\ref{sm:infty-Delaunay-section}, we compute a strong matching of size at least $\lceil(n-1)/4\rceil$ in $\G{\sqrs}{P}$; this improves the previous lower bound of $\lceil n/5\rceil$. A summary of the results is given in Table~\ref{sm:table1}. In Section~\ref{sm:conjecture-section} we discuss a possible way to further improve upon the result obtained for diametral-disks in Section~\ref{sm:Gabriel-section}. Concluding remarks and open problems are given in Section~\ref{sm:conclusion}.

\section{Preliminaries}
\label{sm:preliminaries}

Let $S\in\{\ddisc, \trid, \sqr\}$, and let $S_1$ and $S_2$ be two homothets of $S$. We say that $S_1$ is {\em smaller then} $S_2$ if the area of $S_1$ is smaller than the area of $S_2$. For two points $p,q\in P$, let $S(p,q)$ be a smallest homothet of $S$ having $p$ and $q$ on its boundary. If $S$ is a diametral-disk, a downward equilateral-triangle, or a square, then we denote $S(p,q)$ by $D(p,q)$, $t(p,q)$, or $Q(p,q)$, respectively. If $S$ is a diametral-disk, then $D(p,q)$ is uniquely defined by $p$ and $q$. If $S$ is an equilateral-triangle or a square, then $S$ has the {\em shrinkability} property: if there exists a homothet $S'$ of $S$ that contains two points $p$ and $q$, then there exists a homothet $S''$ of $S$ such that $S''\subseteq S'$, and $p$ and $q$ are on the boundary of $S''$. If $S$ is an equilateral-triangle, then we can
shrink $S''$ further, such that each side of $S''$ contains either $p$ or $q$. If $S$ is a square, then we can
shrink $S''$ further, such that $p$ and $q$ are on opposite sides of $S''$. Thus, we have the following observation:

\begin{observation}
\label{sm:shrink-triangle-obs}
For two points $p,q\in P$,
\begin{itemize}
 \item $D(p,q)$ is uniquely defined by $p$ and $q$, and it has the line segment $pq$ as a diameter.
\item $t(p, q)$ is uniquely defined by $p$ and $q$, and it has one of $p$ and $q$ on a corner and the other point is
on the side opposite to that corner.
\item $Q(p,q)$ has $p$ and $q$ on opposite sides.
\end{itemize}
\end{observation}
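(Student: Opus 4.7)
The plan is to dispatch the three bullets individually, each essentially as a corollary of the shrinkability discussion in the paragraph immediately preceding the observation.

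For the disk case, I would note that a closed disk containing $p$ and $q$ on its boundary has $pq$ as a chord, so its diameter is at least $|pq|$, with equality precisely when $pq$ is itself a diameter. This equality case uniquely determines the center (midpoint of $pq$) and the radius ($|pq|/2$), giving the claimed $D(p,q)$ and showing that $pq$ is a diameter of it.

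For the downward-triangle case, I would take any homothet of $\trid$ carrying $p$ and $q$ on its boundary, then invoke the shrinkage to arrive at one in which every side contains at least one of the two points. Three sides but only two points force one of $p,q$ to lie on two adjacent sides simultaneously, which is only possible at their common corner, while the other point lies on the third (opposite) side. The direction of the segment $pq$, which by general position is not parallel to any side of $\trid$, selects a unique pair (corner, opposite side), so $t(p,q)$ is uniquely determined.

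For the axis-aligned square, the same shrinkage yields a homothet of $\sqr$ whose four sides all meet $\{p,q\}$. Because only two points must cover four sides, each point lies on at least two sides; the two pairs cannot share a common side (otherwise that shared side would carry both points while the two non-shared sides could still be pushed inward, contradicting minimality), and two disjoint pairs of sides of a square are exactly the two pairs of opposite sides. Hence $p$ and $q$ lie on opposite sides of $Q(p,q)$. The main subtle point is the uniqueness clause for $t(p,q)$: existence from pure shrinkage is immediate, but pinning down the bijection between the direction of $\overrightarrow{pq}$ and the choice of corner is where the general-position hypothesis is genuinely used, and that is the step I would write out most carefully.
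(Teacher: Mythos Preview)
Your disk and triangle arguments are correct and align with the paper's own justification, which is nothing more than the shrinkability paragraph immediately preceding the observation (no formal proof is given).

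Your square argument has a genuine gap. You assert that shrinkage yields a homothet of $\sqr$ in which \emph{all four} sides meet $\{p,q\}$, and from this deduce that each point sits on two sides (i.e., at a corner). But a homothet of $\sqr$ has only three degrees of freedom (two for translation, one for scale), so at a smallest such square only two or three side-constraints need be active, not four. Concretely, take $p=(0,0)$ and $q=(2,1)$: the square $[0,2]\times[-\tfrac12,\tfrac32]$ is a smallest homothet with $p$ and $q$ on its boundary, yet $p$ lies only on the left side, $q$ only on the right, and neither the top nor the bottom side meets $\{p,q\}$. Your ``push the non-shared sides inward'' step implicitly moves sides independently, which produces a rectangle, not a homothet of $\sqr$. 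The direct argument is simpler: any axis-aligned square with $p$ and $q$ on its boundary has side length at least $\max(|p_x-q_x|,|p_y-q_y|)$; when this maximum is, say, $|p_x-q_x|$, a square of exactly that side length is forced to carry $p$ and $q$ on its left and right sides. Note also that the observation does \emph{not} claim uniqueness for $Q(p,q)$, which is consistent with the leftover degree of freedom in this example.
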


\begin{figure}[htb]
  \centering
\setlength{\tabcolsep}{0in}
  $\begin{tabular}{ccc}
\multicolumn{1}{m{.33\columnwidth}}{\centering\includegraphics[width=.25\columnwidth]{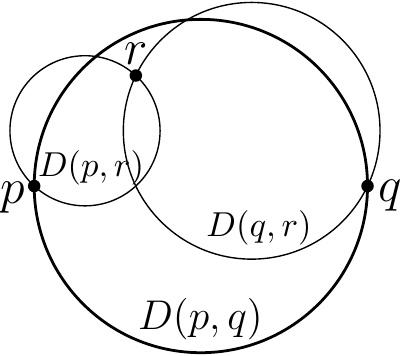}}
&\multicolumn{1}{m{.33\columnwidth}}{\centering\includegraphics[width=.22\columnwidth]{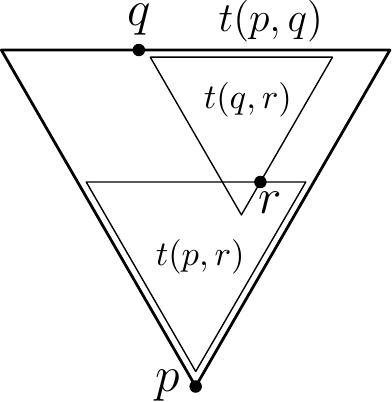}} &\multicolumn{1}{m{.33\columnwidth}}{\centering\includegraphics[width=.23\columnwidth]{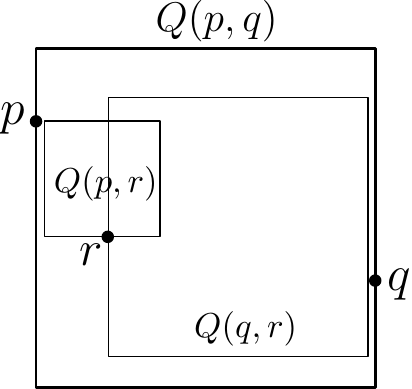}}
\end{tabular}$
  \caption{Illustration of Observation~\ref{sm:obs1}.}
\label{sm:mst-in-GS-fig}
\end{figure}
Given a shape $S\in\{\ddisc, \trid, \sqr\}$, we define an order on the homothets of $S$. Let $S_1$ and $S_2$ be two homothets of $S$. We say that $S_1\prec S_2$ if the area of $S_1$ is less than the area of $S_2$. Similarly, $S_1\preceq S_2$ if the area of $S_1$ is less than or equal to the area of $S_2$. We denote the homothet with the larger area by $\max\{S_1, S_2\}$. As illustrated in Figure~\ref{sm:mst-in-GS-fig}, if $S(p,q)$ contains a point $r$, then both $S(p,r)$ and $S(q,r)$ have smaller area than $S(p,q)$. Thus, we have the following observation:
 
\begin{observation}
\label{sm:obs1}
 If $S(p,q)$ contains a point $r$, then $\max\{S(p,r), \allowbreak S(q,r)\}\allowbreak \prec\allowbreak  S(p,q)$.
\end{observation}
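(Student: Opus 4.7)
The observation must be verified for each of the three shapes, and the argument is different for the smooth case $\ddisc$ than for the polygonal cases $\trid$ and $\sqr$. My plan is to dispatch the three cases in turn, using Observation~\ref{sm:shrink-triangle-obs} throughout and, for the polygonal cases, the shrinkability property introduced just before it.

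For $S=\ddisc$, the key fact from Observation~\ref{sm:shrink-triangle-obs} is that $D(p,q)$ has segment $pq$ as a diameter. The interior of such a disk is precisely the locus of points $r$ with $\angle prq>\pi/2$, so by the law of cosines applied to $\triangle prq$ we obtain $|pq|^2>|pr|^2+|qr|^2$, which forces $|pr|<|pq|$ and $|qr|<|pq|$. Since the diametral disk $D(x,y)$ has area $\frac{\pi}{4}|xy|^2$, both $D(p,r)$ and $D(q,r)$ are strictly smaller than $D(p,q)$, establishing the claim in this case.

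For $S\in\{\trid,\sqr\}$, I would argue by continuous shrinking. Consider the one-parameter family of shapes $S_\lambda$ obtained from $S(p,q)$ by the dilation $x\mapsto p+\lambda(x-p)$, $\lambda\in[0,1]$. Each $S_\lambda$ is a genuine homothet of $S$, because this dilation is itself a homothety and homotheties compose, and $p$ remains on the boundary of $S_\lambda$ since it is a fixed point of the map and lies on $\partial S(p,q)$. By general position, $r$ lies in the open interior of $S(p,q)=S_1$, so by continuity there is some $\lambda<1$ with $r\in S_\lambda$. That $S_\lambda$ is then a homothet of $S$, strictly smaller than $S(p,q)$, containing both $p$ and $r$. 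The shrinkability property supplies a homothet $S''\subseteq S_\lambda$ having $p$ and $r$ both on its boundary, and since $S(p,r)$ is a minimum-area such homothet we obtain $S(p,r)\preceq S''\preceq S_\lambda\prec S(p,q)$. A symmetric argument, scaling toward $q$ instead, gives $S(q,r)\prec S(p,q)$.

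The only real subtlety lies in the polygonal cases, and it reduces to two routine checks: that $x\mapsto p+\lambda(x-p)$ truly sends homothets of $S$ to homothets of $S$ (clear, since the map is itself a homothety), and that $r$ survives the shrinkage for some $\lambda<1$ (immediate from $r$ being strictly interior to $S(p,q)$). Beyond these, no significant obstacle is anticipated; the rest is bookkeeping against Observation~\ref{sm:shrink-triangle-obs}.
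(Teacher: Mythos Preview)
Your proof is correct and considerably more detailed than what the paper provides: the paper simply states the observation with reference to Figure~\ref{sm:mst-in-GS-fig} and offers no formal argument. The figure's intended message is direct geometric containment\textemdash that $S(p,r)$ and $S(q,r)$ sit inside $S(p,q)$\textemdash from which the area inequality is immediate.

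Your route is different. For the disk you give an explicit metric argument via the inscribed-angle characterisation and the law of cosines. For the polygonal shapes you do not argue containment at all; instead you shrink $S(p,q)$ homothetically toward $p$ until it is strictly smaller yet still contains $r$, and then invoke the shrinkability property to extract a homothet with $p$ and $r$ on its boundary. This is a clean way to obtain strictness without verifying containment shape by shape, and it makes the role of shrinkability explicit. The paper's picture-based approach is shorter but less transferable.

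One small remark: your appeal to general position to place $r$ in the open interior tacitly assumes $r\in P$, which the bare statement does not. As phrased, the strict inequality can actually fail on the boundary for the polygonal shapes (e.g.\ take $r$ at the second top vertex of $t(p,q)$ when $p$ is at the other), so some such hypothesis is genuinely needed; you might say so directly rather than invoking general position. For the disk the boundary case $\angle prq=\pi/2$ still gives $|pr|,|qr|<|pq|$ by Pythagoras, so there is no gap there.
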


\begin{definition}
 Given a point set $P$ and a shape $S\in\{\ddisc, \trid, \sqr\}$, we say that $P$ is in ``general position'' with respect to $S$ if
\begin{description}
 \item[$S=\ddisc$:] no four points of $P$ lie on the boundary of any diametral disk defined by any two points of $P$.
 \item[$S=\trid$:] the line passing through any two points of $P$ does not make angles $0^\circ$, $60^\circ$, or $120^\circ$ with the horizontal. This implies that no four points of $P$ are on the boundary of any homothet of $\trid$.
 \item[$S=\sqr$:] (i) no two points in $P$ have the same $x$-coordinate or the same $y$-coordinate, and (ii) no four points of $P$ lie on the boundary of any homothet of $\sqr$.
\end{description}
\end{definition}

Given a point set $P$ which is in general position with respect to a given shape $S\in\{\ddisc, \trid, \sqr\}$, let $K_S(P)$ be the complete edge-weighted geometric graph on $P$. For each edge $e=(p,q)$ in $K_S(P)$, we define $S(e)$ to be the shape $S(p,q)$, i.e., a smallest homothet of $S$ having $p$ and $q$ on its boundary. We say that $S(e)$ {\em represents} $e$, and vice versa. Furthermore, we assume that the weight $w(e)$ (resp. $w(p,q)$) of $e$ is equal to the area of $S(e)$. Thus,
$$w(p,q)<w(r,s) \quad\text{ if and only if }\quad S(p,q)\prec S(r,s).$$
Note that $\G{S}{P}$ is a subgraph of $K_S(P)$, and has an edge $(p,q)$ iff $S(p,q)$ does not contain any point of $P\setminus\{p,q\}$.

\begin{lemma}
\label{sm:mst-in-GS}
Let $P$ be a set of $n$ points in the plane which is in general position with respect to a given shape $S\in\{\ddisc, \trid, \sqr\}$. Then, any minimum spanning tree of $K_S(P)$ is a subgraph of $\G{S}{P}$. 
\end{lemma}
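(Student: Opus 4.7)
The plan is to prove the contrapositive by a standard MST cycle/cut-exchange argument. Let $T$ be a minimum spanning tree of $K_S(P)$ and suppose, for contradiction, that some edge $e=(p,q)$ of $T$ is not an edge of $\G{S}{P}$. By definition of $\G{S}{P}$, this means that $S(p,q)$ contains some point $r\in P\setminus\{p,q\}$ in its interior.

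Next, I would invoke Observation~\ref{sm:obs1} to get the strict area inequalities
\[
S(p,r)\prec S(p,q)\qquad\text{and}\qquad S(q,r)\prec S(p,q),
\]
which, by the definition of the edge weights in $K_S(P)$, translate to $w(p,r)<w(p,q)$ and $w(q,r)<w(p,q)$.

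Now I would apply the cut property of minimum spanning trees. Removing $e$ from $T$ splits $T$ into two subtrees $T_p$ and $T_q$ containing $p$ and $q$, respectively. Since $r\in P$, it belongs to exactly one of these subtrees; without loss of generality, say $r\in T_p$. Then $(q,r)$ is an edge of $K_S(P)$ that crosses the cut defined by $(V(T_p),V(T_q))$, and $T':=(T\setminus\{e\})\cup\{(q,r)\}$ is again a spanning tree of $K_S(P)$. Since $w(q,r)<w(p,q)=w(e)$, the total weight of $T'$ is strictly less than that of $T$, contradicting the minimality of $T$. Hence every edge of $T$ lies in $\G{S}{P}$, which is the desired conclusion.

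The argument is essentially routine once Observation~\ref{sm:obs1} is in place; the only subtlety worth noting is that the strict inequalities\textemdash and hence the strict decrease in total weight\textemdash rely on $r$ lying in the \emph{interior} of $S(p,q)$, which is guaranteed by the general position assumption on $P$ with respect to $S$ (ensuring that no extra point of $P$ can sit on the boundary of $S(p,q)$ in a way that would make $S(p,r)$ or $S(q,r)$ equal in area to $S(p,q)$). No step looks like a genuine obstacle.
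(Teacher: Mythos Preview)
Your proof is correct and follows essentially the same approach as the paper: assume an MST edge $(p,q)$ is absent from $\G{S}{P}$, find a point $r$ in $S(p,q)$, invoke Observation~\ref{sm:obs1} to get strictly lighter edges $(p,r)$ and $(q,r)$, and swap one in to contradict minimality. Your version is slightly more explicit about the cut/exchange step (identifying which of $(p,r)$, $(q,r)$ reconnects the tree), whereas the paper simply asserts that replacing $(p,q)$ by one of them yields a lighter spanning tree; also, your closing remark about general position is unnecessary since Observation~\ref{sm:obs1} already states the inequality strictly.
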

\begin{proof}
 The proof is by contradiction. Assume there exists an edge $e=(p,q)$ in a minimum spanning tree $T$ of $K_S(P)$ such that $e\notin \G{S}{P}$. Since $(p,q)$ is not an edge in $\G{S}{P}$, $S(p,q)$ contains a point $r$ such that $r\in P\setminus\{p,q\}$. By Observation~\ref{sm:obs1}, $\max\{S(p,r),S(q,r)\}\prec S(p,q)$. Thus, $w(p,r)<w(p,q)$ and $w(q,r)<w(p,q)$. By replacing the edge $(p,q)$ in $T$ with either $(p,r)$ or $(q,r)$, we obtain a spanning tree in $K_S(P)$ which is smaller than $T$. This contradicts the minimality of $T$.
\end{proof}

\begin{lemma}
\label{sm:cycle-lemma}
Let $G$ be an edge-weighted graph with edge set $E$ and edge-weight function $w:E\rightarrow\mathbb{R^+}$. For any cycle $C$ in $G$, if the maximum-weight edge in $C$ is unique, then that edge is not in any minimum spanning tree of $G$.
\end{lemma}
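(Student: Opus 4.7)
The plan is to give a standard exchange argument by contradiction, which is exactly the ``cycle property'' of minimum spanning trees. Suppose for contradiction that $T$ is a minimum spanning tree of $G$ that contains the unique maximum-weight edge $e=(u,v)$ of the cycle $C$.

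First I would remove $e$ from $T$. Since $T$ is a tree and $e$ is one of its edges, $T\setminus\{e\}$ decomposes into exactly two connected components $T_u$ and $T_v$, containing $u$ and $v$ respectively. The key observation is then the following: as $C$ is a cycle through $e$, walking along $C$ from $u$ to $v$ without using $e$ gives a path whose endpoints lie in different components of $T\setminus\{e\}$, and therefore this path must contain at least one edge $e'=(x,y)\in C\setminus\{e\}$ with $x\in V(T_u)$ and $y\in V(T_v)$.

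Next I would use the hypothesis that $e$ is the \emph{unique} maximum-weight edge of $C$: since $e'\in C$ and $e'\neq e$, this forces $w(e')<w(e)$. Now form $T':=(T\setminus\{e\})\cup\{e'\}$. By construction $T'$ reconnects $T_u$ and $T_v$, hence is a spanning tree of $G$, and its total weight satisfies
\[
w(T')=w(T)-w(e)+w(e')<w(T),
\]
contradicting minimality of $T$. Therefore no MST of $G$ contains $e$.

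There is no real obstacle here; the only thing to be slightly careful about is the existence of the crossing edge $e'$ in $C\setminus\{e\}$, which follows because any $u$--$v$ path in $G$ must include an edge whose endpoints lie in the two distinct components of $T\setminus\{e\}$. The uniqueness hypothesis is used exactly once, to convert the inequality $w(e')\le w(e)$ (which would otherwise follow from the standard cycle property) into the strict inequality $w(e')<w(e)$ needed to derive a genuine contradiction with the minimality of $T$.
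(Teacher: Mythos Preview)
Your argument is correct and is essentially identical to the paper's own proof: both proceed by contradiction, remove $e$ from $T$ to split it into components $T_u$ and $T_v$, find a crossing edge $e'\in C\setminus\{e\}$, use uniqueness to get $w(e')<w(e)$, and swap $e$ for $e'$ to contradict minimality. If anything, your version is slightly more careful in justifying the existence of the crossing edge $e'$.
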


\begin{proof}
The proof is by contradiction. Let $e=(u,v)$ be the unique maximum-weight edge in a cycle $C$ in $G$, such that $e$ is in a minimum spanning tree $T$ of $G$. Let $T_u$ and $T_v$ be the two trees obtained by removing $e$ from $T$. Let $e'=(x,y)$ be an edge in $C$ which connects a vertex $x\in T_u$ to a vertex $y\in T_v$. By assumption, $w(e')< w(e)$. Thus, in $T$, by replacing $e$ with $e'$, we obtain a tree $T'=T_u\cup T_v \cup\{(x,y)\}$ in $G$ such that $w(T')<w(T)$. This contradicts the minimality of $T$. 
\end{proof}

Recall that $t(p,q)$ is the smallest homothet of $\trid$ which has $p$ and $q$ on its boundary. Similarly, let $t'(p, q)$ denote the smallest upward equilateral-triangle $\triu$ having $p$ and $q$ on its boundary. Note that $t'(p, q)$ is uniquely defined by $p$ and $q$, and it has one of $p$ and $q$ on a corner and the other point is on the side opposite to that corner. In addition the area of $t'(p, q)$ is equal to the area of $t(p, q)$. 

\begin{figure}[htb]
 \begin{center}
\includegraphics[width=.38\textwidth]{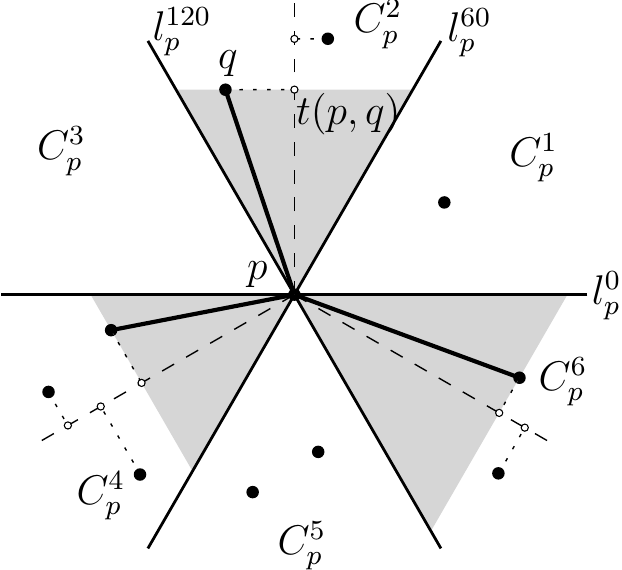}
  \end{center}
  \caption{The construction of $\G{\trid}{P}$.}
\label{sm:cones}
\end{figure}

$\G{\trids}{P}$ is equal to the triangular-distance Delaunay graph $\text{\em TD-DG}(P)$, which is in turn equal to a half theta-six graph $\frac{1}{2}\Theta_6(P)$~\cite{Bonichon2010}. 
A half theta-six graph on $P$, and equivalently $\G{\trids}{P}$, can be constructed in the following way. For each point $p$ in $P$, let $l_p$ be the horizontal line through $p$. Define $l_p^{\gamma}$ as the line obtained by rotating $l_p$ by $\gamma$-degrees in counter-clockwise direction around $p$. Thus, $l_p^0=l_p$. Consider three lines $l_p^{0}$, $l_p^{60}$, and $l_p^{120}$ which partition the plane into six disjoint cones with apex $p$. Let $C_p^1, \dots, C_p^6$ be the cones in counter-clockwise order around $p$ as shown in Figure~\ref{sm:cones}. $C_p^1,C_p^3,C_p^5$ will be referred to as {\em odd cones}, and $C_p^2,C_p^4,C_p^6$ will be referred to as {\em even cones}. For each even cone $C_p^i$, connect $p$ to the ``nearest'' point $q$ in $C_p^i$. The {\em distance} between $p$ and $q$, is defined as the Euclidean distance between $p$ and the orthogonal projection of $q$ onto the bisector of $C_p^i$. See Figure~\ref{sm:cones}. In other words, the nearest point to $P$ in $\cone{i}{p}$ is a point $q$ in $\cone{i}{p}$ which minimizes the area of $t(p,q)$. The resulting graph is the half theta-six graph which is defined by even cones \cite{Bonichon2010}. Moreover, the resulting graph is $\G{\trids}{P}$ which is defined with respect to the homothets of $\trid$. By considering the odd cones, $\G{\trius}{P}$ is obtained. By considering the odd cones and the even cones, $\GUD(P)$\textemdash which is equal to $\Theta_6(P)$\textemdash is obtained. Note that $\GUD(P)$ is the union of $\G{\trids}{P}$ and $\G{\trius}{P}$. 

Let $X(p,q)$ be the regular hexagon centered at $p$ which has $q$ on its boundary, and its sides are parallel to $l_p^0$, $l_p^{60}$, and $l_p^{120}$. Then, we have the following observation:
\begin{observation}
\label{sm:obs2}
If $X(p,q)$ contains a point $r$, then $t(p,r)\prec t(p,q)$.
\end{observation}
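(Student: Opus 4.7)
The plan is to prove the stronger geometric fact that the hexagon $X(p,q)$ coincides exactly with the sublevel set $\{r : t(p,r) \preceq t(p,q)\}$; the observation then follows at once, since every $r$ strictly inside $X(p,q)$ lies strictly inside this sublevel set and hence satisfies $t(p,r) \prec t(p,q)$.

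My first step is to reformulate $t(p,q)$ via a Minkowski-sum argument. Let $\trid_s$ denote the downward equilateral triangle of side $s$ with barycenter at the origin. There exists a translate of $\trid_s$ simultaneously containing $p$ and $q$ if and only if $(p - \trid_s) \cap (q - \trid_s) \neq \emptyset$, which in turn is equivalent to $q - p \in H_s$, where $H_s := \trid_s + (-\trid_s)$ is the Minkowski sum of $\trid_s$ and its point-reflection through the origin. By the shrinkability of equilateral triangles, the smallest such translate places both points on its boundary, so the side length of $t(p,q)$ is exactly the smallest $s$ with $q - p \in H_s$.

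My second step is to compute $H_s$ explicitly. A short vertex-arithmetic calculation (summing each vertex of $\trid_s$ with each vertex of $-\trid_s$ and discarding interior points) shows that $H_s$ is a regular hexagon centered at the origin, of side length $s$, whose six sides lie in the three directions of $l_p^0$, $l_p^{60}$, and $l_p^{120}$. Hence $\{r : t(p,r) \text{ has side} \leq s\} = p + H_s$, a regular hexagon centered at $p$ with precisely the orientation that defines $X$. Since $X(p,q)$ is the unique regular hexagon centered at $p$, with sides parallel to the three lines, that passes through $q$ on its boundary, we must have $X(p,q) = p + H_{s_0}$, where $s_0$ is the side length of $t(p,q)$.

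Finally, a point $r$ in the interior of $X(p,q)$ lies in $p + H_{s'}$ for some $s' < s_0$, which forces the side length of $t(p,r)$ to be strictly less than $s_0$; since the area of an equilateral triangle is monotone in its side length, we conclude $t(p,r) \prec t(p,q)$. The only substantive step is the Minkowski-sum computation identifying the shape, orientation, and side length of $H_s$; a fully equivalent alternative, should one wish to avoid Minkowski sums, is a direct case analysis over the six cones $C_p^1,\ldots,C_p^6$, in which $t(p,r)$ is written down explicitly (with $p$ at a corner for $r$ in an even cone, and $r$ at a corner for $r$ in an odd cone), its side length is read off as a linear function of one coordinate of $r$, and one checks that the six level sets agree on the cone boundaries to reassemble $X(p,q)$.
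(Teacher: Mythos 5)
Your proof is correct, and it supplies a justification the paper omits: Observation~\ref{sm:obs2} is stated without proof, treating it as self-evident from the triangular-distance / half-$\Theta_6$ cone construction set up just before it (the intended reasoning is essentially the cone-by-cone argument you sketch at the end, which is the standard route in the $\Theta_6$ literature). Your Minkowski-sum derivation is a sound and arguably cleaner alternative. The reformulation of ``some translate of $\trid_s$ covers both $p$ and $q$'' as ``$q-p$ lies in the difference body $\trid_s+(-\trid_s)$'' is exact; the difference body of an equilateral triangle of side $s$ is indeed a regular hexagon of side $s$ whose six edges lie in the three directions $l_p^0,l_p^{60},l_p^{120}$ (each edge of the Minkowski sum is a translate of an edge of one summand, the edge normals interleave at $60^\circ$, and central symmetry plus equal edge lengths then forces regularity); and the shrinkability property of triangles supplies the equivalence between ``smallest translate covering both points'' and the paper's ``smallest homothet with both points on the boundary.'' Together these identify $X(p,q)$ with the closed sublevel set $\{r : t(p,r)\preceq t(p,q)\}$, and strict interior containment gives strict $\prec$. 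The one point worth making explicit (which you do, but which the paper's wording leaves implicit) is that ``$X(p,q)$ contains $r$'' must be read as interior containment, as in the downstream uses such as Lemma~\ref{sm:deg-six-half}; a boundary point $r$ would give $t(p,r)$ of the same area as $t(p,q)$ rather than a strictly smaller one.
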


\section{Strong Matching in $\G{S}{P}$}
\label{sm:algorithm-section}

Given a point set $P$ in the plane which is in general position with respect to a given shape $S\in\{\ddisc, \trid, \sqr\}$, in this section we present an algorithm which computes a strong matching in $\G{S}{P}$. Recall that $K_S(P)$ is the complete edge-weighted graph on $P$ with the weight of each edge $e$ is equal to the area of $S(e)$, where $S(e)$ is a smallest homothet of $S$ representing $e$. Let $T$ be a minimum spanning tree of $K_S(P)$. By Lemma~\ref{sm:mst-in-GS}, $T$ is a subgraph of $\G{S}{P}$. For each edge $e\in T$ we denote by $T(e^+)$ the set of all edges in $T$ whose weight is at least $w(e)$. Moreover, we define the {\em influence set} of $e$, as the set of all edges in $T(e^+)$ whose representing shapes overlap with $S(e)$, i.e.,
$$\Inf{e}=\{e': e'\in T(e^+), S(e')\cap S(e)\neq \emptyset\}.$$

Note that $\Inf{e}$ is not empty, as $e\in \Inf{e}$. Consequently, we define the {\em influence number} of $T$ to be the maximum size of a set among the influence sets of edges in $T$, i.e.,
$$\Inf{T}=\max\{|\Inf{e}|: e\in T\}.$$

Algorithm~\ref{sm:alg1} receives $\G{S}{P}$ as input and computes a strong matching in $\G{S}{P}$ as follows. The algorithm starts by computing a minimum spanning tree $T$ of $\G{S}{P}$, where the weight of each edge is equal to the area of its representing shape. Then it initializes a forest $F$ by $T$, and a matching $\mathcal{M}$ by an empty set. Afterwards, as long as $F$ is not empty, the algorithm adds to $\mathcal{M}$, the smallest edge $\emin$ in $F$, and removes the influence set of $e$ from $F$. Finally, it returns $\mathcal{M}$.
\begin{algorithm}                      
\caption{\SMGG$(\G{S}{P})$}          
\label{sm:alg1} 
\begin{algorithmic}[1]
      \State $T\gets \MST(G_S(P))$
      \State $F\gets T$
      \State $\mathcal{M}\gets \emptyset$
      \While {$F\neq \emptyset$}
	  \State $\emin\gets $ smallest edge in $F$
	  \State $\mathcal{M}\gets \mathcal{M}\cup \{\emin\}$
	  \State $F\gets F - \Inf{\emin}$
	  \EndWhile
    \State \Return $\mathcal{M}$
\end{algorithmic}
\end{algorithm}
\begin{theorem}
\label{sm:GS-thr}
Given a set $P$ of $n$ points in the plane and a shape $S\in\{\ddisc, \trid, \sqr\}$, Algorithm~\ref{sm:alg1} computes a strong matching of size at least $\lceil\frac{n-1}{\emph{Inf}(T)}\rceil$ in $\G{S}{P}$, where $T$ is a minimum spanning tree of $\G{S}{P}$. 
\end{theorem}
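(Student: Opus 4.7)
The plan is to verify three things about the output $\mathcal{M}$ of Algorithm~\ref{sm:alg1}: (i) $\mathcal{M}$ is a matching in $\G{S}{P}$, (ii) $\mathcal{M}$ is strong, and (iii) $|\mathcal{M}|\ge\lceil\frac{n-1}{\text{Inf}(T)}\rceil$. By Lemma~\ref{sm:mst-in-GS}, $T\subseteq \G{S}{P}$, so every edge added to $\mathcal{M}$ is indeed an edge of $\G{S}{P}$. The cardinality bound (iii) is just a counting argument: $T$ has exactly $n-1$ edges, and each iteration of the while loop removes at most $|\Inf{\emin}|\le \text{Inf}(T)$ edges from $F$ while contributing one edge to $\mathcal{M}$. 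Hence the loop executes at least $\lceil\frac{n-1}{\text{Inf}(T)}\rceil$ times, giving the required size bound.

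The heart of the proof is (ii), from which (i) follows for free, since two edges sharing a vertex have overlapping representing shapes. The argument will be by contradiction. Suppose $\emin_1$ and $\emin_2$ are both in $\mathcal{M}$, with $\emin_1$ chosen in an earlier iteration than $\emin_2$, and suppose $S(\emin_1)\cap S(\emin_2)\neq\emptyset$. At the iteration in which $\emin_1$ was chosen, $\emin_2$ was still present in $F$ (since it is only selected in a later iteration, it has not yet been removed). Because $\emin_1$ was chosen as the smallest edge of $F$ at that moment, we have $w(\emin_1)\le w(\emin_2)$, so $\emin_2\in T(\emin_1^+)$. Combining this with the assumed overlap $S(\emin_1)\cap S(\emin_2)\neq\emptyset$ gives $\emin_2\in\Inf{\emin_1}$. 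But then line~7 of the algorithm would have removed $\emin_2$ from $F$ at the iteration of $\emin_1$, contradicting the fact that $\emin_2$ was still in $F$ and was later selected. Therefore no two edges of $\mathcal{M}$ have overlapping representing shapes, so $\mathcal{M}$ is a strong matching.

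I do not expect any genuine obstacle here: the definition of $\Inf{e}$ is precisely tailored so that, when edges are processed in non-decreasing order of weight, picking the current smallest edge and discarding its influence set guarantees that every surviving edge has strictly smaller weight or strictly disjoint shape from the edge just picked. The only thing to be slightly careful about is the weight comparison at the moment $\emin_1$ is chosen: we want $w(\emin_2)\ge w(\emin_1)$ so that $\emin_2$ qualifies for $T(\emin_1^+)$, and this is immediate from the fact that $\emin_1$ is selected as the minimum-weight edge still in $F$. Once (ii) is established, (i) follows because two matching edges sharing a vertex $v$ would have $v\in S(\emin_1)\cap S(\emin_2)$, already ruled out. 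Combining (i), (ii), and (iii) completes the proof of Theorem~\ref{sm:GS-thr}.
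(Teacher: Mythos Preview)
Your proposal is correct and follows essentially the same approach as the paper: both argue strongness by noting that if $\emin_2$ survives the removal of $\Inf{\emin_1}$ then $S(\emin_1)\cap S(\emin_2)=\emptyset$ (using that $\emin_1$ was the minimum-weight edge in $F$, so $\emin_2\in T(\emin_1^+)$), and both obtain the size bound by counting that each iteration removes at most $\text{Inf}(T)$ edges from the $n-1$ edges of $T$. Your explicit derivation that $\mathcal{M}$ is a matching (part~(i)) from strongness is a nice addition that the paper leaves implicit.
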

\begin{proof}
Let $\mathcal{M}$ be the matching returned by Algorithm~\ref{sm:alg1}. First we show that $\mathcal{M}$ is a strong matching. If $\mathcal{M}$ contains one edge, then trivially, $\mathcal{M}$ is a strong matching. Consider any two edges $e_1$ and $e_2$ in $\mathcal{M}$. Without loss of generality assume that $e_1$ is considered before $e_2$ in the {\sf while} loop. At the time $e_1$ is added to $\mathcal{M}$, the algorithm removes from $F$, the edges in $\Inf{e_1}$, i.e., all the edges whose representing shapes intersect $S(e_1)$. Since $e_2$ remains in $F$ after the removal of $\Inf{e_1}$, $e_2\notin\Inf{e_1}$. This implies that $S(e_1)\cap S(e_2)=\emptyset$, and hence $\mathcal{M}$ is a strong matching.

In each iteration of the {\sf while} loop we select $\emin$ as the smallest edge in $F$, where $F$ is a subgraph of $T$. Then, all edges in $F$ have weight at least $w(e)$. Thus, $F\subseteq T(\emin^+)$; which implies that the set of edges in $F$ whose representing shapes intersect $S(\emin)$ is a subset of $\Inf{\emin}$. Therefore, in each iteration of the {\sf while} loop, out of at most $|\Inf{e}|$-many edges of $T$, we add one edge to $\mathcal{M}$. Since $|\Inf{\emin}|\le \Inf{T}$ and $T$ has $n-1$ edges, we conclude that $|\mathcal{M}|\ge\lceil\frac{n-1}{\Inf{T}}\rceil$.
\end{proof}

\paragraph{Remark}
Let $T$ be the minimum spanning tree computed by Algorithm~\ref{sm:alg1}. Let $e=(u,v)$ be an edge in $T$. Recall that $T(e^+)$ contains all the edges of $T$ whose weight is at least $w(e)$. We define the {\em degree} of $e$ as $\dg{e}=\dg{u}+\dg{v}-1$, where $\dg{u}$ and $\dg{v}$ are the number of edges incident on $u$ and $v$ in $T(e^+)$, respectively. Note that all the edges incident on $u$ or $v$ in $T(e^+)$ are in the influence set of $e$. Thus, $|\Inf{e}|\ge \dg{e}$, and consequently $\Inf{T}\ge \dg{e}$.

\section{Strong Matching in $\G{\ddiscs}{P}$}
\label{sm:Gabriel-section}
In this section we consider the case where $S$ is a diametral-disk $\ddisc$. Recall that $\G{\ddiscs}{P}$ is an edge-weighted geometric graph, where the weight of an edge $(p,q)$ is equal to the area of $D(p,q)$. $\G{\ddiscs}{P}$ is equal to the Gabriel graph, $GG(P)$. We prove that $\G{\ddiscs}{P}$, and consequently $GG(P)$, has a strong diametral-disk matching of size at least $\lceil\frac{n-1}{17}\rceil$. 

We run Algorithm~\ref{sm:alg1} on $\G{\ddiscs}{P}$ to compute a matching $\mathcal{M}$. By Theorem~\ref{sm:GS-thr}, $\mathcal{M}$ is a strong matching of size at least $\lceil\frac{n-1}{\Inf{T}}\rceil$, where $T$ is a minimum spanning tree in $\G{\ddiscs}{P}$. By Lemma~\ref{sm:mst-in-GS}, $T$ is a minimum spanning tree of the complete graph $K_{\ddiscs}(P)$. Observe that $T$ is a Euclidean minimum spanning tree for $P$ as well. In order to prove the desired lower bound, we show that $\Inf{T}\le 17$. Since $\Inf{T}$ is the maximum size of a set among the
influence sets of edges in $T$, it suffices to show that for every edge $e$ in $T$, the influence set of $e$ contains at most 17 edges. 
\begin{lemma}
\label{sm:disk-inf-lemma}
Let $T$ be a minimum spanning tree of $\G{\ddiscs}{P}$, and let $e$ be any edge in $T$. Then, $|\emph{Inf}(e)|\le 17$.
\end{lemma}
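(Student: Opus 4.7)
The plan is to classify the edges of $\Inf(e)\setminus\{e\}$ geometrically and to count each class separately. Normalize so that $e=(a,b)$ has $|ab|=1$ and $D(e)$ is the unit-diameter disk centered at the origin $o$. Every other edge $e'=(c,d)\in \Inf(e)$ then satisfies $|cd|\ge 1$ (from $w(e')\ge w(e)$) and $D(c,d)\cap D(e)\ne\emptyset$. Since $e,e'\in T\subseteq \G{\ddiscs}{P}$, two Gabriel-emptiness constraints are available: $D(e)$ contains no point of $P\setminus\{a,b\}$, and $D(c,d)$ contains no point of $P\setminus\{c,d\}$, so in particular $a,b\notin D(c,d)$. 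I would split $\Inf(e)\setminus\{e\}$ into class (A), consisting of edges that share a vertex with $e$, and class (B), consisting of edges vertex-disjoint from $e$. The Gabriel-emptiness of $D(e)$ forces both $c$ and $d$ of every class-(B) edge to lie strictly outside $D(e)$.

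For class (A) I would invoke the standard $60^{\circ}$-angle property of a Euclidean minimum spanning tree: if two MST-edges incident to a common vertex subtended an angle less than $60^{\circ}$ there, the triangle they span together with the third side would have a unique heaviest edge lying in $T$, contradicting Lemma~\ref{sm:cycle-lemma}. Hence each of $a$ and $b$ has $T$-degree at most $6$, yielding at most $5+5=10$ edges in class (A) besides $e$ itself.

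For class (B) I would first argue that the endpoint of $e'$ closer to $o$ must lie in a bounded annular region just outside $D(e)$: the conditions $|cd|\ge 1$, $D(c,d)\cap D(e)\ne\emptyset$, $a,b\notin D(c,d)$, and $c,d\notin D(e)$ together pin down its location up to a bounded distance from $o$. Combining planarity of $T$ (so class-(B) edges neither cross $e$ nor each other), the $60^{\circ}$-angle property at the endpoints, and pairwise separation of endpoints belonging to distinct class-(B) edges (derived from the Gabriel-emptiness of the respective disks and $|cd|\ge 1$), I would then bound the number of class-(B) edges by $6$ via an angular sweep around $o$. Adding the edge $e$ itself this gives $|\Inf(e)|\le 1+10+6=17$.

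The hard part will be handling very long class-(B) edges. An edge with $|cd|\gg 1$ whose disk $D(c,d)$ only grazes $D(e)$ can have both endpoints far from $o$, so a bare ``all relevant points lie in a disk of bounded radius about $o$'' count fails. I plan to resolve this exactly as in the proof of Theorem~\ref{9-GG-thr}: project each such endpoint radially onto a fixed circle of radius $\tfrac{3}{2}$ around $o$, use the Gabriel-emptiness of $D(c,d)$ together with the non-crossing property of $T$ to show that the projected points inherit a lower bound on their mutual Euclidean distance, and then apply a Fodor-type circle-packing estimate to the points on this circle to obtain the required angular count.
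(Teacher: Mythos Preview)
Your class-(A) count via the MST degree bound is correct, but the class-(B) argument has a genuine gap, and the target of $6$ is asserted rather than derived. The paper does not use your split at all. The right objects to track are not the \emph{endpoints} of the edges in $\Inf{e}\setminus\{e\}$ but the \emph{centers} $c_i$ of their diametral disks. In the paper's normalization ($D(e)$ has radius $1$, center $o$, endpoints $u,v$), every such disk has radius $\ge 1$. The crucial structural fact you are missing is Lemma~\ref{sm:center-in-lemma}: no MST disk contains the center of another. Combined with Observation~\ref{sm:no-point-in-circle-obs} (no disk contains $u$ or $v$), each $c_i$ is at distance $\ge 1$ from $o$, from $u$, from $v$, and from every other $c_j$. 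Centers with $\|c_i\|\ge 2$ are projected radially onto $C(o,2)$, and the cosine-rule computation of Lemma~\ref{sm:distance-lemma} shows the projected points still have pairwise distance $\ge 1$. This places $k+3$ points in the closed disk of radius $2$, one of them at the center, all at mutual distance $\ge 1$; the Bateman--Erd\H{o}s bound (at most $19$ such points) gives $k\le 16$ and hence $|\Inf{e}|\le 17$, uniformly over class-(A) and class-(B) edges.

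Your endpoint-projection plan does not deliver the required separation. Two class-(B) edges may share a vertex, and even when they do not, Gabriel-emptiness of $D(c,d)$ only keeps the \emph{other} edge's endpoints out of $D(c,d)$; this gives no direct lower bound on the distance between their radial projections onto a small circle about $o$, and planarity of $T$ alone does not convert into an angular count of $6$. Your analogy with Theorem~\ref{9-GG-thr} is the right instinct, but there the pairwise separation of the projected points $s_i$ came from the lexicographic minimality of the matching (Lemma~\ref{ss-lemma}); the correct analogue of that separation here is precisely Lemma~\ref{sm:center-in-lemma}, and the correct points to project are the disk centers.
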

We will prove this lemma in the rest of this section. Recall that, for each two points $p,q\in P$, $D(p,q)$ is the closed diametral-disk with diameter $pq$. Let $\mathcal{D}$ denote the set of diametral-disks representing the edges in $T$. Since $T$ is a subgraph of $\G{\ddiscs}{P}$, we have the following observation:

\begin{observation}
\label{sm:no-point-in-circle-obs}
 Each disk in $\mathcal{D}$ does not contain any point of $P$ in its interior.
\end{observation}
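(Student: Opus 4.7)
The plan is to derive the observation directly from Lemma~\ref{sm:mst-in-GS} together with the defining property of $\G{\ddiscs}{P}$. First I would recall that, by Lemma~\ref{sm:mst-in-GS}, any minimum spanning tree of $K_{\ddiscs}(P)$ is a subgraph of $\G{\ddiscs}{P}$; in particular the tree $T$ under consideration satisfies $T \subseteq \G{\ddiscs}{P}$. Hence every edge of $T$ is an edge of $\G{\ddiscs}{P}$.

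Next I would unpack what ``edge of $\G{\ddiscs}{P}$'' means for the diametral-disk shape: by the definition recalled in Section~\ref{sm:previous-work}, a pair $(p,q)$ is an edge of $\G{\ddiscs}{P}$ exactly when $D(p,q)$, the unique diametral-disk having $pq$ as a diameter, contains no point of $P \setminus \{p,q\}$. Applying this to an arbitrary edge $e=(p,q)$ of $T$ gives that $D(p,q)$ is free of points of $P \setminus \{p,q\}$.

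Finally I would address the two remaining points $p$ and $q$ themselves. By Observation~\ref{sm:shrink-triangle-obs}, $p$ and $q$ lie on the boundary of $D(p,q)$ (since $pq$ is a diameter), so they are not interior points of $D(p,q)$. Combining this with the previous step, no point of $P$ lies in the interior of $D(p,q)$. Since this holds for every edge of $T$, and $\mathcal{D}$ is precisely the collection of disks $D(p,q)$ for $(p,q) \in T$, the claim follows. There is no real obstacle here; the observation is essentially a bookkeeping restatement of Lemma~\ref{sm:mst-in-GS} in the language of empty diametral-disks, and I would simply record it explicitly because later arguments (bounding $|\mathrm{Inf}(e)|$) will repeatedly invoke the emptiness of these disks.
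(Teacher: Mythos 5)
Your proof is correct and follows the same route the paper takes: the observation is a direct restatement of the definition of an edge of $\G{\ddiscs}{P}$, applied to the edges of $T$, which Lemma~\ref{sm:mst-in-GS} guarantees lie in $\G{\ddiscs}{P}$. The final remark about $p$ and $q$ lying on the boundary is harmless but redundant, since the definition of $\G{S}{P}$ already speaks only of the interior of the homothet.
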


We have proved the following lemma in Chapter~\ref{ch:gg}. 
\begin{lemma}
\label{sm:center-in-lemma}
 For each pair $D_i$ and $D_j$ of disks in $\mathcal{D}$, $D_i$ (resp. $D_j$) does not contain the center of $D_j$ (resp $D_i$).
\end{lemma}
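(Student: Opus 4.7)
My plan is to argue by contradiction and produce an edge-swap that strictly decreases the weight of $T$, contradicting the fact that $T$ is a minimum spanning tree. By symmetry it suffices to assume $|a_ib_i|\le|a_jb_j|$ (equivalently $r_i\le r_j$ for the radii of $C_i,C_j$) and to rule out $c_i\in D_j$; indeed $c_j\in D_i$ already forces $c_i\in D_j$ because $|c_ic_j|\le r_i\le r_j$. Once $c_i\in D_j$, the circles $C_i$ and $C_j$ must cross, since otherwise $C_i\subseteq D_j$ would place $a_i,b_i$ in the interior of $D_j$, contradicting Observation~\ref{sm:no-point-in-circle-obs}. Call the two intersection points $x$ and $y$.

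The geometric heart of the argument will be a short estimate. Let $x_i,x_j$ be the points diametrically opposite $y$ on $C_i,C_j$, and $y_i,y_j$ those diametrically opposite $x$. Thales' theorem applied to the diameters $x_iy$ and $x_jy$ shows that $x_i,x,x_j$ are collinear (both $xx_i$ and $xx_j$ are perpendicular to $xy$). Because $D_i$ and $D_j$ are empty of points of $P$, I can relabel so that $a_i\in\widehat{x_ix}$ and $a_j\in\widehat{x_jx}$; each arc is minor, which gives $|a_ia_j|\le|a_ix|+|xa_j|\le|x_ix|+|xx_j|=|x_ix_j|$, and symmetrically $|b_ib_j|\le|y_iy_j|$.

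The main obstacle will be showing the strict inequality $|x_ix_j|<2r_j=|a_jb_j|$, which I plan to handle with a hinge (SAS) comparison. Split $\triangle x_ix_jy$ by the cevian $c_ix_j$ into $\triangle x_ix_jc_i$ and $\triangle c_ix_jy$; they share the side $c_ix_j$ and $|x_ic_i|=|c_iy|=r_i$. Since $c_i$ lies strictly inside $D_j$ and $yx_j$ is a diameter of $C_j$, the angle $\angle yc_ix_j$ exceeds $\pi/2$, forcing $\angle x_ic_ix_j<\pi/2$. The hinge inequality then yields $|x_ix_j|<|x_jy|=2r_j$, so $|a_ia_j|<|a_jb_j|$; the symmetric argument at $y$ gives $|b_ib_j|<|a_jb_j|$.

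To finish, remove $e_j=(a_j,b_j)$ from $T$, splitting it into subtrees $T_{a_j}$ and $T_{b_j}$. Because $e_i\in T$, the endpoints $a_i,b_i$ lie in the same subtree. If both lie in $T_{a_j}$, the edge $(b_i,b_j)$ crosses the cut and is strictly shorter than $e_j$, so $(T\setminus\{e_j\})\cup\{(b_i,b_j)\}$ is a spanning tree of strictly smaller total weight; the case $a_i,b_i\in T_{b_j}$ is symmetric using $(a_i,a_j)$. Either way $T$ fails to be minimum, the desired contradiction. The delicate point I will have to argue carefully is that the labels can indeed be chosen so that $a_i,a_j$ (resp.\ $b_i,b_j$) end up on paired arcs rather than on opposite sides; the emptiness of $D_i$ and $D_j$ guaranteed by Observation~\ref{sm:no-point-in-circle-obs} is what makes this possible.
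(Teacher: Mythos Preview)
Your proof is correct and follows essentially the same route as the paper's: the contradiction setup, the intersection points $x,y$, the antipodal points $x_i,x_j,y_i,y_j$, the collinearity via Thales, the arc bound $|a_ia_j|\le|x_ix_j|$, and the hinge comparison on $\triangle x_ix_jc_i$ versus $\triangle c_ix_jy$ all match the paper's argument exactly. The only cosmetic difference is the final step: the paper invokes its cycle lemma (Lemma~\ref{gg:not-mst-edge}) on the cycle $a_i,a_j,b_j,b_i,a_i$, whereas you use the cut property directly by deleting $e_j$ and reconnecting with the shorter crossing edge --- these are equivalent MST arguments.
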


Let $e=(u,v)$ be an edge in $T$. Without loss of generality, we suppose that $D(u,v)$ has radius 1 and centered at the origin $o=(0,0)$ such that $u=(-1,0)$ and $v=(1,0)$. For any point $p$ in the plane, let $\|p\|$ denote the distance of $p$ from $o$. Let $\mathcal{D}(e^+)$ be the disks in $\mathcal{D}$ representing the edges of $T(e^+)$. Recall that $T(e^+)$ contains the edges of $T$ whose weight is at least $w(e)$, where $w(e)$ is equal to the area of $\cmin$. Since the area of any circle is directly related to its radius, we have the following observation:

\begin{observation}
 \label{sm:radius-one}
The disks in $\mathcal{D}(e^+)$ have radius at least $1$.
\end{observation}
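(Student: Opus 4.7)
The plan is straightforward since this observation is essentially a restatement of the definition of $T(e^+)$ combined with the normalization we just imposed on $e=(u,v)$. I would unfold both sides of the weight inequality and compare radii directly, without any geometric case analysis.

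First I would recall the relevant definitions. By the construction in Algorithm~\ref{sm:alg1} (and the setup at the start of Section~\ref{sm:algorithm-section}), the weight function $w$ on $K_{\ddiscs}(P)$ assigns to each edge $e'=(p,q)$ the area of its representing diametral-disk $D(p,q)$. In particular, if $D(p,q)$ has radius $r'$, then $w(e')=\pi r'^{2}$. By definition, every edge $e'\in T(e^+)$ satisfies $w(e')\ge w(e)$, so every disk in $\mathcal{D}(e^+)$ has area at least equal to the area of $D(u,v)$.

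Next I would use the normalization. We assumed that $D(u,v)$ has radius $1$ and is centered at the origin, so $w(e)=\pi\cdot 1^{2}=\pi$. Combining this with the previous inequality gives $\pi r'^{2}\ge \pi$ for the radius $r'$ of any disk in $\mathcal{D}(e^+)$, hence $r'\ge 1$, which is exactly the claim.

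There is no real obstacle here; the statement is essentially a bookkeeping observation set up precisely so that it can be invoked repeatedly in the sequel (for instance, when comparing the disk $D(u,v)$ against the larger disks in $\mathcal{D}(e^+)$ and applying Lemma~\ref{sm:center-in-lemma} or Observation~\ref{sm:no-point-in-circle-obs} to bound $|\mathrm{Inf}(e)|$ in the proof of Lemma~\ref{sm:disk-inf-lemma}). The only thing worth double-checking is that the weight function really is area and not, say, diameter or radius; but this was fixed in Section~\ref{sm:preliminaries}, so no ambiguity remains.
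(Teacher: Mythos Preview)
Your proposal is correct and matches the paper's own justification exactly: the paper simply remarks, immediately before the observation, that the weight $w(e)$ equals the area of $D(u,v)$ and that area is monotone in the radius, which is precisely what you unfold. There is nothing to add.
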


Let $C(x,r)$ (resp. $D(x,r)$) be the circle (resp. closed disk) of radius $r$ which is centered at a point $x$ in the plane. 
Let $\mathcal{I}(e^+)=\{D_1,\dots, D_k\}$ be the set of disks in $\mathcal{D}(e^+)\setminus\{D(u,v)\}$ intersecting $D(u,v)$. We show that $\mathcal{I}(e^+)$ contains at most sixteen disks, i.e., $k\le 16$.

For $i\in\{1,\dots,k\}$, let $c_i$ denote the center of the disk $D_i$. 
In addition, let $c'_i$ be the intersection point between $C(o,2)$ and the ray with origin at $o$ which passing through $c_i$. Let the point $p_i$ be $c_i$, if $\|c_i\|< 2$, and $c'_i$, otherwise. See Figure~\ref{sm:distance-fig}. Finally, let $P'=\{o, u, v, p_1,\dots,p_k\}$. 

\begin{observation}
\label{sm:obs}
Let $c_j$ be the center of a disk $D_j$ in $\mathcal{I}(e^+)$, where $\|c_j\|\ge 2$. Then, the disk $D(c_j, \|c_j\|-1)$ is contained in the disk $D_j$. Moreover, the disk $D(p_j,1)$ is contained in the disk $D(c_j, \|c_j\|-1)$. See Figure~\ref{sm:distance-fig}.
\end{observation}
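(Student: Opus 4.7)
The plan is to verify both inclusions by elementary distance arguments, exploiting two facts: that $D_j$ must actually intersect the unit disk $D(u,v) = D(o,1)$ (since $D_j \in \mathcal{I}(e^+)$), and that $p_j$ is placed on the segment $o c_j$ at the specific distance $2$ from the origin whenever $\|c_j\| \ge 2$.

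For the first inclusion $D(c_j, \|c_j\|-1) \subseteq D_j$, I would begin by bounding the radius $r_j$ of $D_j$ from below. Because $D_j$ intersects $D(o,1)$, the distance between their centers is at most the sum of their radii, i.e. $\|c_j\| \le r_j + 1$, so $r_j \ge \|c_j\| - 1$. This is nontrivial only because $\mathcal{I}(e^+)$ is defined via disks of area at least that of $D(u,v)$; here I only use the weaker fact that they \emph{intersect} $D(u,v)$, which is given. The inclusion then follows immediately, since every point within distance $\|c_j\|-1$ of $c_j$ is within distance $r_j$ of $c_j$.

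For the second inclusion $D(p_j,1) \subseteq D(c_j, \|c_j\|-1)$, I would use the definition of $p_j$: when $\|c_j\| \ge 2$, we have $p_j = c'_j$, which lies on the ray from $o$ through $c_j$ at distance exactly $2$ from $o$. In particular $p_j$ is on the segment $o c_j$, so $|p_j c_j| = \|c_j\| - 2$. A single triangle-inequality step then suffices: for any $x \in D(p_j,1)$,
\[
|x c_j| \le |x p_j| + |p_j c_j| \le 1 + (\|c_j\| - 2) = \|c_j\| - 1,
\]
giving $x \in D(c_j, \|c_j\|-1)$.

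I do not expect any real obstacle here: each containment reduces to one application of the triangle inequality, and the cutoff $\|c_j\| = 2$ in the definition of $p_j$ is tailored precisely so that the second inclusion is tight (the disk $D(p_j,1)$ is internally tangent to $D(c_j,\|c_j\|-1)$ at the point of $D_j$ closest to $o$). The observation is therefore essentially a restatement of how $p_j$ was chosen, and its value lies in packaging this for the subsequent packing argument that bounds $|\mathcal{I}(e^+)|$.
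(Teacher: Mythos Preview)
Your argument is correct and is exactly the intended one. The paper states this as an observation without proof, referring only to the figure; the two triangle-inequality steps you give (using that $D_j$ intersects $D(o,1)$ to get $r_j\ge\|c_j\|-1$, and using $|p_jc_j|=\|c_j\|-2$ for the second containment) are precisely what makes it true.
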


\begin{figure}[htb]
  \centering
  \includegraphics[width=.5\columnwidth]{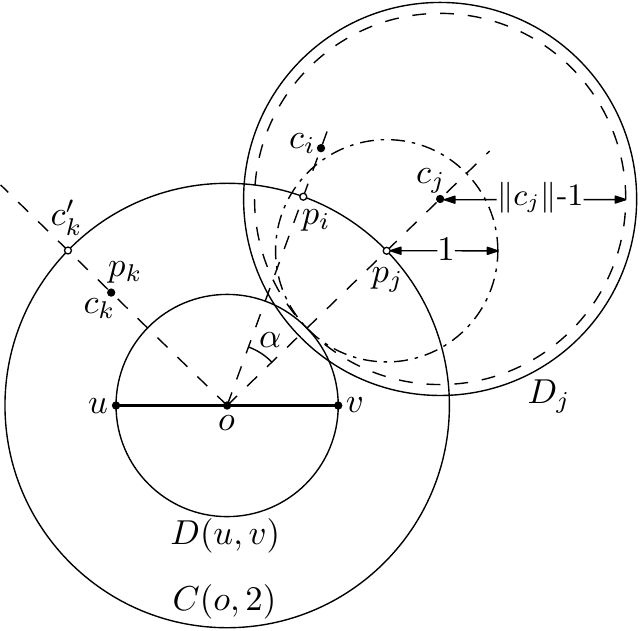}
 \caption{Proof of Lemma~\ref{sm:distance-lemma}; $p_i=c'_i$, $p_j=c'_j$, and $p_k=c_k$.}
  \label{sm:distance-fig}
\end{figure}
The main idea in the proof of the following lemma is similar to the main idea in the proof of Lemma~\ref{distance-lemma} in Chapter~\ref{ch:gg}. Since the number of points and the constants involved in the calculations are different, for the sake of completeness we present a full proof for the following lemma. 
\begin{lemma}
\label{sm:distance-lemma}
The distance between any pair of points in $P'$ is at least 1.
\end{lemma}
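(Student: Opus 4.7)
The plan is to verify pairwise distances in $P'=\{o,u,v,p_1,\dots,p_k\}$ by a case analysis. The trivial pairs $\{o,u\},\{o,v\},\{u,v\}$ give distances $1,1,2$ respectively, so it suffices to handle pairs that involve at least one of the $p_i$. Throughout, I will repeatedly invoke two facts: (a) by Observation~\ref{sm:no-point-in-circle-obs}, no disk in $\mathcal{D}$ contains $u$ or $v$ in its interior, and (b) by Lemma~\ref{sm:center-in-lemma}, for any two disks $D_i,D_j\in\mathcal{D}$, neither of their centers lies in the interior of the other.

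For the pair $\{o,p_i\}$: if $\|c_i\|\ge 2$, then $p_i=c'_i$ lies on $C(o,2)$ and $|op_i|=2\ge 1$; otherwise $p_i=c_i$, and since $o$ is the center of $D(u,v)$ while Lemma~\ref{sm:center-in-lemma} forbids $c_i\in D(u,v)$, we get $\|c_i\|\ge 1$ (using that $D(u,v)$ has radius $1$). For a pair $\{u,p_i\}$ (symmetrically $\{v,p_i\}$): if $p_i=c_i$, then by fact~(a) and Observation~\ref{sm:radius-one}, $|uc_i|$ is at least the radius of $D_i$, which is at least $1$; if $p_i$ lies on $C(o,2)$, then the triangle inequality gives $|up_i|\ge\bigl|\|p_i\|-\|u\|\bigr|=|2-1|=1$.

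The main case is $\{p_i,p_j\}$ with $i\neq j$, and splits into three subcases. If both $\|c_i\|<2$ and $\|c_j\|<2$, then $p_i=c_i$ and $p_j=c_j$, and Lemma~\ref{sm:center-in-lemma} together with Observation~\ref{sm:radius-one} immediately yields $|p_ip_j|\ge 1$. If $\|c_i\|<2$ while $\|c_j\|\ge 2$, I will use Observation~\ref{sm:obs} to note that $D(p_j,1)\subseteq D_j$; since Lemma~\ref{sm:center-in-lemma} keeps $c_i=p_i$ out of the interior of $D_j$, it also keeps it out of $D(p_j,1)$, giving $|p_ip_j|\ge 1$.

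The hard case, as expected, is when both $\|c_i\|\ge 2$ and $\|c_j\|\ge 2$; WLOG $\|c_i\|\le\|c_j\|$. I will argue by contradiction, assuming $|p_ip_j|<1$. Since $p_i,p_j$ lie on $C(o,2)$ and $\angle p_iop_j=\angle c_ioc_j=:\alpha$, we get $|p_ip_j|=4\sin(\alpha/2)<1$, hence $\cos\alpha>7/8$. The law of cosines in $\triangle c_ioc_j$ then gives
\[
|c_ic_j|^2<\|c_i\|^2+\|c_j\|^2-\tfrac{7}{4}\|c_i\|\|c_j\|.
\]
Using Observation~\ref{sm:obs} ($D(c_j,\|c_j\|-1)\subseteq D_j$) together with Lemma~\ref{sm:center-in-lemma} forces $|c_ic_j|\ge\|c_j\|-1$. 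Substituting and simplifying, and then applying $\|c_j\|\ge\|c_i\|$, reduces to the quadratic inequality $(3\|c_i\|-2)(\|c_i\|-2)<0$, which would require $\|c_i\|<2$. This contradicts $\|c_i\|\ge 2$, proving $|p_ip_j|\ge 1$. The main obstacle is precisely this algebraic step; once the quadratic factors cleanly against $\|c_i\|=2$, the full bound follows.
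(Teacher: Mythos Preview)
Your proof is correct and follows essentially the same approach as the paper: the same three-way case split on pairs, the same use of Observation~\ref{sm:no-point-in-circle-obs}, Lemma~\ref{sm:center-in-lemma}, and Observation~\ref{sm:obs} in the easy cases, and the same contradiction argument via the law of cosines in the hard case. Your factorization $(3\|c_i\|-2)(\|c_i\|-2)<0$ is exactly the paper's inequality $\tfrac{6}{8}\|c_i\|^2-2\|c_i\|+1<0$ multiplied through and factored.
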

\begin{proof}
Let $x$ and $y$ be two points in $P'$. We are going to prove that $|xy|\ge 1$. We distinguish between the following three cases. 
\begin{itemize}
 \item $x,y \in\{o,u,v\}$. In this case the claim is trivial.
\item $x\in \{o,u,v\}, y\in \{p_1,\dots, p_k\}$. If $\|y\|=2$, then $y$ is on $C(o,2)$, and hence $|xy|\ge 1$. If $\|y\|<2$, then $y$ is the center of a disk $D_i$ in $\mathcal{I}(e^+)$. By Observation~\ref{sm:no-point-in-circle-obs}, $D_i$ does not contain $u$ and $v$, and by Lemma~\ref{sm:center-in-lemma}, $D_i$ does not contain $o$. Since $D_i$ has radius at least 1, we conclude that $|xy|\ge 1$.

\item $x,y\in\{p_1,\dots,p_k\}$. Without loss of generality assume $x=p_i$ and $y= p_j$, where $1\le i<j\le k$. We differentiate between three cases:
\begin{itemize}
 \item $\|p_i\|< 2$ and $\|p_j\|<2$. In this case $p_i$ and $p_j$ are the centers of $D_i$ and $D_j$, respectively. By Lemma~\ref{sm:center-in-lemma} and Observation~\ref{sm:radius-one}, we conclude that $|p_ip_j|\ge 1$.
\item $\|p_i\|< 2$ and $\|p_j\|=2$. By Observation~\ref{sm:obs} the disk $D(p_j, 1)$ is contained in the disk $D_j$. By Lemma~\ref{sm:center-in-lemma}, $p_i$ is not in the interior of $D_j$, and consequently, it is not in the interior of $D(p_j,1)$. Therefore, $|p_ip_j|\ge 1$.
\item $\|p_i\|= 2$ and $\|p_j\|=2$. Recall that $c_i$ and $c_j$ are the centers of $D_i$ and $D_j$, such that $\|c_i\|\ge 2$ and $\|c_j\|\ge2$. Without loss of generality assume $\|c_i\|\le \|c_j\|$. For the sake of contradiction assume that $|p_ip_j|<1$. Then, for the angle $\alpha=\angle c_i o c_j$ we have $\sin(\alpha/2)< \frac{1}{4}$. Then, $\cos(\alpha)> 1-2\sin^2(\alpha/2)=\frac{7}{8}$. By the law of cosines in the triangle $\bigtriangleup c_ioc_j$, we have
\begin{equation}
\label{sm:ineq1}
|c_ic_j|^2<\|c_i\|^2+\|c_j\|^2-\frac{14}{8}\|c_i\|\|c_j\|.
\end{equation}
By Observation~\ref{sm:obs} the disk $D(c_j,\|c_j\|-1)$ is contained in $D_j$; see Figure~\ref{sm:distance-fig}. By Lemma~\ref{sm:center-in-lemma}, $c_i$ is not in the interior of $D_j$, and consequently, is not in the interior of $D(c_j,\|c_j\|-1)$. Thus, $|c_ic_j|\ge \|c_j\|-1$. In combination with Inequality~(\ref{sm:ineq1}), this gives
\begin{equation}
 \label{sm:ineq2}
\|c_j\|\left(\frac{14}{8}\|c_i\|-2\right) < \|c_i\|^2-1.
\end{equation}
In combination with the assumption that $\|c_i\| \le \|c_j\|$, Inequality~(\ref{sm:ineq2}) gives
$$\frac{6}{8}\|c_i\|^2-2\|c_i\|+1<0.$$

To satisfy this inequality, we should have $\|c_i\|<2$, contradicting the fact
that $\|c_i\| \ge 2$. This completes the proof.
\end{itemize}
\end{itemize}
\end{proof}

By Lemma~\ref{sm:distance-lemma}, the points in $P'$ has mutual distance 1. Moreover, the points in $P'$ lie in (including the boundary) $C(o,2)$.
Bateman and Erd\H{o}s~\cite{Bateman1951} proved that it is impossible to have 20 points in (including the boundary) a circle of radius 2 such that one of the points is at the center and all of the mutual distances are at least 1.
Therefore, $P'$ contains at most $19$ points, including $o$, $u$, and $v$. This implies that $k\le 16$, and hence $\mathcal{I}(e^+)$ contains at most sixteen edges. This completes the proof of Lemma~\ref{sm:disk-inf-lemma}.

\begin{theorem}
 \label{sm:Gabriel-thr}
Algorithm~\ref{sm:alg1} computes a strong matching of size at least $\lceil\frac{n-1}{17}\rceil$ in $\G{\ominus}{P}$.
\end{theorem}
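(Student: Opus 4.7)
The plan is to invoke the two key tools already established immediately before this theorem statement. First I would recall that Theorem~\ref{sm:GS-thr} guarantees that when Algorithm~\ref{sm:alg1} is executed on $\G{S}{P}$ for any shape $S \in \{\ddisc, \trid, \sqr\}$, the output is a strong matching whose size is at least $\lceil (n-1)/\mathrm{Inf}(T) \rceil$, where $T$ is a minimum spanning tree of $\G{S}{P}$. So the task reduces to producing an upper bound on $\mathrm{Inf}(T)$ in the specific case $S = \ddisc$.

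Next I would apply Lemma~\ref{sm:disk-inf-lemma}, which establishes that for every edge $e$ in a minimum spanning tree $T$ of $\G{\ddiscs}{P}$, the influence set satisfies $|\mathrm{Inf}(e)| \le 17$. By definition, $\mathrm{Inf}(T) = \max\{|\mathrm{Inf}(e)| : e \in T\}$, so this immediately yields $\mathrm{Inf}(T) \le 17$.

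Substituting this bound into the size estimate of Theorem~\ref{sm:GS-thr} gives
\[
|\mathcal{M}| \;\ge\; \left\lceil \frac{n-1}{\mathrm{Inf}(T)} \right\rceil \;\ge\; \left\lceil \frac{n-1}{17} \right\rceil,
\]
which is precisely the claim. Since both Theorem~\ref{sm:GS-thr} and Lemma~\ref{sm:disk-inf-lemma} have already been proved in full (the latter being the technical heart, resting on the packing lemma of Bateman and Erd\H{o}s together with Lemmas~\ref{sm:center-in-lemma}--\ref{sm:distance-lemma}), there is no remaining obstacle; the proof of Theorem~\ref{sm:Gabriel-thr} is simply the composition of these two results and should be written as a one-line corollary-style argument.
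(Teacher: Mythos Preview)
Your proposal is correct and matches the paper's approach exactly: the paper states Theorem~\ref{sm:Gabriel-thr} without a separate proof, since the preceding discussion already explains that combining Theorem~\ref{sm:GS-thr} with the bound $\mathrm{Inf}(T)\le 17$ from Lemma~\ref{sm:disk-inf-lemma} immediately yields the result.
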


\section{Strong Matching in $\G{\trids}{P}$}
\label{sm:half-theta-six-section}
In this section we consider the case where $S$ is a downward equilateral triangle $\trid$, whose barycenter is the origin and one of its vertices is on the negative $y$-axis. In this section we assume that $P$ is in general position, i.e., for each point $p\in P$, there is no point of $P\setminus \{p\}$ on $l_p^0$, $l_p^{60}$, and $l_p^{120}$. In combination with Observation~\ref{sm:shrink-triangle-obs}, this implies that for two points $p,q\in P$, no point of $P\setminus\{p,q\}$ are on the boundary of $t(p,q)$ (resp. $t'(p,q)$). Recall that $t(p,q)$ is the smallest homothet of $\trid$ having of $p$ and $q$ on a corner and the other point on the side opposite to that corner. We prove that $\G{\trids}{P}$, and consequently $\frac{1}{2}\Theta_6(P)$, has a strong triangle matching of size at least $\lceil\frac{n-1}{9}\rceil$. 

We run Algorithm~\ref{sm:alg1} on $\G{\trids}{P}$ to compute a matching $\mathcal{M}$. Recall that $\G{\trids}{P}$ is an edge-weighted graph with the weight of each edge $(p,q)$ is equal to the area of $t(p,q)$. By Theorem~\ref{sm:GS-thr}, $\mathcal{M}$ is a strong matching of size at least $\lceil\frac{n-1}{\Inf{T}}\rceil$, where $T$ is a minimum spanning tree in $\G{\trids}{P}$. In order to prove the desired lower bound, we show that $\Inf{T}\le 9$. Since $\Inf{T}$ is the maximum size of a set among the
influence sets of edges in $T$, it suffices to show that for every edge $e$ in $T$, the influence set of $e$ has at most nine edges. 
\begin{lemma}
\label{sm:triangle-inf-lemma}
Let $T$ be a minimum spanning tree of $\G{\trids}{P}$, and let $e$ be any edge in $T$. Then, $|\emph{Inf}(e)|\le 9$.
\end{lemma}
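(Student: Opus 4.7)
\medskip

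\noindent\textbf{Proof proposal.} My plan mirrors the structure used to prove Lemma~\ref{sm:disk-inf-lemma} in Section~\ref{sm:Gabriel-section}, but with the geometric role of diametral-disks taken over by downward equilateral triangles. Fix $e=(u,v)\in T$ and normalize the plane so that $t(u,v)$ is a canonical downward unit triangle, say with barycenter at the origin $o$. Let $\mathcal{T}_e$ denote the set of triangles $t(p,q)$ that represent edges of $T(e^+)\setminus\{e\}$ and that meet $t(u,v)$. Since $e\in\Inf{e}$ automatically, it suffices to show $|\mathcal{T}_e|\le 8$.

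The first step is to establish the triangle analogues of Observation~\ref{sm:no-point-in-circle-obs} and Lemma~\ref{sm:center-in-lemma}. The emptiness statement is immediate from $T\subseteq \G{\trids}{P}$: for each edge $(p,q)\in T$, the interior of $t(p,q)$ contains no point of $P$. For the second statement I will fix a canonical reference point of each triangle (the barycenter $c(t)$ seems most convenient, since all triangles in the MST have their sides aligned in the same three orientations) and prove the following ``barycenter exclusion'' claim: for two triangles $t_i,t_j$ representing edges of $T$ with $t_i\preceq t_j$, the barycenter $c(t_i)$ cannot lie in the interior of $t_j$. The proof is a classical MST rerouting argument: if $c(t_i)$ lies inside $t_j$, then combining Observation~\ref{sm:shrink-triangle-obs} with Observation~\ref{sm:obs1} and the empty-triangle property lets me produce two strictly smaller triangles connecting the four endpoints of $t_i,t_j$ in a way that forms a cycle in $K_{\trids}(P)$ whose unique maximum-weight edge is $t_j$; Lemma~\ref{sm:cycle-lemma} then contradicts the minimality of $T$.

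With these tools in hand, the second step is a separation lemma in the spirit of Lemma~\ref{sm:distance-lemma}: the barycenters of triangles in $\mathcal{T}_e\cup\{t(u,v)\}$ are pairwise separated by at least the side-length $s$ of $t(u,v)$. This again uses the barycenter-exclusion claim together with Observation~\ref{sm:radius-one}'s analogue (every triangle in $\mathcal{T}_e$ has side length at least $s$): two triangles of side length $\ge s$ whose centers are closer than $s$ would force one center inside the other triangle. Next I will argue that every barycenter in $\mathcal{T}_e$ lies inside a bounded region $R$ around $o$: because $t_i$ meets $t(u,v)$ and has side length $\ge s$, its barycenter cannot be too far away from the barycenter of $t(u,v)$. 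A straightforward computation (inflating $t(u,v)$ by the maximum possible radius a same-orientation equilateral triangle can reach) bounds $R$ by a downward equilateral triangle centred at $o$ whose side length is a small explicit multiple of $s$.

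The final step is packing: I count how many points, one of which is $o$, can lie in $R$ subject to pairwise distance $\ge s$. I expect the bound to be exactly $9$ points in total, i.e.\ at most $8$ barycenters besides $c(t(u,v))$, yielding $|\mathcal{T}_e|\le 8$ and hence $|\Inf{e}|\le 9$. The main obstacle I anticipate is calibrating the size of $R$ and the packing count so that they line up with the bound $9$ stated in the lemma: the barycenter-exclusion argument is cleaner than the center-inclusion step for disks (because equilateral triangles of the same orientation behave rigidly under shrinking by Observation~\ref{sm:shrink-triangle-obs}), but the packing count has to be done with some care because the extremal configuration can have barycenters on the boundary of $R$, mirroring the tightness of the Bateman--Erd\H os bound used in the disk case.
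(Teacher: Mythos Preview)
Your plan has a genuine gap in the separation step. The analogy with disks breaks down precisely where you claim ``two triangles of side length $\ge s$ whose centers are closer than $s$ would force one center inside the other triangle.'' For a downward equilateral triangle of side $s$, the inradius is only $s/(2\sqrt{3})$, so barycenter exclusion gives you a pairwise separation of $s/(2\sqrt{3})\approx 0.289s$, not $s$. Concretely, two MST edges incident to a common point $p$ yield two triangles sharing the vertex $p$; their barycenters can easily be within distance $s/\sqrt{3}$ of each other without either lying inside the other triangle. With the weaker separation the packing count balloons well past~$9$. There is also a second gap: the triangles in $\mathcal{T}_e$ can be arbitrarily large (they only need side $\ge s$), so their barycenters are not confined to any fixed region around $o$; you would need a projection device like the one used for disks, and you have not supplied one.

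The paper's proof does not use a center-packing argument at all. It is a structural case analysis: the triangles of $\mathcal{I}(e^+)$ are split into those meeting $t(u,v)$ only at the point $p$ or $q$ (set $\mathcal{I}_1$) and those intersecting $t(u,v)$ through a side or a non-endpoint corner (set $\mathcal{I}_2$). The bound $|\mathcal{I}_1|\le 4$ comes from the half-$\Theta_6$ cone structure (Lemma~\ref{sm:deg-six-half}: in the MST each point has at most one neighbor per cone, and only four of the six cones at $p$ and $q$ are relevant). The bound $|\mathcal{I}_2|\le 4$ is obtained by a sequence of lemmas (Lemmas~\ref{sm:side-intersection}--\ref{sm:vertex-side-intersection-1}) that analyze, side by side and corner by corner, how many triangles can enter $t(u,v)$ without creating a cycle whose unique longest edge is an MST edge (Lemma~\ref{sm:cycle-lemma}), invoking the earlier structural Lemmas~\ref{sm:triangle3} and~\ref{sm:intersection-lemma} repeatedly. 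This combinatorial route is what makes the bound exactly $9$ and also yields the matching tight example in Figure~\ref{sm:five-fig}, where several intersecting triangles share the vertices $p$ and $q$---precisely the configuration a center-packing bound handles poorly.
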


 \begin{figure}[htb]
  \centering
\setlength{\tabcolsep}{0in}
  $\begin{tabular}{ccc}
\multicolumn{1}{m{.33\columnwidth}}{\centering\includegraphics[width=.25\columnwidth]{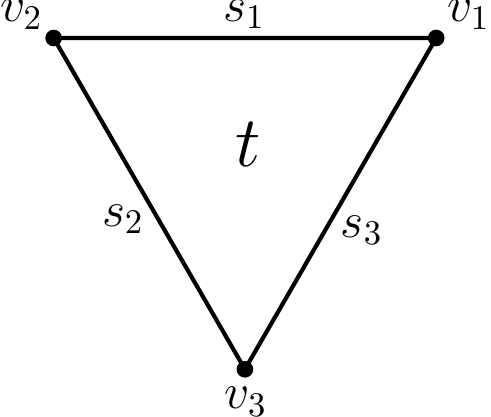}}
&\multicolumn{1}{m{.33\columnwidth}}{\centering\includegraphics[width=.25\columnwidth]{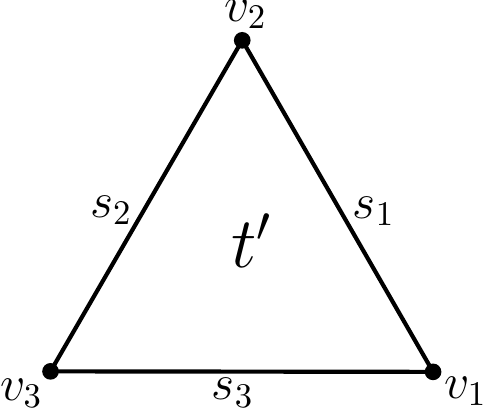}} 
&\multicolumn{1}{m{.33\columnwidth}}{\centering\includegraphics[width=.27\columnwidth]{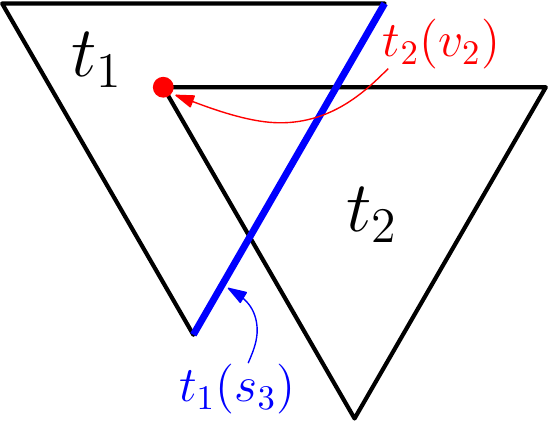}}\\
(a)&(b)&(c)
\end{tabular}$
  \caption{(a) Labeling the vertices and the sides of a downward triangle. (b) Labeling the vertices and the sides of an upward triangle. (c) Two intersecting triangles.}
  \label{sm:triangle-fig}
\end{figure}

We will prove this lemma in the rest of this section. We label the vertices and the sides of a downward equilateral-triangle, $t$, and an upward equilateral-triangle, $t'$, as depicted in Figures~\ref{sm:triangle-fig}(a) and ~\ref{sm:triangle-fig}(b). We refer to a vertex $v_i$ and a side $s_i$ of a triangle $t$ by $t(v_i)$ and $t(s_i)$, respectively.

Recall that $F$ is a subgraph of the minimum spanning tree $T$ in $\G{\trids}{P}$. In each iteration of the {\sf while} loop in Algorithm~\ref{sm:alg1}, let $\mathcal{T}$ denote the set of triangles representing the edges in $F$. By Lemma~\ref{sm:mst-in-GS} and the general position assumption we have

\begin{observation}
\label{sm:no-point-in-triangle-obs}
Each triangle $t(p,q)$ in $\mathcal{T}$ does not contain any point of $P\setminus \{p,q\}$ in its interior or on its boundary.
\end{observation}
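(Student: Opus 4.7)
The plan is to unpack the definitions of $\mathcal{T}$, $F$, $T$, and $\G{\trids}{P}$, and then show that the ``interior'' and ``boundary'' parts of the claim follow from two different ingredients: Lemma~\ref{sm:mst-in-GS} (for the interior) and the general-position assumption for $\trid$ (for the boundary). Since $F$ is a subgraph of $T$, which by Lemma~\ref{sm:mst-in-GS} is a subgraph of $\G{\trids}{P}$, every triangle $t(p,q)\in\mathcal{T}$ represents an edge $(p,q)$ of $\G{\trids}{P}$. So it suffices to prove the statement for any edge of $\G{\trids}{P}$.

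First I would dispose of the interior. By definition of $\G{\trids}{P}$, the edge $(p,q)$ is witnessed by some homothet $\tau$ of $\trid$ with $p$ and $q$ on its boundary and with no point of $P\setminus\{p,q\}$ in its interior. By the shrinkability property recorded in Observation~\ref{sm:shrink-triangle-obs}, the smallest such homothet $t(p,q)$ satisfies $t(p,q)\subseteq\tau$, so the interior of $t(p,q)$ is also empty of points of $P\setminus\{p,q\}$.

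The main (though still short) step is the boundary. By Observation~\ref{sm:shrink-triangle-obs}, one of $p,q$ sits at a corner of $t(p,q)$ and the other lies on the opposite side; in particular, each of the three sides of $t(p,q)$ contains at least one of $p,q$. Suppose, for contradiction, that some $r\in P\setminus\{p,q\}$ lies on the boundary of $t(p,q)$; then $r$ lies on a side $s$ of $t(p,q)$, and $s$ also contains either $p$ or $q$. Since every side of any homothet of $\trid$ is a line segment whose supporting line makes an angle $0^\circ$, $60^\circ$, or $120^\circ$ with the horizontal, $r$ would be collinear with $p$ or with $q$ in one of these three forbidden directions. This contradicts the general-position assumption for $\trid$, completing the proof.

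I expect no serious obstacle: the only subtle point is being careful that $t(p,q)$ (the smallest homothet), rather than some arbitrary witnessing homothet, inherits emptiness; that is exactly what the shrinkability property in Observation~\ref{sm:shrink-triangle-obs} gives us. The boundary case is a direct collinearity argument using the prescribed orientations of the sides of $\trid$.
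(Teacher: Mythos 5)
Your proof is correct and matches the paper's own (one-line) justification, which simply cites Lemma~\ref{sm:mst-in-GS} for emptiness of the interior and the general-position assumption for the boundary; you have merely spelled out these two steps in full. One small streamlining: the detour through an arbitrary witnessing homothet $\tau$ is unnecessary, since the paper's restatement in the preliminaries already characterizes the edges of $\G{S}{P}$ directly in terms of $S(p,q)=t(p,q)$ being empty.
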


Consider two intersecting triangles $t_1(p_1,q_1)$ and $t_2(p_2,q_2)$ in $\mathcal{T}$. By Observation~\ref{sm:shrink-triangle-obs}, each side of $t_1$ contains either $p_1$ or $q_1$, and each side of $t_2$ contains either $p_2$ or $q_2$. Thus, by Observation~\ref{sm:no-point-in-triangle-obs}, we argue that no side of $t_1$ is completely in the interior of $t_2$, and vice versa. Therefore, either exactly one vertex (corner) of $t_1$ is in the interior of $t_2$, or exactly one vertex of $t_2$ is in the interior of $t_1$. Without loss of generality assume that a corner of $t_2$ is in the interior of $t_1$, as shown in Figure~\ref{sm:triangle-fig}(c). In this case we say that $t_1$ intersects $t_2$ through the vertex $t_2(v_2)$, or symmetrically, $t_2$ intersects $t_1$ through the side $t_1(s_3)$.

The following two lemmas have been proved by Biniaz et al.~\cite{Biniaz2015-hotd-CGTA}:
\begin{lemma}[Biniaz et al.~\cite{Biniaz2015-hotd-CGTA}]
\label{sm:triangle3}
Let $t_1$ be a downward triangle which intersects a downward triangle $t_2$ through $\tra{t_2}{s_1}$, and let a horizontal line $\ell$ intersects both $t_1$ and $t_2$. Let $p_1$ and $q_1$ be two points on $t_1(s_2)$ and $t_1(s_3)$, respectively, which are above $t_2(s_1)$. Let $p_2$ and $q_2$ be two points on $t_2(s_2)$ and $t_2(s_3)$, respectively, which are above $\ell$. Then, $\max\{t(p_1,p_2), t(q_1,q_2)\} \allowbreak \prec\allowbreak  \max\{t_1,t_2\}$. See Figure~\ref{sm:triangle-intersection-fig}(b).
\end{lemma}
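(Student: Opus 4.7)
I will prove the two bounds $t(p_1,p_2)\prec\max\{t_1,t_2\}$ and $t(q_1,q_2)\prec\max\{t_1,t_2\}$ separately; by the reflective symmetry of an equilateral triangle across its vertical axis, the configurations for the pair $(p_1,p_2)$ and the pair $(q_1,q_2)$ are mirror images of each other, so it is enough to establish one of the two inequalities and invoke symmetry for the other. Accordingly, I will focus on bounding $t(p_1,p_2)$, where $p_1$ ranges over the subsegment $\sigma_1\subseteq t_1(s_2)$ that lies above $t_2(s_1)$, and $p_2$ ranges over the subsegment $\sigma_2\subseteq t_2(s_2)$ that lies above $\ell$.

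The key construction will be an auxiliary ray that splits $\sigma_1$ into two pieces, enabling a case analysis. Let $x$ be the intersection point of $\ell$ with the side of $t_2$ opposite to $\sigma_2$; from $x$ I will shoot a ray parallel to $t_2(s_2)$ into the region containing $\sigma_1$. This ray cuts $\sigma_1$ into an ``upper'' piece $\sigma_1'$ and a ``lower'' piece $\sigma_1''$ (either piece may degenerate, in which case only the other case applies).

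\emph{Case A}: $p_1\in\sigma_1'$. Here I will use the downward triangle $\tau_1$ anchored at $x$ whose top side lies on the line through $t_1(s_1)$. A direct comparison of apex-to-base heights shows that $\tau_1\prec t_1$, because $t_1$ must extend from $t_1(s_1)$ down past $\ell$ (its bottom corner is pushed into $t_2$), while $\tau_1$ only extends from the line through $t_1(s_1)$ down to $\ell$. A short geometric check then shows that $\tau_1$ contains both $\sigma_1'$ and the whole of $\sigma_2$, so any smallest downward triangle through $p_1\in\sigma_1'$ and $p_2\in\sigma_2$ fits inside $\tau_1$, giving $t(p_1,p_2)\preceq\tau_1\prec t_1$. \emph{Case B}: $p_1\in\sigma_1''$. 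Here I will use a downward triangle $\tau_2\subseteq t_2$ that has $\sigma_2$ on one of its sides and a corner on $\sigma_1''$; translating $\tau_2$ upward while keeping that corner on $\sigma_1''$ keeps $\sigma_2$ on the corresponding side, so the smallest downward triangle through $p_1$ and any $p_2\in\sigma_2$ is congruent to $\tau_2$. Since $\tau_2\subsetneq t_2$, this yields $t(p_1,p_2)\preceq\tau_2\prec t_2$.

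Combining the two cases gives $t(p_1,p_2)\prec\max\{t_1,t_2\}$, and the symmetric argument for the right sides $t_1(s_3)$ and $t_2(s_3)$ gives $t(q_1,q_2)\prec\max\{t_1,t_2\}$, proving the lemma. The main obstacle I anticipate is verifying the geometry of the auxiliary ray and the auxiliary triangles $\tau_1,\tau_2$: specifically, confirming that $\tau_1$ really is contained strictly in (a homothet of) $t_1$ and that the sliding argument for $\tau_2$ preserves the side containing $\sigma_2$. These are routine but require careful use of the hypothesis that $\ell$ meets both $t_1$ and $t_2$ and that $t_1$ pierces $t_2$ exactly through $t_2(s_1)$.
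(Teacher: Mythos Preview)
Your proposal is correct and follows essentially the same route as the paper's own proof (given in the TD-Delaunay chapter as Lemma~\ref{td:triangle3}): the auxiliary point $x$ on $\ell$, the ray from $x$ splitting $\sigma_1$ into two pieces, and the two auxiliary triangles $\tau_1$ (anchored at $x$, compared against $t_1$) and $\tau_2$ (slid along $\sigma_1''$, contained in $t_2$) are exactly the construction and case analysis the paper uses. One small discrepancy to fix when you write it up: in the paper $x$ is the intersection of $\ell$ with $t_2(s_2)$---the \emph{same} side that carries $\sigma_2$---and the ray is shot parallel to $t_2(s_3)$, not the other way around.
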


\begin{lemma}[Biniaz et al.~\cite{Biniaz2015-hotd-CGTA}]
\label{sm:intersection-lemma}
For every four triangles $t_1,t_2,t_3,t_4\in \mathcal{T}$, $t_1\cap t_2\cap t_3\cap t_4 =\emptyset$. 
\end{lemma}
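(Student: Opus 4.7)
The plan is to argue by contradiction: suppose four triangles $t_1, t_2, t_3, t_4 \in \mathcal{T}$ share a common point $x$. Every triangle in $\mathcal{T}$ is a downward equilateral triangle with sides along three fixed directions and (by Observation~\ref{sm:no-point-in-triangle-obs}) contains no point of $P$ in its interior or on its boundary, while (by Observation~\ref{sm:shrink-triangle-obs}) each side of such a triangle passes through one of its two defining endpoints. I would first use these facts to argue that no side of a triangle in $\mathcal{T}$ can lie completely inside another: any such containment would force a defining endpoint of the inner triangle into the interior of the outer, violating the empty-triangle property. Consequently, whenever two triangles in $\mathcal{T}$ overlap, exactly one vertex of one must lie strictly inside the other, and this intersection is of one of three \emph{types}, according to whether the internal vertex is $v_1$ (bottom), $v_2$ (top-right), or $v_3$ (top-left).

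Next, I would extract from the four triangles a pair $(t_i, t_j)$ matching the hypothesis of Lemma~\ref{sm:triangle3} (or one of its two $120^\circ$-rotational analogues for the other two sides). Sorting the four triangles by the $y$-coordinate of their top sides and applying pigeonhole over the three intersection types to the six pairwise intersections, together with a case analysis that exploits the fact that $x$ lies strictly below every top side, should identify such a pair. For it, I would take $\ell$ to be the horizontal line through $x$; the empty-triangle property then forces the defining endpoints $p_i, q_i$ of $t_i$ to lie above $t_j(s_1)$ and the defining endpoints $p_j, q_j$ of $t_j$ to lie above $\ell$, on the sides required by the lemma. Applying Lemma~\ref{sm:triangle3} yields
\[\max\{t(p_i, p_j),\ t(q_i, q_j)\} \prec \max\{t_i, t_j\}.\]

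Finally, I would close the argument using the minimum spanning tree property via Lemma~\ref{sm:cycle-lemma}. Assume without loss of generality $t_i \preceq t_j$; then the four edges $(p_i, q_i), (q_i, q_j), (q_j, p_j), (p_j, p_i)$ form a cycle in $K_{\trids}(P)$ whose weights are the areas of $t_i$, $t(q_i, q_j)$, $t_j$, and $t(p_i, p_j)$, respectively. By the strict inequality above, $t_j$ is the \emph{unique} maximum-weight edge of this cycle, so Lemma~\ref{sm:cycle-lemma} forces $(p_j, q_j) \notin T$, contradicting the fact that $t_j$ represents an edge of $T$. The main obstacle I anticipate is the middle step: enumerating exhaustively the geometrically distinct arrangements of four same-orientation equilateral triangles sharing a common interior point and checking in each case that some pair, after the appropriate rotation, realizes the precise configuration required by Lemma~\ref{sm:triangle3}.
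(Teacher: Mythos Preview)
Your overall strategy matches the paper's: argue by contradiction, use the empty-triangle property to constrain how two triangles in $\mathcal{T}$ can overlap (exactly one corner of one lies inside the other), locate a pair that---after rotation---satisfies the hypothesis of Lemma~\ref{sm:triangle3}, and derive an MST contradiction. The original proof (Lemma~\ref{td:intersection-lemma}) proceeds exactly this way, and your final step via the cycle lemma is equivalent to the direct edge-swap argument used there.

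There is, however, a genuine gap in your choice of $\ell$. Taking $\ell$ to be the horizontal line through $x$ does not suffice: Lemma~\ref{sm:triangle3} requires the defining points $p_j,q_j$ of the outer triangle $t_j$ to lie \emph{above} $\ell$, but emptiness of $t_i$ and $t_j$ alone gives no control over where $p_j,q_j$ sit relative to $x$. One of them could be at the bottom corner of $t_j$, well below $x$. The paper resolves this by taking $\ell$ to be the top side of a \emph{third} triangle $t_k$ from the configuration: if the bottom corner of $t_j$ lies inside $t_k$, emptiness of $t_k$ forces $p_j,q_j$ out of $t_k$ and hence above $t_k(s_1)$. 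So the third and fourth triangles are not just witnesses to the contradiction---one of them supplies the line $\ell$.

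This also pins down the shape of the case analysis you anticipate. The paper shows that among four triangles through $x$, either (i) three of them form a \emph{chain} $t_k \prec_d t_j \prec_d t_i$ in a single direction $d$ (up to rotation, each pokes through the top side of the next), or (ii) one triangle $t_l$ is \emph{central}, intersected by the other three through each of its three sides. In case~(i) one applies Lemma~\ref{sm:triangle3} to $t_i,t_j$ with $\ell = t_k(s_1)$. In case~(ii) one examines the three symmetric placements of the defining pair of $t_l$ on its boundary; each placement selects two of the surrounding triangles to which a rotated instance of Lemma~\ref{sm:triangle3} applies, with a side of $t_l$ serving as $\ell$. Your pigeonhole idea on six pairwise intersections is in the right spirit but does not by itself yield this dichotomy; the key observation is that the three intersection types are directions on a single triangle, so four triangles containing $x$ must produce either a repeated direction in a chain or one triangle hit from all three directions.
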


As a consequence of Lemma~\ref{sm:triangle3}, we have the following corollary:
\begin{corollary}
\label{sm:biniaz-cor}
 Let $t_1, t_2, t_3$ be three triangles in $\mathcal{T}$. Then $t_1$, $t_2$, and $t_3$ cannot make a chain configuration, such that $t_2$ intersects $t_3$ through $t_3(s_1)$ and $t_1$ intersects both $t_2$ and $t_3$ through $t_2(s_1)$ and $t_3(s_1)$. See Figure~\ref{sm:triangle-intersection-fig}(b).
\end{corollary}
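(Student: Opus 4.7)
The plan is to proceed by contradiction: I will assume that the chain configuration described in the statement actually occurs for some three triangles $t_1, t_2, t_3 \in \mathcal{T}$, and then apply Lemma~\ref{sm:triangle3} to the pair $(t_1, t_3)$ to exhibit two edges of $K_{\trids}(P)$ whose representing triangles are strictly smaller than $\max\{t_1, t_3\}$. From this, a standard exchange argument will contradict the minimality of the spanning tree $T$ underlying $\mathcal{T}$. Note that the hypothesis of Lemma~\ref{sm:triangle3} that $t_1$ intersects $t_3$ through $t_3(s_1)$ is built into the chain configuration, so the only thing left to set up is the horizontal line $\ell$ and the required points.

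For $\ell$, I would take the horizontal line through $t_1(v_1)$, the bottom corner of $t_1$. This line trivially crosses $t_1$. It also crosses $t_3$, because the chain configuration forces $t_1(v_1)$ to lie strictly inside $t_3$: indeed, $t_1$ penetrates $t_3$ through the top side $t_3(s_1)$, so the bottom vertex of $t_1$ is the point of $t_1$ reaching deepest into $t_3$. Let $e_1 = (p_1, q_1)$ and $e_3 = (p_3, q_3)$ be the edges of $T$ corresponding to $t_1$ and $t_3$. By Observation~\ref{sm:shrink-triangle-obs}, one of $p_1, q_1$ is at a corner of $t_1$ and the other on the opposite side; the same holds for $p_3, q_3$ inside $t_3$. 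I plan to use Observation~\ref{sm:no-point-in-triangle-obs} to show that, barring a shared endpoint between $e_1$ and $e_3$, both $p_1, q_1$ must lie on or above $t_3(s_1)$ and both $p_3, q_3$ must lie above $\ell$: any endpoint violating this would be a point of $P$ in the interior of $t_3$, which is forbidden. Lemma~\ref{sm:triangle3} then yields $\max\{t(p_1,p_3), t(q_1,q_3)\} \prec \max\{t_1, t_3\}$.

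Finally, I would run the exchange argument. Without loss of generality assume $w(e_3) \ge w(e_1)$. Adding the new edge $(p_1, p_3)$ to $T$ creates a unique cycle $C$. Since $e_3$ is the sole connection between the components of $T - e_3$, at least one of the two cycles produced by adding $(p_1, p_3)$ or $(q_1, q_3)$ to $T$ must contain $e_3$. On this cycle, the weight of every other edge is strictly less than $w(e_3)$ (the tree edges by the definition of $T(e^+)$ is not needed here, but more directly: $e_1$ satisfies $w(e_1)\le w(e_3)$ and the two new edges satisfy the strict inequality from Lemma~\ref{sm:triangle3}), so $e_3$ is the unique maximum-weight edge on the cycle. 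Lemma~\ref{sm:cycle-lemma} then forbids $e_3$ from being in any minimum spanning tree, contradicting $e_3 \in T$.

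The main obstacle I anticipate is the fiddly case analysis for the degenerate configurations, namely when $e_1$ and $e_3$ share an endpoint (which collapses a component in the exchange) or when a corner of $t_1$ coincides with one of $p_3, q_3$. These edge cases slightly alter the cycle structure but should still yield the required exchange after labelling the endpoints carefully according to which vertex $v_i$ of each triangle hosts the corner endpoint of the corresponding edge.
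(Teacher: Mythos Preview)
Your proposal has a genuine gap in the application of Lemma~\ref{sm:triangle3}. You apply the lemma to the pair $(t_1,t_3)$ and take $\ell$ to be the horizontal line through $t_1(v_1)$. The lemma then requires the two endpoints $p_3,q_3$ of $e_3$ (which lie on $t_3(s_2)$ and $t_3(s_3)$) to be above $\ell$. Your justification is that ``any endpoint violating this would be a point of $P$ in the interior of $t_3$,'' but this is not true: the region below $\ell$ is entirely outside $t_1$ (since $t_1(v_1)$ is the bottommost point of $t_1$), and $p_3,q_3$ are on the boundary of $t_3$, not in its interior. If $t_3$ is large compared to $t_1$, the endpoints $p_3,q_3$ may sit far down on the slanted sides of $t_3$, well below $\ell$, without landing in the interior of either triangle. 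So the hypothesis of Lemma~\ref{sm:triangle3} can fail, and the inequality $\max\{t(p_1,p_3),t(q_1,q_3)\}\prec\max\{t_1,t_3\}$ is not guaranteed. Your argument never uses $t_2$ at all, which is a sign that something is off: the corollary is a statement about a \emph{three}-triangle chain, and two of them alone do not suffice.

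The intended application uses all three triangles. One applies Lemma~\ref{sm:triangle3} to the pair $(t_1,t_2)$, taking $\ell$ to be the horizontal line supporting $t_3(s_1)$. Then both $t_1$ and $t_2$ genuinely cross $\ell$ (since each pokes its bottom vertex through $t_3(s_1)$ into $t_3$); the endpoints $p_1,q_1$ of $e_1$ lie above $t_2(s_1)$ because, by Observation~\ref{sm:no-point-in-triangle-obs}, they cannot be in the interior of $t_2$; and the endpoints $p_2,q_2$ of $e_2$ lie above $\ell=t_3(s_1)$ because they cannot be in the interior of $t_3$. Now Lemma~\ref{sm:triangle3} yields $\max\{t(p_1,p_2),t(q_1,q_2)\}\prec\max\{t_1,t_2\}$, and your exchange argument via Lemma~\ref{sm:cycle-lemma} goes through with $e_1,e_2$ in place of $e_1,e_3$. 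The role of $t_3$ is precisely to supply the line $\ell$ that pins down $p_2,q_2$; this is the missing idea.
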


\begin{figure}[htb]
  \centering
\setlength{\tabcolsep}{0in}
  $\begin{tabular}{cc}
\multicolumn{1}{m{.5\columnwidth}}{\centering\includegraphics[width=.38\columnwidth]{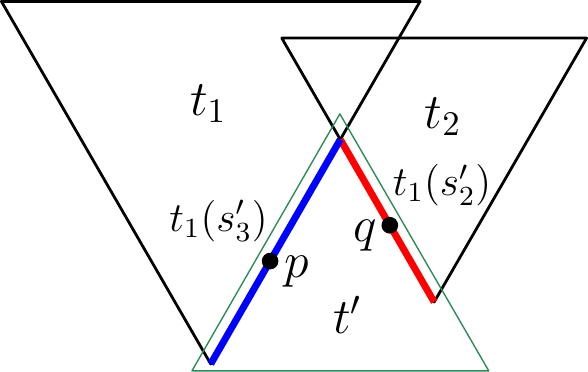}}
&\multicolumn{1}{m{.5\columnwidth}}{\centering\includegraphics[width=.34\columnwidth]{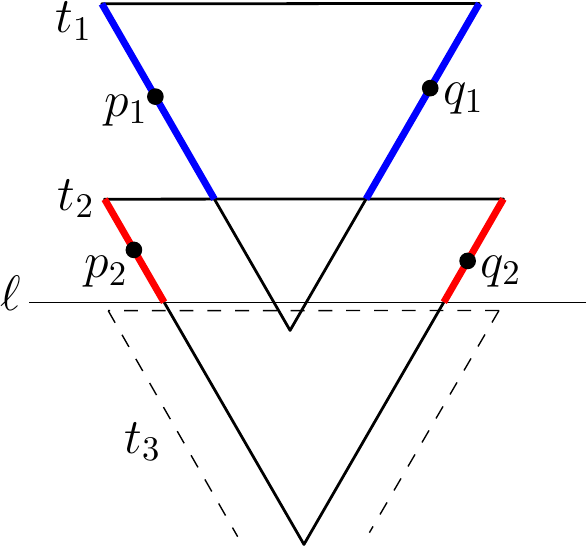}} 
\\(a)&(b)
\end{tabular}$
  \caption{(a) Illustration of Lemma~\ref{sm:triangle-intersection-lemma}. (b) Illustration of Lemma~\ref{sm:triangle3}.}
  \label{sm:triangle-intersection-fig}
\end{figure}

\begin{lemma}
 \label{sm:triangle-intersection-lemma}
Let $t_1$ be a downward triangle which intersects a downward triangle $t_2$ through $t_2(v_2)$. Let $p$ be a point on $t_1(s_3)$ and to the left of $t_2(s_2)$, and let $q$ be a point on $t_2(s_2)$ and to the right of $t_1(s_3)$. Then, $\tr{p,q}\prec\max\{t_1,t_2\}$.
\end{lemma}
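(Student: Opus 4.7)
The strategy is to exhibit an explicit downward equilateral triangle $t^{*}$ that has both $p$ and $q$ on its boundary (or contains them) and whose area is strictly smaller than $\max\{t_1,t_2\}$. Since $t(p,q)$ is by definition the smallest downward equilateral triangle with $p$ and $q$ on its boundary, this immediately yields $t(p,q)\preceq t^{*}\prec \max\{t_1,t_2\}$. The main work is a case analysis on which of $t_1,t_2$ is larger, combined with a shrinking argument that exploits the shrinkability property of equilateral triangles recorded in Observation~\ref{sm:shrink-triangle-obs}.

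First I would fix coordinates so that $t_1$ and $t_2$ are axis-aligned downward triangles (all homothets of $\trid$ have the same three edge directions, so the sides $t_1(s_3)$ and $t_2(s_3)$ are parallel, and likewise for $s_1$ and $s_2$). I would then use the fact that $t_1$ intersects $t_2$ through the corner $t_2(v_2)$ to locate exactly where the sides $t_1(s_3)$ and $t_2(s_2)$ cross, and to verify that $p$ lies strictly in the interior of $t_2$ and $q$ lies strictly in the interior of $t_1$ (this uses the hypothesis that $p$ is to the left of $t_2(s_2)$ and $q$ is to the right of $t_1(s_3)$, together with the fact that $t_2(v_2)$ is interior to $t_1$).

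Case 1, $t_2\succeq t_1$: I would start from $t_2$ and slide it parallel to the outward normal of $t_2(s_2)$ until its translated copy passes through $p$ on the corresponding side, while shrinking it just enough to keep $q$ on the corresponding boundary. Because $p$ sits strictly inside $t_2$ on the correct side of $t_2(s_2)$, this translation-and-shrink operation produces a homothet $t^{*}$ of $\trid$ strictly contained in $t_2$ (so $t^{*}\prec t_2$) that has $p$ and $q$ on its boundary. Case 2, $t_1\succ t_2$: symmetrically, I would start from $t_1$, slide it so that one side picks up $q$ while another retains $p$, producing a homothet $t^{*}\subsetneq t_1$ of $\trid$ containing both points on its boundary, hence $t^{*}\prec t_1$.

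The main obstacle will be the bookkeeping of which side of $t^{*}$ carries $p$ and which carries $q$, together with checking strictness (i.e.\ that the shrinkage is by a positive amount so that $t^{*}\prec\max\{t_1,t_2\}$, not merely $\preceq$). Here the crucial geometric input is that the crossing point of $t_1(s_3)$ and $t_2(s_2)$ lies strictly inside both triangles: this places $p$ at a positive distance from $t_2(s_2)$ and $q$ at a positive distance from $t_1(s_3)$, giving the strict inequality. Once this is made precise\textemdash essentially the same kind of parallel-translate argument used in the proof of Lemma~\ref{sm:triangle3}\textemdash the inequality $t(p,q)\preceq t^{*}\prec\max\{t_1,t_2\}$ follows immediately.
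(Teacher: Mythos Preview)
Your central geometric claim is false, and it is exactly the step on which the whole sliding/shrinking argument rests. You assert that ``$p$ lies strictly in the interior of $t_2$ and $q$ lies strictly in the interior of $t_1$'', but the hypotheses say the opposite. The phrase ``$p$ is to the left of $t_2(s_2)$'' places $p$ on the \emph{exterior} side of $t_2$ relative to the side $t_2(s_2)$, and ``$q$ is to the right of $t_1(s_3)$'' places $q$ on the exterior side of $t_1$ relative to $t_1(s_3)$. Concretely, take $t_1$ with vertices $(0,0),(2,0),(1,-\sqrt{3})$ and $t_2$ with vertices $(1,-\sqrt{3}/2),(3,-\sqrt{3}/2),(2,-3\sqrt{3}/2)$; then $t_2(v_2)=(1,-\sqrt{3}/2)$ is interior to $t_1$, the segments $t_1(s_3)$ and $t_2(s_2)$ cross at $(5/4,-3\sqrt{3}/4)$, and the admissible choices $p=(1,-\sqrt{3})$, $q=(2,-3\sqrt{3}/2)$ satisfy all hypotheses while $p\notin t_2$ and $q\notin t_1$. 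So in Case~1 you cannot shrink $t_2$ around $p$, and in Case~2 you cannot shrink $t_1$ around $q$; neither branch produces the promised $t^{*}$. (This is also forced by the later applications: there $p$ and $q$ are points of $P$, while $t_1$ and $t_2$ are triangles of $\mathcal{T}$, which are empty of points of $P$ by Observation~\ref{sm:no-point-in-triangle-obs}.)

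The paper's argument avoids this by not looking for a downward homothet at all. It observes that $t_1(s'_3)$ and $t_2(s'_2)$ meet at the crossing point and lie along the two non-horizontal directions, so together they sit on the two upper sides of an \emph{upward} equilateral triangle $t'$ whose side length is $\max\{|t_1(s'_3)|,|t_2(s'_2)|\}$. This $t'$ has both $p$ and $q$ on its boundary and is strictly smaller than whichever of $t_1,t_2$ contributed the longer segment, hence $t'\prec\max\{t_1,t_2\}$. One then uses that $t(p,q)$ and $t'(p,q)$ have the same area to conclude $t(p,q)\preceq t'\prec\max\{t_1,t_2\}$. The switch to an upward triangle is precisely the missing idea in your plan.
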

\begin{proof}
 Refer to Figure~\ref{sm:triangle-intersection-fig}(a). Let $t_1(s'_3)$ be the part of the line segment $t_1(s_3)$ which is to the left of $t_2(s_2)$, and let $t_2(s'_2)$ be the part of the line segment $t_2(s_2)$ which is to the right of $t_1(s_3)$. Without loss of generality assume that $t_1(s'_3)$ is larger than $t_2(s'_2)$. Let $t'$ be an upward triangle having $t_1(s'_3)$ as its left side. Then, $t'\prec t_1$, which implies that $t'\prec\max\{t_1,t_2\}$. Since $t'$ has both $p$ and $q$ on its boundary, the area of the downward triangle $t(p,q)$ is smaller than the area of $t'$. Therefore, $\tr{p,q}\preceq t'$; which completes the proof.
\end{proof}

Because of the symmetry, the statement of Lemma~\ref{sm:triangle-intersection-lemma} holds even if $p$ is above $t_2(s_1)$ and $q$ is on $t_2(s_1)$.
Consider the six cones with apex at $p$, as shown in Figure~\ref{sm:cones}.
\begin{lemma}
\label{sm:deg-six-half}
Let $T$ be a minimum spanning tree in $\G{\trids}{P}$. Then, in $T$, every point $p$ is adjacent to at most one point in each cone $\cone{i}{p}$, where $1\le i\le 6$.
\end{lemma}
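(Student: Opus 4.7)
The plan is to separately handle the even and the odd cones. For an even cone $\cone{i}{p}$, recall from Section~\ref{sm:preliminaries} that $\G{\trids}{P}$ coincides with the half-$\Theta_6$ graph defined by the even cones, so $p$ is by construction adjacent in $\G{\trids}{P}$ to at most one point in $\cone{i}{p}$, namely the unique point minimising the area of $t(p,\cdot)$ among the points of $P$ in that cone. Since every edge of $T$ lies in $\G{\trids}{P}$ by Lemma~\ref{sm:mst-in-GS}, the same bound immediately transfers to $T$.

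For an odd cone I would argue by contradiction. Suppose $T$ contains two distinct edges $(p,q_1)$ and $(p,q_2)$ with $q_1,q_2\in\cone{i}{p}$ for some odd $i$, chosen so that $t(p,q_1)\preceq t(p,q_2)$. Removing $(p,q_2)$ from $T$ splits $T$ into two subtrees, one containing $\{p,q_1\}$ and the other containing $q_2$; adding the edge $(q_1,q_2)$ reconnects them into a spanning tree $T'$ of $P$. Hence it suffices to establish the geometric claim that $t(q_1,q_2)\prec t(p,q_2)$ strictly, since this would yield $w(T')<w(T)$, contradicting the minimality of $T$.

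To prove the claim I would place $p$ at the origin and rotate the plane so that $\cone{i}{p}=\cone{1}{p}$; the two remaining odd cones then follow by the $60^{\circ}$-rotational symmetry of the construction. A short calculation shows that for $q=(x,y)\in\cone{1}{p}$ the side length of $t(p,q)$ equals $s(q):=x+y/\sqrt{3}$, so the hypothesis $t(p,q_1)\preceq t(p,q_2)$ becomes $s(q_1)\le s(q_2)$. This inequality, together with the general position hypothesis that no point of $P$ lies on $l_{q_1}^{0}$, $l_{q_1}^{60}$, or $l_{q_1}^{120}$, forces $q_2$ to lie in the interior of one of the three cones $\cone{1}{q_1}$, $\cone{2}{q_1}$, or $\cone{6}{q_1}$: the remaining three cones of $q_1$ would give $s(q_2)<s(q_1)$, and the bounding rays of the cones are excluded by general position. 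In each of these three subcases, the corresponding closed-form expression for the side length of $t(q_1,q_2)$ (an analogous linear functional of the coordinates of $q_1$ and $q_2$, adapted to the relevant cone of $q_1$) yields a strict inequality $s(q_1,q_2)<s(q_2)$, completing the contradiction.

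The main obstacle is this three-case computation. While each subcase reduces to a single linear inequality in the coordinates of $q_1$ and $q_2$, one must invoke the general position hypothesis carefully at each step to rule out degenerate equalities on the cone-defining lines and thereby obtain strict rather than non-strict inequalities.
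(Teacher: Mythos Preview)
Your proposal is correct and follows essentially the same route as the paper: the even-cone case is handled identically via the half-$\Theta_6$ construction, and for odd cones both proofs set up the same contradiction by showing $t(q_1,q_2)\prec t(p,q_2)$ and then arguing that $(p,q_2)$ cannot lie in a minimum spanning tree. The only difference is that where you carry out an explicit three-case coordinate computation, the paper observes that the hypothesis $t(p,q_1)\prec t(p,q_2)$ (with $q_1,q_2$ in the same odd cone of $p$) forces $q_1$ to lie in the hexagon $X(q_2,p)$, so Observation~\ref{sm:obs2} yields $t(q_2,q_1)\prec t(q_2,p)$ in one line; it then invokes Lemma~\ref{sm:cycle-lemma} on the triangle $p,q_1,q_2$ rather than performing the edge swap explicitly. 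Your three subcases are exactly what this hexagon containment encodes, so you can shorten your argument by citing Observation~\ref{sm:obs2} instead.
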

\begin{proof}
If $i$ is even, then by the construction of $\G{\trids}{P}$, which is given in Section~\ref{sm:preliminaries}, $p$ is adjacent to at most one point in $\cone{i}{p}$. Assume $i$ is odd. For the sake of contradiction, assume in $T$, the point $p$ is adjacent to two points $q$ and $r$ in a cone $\cone{i}{p}$. Then, $t(p,q)$ has $q$ on a corner, and $t(p,r)$ has $r$ on a corner. Without loss of generality assume $t(p,r)\prec t(q,r)$. Then, the hexagon $\hex{q}{p}$ has $r$ in its interior. Thus, $\tr{q,r}\prec \tr{p,q}$. Then the cycle $r,p,q,r$ contradicts Lemma~\ref{sm:cycle-lemma}. Therefore, $p$ is adjacent to at most one point in each of the six cones.
\end{proof}

In Algorithm~\ref{sm:alg1}, in each iteration of the {\sf while} loop, let $\mathcal{T}(e^+)$ be the triangles representing the edges of $F$. Recall that $e$ is the smallest edge in $F$, and hence, $t(e)$ is a smallest triangle in $\mathcal{T}(e^+)$.
Let $e=(p,q)$ and let $\mathcal{I}(e^+)$ be the set of triangles in $\mathcal{T}(e^+)$ (excluding $t(e)$) which intersect $t(e)$. We show that $\mathcal{I}(e^+)$ contains at most eight triangles.
We partition the triangles in $\mathcal{I}(e^+)$ into $\{\mathcal{I}_1,\mathcal{I}_2\}$, such that every triangle $\tau\in\mathcal{I}_1$ shares only $p$ or $q$ with $t=t(e)$, i.e., $\mathcal{I}_1=\{\tau: \tau\in\mathcal{I}(e^+),\tau\cap t\in \{p,q\}\}$, and every triangle $\tau\in\mathcal{I}_2$ intersects $t$ either through a side or through corner which is not $p$ nor $q$.

\begin{figure}[htb]
 \begin{center}
\includegraphics[width=.4\textwidth]{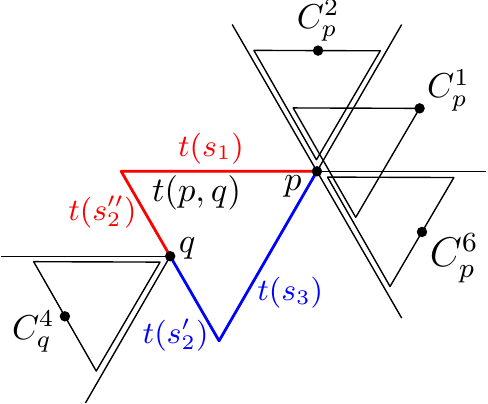}
  \end{center}
  \caption{Illustration of the triangles in $\mathcal{I}_1$.}
\label{sm:cones-pq}
\end{figure}

By Observation~\ref{sm:shrink-triangle-obs}, for each triangle $t(p,q)$, one of $p$ and $q$ is on a corner of $t(p,q)$ and the other one is on the side opposite to that corner. Without loss of generality assume that $p$ is on the corner $t(v_1)$, and hence, $q$ is on the side $t(s_2)$. See Figure~\ref{sm:cones-pq}. Note that the other cases, where $p$ is on $t(v_2)$ or on $t(v_3)$ are similar.
Since the intersection of $t$ with any triangle $\tau\in\mathcal{I}_1$ is either $p$ or $q$, $\tau$ has either $p$ or $q$ on its boundary. In combination with Observation~\ref{sm:no-point-in-triangle-obs}, this implies that $\tau$ represent an edge $e'$ in $T$, and hence, either $p$ or $q$ is an endpoint of $e'$. As illustrated in Figure~\ref{sm:cones-pq}, the other endpoint of $e'$ can be either in $\cone{1}{p}$, $\cone{2}{p}$, $\cone{6}{p}$, or in $\cone{4}{q}$, because otherwise $\tau\cap t\notin \{p,q\}$. By Lemma~\ref{sm:deg-six-half}, $p$ has at most one neighbor in each of $\cone{1}{p}$, $\cone{2}{p}$, $\cone{6}{p}$, and $q$ has at most one neighbor in $\cone{4}{q}$. Therefore, $\mathcal{I}_1$ contains at most four triangles. We are going to show that $\mathcal{I}_2$ also contains at most four triangles. 

The point $q$ divides $t(s_2)$ into two parts. Let $\tra{t}{s'_2}$ and $\tra{t}{s''_2}$ be the parts of $\tra{t}{s_2}$ which are below and above $q$, respectively; see Figure~\ref{sm:cones-pq}. The triangles in $\mathcal{I}_2$ intersect $t$ either through $t(s_1)\cup t(s''_2)$ or through $t(s_3)\cup t(s'_2)$; which are shown by red and blue polylines in Figure~\ref{sm:cones-pq}. We show that most two triangles in $\mathcal{I}_2$ intersect $t$ through each of $t(s_1)\cup t(s''_2)$ or $t(s_3)\cup t(s'_2)$. Because of symmetry, we only prove for $t(s_3)\cup t(s'_2)$. When a triangle $t'$ intersects $t$ through both $t(s_3)$ and $t(s'_2)$ we say $t'$ intersects $t$ through $t(v_3)$. In the next lemma, we prove that at most one triangle in $\mathcal{I}_2$ intersects $t$ through each of $\tra{t}{s_3}$, $\tra{t}{s'_2}$. Again, because of symmetry, we only prove for $\tra{t}{s_3}$. 

\begin{figure}[htb]
  \centering
\setlength{\tabcolsep}{0in}
  $\begin{tabular}{cc}
\multicolumn{1}{m{.5\columnwidth}}{\centering\includegraphics[width=.34\columnwidth]{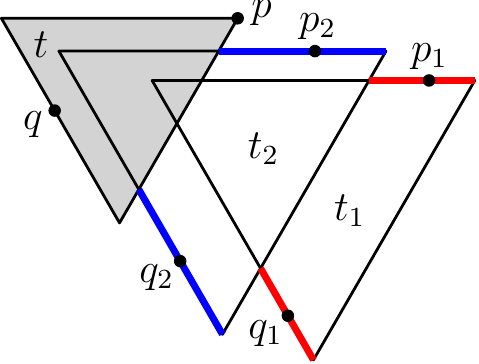}}
&\multicolumn{1}{m{.5\columnwidth}}{\centering\includegraphics[width=.34\columnwidth]{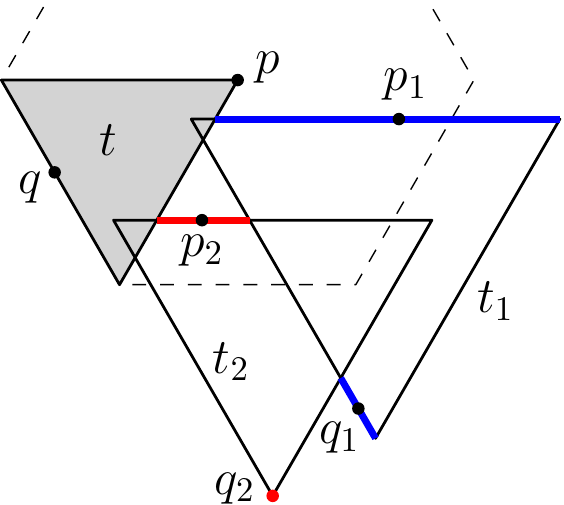}} \\
(a) & (b)
\end{tabular}$
  \caption{Illustration of Lemma~\ref{sm:side-intersection}: (a) $t_1(v_2)\in t_2$. (b) $t_1(v_2)\notin t_2$ and $t_2(v_2)\notin t_1$.}
  \label{sm:two-triangles-fig}
\end{figure}

\begin{lemma}
\label{sm:side-intersection}
At most one triangle in $\mathcal{I}_2$ intersects $\trmin$ through $\tra{\trmin}{s_3}$.
\end{lemma}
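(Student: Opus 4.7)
The plan is to argue by contradiction. Suppose two distinct triangles $t_1 = t(a_1, b_1)$ and $t_2 = t(a_2, b_2)$ in $\mathcal{I}_2$ both intersect $\trmin$ through $\tra{\trmin}{s_3}$, and assume without loss of generality that $t_1 \preceq t_2$. By Observation~\ref{sm:shrink-triangle-obs}, each $t_i$ has one defining point on a corner and the other on the opposite side. Since $t_1$ and $t_2$ are downward triangles of the same orientation as $\trmin$, the sides of each $t_i$ are parallel to those of $\trmin$, so the corner of $t_i$ lying strictly inside $\trmin$ (the one whose two incident sides both cross $\tra{\trmin}{s_3}$) is uniquely determined by the orientation; call it $v(t_i)$, and relabel so that $a_i$ sits at $v(t_i)$ and $b_i$ lies on the side of $t_i$ opposite $v(t_i)$.

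First I would dispatch the configurations in which one triangle's defining point lies in the interior of the other: if $b_1$ lies in the interior of $t_2$, or symmetrically if $b_2$ lies in the interior of $t_1$, then Observation~\ref{sm:no-point-in-triangle-obs} is immediately violated. Otherwise, I would argue geometrically that because $v(t_1)$ and $v(t_2)$ both lie inside $\trmin$ and must be approached by the same two incident sides that cross $\tra{\trmin}{s_3}$, and because $t_1 \preceq t_2$, the larger triangle $t_2$ must pierce the side $\tra{t_1}{s_3}$ of $t_1$. Hence $t_2$ intersects $t_1$ through $\tra{t_1}{s_3}$. Combined with the hypothesis that $t_1$ intersects $\trmin$ through $\tra{\trmin}{s_3}$, this produces precisely the chain configuration that is the $120^\circ$-rotated image of the one handled in Corollary~\ref{sm:biniaz-cor}.

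I would then invoke the rotational analogue of Lemma~\ref{sm:triangle3} (obtained by replacing the role of $\tra{\cdot}{s_1}$ with that of $\tra{\cdot}{s_3}$ and the horizontal line with a correspondingly rotated line, using that the $120^\circ$-rotational symmetry of $\trid$ preserves the class of downward equilateral triangles) to extract two points, one from $\{a_1, b_1\}$ and one from $\{a_2, b_2\}$, whose defining downward triangle is strictly smaller than $\max\{t_1, t_2\}$. Together with the two edges of $T$ representing $t_1$ and $t_2$, this new edge forms a cycle in $K_{\trid}(P)$ whose unique heaviest edge is either $t_1$ or $t_2$, contradicting Lemma~\ref{sm:cycle-lemma} applied to the minimum spanning tree $T$.

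The main obstacle I expect is the geometric step \emph{if neither $b_i$ lies in the other triangle's interior, then $t_2$ necessarily crosses $\tra{t_1}{s_3}$}. This requires parametrizing where $v(t_1)$ and $v(t_2)$ can sit inside $\trmin$\textemdash both constrained by the fact that the defining point $b_i$ opposite $v(t_i)$ must remain outside $\trmin$ (by Observation~\ref{sm:no-point-in-triangle-obs}) and that $t_i$ does not poke through $\tra{\trmin}{s_2'}$\textemdash and then leveraging the size ordering $t_1 \preceq t_2$ to force $t_2$ to extend beyond $t_1$ on the $\tra{t_1}{s_3}$ side. Once this step is secured, the remainder is a direct application of the rotational symmetry of the framework already developed in Lemma~\ref{sm:triangle3} and Corollary~\ref{sm:biniaz-cor}.
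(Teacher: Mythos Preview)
Your setup contains an error that undermines the rest of the plan. You write ``relabel so that $a_i$ sits at $v(t_i)$,'' where $v(t_i)$ is the corner of $t_i$ lying strictly inside $\trmin$. But no defining point of $t_i$ can sit there: by Observation~\ref{sm:no-point-in-triangle-obs}, $\trmin=t(p,q)$ contains no point of $P\setminus\{p,q\}$, and the defining points of a triangle in $\mathcal{I}_2$ are points of $P$. So the corner of $t_i$ that sits inside $\trmin$ is \emph{not} one of its defining points; the defining points $p_i,q_i$ lie on the two sides of $t_i$ incident to that corner, outside $\trmin$. This is exactly why the paper starts with ``w.l.o.g.\ $p_i\in t_i(s_1)$ and $q_i\in t_i(s_2)$.''

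The deeper gap is the step you yourself flag: your plan hinges on showing that when neither $v(t_1)\in t_2$ nor $v(t_2)\in t_1$, the larger triangle $t_2$ must cross $t_1$ through $\tra{t_1}{s_3}$, producing a rotated chain to which Corollary~\ref{sm:biniaz-cor} applies. This is false in general. In the paper's analysis of exactly this case (neither $t_1(v_2)\in t_2$ nor $t_2(v_2)\in t_1$, say with $t_1(s_1)$ above $t_2(s_1)$), one finds instead that $t_2$ intersects $t_1$ through $\tra{t_1}{s_2}$, with $t_2(v_1)$ inside $t_1$; there is no $s_3$-chain. The contradiction is obtained by a different mechanism that your plan does not anticipate: one brings the point $p$ of $\trmin$ itself into the cycle. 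Using Lemma~\ref{sm:triangle-intersection-lemma} to bound $t(p,p_1)$ and Observation~\ref{sm:obs2} (via the hexagon $X(p,q)$) to bound either $t(p,q_1)$ or $t(p,p_2)$, the paper builds either the $3$-cycle $p,p_1,q_1,p$ or the $5$-cycle $p,p_1,q_1,q_2,p_2,p$ and invokes Lemma~\ref{sm:cycle-lemma}. So Case~A (one $v_2$-corner inside the other) does reduce to Corollary~\ref{sm:biniaz-cor} as you suggest, but Case~B genuinely requires the finer argument using $p$, $X(p,q)$, and Lemma~\ref{sm:triangle-intersection-lemma}; a rotated Corollary~\ref{sm:biniaz-cor} alone is not enough.
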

\begin{proof}
The proof is by contradiction. Assume two triangles $t_1(p_1,q_1)$ and $t_2(p_2,q_2)$ in $\mathcal{I}_2$ intersect $\trmin$ through $\tra{\trmin}{s_3}$. Without loss of generality assume that $p_i$ is on $t_i(s_1)$ and $q_i$ is on $t_i(s_2)$ for $i=1,2$. Recall that the area of $t_1$ and the area of $t_2$ are at least the area of $\trmin$. If $\tra{t_1}{v_2}$ is in the interior of $t_2$ (as shown in Figure~\ref{sm:two-triangles-fig}(a)) or $\tra{t_2}{v_2}$ is in the interior of $t_1$, then we get a contradiction to Corollary~\ref{sm:biniaz-cor}. Thus, assume that $\tra{t_1}{v_2}\notin t_2$ and $\tra{t_2}{v_2} \notin t_1$. 

Without loss of generality assume that $\tra{t_1}{s_1}$ is above $\tra{t_2}{s_1}$; see Figure~\ref{sm:two-triangles-fig}(b). By Lemma~\ref{sm:triangle-intersection-lemma}, we have $t(p,p_1)\prec \max\{t,t_1\}\preceq t_1$. If $q_1$ is in $\hex{p}{q}$, then by Observation~\ref{sm:obs2}, $t(p,q_1)\prec t$. Then, the cycle $p,p_1,q_1,p$ contradicts Lemma~\ref{sm:cycle-lemma}. Thus, assume that $q_1\notin \hex{p,q}$. In this case $\tra{t_2}{s_3}$ is to the left of $\tra{t_1}{s_3}$, because otherwise $q_1$ lies in $t_2$ which contradicts Observation~\ref{sm:no-point-in-triangle-obs}.
Since both $t_1$ and $t_2$ are larger than $t$, $t_2$ intersects $t_1$ through $t_1(s_2)$, and hence $\tra{t_2}{v_1}$ is in the interior of $t_1$. This implies that $q_2$ is on $\tra{t_2}{v_3}$. In addition, $p_2$ is on the part of $\tra{t_2}{s_1}$ which lies in the interior of $\hex{p}{q}$. By Observation~\ref{sm:obs2} and Lemma~\ref{sm:triangle-intersection-lemma}, we have $t(p,p_2)\prec t$ and $t(q_1,q_2)\prec \max\{t_1,t_2\}$, respectively. Thus, the cycle $p,p_1,q_1,q_2,p_2,p$ contradicts Lemma~\ref{sm:cycle-lemma}. 
\end{proof}

\begin{lemma}
\label{sm:vertex-intersection}
 At most two triangles in $\mathcal{I}_2$ intersect $\trmin$ through $\tra{\trmin}{v_3}$.
\end{lemma}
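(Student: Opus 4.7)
\textbf{Proof plan for Lemma~\ref{sm:vertex-intersection}.} The plan is to follow the same contradiction template as in Lemma~\ref{sm:side-intersection}. Suppose, for contradiction, that three triangles $t_1(p_1,q_1)$, $t_2(p_2,q_2)$, $t_3(p_3,q_3)$ in $\mathcal{I}_2$ intersect $\trmin$ through the corner $\tra{\trmin}{v_3}$; each is therefore a downward equilateral triangle whose interior contains $v_3$ and which meets $\trmin$ across both $\tra{\trmin}{s_3}$ and $\tra{\trmin}{s'_2}$. Each $t_i$ is at least as large as $\trmin$ (since they lie in $\mathcal{T}(e^+)$), and by Observation~\ref{sm:no-point-in-triangle-obs} the interior of each $t_i$ contains no point of $P$; in particular, $p$ and $q$ lie outside the interior of each $t_i$.

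First, I would locate the endpoints $p_i,q_i$. Since $t_i$ contains $v_3$ but neither $p$ nor $q$, the top side $\tra{t_i}{s_1}$ must lie above $v_3$ and enter $\trmin$, while the left and right sides exit $\trmin$ through $\tra{\trmin}{s'_2}$ and $\tra{\trmin}{s_3}$ respectively. By Observation~\ref{sm:shrink-triangle-obs}, one endpoint of $t_i$ sits at a corner of $t_i$ and the other on the opposite side; a case analysis (using that $t_i$ is a downward triangle with $v_3$ inside) forces the corner vertex to lie outside $\trmin$ and the side-point to lie either on $\tra{\trmin}{s_3}\cup \tra{\trmin}{s'_2}$ continuation or outside $\trmin$ altogether. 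This produces only a small number of qualitative placements for each $(p_i,q_i)$.

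Second, I would order the three triangles so that $\tra{t_1}{s_2}$, $\tra{t_2}{s_2}$, $\tra{t_3}{s_2}$ cross $\tra{\trmin}{s'_2}$ in, say, left-to-right order (equivalently, order the top sides $\tra{t_i}{s_1}$ by height, or order the bottom vertices $\tra{t_i}{v_3}$ by depth below $v_3$). In this order, every two consecutive triangles $t_j,t_{j+1}$ form one of the chain configurations that Corollary~\ref{sm:biniaz-cor} already forbids, or else they satisfy the hypotheses of Lemma~\ref{sm:triangle-intersection-lemma}, which bounds $\tr{p_j,p_{j+1}}$ or $\tr{q_j,q_{j+1}}$ by $\max\{t_j,t_{j+1}\}$. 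If any such connecting endpoint lies inside a hexagon $\hex{x}{y}$ with $x\in\{p,q,p_j,q_j\}$, Observation~\ref{sm:obs2} gives an even smaller triangle between the two points. In either case I obtain two edges of $T$ (one among $t_1,t_2,t_3$ and one among the constructed connecting edges) together with newly produced short edges that form a cycle in $T\cup\{\text{new edges}\}$ whose unique maximum-weight edge belongs to $T$, contradicting Lemma~\ref{sm:cycle-lemma}.

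The main obstacle will be the case analysis in step two: depending on which of $p_i,q_i$ sits on a corner and on which side of $\trmin$ the triangle's top side lies, there are several sub-configurations and the appropriate cycle must be exhibited in each. I expect the delicate case to be when the three triangles are nested (their bottom vertices strictly descend while their top sides ascend), because then none of the simple chain configurations of Corollary~\ref{sm:biniaz-cor} applies directly; there one has to extract a cycle of length four or five through $p$ (or $q$) and two endpoints of non-adjacent $t_i$'s, and combine Lemma~\ref{sm:triangle-intersection-lemma} with Observation~\ref{sm:obs2} to drop its maximum edge strictly below $\max_i w(t_i)$.
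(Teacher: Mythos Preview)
Your plan overlooks a one-line argument. By definition, a triangle $t_i$ intersects $\trmin$ through $\tra{\trmin}{v_3}$ precisely when it crosses both $\tra{\trmin}{s_3}$ and $\tra{\trmin}{s'_2}$, which forces the corner $\tra{\trmin}{v_3}$ itself to lie in $t_i$. So if three triangles $t_1,t_2,t_3\in\mathcal{I}_2$ intersect $\trmin$ through $\tra{\trmin}{v_3}$, then the single point $\tra{\trmin}{v_3}$ lies in all four triangles $\trmin,t_1,t_2,t_3\in\mathcal{T}$, contradicting Lemma~\ref{sm:intersection-lemma} (every four triangles in $\mathcal{T}$ have empty common intersection). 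That is the paper's entire proof.

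Your proposed route---reproducing the cycle-construction machinery of Lemma~\ref{sm:side-intersection}---is both unnecessary and, as you yourself anticipate, incomplete: the ``nested'' sub-case you flag as delicate is exactly where the chain argument of Corollary~\ref{sm:biniaz-cor} does not apply, and you have not exhibited the required cycle there. There is no need to enter that analysis at all once you notice that ``through $v_3$'' pins down a common point.
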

\begin{proof}
For the sake of contradiction assume three triangles $t_1,t_2, t_3\in \mathcal{I}_2$ intersect $\trmin$ through $\tra{\trmin}{v_3}$. This implies that $\tra{\trmin}{v_3}$ belongs to four triangles $t,t_1,t_2,t_3$, which contradicts Lemma~\ref{sm:intersection-lemma}. \end{proof}

\begin{figure}[htb]
  \centering
\setlength{\tabcolsep}{0in}
  $\begin{tabular}{cc}
\multicolumn{1}{m{.5\columnwidth}}{\centering\includegraphics[width=.36\columnwidth]{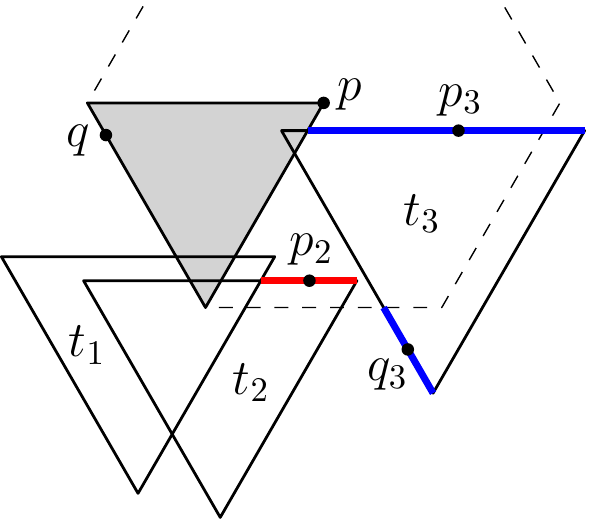}}
&\multicolumn{1}{m{.5\columnwidth}}{\centering\includegraphics[width=.36\columnwidth]{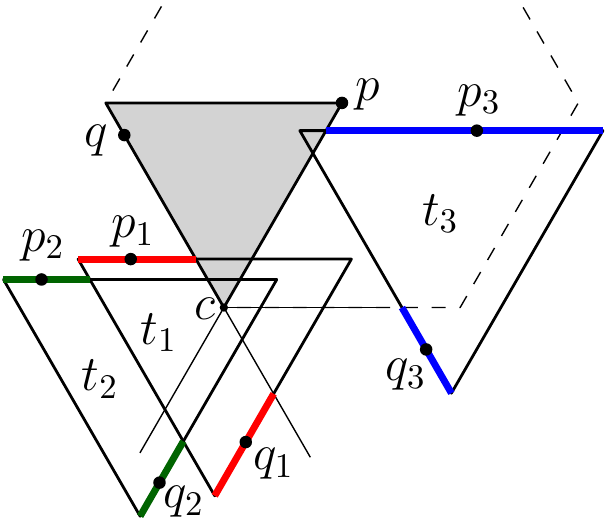}} \\
(a) & (b)\\\multicolumn{1}{m{.5\columnwidth}}{\centering\includegraphics[width=.36\columnwidth]{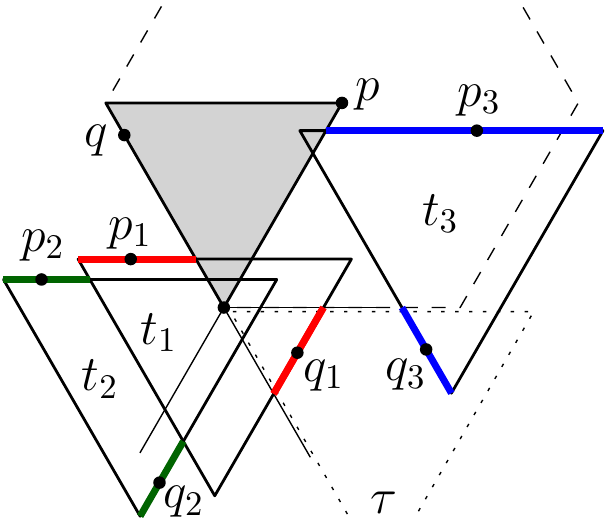}}
&\multicolumn{1}{m{.5\columnwidth}}{\centering\includegraphics[width=.36\columnwidth]{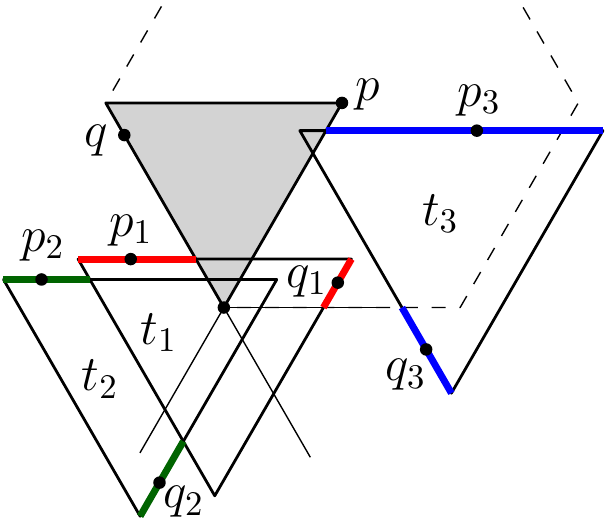}} \\
(c) & (d)
\end{tabular}$
  \caption{Illustration of Lemma~\ref{sm:vertex-side-intersection-2}: (a) $p_2$ is to the right of $t_1(s_3)$, (b) $q_1\in C^5_{\tra{t}{v_3}}$, (c) $q_1\in C^6_{\tra{t}{v_3}}$, and (d) $q_1\in C^1_{\tra{t}{v_3}}$.}
\label{sm:three-triangle-fig2}
\end{figure}
\begin{lemma}
\label{sm:vertex-side-intersection-2}
If two triangles in $\mathcal{I}_2$ intersect $\trmin$ through $\tra{\trmin}{v_3}$, then no other triangle in $\mathcal{I}_2$ intersects $\trmin$ through $\tra{\trmin}{s_3}$ or through $\tra{\trmin}{s'_2}$. 
\end{lemma}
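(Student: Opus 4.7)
The plan is a proof by contradiction modeled on Lemma~\ref{sm:side-intersection}. Assume two triangles $t_1=t(p_1,q_1)$ and $t_2=t(p_2,q_2)$ of $\mathcal{I}_2$ intersect $\trmin$ through $\tra{\trmin}{v_3}$, and that a third triangle $t_3=t(p_3,q_3)\in\mathcal{I}_2$ intersects $\trmin$ through $\tra{\trmin}{s_3}$; the case in which $t_3$ intersects $\trmin$ through $\tra{\trmin}{s'_2}$ is symmetric and will be handled by an identical argument. As usual, label the defining points so that $p_i$ lies on $t_i(s_1)$ and $q_i$ lies on $t_i(s_2)$, and recall that $t_1$ and $t_2$ are both at least as large as $\trmin$.

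First I will dispose of the ``crossing'' configuration, shown in Figure~\ref{sm:three-triangle-fig2}(a), in which $p_2$ lies to the right of $t_1(s_3)$ (or, symmetrically, $p_1$ lies to the right of $t_2(s_3)$). In this case $t_1(s_3)$ enters the interior of $t_2$, so by Lemma~\ref{sm:triangle-intersection-lemma} one of the pairs $\{(p_1,p_2),(q_1,q_2)\}$ spans a downward triangle strictly smaller than $\max\{t_1,t_2\}$; the cycle $p_1,q_1,p_2,p_1$ (or its counterpart) in $T$ then has a unique heaviest edge that is an MST edge, contradicting Lemma~\ref{sm:cycle-lemma}.

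With that configuration excluded, $t_1$ and $t_2$ are ``nested at $v_3$,'' and the remaining argument splits according to where $q_1$ can lie once $t_3$ is forced into the picture; these are exactly cases (b), (c), (d) of Figure~\ref{sm:three-triangle-fig2}, corresponding to $q_1\in C^5_{\tra{\trmin}{v_3}}$, $q_1\in C^6_{\tra{\trmin}{v_3}}$, and $q_1\in C^1_{\tra{\trmin}{v_3}}$ respectively. In each subcase I would combine two kinds of shortenings:
\begin{itemize}
\item an application of Observation~\ref{sm:obs2} (hexagon containment), to argue that some pair of defining points lies within the hexagon $X(\cdot,\cdot)$ of another, giving a triangle strictly smaller than $\trmin$; and
\item an application of Lemma~\ref{sm:triangle-intersection-lemma} across $t_1,t_2,t_3$, to bound a cross-edge by $\max\{t_i,t_j\}$.
\end{itemize}
Stitching these shortenings together along the cycle through $p,q,p_1,q_1,p_2,q_2,p_3,q_3$ in $T$ isolates, in each subcase, a tree edge which is the unique maximum on a cycle whose other edges are strictly shorter, contradicting Lemma~\ref{sm:cycle-lemma}.

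The main obstacle will be the case analysis itself: each of the three subcases (b)--(d) requires identifying precisely which pair of points realizes the smaller hexagon, which pair realizes the Lemma~\ref{sm:triangle-intersection-lemma} shortening, and closing a cycle of tree edges on which these shortenings are chord-alternatives. The role of the third triangle $t_3$ is only to confine $q_1$ to one of the three cones around $\tra{\trmin}{v_3}$ (since $t_3$ blocks the region where $s_3$ of $\trmin$ otherwise could meet it); once this confinement is made, the cycle-with-short-chord argument is essentially the one used in Lemma~\ref{sm:side-intersection} and should go through uniformly in each subcase.
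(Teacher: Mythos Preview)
Your overall scaffold (contradiction, cycle lemma, cone split on $q_1$) matches the paper, but there is a genuine gap in how you handle the ``crossing'' configuration of Figure~\ref{sm:three-triangle-fig2}(a), and a related misconception about the role of $t_3$.

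\medskip
\textbf{The case~(a) argument does not work.} You propose to dispose of the configuration ``$p_2$ to the right of $t_1(s_3)$'' using only $t_1$ and $t_2$: apply Lemma~\ref{sm:triangle-intersection-lemma} to get one short pair and close a three-cycle like $p_1,q_1,p_2$. This fails for two reasons. First, Lemma~\ref{sm:triangle-intersection-lemma} gives you a \emph{single} shortened edge, so in a cycle with tree edge $(p_1,q_1)$ you still have an unbounded chord; you would need both $(p_1,p_2)$ and $(q_1,q_2)$ short, which that lemma does not supply. Second, and more fundamentally, two triangles of $\mathcal{I}_2$ meeting $t$ through $t(v_3)$ in this crossing posture is \emph{allowed}: it occurs in the tight example of Figure~\ref{sm:five-fig}. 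So no contradiction can be extracted from $t_1,t_2$ alone. The paper instead observes that in this case $p_2$ lies in $\hex{p}{q}$ (so $t(p,p_2)\prec t$), uses Lemma~\ref{sm:triangle-intersection-lemma} between $t$ and $t_3$ to get $t(p,p_3)\prec t_3$ and between $t_2$ and $t_3$ to get $t(p_2,q_3)\prec t_3$, and closes the cycle $p,p_3,q_3,p_2,p$ to kill the tree edge $t_3$.

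\medskip
\textbf{The role of $t_3$ and the missing case structure.} Consequently, your claim that ``the role of $t_3$ is only to confine $q_1$ to one of the three cones'' is not accurate: $t_3$ enters the cycle explicitly in case~(i) above and again in the $C^6$ subcase, where the paper closes $p,p_3,q_3,q_1,p_1,q,p$. Before the cone split you also need two steps you skipped: rule out $t_1(v_3)\in t_2$ (and symmetrically) via Corollary~\ref{sm:biniaz-cor}, and inside the remaining case split on whether $p_1$ lies to the right of $t$ (handled exactly like case~(i), again using $t_3$) or to the left. Only the ``$p_1$ to the left'' branch leads to the cone trichotomy on $q_1$, and there the $C^5$ subcase uses Lemma~\ref{sm:triangle3} (the chain lemma, yielding \emph{both} $t(p_1,p_2)$ and $t(q_1,q_2)$ small), not Lemma~\ref{sm:triangle-intersection-lemma}.
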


\begin{proof}
The proof is by contradiction. Assume two triangles $t_1(p_1,q_1)$ and $t_2(p_2,q_2)$ in $\mathcal{I}_2$ intersect $\trmin$ through $\tra{\trmin}{v_3}$, and a triangle $t_3(p_3,q_3)$ in $\mathcal{I}_2$ intersects $\trmin$ through $\tra{\trmin}{s_3}$ or $\tra{\trmin}{s'_2}$. Let $p_i$ be the point which lies on $t_i(s_1)$ for $i=1,2,3$. By Lemma~\ref{sm:vertex-intersection}, $t_3$ cannot intersect both $\tra{\trmin}{s_3}$ and $\tra{\trmin}{s'_2}$. Thus, $t_3$ intersects $t$ either through $\tra{\trmin}{s_3}$ or through $\tra{\trmin}{s'_2}$. We prove the former case; the proof for the latter case is similar. Assume that $t_3$ intersects $t$ through $\tra{\trmin}{s_3}$. By Lemma~\ref{sm:triangle-intersection-lemma}, $t(p,p_3)\prec t_3$. See Figure~\ref{sm:three-triangle-fig2}. In addition, both $t_1(s_3)$ and $t_2(s_3)$ are to the left of $t_3(s_3)$, because otherwise $q_3$ lies in $t_1\cup t_2\cup \hex{p}{q}$. If $q_3\in t_1\cup t_2$ we get a contradiction to Observation~\ref{sm:no-point-in-triangle-obs}. If $q_3\in \hex{p}{q}$ then by Observation~\ref{sm:obs2}, we have $t(p,q_3)\prec t$, and hence, the cycle $p,p_3,q_3,p$ contradicts Lemma~\ref{sm:cycle-lemma}.

Without loss of generality assume that $\tra{t_1}{s_1}$ is above $\tra{t_2}{s_1}$; see Figure~\ref{sm:three-triangle-fig2}. If $\tra{t_1}{v_3}$ is in $t_2$ or $\tra{t_2}{v_3}$ is in $t_1$, then we get a contradiction to Corollary~\ref{sm:biniaz-cor}. Thus, assume that $\tra{t_1}{v_3}\notin t_2$ and $\tra{t_2}{v_3} \notin t_1$. This implies that either (i) $\tra{t_2}{s_3}$ is to the right of $\tra{t_1}{s_3}$ or (ii) $\tra{t_2}{s_2}$ is to the left of $\tra{t_1}{s_2}$. We show that both cases lead to a contradiction.

In case (i), $p_2$ lies in the interior of $\hex{p,q}$, and then by Observation~\ref{sm:obs2}, we have $t(p,p_2)\prec t$; see Figure~\ref{sm:three-triangle-fig2}(a). In addition, Lemma \ref{sm:triangle-intersection-lemma} implies that $t(p_2,q_3)\prec \max\{t,t_3\}\preceq t_3$. Thus, the cycle $p,p_3,\allowbreak q_3,\allowbreak p_2,p$ contradicts Lemma~\ref{sm:cycle-lemma}.

Now consider case (ii) where $\tra{t_1}{s_1}$ is above $\tra{t_2}{s_1}$ and $\tra{t_2}{s_2}$ is to the left of $\tra{t_1}{s_2}$. If $p_1$ is to the right of $t$, then as in case (i), the cycle $p,p_3,\allowbreak q_3,\allowbreak p_1,p$ contradicts Lemma~\ref{sm:cycle-lemma}. Thus, assume that $p_1$ is to the left of $t$, as shown in Figure~\ref{sm:three-triangle-fig2}(b). By Lemma~\ref{sm:triangle-intersection-lemma}, we have $t(q,p_1)\prec \max\{t,t_1\}\preceq t_1$. Each side of $t_1$ contains either $p_1$ or $q_1$, while $p_1$ is on the part of $t_1(s_1)$ which is to the left of $t$, thus, $q_1$ is on $\tra{t_1}{s_3}$. Consider the six cones around $\tra{t}{v_3}$; see Figure~\ref{sm:three-triangle-fig2}(b). We have three cases: (a) $q_1\in C^5_{\tra{t}{v_3}}$, (b) $q_1\in C^6_{\tra{t}{v_3}}$ or (c) $q_1\in C^1_{\tra{t}{v_3}}$. 

In case (a), which is shown in Figure~\ref{sm:three-triangle-fig2}(b), by Lemma~\ref{sm:triangle3}, we have $\max\{t(p_1,p_2),t(q_1,q_2)\}\prec\max\{t_1,t_2\}$. Thus, the cycle $p_1,p_2,q_2,\allowbreak q_1,\allowbreak p_1$ contradicts Lemma~\ref{sm:cycle-lemma}. In Case (b), which is shown in Figure~\ref{sm:three-triangle-fig2}(c), we have $t(q_1,q_3)\prec t_3$, because if we map $t_3$ to a downward triangle $\tau$\textemdash of area equal to the area of $t_3$\textemdash which has $\tau(v_2)$ on $t(v_3)$, then $\tau$ contains both $q_1$ and $q_3$. Therefore, the cycle $p,p_3,q_3,q_1,\allowbreak p_1,\allowbreak q,p$ contradicts Lemma~\ref{sm:cycle-lemma}. In Case (c), which is shown in Figure~\ref{sm:three-triangle-fig2}(d), by Observation~\ref{sm:obs2}, $t(p,q_1)\prec t$, and then, the cycle $p,q_1,p_1,q,p$ contradicts Lemma~\ref{sm:cycle-lemma}.
\end{proof}

\begin{figure}[htb]
  \centering
\setlength{\tabcolsep}{0in}
  $\begin{tabular}{cc}
\multicolumn{1}{m{.5\columnwidth}}{\centering\includegraphics[width=.43\columnwidth]{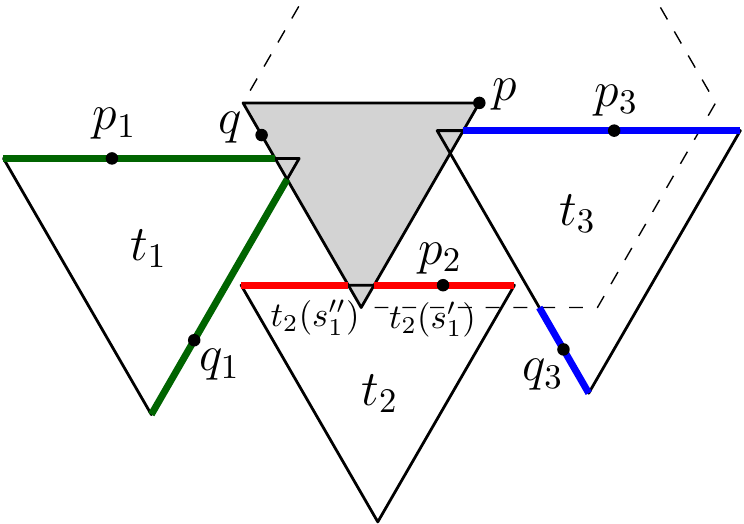}}
&\multicolumn{1}{m{.5\columnwidth}}{\centering\includegraphics[width=.43\columnwidth]{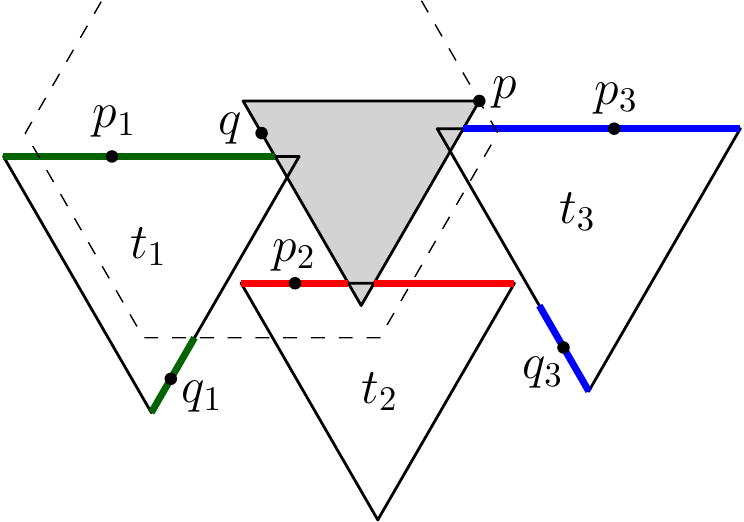}} \\
(a) & (b)
\end{tabular}$
  \caption{Illustration of Lemma~\ref{sm:vertex-side-intersection-1}: (a) $p_2\in t_2(s'_1)$, and (b) $p_2\in t_2(s''_1)$.}
\label{sm:three-triangle-fig}
\end{figure}

\begin{lemma}
\label{sm:vertex-side-intersection-1}
If three triangles intersect $\trmin$ through $\tra{\trmin}{s'_2}, \tra{\trmin}{v_3}$ and $\tra{\trmin}{s_3}$. Then, at least one of the three triangles is not in $\mathcal{I}_2$. 
\end{lemma}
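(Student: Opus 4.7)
The plan is to prove Lemma~\ref{sm:vertex-side-intersection-1} by contradiction, assuming all three triangles $t_1, t_2, t_3 \in \mathcal{I}_2$ simultaneously intersect $\trmin$ through $\trmin(s_3)$, $\trmin(v_3)$, and $\trmin(s'_2)$, respectively, and then deriving a violation of the minimum spanning tree property via Lemma~\ref{sm:cycle-lemma}. Write $t_i = t(p_i, q_i)$ with $p_i$ lying on $t_i(s_1)$ for $i=1,2,3$. Since $t_2$ intersects $\trmin$ through the corner $\trmin(v_3)$, both $t_2(s_2)$ and $t_2(s_3)$ cross the boundary of $\trmin$, so $\trmin(v_3)$ lies strictly inside $t_2$.

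First I would combine Lemma~\ref{sm:intersection-lemma} with the fact that $\trmin(v_3)$ already lies in both $\trmin$ and $t_2$: no point can lie in four triangles of $\mathcal{T}$, hence $\trmin(v_3)$ cannot belong to both $t_1$ and $t_3$. Without loss of generality assume $\trmin(v_3) \notin t_1$, so $t_1$ intersects $\trmin$ strictly through the interior of the segment $\trmin(s_3)$ above $v_3$. Then I would apply Lemma~\ref{sm:triangle-intersection-lemma} to obtain $t(p, p_1) \prec \max\{\trmin, t_1\} = t_1$ and, symmetrically, $t(q, p_3) \prec t_3$.

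Next I would handle the auxiliary points $q_1, q_3, p_2, q_2$ by observing that if any of them lies inside $X(p,q)$ or $X(q,p)$, then Observation~\ref{sm:obs2} produces a triangle strictly smaller than $\trmin$, and a short cycle in $T$ (for example $p \to p_1 \to q_1 \to p$ if $q_1 \in X(p,q)$) already contradicts Lemma~\ref{sm:cycle-lemma}, exactly as in the proofs of Lemmas~\ref{sm:side-intersection} and~\ref{sm:vertex-side-intersection-2}. This reduces the argument to the configuration in which every $q_i$ lies outside both hexagons, forcing the sides $t_1(s_3)$ and $t_3(s_2)$ to sandwich the corresponding sides of $t_2$. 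In this configuration, Corollary~\ref{sm:biniaz-cor} forbids $t_2 \prec_{s_3} t_1$ paired with the analogous $s'_2$-chain on the other side, and Lemma~\ref{sm:triangle3} yields $\max\{t(p_1,p_2), t(q_1,q_2)\} \prec \max\{t_1,t_2\}$ and $\max\{t(p_2,p_3), t(q_2,q_3)\} \prec \max\{t_2,t_3\}$. Combining these strict inequalities with the two bounds from the previous paragraph produces a cycle in $T$ of the form $p \to p_1 \to q_1 \to q_2 \to p_2 \to q_2 \to q_3 \to p_3 \to q \to p$ (or a shorter variant using whichever pair is smaller) whose unique maximum-weight edge is one of $t_1, t_2, t_3$, contradicting Lemma~\ref{sm:cycle-lemma}.

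The main obstacle will be the case analysis parameterised by (a) whether each $q_i$ lies on $t_i(s_2)$ or $t_i(s_3)$, (b) the vertical ordering of the three horizontal sides $t_i(s_1)$, and (c) the horizontal ordering of the lines supporting the $s_3$-sides and $s'_2$-sides. By the $\trmin(v_3) \notin t_1$ reduction, many of these sub-cases collapse because the forbidden four-fold intersection at $v_3$ already prunes several branches; the remaining branches each admit either an Observation~\ref{sm:obs2}-based contradiction (a defining point falls into some hexagon $X(\cdot, \cdot)$) or a Lemma~\ref{sm:triangle3}-based contradiction (a pair of edges across two of the triangles is too short), so in every sub-case a cycle with a unique heaviest edge appears in $T$, violating Lemma~\ref{sm:cycle-lemma} and completing the proof.
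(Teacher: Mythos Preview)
Your proposal has genuine gaps. First, the appeal to Lemma~\ref{sm:intersection-lemma} to rule out $\trmin(v_3)\in t_1\cap t_3$ is unnecessary: by hypothesis the triangle through $\trmin(s_3)$ meets only that side and the triangle through $\trmin(s'_2)$ meets only that side, so neither contains the corner $v_3$. Second, the cycle you write down, $p \to p_1 \to q_1 \to q_2 \to p_2 \to q_2 \to q_3 \to p_3 \to q \to p$, repeats $q_2$ and is not a cycle. Third, and more seriously, your invocations of Corollary~\ref{sm:biniaz-cor} and Lemma~\ref{sm:triangle3} require that the triangles in question intersect one another through their $s_1$-sides (a horizontal line cutting both), which you never establish for the pairs $(t_1,t_2)$ and $(t_2,t_3)$; the hypotheses only tell you how each $t_i$ meets $\trmin$, not how they meet each other.

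The paper's argument avoids all of this by focusing on a single point: $p_2$, the defining point of $t_2$ lying on $t_2(s_1)$. Because $t_2$ passes through the corner $\trmin(v_3)$, the horizontal side $t_2(s_1)$ crosses both $\trmin(s_3)$ and $\trmin(s_2)$, and by Observation~\ref{sm:no-point-in-triangle-obs} the point $p_2$ cannot lie in the interior of $\trmin$. Hence $p_2$ sits either on the portion of $t_2(s_1)$ to the right of $\trmin(s_3)$, in which case $p_2\in X(p,q)$, or on the portion to the left of $\trmin(s_2)$, in which case $p_2\in X(q,p)$. In the first case Observation~\ref{sm:obs2} gives $t(p,p_2)\prec\trmin$ and Lemma~\ref{sm:triangle-intersection-lemma} gives $t(p_2,q_3)\prec t_3$, so the 4-cycle $p,p_3,q_3,p_2,p$ contradicts Lemma~\ref{sm:cycle-lemma}; the second case is symmetric with the cycle $q,p_2,q_1,p_1,q$. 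No analysis of how $t_1,t_2,t_3$ overlap each other is needed.
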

\begin{proof}
The proof is by contradiction. Assume that three triangles $t_1(p_1,\allowbreak q_1),\allowbreak  t_2(p_2,q_2),\allowbreak t_3(p_3,q_3)$ in $\mathcal{I}_2$ intersect $\trmin$ through $\tra{\trmin}{s'_2}, \tra{\trmin}{v_3}, \tra{\trmin}{s_3}$, respectively. Let $p_i$ be the point which lies on $t_i(s_1)$ for $i=1,2,3$. See Figure~\ref{sm:three-triangle-fig}(a). By Lemma \ref{sm:triangle-intersection-lemma}, we have $t(p,p_3)\prec t_3$ and $t(q,p_1)\prec t_1$. If $q_3$ is in the interior of $\hex{p}{q}$, then by Observation~\ref{sm:obs2}, $t(p,q_3)\prec t$, and hence, the cycle $p,p_3,q_3,p$ contradicts Lemma~\ref{sm:cycle-lemma}. If $q_1$ is in $\hex{q}{p}$, then by Observation~\ref{sm:obs2}, $t(q,q_1)\prec t$, and hence, the cycle $q,q_1,p_1,q$ contradicts Lemma~\ref{sm:cycle-lemma}; see Figure~\ref{sm:three-triangle-fig}(b). Thus, assume that $q_3\notin \hex{p}{q}$ and $q_1\notin \hex{q}{p}$. Let $\tra{t_2}{s'_1}$ and $\tra{t_2}{s''_1}$ be the parts of $\tra{t_2}{s_1}$ which are to the right of $t(s_3)$ and to the left of $t(s_2)$, respectively. Consider the point $p_2$ which lies on $\tra{t_2}{s_1}$. 
If $p_2\in\tra{t_2}{s'_1}$, then $p_2\in \hex{p}{q}$ and by Observation~\ref{sm:obs2}, $t(p,p_2)\prec t$. In addition, Lemma ~\ref{sm:triangle-intersection-lemma} implies that $t(p_2,q_3)\prec t_3$. Thus, the cycle $p,p_3,q_3,p_2,p$ contradicts Lemma~\ref{sm:cycle-lemma}; see Figure~\ref{sm:three-triangle-fig}(a).
If $p_2\in\tra{t_2}{s''_1}$, then $p_2\in \hex{q}{p}$ and by Observation~\ref{sm:obs2}, $t(q,p_2)\prec t$. In addition, Lemma ~\ref{sm:triangle-intersection-lemma} implies that $t(p_2,q_1)\prec t_2$. Thus, the cycle $q,p_2,q_1,p_1,q$ contradicts Lemma~\ref{sm:cycle-lemma}; see Figure~\ref{sm:three-triangle-fig}(b).
\end{proof}

Putting Lemmas~\ref{sm:side-intersection}, \ref{sm:vertex-intersection}, \ref{sm:vertex-side-intersection-2}, and \ref{sm:vertex-side-intersection-1} together, implies that at most two triangles in $\mathcal{I}_2$ intersect $t$ through $t(s_3)\cup t(s'_2)$,  and consequently, at most two triangles in $\mathcal{I}_2$ intersect $t$ through $t(s_1)\cup t(s''_2)$. Thus, $\mathcal{I}_2$ contains at most four triangles. Recall that $\mathcal{I}_1$ contains at most four triangles. Then, $\mathcal{I}(e^+)$ has at most eight triangles. Therefore, the influence set of $e$, contains at most 9 edges (including $e$ itself). This completes the proof of Lemma~\ref{sm:triangle-inf-lemma}. 

\begin{theorem}
\label{sm:half-theta-six-thr}
Algorithm~\ref{sm:alg1} computes a strong matching of size at least $\lceil\frac{n-1}{9}\rceil$ in $\G{\trids}{P}$.
\end{theorem}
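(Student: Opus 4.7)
The plan is to combine the two pieces already established in the preceding text. Theorem~\ref{sm:GS-thr} is a general guarantee about Algorithm~\ref{sm:alg1}: for any shape $S\in\{\ddisc,\trid,\sqr\}$, it returns a strong matching of size at least $\lceil (n-1)/\Inf{T}\rceil$, where $T$ is a minimum spanning tree of $\G{S}{P}$. So to prove Theorem~\ref{sm:half-theta-six-thr}, I only need to bound $\Inf{T}$ from above by $9$ when $S=\trid$.

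First I would recall the definition of the influence number: $\Inf{T}=\max\{|\Inf{e}|: e\in T\}$. Thus bounding $\Inf{T}$ reduces to bounding $|\Inf{e}|$ uniformly over all edges $e$ of $T$. This is exactly the content of Lemma~\ref{sm:triangle-inf-lemma}, which was proved in the preceding sequence of geometric lemmas (Lemmas~\ref{sm:triangle3}--\ref{sm:vertex-side-intersection-1}) via a careful case analysis splitting $\mathcal{I}(e^+)$ into $\mathcal{I}_1\cup \mathcal{I}_2$, handling $\mathcal{I}_1$ by the $\frac{1}{2}\Theta_6$ structure (Lemma~\ref{sm:deg-six-half}) and $\mathcal{I}_2$ by the side/corner incidence arguments.

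The full proof is then essentially a two-line chaining: apply Lemma~\ref{sm:triangle-inf-lemma} to conclude $\Inf{T}\le 9$, and then substitute into the size guarantee of Theorem~\ref{sm:GS-thr} to obtain
\[
|\mathcal{M}|\;\ge\;\left\lceil\frac{n-1}{\Inf{T}}\right\rceil\;\ge\;\left\lceil\frac{n-1}{9}\right\rceil.
\]
There is no real obstacle here: all the hard geometric work (the Tutte-type cycle-shortening arguments, the hexagon/$X(p,q)$ arguments, and the configuration analyses around $t(v_3)$) has already been carried out in the lemmas leading up to this statement. The only thing to mention explicitly is that Lemma~\ref{sm:mst-in-GS} guarantees $T\subseteq \G{\trids}{P}$, so the matching returned by the algorithm indeed consists of edges of $\G{\trids}{P}$, and by the proof of Theorem~\ref{sm:GS-thr} the returned matching is strong (its representing triangles are pairwise disjoint because each time we add an edge $\emin$ we delete from $F$ every edge whose triangle meets $t(\emin)$).
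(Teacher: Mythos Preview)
Your proposal is correct and matches the paper's approach exactly: the theorem is obtained by combining the general size guarantee of Theorem~\ref{sm:GS-thr} with the bound $\Inf{T}\le 9$ from Lemma~\ref{sm:triangle-inf-lemma}. The paper in fact states the theorem without a separate proof, having already laid out this chaining at the start of Section~\ref{sm:half-theta-six-section}.
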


\begin{figure}[htb]
  \centering
\includegraphics[width=.43\columnwidth]{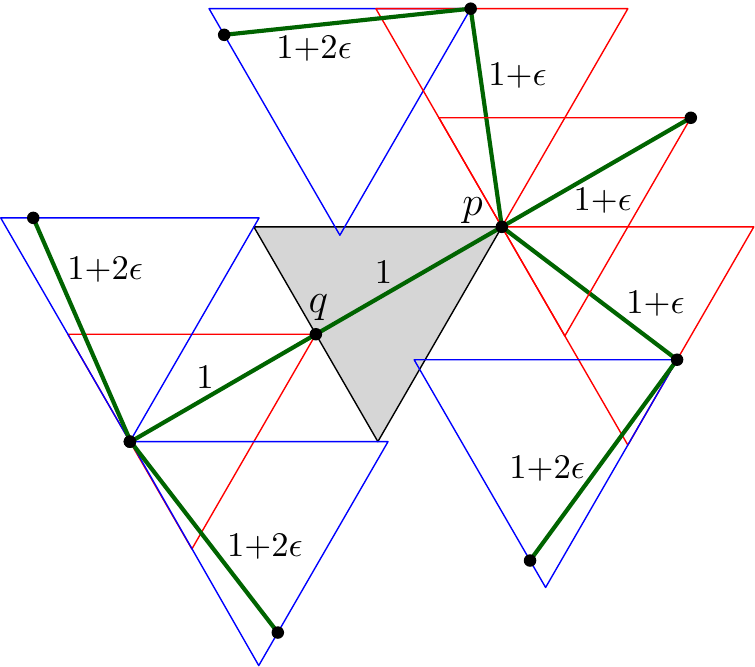}
  \caption{Four triangles in $\mathcal{I}_1$ (in red) and four triangles in $\mathcal{I}_2$ (in blue) intersect with $t(p,q)$.}
\label{sm:five-fig}
\end{figure}

The bound obtained by Lemma~\ref{sm:triangle-inf-lemma} is tight. Figure~\ref{sm:five-fig} shows a configuration of 10 points in general position such that the influence set of a minimal edge is 9. In Figure~\ref{sm:five-fig}, $t=t(p,q)$ represents a smallest edge of weight 1; the minimum spanning tree is shown in bold-green line segments. The weight of all edges\textemdash the area of the triangles representing these edges\textemdash is at least 1. The red triangles are in $\mathcal{I}_1$ and share either $p$ or $q$ with $t$. The blue triangles are in $\mathcal{I}_2$ and intersect $t$ through $t(s_1)\cup t(s''_2)$ or through $t(s_3)\cup t(s'_2)$; as show in Figure~\ref{sm:five-fig}, two of them share only the points $t(v_2)$ and $t(v_3)$.

\section{Strong Matching in {\normalsize $G_{\bigtriangledown \hspace*{-8pt} \bigtriangleup}(P)$}}
\label{sm:theta-six-section}
In this section we consider the problem of computing a strong matching in $\GUD(P)$. Recall that $\GUD(P)$ is the union of $\G{\trids}{P}$ and $\G{\trius}{P}$, and is equal to the graph $\Theta_6(P)$. We assume that $P$ is in general position, i.e., for each point $p\in P$, there is no point of $P\setminus \{p\}$ on $l_p^0$, $l_p^{60}$, and $l_p^{120}$. A matching $\mathcal{M}$ in $\GUD(P)$ is a strong matching if for each edge $e$ in $\mathcal{M}$ there is a homothet of $\trid$ or a homothet of $\triu$ representing $e$, such that these homothets are pairwise disjoint. See Figure~\ref{sm:strong-example}(b). Using a similar approach as in~\cite{Abrego2009}, we prove the following theorem:

\begin{theorem}
\label{sm:theta-six-thr}
Let $P$ be a set of $n$ points in general position in the plane. Let $S$ be an upward or a downward equilateral-triangle that contains $P$. Then, it is possible to find a strong matching of size at least $\lceil\frac{n-1}{4}\rceil$ for $\GUD(P)$ in $S$.
\end{theorem}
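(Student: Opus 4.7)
My plan is to proceed by induction on $n$. The base cases $n \leq 4$, where $\lceil (n-1)/4 \rceil \leq 1$, should be handled directly: for $n \leq 1$ the empty matching suffices, and for $2 \leq n \leq 4$ any two points $p,q \in P$ admit some equilateral triangle (either $t(p,q)$ or $t'(p,q)$) contained in $S$ that matches them.

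For the inductive step with $n \geq 5$, I assume without loss of generality that $S$ is a downward equilateral triangle. The main idea is to match two points with a small triangle near a vertex of $S$ and then apply induction to three smaller equilateral triangles covering the remaining region. I pick a vertex $v$ of $S$ and sweep a line $\ell$ parallel to the side of $S$ opposite $v$, starting at $v$ and moving away. For any position of $\ell$, the part of $S$ on the $v$-side of $\ell$ is a downward equilateral sub-triangle $S_v$ of $S$. I stop at the first position at which $S_v$ contains exactly two points $p,q$ of $P$; the general position assumption on $P$ guarantees such a position exists and is unique. I then match $p$ and $q$ using the smallest downward triangle $t(p,q) \subseteq S_v$, which is empty of other points of $P$ by Observation~\ref{sm:no-point-in-triangle-obs}.

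The complementary region $R := S \setminus S_v$ is a trapezoid containing the $n-2$ remaining points. I plan to cover $R$ by three equilateral triangles $T_L, T_R, T_M \subseteq S$ with pairwise disjoint interiors: $T_L$ and $T_R$ are the downward equilateral triangles at the two corners of $R$ opposite to $v$ (obtained by extending the two sides of $S$ incident to those corners down to $\ell$), and $T_M$ is the upward (or downward) equilateral sub-triangle of $S$ covering the central portion of $R$. Each point of $P \cap R$ is assigned consistently to exactly one of $T_L,T_R,T_M$; let $a,b,c$ be the resulting point counts, so that $a+b+c = n-2$. Applying the induction hypothesis to $T_L, T_R, T_M$ yields strong matchings of sizes at least $\lceil (a-1)/4 \rceil$, $\lceil (b-1)/4 \rceil$, and $\lceil (c-1)/4 \rceil$, all inside their respective triangles. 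Together with the edge $(p,q)$, whose triangle $t(p,q)$ lies in $S_v$ (disjoint from $R$), this produces a strong matching of total size at least
\[
1 + \left\lceil \frac{a-1}{4} \right\rceil + \left\lceil \frac{b-1}{4} \right\rceil + \left\lceil \frac{c-1}{4} \right\rceil \;\geq\; 1 + \left\lceil \frac{n-5}{4} \right\rceil \;=\; \left\lceil \frac{n-1}{4} \right\rceil,
\]
completing the induction.

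The hard part will be ensuring that the geometric decomposition of $R$ into $T_L, T_R, T_M \subseteq S$ exists for every possible stopping position of $\ell$. The straightforward construction works cleanly when $\ell$ stops in the first half of $S$ (within distance half of the side-length from $v$), which corresponds to the classical subdivision of $S$ into four equal sub-triangles. When $\ell$ stops in the far half from $v$, the upward triangle $T_M$ built in the natural way can protrude outside $S$, and extra care is needed. I plan to resolve this by choosing instead to sweep from one of the other two vertices of $S$, arguing that at least one of the three sweep directions yields a stopping position in the first half; and when all three sweeps stop in the far half, I will handle that case separately by noting that the three half-sized corner sub-triangles of $S$ then each contain at most one point, so that the central half-sized opposite-orientation sub-triangle contains $n-k$ points with $k \leq 3$, and a direct application of induction on it (combined with at most one additional matched pair formed from the up to three corner points) still gives the required bound.
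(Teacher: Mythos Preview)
Your decomposition of the trapezoid $R=S\setminus S_v$ into three pairwise-disjoint equilateral triangles inside $S$ does not work as stated. Take $S$ downward with bottom vertex $v$. If the sweep stops at fraction $t$ from $v$, then the triangles $T_L,T_R$ you describe (anchored at the top corners of $S$ and extending down to $\ell$) have side length $(1-t)s$. For $t<1/2$---precisely what you call the ``easy'' case---these two triangles overlap. If instead you take $T_L,T_R,T_M$ to be the three half-sized sub-triangles of the classical subdivision, they are disjoint but they do not cover the points of $P$ lying in the bottom corner half-triangle outside $S_v$, and there is no reason such points do not exist. For $t>1/2$, the region $R\setminus(T_L\cup T_R)$ is a genuine quadrilateral (its two horizontal sides have lengths $2(2t-1)s$ and $2ts$), so no single equilateral triangle inside $S$ covers it without overlapping $T_L$ or $T_R$. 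In short, neither regime admits the clean three-triangle cover you need.

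Your backup case also has a hole. When all three corner half-triangles contain at most one point and $k=1$ (one corner point total), induction on the central upward half-triangle yields only $\lceil(n-2)/4\rceil$ pairs and you cannot form an additional pair from a single corner point; for $n\equiv 2\pmod 4$ this is one short of $\lceil(n-1)/4\rceil$. Even when $k\in\{2,3\}$, the equilateral triangle realising an edge between two corner points is not obviously disjoint from the triangles used inside the central half-triangle, so ``one additional matched pair'' is not justified. The paper avoids all of this by using the fixed four-way subdivision of $S$, reducing to the critical case $n=4m+2$, and then doing a residue analysis on $(r_1,r_2,r_3,r_4)$ together with variable-size anchored sub-triangles $S_i^{\pm x}$ to locate one extra pair; the geometry there is delicate but local, and does not rely on any global three-triangle decomposition of a trapezoid.
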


\begin{proof}
The proof is by induction. Assume that any point set of size $n'\le n-1$ in a triangle $S'$, has a strong matching of size $\lceil \frac{n'-1}{4}\rceil$ in $S'$. Without loss of generality, assume $S$ is an upward equilateral-triangle. If $n$ is $0$ or $1$, then there is no matching in $S$, and if $n\in\{2, 3, 4, 5\}$, then by shrinking $S$, it is possible to find a strongly matched pair; the statement of the theorem holds. Suppose that $n\ge 6$, and $n=4m+r$, where $r\in\{0,1,2,3\}$. If $r\in\{0, 1,3\}$, then 
$\lceil \frac{n-1}{4}\rceil = \lceil \frac{(n-1)-1}{4}\rceil$, and by
induction we are done. Suppose that $n=4m+2$, for some $m\ge 1$. We prove that there are $\lceil\frac{n-1}{4}\rceil=m+1$ disjoint equilateral-triangles (upward or downward) in $S$,
each of them matches a pair of points in $P$. Partition $S$ into four equal area equilateral triangles $S_1, S_2, S_3, S_4$ containing $n_1, n_2,n_3, n_4$ points, respectively; see Figure~\ref{sm:Theta-six-fig}(a). Let $n_i=4m_i+r_i$, where $r_i\in\{0,1,2,3\}$. 
By induction, in $S_1\cup S_2\cup S_3\cup S_4$, we have a strong matching of size at least
\begin{equation}
\label{sm:A-eq}
A=\left\lceil\frac{n_1-1}{4}\right\rceil + \left\lceil\frac{n_2-1}{4}\right\rceil +\left\lceil\frac{n_3-1}{4}\right\rceil+\left\lceil \frac{n_4-1}{4}\right\rceil.
\end{equation}

\begin{paragraph}{Claim 1.}$A\ge m$.\end{paragraph}
\begin{proof}
By Equation~(\ref{sm:A-eq}), we have
\begin{align*}
A&= \sum_{i=1}^{4}{\left\lceil\frac{n_i-1}{4}\right\rceil} \ge 
\sum_{i=1}^{4}\frac{n_i-1}{4} =\frac{n}{4} -1=\frac{4m+2}{4} -1=m-\frac{1}{2}.
\end{align*}
Since $A$ is an integer, we argue that $A\ge m$.
\end{proof}

If $A> m$, then we are done. Assume that $A=m$; in fact, by the induction hypothesis we have an strong matching of size $m$ for $P$. In order to complete the proof, we have to get one more strongly matched pair. Let $R$ be the multiset $\{r_1,r_2,r_3,r_4\}$.

\begin{paragraph}{Claim 2.}{\em If $A=m$, then either (i) one element in $R$ is equal to $3$ and the other elements are equal to $1$, or (ii) two elements in $R$ are equal to $0$ and the other elements are equal to $1$.}\end{paragraph}
\begin{proof}
Let $\alpha=r_1+r_2+r_3+r_4$, where $0\le r_i\le 3$. Then $n=4(m_1+m_2+m_3+m_4)+\alpha$. Since $n=4m+2$, $\alpha=4k+2$, for some $0\le k\le 2$. Thus, $n=4(m_1+m_2+m_3+m_4+k)+2$, where $m=m_1+m_2+m_3+m_4+k$.

By induction, in $S_i$, we get a matching of size at least $\lceil \frac{(4m_i+r_i)-1}{4}\rceil=m_i+\lceil \frac{r_i-1}{4}\rceil$. Hence, in $S_1\cup S_2\cup S_3\cup S_4$, we get a matching of size at least

\begin{equation*}
A=m_1+m_2+m_3+m_4+\left\lceil\frac{r_1-1}{4}\right\rceil + \left\lceil\frac{r_2-1}{4}\right\rceil +\left\lceil\frac{r_3-1}{4}\right\rceil+\left\lceil \frac{r_4-1}{4}\right\rceil.
\end{equation*}

Since $A=m$ and $m=m_1+m_2+m_3+m_4+k$, we have 

\begin{equation}
\label{sm:k-eq}
k=\left\lceil\frac{r_1-1}{4}\right\rceil + \left\lceil\frac{r_2-1}{4}\right\rceil +\left\lceil\frac{r_3-1}{4}\right\rceil+\left\lceil \frac{r_4-1}{4}\right\rceil.
\end{equation}

Note that $0\le k\le 2$.
We go through some case analysis: (i) $k=0$, (ii) $k=1$, (iii) $k=2$. In case (i), we have $\alpha =4k+2=r_1+r_2+r_3+r_4=2$. In order to have $k$ equal to 0 in Equation~(\ref{sm:k-eq}), no element in $R$ should be more than 1; this happens only if two elements in $R$ are equal to 0 and the other two elements are equal to 1. In case (ii), we have $\alpha =r_1+r_2+r_3+r_4=6$. In order to have $k$ equal to 1 in Equation~(\ref{sm:k-eq}), at most one element in $R$ should be greater than 1; this happens only if three elements in $R$ are equal to 1 and the other element is equal to 3 (note that all elements in $R$ are smaller than 4). In case (iii), we have $\alpha =r_1+r_2+r_3+r_4=10$. In order to have $k$ equal to 2 in Equation~(\ref{sm:k-eq}), at most two elements in $R$ should be greater than 1; which is not possible.
\end{proof} 
We show how to find one more matched pair in each case of Claim 2.

We define $\SM{$x$}{1}$ as the smallest upward equilateral-triangle contained in $S_1$ and anchored at the top corner of $S_1$, which contains all the points in $S_1$ except $x$ points. If $S_1$ contains less than $x$ points, then the area of $\SM{$x$}{1}$ is zero. We also define $\SP{$x$}{1}$ as the smallest upward equilateral-triangle that contains $S_1$ and anchored at the top corner of $S_1$, which has all the points in $S_1$ plus $x$ other points of $P$. Similarly we define upward triangles $\SM{$x$}{2}$ and $\SP{$x$}{2}$ which are anchored at the left corner of $S_2$. Moreover, we define upward triangles $\SM{$x$}{4}$ and $\SP{$x$}{4}$ which are anchored at the right corner of $S_4$. We define downward triangles $\SM{$x$}{3l}$, $\SM{$x$}{3r}$, $\SM{$x$}{3b}$ which are anchored at the top-left corner, top-right corner, and bottom corner of $S_3$, respectively. See Figure~\ref{sm:Theta-six-fig}(a). 

{Case 1:} {\em One element in $R$ is equal to 3 and the other elements are equal to 1.}

In this case, we have $m=m_1+m_2+m_3+m_4+1$. Because of the symmetry, we have two cases: (i) $r_3=3$, (ii) $r_j=3$ for some $j\in\{1,2,4\}$.

\begin{itemize}
 
\begin{figure}[h!]
  \centering
\setlength{\tabcolsep}{0in}
  $\begin{tabular}{cc}
\multicolumn{1}{m{.5\columnwidth}}{\centering\includegraphics[width=.4\columnwidth]{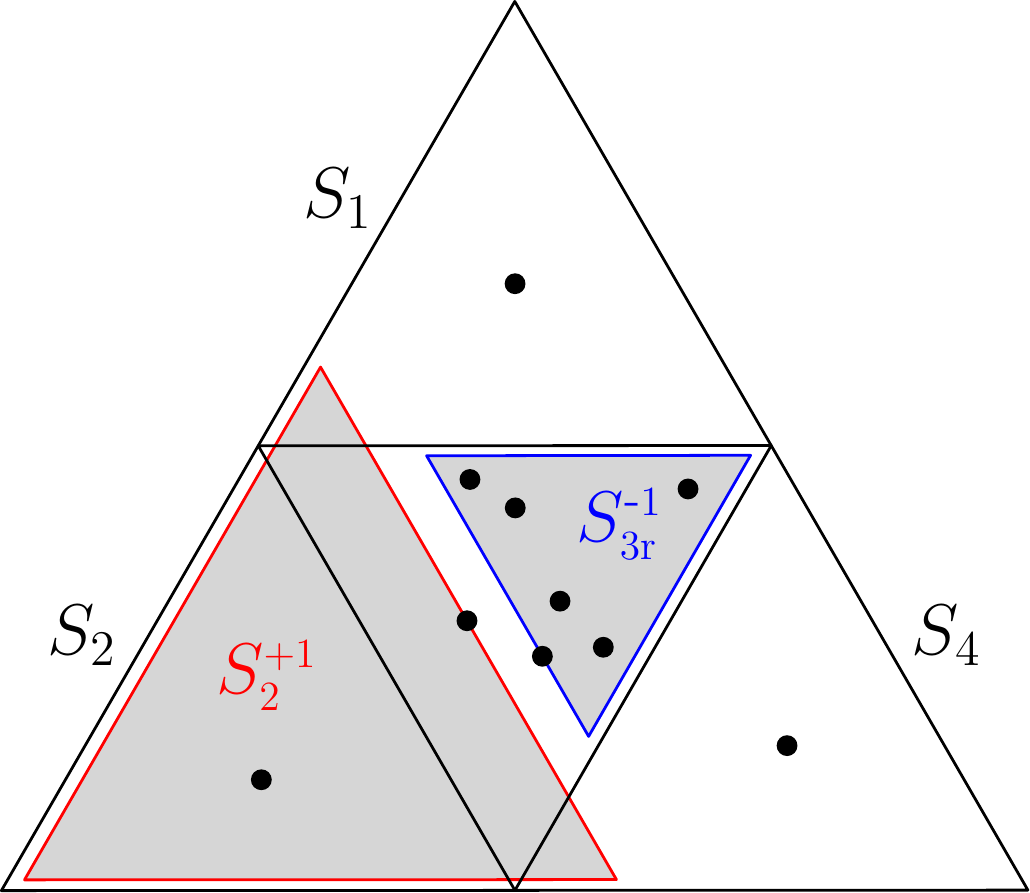}}
&\multicolumn{1}{m{.5\columnwidth}}{\centering\includegraphics[width=.4\columnwidth]{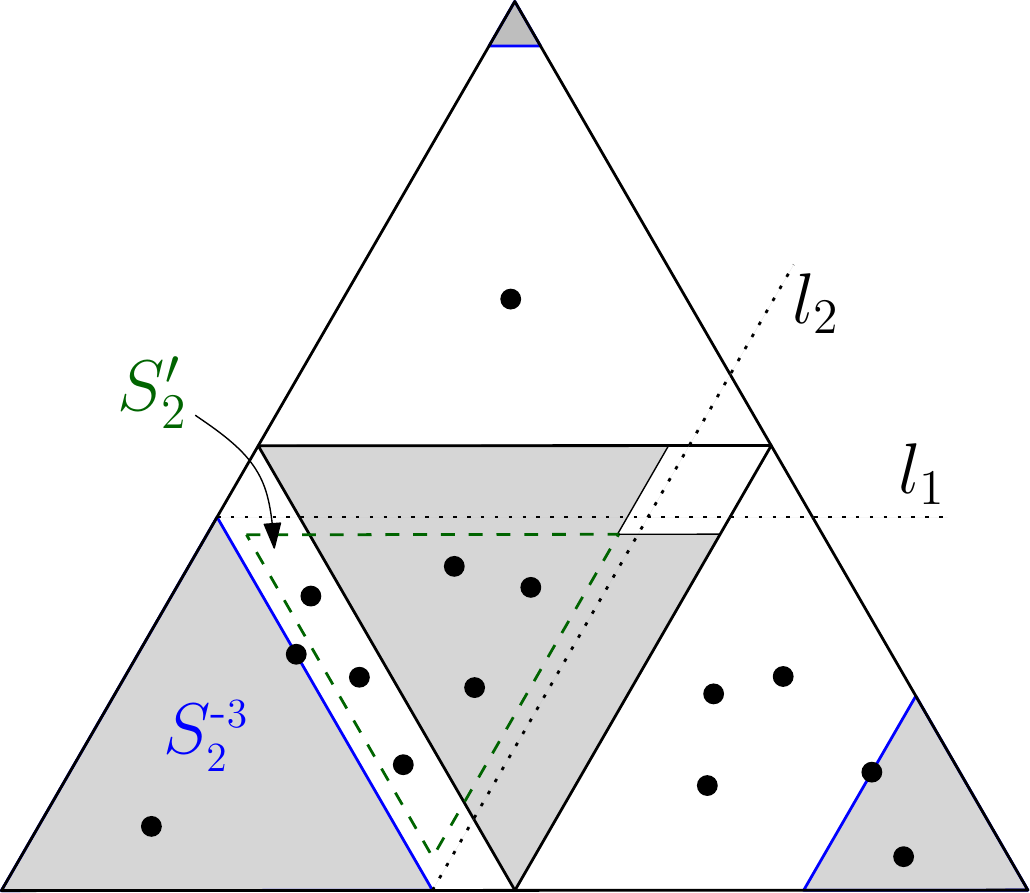}} \\
(a) & (b)
\end{tabular}$
  \caption{(a) Split $S$ into four equal area triangles. (b) $\SM{3}{2}$ is larger than $\SM{3}{1}$ and $\SM{3}{4}$.}
\label{sm:Theta-six-fig}
\end{figure}

 \item {$r_3=3$.}

In this case $n_3=4m_3+3$. We differentiate between two cases, where all the elements of the multiset $\{m_1, m_2, m_4\}$ are equal to zero, or some of them are greater than zero.

\begin{itemize}
 \item {\em All elements of $\{m_1, m_2, m_4\}$ are equal zero.} In this case, we have $m=m_3+1$. Consider the triangles $\SP{1}{2}$ and $\SM{1}{3r}$. See Figure~\ref{sm:Theta-six-fig}(a). Note that $\SP{1}{2}$ and $\SM{1}{3r}$ are disjoint, $\SP{1}{2}$ contains two points, and $\SM{1}{3r}$ contains $4m_3+2$ points. By induction, we get a matched pair in $\SP{1}{2}$ and a matching of size at least $m_3+1$ in $\SM{1}{3r}$. Thus, in total, we get a matching of size at least $1+(m_3+1)=m+1$ in $S$.

 \item {\em Some elements of $\{m_1, m_2, m_4\}$ are greater than zero.} Consider the triangles $\SM{3}{1}$, $\SM{3}{2}$, and $\SM{3}{4}$. Note that the area of some of these triangles\textemdash but not all\textemdash may be equal to zero. See Figure~\ref{sm:Theta-six-fig}(b). By induction, we get matchings of size $m_1$, $m_2$, and $m_4$ in $\SM{3}{1}$, $\SM{3}{2}$, and $\SM{3}{4}$, respectively. Without loss of generality, assume $\SM{3}{2}$, is larger than $\SM{3}{1}$ and $\SM{3}{4}$. Consider the half-lines $l_1$ and $l_2$ which are parallel to $l^0$ and $l^{60}$ axis, and have their endpoints on the top corner and right corner of $\SM{3}{2}$, respectively. We define $S'_2$ as the downward equilateral-triangle which is bounded by $l_1$, $l_2$, and the right side of $\SM{3}{2}$; the dashed triangle in Figure~\ref{sm:Theta-six-fig}(b). Note that $l_1$ and $l_2$ do not intersect $\SM{3}{1}$ and $\SM{3}{4}$. In addition, $\SM{3}{1}$, $\SM{3}{2}$, $\SM{3}{4}$, and $S'_2$ are pairwise disjoint. If any point of $S_1\cup S_2\cup S_3$ is to the right of $l_2$, then consider $\SP{1}{4}$ and $\SM{1}{3l}$. By induction, we get a matching of size $m_1+m_2+(m_3+1)+(m_4+1)$ in $\SM{3}{1}\cup \SM{3}{2}\cup\SM{1}{3l}\cup \SP{1}{4}$, and hence a matching of size $m+1$ in $S$. If any point of $S_2\cup S_3\cup S_4$ is above $l_1$, then consider $\SP{1}{1}$ and $\SM{1}{3b}$. By induction, we get a matching of size $(m_1+1)+m_2+(m_3+1)+ m_4$ in $\SP{1}{1}\cup \SM{3}{2}\cup\SM{1}{3b}\cup \SM{3}{4}$, and hence a matching of size $m+1$ in $S$. Otherwise, $S'_2$ contains $n_3+3=4(m_3+1)+2$ points. Thus, by induction, we get a matching of size $m_1+m_2+(m_3+2)+ m_4$ in $S_1\cup \SM{3}{2}\cup S'_2\cup S_4$, and hence a matching of size $m+1$ in $S$.
\end{itemize}

\item {\em $r_j=3$, for some $j\in\{1,2,4\}$.}

Without loss of generality, assume that $r_j=r_2$. Then, $n_2=4m_2+3$. Consider the triangles $\SM{3}{1}$, $\SM{1}{2}$, and $\SM{3}{4}$. See Figure~\ref{sm:Theta-six-fig2}(a). By induction, we get matchings of size $m_1$, $m_2+1$, and $m_4$ in $\SM{3}{1}$, $\SM{1}{2}$, and $\SM{3}{4}$, respectively. 
Now we consider the largest triangle among $\SM{3}{1}$, $\SM{1}{2}$, and $\SM{3}{4}$. Because of the symmetry, we have two cases: (i) $\SM{1}{2}$ is the largest, or (ii) $\SM{3}{4}$ is the largest.
\begin{figure}[h!]
  \centering
\setlength{\tabcolsep}{0in}
  $\begin{tabular}{cc}
\multicolumn{1}{m{.5\columnwidth}}{\centering\includegraphics[width=.4\columnwidth]{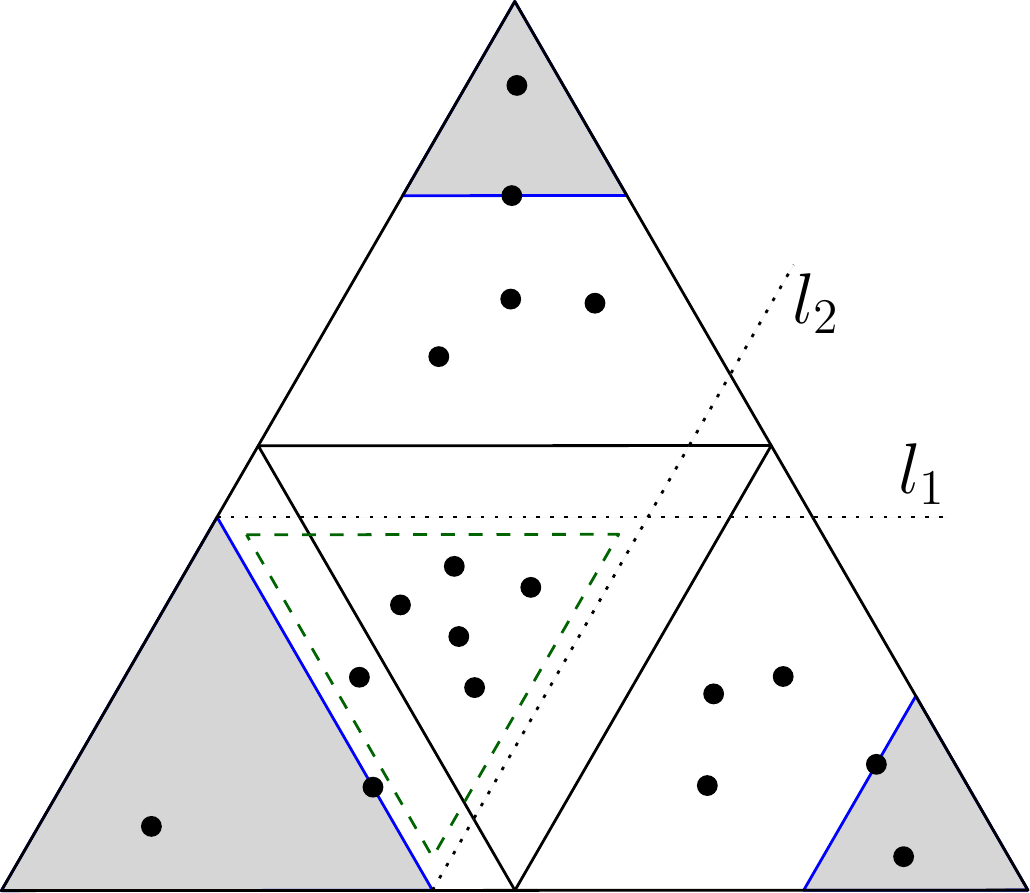}}
&\multicolumn{1}{m{.5\columnwidth}}{\centering\includegraphics[width=.4\columnwidth]{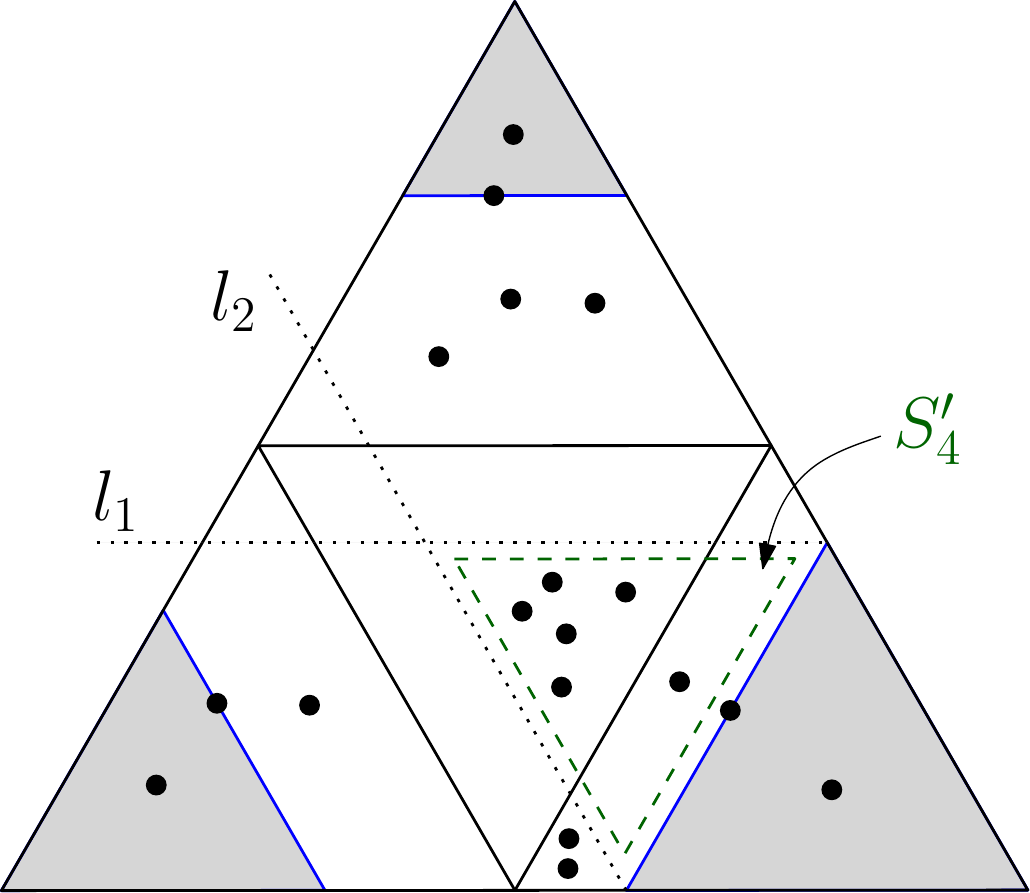}}
\\
(a) & (b)
\end{tabular}$
  \caption{(a) $\SM{1}{2}$ is larger than $\SM{3}{1}$ and $\SM{3}{4}$. (b) $\SM{3}{4}$ is larger than $\SM{3}{1}$ and $\SM{1}{2}$.}
\label{sm:Theta-six-fig2}
\end{figure}
\begin{itemize}
 \item {\em $\SM{1}{2}$ is larger than $\SM{3}{1}$ and $\SM{3}{4}$.}
Define the half-lines $l_1$, $l_2$, and the triangle $S'_2$ as in the previous case. See Figure~\ref{sm:Theta-six-fig2}(a). If any point of $S_1\cup S_2\cup S_3$ is to the right of $l_2$, then consider $\SP{1}{4}$ and $\SM{1}{3l}$. By induction, we get a matching of size $m_1+(m_2+1)+m_3+(m_4+1)$ in $\SM{3}{1}\cup \SM{1}{2}\cup\SM{1}{3l}\cup \SP{1}{4}$. If any point of $S_2\cup S_3\cup S_4$ is above $l_1$, then consider $\SP{1}{1}$ and $\SM{1}{3b}$. By induction, we get a matching of size $(m_1+1)+(m_2+1)+m_3+m_4$ in $\SP{1}{1}\cup \SM{1}{2}\cup\SM{1}{3b}\cup \SM{3}{4}$. Otherwise, $S'_2$ contains $n_3+1=4m_3+2$ points. Thus, by induction, we get a matching of size $m_1+(m_2+1)+(m_3+1)+ m_4$ in $S_1\cup \SM{1}{2}\cup S'_2\cup S_4$. As a result, in all cases we get a matching of size $m+1$ in $S$.

\item {\em $\SM{3}{4}$ is larger than $\SM{3}{1}$ and $\SM{1}{2}$.}
Define the half-lines $l_1$, $l_2$, and the triangle $S'_4$ as in Figure~\ref{sm:Theta-six-fig2}(b). If any point of $S_1\cup S_3\cup S_4$ is above $l_1$, then by induction, we get a matching of size $(m_1+1)+(m_2+1)+m_3+m_4$ in $\SP{1}{1}\cup \SM{1}{2}\cup\SM{1}{3b}\cup \SP{3}{4}$. If at least three points of $S_1\cup S_3\cup S_4$ are to the left of $l_2$, then consider $\SP{3}{2}$ and $\SM{3}{3r}$. Note that $\SP{3}{2}$ contains $n_2+3=4(m_2+1)+2$ points. By induction, we get a matching of size $m_1+(m_2+2)+m_3+m_4$ in $\SM{3}{1}\cup \SP{3}{2}\cup\SM{3}{3r}\cup \SM{3}{4}$. Otherwise, $S'_4$ contains at least $n_3+1=4m_3+2$ points. Thus, by induction, we get a matching of size $m_1+(m_2+1)+(m_3+1)+ m_4$ in $S_1\cup S_2\cup S'_4\cup \SM{3}{4}$. As a result, in all cases we get a matching of size $m+1$ in $S$.
\end{itemize}
\end{itemize}

{Case 2:} {\em Two elements in $R$ are equal to 0 and the other elements are equal to 1.}

In this case, we have $m=m_1+m_2+m_3+m_4$. Again, because of the symmetry, we have two cases: (i) $r_3=0$, (ii) $r_3\neq 0$.

\begin{itemize}
 \item $r_3=0.$

Without loss of generality assume that $r_2=0$ and $r_1=r_4=1$. Thus, $n_1=4m_1+1$, $n_2=4m_2$, $n_3=4m_3$, and $n_4=4m_4+1$. If all elements of $\{m_1,m_2,m_4\}$ are equal to zero, then we have $m=m_3$, where $m_3\ge 1$. Consider the triangles $\SP{1}{4}$ and $\SM{1}{3l}$, which are disjoint. By induction, we get a matched pair in $\SP{1}{4}$ and a matching of size at least $m_3$ in $\SM{1}{3l}$. Thus, in total, we get a matching of size at least $1+m_3=m+1$ in $S$. Assume some elements in $\{m_1,m_2,m_4\}$ are greater than zero. Consider the triangles $\SM{3}{1}$, $\SM{2}{2}$, and $\SM{3}{4}$. See Figure~\ref{sm:Theta-six-fig3}(a). By induction, we get a matching of size $m_1$, $m_2$, and $m_4$ in $\SM{3}{1}$, $\SM{2}{2}$, and $\SM{3}{4}$, respectively. 
Now we consider the largest triangle among $\SM{3}{1}$, $\SM{2}{2}$, and $\SM{3}{4}$. Because of the symmetry, we have two cases: (i) $\SM{2}{2}$ is the largest, or (ii) $\SM{3}{4}$ is the largest.

\begin{figure}[h!]
  \centering
\setlength{\tabcolsep}{0in}
  $\begin{tabular}{cc}
\multicolumn{1}{m{.5\columnwidth}}{\centering\includegraphics[width=.4\columnwidth]{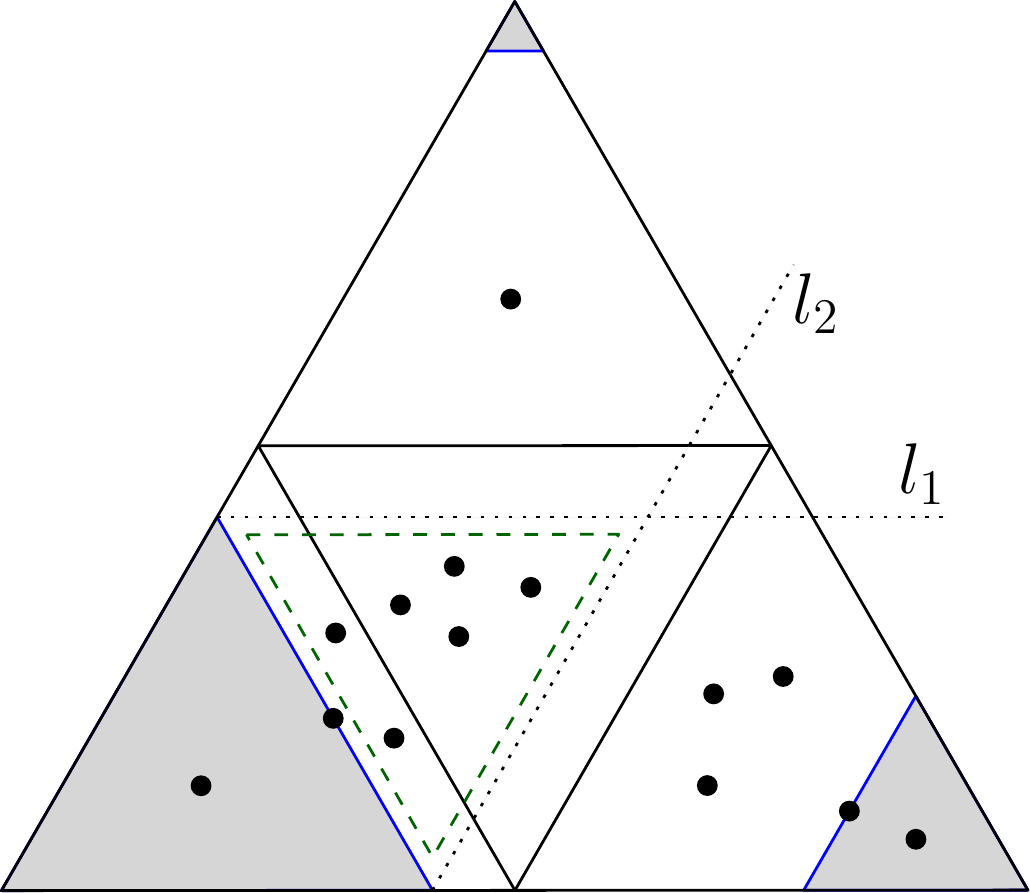}}
&\multicolumn{1}{m{.5\columnwidth}}{\centering\includegraphics[width=.4\columnwidth]{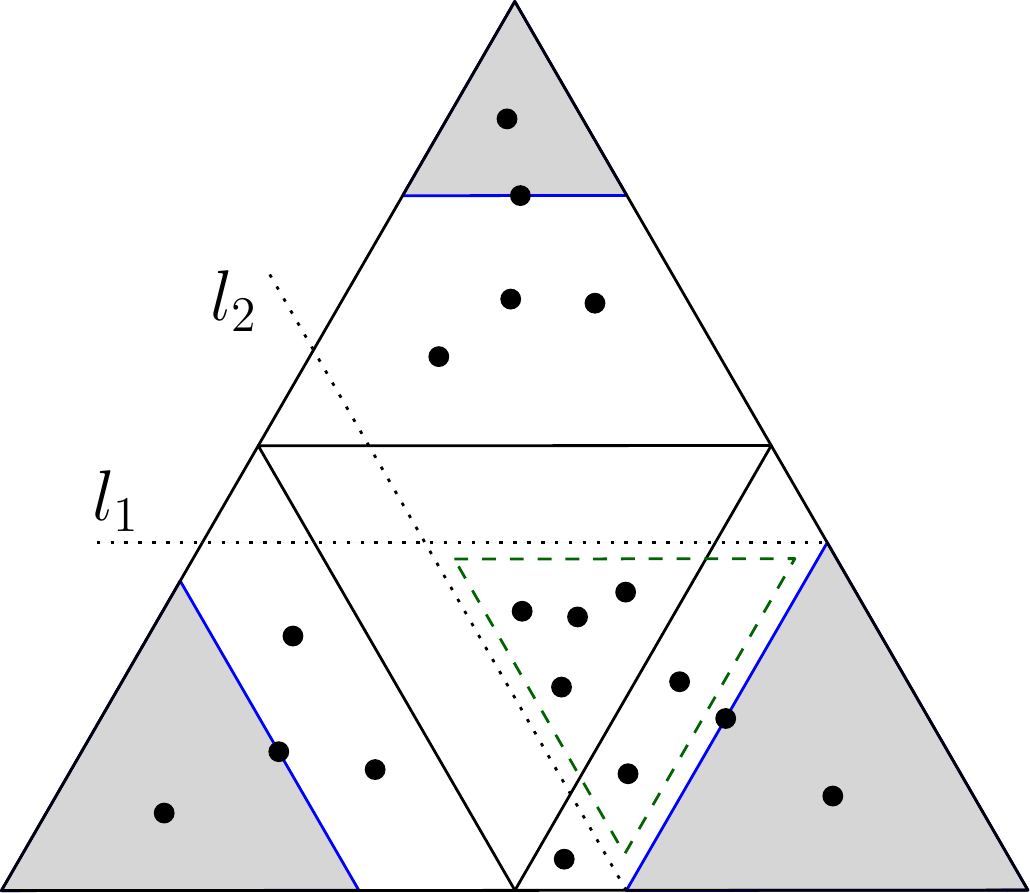}}
\\
(a) & (b)
\end{tabular}$
  \caption{(a) $\SM{2}{2}$ is larger than $\SM{3}{1}$ and $\SM{3}{4}$. (b) $\SM{3}{4}$ is larger than $\SM{3}{1}$ and $\SM{2}{2}$.}
\label{sm:Theta-six-fig3}
\end{figure}
\begin{itemize}
 \item {\em $\SM{2}{2}$ is larger than $\SM{3}{1}$ and $\SM{3}{4}$.}
Define $l_1$, $l_2$, $S'_2$ as in Figure~\ref{sm:Theta-six-fig3}(a). If any point of $S_1\cup S_2\cup S_3$ is to the right of $l_2$, then by induction, we get a matching of size $m_1+m_2+m_3+(m_4+1)$ in $\SM{3}{1}\cup \SM{2}{2}\cup\SM{1}{3l}\cup \SP{1}{4}$. If any point of $S_2\cup S_3\cup S_4$ is above $l_1$, then by induction, we get a matching of size $(m_1+1)+m_2+m_3+m_4$ in $\SP{1}{1}\cup \SM{2}{2}\cup\SM{1}{3b}\cup \SM{3}{4}$. Otherwise, $S'_2$ contains $n_3+2=4m_3+2$ points. Thus, by induction, we get a matching of size $m_1+m_2+(m_3+1)+ m_4$ in $S_1\cup \SM{2}{2}\cup S'_2\cup S_4$. In all cases we get a matching of size $m+1$ in $S$.

\item {\em $\SM{3}{4}$ is larger than $\SM{3}{1}$ and $\SM{2}{2}$.}
Define $l_1$, $l_2$, $S'_4$ as in Figure~\ref{sm:Theta-six-fig3}(b). If any point of $S_1\cup S_3\cup S_4$ is above $l_1$, then by induction, we get a matching of size $(m_1+1)+m_2+m_3+m_4$ in $\SP{1}{1}\cup \SM{2}{2}\cup\SM{1}{3b}\cup \SP{3}{4}$. If at least two points of $S_1\cup S_3\cup S_4$ are to the left of $l_2$, then by induction, we get a matching of size $m_1+(m_2+1)+m_3+m_4$ in $\SM{3}{1}\cup \SP{2}{2}\cup\SM{2}{3r}\cup \SM{3}{4}$. Otherwise, $S'_4$ contains at least $n_3+2=4m_3+2$ points. Thus, by induction, we get a matching of size $m_1+m_2+(m_3+1)+ m_4$ in $S_1\cup S_2\cup S'_4\cup \SM{3}{4}$. In all cases we get a matching of size $m+1$ in $S$.
\end{itemize}
  \item $r_3\neq 0.$

In this case $r_3=1$, and without loss of generality, assume that $r_2=1$; which means $r_1=r_4=0$. Thus, $n_1=4m_1$, $n_2=4m_2+1$, $n_3=4m_3+1$, and $n_4=4m_4$. If all elements of $\{m_1,m_2,m_4\}$ are equal to zero, then we have $m=m_3$, where $m_3\ge 1$. Consider the triangles $\SP{1}{2}$ and $\SM{1}{3r}$, which are disjoint. By induction, we get a matched pair in $\SP{1}{2}$ and a matching of size at least $m_3$ in $\SM{1}{3r}$. Thus, in total, we get a matching of size at least $1+m_3=m+1$ in $S$. Assume some elements in $\{m_1,m_2,m_4\}$ are greater than zero. Consider the triangles $\SM{2}{1}$, $\SM{3}{2}$, and $\SM{2}{4}$. See Figure~\ref{sm:Theta-six-fig4}(a). By induction, we get matchings of size $m_1$, $m_2$, and $m_4$ in $\SM{2}{1}$, $\SM{3}{2}$, and $\SM{2}{4}$, respectively. 
Now we consider the largest triangle among $\SM{2}{1}$, $\SM{3}{2}$, and $\SM{2}{4}$. Because of symmetry, we have two cases: (i) $\SM{3}{2}$ is the largest, or (ii) $\SM{2}{4}$ is the largest.

\begin{figure}[h!]
  \centering
\setlength{\tabcolsep}{0in}
  $\begin{tabular}{cc}
\multicolumn{1}{m{.5\columnwidth}}{\centering\includegraphics[width=.4\columnwidth]{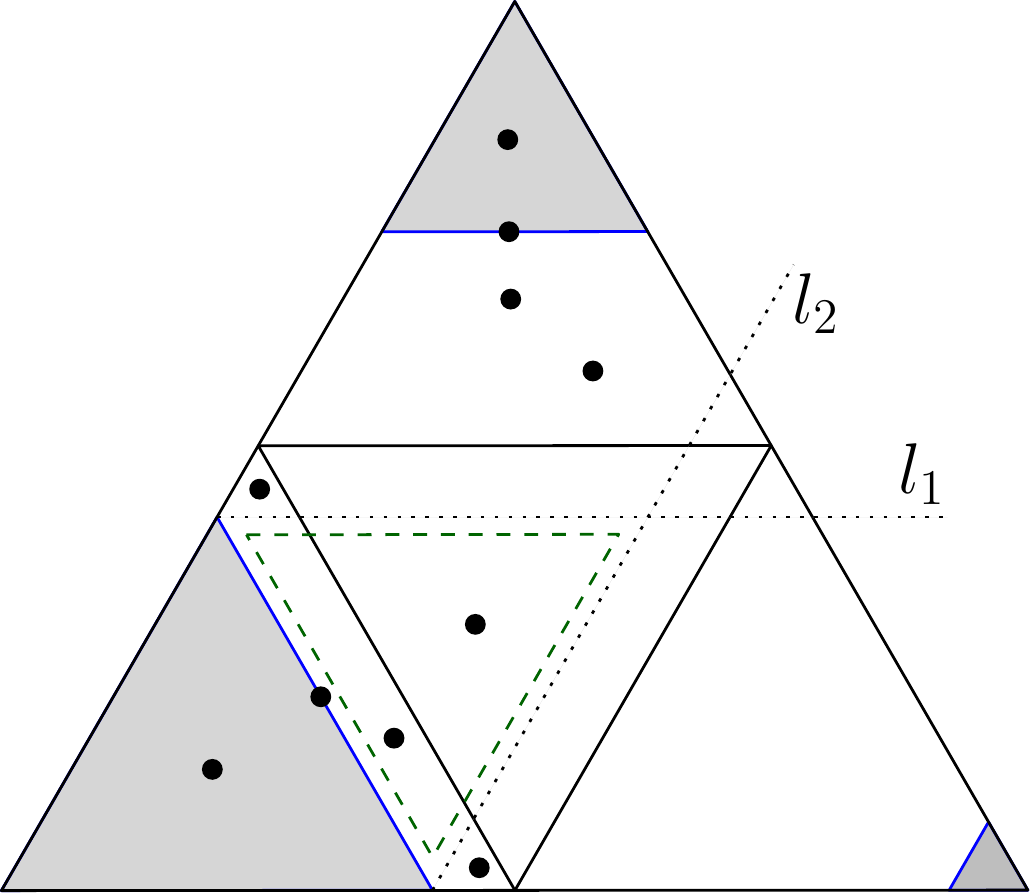}}
&\multicolumn{1}{m{.5\columnwidth}}{\centering\includegraphics[width=.4\columnwidth]{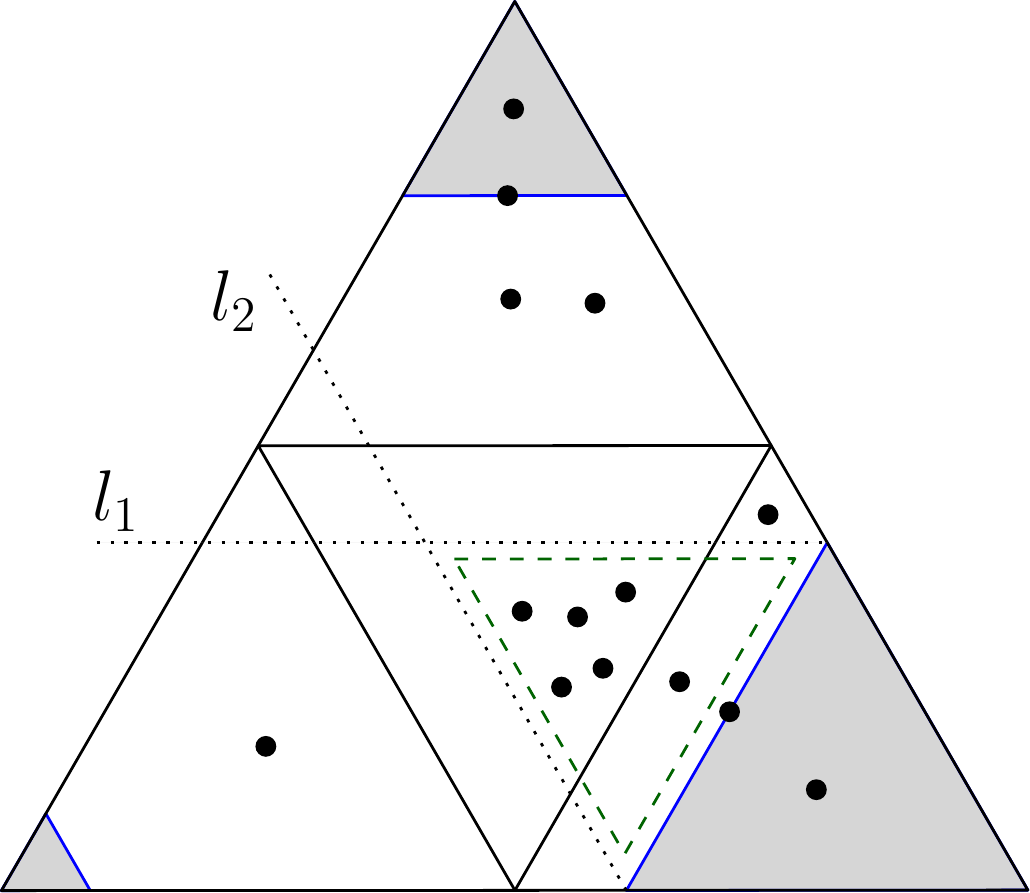}}
\\
(a) & (b)
\end{tabular}$
  \caption{(a) $\SM{3}{2}$ is larger than $\SM{2}{1}$ and $\SM{2}{4}$. (b) $\SM{2}{4}$ is larger than $\SM{2}{1}$ and $\SM{3}{2}$.}
\label{sm:Theta-six-fig4}
\end{figure}
\begin{itemize}
 \item {\em $\SM{3}{2}$ is larger than $\SM{2}{1}$ and $\SM{2}{4}$.}
Define $l_1$, $l_2$, $S'_2$ as in Figure~\ref{sm:Theta-six-fig4}(a). If at least two points of $S_1\cup S_2\cup S_3$ are to the right of $l_2$, then by induction, we get a matching of size $m_1+m_2+m_3+(m_4+1)$ in $\SM{2}{1}\cup \SM{3}{2}\cup\SM{2}{3l}\cup \SP{2}{4}$. If at least two points of $S_2\cup S_3\cup S_4$ are above $l_1$, then by induction, we get a matching of size $(m_1+1)+m_2+m_3+m_4$ in $\SP{2}{1}\cup \SM{3}{2}\cup\SM{2}{3b}\cup \SM{2}{4}$. Otherwise, $S'_2$ contains $n_3+1=4m_3+2$ points, and we get a matching of size $m_1+m_2+(m_3+1)+ m_4$ in $S_1\cup \SM{3}{2}\cup S'_2\cup S_4$. In all cases we get a matching of size $m+1$ in $S$.

\item {\em $\SM{2}{4}$ is larger than $\SM{2}{1}$ and $\SM{3}{2}$.}
Define $l_1$, $l_2$, $S'_4$ as in Figure~\ref{sm:Theta-six-fig4}(b). If at least two points of $S_2\cup S_3\cup S_4$ are above $l_1$, then by induction, we get a matching of size $(m_1+1)+m_2+m_3+m_4$ in $\SP{2}{1}\cup \SM{3}{2}\cup\SM{2}{3b}\cup \SM{2}{4}$. If any point of $S_1\cup S_3\cup S_4$ is to the left of $l_2$, then by induction, we get a matching of size $m_1+(m_2+1)+m_3+m_4$ in $\SM{2}{1}\cup \SP{1}{2}\cup\SM{1}{3r}\cup \SM{2}{4}$. Otherwise, $S'_4$ contains at least $n_3+1=4m_3+2$ points, and we get a matching of size $m_1+m_2+(m_3+1)+ m_4$ in $S_1\cup S_2\cup S'_4\cup \SM{2}{4}$. In all cases we get a matching of size $m+1$ in $S$. 
\end{itemize}
\end{itemize}
\end{proof}
\section{Strong Matching in 
$\G{\sqrs}{P}$}
\label{sm:infty-Delaunay-section}
In this section we consider the problem of computing a strong matching in $\G{\sqrs}{P}$, where $\sqr$ is an axis-aligned square whose center is the origin. We assume that $P$ is in general position, i.e., (i) no two points have the same $x$-coordinate or the same $y$-coordinate, and (ii) no four points are on the boundary of any homothet of $\sqr$. Recall that $\G{\sqrs}{P}$ is equal to the $L_\infty$-Delaunay graph on $P$. \'{A}brego et al. \cite{Abrego2004, Abrego2009} proved that $\G{\sqrs}{P}$ has a strong matching of size at least $\lceil n/5\rceil$. Using a similar approach as in Section~\ref{sm:theta-six-section}, we prove that $\G{\sqrs}{P}$ has a strong  matching of size at least $\lceil\frac{n-1}{4}\rceil$.

\begin{theorem}
\label{sm:infty-Delaunay-thr}
Let $P$ be a set of $n$ points in general position in the plane. Let $S$ be an axis-parallel square that contains $P$. Then, it is possible to find a strong matching of size at least $\lceil\frac{n-1}{4}\rceil$ for $\G{\sqrs}{P}$ in $S$.
\end{theorem}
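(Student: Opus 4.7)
The plan is to mirror the inductive strategy used for Theorem~\ref{sm:theta-six-thr}, replacing the four sub-triangles of a partitioned equilateral triangle by the four axis-aligned quadrants of the enclosing square $S$. Proceed by strong induction on $n$. The base cases $n \le 5$ are handled by simply shrinking $S$ until its boundary captures two points of $P$, producing a strong matched pair whenever $\lceil (n-1)/4\rceil \ge 1$. For the inductive step, write $n = 4m + r$ with $r \in \{0,1,2,3\}$. When $r \in \{0,1,3\}$ we have $\lceil (n-1)/4\rceil = \lceil ((n-1)-1)/4\rceil$, so removing any point and invoking induction in $S$ already yields a matching of the required size. The only remaining case is $n = 4m + 2$.

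Split $S$ into four congruent axis-aligned sub-squares $S_1, S_2, S_3, S_4$ meeting at the center of $S$, containing $n_i$ points respectively with $n_i = 4m_i + r_i$ and $r_i \in \{0,1,2,3\}$. Applying the induction hypothesis inside each $S_j$ produces a strong matching, in disjoint squares, of total size
\[
A \;=\; \sum_{i=1}^{4}\left\lceil\frac{n_i - 1}{4}\right\rceil \;\ge\; \sum_{i=1}^{4}\frac{n_i - 1}{4} \;=\; m - \tfrac{1}{2},
\]
so $A \ge m$ since $A$ is an integer; if $A \ge m+1$ we are done. Otherwise $A = m$, and an arithmetic analysis identical in spirit to Claim~2 of Theorem~\ref{sm:theta-six-thr} forces the multiset $\{r_1,r_2,r_3,r_4\}$ to be either $\{3,1,1,1\}$ or $\{0,0,1,1\}$ (up to reordering). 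In every remaining scenario, our task is to exhibit one additional strongly-matched pair inside a square that is disjoint from all four induction-supplied regions.

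To produce that extra pair we imitate the auxiliary squares $\SM{x}{i}$ and $\SP{x}{i}$ from Section~\ref{sm:theta-six-section}. For each sub-square $S_i$ and each of its four corners $c$, let $\SM{x}{i}(c)$ denote the smallest axis-aligned square contained in $S_i$, anchored at $c$, that contains all but $x$ of the points of $P \cap S_i$; and let $\SP{x}{i}(c)$ denote the smallest axis-aligned square containing $S_i$, anchored at $c$, that contains $x$ extra points of $P$ beyond $P \cap S_i$. In each of the two residue configurations, select the largest of the three shrunk squares $\SM{x}{i}(c_i)$ anchored at appropriate corners of the ``active'' sub-squares; the complementary $L$-shape left inside $S$ then contains enough free room to either (a) grow a $\SP{x}{j}(c_j)$ in a neighbouring quadrant to swallow two extra points, or (b) place an auxiliary square spanning a free strip adjacent to the largest shrunk square, without overlapping any of the matched regions. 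The choice of anchor corners is dictated by which quadrant is ``heavy'' (has positive $m_i$) and by where the excess points of the central quadrant lie, exactly as in Cases 1 and 2 of the triangle argument; in either case we recover a matching of size $m+1 = \lceil (n-1)/4\rceil$.

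The main obstacle will be the full case check required after the residues are pinned down: for each of the symmetry classes of $\{3,1,1,1\}$ and $\{0,0,1,1\}$ one must show that for \emph{some} choice of anchor corners of the sub-squares the resulting shrunk/grown squares are pairwise disjoint and strictly contained in $S$. The fact that squares (unlike equilateral triangles) have a four-fold corner symmetry actually helps here, since each quadrant offers four candidate anchors rather than two useful ones. The verifications themselves are geometric but routine: one shows that if the largest shrunk square, say $\SM{x}{i}(c)$ with $c$ opposite to the centre of $S$, is so large that its boundary crosses a quadrant line, then one of the standard grow-by-one-point auxiliary squares $\SP{1}{j}(c')$ (with $c'$ chosen at an adjacent corner) is disjoint from all four $\SM{x}{k}$'s and contributes the missing matched pair; otherwise a full unit of ``slack'' remains in the central region, which can absorb two extra points of the heavy quadrant via a suitable $\SP{}{}$-square. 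Completing this case analysis, paralleling Figures~\ref{sm:Theta-six-fig}--\ref{sm:Theta-six-fig4}, yields the desired bound $\lceil (n-1)/4\rceil$.
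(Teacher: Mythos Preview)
Your overall strategy is exactly the paper's: strong induction, the $r\in\{0,1,3\}$ reduction, the four-quadrant split, the lower bound $A\ge m$, the residue dichotomy $\{3,1,1,1\}$ versus $\{0,0,1,1\}$, and the use of shrunk/grown corner-anchored squares to locate one extra pair. So the plan is sound and matches the paper.

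A few points where your sketch of the remaining case analysis drifts from what actually works. First, there are \emph{four} shrunk squares in play, one per quadrant, not three; the square case is more symmetric than the triangle case and all four quadrants are treated on the same footing. Second, the paper anchors each $\SM{x}{i}$ and $\SP{x}{i}$ only at the \emph{outer} corner of $S_i$ (the corner of $S_i$ that is also a corner of $S$); allowing all four corners is unnecessary and would complicate the disjointness checks. Third, your dichotomy ``the largest shrunk square is so large that its boundary crosses a quadrant line / otherwise use a free strip'' is not the right split: by definition $\SM{x}{i}\subset S_i$, so it never crosses a quadrant line. The correct split is much simpler. Having fixed the largest shrunk square (say $\SM{1}{1}$, anchored at the top-left corner of $S$), let $l_1,l_2$ be the lines through its bottom and right sides; the single point of $P\cap S_1$ excluded from $\SM{1}{1}$ lies either right of $l_2$ or below $l_1$, and in either case one of the two adjacent grown squares $\SP{1}{2}$ or $\SP{1}{3}$ picks up an extra point while staying disjoint from the other three shrunk squares (because those are no larger than $\SM{1}{1}$). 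No auxiliary ``free strip'' square analogous to the $S'_j$ of the triangle proof is needed; the square case is strictly easier. With these corrections the case check goes through with only a handful of symmetry classes.
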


\begin{proof}
The proof is by induction. Assume that any point set of size $n'\le n-1$ in an axis-parallel square $S'$, has a strong matching of size $\lceil \frac{n'-1}{4}\rceil$ in $S'$. If $n$ is $0$ or $1$, then there is no matching in $S$, and if $n\in\{2, 3, 4, 5\}$, then by shrinking $S$, it is possible to find a strongly matched pair. Suppose that $n\ge 6$, and $n=4m+r$, where $r\in\{0,1,2,3\}$. If $r\in\{0, 1,3\}$, then 
$\lceil \frac{n-1}{4}\rceil = \lceil \frac{(n-1)-1}{4}\rceil$, and by
induction we are done. Suppose that $n=4m+2$, for some $m\ge 1$. We prove that there are $\lceil\frac{n-1}{4}\rceil=m+1$ disjoint squares in $S$,
each of them matches a pair of points in $P$. Partition $S$ into four equal area squares $S_1, S_2, S_3, S_4$ which contain $n_1, n_2,n_3, n_4$ points, respectively; see Figure~\ref{sm:square-fig1}(a). Let $n_i=4m_i+r_i$ for $1\le i\le 4$, where $r_i\in\{0,1,2,3\}$. Let $R$ be the multiset $\{r_1,r_2,r_3,r_4\}$. 
By induction, in $S_1\cup S_2\cup S_3\cup S_4$, we have a strong matching of size at least
\begin{equation}
A=\left\lceil\frac{n_1-1}{4}\right\rceil + \left\lceil\frac{n_2-1}{4}\right\rceil +\left\lceil\frac{n_3-1}{4}\right\rceil+\left\lceil \frac{n_4-1}{4}\right\rceil.\nonumber
\end{equation} 

In the proof of Theorem~\ref{sm:theta-six-thr}, we have shown the following two claims:

\begin{paragraph}{Claim 1.}{$A\ge m$.}\end{paragraph}

\begin{paragraph}{Claim 2.}{\em If $A=m$, then either (i) one element in $R$ is equal to $3$ and the other elements are equal to $1$, or (ii) two elements in $R$ are equal to $0$ and the other elements are equal to $1$.}
\end{paragraph}

\vspace{8pt}
If $A> m$, then we are done. Assume that $A=m$; in fact, by the induction hypothesis we have an strong matching of size $m$ in $S$. 
We show how to find one more strongly matched pair in each case of Claim 2.

We define $\SM{$x$}{1}$ as the smallest axis-parallel square contained in $S_1$ and anchored at the top-left corner of $S_1$, which contains all the points in $S_1$ except $x$ points. If $S_1$ contains less than $x$ points, then the area of $\SM{$x$}{1}$ is zero. We also define $\SP{$x$}{1}$ as the smallest axis-parallel square that contains $S_1$ and anchored at the top-left corner of $S_1$, which has all the points in $S_1$ plus $x$ other points of $P$. See Figure~\ref{sm:square-fig1}(a). Similarly we define the squares $\SM{$x$}{2}$, $\SP{$x$}{2}$, and $\SM{$x$}{3}$, $\SP{$x$}{3}$, and $\SM{$x$}{4}$, $\SP{$x$}{4}$ which are anchored at the top-right corner of $S_2$, and the bottom-left corner of $S_3$, and the bottom-right corner of $S_4$, respectively.

{Case 1:} {\em One element in $R$ is equal to 3 and the other elements are equal to 1.}

In this case, we have $m = m_1 + m_2 + m_3 + m_4 + 1$. Without loss of generality, assume that $r_1=3$ and $r_2=r_3=r_4=1$. Consider the squares $\SM{1}{1}$, $\SM{3}{2}$, $\SM{3}{3}$, and $\SM{3}{4}$. Note that the area of some of these squares\textemdash but not all\textemdash may be
equal to zero. See Figure~\ref{sm:square-fig1}(b). By induction, we get matchings of size $m_1+1$, $m_2$, $m_3$, and $m_4$, in 
$\SM{1}{1}$, $\SM{3}{2}$, $\SM{3}{3}$, and $\SM{3}{4}$, respectively. Now consider the largest square among $\SM{1}{1}$, $\SM{3}{2}$, $\SM{3}{3}$, and $\SM{3}{4}$. Because of the symmetry, we have only three cases: (i) $\SM{1}{1}$ is the largest, (ii) $\SM{3}{2}$ is the largest, and (iii) $\SM{3}{4}$ is the largest.
\begin{figure}[htb]
  \centering
\setlength{\tabcolsep}{0in}
  $\begin{tabular}{ccc}
\multicolumn{1}{m{.33\columnwidth}}{\centering\includegraphics[width=.28\columnwidth]{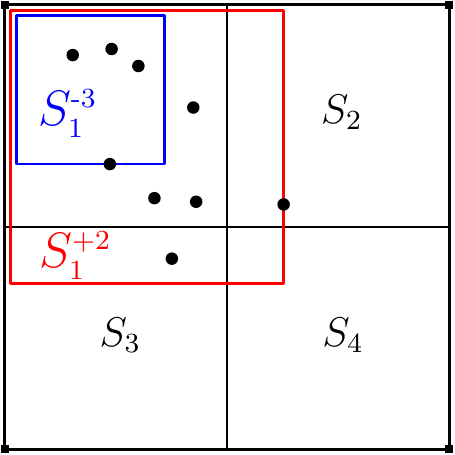}}
&\multicolumn{1}{m{.33\columnwidth}}{\centering\includegraphics[width=.28\columnwidth]{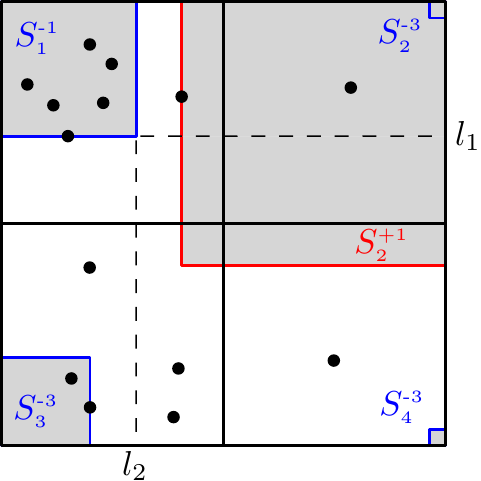}} &\multicolumn{1}{m{.33\columnwidth}}{\centering\includegraphics[width=.28\columnwidth]{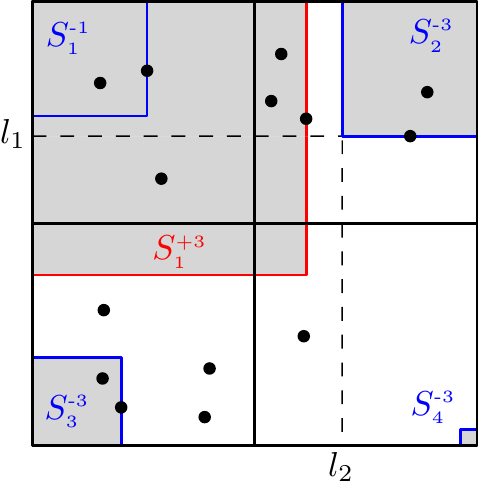}}
\\
(a) & (b)& (c)
\end{tabular}$
  \caption{(a) Split $S$ into four equal area squares. (b) $\SM{1}{1}$ is larger than $\SM{3}{2}$, $\SM{3}{3}$, and $\SM{3}{4}$. (c) $\SM{3}{2}$ is larger than $\SM{1}{1}$, $\SM{3}{3}$, and $\SM{3}{4}$.}
\label{sm:square-fig1}
\end{figure}
\begin{itemize}
\item {\em $\SM{1}{1}$ is the largest square.}
Consider the lines $l_1$ and $l_2$ which contain the bottom side and right side of $\SM{1}{1}$, respectively; the dashed lines in Figure~\ref{sm:square-fig1}(b). Note that $l_1$ and $l_2$ do not intersect any of $\SM{3}{2}$, $\SM{3}{3}$, and $\SM{3}{4}$. If any point of $S_1$ is to the right of $l_2$, then by induction, we get a matching of size $(m_1+1)+(m_2+1)+m_3+m_4$ in $\SM{1}{1}\cup \SP{1}{2}\cup\SM{3}{3}\cup \SM{3}{4}$. Otherwise, by induction, we get a matching of size $(m_1+1)+m_2+(m_3+1)+ m_4$ in $\SM{1}{1}\cup \SM{3}{2}\cup\SP{1}{3}\cup \SM{3}{4}$. In all cases we get a matching of size $m+1$ in $S$. 
\item {\em $\SM{3}{2}$ is the largest square.}
Consider the lines $l_1$ and $l_2$ which contain the bottom side and left side of $\SM{3}{2}$, respectively; the dashed lines in Figure~\ref{sm:square-fig1}(c). Note that $l_1$ and $l_2$ do not intersect any of $\SM{1}{1}$, $\SM{3}{3}$, and $\SM{3}{4}$. If any point of $S_2$ is below $l_1$, then by induction, we get a matching of size $(m_1+1)+m_2+m_3+(m_4+1)$ in $\SM{1}{1}\cup \SM{3}{2}\cup\SM{3}{3}\cup \SP{1}{4}$. Otherwise, by induction, we get a matching of size $(m_1+2)+m_2+m_3+ m_4$ in $\SP{3}{1}\cup \SM{3}{2}\cup\SM{3}{3}\cup \SM{3}{4}$; see Figure~\ref{sm:square-fig1}(c). In all cases we get a matching of size $m+1$ in $S$. 

\item {\em $\SM{3}{4}$ is the largest square.}
Consider the lines $l_1$ and $l_2$ which contain the top side and left side of $\SM{3}{4}$, respectively. If any point of $S_4$ is above $l_1$, then by induction, we get a matching of size $(m_1+1)+(m_2+1)+m_3+m_4$ in $\SM{1}{1}\cup \SP{1}{2}\cup\SM{3}{3}\cup \SM{3}{4}$. Otherwise, by induction, we get a matching of size $(m_1+1)+m_2+(m_3+1)+ m_4$ in $\SM{1}{1}\cup \SM{3}{2}\cup\SP{1}{3}\cup \SM{3}{4}$. In all cases we get a matching of size $m+1$ in $S$. 
\end{itemize}

{Case 2:} {\em Two elements in $R$ are equal to 0 and two elements are equal to 1.}

In this case, we have $m = m_1 + m_2 + m_3 + m_4$. Because of the symmetry, only two cases may arise: (i) $r_1=r_2=1$ and $r_3=r_4=0$, (ii) $r_1=r_4=1$ and $r_2=r_3=0$. 
\begin{itemize}
 \item {\em $r_1=r_2=1$ and $r_3=r_4=0$.}
Consider the squares $\SM{3}{1}$, $\SM{3}{2}$, $\SM{2}{3}$, and $\SM{2}{4}$. By induction, we get matchings of size $m_1$, $m_2$, $m_3$, and $m_4$, in $\SM{3}{1}$, $\SM{3}{2}$, $\SM{2}{3}$, and $\SM{2}{4}$, respectively. Now consider the largest square among $\SM{3}{1}$, $\SM{3}{2}$, $\SM{2}{3}$, and $\SM{2}{4}$. Because of the symmetry, we have only two cases: (a) $\SM{3}{1}$ is the largest, (b) $\SM{2}{3}$ is the largest. In case (a) we get one more matched pair either in $\SP{1}{2}$ or in $\SP{2}{3}$. In case (b) we get one more matched pair either in $\SP{1}{1}$ or in $\SP{2}{4}$.

 \item {\em $r_1=r_4=1$ and $r_2=r_3=0$.}
Consider the squares $\SM{3}{1}$, $\SM{2}{2}$, $\SM{2}{3}$, and $\SM{3}{4}$. By induction, we get matchings of size $m_1$, $m_2$, $m_3$, and $m_4$, in $\SM{3}{1}$, $\SM{2}{2}$, $\SM{2}{3}$, and $\SM{3}{4}$, respectively. Now consider the largest square among $\SM{3}{1}$, $\SM{2}{2}$, $\SM{2}{3}$, and $\SM{3}{4}$. Because of the symmetry, we have only two cases: (a) $\SM{3}{1}$ is the largest, (b) $\SM{2}{2}$ is the largest. In case (a) we get one more matched pair either in $\SP{2}{2}$ or in $\SP{2}{3}$. In case (b) we get one more matched pair either in $\SP{1}{1}$ or in $\SP{1}{4}$.
\end{itemize}
\end{proof}

\section{A Conjecture on Strong Matching in $\G{\ddiscs}{P}$}
\label{sm:conjecture-section}
In this section, we discuss a possible way to further improve upon Theorem~\ref{sm:Gabriel-thr}, as well as
a construction leading to the conjecture that Algorithm~\ref{sm:alg1} computes a strong matching of size at least $\lceil\frac{n-1}{8}\rceil$; unfortunately we are not able to prove this. 

In Section~\ref{sm:Gabriel-section} we proved that $\mathcal{I}(e^+)$ contains at most 16 edges. In order to achieve this upper bound we used the fact that the centers of the disks in $\mathcal{I}(e^+)$ should be far apart. We did not consider the endpoints of the edges representing these disks. By Observation~\ref{sm:no-point-in-circle-obs}, the disks representing the edges in $\mathcal{I}(e^+)$ cannot contain any of the endpoints. We applied this observation only on $u$ and $v$. Unfortunately, our attempts to apply this observation on the endpoints of edges in $\mathcal{I}(e^+)$ have been so far unsuccessful.

Recall that $T$ is a Euclidean minimum spanning tree of $P$, and for every edge $e=(u,v)$ in $T$, $\dg{e}$ is the degree of $e$ in $T(e^+)$, where $T(e^+)$ is the set of all edges of $T$ with weight at least $w(e)$. Note that $w(e)$ is directly related to the Euclidean distance between $u$ and $v$. Observe that the discs representing the edges adjacent to $e$ intersect $D(u,v)$. Thus, these edges are in $\Inf{e}$. 
We call an edge $e$ in $T$ a {\em minimal edge} if $e$ is not longer than any of its adjacent edges. We observed that the maximum degree of a minimal edge is an upper bound for $\Inf{e}$. We conjecture that,

\begin{conjecture}
{\em \Inf{$T$}} is at most the maximum degree of a minimal edge.
\end{conjecture}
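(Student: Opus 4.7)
The plan is to reduce the conjecture to two separate statements about the structure of $T$: \textbf{(I)} for every minimal edge $e\in T$, the influence set $\mathrm{Inf}(e)$ consists solely of $e$ together with the edges of $T$ incident to an endpoint of $e$, so that $|\mathrm{Inf}(e)|=\deg(e)$; and \textbf{(II)} for every non-minimal edge $e\in T$, there exists a minimal edge $e^{*}\in T$ with $|\mathrm{Inf}(e)|\le \deg(e^{*})$. Taken together, these claims bound $\mathrm{Inf}(T)=\max_{e\in T}|\mathrm{Inf}(e)|$ by $\max\{\deg(e^{*}):e^{*}\text{ is minimal in }T\}$, which is exactly the assertion of the conjecture.

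For claim (I), I would adapt the MST-exchange argument underlying Lemma~\ref{sm:cycle-lemma} and Lemma~\ref{sm:center-in-lemma}. Suppose, for contradiction, that $e=(u,v)$ is a minimal edge and $e'=(x,y)\in\mathrm{Inf}(e)$ is non-adjacent to $e$. Then $|xy|\ge |uv|$ and $D(u,v)\cap D(x,y)$ contains some point $p$, viewing both segments at angles $\ge 90^{\circ}$. A case analysis on the cyclic order of $u,v,x,y$ around $p$ (much as in Lemma~\ref{sm:center-in-lemma}) would show that one of the pairings, say $\{(u,x),(v,y)\}$, satisfies $\max\{|ux|,|vy|\}<|xy|$. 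Removing $e$ and $e'$ from $T$ produces three subtrees, and the pair $(u,x),(v,y)$ can be chosen to reconnect them into a spanning tree of strictly smaller weight, contradicting the MST property of $T$.

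For claim (II), the plan is a charging argument. Starting from $e$, I would walk along $T$ by repeatedly moving to a strictly shorter adjacent edge, terminating in finitely many steps at a minimal edge $e^{*}$ reachable from $e$. I would then try to construct an injection from $\mathrm{Inf}(e)$ into the set of edges adjacent to $e^{*}$ (with $e^{*}$ itself counted once), exploiting the fact that the walk stays in a small geometric neighbourhood since each successive edge is shorter. The main obstacle will be verifying that this injection exists: moving from $e$ to a shorter adjacent $e_1$ shrinks the disk $D(e_1)$, which removes some members of $\mathrm{Inf}(e)$, but simultaneously enlarges $T(e_1^{+})\supsetneq T(e^{+})$, which admits new members, so the cardinality is not monotone along the walk. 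Furthermore, even claim (I) will require turning the angular case analysis around $p$ into a clean swap that respects the planarity and MST structure; this is precisely the obstacle the authors encountered when they could only prove the weaker bound $|\mathrm{Inf}(e)|\le 17$ in Section~\ref{sm:Gabriel-section}. The conjectured bound would then follow because a routine Euclidean-MST angle argument shows that a minimal edge $e^{*}=(u,v)$ has each endpoint joined to at most $5$ other neighbours in $T$ (the sector of angular width $60^{\circ}$ around the other endpoint being blocked), yielding $\deg(e^{*})\le 9$ and hence a strong matching of size at least $\lceil(n-1)/8\rceil$ via Theorem~\ref{sm:GS-thr}.
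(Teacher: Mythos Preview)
This statement is a \emph{conjecture} that the paper explicitly leaves open (``unfortunately we are not able to prove this''), so there is no proof to compare against. What the paper does establish is the supporting Lemma~\ref{sm:edge-degree-lemma}: one can choose a Euclidean MST in which every minimal edge has degree at most~$8$. Combined with the conjecture this would give $\mathrm{Inf}(T)\le 8$ and hence a strong matching of size $\lceil(n-1)/8\rceil$, but the conjecture itself remains unresolved.

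Your proposal is a sketch with gaps you partly acknowledge. Claim~(I) is the heart of the matter, and the swap you outline does not go through: from a point $p\in D(u,v)\cap D(x,y)$ one gets only $\angle upv,\angle xpy\ge 90^\circ$, which is not enough to force $\max\{|ux|,|vy|\}<|xy|$ (or the other pairing) in general. Lemma~\ref{sm:center-in-lemma} uses exactly this style of exchange and extracts only that neither disk contains the other's \emph{center}---far weaker than what you need. Indeed, the paper's analysis in Section~\ref{sm:Gabriel-section} is carried out for the globally smallest edge $e$ (which is certainly minimal) and, pushing these exchange and packing ideas as far as the authors could, obtains only $|\mathrm{Inf}(e)|\le 17$, not $\deg(e)$. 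Claim~(II) is more speculative still: you yourself note that $|\mathrm{Inf}(\cdot)|$ is not monotone along the descending walk to $e^{*}$, and no mechanism for the promised injection is supplied. Finally, a minor inconsistency at the end: $\deg(e^{*})\le 9$ together with Theorem~\ref{sm:GS-thr} yields $\lceil(n-1)/9\rceil$, not $\lceil(n-1)/8\rceil$; the bound~$8$ in the paper comes from the nontrivial edge-swap argument of Lemma~\ref{sm:edge-degree-lemma}, not from a routine angle count.
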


Monma and Suri~\cite{Monma1992} showed that for every point set $P$ there exists a Euclidean minimum spanning tree, $MST(P)$, of maximum vertex degree five. Thus, the maximum edge degree in $MST(P)$ is 9. We show that for every point set $P$, there exists a Euclidean minimum spanning tree, $MST(P)$, such that the degree of each node is at most five and the degree of each minimal edge is at most eight. This implies the conjecture that $\Inf{MST(P)}\le 8$. That is, Algorithm~\ref{sm:alg1} returns a strong matching of size at least $\lceil\frac{n-1}{8}\rceil$.

\begin{figure}[ht]
  \centering
    \includegraphics[width=0.4\textwidth]{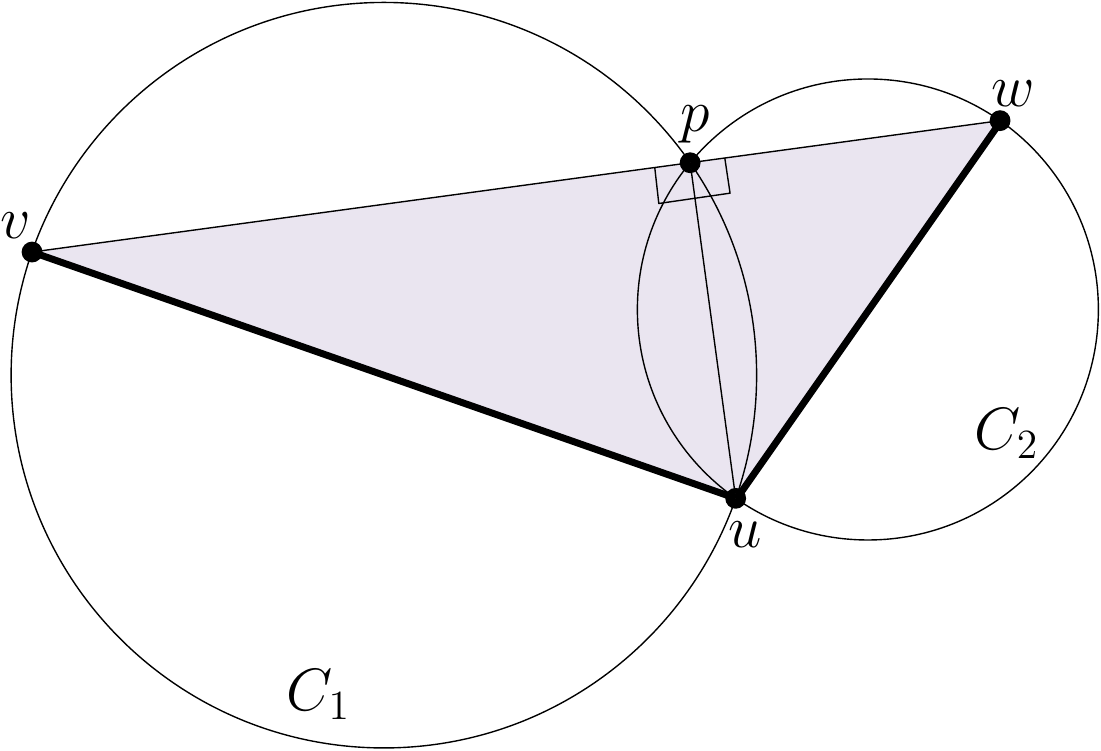}
  \caption{In $MST(P)$, the triangle $\bigtriangleup uvw$ formed by two adjacent edges $uv$ and $uw$, is empty.}
\label{sm:empty-triangle-figure}
\end{figure}
\begin{lemma}
\label{sm:empty-triangle-lemma}
If $uv$ and $uw$ are two adjacent edges in $MST(P)$, then the triangle $\bigtriangleup uvw$ has no point of $P\setminus\{u, v, w\}$ in its interior or on its boundary.
\end{lemma}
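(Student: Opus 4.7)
The plan is to derive a contradiction from the standard containment $\mathrm{MST}(P) \subseteq \mathrm{RNG}(P)$ (which follows from a short swap argument: if a point $p$ lies in the open lune $L(u,v)$ of an MST edge $uv$, then swapping $uv$ with whichever of $up$, $vp$ reconnects the tree yields a strictly lighter spanning tree). Thus, if $uv$ and $uw$ are adjacent MST edges, neither $L(u,v)$ nor $L(u,w)$ can contain any point of $P \setminus \{u,v,w\}$. The goal will be to show that every $p \in \bigtriangleup uvw \setminus \{u,v,w\}$ must lie in $L(u,v) \cup L(u,w)$, yielding the desired contradiction.

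First I would assume without loss of generality that $|uv| \geq |uw|$, and apply the MST cycle property to the three-cycle on $\{u,v,w\}$ with $vw$ as the unique non-tree edge; this gives $|vw| \geq |uv|$, strictly so by general position. Since $v$ and $w$ both lie in the closed disk $D(u,|uv|)$ and closed disks are convex, the entire triangle $\bigtriangleup uvw$ is contained in $D(u,|uv|)$; in general position the only triangle point on the bounding circle $C(u,|uv|)$ is $v$, so $|pu| < |uv|$. I would then split on $|pv|$: if $|pv| < |uv|$, both defining inequalities of $L(u,v)$ hold and $p \in L(u,v)$, contradicting $uv \in \mathrm{RNG}(P)$. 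Otherwise $|pv| \geq |uv|$ and $p$ lies in the ``cap'' $R := \{q \in \bigtriangleup uvw : |qv| \geq |uv|\}$, the portion of the triangle on the far side of the arc $C(v,|uv|) \cap \bigtriangleup uvw$ from $v$. The plan is then to show $R \subseteq D(u,|uw|) \cap D(w,|uw|)$, with the bounding circles meeting $R$ only at $w$ and $u$ respectively; since $p \neq u, w$, this will place $p$ in the open lune $L(u,w)$ and contradict $uw \in \mathrm{RNG}(P)$.

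The hard part will be the geometric inclusion $R \subseteq D(u,|uw|) \cap D(w,|uw|)$ in the second case. Because the distance functions $q \mapsto |qu|$ and $q \mapsto |qw|$ are convex, their maxima on the compact region $R$ will be attained on its boundary, so it should suffice to verify $|qu| \leq |uw|$ and $|qw| \leq |uw|$ on the triangle-edge portions of $\partial R$ (which terminate at $u$ or $w$ and are easy) and on the arc $C(v,|uv|) \cap \bigtriangleup uvw$. For a point $q$ on the arc, the law of cosines in $\bigtriangleup uvq$ gives $|qu|^2 = 2|uv|^2\bigl(1-\cos\angle uvq\bigr)$; I would compare this with $|uw|^2 = |uv|^2 + |vw|^2 - 2|uv|\,|vw|\cos\angle uvw$, use the monotonicity $\angle uvq \leq \angle uvw$ (which holds because $R$ sits on the $w$-side of the arc, so the angle at $v$ subtended by $q$ cannot exceed that subtended by $w$), and combine with $|vw| > |uv|$ to conclude $|qu| \leq |uw|$, with equality only at $q = w$. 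A symmetric computation will handle $|qw|$, finishing the argument.
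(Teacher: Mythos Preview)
Your approach is correct but takes a substantially longer and more delicate route than the paper's. The paper uses the containment $\mathrm{MST}(P)\subseteq GG(P)$ rather than $\mathrm{MST}(P)\subseteq \mathrm{RNG}(P)$: the closed \emph{diametral} disks $D[u,v]$ and $D[u,w]$ contain no point of $P\setminus\{u,v\}$ and $P\setminus\{u,w\}$ respectively. The whole proof then reduces to one observation: the boundary circles of $D[u,v]$ and $D[u,w]$ meet at $u$ and at a second point $p$, and by Thales' theorem $\angle upv=\angle upw=\pi/2$, so $v$, $p$, $w$ are collinear and $p$ lies on the segment $vw$. Hence $\triangle uvp\subseteq D[u,v]$ and $\triangle upw\subseteq D[u,w]$, giving $\triangle uvw\subseteq D[u,v]\cup D[u,w]$ in one line. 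No cycle property, no case split on $|pv|$, no region $R$, no law-of-cosines estimates. Ironically, you invoked the stronger edge property (empty lunes) but ended up with a harder covering problem; the diametral disks are smaller, but the Thales collinearity makes them tile the triangle exactly.

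One small caveat worth noting in your outline: your description of $\partial R$ implicitly assumes $\angle vuw\ge\pi/2$, so that the arc $C(v,|uv|)\cap\triangle uvw$ runs from $u$ to a point on edge $vw$. When $\pi/3\le\angle vuw<\pi/2$ (which the hypotheses allow), that circle meets edge $uw$ again at an interior point $u''$, the arc inside the triangle runs from $u''$ to $w'$, and $u$ becomes an isolated point of $R$. Your law-of-cosines bounds still work on the main component (and in fact become strict everywhere on its boundary), so the argument survives, but you would need to add this sub-case explicitly.
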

\begin{proof}
If the angle between $uv$ and $uw$ is equal to $\pi$, then there is no other point of $P$ on $uv$ and $uw$. Assume that $\angle vuw < \pi$. Refer to Figure \ref{sm:empty-triangle-figure}. Since $MST(P)$ is a subgraph of the Gabriel graph, the circles $C_1$ and $C_2$ with diameters $uv$ and $uw$ are empty. Since $\angle vuw < \pi$, $C_1$ and $C_2$ intersect each other at two points, say $u$ and $p$. Connect $u$, $v$ and $w$ to $p$. Since $uv$ and $uw$ are the diameters of $C_1$ and $C_2$, $\angle upv=\angle wpu=\pi/2$.
This means that $vw$ is a straight line segment. Since $C_1$ and $C_2$ are empty and $\bigtriangleup uvw \subset C_1 \cup C_2$, it follows that $\bigtriangleup uvw \cap P = \{u, v, w\}$.
\end{proof}

\begin{figure}[htb]
  \centering
  \includegraphics[width=.38\textwidth]{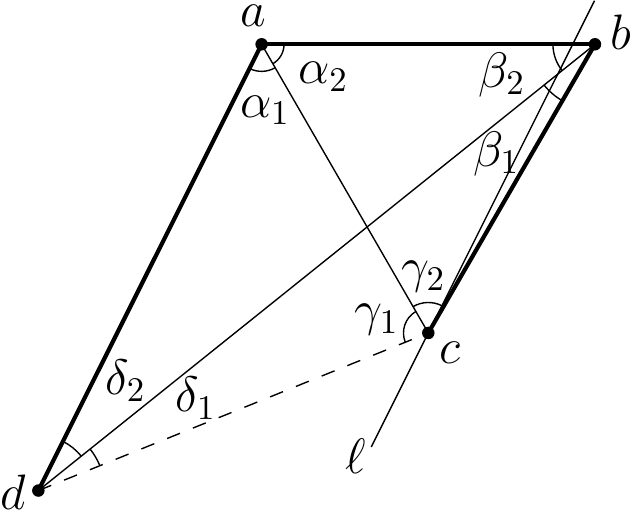}
  \caption{Illustration of Lemma~\ref{sm:convex-quadrilateral-lemma}: $|ab|\le|bc|\le|ad|$, $\angle abc\ge \pi/3$, $\angle bad\ge \pi/3$, and $\angle abc + \angle bad \le \pi$.}
\label{sm:convex-quadrilateral-fig}
\end{figure}

\begin{lemma}
\label{sm:convex-quadrilateral-lemma}
Follow Figure~\ref{sm:convex-quadrilateral-fig}. For a convex-quadrilateral $Q=a,b,c,d$ with $|ab|\le|bc|\le|ad|$, if  $\min\{\angle abc, \angle bad\}\ge \pi/3$ and $\angle abc + \angle bad \le \pi$, then $|cd|\le |ad|$.
\end{lemma}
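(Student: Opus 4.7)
The plan is to place the quadrilateral in coordinates, write $|cd|^{2} - |ad|^{2}$ as an explicit quadratic, and then exploit convexity to reduce the verification to two endpoint calculations. Set $L = |ad|$, $x_b = |ab|$, $x_c = |bc|$, and place $a = (0,0)$, $d = (L,0)$, so that (after a possible reflection) $b = (x_b\cos\alpha, x_b\sin\alpha)$. Walking from $b$ along the edge $bc$, whose direction is determined by the interior angle $\beta$ at $b$, one finds $c = b + x_c(-\cos(\alpha+\beta), -\sin(\alpha+\beta))$. Expanding $(c-d)\cdot(c-d)$ and simplifying gives
\[
|cd|^{2} - L^{2} \;=\; x_{c}^{2} + 2\bigl[L\cos(\alpha+\beta) - x_{b}\cos\beta\bigr]x_{c} + \bigl[x_{b}^{2} - 2Lx_{b}\cos\alpha\bigr].
\]

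Call this quadratic $f(x_c)$. Its leading coefficient is $+1$, so $f$ is an upward-opening parabola; hence $f \le 0$ throughout the interval $[x_b, L]$ as soon as $f(x_b) \le 0$ and $f(L) \le 0$. Since the hypothesis gives $x_b \le x_c \le L$, verifying these two endpoint inequalities will complete the proof.

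For the left endpoint, sum-to-product identities together with $1-\cos\beta = 2\sin^{2}(\beta/2)$ yield
\[
f(x_{b}) \;=\; 4\,x_{b}\sin(\beta/2)\bigl[\,x_{b}\sin(\beta/2) - L\sin(\alpha + \beta/2)\,\bigr].
\]
So $f(x_b)\le 0$ reduces to $x_{b}\sin(\beta/2) \le L\sin(\alpha + \beta/2)$, which follows from $x_b \le L$ combined with $\sin(\beta/2) \le \sin(\alpha+\beta/2)$. The latter is exactly where the hypothesis $\alpha+\beta \le \pi$ enters, as it places $\beta/2$ and $\alpha+\beta/2$ symmetrically on either side of $\pi/2$ (so that sine orders them correctly even when $\alpha+\beta/2 > \pi/2$).

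For the right endpoint, introduce $\gamma = (\alpha+\beta)/2$ and $\delta = (\alpha-\beta)/2$. Using $\cos(\alpha+\beta) = 2\cos^{2}\gamma - 1$ and $\cos\alpha + \cos\beta = 2\cos\gamma\cos\delta$, the value $f(L)$ becomes a quadratic in $x_b$ with positive leading coefficient, so once again it suffices to check $x_b = 0$ and $x_b = L$. These two checks reduce respectively to $\cos\gamma \le 1/2$ and $\cos\gamma \le \cos\delta$, both of which follow from the angle constraints: $\alpha, \beta \ge \pi/3$ forces $\gamma \ge \pi/3$, while $\alpha+\beta \le \pi$ together with $\alpha, \beta \ge \pi/3$ forces $\alpha,\beta \in [\pi/3, 2\pi/3]$ and hence $|\delta| \le \pi/6$. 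I expect the main bookkeeping difficulty to lie in this last step at $x_c = L$, where all three angular assumptions of the lemma ($\alpha \ge \pi/3$, $\beta \ge \pi/3$, $\alpha+\beta \le \pi$) enter simultaneously.
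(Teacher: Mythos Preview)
Your proof is correct. The coordinate computation, the factorization of $f(x_b)$, and the two-step convexity reduction at $x_c = L$ all check out.

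However, the approach is quite different from the paper's. The paper gives a short synthetic angle-chase: it draws the diagonal $ac$, splitting each vertex angle into two pieces ($\alpha = \alpha_1 + \alpha_2$ at $a$, $\gamma = \gamma_1 + \gamma_2$ at $c$, etc.). From $|ab|\le|bc|$ in triangle $abc$ it gets $\gamma_2 \le \alpha_2$; from $|ab|\le|ad|$ in triangle $abd$ it gets $\delta_2 \le \beta_2$. The hypothesis $\angle abc + \angle bad \le \pi$ is used via a line through $c$ parallel to $ad$, which must meet segment $ab$; alternate angles then give $\alpha_1 \le \gamma_2$. A brief case split on whether $\beta_1 < \delta_1$ (in triangle $bcd$) finishes it: either $|cd| < |bc| \le |ad|$ directly, or one combines the inequalities to get $\alpha_1 \le \gamma_1$ in triangle $acd$, hence $|cd| \le |ad|$.

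The notable difference in strength: the paper's argument \emph{never uses} the hypothesis $\min\{\angle abc, \angle bad\} \ge \pi/3$; that condition is in the lemma only because it holds automatically in the MST application. Your proof, by contrast, leans on the $\pi/3$ bound in the final endpoint check (to force $\gamma \ge \pi/3$ and $|\delta| \le \pi/6$). So the synthetic route is both shorter and proves a stronger statement. What your approach buys is that it is entirely mechanical once coordinates are set up, with no geometric insight needed beyond the convexity-of-parabola trick.
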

\begin{proof}
 Let $\alpha_1=\angle cad$, $\alpha_2=\angle bac$, $\beta_1=\angle cbd$, $\beta_2=\angle abd$, $\gamma_1=\angle acd$, $\gamma_2=\angle acb$, $\delta_1=\angle bdc$, and $\delta_2=\angle adb$; see Figure~\ref{sm:convex-quadrilateral-fig}. Since $|ab|\le|bc|\le |ad|$, $$\gamma_2\le \alpha_2\text{ and }\delta_2\le \beta_2.$$ Let $\ell$ be a line passing through $c$ which is parallel to $ad$. Since $\angle abc + \angle bad \le \pi$, $\ell$ intersects the line segment $ab$. This implies that $\alpha_1\le \gamma_2$. If $\beta_1<\delta_1$, then $|cd|<|bc|$, and hence $|cd|<|ad|$ and we are done. Assume that $\delta_1\le \beta_1$. In this case, $\delta\le \beta$. Now consider the two triangles $\bigtriangleup abc$ and $\bigtriangleup acd$. Since $\delta\le \beta$ and $\alpha_1\le\gamma_2$, $\alpha_2\le \gamma_1$. Then we have $$\alpha_1\le \gamma_2\le \alpha_2\le\gamma_1.$$

Since $\alpha_1\le\gamma_1$, $|cd|\le |ad|$, where the equality holds only if $\alpha_1= \gamma_2= \alpha_2=\gamma_1$, i.e., $Q$ is a diamond. This completes the proof.
\end{proof}

\begin{figure}[htb]
  \centering
  \includegraphics[width=.6\textwidth]{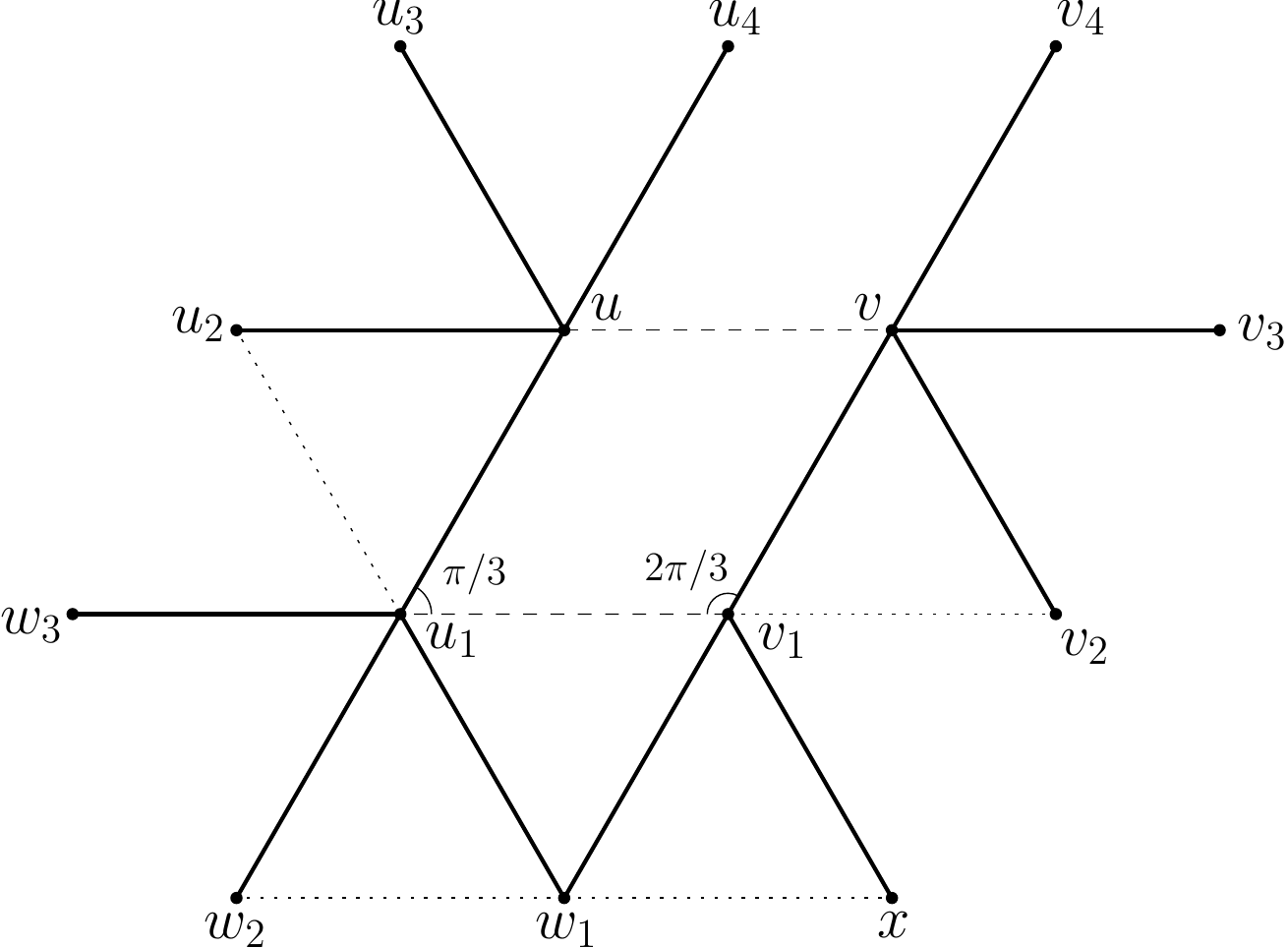}
  \caption{Solid segments represent the edges of $MST(P)$. Dashed segments represent the swapped edges. Dotted segments represent the edges which cannot exist.}
\label{sm:degree8-fig}
\end{figure}

\begin{lemma} 
\label{sm:edge-degree-lemma}
Every finite set of points $P$ in the plane admits a minimum spanning tree whose node degree is at most five and whose minimal-edge degree is at most nine.
\end{lemma}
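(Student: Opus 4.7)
My plan has two stages: first, I would build an MST of $P$ with maximum node degree five using a local edge-swap argument anchored at Lemma~\ref{sm:convex-quadrilateral-lemma}; second, I would read off the minimal-edge bound as a direct consequence of the definition of a minimal edge.

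For the first stage, let $T$ be any MST of $P$ and suppose some vertex $u$ has $T$-degree at least six. Let $v_1,\dots,v_k$ be the neighbours of $u$ in $T$ in counter-clockwise order around $u$. By Lemma~\ref{sm:empty-triangle-lemma} applied to consecutive adjacent edges $uv_i$ and $uv_{i+1}$, the triangle $\bigtriangleup u v_i v_{i+1}$ contains no other point of $P$. Combined with the Gabriel-graph property of $T$, a standard angle computation forces $\angle v_i u v_{i+1} \geq \pi/3$ at every consecutive pair, so a degree-six vertex has all six angles exactly equal to $\pi/3$. Now pick a pair of consecutive neighbours $b = v_i$, $d = v_{i+1}$ for which the swap will be performed; the four points $a=v_{i-1}$, $b$, $c=u$, $d$ form a convex quadrilateral whose sides and diagonals can be ordered so that $|ab| \leq |bc| \leq |ad|$, the interior angles at $b$ and $a$ are each at least $\pi/3$, and their sum is at most $\pi$ (these inequalities follow from the hexagonal arrangement around $u$ together with the MST/Gabriel angle constraints at $v_i$ and $v_{i-1}$, guided by Figure~\ref{sm:degree8-fig}). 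Lemma~\ref{sm:convex-quadrilateral-lemma} then yields $|cd| \leq |ad|$, i.e.\ the candidate new edge $v_i v_{i+1}$ (dashed in the figure) is no longer than the current MST edge $u v_{i+1}$ (solid). Replacing $u v_{i+1}$ by $v_i v_{i+1}$ therefore produces another spanning tree of weight not exceeding $w(T)$, hence another MST, in which the degree of $u$ has strictly dropped. Iterating this swap eventually yields an MST $T^\star$ whose maximum node degree is at most five.

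For the second stage, let $e = (u,v)$ be any minimal edge of $T^\star$. By definition every $T^\star$-edge incident to $u$ or $v$ has weight at least $w(e)$, so every such edge belongs to $T^\star(e^+)$. Therefore the edge degree $\deg(e)$, which counts $T^\star(e^+)$-edges adjacent to $e$, equals $\deg_{T^\star}(u) + \deg_{T^\star}(v) - 1 \leq 5 + 5 - 1 = 9$, as required.

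The main obstacle is the first stage: it must be verified that the chosen swap (i)~keeps the graph a tree, which follows because removing $uv_{i+1}$ disconnects $T^\star$ into two components with $v_{i+1}$ on one side and $v_i$ on the other, so adding $v_i v_{i+1}$ reconnects them; (ii)~does not create a new vertex of degree six elsewhere, since only the degrees of $u$, $v_i$, and $v_{i+1}$ change, with $u$ dropping by one and each of $v_i,v_{i+1}$ changing by at most one, so by picking a swap around the current maximum-degree vertex and invoking a potential-function argument on the multiset of vertex degrees, the process terminates; and (iii)~yields a quadrilateral that actually satisfies the three hypotheses of Lemma~\ref{sm:convex-quadrilateral-lemma}, which is where the hexagonal rigidity forced by $\deg(u) = 6$ and Lemma~\ref{sm:empty-triangle-lemma} does the real work, and where care is needed to choose the correct pair $(b,d)$ among the six candidates so that the length ordering $|ab| \leq |bc| \leq |ad|$ holds.
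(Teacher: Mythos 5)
Your Stage~2 inference is arithmetically sound, but it makes the lemma vacuous: once you invoke Monma--Suri for a degree-five MST, every edge (not just a minimal one) has edge degree $\deg(u)+\deg(v)-1\le 9$, so there would be nothing left to prove. That is a signal that the bound ``nine'' in the quoted statement is a typo. The sentence immediately before the lemma in the paper reads ``\ldots the degree of each minimal edge is at most \emph{eight}. This implies the conjecture that $\Inf{MST(P)}\le 8$,'' and the paper's own proof opens with ``If the degree of $uv$ is $8$, then $MST(P)$ satisfies the statement of the lemma. Assume that the degree of $uv$ is $9$.'' The intended claim is \emph{eight}, and the entire content of the paper's proof is the non-trivial reduction from $9$ to $8$, which your proposal never addresses.

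Concretely, what the paper does and your proposal misses: start from a Monma--Suri degree-five MST (the paper \emph{cites} this; it does not reprove it). Take any minimal edge $e=uv$ with edge degree $9$. Since $e$ is minimal, every edge incident to $u$ or $v$ lies in $T(e^+)$, so edge degree $9$ forces $\deg_T(u)=\deg_T(v)=5$. With $u_1,\dots,u_4$ and $v_1,\dots,v_4$ the other neighbours ordered around $u$ and $v$, the consecutive angles at $u$ and at $v$ are each $\ge \pi/3$, and a short angle count shows $\angle vuu_1+\angle uvv_1\le\pi$ or $\angle vuu_4+\angle uvv_4\le\pi$. Taking the first (WLOG) and using Lemma~\ref{sm:empty-triangle-lemma} to get convexity, the quadrilateral $Q=u,v,v_1,u_1$ satisfies the hypotheses of Lemma~\ref{sm:convex-quadrilateral-lemma}, giving $|u_1v_1|\le|uu_1|$. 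Strict inequality would violate MST-minimality via Lemma~\ref{sm:cycle-lemma} on the cycle $u,u_1,v_1,v,u$; hence $|u_1v_1|=|uu_1|$, $Q$ is a rhombus, all nine incident edges have equal length, and the edge swap $uv\to u_1v_1$ produces another degree-five MST with one fewer degree-$9$ minimal edge (with further case analysis to ensure no new degree-$9$ minimal edge is created). This swap argument is the entire substance of the lemma.

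Two smaller points about your Stage~1. First, you try to re-derive the degree-five bound with Lemma~\ref{sm:convex-quadrilateral-lemma}, but your vertex labels do not deliver the conclusion you state: with $a=v_{i-1}$, $b=v_i$, $c=u$, $d=v_{i+1}$, the lemma gives $|cd|\le|ad|$, i.e.\ $|uv_{i+1}|\le|v_{i-1}v_{i+1}|$, which says nothing about the proposed replacement edge $v_iv_{i+1}=bd$. Second, and more to the point, the paper does not use that lemma for the degree-five bound at all --- it just cites Monma and Suri and spends all of Lemma~\ref{sm:convex-quadrilateral-lemma} on the $9\to 8$ step. Re-scoping your effort to that step is where the real work lies.
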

\begin{proof}
Consider a minimum spanning tree, $MST(P)$, of maximum vertex degree 5. The maximum edge degree in $MST(P)$ is 9. Consider any minimal edge, $uv$. If the degree of $uv$ is 8, then $MST(P)$ satisfies the statement of the lemma. Assume that the degree of $uv$ is 9. Let $u_1, u_2,u_3,u_4$ and $v_1, v_2,v_3,v_4$ be the the neighbors of $u$ and $v$ in clockwise and counterclockwise orders, respectively. See Figure~\ref{sm:degree8-fig}. In $MST(P)$, the angles between two adjacent edges are at least $\pi/3$. Since $\angle u_iuu_{i+1}\ge \pi/3$ and $\angle v_ivv_{i+1}\ge \pi/3$ for $i=1,2,3$, either $\angle vuu_1+\angle uvv_1\le \pi$ or $\angle vuu_4+\angle uvv_4\le \pi$. Without loss of generality assume that $\angle vuu_1+\angle uvv_1\le \angle vuu_4+\angle uvv_4$ and $\angle vuu_1+\angle uvv_1\le \pi$. We prove that the spanning tree
obtained by swapping the edge $uv$ with $u_1v_1$ is also a minimum spanning tree, and it has one fewer minimal-edge of degree 9. By repeating this procedure at each minimal-edge of degree 9, we obtain a minimum spanning tree which satisfies the statement of the lemma. 
Let $Q=u,v,v_1,u_1$. By Lemma~\ref{sm:empty-triangle-lemma}, $v_1$ is outside the triangle $\bigtriangleup u_1uv$, and $u_1$ is outside the triangle $\bigtriangleup uvv_1$. In addition, $u_1$ and $v_1$ are on the same side of the line subtended from $uv$. Thus, $Q$ is a convex quadrilateral.
Without loss of generality assume that $|vv_1|\le |uu_1|$. By Lemma~\ref{sm:convex-quadrilateral-lemma}, $|u_1v_1|\le |uu_1|$. If $|u_1v_1|< |uu_1|$, we get a contradiction to Lemma~\ref{sm:cycle-lemma}. Thus, assume that $|u_1v_1|= |uu_1|$. As shown in the proof of Lemma~\ref{sm:convex-quadrilateral-lemma}, this case happens only when $Q$ is a diamond. This implies that $\angle vuu_1+\angle uvv_1=\pi$, and consequently $\angle vuu_4+\angle uvv_4= \pi$. In addition, $\angle u_iuu_{i+1}= \pi/3$ and $\angle v_ivv_{i+1}= \pi/3$ for $i=1,2,3$. To establish the validity of our edge-swap, observe that the nine edges incident to $u$ and $v$ are all equal in length. Therefore, swapping $uv$ with $u_1v_1$ does not change the cost of the spanning tree and, furthermore, the resulting tree is a valid spanning tree
since $u_1v_1$ is not an edge of the original spanning tree $MST(P)$; otherwise $u,v,v_1$, and $u_1$ would form a cycle. We have removed a minimal edge $uv$ of degree 9, but it remains to show that the degree of $u_1$ and $v_1$ does not increase to six and new minimal edge of degree 9 is not generated. Note that $u_1u_2$ and $v_1v_2$ are not the edges of $MST(P)$, and hence, $\dg{u_1}$ and $\dg{v_1}$ are still less than six. In order to show that no new minimal edge is generated, we differentiate between two cases:

\begin{itemize}
 \item $\min\{\angle vv_1u_1, \angle v_1u_1u\} > \pi/3$. Since $\angle v_1u_1u>\pi/3$ and $\angle uu_1u_2= \pi/3$, $u_1$ can be adjacent to at most two vertices other than $u$ and $v_1$, and hence $\dg{u_1}\le 4$; similarly $\dg{v_1}\le 4$. Thus, $u$, $v$, $u_1$, and $v_1$ are of degree at most four, and hence no new minimal edge of degree 9 is generated.

  \item $\min\{\angle vv_1u_1, \angle v_1u_1u\} = \pi/3$. W.l.o.g. assume that $\angle v_1u_1u=\pi/3$. This implies that $\angle vv_1u_1= 2\pi/3$. Since $\angle v_1u_1u=\pi/3$ and $\angle uu_1u_2= \pi/3$, $u_1$ is adjacent to at most three vertices other than $u$ and $v_1$. Let $u,v_1, w_1,w_2,w_3$ be the neighbors of $u_1$ in clockwise order. Note that $v_1$ is not adjacent to $u$, $v_2$ nor $w_1$. But $v_1$ can be connected to another vertex, say $x$, which implies that $\dg{v_1}\le 3$. We prove that the spanning tree obtained by swapping the edge $u_1v_1$ with $v_1w_1$ is also a minimum spanning tree of node degree at most five, which has one fewer minimal edge of degree 9. The new tree is a legal minimum spanning tree for $P$, because $|v_1w_1|=|v_1u_1|$. In addition, $\dg{u_1}\le 4$ and $\dg{v_1}\le 4$. Since $w_1w_2$ and $w_1x$ are illegal edges, $\dg{w_1}\le 4$. Thus, $u$, $v$, $u_1$, $v_1$, and $w_1$ are of degree at most four and no new minimal edge of degree 9 is generated. This completes the proof that our edge-swap reduces the number of minimal-edges of degree nine by one.
\end{itemize}
\end{proof}

\section{Conclusions}
\label{sm:conclusion}

Given a set of $n$ points in general position in the plane, we considered the problem of strong matching of points with convex geometric shapes. A matching is strong if the objects representing whose edges are pairwise disjoint. In this chapter we presented algorithms which compute strong matchings of points with diametral-disks, equilateral-triangles, and squares. Specifically we showed that:
\begin{itemize}
 \item There exists a strong matching of points with diametral-disks of size at least $\lceil\frac{n-1}{17}\rceil$.
 \item There exists a strong matching of points with downward equilat\-eral-triangles of size at least $\lceil\frac{n-1}{9}\rceil$.
\item There exists a strong matching of points with downward/upward equilateral-triangles of size at least $\lceil\frac{n-1}{4}\rceil$.
\item There exists a strong matching of points with axis-parallel squa\-res of size at least $\lceil\frac{n-1}{4}\rceil$.
\end{itemize}
 
The existence of a downward/upward equilateral-triangle matching of size at least $\lceil\frac{n-1}{4}\rceil$, implies the existence of either a downward equilateral-triangle matching of size at least $\lceil\frac{n-1}{8}\rceil$ or an upward equilateral-triangle matching of size at least $\lceil\frac{n-1}{8}\rceil$. This does not imply a lower bound better than $\lceil\frac{n-1}{9}\rceil$ for downward equilateral-triangle matching (or any fixed oriented equilateral-triangle).

A natural open problem is to improve any of the provided lower bounds, or extend these results for other convex shapes. A specific open problem is to prove that Algorithm~\ref{sm:alg1} computes a strong matching of points with diametral-disks of size at least $\lceil\frac{n-1}{8}\rceil$ as discussed in Section~\ref{sm:conjecture-section}.
An alternative for Algorithm~\ref{sm:alg1} is to pick an edge with the smallest influence set in each iteration. It remains open to derive lower bounds on the size of matchings reported by this new algorithm.

\bibliographystyle{abbrv}
\bibliography{../thesis}
\chapter{Bottleneck Plane Matchings in a Point Set}
\label{ch:bm}

A bottleneck plane perfect matching of a set of $n$ points in $\mathbb{R}^2$ is defined to be a perfect non-crossing matching that minimizes the length of the longest edge; the length of this longest edge is known as {\em bottleneck}. The problem of computing a bottleneck plane perfect matching has been proved to be NP-hard. 
We present an algorithm that computes a bottleneck plane matching of size at least $\frac{n}{5}$ in $O(n \log^2 n)$-time.
Then we extend our idea toward an $O(n\log n)$-time approximation algorithm which computes a plane matching of size at least $\frac{2n}{5}$ whose edges have length at most $\sqrt{2}+\sqrt{3}$ times the bottleneck.

\vspace{10pt}
This chapter is published in the journal of Computational Geometry: Theory and Applications~\cite{Abu-Affash2015-bottleneck}. 
               
\section{Introduction}
We study the problem of computing a bottleneck non-crossing matching of points in the plane.
For a given set $P$ of $n$ points in the plane, where $n$ is even, let $K(P)$ denote the complete Euclidean graph with vertex set $P$. The {\em bottleneck plane matching} problem is to find a perfect non-crossing matching of $K(P)$ that minimizes the length of the longest edge. We denote such a matching by $\MOPT$. The bottleneck, $\btopt$, is the length of the longest edge in $\MOPT$. The problem of computing $\MOPT$ 
has been proved to be NP-hard \cite{Abu-Affash2014}. 
Figure 1 in \cite{Abu-Affash2014} and \cite{Carlsson2010} shows that the longest edge in the minimum weight matching (which is planar) can be unbounded with respect to $\btopt$. On the other hand the weight of the bottleneck matching can be unbounded with respect to the weight of the minimum weight matching, see Figure \ref{bm:weight}.

\begin{figure}[ht]
  \centering
\setlength{\tabcolsep}{0in}
  $\begin{tabular}{cc}
 \multicolumn{1}{m{.5\columnwidth}}{\centering\includegraphics[width=.4\columnwidth]{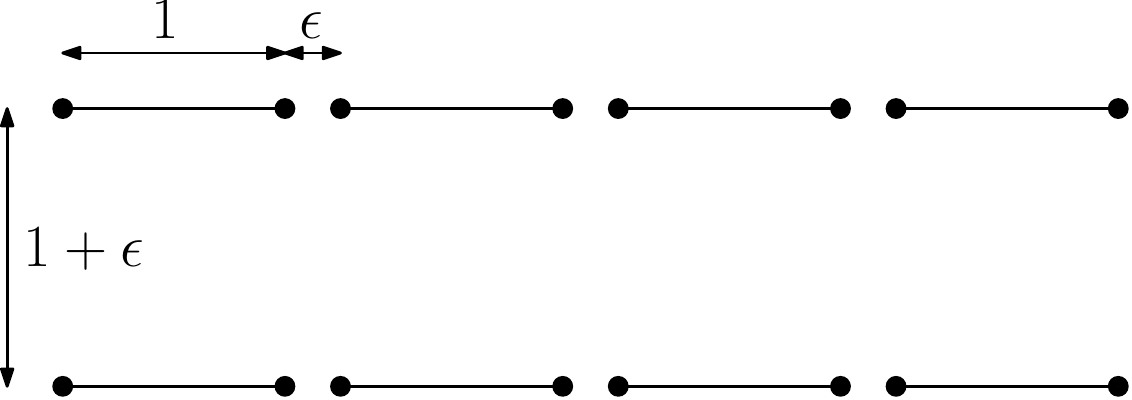}}
&\multicolumn{1}{m{.5\columnwidth}}{\centering\includegraphics[width=.35\columnwidth]{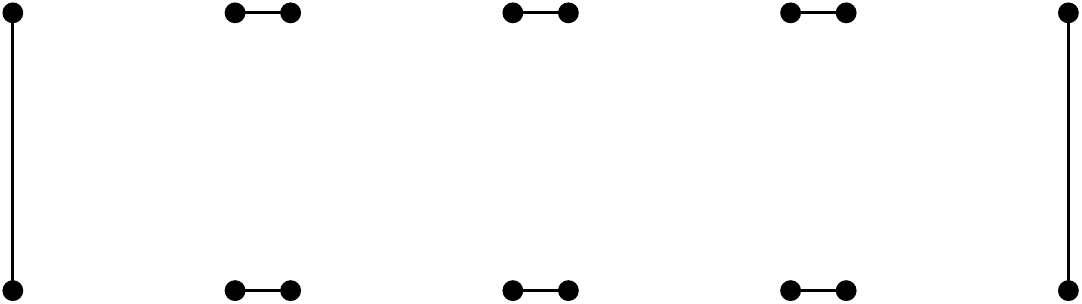}}\\
(a) & (b)
  \end{tabular}$
  \caption{(a) bottleneck matching, (b) minimum weight matching.}
\label{bm:weight}
\end{figure}

Matching and bottleneck matching problems play an important role in graph theory, and thus, they have been studied extensively,  e.g.,~\cite{Abu-Affash2014, Aloupis2013, Chang1992, Efrat2001, Efrat2000, Vaidya1989, Varadarajan1998}. Self-crossing configurations are often undesirable and may even imply an error condition; for example, a potential collision between moving objects, or inconsistency in a layout of a circuit.
In particular, non-crossing matchings are especially important in the context of VLSI circuit layouts~\cite{Lengauer1990} and operations research.

\addtocontents{toc}{\protect\setcounter{tocdepth}{1}}
\subsection{Previous Work}
\addtocontents{toc}{\protect\setcounter{tocdepth}{2}}

It is desirable to compute a perfect matching of a point set in the plane, which is optimal with respect to some criterion such as: (a) {\em minimum-cost matching} which minimizes the sum of the lengths of all edges; also known as {\em minimum-weight matching} or {\em min-sum matching}, and (b) {\em bottleneck matching} which minimizes the length of the longest edge; also known as {\em min-max matching} \cite{Efrat2000}. For the minimum-cost matching, Vaidya \cite{Vaidya1989} presented an $O(n^{2.5}\log^4 n)$ time algorithm, which was improved to $O(n^{1.5}\log^5 n)$ by Varadarajan \cite{Varadarajan1998}. As for bottleneck matching, Chang et al. \cite{Chang1992} proved that such kind of matching is a subset of 17-RNG (relative neighborhood graph). They presented an algorithm, running in $O(n^{1.5}\sqrt{\log n})$-time to compute a bottleneck matching of maximum cardinality. The matching computed by their algorithm may be crossing.
Efrat and Katz \cite{Efrat2000} extended the result of Chang et al. \cite{Chang1992} to higher dimensions. They proved that a bottleneck matching in any constant dimension can be computed in $O(n^{1.5}\sqrt{\log n})$-time under the $L_\infty$-norm. 

Note that a plane perfect matching of a point set can be computed in $O(n\log n)$-time, e.g., by matching the two leftmost points recursively.

Abu-Affash et al.~\cite{Abu-Affash2014} showed that the bottleneck plane perfect matching problem is NP-hard and presented an algorithm that computes a plane perfect matching whose edges have length at most $2\sqrt{10}$ times the bottleneck, i.e., $2\sqrt{10}\btopt$. They also showed that this problem does not
admit a PTAS (Polynomial Time Approximation Scheme), unless P=NP. Carlsson et al.~\cite{Carlsson2010} showed that the bottleneck plane perfect matching problem for a Euclidean bipartite complete graph is also NP-hard.
\addtocontents{toc}{\protect\setcounter{tocdepth}{1}}
\subsection{Our Results}
\addtocontents{toc}{\protect\setcounter{tocdepth}{2}}
The main results of this chapter are summarized in Table \ref{bm:table1}. We use the unit disk graph as a tool for our approximations. First, we present an $O(n\log n)$-time algorithm in Section \ref{bm:UDG}, that computes a plane matching of size at least $\frac{n-1}{5}$ in a connected unit disk graph. Then in Section \ref{bm:bottleneck-five-over-two} we describe how one can use this algorithm to obtain a bottleneck plane matching of size at least $\frac{n}{5}$ with edges of length at most $\btopt$ in $O(n\log^2 n)$-time.
In Section \ref{bm:bottleneck-five-over-four} we present an $O(n\log n)$-time approximation algorithm that computes a plane matching of size at least $\frac{2n}{5}$ whose edges have length at most $(\sqrt{2}+\sqrt{3})\btopt$. Finally we conclude this chapter in Section \ref{bm:conclusion}.

\begin{table}
\centering
\caption{Summary of results.}
\label{bm:table1}
    \begin{tabular}{|l|c|c|c|}
         \hline
             Algorithm &running time&bottleneck ($\bt$)    & match. size  \\ \hline
             A.-Affash et al. \cite{Abu-Affash2014}&$O(n^{1.5}\sqrt{\log n})$  & $2\sqrt{10}\btopt$ & $n/2$ \\
             Section \ref{bm:bottleneck-five-over-two} &$O(n \log^2 n)$& $\btopt$ & $n/5$ \\
             Section \ref{bm:bottleneck-five-over-four}&$O(n\log n)$ & $(\sqrt{2}+\sqrt{3})\btopt$ & $2n/5$\\
         \hline
    \end{tabular}
\end{table}

\section{Preliminaries}
\label{bm:preliminaries}

Let $P$ denote a set of $n$ points in the plane, where $n$ is even, and let $K(P)$ denote the complete Euclidean graph over $P$. A {\em matching}, $M$, is a subset of edges of $K(P)$ without common vertices. Let $|M|$ denote the {\em cardinality} of $M$, which is the number of edges in $M$. $M$ is a {\em perfect matching} if it covers all the vertices of $P$, i.e., $|M|=\frac{n}{2}$. The {\em bottleneck} of $M$ is defined as the longest edge in $M$. We denote its length by $\bt_M$. A {\em bottleneck perfect matching} is a perfect matching that minimizes the bottleneck length. A {\em plane matching} is a matching with non-crossing edges. 
We denote a plane matching by $M_=$ and a crossing matching by $M_\times$. 
The {\em bottleneck plane perfect matching}, $\MOPT$, is a perfect plane matching which minimizes the length of the longest edge. Let $\btopt$ denote the length of the bottleneck edge in $\MOPT$. In this chapter we consider the problem of computing a bottleneck plane matching of $P$. 

The Unit Disk Graph, $UDG(P)$, is defined to have the points of $P$ as its vertices and two vertices $p$ and $q$ are connected by an edge if their Euclidean distance $|pq|$ is at most 1. The {\em maximum plane matching} of $UDG(P)$ is the maximum cardinality matching of $UDG(P)$, which has no pair of crossing edges. The following lemma states a folklore result, however, for the sake of completeness we provide a simple proof. 

\begin{lemma}
 If the maximum plane matching in unit disk graphs can be computed in polynomial time, then the bottleneck plane perfect matching problem for point sets can also be solved in polynomial time.
\end{lemma}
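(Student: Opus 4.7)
The plan is to reduce the bottleneck plane perfect matching problem to polynomially many instances of the maximum plane matching problem in unit disk graphs. The key observation is that the optimal bottleneck value $\lambda^*$ must equal $|pq|$ for some pair $p,q \in P$, so there are only $\binom{n}{2} = O(n^2)$ candidate values for $\lambda^*$. I would begin by computing and sorting all pairwise Euclidean distances between points of $P$; call this sorted list $d_1 \le d_2 \le \dots \le d_{\binom{n}{2}}$.

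Next, I would binary search over this list. For a candidate distance $d_i$, rescale the plane by the factor $1/d_i$ (equivalently, declare a pair to be UDG-adjacent iff its distance is at most $d_i$) and construct the corresponding graph $UDG(P, d_i)$, whose edge set consists of all straight-line segments between points of $P$ of length at most $d_i$. Invoke the assumed polynomial-time oracle for maximum plane matching on $UDG(P, d_i)$ and check whether the returned matching has cardinality $n/2$. If yes, then $P$ admits a plane perfect matching with bottleneck at most $d_i$, so $\lambda^* \le d_i$; if no, then every plane perfect matching of $K(P)$ must use some edge longer than $d_i$, so $\lambda^* > d_i$. Standard binary search therefore locates the smallest index $i^*$ for which the answer is yes, and the corresponding matching is a bottleneck plane perfect matching of $P$.

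The correctness follows from the equivalence: a plane perfect matching of $K(P)$ with bottleneck at most $d$ exists if and only if $UDG(P,d)$ admits a plane perfect matching, because the edge sets of plane perfect matchings of bottleneck at most $d$ in $K(P)$ coincide with the plane perfect matchings of $UDG(P,d)$. The running time is $O(n^2 \log n)$ for sorting, $O(\log n)$ oracle calls, each on an instance of size $O(n)$ vertices and $O(n^2)$ edges, so the total time is polynomial provided the oracle is polynomial.

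The main (and really the only) subtlety is to verify that the binary search target is well-defined, i.e., that the feasibility predicate ``$UDG(P, d_i)$ has a plane perfect matching'' is monotone in $i$; this is immediate since $UDG(P, d_i) \subseteq UDG(P, d_{i+1})$, so any plane perfect matching at threshold $d_i$ is also a plane perfect matching at threshold $d_{i+1}$. Everything else is routine bookkeeping, and no geometric argument beyond this monotonicity is needed.
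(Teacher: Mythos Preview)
Your proposal is correct and follows essentially the same approach as the paper: observe that $\lambda^*$ must be one of the $\binom{n}{2}$ interpoint distances, and for each candidate threshold $\lambda$ test whether the disk graph $DG(\lambda,P)$ admits a plane perfect matching via the assumed oracle. The paper's proof is terser (it simply takes the minimum such $\lambda$ rather than binary searching), but your added monotonicity remark and binary search are straightforward refinements of the same reduction.
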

\begin{proof}
Let $D=\{|pq|:p,q\in P\}$ be the set of all distances determined by pairs of points in $P$. Note that $\btopt\in D$. For each $\bt\in D$, define the ``unit'' disk graph $DG(\bt,P)$, in which two points $p$ and $q$ are connected by an edge if and only if $|pq|\le\bt$. Then $\btopt$ is the minimum $\bt$ in $D$ such that $DG(\bt,P)$ has a plane matching of size $\frac{n}{2}$.
\end{proof}

The Gabriel Graph, $GG(P)$, is defined to have the points of $P$ as its vertices and two vertices $p$ and $q$ are connected by an edge if the disk with diameter $pq$ does not contain any point of $P\setminus\{p,q\}$ in its interior and on its boundary.

\begin{lemma}
\label{bm:MST-UDG}
If the unit disk graph $UDG(P)$ of a point set $P$ is connected, then $UDG(P)$ and $K(P)$ have the same minimum spanning tree.
\end{lemma}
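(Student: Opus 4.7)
The plan is to prove containment in both directions: that every MST of $K(P)$ lies in $UDG(P)$, and that every MST of $UDG(P)$ is an MST of $K(P)$. The key tool is the cycle property of MSTs combined with the hypothesis that $UDG(P)$ is connected.

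First, I would show that every edge of an MST of $K(P)$ has length at most $1$. Let $T^*$ be an MST of $K(P)$, and suppose for contradiction that some edge $e=(u,v)\in T^*$ has length $|uv|>1$. Removing $e$ splits $T^*$ into two subtrees $A\ni u$ and $B\ni v$. Since $UDG(P)$ is connected, there is a path $\pi$ in $UDG(P)$ from $u$ to $v$; every edge of $\pi$ has length at most $1$. Walking along $\pi$ we must traverse some edge $(x,y)$ with $x\in A$ and $y\in B$, and then $|xy|\le 1<|uv|$. The cycle formed by adding $(x,y)$ to $T^*$ has $e$ as its unique longest edge, so by the cycle property $e$ cannot belong to any MST of $K(P)$, contradicting $e\in T^*$. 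Hence every edge of $T^*$ has length at most $1$, so $T^*$ is a spanning tree of $UDG(P)$.

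Next, I would use this containment to identify the MSTs of both graphs. Let $T_U$ be any MST of $UDG(P)$. Since $T^*\subseteq UDG(P)$, the weight of $T_U$ is at most the weight of $T^*$. Conversely, $T_U$ is itself a spanning tree of $K(P)$, so its weight is at least the weight of $T^*$. Therefore $T_U$ and $T^*$ have the same total weight, which means $T^*$ is an MST of $UDG(P)$ and $T_U$ is an MST of $K(P)$. Thus the two sets of MSTs coincide, which is the claim of the lemma.

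The main obstacle is essentially just the first step: ruling out long MST edges via the cycle property. Once that is in place, the equality of the MST sets is a clean two-line argument. One subtlety to be careful about is that the MST need not be unique if the point set has repeated interpoint distances, so the statement must be read as ``the collection of MSTs of $K(P)$ equals the collection of MSTs of $UDG(P)$''; the proof above handles this uniformly without assuming distinct distances.
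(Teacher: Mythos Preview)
Your proof is correct, though it takes a different route from the paper. The paper argues in a single line via Kruskal's algorithm: running Kruskal on $UDG(P)$ yields a spanning tree $T$ whose edges all have length at most $1$; since every edge of $K(P)$ outside $UDG(P)$ has length strictly greater than $1$, Kruskal on $K(P)$ would process the short edges in the same order and terminate with the same tree $T$ before ever reaching a long edge, so $T$ is also an MST of $K(P)$.

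Your approach instead uses the swap/cycle property to show that every MST of $K(P)$ already lives inside $UDG(P)$, and then finishes with a weight-comparison sandwich. This is slightly longer but buys you an explicit treatment of non-uniqueness: you show the two \emph{sets} of MSTs coincide, whereas the paper's Kruskal argument literally only shows that some MST of $UDG(P)$ is an MST of $K(P)$. One small wording point: you claim $e$ is the \emph{unique} longest edge in the cycle you form, but at that stage of the contradiction argument other edges of $T^*$ on the path from $x$ to $y$ could also have length exceeding $1$. This does not affect correctness, since what you actually use is that $T^* - e + (x,y)$ is a spanning tree of strictly smaller weight (because $|xy|\le 1 < |uv|$); it would be cleaner to phrase it directly as that swap rather than invoking the cycle property.
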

\begin{proof}
 By running Kruskal's algorithm on $UDG(P)$, we get a minimum spanning tree, say $T$. All the edges of $T$ have length at most one, and the edges of $K(P)$ which do not belong to $UDG(P)$ all have length greater than one. Hence, $T$ is also a minimum spanning tree of $K(P)$.
\end{proof}

As a direct consequence of Lemma \ref{bm:MST-UDG} we have the following corollary:

\begin{corollary}
\label{bm:MST-DEL}
Consider the unit disk graph $UDG(P)$ of a point set $P$. We can compute the minimum spanning forest of $UDG(P)$, by first computing the minimum spanning tree of $P$ and then removing the edges whose length is more than one.
\end{corollary}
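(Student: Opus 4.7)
The plan is to reduce the corollary to Lemma~\ref{bm:MST-UDG} by arguing component-by-component, with Kruskal's algorithm providing the conceptual glue. Let $T$ be a minimum spanning tree of $K(P)$ and let $T'$ be the forest obtained from $T$ by removing all edges of length greater than $1$. I want to show that $T'$ is a minimum spanning forest of $UDG(P)$. The two things I need are (i) that the connected components of $T'$ coincide with the connected components of $UDG(P)$, so $T'$ is actually \emph{spanning}, and (ii) that the restriction of $T'$ to each component is a minimum spanning tree of that component.

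First I would handle (i). Imagine executing Kruskal's algorithm on $K(P)$ in sorted order of edge length, producing $T$. Let $F$ be the intermediate forest built up to the moment just before the first edge of length $> 1$ is considered. By construction $F$ consists exactly of the edges of $T$ of length $\le 1$, so $F = T'$. Because Kruskal's greedily adds every edge that does not close a cycle, at this instant no edge of $UDG(P)$ can have its endpoints in two distinct trees of $F$; hence for any two points $p,q$ in the same connected component of $UDG(P)$, they lie in the same tree of $F$. Conversely $F \subseteq UDG(P)$, so any tree of $F$ stays within a single component of $UDG(P)$. Therefore the partition into trees of $T'$ matches the partition into components of $UDG(P)$.

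For (ii), fix any connected component $C$ of $UDG(P)$ with point set $P_C$. By (i), $T'$ restricted to $P_C$ is a spanning tree $T'_C$ of $UDG(P_C)$. Since $UDG(P_C)$ is connected, Lemma~\ref{bm:MST-UDG} says that $UDG(P_C)$ and $K(P_C)$ share a common minimum spanning tree. It then suffices to observe that $T'_C$ is a minimum spanning tree of $K(P_C)$: otherwise I could replace $T'_C$ inside $T$ by a lighter spanning tree of $K(P_C)$ (using the fact that $T$ is the union of $T'$ and the long edges joining components, which form a tree on the components), yielding a spanning tree of $K(P)$ strictly lighter than $T$, contradicting minimality of $T$. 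This gives that $T'$ restricted to each component is a minimum spanning tree of $UDG(P_C)$, so $T'$ is a minimum spanning forest of $UDG(P)$, which is exactly what the corollary claims.

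There is no real obstacle here; the only subtle point is ensuring that removing the long edges from the global $T$ yields a forest whose components match those of $UDG(P)$, and this is precisely what the Kruskal's argument makes transparent. The construction also doubles as the advertised algorithm: compute any MST of $K(P)$ (e.g.\ via the standard $O(n \log n)$ Euclidean MST algorithm) and delete all edges of length exceeding $1$.
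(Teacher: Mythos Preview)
Your proof is correct and is exactly the natural way to unpack what the paper leaves implicit: the paper states the corollary as ``a direct consequence of Lemma~\ref{bm:MST-UDG}'' with no further argument, and your component-by-component reduction via Kruskal's algorithm is precisely how one makes that consequence rigorous. There is nothing to add.
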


\begin{lemma}
\label{bm:same-component}
 For each pair of crossing edges $(u,v)$ and $(x,y)$ in $UDG(P)$, the four endpoints $u$, $v$, $x$, and $y$ are in the same component of $UDG(P)$. 
\end{lemma}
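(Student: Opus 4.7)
My plan is to use the crossing point together with the triangle inequality to show that at least one of the four ``side'' segments of the quadrilateral formed by $u,v,x,y$ has length at most $1$, and therefore is itself an edge of $UDG(P)$. Combined with the two given diagonals $(u,v)$ and $(x,y)$, this forces all four endpoints into a single connected component.

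More concretely, let $c$ denote the point at which the segments $uv$ and $xy$ cross, so that $|uc|+|cv|=|uv|\le 1$ and $|xc|+|cy|=|xy|\le 1$. Applying the triangle inequality to the two opposite pairs of sides of the quadrilateral gives
\[
|ux|+|vy| \;\le\; (|uc|+|cx|)+(|vc|+|cy|) \;=\; |uv|+|xy| \;\le\; 2,
\]
so $\min\{|ux|,|vy|\}\le 1$, meaning at least one of $(u,x)$ and $(v,y)$ is an edge of $UDG(P)$. (An identical argument with the other pairing shows $\min\{|uy|,|vx|\}\le 1$, though one such bound already suffices for connectivity.)

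I would then finish by a trivial case analysis: whichever of $(u,x)$ or $(v,y)$ is an edge, together with the two diagonal edges $(u,v)$ and $(x,y)$, yields a path through $u,v,x,y$ in $UDG(P)$. Hence all four endpoints lie in the same connected component, as claimed.

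The argument is short and the only ``obstacle'' is really the observation that when two segments cross, the crossing point $c$ can be used as an intermediate vertex to bound the sums $|ux|+|vy|$ and $|uy|+|vx|$ by $|uv|+|xy|$; everything else is a one-line triangle inequality followed by a pigeonhole on the averages. No further machinery from the earlier sections is needed.
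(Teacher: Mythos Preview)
Your argument is correct and, in fact, slightly more elementary than the one in the paper. The paper does not go through the crossing point and the triangle inequality; instead it observes that the quadrilateral with diagonals $uv$ and $xy$ is convex, picks the vertex (say $u$) at which the interior angle $\angle xuy$ is largest and hence at least $\pi/2$, and concludes that in the triangle $\triangle xuy$ both legs $ux$ and $uy$ are shorter than the hypotenuse $xy\le 1$. This yields \emph{two} side-edges of the quadrilateral in $UDG(P)$ rather than the single one your averaging bound produces, but for the connectivity statement one edge is already enough, so your argument suffices.

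What each approach buys: your triangle-inequality route avoids any reference to convexity or angle comparisons and works verbatim even in degenerate configurations; the paper's angle argument is geometrically sharper (it identifies a specific vertex adjacent to two short sides) but requires the observation that the largest angle of a convex quadrilateral is at least $\pi/2$. Either way the lemma follows in a couple of lines.
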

\begin{proof}
Note that the quadrilateral $Q$ formed by the end points $u$, $v$, $x$, and $y$ is convex. W.l.o.g. assume that the angle $\angle xuy$ is the largest angle in $Q$. Clearly $\angle xuy \ge \pi/2$, and hence, in triangle $\bigtriangleup xuy$, the angles $\angle yxu$ and $\angle uyx$ are both less than $\pi/2$. Thus, the edges $(u,x)$ and $(u,y)$ are both less than $(x,y)$. This means that $(u,x)$ and $(u,y)$ are also edges of $UDG(P)$, thus, their four endpoints belong to the same component.  
\end{proof}

As a direct consequence of Lemma \ref{bm:same-component} we have the following corollary: 

\begin{corollary}
\label{bm:non-crossing-edges}
Any two edges that belong to different components of $UDG(P)$ do not cross.
\end{corollary}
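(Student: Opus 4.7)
The plan is to prove Corollary~\ref{bm:non-crossing-edges} by a one-line contrapositive argument using Lemma~\ref{bm:same-component}. Suppose for contradiction that two edges $(u,v)$ and $(x,y)$ of $UDG(P)$ belonging to different components actually do cross. Then Lemma~\ref{bm:same-component} applies to these two crossing edges and forces $u$, $v$, $x$, $y$ to lie in the same connected component of $UDG(P)$. In particular, the endpoints of $(u,v)$ would share a component with the endpoints of $(x,y)$, contradicting the hypothesis that the two edges come from different components.

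Since the corollary is really just the contrapositive of the previous lemma, there is no new geometric content to supply, no case analysis, and no additional calculation. The only thing to be mindful of is stating the setup cleanly: name the two edges, note the assumption that they lie in different components, and assume a crossing toward contradiction. Then invoke Lemma~\ref{bm:same-component} to place all four endpoints in a common component, and derive the contradiction from the fact that $u$ (an endpoint of the first edge) lies in the same component as $x$ (an endpoint of the second edge). There is no real obstacle here; the structural work was done in Lemma~\ref{bm:same-component}, which itself rested on the geometric observation that a diagonal of a convex quadrilateral with its largest angle at $u$ bounds the two incident diagonals' lengths from above, keeping the shortcut edges $(u,x)$ and $(u,y)$ inside $UDG(P)$.

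Hence the proof sketch fits in two sentences and no figure is needed. The only stylistic choice is whether to phrase it as a contrapositive (\emph{``If all four endpoints were not in the same component, then by Lemma~\ref{bm:same-component} the two edges cannot cross''}) or as a proof by contradiction; both formulations are equally short and transparent.
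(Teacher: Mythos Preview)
Your proposal is correct and matches the paper's approach exactly: the paper states this corollary as a direct consequence of Lemma~\ref{bm:same-component} without giving a separate proof, and your contrapositive argument is precisely the intended one-line derivation.
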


Let $MST(P)$ denote the Euclidean minimum spanning tree of $P$.
We have prove the following lemma in Chapter~\ref{ch:sm}.
\begin{lemma}
\label{bm:empty-triangle-lemma}
If $(u,v)$ and $(u,w)$ are two adjacent edges in $MST(P)$, then the triangle $\bigtriangleup uvw$ has no point of $P\setminus\{u, v, w\}$ inside or on its boundary.
\end{lemma}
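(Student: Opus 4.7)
\medskip

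My plan is to exploit the well-known fact that $MST(P)$ is a subgraph of the Gabriel graph $GG(P)$: for every edge $(a,b)$ of an MST, the open disk having $ab$ as diameter is empty of points of $P\setminus\{a,b\}$ (otherwise one could reduce the tree by swapping). Applying this to the two adjacent MST edges $(u,v)$ and $(u,w)$, I would denote by $C_1$ and $C_2$ the closed disks with diameters $uv$ and $uw$ respectively, both of which contain no point of $P\setminus\{u,v,w\}$ in their interiors.

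The core geometric claim I would establish is that
\[
\bigtriangleup uvw \;\subseteq\; C_1\cup C_2.
\]
To see this, first dispatch the degenerate case $\angle vuw=\pi$: then the triangle is a line segment, and its interior relative to the plane is empty, so nothing needs to be proved beyond observing that the collinear configuration still forces $u$, $v$, $w$ to be the only relevant points (no other point of $P$ can lie on the open segments $uv$ or $uw$ without violating the Gabriel property). For $\angle vuw<\pi$, the circles bounding $C_1$ and $C_2$ both pass through $u$, and since $\angle vuw<\pi$ they meet at a second point $p$. By Thales's theorem applied inside $C_1$ and $C_2$, I have $\angle upv=\pi/2$ and $\angle upw=\pi/2$, which forces $\angle vpw=\pi$. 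Hence $p$ lies on the segment $vw$, and the triangle $\bigtriangleup uvw$ is partitioned by $up$ into the two right triangles $\bigtriangleup uvp\subseteq C_1$ and $\bigtriangleup upw\subseteq C_2$.

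With the containment $\bigtriangleup uvw\subseteq C_1\cup C_2$ in hand, the lemma is immediate: any point of $P\setminus\{u,v,w\}$ lying in $\bigtriangleup uvw$ (interior or boundary) would have to lie in at least one of the two disks and be distinct from the diameter endpoints of that disk, contradicting the Gabriel property of MST edges. I should also be mildly careful about the boundary: a putative extra point on side $vw$ would lie on $\overline{vp}\subseteq C_1$ or $\overline{pw}\subseteq C_2$, and a point on $uv$ or $uw$ is already ruled out directly.

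The only mildly subtle step is the containment argument via the Thales configuration, but it is entirely elementary once the second intersection point $p$ is introduced; there is no real obstacle. The argument also explains why the statement is tight on the boundary (points of $P$ may lie on the triangle's vertices $u,v,w$ but nowhere else), matching the formulation in the lemma.
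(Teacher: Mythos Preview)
Your proposal is correct and follows essentially the same approach as the paper: use that MST edges are Gabriel edges so the diametral disks $C_1,C_2$ on $uv,uw$ are empty, locate their second intersection $p$ via Thales to see $p\in vw$, and conclude $\bigtriangleup uvw\subseteq C_1\cup C_2$.
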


\begin{corollary}
\label{bm:empty-convex-hull}
Consider $MST(P)$ and let $N_v$ be the set of neighbors of a vertex $v$ in $MST(P)$. Then the convex hull of $N_v$ contains no point of $P$ except $v$ and the set $N_v$.
\end{corollary}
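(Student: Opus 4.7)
The plan is to use Lemma~\ref{bm:empty-triangle-lemma} as the main engine: since every pair of edges meeting at $v$ in $MST(P)$ bounds a triangle that is empty of other points of $P$, I will try to cover all of $CH(N_v)$ by such empty triangles. First I would enumerate $N_v = \{u_1, \ldots, u_k\}$ in angular order around $v$ and, for each consecutive pair, form the triangle $T_i = \bigtriangleup v\,u_i\,u_{i+1}$. Because $(v,u_i)$ and $(v,u_{i+1})$ are adjacent edges of $MST(P)$, Lemma~\ref{bm:empty-triangle-lemma} immediately yields $T_i \cap P = \{v, u_i, u_{i+1}\}$, interior and boundary included.

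The key geometric step is then to verify the covering claim
$$CH(N_v) \;\subseteq\; \bigcup_{i} T_i.$$
I would split into two cases according to the position of $v$ relative to $CH(N_v)$. If $v$ lies in the interior of $CH(N_v)$, then the triangles $T_1,\ldots,T_k$ (with $T_k = \bigtriangleup v\,u_k\,u_1$) form a fan triangulation of $CH(N_v)$ with apex $v$; the angles at $v$ sweep a full $2\pi$, so the triangles tile $CH(N_v)$. If $v$ lies outside or on the boundary of $CH(N_v)$, then by convexity a line separates $v$ from $N_v$, so the angular span of $N_v$ as seen from $v$ is at most $\pi$; in that case the non-degenerate triangles $T_1,\ldots,T_{k-1}$ form a fan triangulation of $CH(\{v\}\cup N_v)$ from $v$, and since $CH(N_v) \subseteq CH(\{v\}\cup N_v)$, the covering still holds.

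Once the covering claim is established, the conclusion is immediate: any $p \in P \cap CH(N_v)$ must lie in some $T_i$, and Lemma~\ref{bm:empty-triangle-lemma} forces $p \in \{v,u_i,u_{i+1}\} \subseteq \{v\}\cup N_v$, exactly as the corollary asserts.

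The step I expect to require the most care is the geometric covering claim in the second case. One must argue that the outer boundary of $\bigcup_{i<k} T_i$ is exactly the polyline $v,u_1,u_2,\ldots,u_k,v$, so that the convex hull of $N_v$ is contained in the fan, and that no ``sliver'' of $CH(N_v)$ is left uncovered near the chord $u_1 u_k$. Points that happen to lie on a shared ray $v u_i$ between two triangles need no extra argument, since Lemma~\ref{bm:empty-triangle-lemma} rules out points of $P$ on the boundary of each triangle as well.
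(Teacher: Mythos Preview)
Your plan is correct and matches the intended derivation from Lemma~\ref{bm:empty-triangle-lemma}; the paper states the result as an immediate corollary without a separate proof.

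One refinement to the step you flagged. The real danger in the covering claim is not a ``sliver near the chord $u_1u_k$'' but whether the fan polygon $u_1,\ldots,u_k$ (your angular order around $v$) is convex. A priori some intermediate $u_j$ could sit closer to $v$ than the chord $u_{j-1}u_{j+1}$, giving a reflex vertex; then the region between that chord and $u_j$ lies in $CH(N_v)$ but outside $\bigcup_i T_i$, and your covering fails. What rules this out is again Lemma~\ref{bm:empty-triangle-lemma}, applied to the \emph{non-consecutive} pair $u_{j-1},u_{j+1}$: the triangle $\bigtriangleup v\,u_{j-1}\,u_{j+1}$ contains no point of $P\setminus\{v,u_{j-1},u_{j+1}\}$, so $u_j$---which lies in its angular sector from $v$---must sit beyond the chord $u_{j-1}u_{j+1}$. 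Hence the fan polygon is locally convex at every $u_j$, therefore convex, and the covering goes through. A cleaner variant avoids this entirely: fan from $v$ only through the \emph{extreme} points of $N_v$. The lemma applies to any pair of neighbours of $v$, consecutive or not, so each fan triangle is still empty, and now the union is $CH(\{v\}\cup N_v)\supseteq CH(N_v)$ by construction.
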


The shaded area in Figure \ref{bm:empty-skeleton-fig} shows the union of all these convex hulls.

\section{Plane Matching in Unit Disk Graphs}
\label{bm:UDG}

In this section we present two approximation algorithms for computing a maximum plane matching in a unit disk graph $UDG(P)$. In Section \ref{bm:three-approximation} we present a straight-forward $\frac{1}{3}$-approximation algorithm. For the case when $UDG(P)$ is connected, we present a $\frac{2}{5}$-approximation algorithm in Section \ref{bm:five-over-two-approximation}.

\subsection{$\frac{1}{3}$-approximation algorithm}
\label{bm:three-approximation}

Given a possibly disconnected unit disk graph $UDG(P)$, we start by computing a (possibly crossing) maximum matching $\MC$ of $UDG(P)$ using Edmonds algorithm \cite{Edmonds1965}. Then we transform $\MC$ to another (possibly crossing) matching $\MC'$ with some properties, and then pick at least one-third of its edges which satisfy the non-crossing property. Consider a pair of crossing edges $(p,q)$ and $(r,s)$ in $\MC$, and let $c$ denote the intersection point. If their smallest intersection angle is at most $\pi/3$, we replace these two edges with new ones. For example if $\angle pcr \le \pi/3$, we replace $(p,q)$ and $(r,s)$ with new edges $(p,r)$ and $(q,s)$. Since the angle between them is at most $\pi/3$, the new edges are not longer than the older ones, i.e. $\max\{|pr|,|qs|\}\le\max\{|pq|,|rs|\}$, and hence the new edges belong to the unit disk graph. On the other hand the total length of the new edges is strictly less than the older ones; i.e. $|pr|+|qs|<|pq|+|rs|$. For each pair of intersecting edges in $\MC$, with angle at most $\pi/3$, we apply this replacement. We continue this process until we have a matching $\MC'$ with the property that if two matched edges intersect, each of the angles incident on $c$ is larger than $\pi/3$.

For each edge in $M'_{\times}$, consider the counter clockwise angle it makes with the positive $x$-axis; this angle is in the range $[0,\pi)$. Using these angles, we partition the edges of $M'_{\times}$ into three subsets, one subset for the angles $[0,\pi/3)$, one subset for the angles $[\pi/3,2\pi/3)$, and
one subset for the angles $[2\pi/3,\pi)$. Observe that edges within one subset are non-crossing. Therefore, if we output the largest subset, we obtain a non-crossing matching of size at least
$|M'_{\times}|/3 = |M_{\times}|/3$.

Since in each step (each replacement) the total length of the matched edges decreases, the replacement process converges and the algorithm will stop. Bonnet ant Miltzow~\cite{Bonnet2016-ewcg} showed that this process stops after $O(n^3)$ steps, where $n$ is the number of points in $P$. Thus the running time of this algorithm is polynomial.

\subsection{$\frac{2}{5}$-approximation algorithm for connected unit disk graphs}
\label{bm:five-over-two-approximation}

In this section we assume that the unit disk graph $UDG(P)$ is connected. Monma et al. \cite{Monma1992} proved that every set of points in the plane admits a minimum spanning tree of degree at most five which can be computed in $O(n\log n)$ time. By Lemma \ref{bm:MST-UDG}, the same claim holds for $UDG(P)$. Here we present an algorithm which extracts a plane matching $M$ from $MST(P)$. Consider a minimum spanning tree $T$ of $UDG(P)$ with vertices of degree at most five. We define the {\em skeleton tree}, ${T'}$, as the tree obtained from $T$ by removing all its leaves; see Figure \ref{bm:empty-skeleton-fig}. Clearly ${T'} \subseteq T \subseteq UDG(P)$. For clarity we use $u$ and $v$ to refer to the leaves of $T$ and $T'$ respectively. In addition, let $v$ and $v'$, respectively, refer to the copies of a vertex $v$ in $T$ and $T'$. In each step, pick an arbitrary leaf $v'\in T'$. By the definition of ${T'}$, it is clear that the copy of $v'$ in $T$, i.e. $v$, is connected to vertices $u_1,\dots, u_k$, for some $1\le k \le 4$, which are leaves of $T$ (if $T'$ has one vertex then $k\le5$). Pick an arbitrary leaf $u_i$ and add $(v, u_i)$ as a matched pair to $M$. For the next step we update $T$ by removing $v$ and all its adjacent leaves. We also compute the new skeleton tree and repeat this process. 
In the last iteration, $T'$ is empty and we may be left with a tree $T$ consisting of one single vertex or one single edge. If $T$ consists of one single vertex, we disregard it, otherwise we add its only edge to $M$.

\begin{figure}[ht]
  \centering
    \includegraphics[width=0.6\textwidth]{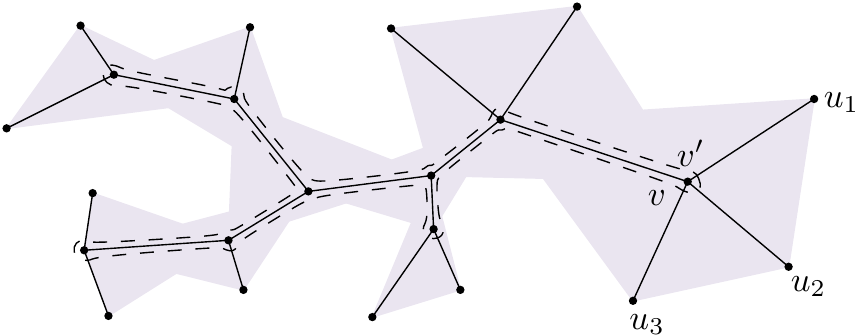}
  \caption{Minimum spanning tree $T$ with union of empty convex hulls. The skeleton tree $T'$ is surrounded by dashed line, and $v'$ is a leaf in $T'$. }
\label{bm:empty-skeleton-fig}
\end{figure}

The formal algorithm is given as {\scshape PlaneMatching}, which receives a point set $P$---whose unit disk graph is connected---as input and returns a matching $M$ as output. The function MST5$(P)$ returns a Euclidean minimum spanning tree of $P$ with degree at most five, and the function Neighbor$(v', T')$ returns the only neighbor of leaf $v'$ in $T'$.

\begin{algorithm}                      
\caption{{\scshape PlaneMatching}$(P)$}          
\label{bm:alg1} 
\require{set $P$ of $n$ points in the plane, such that $UDG(P)$ is connected.}\\
\ensure{plane matching $M$ of $MST(P)$ with $|M|\ge\frac{n-1}{5}$.}
\begin{algorithmic}[1]
    \State $M \gets \emptyset$
    \State $T \gets $MST5$(P)$
    \State $T' \gets $SkeletonTree$(T)$
    \While {$T' \neq \emptyset$}
	\State $v' \gets$ {a leaf of } $T'$
	\State $L_v \gets$ {set of leaves  connected to } $v$ { in } $T$
        \State $u \gets$ {an element of } $L_v$
	\State $M \gets M\cup\{(v,u)\}$	
	\State $T\gets T\setminus (\{v\}\cup L_v)$
	\If {$deg($Neighbor$(v', T'))=2$}
	    \State $T'\gets T'\setminus \{v',${ Neighbor}$(v', T')\}$
	\Else
	    \State $T'\gets T'\setminus \{v'\}$
	\EndIf
    \EndWhile
    \If {$T$ {consists of one single edge}}
	\State $M \gets M\cup T$
    \EndIf
    \State \Return $M$
\end{algorithmic}
\end{algorithm}

\begin{lemma}
\label{bm:n-minus-one-lemma}
Given a set $P$ of $n$ points in the plane such that $UDG(P)$ is connected, algorithm {\scshape PlaneMatching} returns a plane matching $M$ of $MST(P)$ of size $|M|\ge \frac{n-1}{5}$. Furthermore, $M$ can be computed in $O(n\log n)$ time.
\end{lemma}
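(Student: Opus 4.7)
The plan is to establish three properties of the output $M$: that $M$ is a valid non-crossing matching, that $|M| \ge (n-1)/5$, and that the algorithm runs in $O(n\log n)$ time. Every edge inserted into $M$ is an edge of $T \subseteq MST(P)$ (whether produced inside the while loop or in the post-loop clean-up step), and since the edges of a Euclidean minimum spanning tree never cross (a standard fact; alternatively, two crossing MST edges would admit a shorter swap by the triangle inequality, contradicting minimality), $M$ is automatically non-crossing. The matching property holds because whenever an iteration adds $(v,u_i)$, it immediately deletes $v$ along with every element of $L_v$ from $T$, so no vertex can be used twice in subsequent iterations or in the final clean-up.

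For the cardinality bound I would perform a charging argument on the vertices of $T$. Whenever $|T'| \ge 2$ during an iteration, the chosen leaf $v'$ has exactly one neighbor in $T'$, so $v$'s neighborhood in $T$ consists of that single non-leaf neighbor together with $L_v$; combined with the degree-$5$ bound on $T$ this forces $|L_v| \le 4$, and the iteration deletes at most $1+4 = 5$ vertices while contributing one edge to $M$. The only exceptional case is the terminal iteration in which $|T'|=1$ and $v=v'$: now every neighbor of $v$ in $T$ is a leaf, so $|L_v|\le 5$, and the iteration deletes all of $T$'s remaining vertices, at most $6$ of them, while again contributing one edge. Letting $k$ be the number of while-loop iterations and $r\in\{0,1,2\}$ the number of vertices of $T$ left afterwards, the deletion accounting gives $n - r \le 5(k-1) + 6 = 5k+1$ in the worst case; since the post-loop step contributes an additional edge precisely when $r=2$, a short case analysis over $r\in\{0,1,2\}$ yields $|M| \ge (n-1)/5$.

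For the running time, I would invoke the $O(n\log n)$ degree-five minimum spanning tree construction of Monma and Suri~\cite{Monma1992}; by Lemma~\ref{bm:MST-UDG} this is also a minimum spanning tree of $UDG(P)$. Forming $T'$ from $T$ by peeling leaves costs $O(n)$, and each iteration of the while loop can be implemented in constant amortized time if one maintains, for each vertex of $T'$, a pointer to its unique $T'$-neighbor and the list of its current leaf-neighbors in $T$; the main loop therefore runs in $O(n)$ total, and the overall complexity is dominated by the MST computation. The main obstacle I expect is the counting step: one must verify carefully that the exceptional terminal iteration (with its worse ratio of $6$ vertices per edge) cannot, in combination with the residual $r$, pull the matching size below the threshold $(n-1)/5$, so a clean case analysis over the small number of exit configurations is what drives the bound.
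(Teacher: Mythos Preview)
Your approach is sound but the cardinality argument, as written, has a gap in the $r=1$ case: from $n-1\le 5k+1$ you would only obtain $|M|=k\ge (n-2)/5$, which falls short of $(n-1)/5$. The repair is to notice that the exceptional $6$-vertex deletion occurs only when $|T'|=1$, which necessarily empties $T$ and forces $r=0$; thus whenever $r\ge 1$ every iteration deleted at most $5$ vertices, so the sharper bound $n-r\le 5k$ applies and the case closes. (In fact $r=1$ cannot occur at all for $n\ge 2$: a lone survivor $x$ would have had $v$ as its only neighbor, making $x$ a leaf in $L_v$ and hence deleted.) This is precisely the obstacle you flagged, and it is real.

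The paper sidesteps it by counting \emph{edges} rather than vertices. Each iteration removes all edges incident to $v$, and since $\deg_T(v)\le 5$ this is at most five edges \emph{uniformly}, including the terminal iteration. With $n-1$ edges in $T$ and one matching edge gained per at most five tree edges discarded, $|M|\ge (n-1)/5$ is immediate; the post-loop single-edge case can only help. So the two arguments are equivalent in strength, but the edge count dissolves the case analysis you anticipated as the main difficulty. Your planarity and running-time paragraphs match the paper.
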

\begin{proof}
In each iteration an edge $(v, u_i)\in T$ is added to $M$. Since $T$ is plane, $M$ is also plane and $M$ is a matching of $MST(P)$.

Line 5 picks $v'\in T'$ which is a leaf, so its analogous vertex $v\in T$ is connected to at least one leaf. In each iteration we select an edge incident to one of the leaves and add it to $M$, then disregard all other edges connected to $v$ (line 9). So for the next iteration $T$ looses at most five edges. Since $T$ has $n-1$ edges initially and we add one edge to $M$ out of each five edges of $T$, we have $|M|\ge\frac{n-1}{5}$. 

According to Corollary \ref{bm:MST-DEL} and by \cite{Monma1992}, line 2 takes $O(n\log n)$ time. The while-loop is iterated $O(n)$ times and in each iteration, lines 5-13 take constant time. So, the total running time of algorithm {\scshape PlaneMatching} is $O(n\log n)$. 
\end{proof}

The size of a maximum matching can be at most $\frac{n}{2}$. Therefore, algorithm {\scshape PlaneMatching} computes a matching of size at least $\frac{2(n-1)}{5n}$ times the size of a perfect matching, and hence, when $n$ is large enough, {\scshape PlaneMatching} is a $\frac{2}{5}$-approximation algorithm. On the other hand there are unit disk graphs whose maximum matchings have size $\frac{n-1}{5}$; see Figure \ref{bm:worst-case-optimal}. In this case {\scshape PlaneMatching} returns a maximum matching. In addition, when $UDG(P)$ is a tree or a cycle, {\scshape PlaneMatching} returns a maximum matching. 

\begin{figure}[ht]
  \centering
    \includegraphics[width=0.7\textwidth]{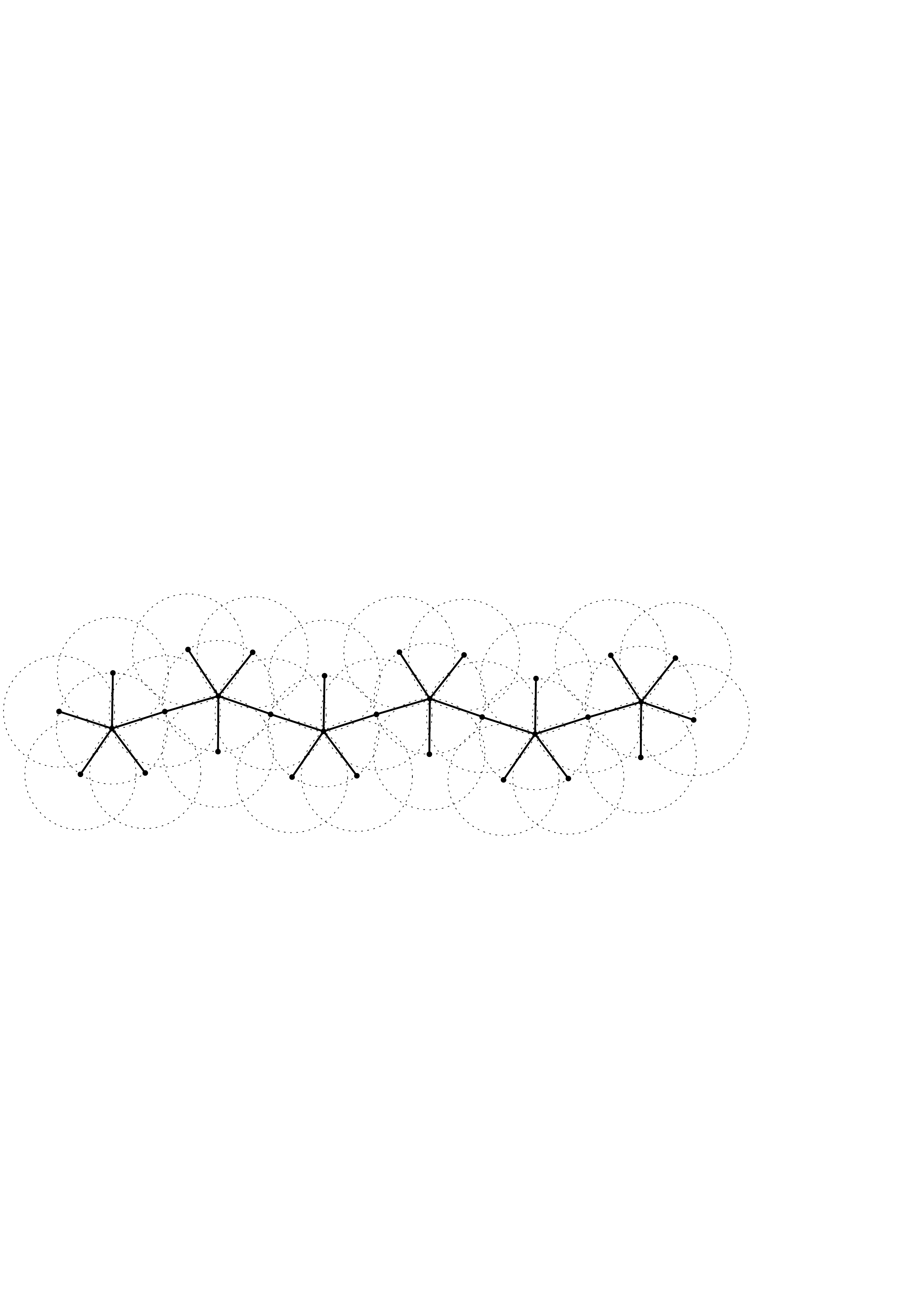}
  \caption{Unit disk graph with all edges of unit length, and maximum matching of size $\frac{n-1}{5}$.}
\label{bm:worst-case-optimal}
\end{figure}

In Section \ref{bm:bottleneck-five-over-two} we will show how one can use a modified version of algorithm {\scshape PlaneMatching} to compute a bottleneck plane matching of size at least $\frac{n}{5}$ with bottleneck length at most $\btopt$. Recall that $\btopt$ is the length of the bottleneck edge in the bottleneck plane perfect matching $\MOPT$. Section \ref{bm:bottleneck-five-over-four} extends this idea to an algorithm which computes a plane matching of size at least $\frac{2n}{5}$ with edges of length at most $(\sqrt{2}+\sqrt{3})\btopt$.

\section{Approximating Bottleneck Plane Perfect Matching}
\label{bm:bottleneck}

The general approach of our algorithms is to first compute a (possibly crossing) bottleneck perfect matching $\MC$ of $K(P)$ using the algorithm in \cite{Chang1992}. Let $\bt_{\MC}$ denote the length of the bottleneck edge in $\MC$. It is obvious that the bottleneck length of any plane perfect matching is not less than $\bt_{\MC}$. Therefore, $\btopt\ge\bt_{\MC}$. We consider a ``unit'' disk graph $DG(\bt_{\MC}, P)$ over $P$, in which there is an edge between two vertices $p$ and $q$ if $|pq|\le\bt_{\MC}$. Note that $DG(\bt_{\MC}, P)$ is not necessarily connected. Let $G_1,\dots,G_k$ be the connected components of $DG(\bt_{\MC}, P)$. For each component $G_i$, consider a minimum spanning tree $T_i$ of degree at most five. We show how to extract from $T_i$ a plane matching $M_i$ of proper size and appropriate edge lengths.

\begin{lemma}
 Each component of $DG(\bt_{\MC}, P)$ has an even number of vertices.
\end{lemma}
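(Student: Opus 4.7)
The plan is to use the bottleneck perfect matching $\MC$ as a certificate of evenness for every component. First I would recall that $\MC$ is a perfect matching of $K(P)$ whose longest edge has length exactly $\bt_{\MC}$. In particular, for every edge $(p,q)\in\MC$ we have $|pq|\le\bt_{\MC}$, which by the very definition of $DG(\bt_{\MC},P)$ means $(p,q)$ is an edge of $DG(\bt_{\MC},P)$. Thus $\MC$ is also a perfect matching of $DG(\bt_{\MC},P)$.

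Next I would use the standard fact that an edge of a graph joins two vertices lying in the same connected component. Fix any connected component $G_i$ of $DG(\bt_{\MC},P)$ and consider the restriction $\MC_i=\{(p,q)\in\MC : p,q\in V(G_i)\}$. Because $\MC\subseteq E(DG(\bt_{\MC},P))$, every edge of $\MC$ incident to a vertex of $G_i$ must have its other endpoint in $G_i$ as well; no edge of $\MC$ can cross from $G_i$ to another component. Consequently $\MC_i$ covers every vertex of $G_i$, so $\MC_i$ is a perfect matching of $G_i$.

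Finally, since a graph that admits a perfect matching necessarily has an even number of vertices (each edge of the matching accounts for exactly two vertices, and every vertex is covered exactly once), $|V(G_i)|$ is even. As $G_i$ was arbitrary, every component of $DG(\bt_{\MC},P)$ has an even number of vertices, which is what we needed to show.

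I do not anticipate any real obstacle here: the argument is a one-line observation once one notices that the bottleneck matching $\MC$ lives inside $DG(\bt_{\MC},P)$, and the only care needed is to justify that $\MC$-edges cannot jump between components, which is immediate from connectedness.
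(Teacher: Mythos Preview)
Your proof is correct and is essentially the same as the paper's: the paper simply notes that since $\MC$ is a perfect matching and every edge of $\MC$ has both endpoints in the same component of $DG(\bt_{\MC},P)$, each component must have even size. You have spelled out this one-line argument in more detail, but the idea is identical.
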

\begin{proof}
This follows from the facts that $\MC$ is a perfect matching and both end points of each edge in ${\MC}$ belong to the same component of $DG(\bt_{\MC}, P)$.
\end{proof}

\subsection{First Approximation Algorithm}
\label{bm:bottleneck-five-over-two}
In this section we describe the process of computing a plane matching $M$ of size
$|M|\ge\frac{1}{5}n$ with bottleneck length at most $\btopt$. Consider the minimum spanning trees $T_1,\dots,T_k$ of the $k$ components of $DG(\bt_{\MC}, P)$. For $1\le i\le k$, let $P_i$ denote the set of vertices in $T_i$ and $n_i$ denote the number of vertices of $P_i$. Our approximation algorithm runs in two steps:

\begin{paragraph}{Step 1:} 
We start by running algorithm {\scshape PlaneMatching} on each of the point sets $P_i$. Let $M_i$ be the output. Recall that algorithm {\scshape PlaneMatching}, from Section \ref{bm:five-over-two-approximation}, picks a leaf $v'\in T_i'$, corresponding to a vertex $v \in T_i$, matches it to one of its neighboring leaves in $T_i$ and disregards the other edges connected to $v$. According to Lemma \ref{bm:n-minus-one-lemma}, this gives us a plane matching $M_i$ of size at least $\frac{n_i-1}{5}$. However, we are looking for a matching of size at least $\frac{n_i}{5}$. 

 The total number of edges of $T_i$ is $n_i-1$ and in each of the iterations, the algorithm picks one edge out of at most five candidates. If in at least one of the iterations of the while-loop, $v$ has degree at most four (in $T_i$), then in that iteration algorithm {\scshape PlaneMatching} picks one edge out of at most four candidates. Therefore, the size of $M_i$ satisfies 
$$
 |M_i| \ge 1 + \frac{(n_i-1)-4}{5}=\frac{n_i}{5}.
$$
 If in all the iterations of the while-loop, $v$ has degree five, we look at the angles between the consecutive leaves connected to $v$. Recall that in $MST(P)$ all the angles are greater than or equal to $\pi/3$. If in at least one of the iterations, $v$ is connected to two consecutive leaves $u_j$ and $u_{j+1}$ for $1\le j\le 3$, such that $\angle u_jvu_{j+1} = \pi/3$, we change $M_i$ as follow. Remove from $M_i$ the edge incident to $v$ and add to $M_i$ the edges $(u_j,u_{j+1})$ and $(v,u_s)$, where $u_s, s\notin\{j, j+1\}$, is one of the leaves connected to $v$. Clearly $\bigtriangleup vu_ju_{j+1}$ is equilateral and $|u_ju_{j+1}| = |u_jv|=|u_{j+1}v|\le \bt_{\MC}$, and by Lemma \ref{bm:empty-triangle-lemma}, $(u_j, u_{j+1})$ does not cross other edges. In this case, the size of $M_i$ satisfies 
$$
 |M_i| = 2+\frac{(n_i-1)-5}{5}=\frac{n_i+4}{5}\ge\frac{n_i}{5}.
$$

\end{paragraph}

\begin{paragraph}{Step 2:}
In this step we deal with the case that in all the iterations of the while-loop, $v$ has degree five and the angle between any pair of consecutive leaves connected to $v$ is greater than $\pi/3$. Recall that $\MC$ is a perfect matching and both end points of each
edge in $\MC$ belong to the same $T_i$.

\begin{figure}[ht]
  \centering
    \includegraphics[width=0.5\textwidth]{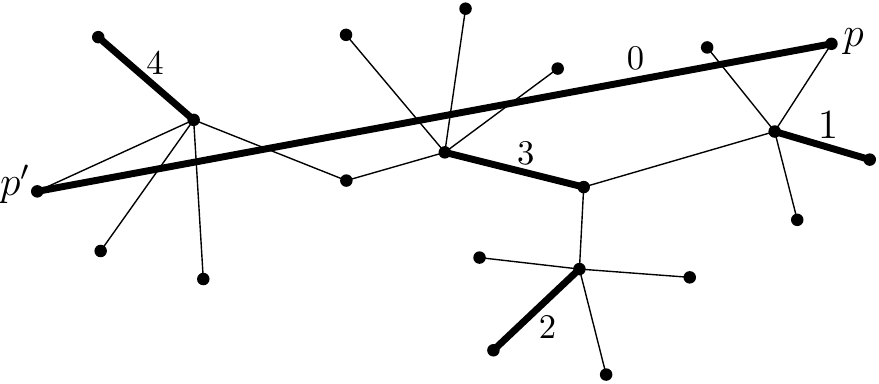}
  \caption{Resulted matching $M_i$ by modified {\scshape PlaneMatching}. The numbers show the order in which the edges (bold edges) are added to $M_i$.}
\label{bm:two-leaves-fig}
\end{figure}

\begin{lemma}
\label{bm:two-leaves-lemma}
In Step 2, at least two leaves of $T_i$ are matched in $\MC$.
\end{lemma}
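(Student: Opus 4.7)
My approach is a two-step degree-counting argument. First, I would show the structural inequality $L \ge I + 2$, where $L$ and $I$ denote the numbers of leaves and internal nodes of $T_i$. Then a short counting argument on $\MC$ restricted to $P_i$ (this restriction is a perfect matching because both endpoints of every $\MC$-edge lie in the same component) will yield the claim.

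For the structural inequality, partition the $I$ internal nodes of $T_i$ into the $p$ vertices that are actually picked by the while-loop of \textsc{PlaneMatching} and the $b := I - p$ ``bystanders'' that are removed from $T'$ in line~11 as a Neighbor without ever being picked. The Step~2 hypothesis gives $\deg_{T_i}(v) = 5$ for every picked $v$; every bystander is internal and hence has $T_i$-degree at least $2$; and each iteration of the loop triggers the ``if'' in line~10 at most once, yielding the key bookkeeping bound $b \le p$. Plugging these three facts into the handshake identity $2(n_i-1) = \sum_v \deg_{T_i}(v)$ and using $n_i = L + p + b$ simplifies to $L \ge 3p + 2$, which combined with $b \le p$ gives
\[
L - I \;=\; L - p - b \;\ge\; (3p + 2) - p - b \;=\; 2p - b + 2 \;\ge\; p + 2 \;\ge\; 2.
\]

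To conclude, classify each edge of $\MC$ with both endpoints in $P_i$ as leaf-leaf, leaf-internal, or internal-internal with respect to the leaf/internal partition of $T_i$, and denote the counts by $a$, $b'$, $c$ respectively. Counting endpoints gives $2a + b' = L$ and $b' + 2c = I$; in particular $b' \le I$, so $2a = L - b' \ge L - I \ge 2$ and hence $a \ge 1$. Such an edge provides two leaves of $T_i$ matched to one another in $\MC$, which is precisely the statement of the lemma.

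The main obstacle is the bookkeeping in the first step: the Step~2 hypothesis only controls the degrees of the \emph{picked} vertices, not of every internal node of $T_i$. Recovering $L \ge I + 2$ from this weaker information requires combining the trivial ``internals have degree $\ge 2$'' bound on the bystanders with the structural inequality $b \le p$ coming from ``at most one Neighbor removal per iteration''; once those two ingredients are pinned down the rest is elementary arithmetic.
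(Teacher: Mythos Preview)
Your proof is correct and reaches the same conclusion as the paper, but by a genuinely different route. The paper argues via a \emph{reverse process}: since every vertex picked in the while-loop has degree~$5$ in the current tree, $T_i$ can be rebuilt by starting from a single $K_{1,5}$ and repeatedly gluing on further copies of $K_{1,5}$, identifying one leaf of the new star with a vertex of the tree built so far. Tracking how the leaf and internal counts change through this construction yields the stronger inequality $L \ge I + 4$, after which a one-line pigeonhole shows that at least four leaves must be matched among themselves in $\MC$. Your approach bypasses the reverse construction entirely and feeds the degree information directly into the handshake identity, paying for this with the auxiliary bookkeeping bound $b \le p$ (one Neighbor removal per iteration). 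Your resulting inequality $L \ge I + 2$ is weaker than the paper's but exactly sufficient to force one leaf--leaf edge, which is all Step~2 actually uses. The two arguments are of comparable length; yours is more mechanical and avoids having to justify that the reverse gluing really reconstructs $T_i$, while the paper's is more pictorial and does not need the picked/bystander accounting.
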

\begin{proof}
 Let $m_i$ and $m_i'$ denote the number of external (leaves) and internal nodes of $T_i$, respectively. Clearly $m_i'$ is equal to the number of vertices of $T_i'$ and $m_i+m_i'=n_i$. Consider the reverse process in {\scshape PlaneMatching}. Start with a 5-star tree $t_i$, i.e. $t_i=K_{1,5}$, and in each iteration append a new $K_{1,5}$ to $t_i$ until $t_i=T_i$. In the first step $m_i=5$ and $m_i'=1$. In each iteration a leaf of the appended $K_{1,5}$ is identified with a leaf of $t_i$; the resulting vertex becomes an internal node. On the other hand, the \textquotedblleft center\textquotedblright of the new $K_{1,5}$ becomes an internal node of $t_i$ and its other four neighbors become leaves of $t_i$. So in each iteration, the number of leaves $m_i$ increases by three, and the number of internal nodes $m_i'$ increases by two. Hence, in all iterations (including the first step) we have $m_i\ge m'_i+4$.

Again consider $\MC$. In the worst case if all $m_i'$ internal vertices of $T_i$ are matched to leaves, we still have four leaves which have to be matched together.   
\end{proof}

According to Lemma \ref{bm:two-leaves-lemma} there is an edge $(p,p')\in \MC$ where $p$ and $p'$ are leaves in $T_i$. We can find $p$ and $p'$ for all $T_i$'s by checking all the edges of $\MC$ once. We remove all the edges of $M_i$ and initialize $M=\{(p,p')\}$. Again we run a modified version of {\scshape PlaneMatching} in such a way that in each iteration, in line 7 it selects the leaf $u_i$ adjacent to $v$ such that $(v, u_i)$ is not intersected by $(p,p')$. In each iteration $v$ has degree five and is connected to at least four leaf edges with angles greater than $\pi/3$. Thus, $(p,p')$ can intersect at most three of the leaf edges and such kind of $(v, u_i)$ exists. See Figure \ref{bm:two-leaves-fig}. In this case, $M_i$ has size
$$
 |M_i| = 1+\frac{n_i-1}{5}=\frac{n_i+4}{5}\ge\frac{n_i}{5}.
$$

\end{paragraph}

We run the above algorithm on each $T_i$ and for each of them we compute a plane matching $M_i$. The final matching of point set $P$ will be $M=\bigcup_{i=1}^{k} M_i$.

\begin{theorem}
\label{bm:two-over-five-theorem}
Let $P$ be a set of $n$ points in the plane, where $n$ is even, and let $\btopt$ be the minimum bottleneck length of any plane perfect matching of $P$. In $O(n^{1.5}\sqrt{\log n})$ time, a plane matching of $P$ of size at least $\frac{n}{5}$ can be computed, whose bottleneck length is at most $\btopt$.
\end{theorem}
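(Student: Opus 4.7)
The plan is to verify that the two-step procedure described above, when applied componentwise to the forest of minimum spanning trees $T_1,\dots,T_k$ of $DG(\bt_{\MC},P)$, yields a plane matching of the claimed size and bottleneck, and that the whole pipeline fits into the stated running time. First I would argue bottleneck admissibility: since $\MC$ is a (possibly crossing) bottleneck perfect matching of $K(P)$, every plane perfect matching, including $\MOPT$, must satisfy $\btopt \ge \bt_{\MC}$. Each edge output by the algorithm lies in some $T_i\subseteq DG(\bt_{\MC},P)$, or is an edge $(u_j,u_{j+1})$ inserted in Step~1 whose length equals $|vu_j|=|vu_{j+1}|\le\bt_{\MC}$ because $\bigtriangleup vu_ju_{j+1}$ is equilateral. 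Hence every reported edge has length at most $\bt_{\MC}\le\btopt$.

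Next I would verify planarity. By Corollary~\ref{bm:non-crossing-edges}, edges lying in different components $T_i,T_j$ cannot cross, so it suffices to show $M_i$ is plane within $T_i$. Inside Step~1, the matching output by {\scshape PlaneMatching} is plane by Lemma~\ref{bm:n-minus-one-lemma}. The only nontrivial insertion is the edge $(u_j,u_{j+1})$ with angle $\pi/3$ at $v$, whose interior lies in $\bigtriangleup vu_ju_{j+1}$; by Lemma~\ref{bm:empty-triangle-lemma} this triangle contains no other point of $P_i$, so $(u_j,u_{j+1})$ crosses no edge of $T_i$ and in particular no edge of $M_i$. In Step~2, the seed edge $(p,p')$ guaranteed by Lemma~\ref{bm:two-leaves-lemma} is chosen first, and thereafter the modified algorithm at each leaf $v'\in T_i'$ explicitly avoids the (at most three) leaf edges at $v$ that $(p,p')$ could cross; since the five leaf edges at $v$ are separated by angles greater than $\pi/3$, a safe choice exists.

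The size bound $|M_i|\ge n_i/5$ is then handled case by case exactly as sketched before the theorem: if any iteration encounters a vertex $v$ of degree $\le 4$ in $T_i$, one gets $|M_i|\ge 1+\tfrac{n_i-5}{5}=\tfrac{n_i}{5}$; if some pair of consecutive leaves at $v$ meets at angle exactly $\pi/3$, the equilateral-triangle swap gives $|M_i|\ge 2+\tfrac{n_i-6}{5}=\tfrac{n_i+4}{5}$; otherwise Lemma~\ref{bm:two-leaves-lemma} applies and Step~2 gives $|M_i|\ge 1+\tfrac{n_i-1}{5}=\tfrac{n_i+4}{5}$. Summing and using $\sum_i n_i=n$ yields
\[
|M|=\sum_{i=1}^{k}|M_i|\ge \sum_{i=1}^{k}\frac{n_i}{5}=\frac{n}{5}.
\]

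Finally, for the running time, Chang et al.'s bottleneck matching algorithm~\cite{Chang1992} computes $\MC$ in $O(n^{1.5}\sqrt{\log n})$ time; the graph $DG(\bt_{\MC},P)$ and its components can be extracted from the Euclidean minimum spanning tree of $P$ (Corollary~\ref{bm:MST-DEL}) in $O(n\log n)$ time, and the degree-$5$ MSTs come from~\cite{Monma1992} in the same bound. Each execution of the modified {\scshape PlaneMatching} on $T_i$ runs in $O(n_i\log n_i)$ time (Lemma~\ref{bm:n-minus-one-lemma}), so the aggregate matching-extraction cost is $O(n\log n)$. The dominant term is therefore $O(n^{1.5}\sqrt{\log n})$, matching the statement. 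The main obstacle in turning this plan into a full proof is the careful case analysis guaranteeing planarity of the inserted non-tree edges in both Step~1 and Step~2, which hinges on the emptiness of small triangles and on the angular separation of leaves at a degree-$5$ tree vertex.
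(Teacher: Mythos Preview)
Your proposal is correct and follows essentially the same approach as the paper: componentwise application of Steps~1 and~2 to the MSTs of $DG(\bt_{\MC},P)$, with the same case split for the size bound and the same planarity arguments via Lemma~\ref{bm:empty-triangle-lemma} and Corollary~\ref{bm:non-crossing-edges}. One small omission: in your bottleneck argument you account for tree edges and the equilateral-triangle edge $(u_j,u_{j+1})$, but not for the seed edge $(p,p')$ added in Step~2; since $(p,p')\in\MC$ its length is at most $\bt_{\MC}\le\btopt$, so the claim still holds, and the paper makes exactly this observation.
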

\begin{proof}

{\em Proof of edge length}: Let $\bt_{\MC}$ be the length of the longest edge in $\MC$ and consider a component $G_i$ of $DG(\bt_{\MC}, P)$. All the selected edges in Steps 1 and 2 belong to $T_i$ except $(u_j,u_{j+1})$ and $(p,p')$. $T_i$ is a subgraph of $G_i$, and the edge $(u_j,u_{j+1})$ belongs to $G_i$, and the edge $(p,p')$ belongs to $\MC$ (which belongs to $G_i$ as well). So all the selected edges belong to $G_i$, and $\bt_{M_i}\le \bt_{\MC}$. Since $\bt_{\MC}\le \btopt$, we have $\bt_{M_i}\le \bt_{\MC}\le \btopt$ for all $i$, $1\le i \le k$.

 {\em Proof of planarity}: The edges of $M_i$ belong to the minimum spanning forest of $DG(\bt_{\MC}, P)$ which is plane, except $(u_j,u_{j+1})$ and $(p,p')$. According to Corollary \ref{bm:non-crossing-edges} and Lemma \ref{bm:empty-triangle-lemma} the edge $(u_j,u_{j+1})$ does not cross the edges of the minimum spanning forest. In Step 2 we select edges of $T_i$ in such a way that avoid $(p,p')$. Note that $(p,p')$ belongs to the component $G_i$ and by Corollary \ref{bm:non-crossing-edges} it does not cross any edge of the other components of $DG(\bt_{\MC}, P)$. So $M$ is plane.

 {\em Proof of matching size}: Since $M=M_1\cup \dots \cup M_k$, and for each $1\le i\le k$, $|M_i|\ge \frac{n_i}{5}$, hence
$$
 |M|\ge \sum_{i=1}^{k} |M_i| \ge \sum_{i=1}^{k} \frac{n_i}{5} = \frac{n}{5}.
$$

 {\em Proof of complexity}: The initial matching $\MC$ can be computed in time $O(n^{1.5}\sqrt{\log n})$ by using the algorithm of Chang et al. \cite{Chang1992}. By Lemma \ref{bm:n-minus-one-lemma} algorithm {\scshape PlaneMatching} runs in $O(n \log n)$ time. In Step 1 we spend constant time for checking the angles and the number of leaves connected to $v$ during the while-loop. In Step 2, the matched leaves $p$ and $p'$ can be computed in $O(n)$ time for all $T_i$'s by checking all the edges of $\MC$ before running the algorithm again. So the modified {\scshape PlaneMatching} still runs in $O(n \log n)$ time, and the total running time of our method is $O(n^{1.5}\sqrt{\log n})$.
\end{proof}

Since the running time of the algorithm is bounded by the time of computing the initial bottleneck matching $\MC$, any improvement in computing $\MC$ leads to a faster algorithm for computing a plane matching  $M$. In the next section we improve the running time.

\subsubsection{Improving the Running Time}
In this section we present an algorithm that improves the running time to $O(n\log^2 n)$. We first compute a forest $F$, such that each edge in $F$ is of length at most $\btopt$ and, for each tree $T \in F$, we have a leaf $p \in T$ and a point $p' \in T$ such that $|pp'| \le \btopt$. Once we have this forest $F$, we apply Step 1 and Step 2 on $F$ to obtain the matching $M$ as in the previous section. 

Let $MST(P)$ be a (five-degree) minimum spanning tree of $P$. 
Let $F_{\lambda}=\{ T_1, T_2, \dots, T_k\}$ be the forest obtained from $MST(P)$ by removing all the edges whose length is greater than $\lambda$, i.e., $F_{\lambda} = \{e \in MST(P): |e| \le \lambda\}$. For a point $p \in P$, let $cl(p, P)$ be the point in $P$ that is closest to $p$.

\begin{lemma}
\label{bm:opt-forest}
For all $T \in F_{\btopt}$, it holds that
\begin{enumerate}
	\item[(i)]  the number of points in $T$ is even, and
	\item[(ii)]  for each two leaves $p,q \in T$ that are incident to the same node $v$ in $T$, let $p' = cl(p, P\setminus \{v\})$, let $q' = cl(q, P\setminus \{v\})$, and assume that $|pp'| \le |qq'|$. Then, $\btopt \geq |pp'|$ and $p'$ belongs to $T$.
\end{enumerate}
\end{lemma}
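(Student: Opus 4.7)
The plan is to exploit the standard correspondence between the components of the length-thresholded spanning forest and the components of the corresponding disk graph, together with the fact that $\MOPT$ is a perfect matching whose edges all have length at most $\btopt$.

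First I would observe that the connected components of $F_{\btopt}$ coincide with the connected components of the disk graph $DG(\btopt, P)$. This is the standard MST/MSF fact already used implicitly in the paper (see Lemma~\ref{bm:MST-UDG} and Corollary~\ref{bm:MST-DEL}): running Kruskal on $DG(\btopt, P)$ gives a spanning forest whose components coincide with those of $DG(\btopt, P)$, and this forest is obtained from $MST(P)$ by discarding edges longer than $\btopt$, which is precisely $F_{\btopt}$. I will invoke this as the single bridge between the two viewpoints throughout the proof.

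For part (i), the argument is short: every edge $(a,b)\in\MOPT$ has $|ab|\le \btopt$, so $a$ and $b$ lie in a common component of $DG(\btopt,P)$, hence in a common tree of $F_{\btopt}$. Therefore the edges of $\MOPT$ whose endpoints lie in a fixed $T\in F_{\btopt}$ actually form a perfect matching of $T$'s vertex set, forcing $|V(T)|$ to be even.

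For part (ii), I would argue as follows. In $\MOPT$, each of the two leaves $p,q$ (both adjacent to $v$ in $T$) is matched to exactly one point of $P$. Since $v$ is matched to a single partner in $\MOPT$, at most one of $p,q$ is matched to $v$; hence at least one of them, say $\ell\in\{p,q\}$, is matched to some point $r\in P\setminus\{v\}$, so that $|\ell\, cl(\ell,P\setminus\{v\})|\le |\ell r|\le \btopt$. Combined with the hypothesis $|pp'|\le |qq'|$, this immediately yields $|pp'|\le \btopt$ in both sub-cases (whether $\ell=p$ or $\ell=q$). Finally, since $|pp'|\le \btopt$, the edge $(p,p')$ belongs to $DG(\btopt,P)$, and because $p\in T$, the point $p'$ must lie in the same component of $DG(\btopt,P)$ as $p$, which by the bridge above is exactly $T$.

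The main obstacle is the case distinction in (ii): one must be careful to notice that $v$ can absorb only one of the two leaves in $\MOPT$, forcing the other to have a short matching edge in $P\setminus\{v\}$, and then use the ordering $|pp'|\le |qq'|$ to transfer the bound from whichever leaf is forced to the leaf $p$. Everything else is a routine application of the MST/MSF/disk-graph correspondence.
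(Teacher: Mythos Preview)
Your proposal is correct and follows essentially the same approach as the paper: both arguments use that the trees of $F_{\btopt}$ are exactly the components of $DG(\btopt,P)$, that $\MOPT$ matches within components (giving (i)), and that since $v$ can be the $\MOPT$-partner of at most one of $p,q$, the other leaf is matched in $P\setminus\{v\}$, which combined with $|pp'|\le |qq'|$ yields $|pp'|\le \btopt$ and hence $p'\in T$. Your final step for $p'\in T$ (directly using $|pp'|\le\btopt$ and the disk-graph correspondence) is slightly cleaner than the paper's phrasing, but the substance is identical.
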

\begin{proof}
(i) Suppose that $T$ has odd number of points. Thus in $M^*$ one of the points in $T$ should be matched to a point in a tree $T' \neq T$ by an edge $e$. Since $e \notin F_{\btopt}$, we have $|e|>\btopt$, which contradicts that $\btopt$ is the minimum bottleneck. (ii) Note that $v$ is the closest point to both $p$ and $q$. In $M^*$, at most one of $p$ and $q$ can be matched to $v$, and the other one must be matched to a point which is at least as far as its second closest point. Thus, $\btopt$ is at least $|pp'|$. The distance between any two trees in $F_{\btopt}$ is greater than $\btopt$. Now if $p'$ is not in $T$, then in any bottleneck perfect matching, either $p$ or $q$ is matched to a point of distance greater than $\btopt$, which contradicts that $\btopt$ is the minimum bottleneck.
\end{proof}

Let $E=(e_1, e_2, \dots ,e_{n-1})$ be the edges of $MST(P)$ in sorted order of their lengths. Our algorithm performs a binary search on $E$, and for each considered edge $e_i$, it uses Algorithm~\ref{bm:proc2} to decide whether $\lambda < \btopt$, where $\lambda=|e_i|$. 
The algorithm constructs the forest $F_{\lambda}$, and for each tree $T$ in $F_{\lambda}$, it picks two leaves $p$ and $q$ from $T$ and finds their second closest points $p'$ and $q'$. Assume w.l.o.g. that $|pp'| \le |qq'|$. Then, the algorithm returns FALSE if $p'$ does not belong to $T$. By Lemma~\ref{bm:opt-forest}, if the algorithm returns FALSE, then we know that $\lambda <\btopt$. 

Let $e_j$ be the shortest edge in $MST(P)$, for which Algorithm~\ref{bm:proc2} does not return FALSE. This means that Algorithm~\ref{bm:proc2} returns FALSE for $e_{j-1}$, and there is a tree $T$ in $F_{|e_{j-1}|}$ and a leaf $p$ in $T$, such that $|pp'|\ge |e_j|$. Thus $|e_j|\leq\btopt$ and for each tree $T$ in the forest $F_{|e_j|}$, we stored a leaf $p$ of $T$ and a point $p' \in T$, such that $|pp'| \le \btopt$. Since each tree in $F_{|e_j|}$ is a subtree of $MST(P)$, $F_{|e_j|}$ is planar and each tree in $F_{|e_j|}$ is of degree at most five. 

\floatname{algorithm}{Algorithm}

\begin{algorithm}                      
\caption{{\scshape CompareToOpt}$(\lambda)$}          
\label{bm:proc2} 
\begin{algorithmic}[1]

  \State compute $F_{\lambda}$ 
  \State {$L\gets$ empty list}
  \For {each $T \in F_{\lambda}$}
	\If {$T$ has an odd number of points} 
		\State return {FALSE} \EndIf
	\If {there exist two leaves $p$ and $q$ incident to a node $v\in T$}
  \State $p' \gets cl(p, P\setminus \{v\} )$
	\State $q' \gets cl(q, P\setminus \{v\} )$
	
  \If {$|pp'| \le |qq'|$}
			\If {$p'$ does not belong to $T$} 
				\State \Return {FALSE}
			\Else
				\State {add the triple $(p,p',T)$ to $L$}  \EndIf
  \Else 
			\If {$q'$ does not belong to $T$} 
				\State \Return {FALSE}
			\Else
				\State {add the triple $(q,q',T)$ to $L$}  \EndIf
  \EndIf \EndIf
	\EndFor
\State \Return $L$
	
\end{algorithmic}
\end{algorithm}

Now we can apply Step 1 and Step 2 on $F_{|e_j|}$ as in the previous section. Note that in Step 2, for each tree $T_i$ we have a pair $(p,p')$ (or $(q,q')$) in the list $L$ which can be matched. In Step 2, in each iteration $v$ has degree five, thus, $p'$ should be a vertex
of degree two or a leaf in $T_i$. If $p'$ is a leaf we run the modified version of {\scshape PlaneMatching} as in the previous section. If $p'$ has degree two, we remove all the edges of $M_i$ and initialize $M_i = {(p, p')}$. Then remove $p'$ from $T_i$ and run {\scshape PlaneMatching} on the resulted subtrees. Finally, set $M=\bigcup_{i=1}^{k}M_i$.

\begin{lemma}
\label{bm:planarity}
 The matching $M$ is planar.
\end{lemma}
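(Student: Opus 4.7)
My plan is to classify every edge of $M$ by its origin and rule out crossings pair by pair. Each $M_i$ lies inside a single tree $T_i$ of the forest $F_{|e_j|}\subseteq MST(P)$, and inside a fixed $T_i$ an edge is either (a) a genuine MST edge of $T_i$, (b) an empty-triangle chord $(u_j,u_{j+1})$ produced by Step~1, where $u_j,u_{j+1}$ are leaves of $T_i$ adjacent to a common vertex $v$ spanning an angle of $\pi/3$, or (c) the second-closest edge $(p,p')$ produced by Step~2, where $p$ is a leaf of $T_i$, $v$ is its MST neighbour, and $p'=cl(p,P\setminus\{v\})$ lies in $T_i$. I must then rule out crossings between any two such edges, both intra-tree and inter-tree.

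The intra-tree cases I expect to dispatch quickly. Type (a)--(a) is immediate, since $T_i$ is a subtree of the planar tree $MST(P)$. For (a)--(b), I will invoke Lemma~\ref{bm:empty-triangle-lemma}: the triangle $v u_j u_{j+1}$ contains no other point of $P$, and consequently no MST edge can cross $(u_j,u_{j+1})$ without placing an endpoint in the forbidden interior. For (a)--(c), I will appeal to the construction of Step~2, which either selects MST edges explicitly avoiding $(p,p')$ or, in the degree-two sub-case, first deletes $p'$ from $T_i$ and recurses on the two resulting subtrees, so that every chosen MST edge sits inside one of the two subtrees and cannot cross the chord. Since within any single tree Steps~1 and~2 are mutually exclusive, the combinations (b)--(b), (b)--(c) and (c)--(c) do not arise intra-tree.

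The main work, and what I expect to be the chief obstacle, is the inter-tree analysis, where my plan is a single uniform argument driven by the convex-quadrilateral inequality. Suppose an edge $e\in M_i$ crosses an edge $e'\in M_{i'}$ with $i\ne i'$; the four endpoints form a convex quadrilateral whose diagonals are $e$ and $e'$, so the triangle inequality at the crossing point yields $|e|+|e'|>|xx'|+|yy'|$ and $|e|+|e'|>|xy'|+|x'y|$ for the two pairings of opposite endpoints $(x,y)\in e$ and $(x',y')\in e'$. The MST cycle property, combined with the fact that $T_i$ and $T_{i'}$ are distinct components of $F_{|e_j|}$, forces every direct segment between a vertex of $T_i$ and a vertex of $T_{i'}$ to have length strictly greater than $|e_j|$, since the MST path between such endpoints must contain an edge deleted when forming $F_{|e_j|}$. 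When both $e$ and $e'$ are of type (a) or (b) each has length at most $|e_j|$, and the quadrilateral inequality collapses to $2|e_j|\ge|e|+|e'|>2|e_j|$, a contradiction. When exactly one edge is of type (c), say $e=(p,p')$, I will exploit $|pp'|\le|px'|$ for every $x'\in T_{i'}$ (because $v\in T_i$ forces $x'\ne v$); this cancels the $|pp'|$ term in the first quadrilateral inequality and reduces it to $|e'|>|p'x'|>|e_j|$, once more contradicting $|e'|\le|e_j|$. When both $e=(p,p')$ and $e'=(q,q')$ are of type (c) from different trees, I plan to feed the two second-closest bounds $|pp'|\le|pq'|$ and $|qq'|\le|qp'|$ into the second quadrilateral inequality to force $|pp'|>|pq'|$, a direct contradiction to the definition of $p'$. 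Organising the six pairings of edge-types so that the correct cancellation is transparent in each case is the delicate bookkeeping I anticipate; the analytic content itself reduces to the two ingredients above.
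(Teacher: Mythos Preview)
Your argument is correct and in fact more complete than the paper's. The paper's proof is a terse four-case split based solely on membership in $F_{|e_j|}$: two forest edges cannot cross because $MST(P)$ is planar; a forest edge and a non-forest edge cannot cross because ``the selection of $(q,q')$ in Step~2 prevents $(p,p')$''; and two non-forest edges $(p,p')$, $(q,q')$ from distinct trees cannot cross because the largest interior angle of the convex quadrilateral forces one of $\min\{|pq|,|pq'|\}<|pp'|$ or $\min\{|qp|,|qp'|\}<|qq'|$, contradicting the second-closest choice of $p'$ or $q'$.

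Your organisation differs in two substantive ways. First, you separate intra-tree from inter-tree crossings. The paper's forest/non-forest case is really an intra-tree argument (the Step~2 avoidance mechanism is local to a single $T_i$), so as written it does not address a forest edge of some $T_{i'}$ crossing the special edge $(p,p')$ of $T_i$; your quadrilateral-diagonal inequality together with the MST-cycle lower bound (every segment joining distinct components of $F_{|e_j|}$ has length $>|e_j|$) handles this cleanly. Second, you explicitly treat the type-(b) chord $(u_j,u_{j+1})$ arising from the $\pi/3$ case of Step~1, which the paper's proof of this lemma simply omits (it is covered only back in Theorem~\ref{bm:two-over-five-theorem} via Lemma~\ref{bm:empty-triangle-lemma} and Corollary~\ref{bm:non-crossing-edges}). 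Your use of the diagonals-versus-opposite-sides inequality in place of the paper's largest-angle trick for the (c)--(c) case is a matter of taste; both yield the same contradiction with the second-closest property.

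One thin spot you share with the paper: in the intra-tree (a)--(c) case when $p'$ has degree two, your claim that the MST edges chosen in the two subtrees ``cannot cross the chord'' is not justified as stated---the subtrees are separated by the \emph{vertex} $p'$, not by the line through $p$ and $p'$, so a subtree edge could in principle cross $(p,p')$. The paper is equally silent here; you might close this by reusing your own second-closest bound $|pa|,|pb|\ge|pp'|$ (valid whenever $a,b\ne v$) together with the Gabriel property of $(a,b)$.
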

\begin{proof}
Consider two edges $e=(p,p')$ and $e'=(q,q')$ in $M$. We distinguish between four cases:
\begin{enumerate}
   \item $e\in F_{|e_j|}$ and $e' \in F_{|e_j|}$. In this case, both $e$ and $e'$ belong to $MST(P)$ and hence they do not cross each other.
  \item $e\notin F_{|e_j|}$ and $e'\in F_{|e_j|}$. If $e$ and $e'$ cross each other, then this contradicts the selection of $(q,q')$ in Step 2 (which prevents $(p,p')$).
  \item $e\in F_{|e_j|}$ and $e'\notin F_{|e_j|}$. It leads to a contradiction as in the previous case.
  \item $e\notin F_{|e_j|}$ and $e'\notin F_{|e_j|}$. If $e$ and $e'$ cross each other, then either $\min\{|pq|,|pq'|\} < |pp'|$ or $\min\{|qp|,|qp'|\} < |qq'|$, which contradicts the selection of $p'$ or $q'$. Note that $p$ cannot be the second closest point to $q$, because $p$ and $q$ are in different trees. 
\end{enumerate} 
\end{proof}

\begin{lemma}
\label{bm:running-time}
 The matching $M$ can be computed in $O(n\log^2 n)$ time.
\end{lemma}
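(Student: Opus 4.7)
The plan is to bound the running time by breaking the algorithm into four stages: preprocessing of $MST(P)$, a binary search on its edges, executions of \CTO, and the final Step~1/Step~2 matching extraction. For the preprocessing, I would compute a degree-five minimum spanning tree $MST(P)$ in $O(n\log n)$ time using the Monma--Suri algorithm, sort its $n-1$ edges by length as $E=(e_1,\dots,e_{n-1})$ in $O(n\log n)$ time, and precompute for every point $p\in P$ its nearest and second-nearest points in $P$. The second item is crucial: because every leaf $p$ of any subtree $T$ of $MST(P)$ has its unique neighbor $v$ as its nearest point in $P\setminus\{p\}$ (otherwise the tree could be improved), the quantity $cl(p,P\setminus\{v\})$ used in \CTO{} is simply the second-nearest neighbor of $p$ in $P$, and these can be extracted in $O(n\log n)$ time from the Delaunay triangulation (or an order-$2$ Voronoi diagram) of $P$.

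Given this preprocessing, I would implement a single call to \CTO$(\lambda)$ in $O(n\,\alpha(n))$ time as follows. Since the edges of $MST(P)$ are already sorted, the prefix of edges forming $F_\lambda$ can be located in $O(\log n)$ time, and the connected components of $F_\lambda$ are obtained by inserting these edges into a union--find structure. For each resulting tree $T$, a traversal identifies leaves, counts parities, and locates a pair of leaves $p,q$ incident to the same node $v$ (if one exists). The second-nearest neighbors $p'$ and $q'$ are then retrieved in $O(1)$ each from the precomputed tables, and the test ``$p'\in T$'' becomes a union--find connectivity query on the component of $p$. Thus \CTO{} returns either FALSE or the list $L$ in near-linear time per call.

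With \CTO{} available in $O(n\log n)$ time per call (which already absorbs the $\alpha(n)$ factor), the binary search over the sorted edge sequence $E$ to find the smallest $e_j$ for which \CTO$(|e_j|)$ does not return FALSE uses $O(\log n)$ calls, contributing $O(n\log^2 n)$ in total. Once $F_{|e_j|}$ and the list $L$ of triples $(p,p',T)$ are in hand, each tree $T_i$ has its own stored pair $(p,p')$ (or $(q,q')$). Applying Step~1 (the modified \textsc{PlaneMatching} from Section~\ref{bm:five-over-two-approximation}) and Step~2 (handling the case where $p'$ is either a leaf or a degree-two vertex of $T_i$, then re-running \textsc{PlaneMatching} on the pieces) runs in $O(n\log n)$ time overall by Lemma~\ref{bm:n-minus-one-lemma}, since each subtree is processed in time linear in its size after the initial MST construction.

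Summing the four stages yields the claimed bound of $O(n\log^2 n)$. The main obstacle in this plan is justifying that a single invocation of \CTO{} can be carried out in near-linear time; the key insight enabling this is that the leaf property of $MST(P)$ reduces the query $cl(p,P\setminus\{v\})$ to a fixed second-nearest-neighbor lookup that is completely independent of the current value of $\lambda$, so all the $\lambda$-dependent work reduces to rebuilding $F_\lambda$ and running connectivity queries on it.
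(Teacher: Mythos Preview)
Your approach is correct and essentially the same as the paper's. Both precompute a second-nearest-neighbor structure once (the paper uses the order-$2$ Voronoi diagram with $O(\log n)$ point location per query; you precompute all second-nearest neighbors for $O(1)$ lookup), then binary-search over the sorted MST edges with a linear-time pass per candidate $\lambda$, yielding $O(n\log^2 n)$.

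One point worth noting: you make explicit the justification that the paper leaves implicit, namely that whenever $p$ is a leaf of some $T\in F_\lambda$ with unique neighbor $v$, the vertex $v$ must be $p$'s nearest neighbor in all of $P$ (since the nearest-neighbor edge of $p$ lies in $MST(P)$ and is no longer than $(p,v)$, hence survives into $T$). This is exactly what makes the order-$2$ Voronoi diagram sufficient and what makes your precomputed second-nearest-neighbor table valid independently of $\lambda$. Your union--find handling of the membership test ``$p'\in T$'' is also a clean implementation detail the paper does not spell out.
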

\begin{proof}
Computing $MST(P)$ and sorting its edges take $O(n\log{n})$~\cite{deBerg08}. Since we performed a binary search on the edges of $MST(P)$, we need $\log{n}$ iterations. In each iteration, for an edge $e_i$, we compute the forest $F_{|e_i|}$ in $O(n)$ and the number of the trees in the forest can be $O(n)$ in the worst case. We compute in advance the second order Voronoi diagram of the points together with a corresponding point location data structure, in $O(n\log{n})$~\cite{deBerg08}. For each tree in the forest, we perform a point location query to find the closest points $p'$ and $q'$, which takes $O(\log{n})$ for each query. Therefore the total running time is $O(n\log^2{n})$.
\end{proof}

\begin{theorem}
\label{bm:two-over-five-theorem-2}
Let $P$ be a set of $n$ points in the plane, where $n$ is even, and let $\btopt$ be the minimum bottleneck length of any plane perfect matching of $P$. In $O(n\log^2 n)$ time, a plane matching of $P$ of size at least $\frac{n}{5}$ can be computed, whose bottleneck length is at most $\btopt$.
\end{theorem}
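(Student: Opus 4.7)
The plan is to combine the forest-construction machinery developed just before the theorem with the Step~1/Step~2 matching extraction from Section~\ref{bm:bottleneck-five-over-two}, and to verify each of the four claims (edge-length bound, planarity, matching size, and running time) separately. First I would run the binary search on the sorted edge sequence $E=(e_1,\dots,e_{n-1})$ of $MST(P)$, using \textsc{CompareToOpt} (Algorithm~\ref{bm:proc2}) as the decision oracle, to locate the shortest edge $e_j$ for which the call does not return \textsf{FALSE}. By Lemma~\ref{bm:opt-forest} every edge of $F_{|e_j|}$ has length at most $\btopt$, every tree of $F_{|e_j|}$ has an even number of vertices, and the list $L$ produced by \textsc{CompareToOpt} gives, for each tree $T\in F_{|e_j|}$, a leaf $p$ and another vertex $p'\in T$ with $|pp'|\le\btopt$.

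Next I would apply Step~1 of Section~\ref{bm:bottleneck-five-over-two} independently to every tree $T_i$ of $F_{|e_j|}$; since each $T_i$ is a subtree of $MST(P)$ it has degree at most five, so the argument there yields a plane matching $M_i$ of size at least $n_i/5$ whenever Step~1 succeeds (either because some $v$ in the while loop has degree $\le 4$, or because two consecutive leaves around $v$ subtend an angle exactly $\pi/3$, giving an empty-triangle edge of length $\le |e_j|\le\btopt$ by Lemma~\ref{bm:empty-triangle-lemma}). When Step~1 fails I fall back to Step~2, matching the pair $(p,p')$ supplied by $L$: if $p'$ is a leaf, the modified \textsc{PlaneMatching} of Section~\ref{bm:bottleneck-five-over-two} applies verbatim, and if $p'$ has degree two, I seed $M_i$ with $(p,p')$, delete $p'$ from $T_i$, and run \textsc{PlaneMatching} on the resulting subtrees. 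In both sub-cases the same counting as in Section~\ref{bm:bottleneck-five-over-two} gives $|M_i|\ge n_i/5$, and $|pp'|\le\btopt$ by construction. Summing over trees yields $|M|=\sum_i|M_i|\ge n/5$, and every edge used is either in $F_{|e_j|}$, an empty-triangle edge inside a face of $MST(P)$, or a pair from $L$—all of length at most $\btopt$.

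For planarity I would invoke Lemma~\ref{bm:planarity}, which was already proved for exactly this construction: edges from $F_{|e_j|}$ do not cross each other because they lie in $MST(P)$, the Step~2 seed edges $(p,p')$ do not cross anything by the case analysis on the second-closest-point choice, and the modified \textsc{PlaneMatching} is designed to pick leaf edges that avoid the seed. For the running time I would appeal to Lemma~\ref{bm:running-time}: $MST(P)$ and the second-order Voronoi diagram of $P$ with a point-location structure are precomputed in $O(n\log n)$; the binary search performs $O(\log n)$ invocations of \textsc{CompareToOpt}, each of which builds $F_{\lambda}$ in $O(n)$ and issues $O(n)$ point-location queries in $O(\log n)$ each, for $O(n\log n)$ per invocation; the Step~1/Step~2 post-processing is $O(n\log n)$. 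The total is $O(n\log^2 n)$.

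The delicate part is the fallback when Step~1 fails and $p'$ has degree two inside $T_i$: one must argue that deleting $p'$ and running \textsc{PlaneMatching} on the two resulting subtrees, together with the seed edge $(p,p')$, still yields at least $n_i/5$ matched pairs and still produces a plane matching that does not cross $(p,p')$ nor any other Step~2 seed from a different tree. I expect this bookkeeping—checking both subtree parities after removing $p'$ and re-running the leaf-angle / degree-five case analysis of Step~1 on each piece with the extra constraint of avoiding the seed—to be the main obstacle, but it follows the same template as Step~2 in Section~\ref{bm:bottleneck-five-over-two} and should go through without new geometric input.
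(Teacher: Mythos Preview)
Your proposal is correct and follows essentially the same approach as the paper. In the paper the theorem is stated without a separate proof; it summarises the preceding subsection, where the binary search on $MST(P)$ edges via \textsc{CompareToOpt} produces $F_{|e_j|}$ with the required leaf pairs, Step~1/Step~2 are applied to each tree (including the degree-two $p'$ case you single out), and Lemmas~\ref{bm:planarity} and~\ref{bm:running-time} supply planarity and the $O(n\log^2 n)$ bound. One tiny wording fix: the list $L$ need not contain an entry for \emph{every} tree of $F_{|e_j|}$, only for those trees that have two leaves incident to a common node---but those are precisely the trees on which Step~1 can fail, so your argument is unaffected.
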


\subsection{Second Approximation Algorithm}
\label{bm:bottleneck-five-over-four}

In this section we present another approximation algorithm which gives a plane matching $M$ of size $|M|\ge\frac{2}{5}n$ with bottleneck length $\bt_M\le(\sqrt{2}+\sqrt{3})\btopt$. Let $DT(P)$ denote the Delaunay triangulation of $P$. Let the edges of $DT(P)$ be, in sorted order of their lengths, $e_1, e_2, \dots$. Initialize a forest $F$ consisting of $n$ tress, each one being a single node for one point of $P$. Run Kruskal's algorithm on the edges of $DT(P)$ and terminate as soon as every tree in $F$ has an even number of nodes. Let $e_l$ be the last edge that is added by Kruskal's algorithm. Observe that $e_l$ is the longest edge in $F$. Denote the trees in $F$ by $T_1 ,\dots, T_k$ and for $1\le i \le k$, let $P_i$ be the vertex set of $T_i$ and let $n_i=|P_i|$. 

\begin{lemma}
\label{bm:longest-edge}
 $\btopt \ge |e_l|.$
\end{lemma}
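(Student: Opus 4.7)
My plan is to exhibit an odd-cardinality component of $F$ just before the addition of $e_l$, argue that any plane perfect matching must pair a vertex of this component with a vertex outside it, and finally show that such a crossing matched pair is at Euclidean distance at least $|e_l|$. This forces $\btopt\ge|e_l|$.

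First, let $F'$ denote the forest maintained by the algorithm immediately before $e_l$ is added. By the termination condition, $F'$ must contain at least one tree $C$ with an odd number of vertices; otherwise the algorithm would have stopped before processing $e_l$. Consider now the optimal plane perfect matching $\MOPT$. Because $|V(C)|$ is odd, at least one edge $(u,v)\in\MOPT$ must have $u\in V(C)$ and $v\in P\setminus V(C)$.

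The crux of the proof is the claim that $|uv|\ge |e_l|$. To establish it, I would show that the connected components of $F'$ coincide with those of the ``short-edge graph'' $H_{<|e_l|}$ on $P$, whose edges are all pairs of points at Euclidean distance strictly less than $|e_l|$. The forward inclusion is immediate, since every edge of $F'$ has length less than $|e_l|$. For the reverse inclusion I plan to use two facts: (i) since $MST(P)\subseteq DT(P)$, running Kruskal on $DT(P)$ still produces a Euclidean MST, so $F'$ is a subforest of some MST of $P$; and (ii) the MST is a minimum-bottleneck spanning tree, so for any two points $p,q\in P$ the MST path between them has bottleneck equal to $\min_\pi\max_{e\in\pi}|e|$ over all paths $\pi$ from $p$ to $q$ in the complete graph. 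Consequently, if $p$ and $q$ are joined in $H_{<|e_l|}$ by a short-hop path, then the MST path between them is also composed of edges of length strictly less than $|e_l|$ and therefore lies entirely inside $F'$.

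Applying the claim to $(u,v)$: since $u$ and $v$ lie in different components of $F'$, they must also lie in different components of $H_{<|e_l|}$; in particular, $(u,v)$ itself is not an edge of $H_{<|e_l|}$, which gives $|uv|\ge |e_l|$. Combining with $\btopt\ge|uv|$ yields $\btopt\ge|e_l|$. The main obstacle is the equivalence between the components of $F'$ and those of $H_{<|e_l|}$, but it reduces to the well-known bottleneck property of Euclidean MSTs together with the inclusion $MST(P)\subseteq DT(P)$, so the argument should be short.
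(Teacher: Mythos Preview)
Your proposal is correct and follows the same skeleton as the paper: identify an odd component just before $e_l$ is inserted, note that any perfect matching must have an edge leaving it, and show that such an edge has length at least $|e_l|$. The only difference is in the last step: the paper removes $e_l$ from the final forest, observes that one side $P'_i$ of the resulting cut is odd, and invokes the MST cut property in one line to get $|e_l|=\min\{|pq|:p\in P'_i,\ q\in P\setminus P'_i\}$, whereas your route through the bottleneck property and the equivalence of the components of $F'$ with those of $H_{<|e_l|}$ reaches the same conclusion but is longer than needed.
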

\begin{proof}
Let $i$ be such that $e_l$ is an edge in $T_i$. Let $T'_i$ and $T''_i$ be the trees obtained by removing $e_l$ from $T_i$. Let $P'_i$ be the vertex set of $T'_i$. Then $|e_l|=\min \{|pq|: p \in P'_i, q\in P\setminus P'_i\}$. Consider the optimal matching $M^*$ with bottleneck length $\btopt$. Since $e_l$ is the last edge added, $P'_i$ has odd size. The matching $M^*$ contains an edge joining a point in $P'_i$ with a point in $P\setminus P'_i$. This edge has length at least $|e_l|$.
\end{proof}

By Lemma \ref{bm:longest-edge} the length of the longest edge in $F$ is at most $\btopt$. For each $T_i\in F$, where $1\le i\le k$, our algorithm will compute a plane matching $M_i$ of $P_i$ of size at least $\frac{2n_i}{5}$ with edges of length at most $(\sqrt{2}+\sqrt{3})\btopt$ and returns $\bigcup_{i=1}^{k} M_i$. To describe the algorithm for tree $T_i$ on vertex set $P_i$, we will write $P$, $T$, $n$, $M$ instead of $P_i$, $T_i$, $n_i$, $M_i$, respectively. Thus, $P$ is a set of $n$ points, where $n$ is even, and $T$ is a minimum spanning tree of $P_i$.

Consider the minimum spanning tree $T$ of $P$ having degree at most five, and let ${T'}$ be the skeleton tree of $T$. Suppose that $T'$ has at least two vertices. We will use the following notation. Let $v'$ be a leaf in $T'$, and let $w'$ be the neighbor of $v'$. Recall that $v'$ and $w'$ are copies of vertices $v$ and $w$ in $T$. In $T$, we consider the clockwise ordering of the neighbors of $v$. Let this ordering be $w, u_1, u_2, \dots, u_k$ for some $1\le k\le 4$. Clearly $u_1,\dots,u_k$ are leaves in $T$. Consider two leaves $u_i$ and $u_j$ where $i<j$. We define $\cw(u_ivu_j)$ as the clockwise angle from $\overline{vu_i}$ to $\overline{vu_j}$. We say that the leaf $v'$ (or its copy $v$) is an {\em anchor} if $k=2$ and $\cw(u_1vu_2)\ge\pi$. See Figure \ref{bm:empty-skeleton-angle-fig}.

\begin{figure}[ht]
  \centering
    \includegraphics[width=0.7\textwidth]{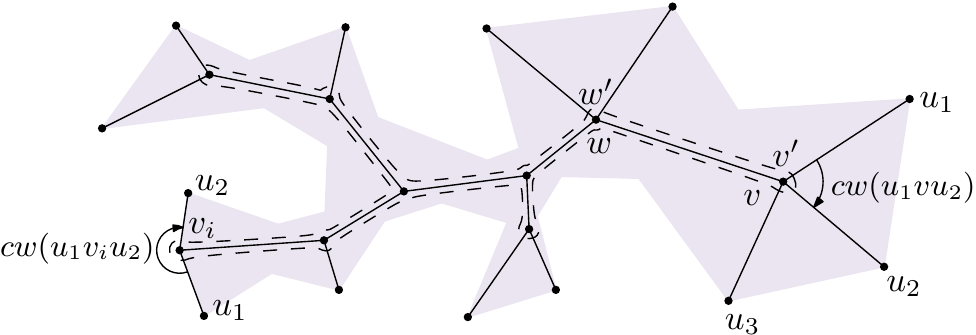}
  \caption{The vertices around $v$ are sorted clockwise, and $v_i$ is an anchor vertex.}
\label{bm:empty-skeleton-angle-fig}
\end{figure}

Now we describe how one can iteratively compute a plane matching of proper size with bounded-length edges from $T$. We start with an empty matching $M$. Each iteration consists of two steps, during which we add edges to $M$. As we prove later, the output is a plane matching of $P$ of size at least $\frac{2}{5}n$ with bottleneck at most $(\sqrt{2}+\sqrt{3})\btopt$.

\subsubsection{Step 1}

We keep applying the following process as long as $T$ has more than six vertices and $T'$ has some non-anchor leaf. Note that $T'$ has at least two vertices.
 
Take a non-anchor leaf $v'$ in $T'$ and according to the number $k$ of leaves connected to $v$ in $T$ do the following:
 \begin{description}
  \item[{\em k} = 1] add $(v, u_1)$ to $M$, and set $T=T\setminus\{v, u_1\}$. 
  \item[{\em k} = 2] since $v'$ is not an anchor, $\cw(u_1vu_2)< \pi$. By Lemma \ref{bm:empty-triangle-lemma} the triangle $\bigtriangleup u_1vu_2$ is empty. We add $(u_1, u_2)$ to $M$, and set $T=T\setminus\{u_1, u_2\}$.
  \item[{\em k} = 3] in this case $v$ has degree four and at least one of $\cw(u_1vu_2)$ and $\cw(u_2vu_3)$ is less than $\pi$. W.l.o.g. suppose that $\cw(u_1vu_2)< \pi$. By Lemma \ref{bm:empty-triangle-lemma} the triangle $\bigtriangleup u_1vu_2$ is empty. Add $(u_1, u_2)$ to $M$ and set $T=T\setminus\{u_1, u_2\}$. 
  \item[{\em k} = 4] this case is handled similarly as the case $k=3$.
 \end{description}

At the end of Step 1, $T$ has at most six vertices or all the leaves of $T'$ are anchors. In the former case, we add edges to $M$ as will be described in Section \ref{bm:base-cases} and after which the algorithm terminates. In the latter case we go to Step 2.

\subsubsection{Step 2}

In this step we deal with the case that $T$ has more than six vertices and all the leaves of $T'$ are anchors. We define the {\em second level skeleton tree} ${T''}$ to be the skeleton tree of $T'$. In other words, $T''$ is the tree which is obtained from $T'$ by removing all the leaves. For clarity we use $w$ to refer to a leaf of $T''$, and we use $w$, $w'$, and $w''$, respectively, to refer to the copies of vertex $w$ in $T$, $T'$, and $T''$. For now suppose that $T''$ has at least two vertices. Consider a leaf $w''$ and its neighbor $y''$ in $T''$. Note that in $T$, $w$ is connected to $y$, to at least one anchor, and possibly to some leaves of $T$. After Step 1, the copy of $w''$ in $T'$, i.e. $w'$, is connected to anchors $v'_1,\dots,v'_k$ in $T'$ (or $v_1,\dots,v_k$ in $T$) for some $1 \le k \le 4$, and connected to at most $4-k$ leaves of $T$. In $T$, we consider the clockwise ordering of the non-leaf neighbors of $w$. Let this ordering be $y, v_1, v_2, \dots, v_k$. We denote the pair of leaves connected to anchor $v_i$ by $a_i$ and $b_i$ in clockwise order around $v_i$; see Figure \ref{bm:anchor-orientation}. 

\begin{figure}[ht]
  \centering
    \includegraphics[width=0.52\textwidth]{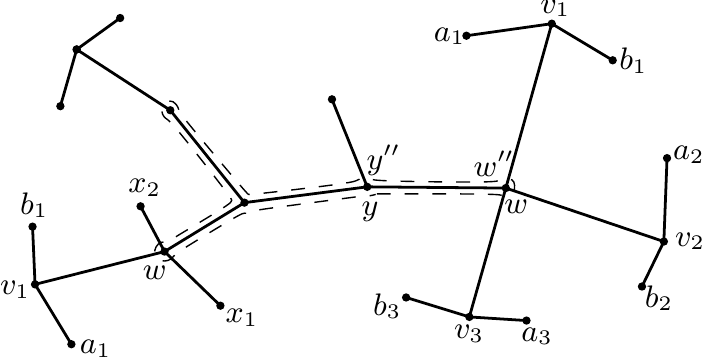}
  \caption{Second level skeleton tree $T''$ is surrounded by dashed line. $v_i$'s are ordered clockwise around leaf vertex $w''$, as well as $x_i$'s. $a_i$ and $b_i$ are ordered clockwise around $v_i$.}
\label{bm:anchor-orientation}
\end{figure}

In this step we pick an arbitrary leaf $w''\in T''$ and according to the number of anchors incident to $w''$, i.e. $k$, we 
add edges to $M$. Since $1 \le k \le 4$, four cases occur and we discuss each case separately. Before that, we state some lemmas.

\begin{figure}[ht]
  \centering
    \includegraphics[width=0.32\textwidth]{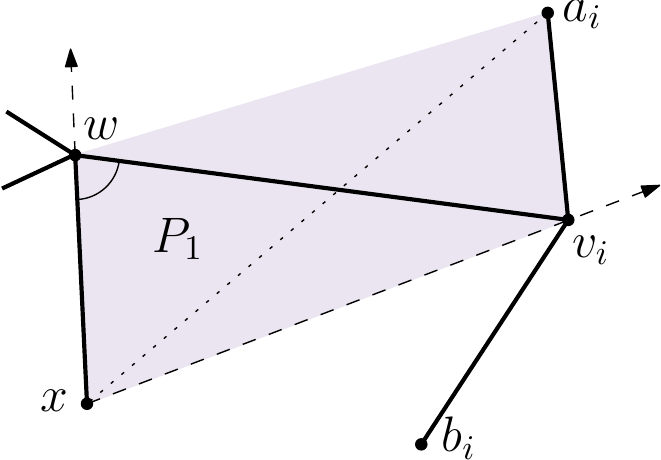}
  \caption{Illustrating Lemma \ref{bm:empty-quadrilateral-lemma}.}
\label{bm:empty-quadrilateral-fig}
\end{figure}

\begin{lemma}
\label{bm:empty-quadrilateral-lemma}
Let $w''$ be a leaf in $T''$. Consider an anchor $v_i$ which is adjacent to $w$ in $T$. For any neighbor $x$ of $w$ for which $x\neq v_i$, if $\cw(v_iwx)\le\pi/2$ (resp. $\cw(xwv_i)\le\pi/2$), the polygon $P_1=\{v_i,x,w,a_i\}$ (resp. $P_2 = \{v_i,b_i,w,x\}$) is convex and empty.
\end{lemma}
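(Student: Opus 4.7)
\medskip

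\noindent\textbf{Proof plan.} Place $w$ at the origin and $v_i$ on the negative $y$-axis, so that the segment $wv_i$ lies along the $y$-axis. I would first use the hypothesis that $v_i$ is an anchor, namely $\cw(a_iv_ib_i)\ge\pi$, together with the MST angle lower bound of $\pi/3$ between any two adjacent MST-edges: since $\cw(wv_ia_i)+\cw(a_iv_ib_i)+\cw(b_iv_iw)=2\pi$ and each summand is at least $\pi/3$ with the middle one at least $\pi$, one gets $\cw(wv_ia_i)\in[\pi/3,2\pi/3]$ and $\cw(b_iv_iw)\in[\pi/3,2\pi/3]$. Hence $a_i$ lies strictly in the open east half-plane relative to line $wv_i$, while $b_i$ lies strictly in the open west half-plane. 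The given bound $\cw(v_iwx)\le\pi/2$, combined with the MST angle inequality $\angle xwv_i\ge\pi/3$ at $w$, pins $x$ in the west half-plane of the line $wv_i$. Therefore $x$ and $a_i$ are on opposite sides of $wv_i$, so the quadrilateral $v_i,x,w,a_i$ is simple and has $wv_i$ as an interior diagonal; it is exactly the union $\triangle v_iwa_i\cup\triangle v_iwx$.

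From this description, emptiness of $P_1$ is immediate: the triangle $\triangle v_iwa_i$ is formed by the two MST-adjacent edges $(v_i,w)$ and $(v_i,a_i)$, so by Lemma~\ref{bm:empty-triangle-lemma} it contains no other point of $P$; likewise $\triangle v_iwx$ is formed by the two MST-adjacent edges $(w,v_i)$ and $(w,x)$, so it too is empty. Since $P_1$ is the union of these two empty triangles, no point of $P\setminus\{v_i,x,w,a_i\}$ lies in $P_1$.

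For convexity I would check that each of the four interior angles is less than $\pi$. The angles at $a_i$ and at $x$ are angles in the triangles $\triangle v_iwa_i$ and $\triangle v_iwx$ respectively, and are therefore strictly less than $\pi$. At $w$ the interior angle equals $\angle v_iwa_i+\angle v_iwx$; the second summand is at most $\pi/2$ by hypothesis, and the first one is bounded by invoking the MST cycle property on the non-MST edge $wa_i$, which gives $|wa_i|\ge\max(|wv_i|,|v_ia_i|)$ and, via the law of cosines and $\angle wv_ia_i\ge\pi/3$, forces $\cos\angle v_iwa_i>0$, i.e., $\angle v_iwa_i<\pi/2$. The main obstacle is the interior angle at $v_i$, which equals $\angle xv_iw+\angle wv_ia_i$: since $\angle wv_ia_i$ can be as large as $2\pi/3$, the naive Gabriel-disk bound $\angle xv_iw<\pi/2$ (coming from applying the Gabriel property to the MST-edge $wx$) is not tight enough. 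The plan here is to use the MST cycle property on the non-MST edge $xb_i$, which yields $|xb_i|\ge\max(|xw|,|wv_i|,|v_ib_i|)$; since $b_i$ sits at an MST-controlled angle $\cw(b_iv_iw)\in[\pi/3,2\pi/3]$ on the opposite side of $wv_i$, this lower bound geometrically prevents $x$ from being too far on the west side of $v_i$, and coupling it with the anchor identity $\cw(wv_ia_i)+\cw(b_iv_iw)\le\pi$ gives precisely $\angle xv_iw<\pi-\angle wv_ia_i$, which is the desired convexity inequality at $v_i$. A symmetric argument then proves the ``resp.'' statement about $P_2$.
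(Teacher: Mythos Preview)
Your setup and the easy pieces are fine: placing $a_i$ to the east and $x,b_i$ to the west of the line $wv_i$, writing $P_1=\triangle v_iwa_i\cup\triangle v_iwx$, deducing emptiness from Lemma~\ref{bm:empty-triangle-lemma}, and bounding the interior angles at $a_i$, $x$, and $w$ all go through. In particular your argument at $w$ is correct: from the cycle property $|wa_i|\ge|v_ia_i|$ one gets $\cos\angle a_iwv_i>0$, hence $\angle a_iwv_i<\pi/2$, and together with $\angle v_iwx\le\pi/2$ the angle at $w$ is strictly below $\pi$.

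The gap is exactly where you flag it, at the vertex $v_i$. You need $\angle xv_iw<\pi-\angle wv_ia_i$, and via the anchor identity $\angle wv_ia_i+\angle b_iv_iw\le\pi$ this reduces to $\angle xv_iw<\angle b_iv_iw$. Your proposed tool, the cycle inequality $|xb_i|\ge\max(|xw|,|wv_i|,|v_ib_i|)$, does not by itself yield that angle comparison; the sentence ``geometrically prevents $x$ from being too far on the west side of $v_i$'' is where the proof is missing. The clean argument is not metric but incidence: if $\angle xv_iw>\angle b_iv_iw$ then $b_i$ lies in the open wedge $\angle wv_ix$ at $v_i$; since $b_i\notin\triangle wv_ix$ by Lemma~\ref{bm:empty-triangle-lemma}, $b_i$ must lie beyond the segment $wx$, and then the MST edge $v_ib_i$ crosses the MST edge $wx$, contradicting planarity of the MST. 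Hence $\angle xv_iw<\angle b_iv_iw\le\pi-\angle wv_ia_i$, and the angle at $v_i$ is below $\pi$.

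For comparison, the paper avoids the four-angle bookkeeping by directly showing that the two diagonals $wv_i$ and $a_ix$ of $P_1$ intersect: it argues that, as seen from $x$, the point $a_i$ lies strictly between the rays $xw$ and $xv_i$. The step $a_i$ to the right of $\overrightarrow{xw}$ is exactly your angle-at-$w$ argument in disguise. The step $a_i$ to the left of $\overrightarrow{xv_i}$ is obtained by first placing $b_i$ to the \emph{right} of $\overrightarrow{xv_i}$ (this is the same fact $\angle xv_iw<\angle b_iv_iw$ above, and it likewise needs more than just the emptiness of $\triangle v_ixw$) and then flipping across via the anchor condition $\cw(a_iv_ib_i)\ge\pi$. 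So your outline and the paper's are really the same argument viewed from $v_i$ versus from $x$; the missing ingredient in your write-up is the non-crossing of the MST edges $v_ib_i$ and $wx$, not the distance bound on $|xb_i|$.
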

\begin{proof}
We prove the case when $\cw(v_iwx)\le\pi/2$; see Figure \ref{bm:empty-quadrilateral-fig}. The proof for the second case is symmetric. To prove the convexity of $P_1$ we show that the diagonals $\overline{v_iw}$ and $\overline{a_ix}$ of $P_1$ intersect each other. To show the intersection we argue that $a_i$ lies to the left of $\overrightarrow{xv_i}$ and to the right of $\overrightarrow{xw}$.

Consider $\overrightarrow{xv_i}$. According to Lemma \ref{bm:empty-triangle-lemma}, triangle $\bigtriangleup v_ixw$ is empty so $b_i$ lies to the right  of $\overrightarrow{xv_i}$. On the other hand, $v_i$ is an anchor, so $\cw(a_iv_ib_i) \ge \pi$, and hence $a_i$ lies to the left of $\overrightarrow{xv_i}$.
Now consider $\overrightarrow{xw}$. For the sake of contradiction, suppose that $a_i$ is to the left of $\overrightarrow{xw}$. Since $\cw(v_iwx)\le\pi/2$, the angle $\cw(a_iwv_i)$ is greater than $\pi/2$. This means that $\overline{v_ia_i}$ is the largest side of $\bigtriangleup a_iwv_i$, which contradicts that $\overline{v_ia_i}$ is an edge of $MST(P)$. So $a_i$ lies to the right of $\overrightarrow{xw}$. Therefore, $\overline{v_iw}$ intersects $\overline{a_ix}$ and $P_1$ is convex. $P_1$ is empty because by Lemma \ref{bm:empty-triangle-lemma}, the triangles $\bigtriangleup v_iwx$ and $\bigtriangleup v_iwa_i$ are empty. 
\end{proof}

\begin{figure}[ht]
  \centering
    \includegraphics[width=0.4\textwidth]{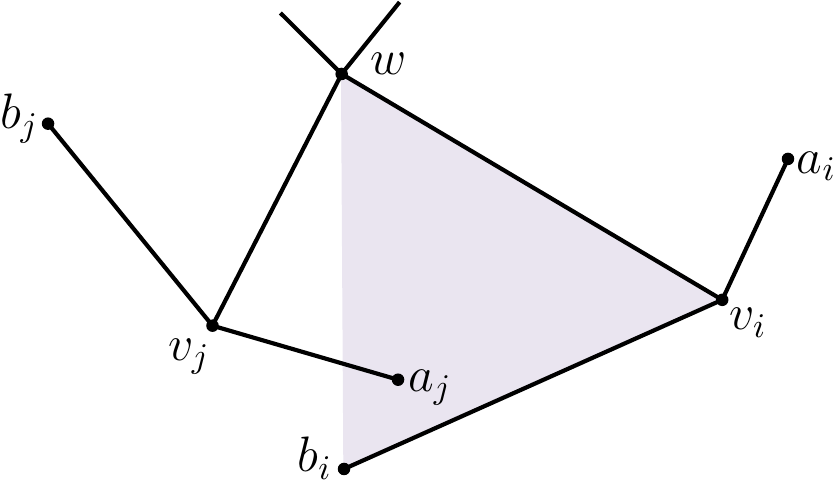}
  \caption{Proof of Lemma \ref{bm:clockwise-order-lemma}.}
\label{bm:clockwise-order-fig}
\end{figure}

\begin{lemma}
\label{bm:clockwise-order-lemma}
Let $w''$ be a leaf in $T''$ and consider the clockwise sequence $v_1,\dots,v_k$ of anchors that are incident on $w$. The sequence of vertices $a_1, v_1, \allowbreak b_1,\allowbreak  \dots,a_k,v_k,b_k$ are angularly sorted in clockwise order around $w$. 
\end{lemma}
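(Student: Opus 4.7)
The plan is to split the statement into a "local" claim at each anchor and a "global" interleaving claim between consecutive anchors, and then combine the two to read off the asserted cyclic order.

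First I would prove the local claim: for every anchor $v_i$, in the angular order around $w$ the leaf $a_i$ sits strictly CCW of $v_i$ and $b_i$ sits strictly CW of $v_i$. The anchor condition $\cw(a_iv_ib_i)\geq\pi$, combined with the MST lower bound of $\pi/3$ on the angle between any two edges meeting at a common endpoint, forces $\cw(wv_ia_i)$ and $\cw(b_iv_iw)$ to lie in $[\pi/3,2\pi/3]$. Hence $a_i$ and $b_i$ lie on strictly opposite sides of the line through $w$ and $v_i$. Translating from "rotation around $v_i$" to "rotation around $w$", if $\overrightarrow{v_ia_i}$ is obtained from $\overrightarrow{v_iw}$ by CW rotation through an angle in $(0,\pi)$, then on passing to $w$'s viewpoint the ray $\overrightarrow{wa_i}$ lies strictly CCW of $\overrightarrow{wv_i}$; the symmetric statement places $b_i$ strictly CW of $v_i$ around $w$.

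Next I would prove the global interleaving: for each $2\leq i\leq k$, in the CW sector from $v_{i-1}$ to $v_i$ around $w$, one encounters $b_{i-1}$ before $a_i$. Assume the contrary, so that the CW order in this sector is $v_{i-1},a_i,b_{i-1},v_i$. By Lemma~\ref{bm:empty-triangle-lemma}, the two triangles $\bigtriangleup wv_{i-1}b_{i-1}$ and $\bigtriangleup wv_ia_i$ are empty. Since the polar angle of $a_i$ seen from $w$ lies strictly between the polar angles of $v_{i-1}$ and $b_{i-1}$, emptiness of the first triangle forces $a_i$ to lie strictly beyond the chord $v_{i-1}b_{i-1}$ along the ray $\overrightarrow{wa_i}$; symmetrically, $b_{i-1}$ lies strictly beyond the chord $v_ia_i$ along the ray $\overrightarrow{wb_{i-1}}$.

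The crux is then the resulting crossing. Parameterizing each of the two chords $v_{i-1}b_{i-1}$ and $v_ia_i$ by distance from $w$ along a ray of polar angle $\theta$, the first chord is closer to $w$ at $\theta=\theta_{a_i}$ while the second chord is closer at $\theta=\theta_{b_{i-1}}$; continuity of both distance functions on the window $(\theta_{b_{i-1}},\theta_{a_i})$ and the intermediate value theorem produce a common point of the two chords, contradicting the planarity of $T$. The main obstacle will be this final sign-flip step: one has to verify that the two "beyond" conclusions drawn from the empty triangles really do give a strict sign change and that neither chord escapes the angular window before the crossing occurs. The two empty triangles are precisely what rule out the degenerate configurations that could otherwise spoil the comparison, so once the angular-sweep picture is set up, the remaining work is routine. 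Combining the local and global interleavings yields the asserted CW order $a_1,v_1,b_1,\dots,a_k,v_k,b_k$ around $w$.
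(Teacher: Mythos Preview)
Your approach is on the right track, and your IVT crossing argument is exactly the justification that lies behind the paper's terse ``either \ldots or'' claim. However, there is a real gap in the global step: you only treat the case where $a_i$ and $b_{i-1}$ both lie in the sector $(v_{i-1},v_i)$ and are swapped. The local claim does not by itself force them into that sector; $a_i$ could overshoot CCW past $v_{i-1}$, or $b_{i-1}$ could overshoot CW past $v_i$. These overshoot cases are not hard to kill (e.g., if $b_{i-1}$ lies CW of $v_i$ then $v_i$ falls in the angular wedge of the empty triangle $\bigtriangleup w v_{i-1} b_{i-1}$, so either $v_i$ is inside it or the MST edge $wv_i$ crosses the MST edge $v_{i-1}b_{i-1}$), but your proposal never addresses them, and without them ``combining local and global'' does not give the full cyclic order.

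The paper's proof is shorter and bypasses the decomposition. It takes any two anchors $v_i, v_j$ with $i<j$ and any violation ``$a_j$ precedes $b_i$ in CW order'', and asserts directly that either $a_j\in\bigtriangleup w v_i b_i$ or $b_i\in\bigtriangleup w v_j a_j$, contradicting Lemma~\ref{bm:empty-triangle-lemma}. Your crossing argument is precisely what certifies this disjunction (if neither inclusion held, the MST edges $v_ib_i$ and $v_ja_j$ would cross), but phrasing the contradiction as a point-in-triangle statement rather than an edge crossing makes the whole proof two lines and avoids any separate consecutive-anchor analysis.
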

\begin{proof}
 Using contradiction, consider two vertices $v_i$ and $v_j$, and assume that $v_i$ comes before $v_j$ and $a_j$ comes before $b_i$ in the clockwise order; see Figure \ref{bm:clockwise-order-fig}. Either $a_j$ is in $\bigtriangleup wv_ib_i$ or $b_i$ is in $\bigtriangleup wv_ja_j$. However, by Lemma \ref{bm:empty-triangle-lemma}, neither of these two cases can happen.
\end{proof}

\begin{figure}[ht]
  \centering
\setlength{\tabcolsep}{0in}
  $\begin{tabular}{cc}
  \multicolumn{1}{m{.5\columnwidth}}{\centering\includegraphics[width=.35\columnwidth]{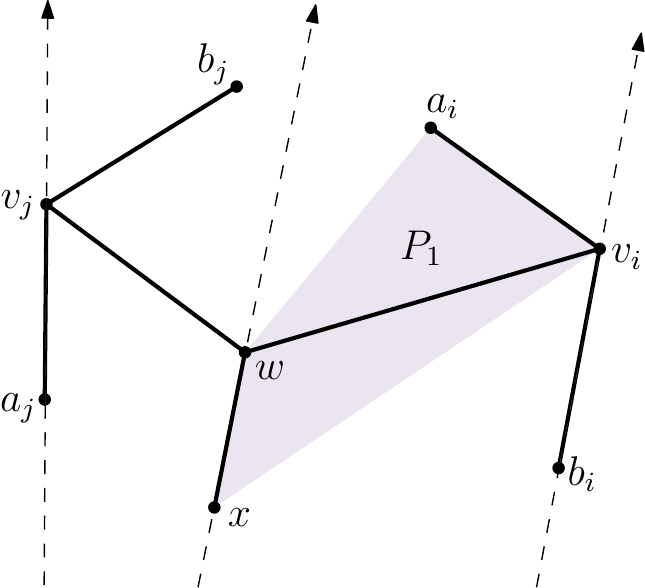}}
  &\multicolumn{1}{m{.5\columnwidth}}{\centering\includegraphics[width=.35\columnwidth]{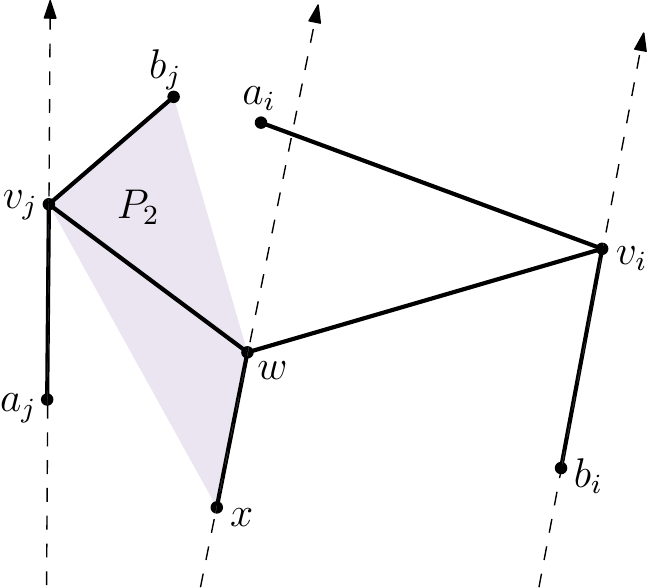}}
  \\
  (a) & (b)
  \end{tabular}$
  \caption{Illustrating Lemma \ref{bm:anchor2-vertex1-lemma}.}
\label{bm:anchor2-vertex1-1-fig}
\end{figure}

\begin{lemma}
\label{bm:anchor2-vertex1-lemma}
Let $w''$ be a leaf in $T''$ and consider the clockwise sequence $v_1,\dots,v_k$ of anchors that are adjacent to $w$. Let $1\le i < j \le k$ and let $x$ be a neighbor of $w$ for which $x\neq v_i$ and $x\neq v_j$. 
\begin{enumerate}
 \item If $x$ is between $v_i$ and $v_j$ in the clockwise order:
    \begin{enumerate}
      \item if $a_i$ is to the right of $\overrightarrow{xw}$, then $P_1=\{x, w, a_i, v_i\}$ is convex and empty.
      \item if $a_i$ is not to the right of $\overrightarrow{xw}$, then $P_2=\{w, x,v_j, b_j\}$ is convex and empty.
    \end{enumerate}
  \item If $v_j$ is between $v_i$ and $x$, or $v_i$ is between $x$ and $v_j$ in the clockwise order:
    \begin{enumerate}
      \item if $b_i$ is to the left of $\overrightarrow{xw}$, then $P_3=\{w, x, v_i, b_i\}$ is convex and empty.
      \item if $b_i$ is not to the left of $\overrightarrow{xw}$, then $P_4=\{x, w, a_j, v_j\}$ is convex and empty.
    \end{enumerate}
\end{enumerate}
\end{lemma}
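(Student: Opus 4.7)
The plan is to handle each of the four sub-cases by mimicking the two-step strategy of Lemma~\ref{bm:empty-quadrilateral-lemma}: first prove convexity of the claimed quadrilateral by showing that its two diagonals cross, then conclude emptiness by triangulating along one diagonal into two triangles that are already empty by Lemma~\ref{bm:empty-triangle-lemma}. Before entering the case analysis, I would record two facts that will be used throughout. (i) By Lemma~\ref{bm:clockwise-order-lemma}, the $3k$ points $a_1,v_1,b_1,\dots,a_k,v_k,b_k$ appear in this clockwise order around $w$, so $a_i$ is the point immediately clockwise before $v_i$ and $b_j$ is immediately clockwise after $v_j$. (ii) Any triangle of the form $\bigtriangleup wpq$ with $p,q$ both neighbors of $w$ in $T$, or $\bigtriangleup v_i p q$ with $p,q$ both neighbors of $v_i$ in $T$, contains no other point of $P$ by Lemma~\ref{bm:empty-triangle-lemma}.

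For Case~1, where $x$ lies clockwise between $v_i$ and $v_j$, fact (i) already places $a_i$ and $x$ on opposite sides of the line $\overline{wv_i}$, and $x$ and $b_j$ on opposite sides of the line $\overline{wv_j}$. This gives the first of the two separation conditions needed for the diagonals of $P_1$ (resp.\ $P_2$) to cross. In sub-case~1(a), the hypothesis that $a_i$ lies to the right of $\overrightarrow{xw}$ is precisely the statement that $w$ lies on a specific side of the line $\overline{xa_i}$; to complete the convexity of $P_1=\{x,w,a_i,v_i\}$ I will argue that $v_i$ lies on the opposite side of $\overline{xa_i}$. The tool for this is the anchor property $\cw(a_iv_ib_i)\geq\pi$ together with the empty triangles $\bigtriangleup w v_i x$ and $\bigtriangleup w v_i a_i$; the same anchor trick was already used in Lemma~\ref{bm:empty-quadrilateral-lemma} to force $a_i$ across $\overrightarrow{xv_i}$, and a direct adaptation forces $v_i$ across $\overline{xa_i}$ here. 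Emptiness of $P_1$ then follows since $P_1=\bigtriangleup xwv_i\cup\bigtriangleup wv_ia_i$, and both summands are empty by fact (ii).

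Sub-case~1(b) is handled by the symmetric argument applied to $P_2=\{w,x,v_j,b_j\}$, with $(v_i,a_i)$ replaced by $(v_j,b_j)$. The only new ingredient is the implication ``if $a_i$ is not to the right of $\overrightarrow{xw}$, then $b_j$ is to the left of $\overrightarrow{xw}$,'' which supplies the missing separation condition for $P_2$. Case~2 of the lemma is the mirror image of Case~1 obtained by reversing the clockwise direction, so every argument above transfers verbatim after swapping $a_i\leftrightarrow b_i$ and $a_j\leftrightarrow b_j$; this takes care of sub-cases $2$(a) and $2$(b) with $P_3$ and $P_4$.

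The main obstacle will be the implication used in sub-case~(b): if $a_i$ fails to lie on the prescribed side of $\overrightarrow{xw}$, then $b_j$ must lie on the opposite prescribed side. The only levers for this are the anchor angles at $v_i$ and $v_j$ (each $\geq\pi$) together with the empty triangles $\bigtriangleup wv_ix$ and $\bigtriangleup wv_jx$; one has to rule out, via Lemma~\ref{bm:empty-triangle-lemma} applied to the appropriate triangles, a configuration in which both $a_i$ and $b_j$ sit on the ``wrong'' side of $\overrightarrow{xw}$, which would otherwise leave neither $P_1$ nor $P_2$ convex. Once that angular dichotomy is pinned down, the remaining steps are local separation checks identical in flavor to the proof of Lemma~\ref{bm:empty-quadrilateral-lemma}.
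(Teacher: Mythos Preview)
Your overall plan is right and matches the paper's proof: establish convexity by showing the two diagonals cross, then get emptiness by splitting along the diagonal $\overline{wv_i}$ (resp.\ $\overline{wv_j}$) into two triangles covered by Lemma~\ref{bm:empty-triangle-lemma}. Case~2 is indeed symmetric to Case~1.

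Where you go astray is in identifying the ``main obstacle.'' The implication you flag for sub-case~1(b) --- \emph{if $a_i$ is not to the right of $\overrightarrow{xw}$ then $b_j$ is to the left of $\overrightarrow{xw}$} --- does not need anchor angles or empty triangles at all. It follows in one line from your own fact~(i), the clockwise-order lemma (Lemma~\ref{bm:clockwise-order-lemma}). Here is why: since $x$ sits between $v_i$ and $v_j$ in the clockwise order around $w$, going clockwise from $x$ one meets $v_j$, then $b_j$, and only much later (after wrapping past $b_k,a_1,\dots$) the point $a_i$. Hence the clockwise angular distance from $x$ to $b_j$ around $w$ is strictly smaller than that from $x$ to $a_i$. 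Now ``$a_i$ not to the right of $\overrightarrow{xw}$'' means exactly that $a_i$ lies within~$\pi$ clockwise of $x$ around $w$; therefore $b_j$, being even closer clockwise, also lies within~$\pi$ clockwise of $x$, i.e., $b_j$ is to the left of $\overrightarrow{xw}$. This is precisely how the paper dispatches the step: a single appeal to Lemma~\ref{bm:clockwise-order-lemma}.

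So your proof sketch is essentially correct, but you should drop the anchor/empty-triangle machinery for the (b) sub-cases and replace it with the angular-order argument above. The anchor property \emph{is} used in sub-case~1(a), in the same way as in Lemma~\ref{bm:empty-quadrilateral-lemma}, to pin down which side of $\overrightarrow{xv_i}$ the point $a_i$ falls on; there your plan is already on target.
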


\begin{proof}
We only prove the first case, the proof for the second case is symmetric. Thus, we assume that $x$ is between $v_i$ and $v_j$ in the clockwise order. First assume that $a_i$ is to the right of $\overrightarrow{xw}$. See Figure \ref{bm:anchor2-vertex1-1-fig}(a). Consider $\overrightarrow{b_iv_i}$. Since $v_i$ is an anchor, $a_i$ cannot be to the right of $\overrightarrow{b_iv_i}$, and according to Lemma \ref{bm:empty-triangle-lemma}, $x$ cannot be to the right of $\overrightarrow{b_iv_i}$. For the same reasons, both the vertices $b_j$ and $x$ cannot be to the left of $\overrightarrow{a_jv_j}$. Now consider $\overrightarrow{xw}$. By assumption, $a_i$ is to the right of $\overrightarrow{xw}$. Therefore $\overline{xa_i}$ intersects $\overline{wv_i}$ and hence $P_1$ is convex.

Now assume that $a_i$ is not to the right of $\overrightarrow{xw}$; see Figure \ref{bm:anchor2-vertex1-1-fig}(b). By Lemma \ref{bm:clockwise-order-lemma}, $b_j$ is to the left of $\overrightarrow{xw}$. Therefore, $\overline{xb_j}$ intersects $\overline{wv_j}$ and hence $P_2$ is convex. 
The emptiness of the polygons follows directly from Lemma \ref{bm:empty-triangle-lemma}.
\end{proof}

\begin{figure}[ht]
  \centering
    \includegraphics[width=0.37\textwidth]{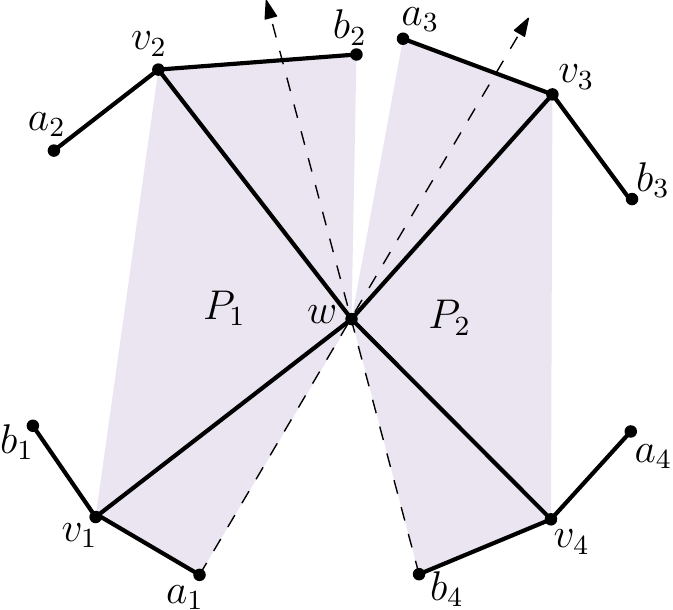}
  \caption{Illustrating Lemma \ref{bm:separation-lemma}.}
\label{bm:anchor4-5-fig}
\end{figure}

\begin{lemma}
\label{bm:separation-lemma}
Let $w''$ be a leaf in $T''$ and consider the clockwise sequence $v_1,\dots,v_k$ of anchors that are incident on $w$. If $4\le k \le 5$, then at least one of the pentagons $P_1=\{a_1,v_1,v_2,b_2,w\}$ and $P_2=\{a_{k-1},v_{k-1},v_k,b_k,w\}$ is convex and empty.
\end{lemma}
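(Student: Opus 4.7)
The plan is to prove the lemma in two steps. First, I use a pigeonhole argument on the angles at $w$ to show that at least one of the gaps $\beta_1=\cw(v_1wv_2)$ or $\beta_{k-1}=\cw(v_{k-1}wv_k)$ is at most $\pi/2$. Second, whenever such a gap is at most $\pi/2$, I assemble the corresponding pentagon as the union of two convex empty quadrilaterals supplied by Lemma~\ref{bm:empty-quadrilateral-lemma}, then verify its convexity and emptiness.

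For the first step, $w$ has at most five MST neighbors in $T$\textemdash namely $v_1,\ldots,v_5$ when $k=5$ (in which case $T''$ is a single vertex), and $y$ together with $v_1,\ldots,v_4$ when $k=4$\textemdash and each pair of consecutive MST edges at $w$ subtends an angle of at least $\pi/3$ (this follows from Lemma~\ref{bm:empty-triangle-lemma} together with the MST cycle property applied to the resulting empty triangle). When $k=5$, the five anchor gaps satisfy $\sum_{i=1}^{5}\beta_i=2\pi$ with $\beta_i\geq\pi/3$, so $\beta_1+\beta_4\leq 2\pi-3\cdot(\pi/3)=\pi$ and hence $\min\{\beta_1,\beta_4\}\leq\pi/2$. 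When $k=4$, the same count applied to the five angles at $w$ that encircle the cycle $y,v_1,v_2,v_3,v_4$ yields $\beta_1+\beta_3\leq\pi$, so $\min\{\beta_1,\beta_3\}\leq\pi/2$.

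Without loss of generality assume $\beta_1\leq\pi/2$; the case $\beta_{k-1}\leq\pi/2$ producing $P_2$ is entirely symmetric. Applying Lemma~\ref{bm:empty-quadrilateral-lemma} to the anchor $v_1$ with neighbor $x=v_2$ (under the hypothesis $\cw(v_1wv_2)\leq\pi/2$) yields a convex empty quadrilateral $Q_1=\{v_1,v_2,w,a_1\}$. Applying the same lemma to the anchor $v_2$ with neighbor $x=v_1$, now reading the hypothesis as $\cw(xwv_i)=\cw(v_1wv_2)\leq\pi/2$, yields a convex empty quadrilateral $Q_2=\{v_2,b_2,w,v_1\}$. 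The two quadrilaterals share the diagonal $\overline{v_1v_2}$ and their intersection is exactly the triangle $\bigtriangleup wv_1v_2$, so their union is the pentagon $P_1=\{a_1,v_1,v_2,b_2,w\}$ triangulated from $w$ as $\bigtriangleup wa_1v_1\cup\bigtriangleup wv_1v_2\cup\bigtriangleup wv_2b_2$. Emptiness of $P_1$ is then an immediate consequence of Lemma~\ref{bm:empty-triangle-lemma} applied to each of these three triangles, since each is bounded by two MST edges incident to its apex.

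The main obstacle will be verifying the convexity of $P_1$. The interior angles at $a_1,v_1,v_2,b_2$ coincide with the corresponding interior angles of $Q_1$ and $Q_2$, which are less than $\pi$ by the convexity of those quadrilaterals; the remaining task is to bound the interior angle of $P_1$ at $w$, which equals $\cw(a_1wv_1)+\beta_1+\cw(v_2wb_2)$, by $\pi$. This bound is not immediate from the convexity of $Q_1,Q_2$ in isolation, so I plan to combine the anchor inequality $\alpha_i+\gamma_i\leq\pi$ (which forces both $\alpha_1\leq 2\pi/3$ and $\gamma_2\leq 2\pi/3$) with the MST cycle property applied to the empty triangles $\bigtriangleup wv_1a_1$ and $\bigtriangleup wv_2b_2$ (whose side opposite the anchor angle is the strictly longest, yielding $\cw(a_1wv_1)<\pi-\alpha_1$ and $\cw(v_2wb_2)<\pi-\gamma_2$), and then use the assumption $\beta_1\leq\pi/2$ together with a case analysis on the sizes of $\alpha_1$ and $\gamma_2$ to close the argument.
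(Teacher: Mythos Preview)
Your decomposition into two convex empty quadrilaterals via Lemma~\ref{bm:empty-quadrilateral-lemma} is fine, and it correctly handles all interior angles of $P_1$ \emph{except} the one at~$w$. The gap is precisely there, and it cannot be closed the way you sketch.

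Your selection criterion is ``$\beta_1=\cw(v_1wv_2)\le\pi/2$'', and you then need $\cw(a_1wv_1)+\beta_1+\cw(v_2wb_2)<\pi$. But the bounds you list only give $\cw(a_1wv_1)\le\min(\alpha_1,\pi-\alpha_1)\le\pi/2$ and likewise $\cw(v_2wb_2)\le\pi/2$, so the best you can conclude is a total of at most $3\pi/2$. No case analysis on $\alpha_1,\gamma_2\in[\pi/3,2\pi/3]$ repairs this: take $w=(0,0)$, $v_1=(\epsilon,0)$, $v_2=(0,-\epsilon)$, $a_1=(\epsilon,1)$, $b_2=(-1,-\epsilon)$. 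Here $\beta_1=\pi/2$, $\alpha_1=\gamma_2=\pi/2$, all MST and anchor constraints can be satisfied, yet $\cw(a_1wv_1)$ and $\cw(v_2wb_2)$ are each $\approx\pi/2-\epsilon$, so the interior angle of $P_1$ at $w$ is $\approx 3\pi/2$ and $P_1$ is not convex. Thus $\beta_1\le\pi/2$ is the wrong selection criterion.

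The paper's proof uses a different selector: it checks whether $b_2$ lies to the left of $\overrightarrow{a_1w}$ (and symmetrically whether $a_{k-1}$ lies to the right of $\overrightarrow{b_kw}$), invoking the clockwise-order Lemma~\ref{bm:clockwise-order-lemma} to rule out the case where both tests fail. The payoff is that for the selected pentagon the line $l(a_1,b_2)$ \emph{separates} $w$ from $v_1,v_2$; then the interior angle of $P_1$ at $w$ is exactly the angle of the triangle $\bigtriangleup a_1wb_2$ at $w$, hence automatically $<\pi$. In the configuration above, this test correctly rejects $P_1$ (the line $l(a_1,b_2)$ has $w,v_1,v_2$ all on the same side) and directs you to $P_2$ instead. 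Your pigeonhole on $\beta_1,\beta_{k-1}$ does not see this distinction.
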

\begin{proof}
We prove the case when $k=4$; the proof for $k=5$ is analogous. By Lemma \ref{bm:clockwise-order-lemma}, $a_1$ comes before $b_4$ in the clockwise order. Consider $\overrightarrow{a_1w}$; see Figure \ref{bm:anchor4-5-fig}. Let $l(a_1, b_2)$ denote the line through $a_1$ and $b_2$. If $b_2$ is to the left of $\overrightarrow{a_1w}$, then $l(a_1, b_2)$ separates $w$ from $v_1$ and $v_2$. Otherwise, $b_2$ is to the right of $\overrightarrow{a_1w}$. Now consider $\overrightarrow{b_4w}$. If $a_3$ is to the right of $\overrightarrow{b_4w}$, then $l(a_3, b_4)$ separates $w$ from $v_3$ and $v_4$. The remaining case, i.e., when $a_3$ is to the left of $\overrightarrow{b_4w}$, cannot happen by Lemma \ref{bm:clockwise-order-lemma} (because, otherwise, $a_3$ would come before $b_2$ in the clockwise order).

Thus, we have shown that (i) $l(a_1, b_2)$ separates $w$ from $v_1$ and $v_2$ or (ii) $l(a_3, b_4)$ separates $w$ from $v_3$ and $v_4$. Assume w.l.o.g. that (i) holds. Now to prove the convexity of $P_1$ we show that all internal angles of $P_1$ are less than $\pi$. Since $v_1$ is an anchor, $\cw(b_1v_1a_1)\le \pi$. By Lemma~\ref{bm:empty-triangle-lemma}, $b_1$ is to the left of of $\overrightarrow{v_1v_2}$. Therefore, $\cw(v_2v_1a_1) < \cw(b_1v_1a_1) \le \pi$. On the other hand $cw(wv_1a_1)\ge \pi/3$, so in $\bigtriangleup a_1v_1w$, $\cw(v_1a_1w) \le 2\pi/3$. By a similar analysis $\cw(b_2v_2v_1)$ and $\cw(wb_2v_2)$ are less than $\pi$. In addition, $\cw(a_1wb_2)$ in $\bigtriangleup a_1wb_2$ is less than $\pi$. Thus, $P_1$ is convex. Its emptiness is assured from the emptiness of the triangles $\bigtriangleup v_2wb_2$, $\bigtriangleup v_1wa_1$ and $\bigtriangleup v_1wv_2$.
\end{proof}

Now we are ready to present the details of Step 2. Recall that $T$ has more than six vertices and all leaves of $T'$ are anchors. In this case $T''$ has at least one vertex. If $T''$ has exactly one vertex, we add edges to $M$ as will be described in Section \ref{bm:base-cases} and after which the algorithm terminates. Assume that $T''$ has at least two vertices. Pick a leaf $w''$ in $T''$. As before, let $v_1,\dots,v_k$ where $1\le k \le 4$ be the clockwise order of the anchors connected to $w$. Let $x_1,\dots,x_{\ell}$ where $\ell \le 4-k\le3$ be the clockwise order of the leaves of $T$ connected to $w$; see Figure \ref{bm:anchor-orientation}. This means that $deg(w)=1+k+\ell$, where $k+\ell \le 4$. Now we describe different configurations that may appear at $w$, according to $k$ and $\ell$.
 \begin{description}
  \item[Case 1:] assume $k=1$. If $\ell=0$ or $1$, add $(a_1,v_1)$ and $(b_1,w)$ to $M$ and set $T=T\setminus\{a_1,\allowbreak b_1,\allowbreak v_1,\allowbreak w\}$. If $\ell=1$ remove $x_1$ from $T$ as well. If $\ell=2$, consider $\alpha_i$ as the acute angle between segments $\overline{wv_1}$ and $\overline{wx_i}$. W.l.o.g. assume $\alpha_1 = \min(\alpha_1,\alpha_2)$. Two cases may arise: (i) $\alpha_1\le \pi/2$, (ii) $\alpha_1> \pi/2$. If (i) holds, w.l.o.g. assume that $x_1$ comes before $v_1$ in clockwise order around $w$. According to Lemma \ref{bm:empty-quadrilateral-lemma} polygon $P=\{v_1, b_1, w, x_1\}$ is convex and empty. So we add $(v_1,a_1)$, $(b_1,x_1)$ and $(x_2,w)$ to $M$. Other cases are handled in a similar way. If (ii) holds, according to Lemma \ref{bm:empty-triangle-lemma} triangle $\bigtriangleup x_1wx_2$ is empty. So we add $(v_1, a_1)$, $(b_1,w)$ and $(x_1,x_2)$ to $M$. In both cases set $T=T\setminus\{a_1,b_1,v_1, x_1, x_2, w\}$. If $\ell=3$, remove $x_3$ from $T$ and handle the rest as $\ell=2$.
  \item[Case 2:] assume $k=2$. If $\ell = 0$, we add $(v_1, a_1)$, $(b_1, w)$, and $(v_2, a_2)$ to $M$. If $\ell = 1$ suppose that $x_1$ comes before $v_1$ in clockwise ordering. According to Lemma \ref{bm:anchor2-vertex1-lemma} one of polygons $P_3$ and $P_4$ is empty; suppose it be $P_3=\{w,x_1,v_1,b_1\}$ (where $i=1$ and $j=2$ in Lemma \ref{bm:anchor2-vertex1-lemma}). Thus we add $(v_1, a_1)$, $(b_1,x_1)$, $(v_2,a_2)$ and $(b_2,w)$ to $M$. In both cases set $T=T\setminus\{a_1,\allowbreak b_1,\allowbreak v_1, \allowbreak a_2,b_2,v_2,\allowbreak w\}$ and if $\ell = 1$ remove $x_1$ from $T$ as well. If $\ell = 2$, remove $x_2$ from $T$ and handle the rest as $\ell=1$.
  \item[Case 3:] assume $k=3$. If $\ell = 0$ then set $M=M\cup\{(v_1,a_1), (b_1,w), \allowbreak (v_2,a_2), (v_3,a_3)\}$. If $\ell=1$, consider $\beta_i$ as the acute angle between segments $\overline{wx_1}$ and $\overline{wv_i}$. W.l.o.g. assume $\beta_2$ has minimum value among all $\beta_i$'s and $x_1$ comes after $v_2$ in clockwise order. According to Lemma \ref{bm:anchor2-vertex1-lemma} one of polygons $P_1$ and $P_2$ is empty; suppose it be $P_1=\{w,x_1,v_2,a_2\}$ (where $i=2$ and $j=3$ in Lemma \ref{bm:anchor2-vertex1-lemma}). Thus we set $M=M\cup\{(v_2, b_2), (a_2,x_1),\allowbreak (v_1,a_1), (b_1,w),\allowbreak  (v_3,a_3)\}$. In both cases set $T = T\setminus\{a_1,b_1,\allowbreak v_1, a_2,\allowbreak b_2,v_2, \allowbreak a_3,b_3,v_3, w\}$ and if $\ell = 1$ remove $x_1$ from $T$ as well. Other cases can be handled similarly.

\begin{figure}[ht]
  \centering
    \includegraphics[width=0.37\textwidth]{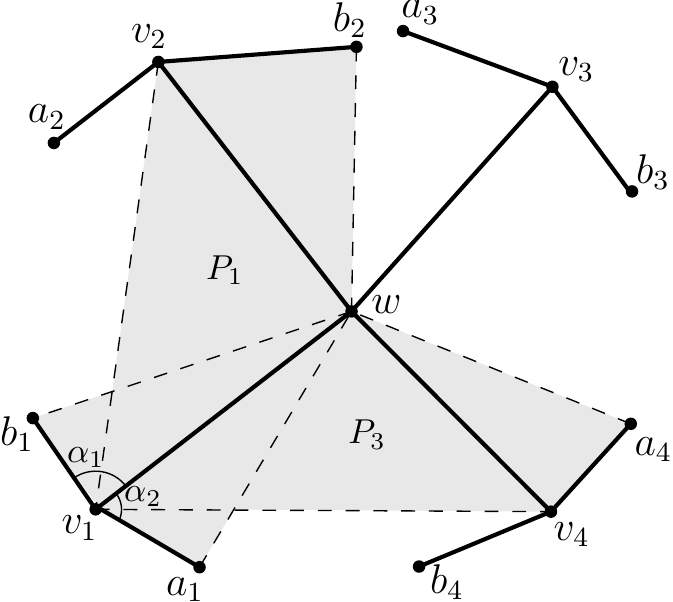}
  \caption{The vertex $v_1$ is shared between $P_1$ and $P_3$.}
\label{bm:case4-fig}
\end{figure}

  \item[Case 4:] assume $k=4$. According to Lemma \ref{bm:separation-lemma} one of $P_1=\{a_1,v_1,\allowbreak v_2,\allowbreak b_2,w\}$ and $P_2=\{a_3,v_3,v_4,b_4,w\}$ is convex and empty. Again by Lemma \ref{bm:separation-lemma} one of $P_3=\{a_4,v_4,v_1,b_1,w\}$ and $P_4=\{a_2,v_2,\allowbreak v_3,\allowbreak b_3,w\}$ is also convex and empty. Without loss of generality assume that $P_1$ and $P_3$ are empty. See Figure \ref{bm:case4-fig}. Clearly, these two polygons share a vertex ($v_1$ in Figure \ref{bm:case4-fig}). Let $\alpha_1=\cw(b_1,v_1,w)$ which is contained in $P_3$ and $\alpha_2=\cw(w,v_1,a_1)$ which is contained in $P_1$. We pick one of the polygons $P_1$ and $P_3$ which minimizes $\alpha_i, i=1,2$. Let $P_1$ be that polygon. So we set $M = M \cup  \allowbreak \{(v_1,b_1) , \allowbreak (v_2,a_2) ,\allowbreak  (a_1,b_2),\allowbreak (v_3, a_3),\allowbreak  (b_3,w),\allowbreak  (v_4,a_4)\}$ and set $T = T\setminus\{a_1,b_1,\allowbreak v_1,\allowbreak  a_2,\allowbreak b_2,v_2, \allowbreak  a_3, b_3,v_3,\allowbreak a_4,\allowbreak b_4\allowbreak,v_4,w \}$. 
 \end{description}

This concludes Step 2. Go back to Step 1.

\subsubsection{Base Cases}
\label{bm:base-cases}

In this section we describe the base cases of our algorithm. As mentioned in Steps 1 and 2, we may have two base cases: (a) $T$ has at most $t\le 6$ vertices, (b) $T''$ has only one vertex.

\begin{paragraph}{(a) {\em t}${\le 6}$} Suppose that $T$ has at most six vertices.
\begin{description}
  \item [{\em t} = 2] it can happen only if $t=n=2$, and we add the only edge to $M$.
  \item [{\em t} = 4, 5] in this case we match four vertices. If $t=4$, $T$ could be a star or a path of length three, and in both cases we match all the vertices. If $t=5$, remove one of the leaves and match other four vertices.
  \item [{\em t} = 6] in this case we match all the vertices. If one of the leaves connected to a vertex of degree two, we match those two vertices and handle the rest as the case when $t=4$, otherwise, each leaf of $T$ is connected to a vertex of degree more than two, and hence $T'$ has at most two vertices. Figure \ref{bm:n-small}(a) shows the solution for the case when $T'$ has only one vertex and $T$ is a star; note that at least two angles are less than $\pi$. 
  Now consider the case when $T'$ has two vertices, $v_1$ and $v_2$, which have degree three in $T$. Figure \ref{bm:n-small}(b) shows the solution for the case when neither $v_1$ nor $v_2$ is an anchor. 
  Figure \ref{bm:n-small}(c) shows the solution for the case when $v_1$ is an anchor but $v_2$ is not.
  Figure \ref{bm:n-small}(d) shows the solution for the case when both $v_1$ and $v_2$ are anchors. Since $v_2$ is an anchor in Figure \ref{bm:n-small}(d), at least one of $\cw(b_2v_2v_1)$ and $\cw(v_1v_2a_2)$ is less than or equal to $\pi/2$. W.l.o.g. assume $\cw(v_1v_2a_2)\le\pi/2$. By Lemma \ref{bm:empty-quadrilateral-lemma} polygon $P=\{v_1,a_2,v_2,a_1\}$ is convex and empty. We add $(v_1,b_1)$, $(v_2,b_2)$, and $(a_1,a_2)$ to $M$.

  \item [{\em t} = 1, 3] this case could not happen. Initially $t=n$ is even. Consider Step 1; before each iteration $t$ is bigger than six and during the iteration two vertices are removed from $T$. So, at the end of Step 1, $t$ is at least five. Now consider Step 2; before each iteration $T''$ has at least two vertices and during the iteration at most one vertex is removed from $T''$. So, at the end of Step 2, $T''$ has at least one vertex that is connected to at least one anchor. This means that $t$ is at least four. Thus, $t$ could never be one or three before and during the execution of the algorithm.

\begin{figure}[ht]
  \centering
\setlength{\tabcolsep}{0in}
  $\begin{tabular}{cccc}
  \multicolumn{1}{m{.25\columnwidth}}{\centering\includegraphics[width=.21\columnwidth]{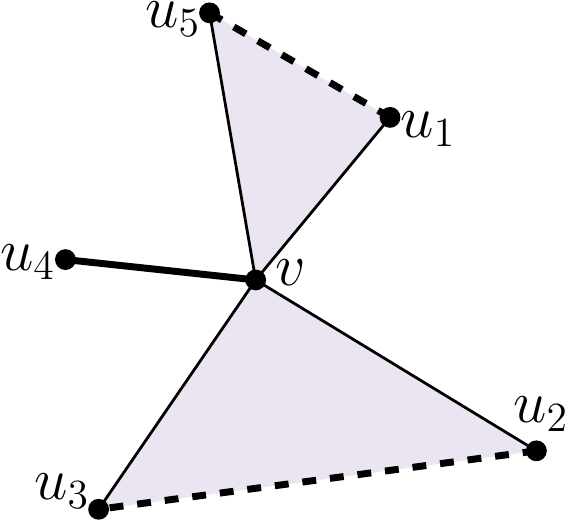}}
  &\multicolumn{1}{m{.25\columnwidth}}{\centering\includegraphics[width=.23\columnwidth]{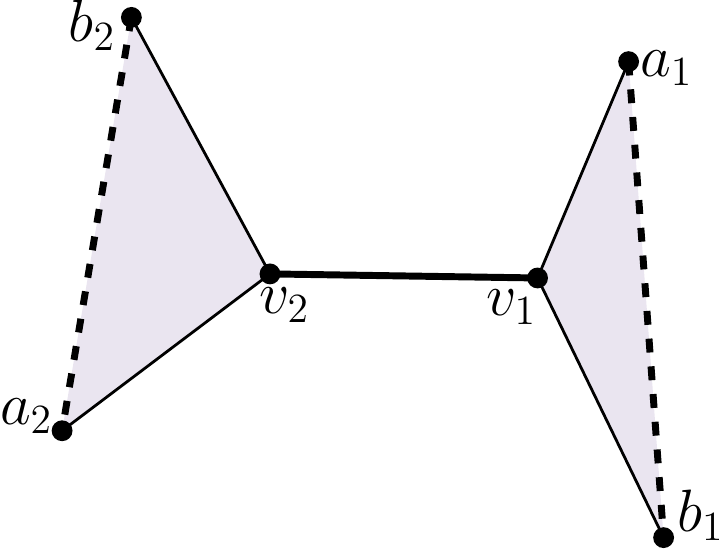}}
  &\multicolumn{1}{m{.25\columnwidth}}{\centering\includegraphics[width=.23\columnwidth]{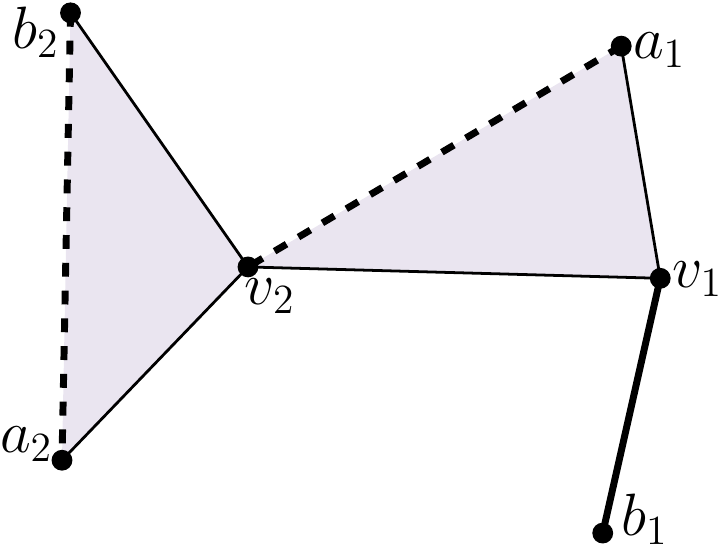}}
  &\multicolumn{1}{m{.25\columnwidth}}{\centering\includegraphics[width=.23\columnwidth]{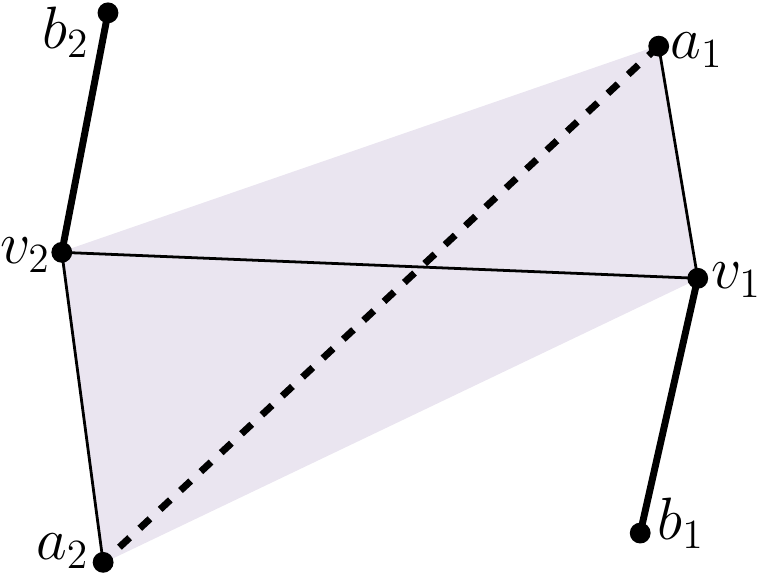}}\\
  (a) & (b)&(c)&(d)
  \end{tabular}$
  \caption{The bold (solid and dashed) edges are added to $M$ and all vertices are matched. (a) a star, (b) no anchor, (c) one anchor, and (d) two anchors.}
\label{bm:n-small}
\end{figure} 

\end{description}
\end{paragraph}

\begin{paragraph}{(b) $T''$ has one vertex}
In this case, the only vertex $w''\in T''$ is connected to at least two anchors, otherwise $w''$ would have been matched in Step 1. So we consider different cases when $w$ is connected to $k$, $2\le k \le 5$ anchors and $\ell \le 5-k$ leaves of $T$:
\begin{description}
 \item[{\em k} = 2] if $\ell=0,1,2$ we handle it as Case 2 in Step 2. If $\ell=3$, at least two leaves are consecutive, say $x_1$ and $x_2$. Since $cw(x_1wx_2) <\pi$ we add $(x_1,x_2)$ to $M$ and handle the rest like the case when $\ell=1$. 
\item[{\em k} = 3] if $\ell=2$ remove $x_2$ from $T$. Handle the rest as Case 3 in Step 2.
\item[{\em k} = 4] if $\ell=1$ remove $x_1$ from $T$. Handle the rest as Case 4 in Step 2. 
\item[{\em k} = 5] add $(v_5, a_5)$ to $M$, remove $a_5$, $b_5$, $v_5$ from $T$, and handle the rest as Case 4 in Step 2.
\end{description}
\end{paragraph}

This concludes the algorithm.

\begin{lemma}
\label{bm:convex-disjoint-lemma}
The convex empty regions that are considered in different iterations of the algorithm, do not overlap.
\end{lemma}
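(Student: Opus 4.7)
My plan is to show non-overlap by first cataloguing every type of convex empty region the algorithm can introduce, and then ruling out pairwise overlap by a boundary-crossing argument that reduces to the planarity of $MST(P)$ together with the earlier emptiness lemmas. The catalog consists of the triangles $\bigtriangleup u_1 v u_2$ produced in Step~1 when $k\ge 2$ at a non-anchor leaf $v'$; the quadrilaterals produced in Cases 1 and 2 of Step~2 (via Lemmas~\ref{bm:empty-quadrilateral-lemma} and \ref{bm:anchor2-vertex1-lemma}); the quadrilateral from Case 3 of Step~2; the pentagons $P_1,P_2$ from Case 4 of Step~2 (via Lemma~\ref{bm:separation-lemma}); and the analogous small polygons handled in the base cases. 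In each instance, the region $R$ is the convex hull of a constant-size subset of $P$, its interior is empty of $P$, and its boundary is a union of MST edges of $T$ plus at most one \emph{chord} which is precisely the matching edge added in that iteration.

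Next I would reduce overlap to boundary crossings. Suppose two regions $R_1,R_2$ produced in distinct iterations have overlapping interiors. Because both interiors are empty of $P$, no vertex of $R_1$ can lie in $\interior(R_2)$ nor vice versa, so neither region is contained in the other. Consequently an edge on $\partial R_1$ must cross an edge on $\partial R_2$ in a point interior to both. Boundary edges are either MST edges or chords. MST boundary edges of $R_1$ and $R_2$ together sit inside the globally planar tree $MST(P)$ and therefore cannot cross one another, so the crossing must involve at least one chord.

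The remaining task, which I expect to be the main work, is to exclude chord-versus-boundary crossings. Here I would leverage the bookkeeping of the algorithm: every chord of a region is a diagonal of the convex hull of that region, which by Lemmas~\ref{bm:empty-triangle-lemma}, \ref{bm:empty-quadrilateral-lemma}, \ref{bm:anchor2-vertex1-lemma}, and \ref{bm:separation-lemma} is empty of $P$ in its interior. If the chord $d$ of $R_1$ crossed an edge of $\partial R_2$, then $d$ would enter and exit $\interior(R_2)$, because the endpoints of $d$ are in $P$ and hence outside $\interior(R_2)$. Thus $d$ crosses $\partial R_2$ at two points, which in turn forces $d$ to cross at least one MST edge on $\partial R_2$; I would then show such an MST edge can be extended (using Lemma~\ref{bm:empty-triangle-lemma} applied at the common anchor/leaf vertex bounding $R_2$) to a pair of points of $P$ that must lie in $\interior(R_1)$, contradicting the emptiness of $R_1$. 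A symmetric argument rules out two chords crossing each other.

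The chief obstacle is the case analysis: because Step~1 (with $k=2,3,4$) leaves the rooting vertex $v$ alive in $T$, a later iteration may produce a region having $v$ on its boundary again, and one must verify that the new region lies on the opposite side of $v$ from $\bigtriangleup u_1 v u_2$; similarly, the persistence of $w$ across a Step~1 iteration followed by a Step~2 iteration at a neighbor requires checking all the relative clockwise orderings used in Lemma~\ref{bm:clockwise-order-lemma}. I would organize this case analysis by pairing iterations according to whether the surviving vertex is shared, and in each subcase invoke the angle bounds ($\ge \pi/3$ at every MST vertex) plus Lemma~\ref{bm:empty-triangle-lemma} to conclude that the two regions sit in disjoint angular sectors around the shared vertex, which closes the proof.
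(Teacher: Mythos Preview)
Your overall plan (overlap $\Rightarrow$ boundary crossing $\Rightarrow$ contradiction with the planarity of $MST(P)$, using emptiness to exclude vertex containment) is exactly the paper's argument, and the paper carries it out in three sentences: emptiness forbids any vertex of one region lying in the interior of the other, so overlap forces an MST edge associated with $P_1$ to cross an MST edge associated with $P_2$.

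However, your structural description of the regions is wrong, and this matters for the case analysis you outline. You claim each region's boundary consists of MST edges plus \emph{at most one} chord, namely the matching edge. That holds for the Step~1 triangles, but not for the quadrilaterals and pentagons. In the quadrilateral $\{v_i,b_i,w,x\}$ of Lemma~\ref{bm:empty-quadrilateral-lemma}, only two of the four sides ($v_ib_i$ and $wx$) are MST edges; the sides $b_iw$ and $xv_i$ are not, and the matching edge $b_ix$ is a \emph{diagonal}, not a boundary side. Likewise, the pentagon $\{a_1,v_1,v_2,b_2,w\}$ of Lemma~\ref{bm:separation-lemma} has only two MST sides ($a_1v_1$ and $v_2b_2$), three non-MST sides ($v_1v_2$, $b_2w$, $wa_1$), and the matching edge $a_1b_2$ is again a diagonal. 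The MST edges $wv_1,wv_2$ sit inside the pentagon as diagonals. So your plan to isolate ``the single chord'' on each boundary and then argue about chord-versus-MST crossings does not get off the ground as stated.

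What actually makes the paper's short argument work is that each region decomposes into triangles of the form $\bigtriangleup xvy$ with $vx,vy$ both MST edges (Lemma~\ref{bm:empty-triangle-lemma}); the quadrilateral is two such triangles glued along the MST diagonal $v_iw$, and the pentagon is three such triangles fanned out from $w$. Any overlap of two regions forces an overlap of two such constituent triangles, and that is where the MST-versus-MST crossing comes from. If you rebuild your argument around this triangle decomposition rather than the polygon boundaries, the case analysis you anticipate becomes much smaller and you recover the paper's one-line conclusion.
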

\begin{proof}
In Step 1, Step 2 and the base cases, we used three types of convex empty regions; see Figure \ref{bm:convex-disjoint-fig}.
Using contradiction, suppose that two convex regions $P_1$ and $P_2$ overlap. Since the regions are empty, no vertex of $P_1$ is in the interior of $P_2$ and vice versa. Then, one of the edges in $MST(P)$ that is shared with $P_1$ intersects some edge in $MST(P)$ that is shared with $P_2$, which is a contradiction.   
\end{proof}

\begin{figure}[ht]
  \centering
\setlength{\tabcolsep}{0in}
  $\begin{tabular}{ccc}
  \multicolumn{1}{m{.33\columnwidth}}{\centering\includegraphics[width=.2\columnwidth]{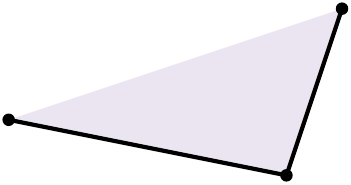}}
  &\multicolumn{1}{m{.33\columnwidth}}{\centering\includegraphics[width=.25\columnwidth]{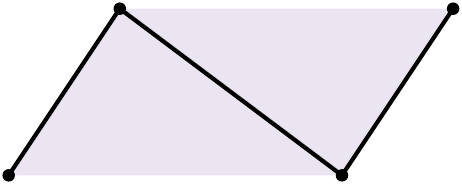}}
  &\multicolumn{1}{m{.33\columnwidth}}{\centering\includegraphics[width=.3\columnwidth]{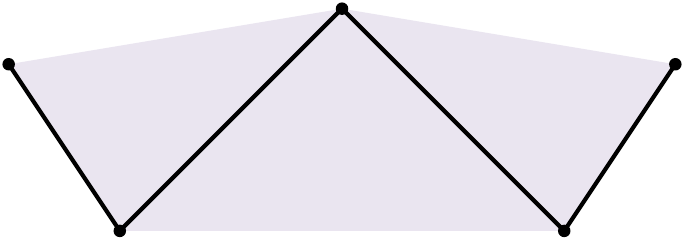}}
  \end{tabular}$
  \caption{Empty convex regions. Bold edges belong to $MST(P)$.}
\label{bm:convex-disjoint-fig}
\end{figure}

\begin{theorem}
Let $P$ be a set of $n$ points in the plane, where $n$ is even, and let $\btopt$ be the minimum bottleneck length of any plane perfect matching of $P$. In $O(n\log n)$ time, a plane matching of $P$ of size at least $\frac{2n}{5}$ can be computed, whose bottleneck length is at most $(\sqrt{2}+\sqrt{3})\btopt$.
\end{theorem}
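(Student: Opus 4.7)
The plan is to verify that the iterative algorithm described in Section~\ref{bm:bottleneck-five-over-four} simultaneously achieves four properties: (P1) planarity, (P2) matching size $\ge \frac{2n}{5}$, (P3) edge length $\le (\sqrt{2}+\sqrt{3})\btopt$, and (P4) total running time $O(n \log n)$. The preliminary setup in the excerpt already reduces the problem to proving these four properties on each tree $T_i$ of the Kruskal-forest $F$: Lemma~\ref{bm:longest-edge} guarantees $|e_l| \le \btopt$, so every edge of every $T_i$ has length at most $\btopt$, and it suffices to produce a plane matching $M_i$ of $P_i$ of size $\ge \frac{2 n_i}{5}$ with bottleneck at most $(\sqrt{2}+\sqrt{3})\btopt$, and return $\bigcup_i M_i$.

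For (P1), I would argue that every edge added to $M_i$ lies inside one of the empty convex regions certified by the structural lemmas: Lemma~\ref{bm:empty-triangle-lemma} handles the triangles used in Step~1 and the \emph{k}=2 base cases; Lemma~\ref{bm:empty-quadrilateral-lemma} handles the quadrilaterals $\{v_i,x,w,a_i\}$ and $\{v_i,b_i,w,x\}$ used in Case~1 of Step~2; Lemma~\ref{bm:anchor2-vertex1-lemma} handles Case~2 and Case~3; and Lemma~\ref{bm:separation-lemma} handles the pentagons of Case~4. Edges inside a single convex empty region clearly cannot cross, and Lemma~\ref{bm:convex-disjoint-lemma} (together with the fact that a processed region's vertex set is deleted from $T$) extends this to edges from different iterations, establishing planarity of $M$.

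For (P2), I would do a case-by-case count of the ratio (edges added) / (vertices removed) in every move, relying on the fact that the algorithm terminates only after all vertices have been processed. Step~1 always removes $2$ vertices per edge (ratio $\tfrac12$); Step~2 worst-cases appear in Case~1 with $\ell=1$ (ratio $\tfrac{2}{5}$), Case~3 with $\ell=0$ (ratio $\tfrac{4}{10} = \tfrac{2}{5}$), and the $n_i=5$ base case (ratio $\tfrac{2}{5}$); all remaining sub-cases yield $\ge \tfrac{2}{5}$ (e.g.\ Case~2, $\ell=0$ gives $\tfrac{3}{7}$, Case~4 gives $\tfrac{6}{13}$). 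Summing over the disjoint groups of removed vertices in $T_i$ gives $|M_i| \ge \tfrac{2 n_i}{5}$, and summing over all trees yields $|M| \ge \tfrac{2n}{5}$.

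For (P3), the crucial point is that \emph{every} edge $(p,q) \in M$ corresponds to a path in $T_i$ of length at most four, whose turning angles are constrained by the MST minimum-angle-$\pi/3$ rule combined with the anchor property ($\cw(a_i v_i b_i) \ge \pi$). Edges that are themselves tree edges have length $\le \btopt$ trivially. Diagonals of an empty triangle (two tree edges meeting at angle $<\pi$) have length $\le 2\btopt$. The longer quadrilateral and pentagon diagonals (such as $(a_1,b_2)$ in Case~4 across $a_1\text{-}v_1\text{-}w\text{-}v_2\text{-}b_2$) will need a careful law-of-cosines estimate: the anchor condition forces the angles $\cw(w v_1 a_1), \cw(w v_2 b_2) \le 2\pi/3$, and the MST-min-angle condition at $w$ forces $\cw(v_1 w v_2) \le 2\pi/3$ in the configurations used; an explicit trigonometric optimization over these constrained angles yields the tight bound $(\sqrt{2}+\sqrt{3})\btopt$. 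I expect this trigonometric verification---especially checking that the selection rules used in Case~4 (picking the polygon minimizing the $\alpha_i$) and in Case~1 ($\alpha_1 \le \pi/2$ versus $\alpha_1 > \pi/2$) are exactly what is needed to keep the worst-case diagonal at $(\sqrt{2}+\sqrt{3})\btopt$---to be the main obstacle of the proof, since a naive triangle-inequality bound only gives $4\btopt$.

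For (P4), $DT(P)$ is computed in $O(n\log n)$; sorting its $O(n)$ edges and running Kruskal to obtain $F$ is $O(n \log n)$; Monma--Suri~\cite{Monma1992} gives a $5$-bounded MST of each component in $O(n \log n)$; the iterative Step~1/Step~2 procedure on each $T_i$ can be implemented on the skeleton trees with simple linked-list updates in $O(n_i)$ time. Summing, total time is $O(n \log n)$.
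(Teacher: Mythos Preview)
Your plan matches the paper's proof almost exactly: (P1) via Lemma~\ref{bm:convex-disjoint-lemma}, (P2) via the same case-by-case vertex/edge count (the tight ratios are indeed $k=\ell=1$, $k=3,\ell=0$, and $t=5$), and (P4) via the same $O(n\log n)$ pipeline.

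For (P3) your diagnosis is right but your preliminary angle bounds are slightly off. The anchor condition by itself only gives $\cw(w\,v_1\,a_1)+\cw(b_1\,v_1\,w)\le\pi$ at $v_1$ (and similarly $\cw(b_2\,v_2\,w)\le 2\pi/3$ at $v_2$ after using the MST $\pi/3$ bound on $\cw(w\,v_2\,a_2)$); with both angles at $2\pi/3$ you would only get $|a_1b_2|\le 2\sqrt{3}\,\btopt$. The paper gets the sharper constant exactly as you suspected: the Case~4 selection rule ``pick the pentagon minimizing $\alpha_i$'' forces $\cw(w\,v_1\,a_1)\le\pi/2$ at the \emph{shared} vertex $v_1$, so $|a_1w|\le\sqrt{2}\,\btopt$ and $|wb_2|\le\sqrt{3}\,\btopt$, and the triangle inequality through $w$ finishes it. The angle at $w$ itself is not used.
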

\begin{proof}

 {\em Proof of planarity}: In each iteration, in Step 1, Step 2, and in the base cases, the edges added to $M$ are edges of $MST(P)$ or edges inside convex empty regions. By Lemma \ref{bm:convex-disjoint-lemma} the convex empty regions in each iteration will not be intersected by the convex empty regions in the next iterations. Therefore, the edges of $M$ do not intersect each other and $M$ is plane.

{\em Proof of matching size}: In Step 1, in each iteration, all the vertices which are excluded from $T$ are matched. In Step 2, when $k=\ell=1$ we match four vertices out of five, and when $k=3, \ell =0$ we match eight vertices out of ten. In base case (a) when $t=5$ we match four vertices out of five. In base case (b) when $k=3, \ell =0$ we match eight vertices out of ten. In all other cases of Step 2 and the base cases, the ratio of matched vertices is more than $4/5$. Thus, in each iteration at least $4/5$ of the vertices removed from $T$ are matched and hence $|M_i|\ge\frac{2n_i}{5}$. Therefore, 
$$
 |M|= \sum_{i=1}^{k} |M_i| \ge \sum_{i=1}^{k} \frac{2n_i}{5} = \frac{2n}{5}.
$$

{\em Proof of edge length}: By Lemma \ref{bm:longest-edge} the length of edges of $T$ is at most $\btopt$. Consider an edge $e\in M$ and the path $\delta$ between its end points in $T$. If $e$ is added in Step 1, then $|e|\le 2\btopt$ because $\delta$ has at most two edges. If $e$ is added in Step 2, $\delta$ has at most three edges ($|e|\le 3\btopt$) except in Case 4. In this case we look at $\delta$ in more detail. We consider the worst case when all the edges of $\delta$ have maximum possible length $\btopt$ and the angles between the edges are as big as possible; see Figure \ref{bm:path-fig}. Consider the edge $e=(a_1,b_2)$ added to $M$ in Case 4. Since $v_2$ is an anchor and $\cw(wv_2a_2)\ge \pi/3$, the angle $\cw(b_2v_2w)\le 2\pi/3$. As our choice between $P_1$ and $P_3$ in Case 4, $\cw(w,v_1,a_1)\le\pi/2$. Recall that $e$ avoids $w$, and hence $|e|\le(\sqrt{2}+\sqrt{3})\btopt$. The analysis for the base cases is similar.

\begin{figure}[ht]
  \centering
    \includegraphics[width=0.52\textwidth]{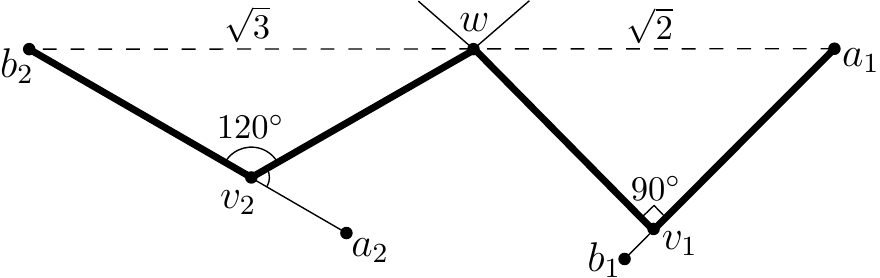}
  \caption{Path $\delta$ (in bold) with four edges of length $\btopt$ between end points of edge $(a_1,b_2)$.}
\label{bm:path-fig}
\end{figure}

{\em Proof of complexity}: The Delaunay triangulation of $P$ can be computed in $O(n \log n)$ time. Using Kruskal's algorithm, the forest $F$ of even components can be computed in $O(n\log n)$ time. 
In Step 1 (resp. Step 2) in each iteration, we pick a leaf of $T'$ (resp. $T''$) and according to the number of leaves (resp. anchors) connected to it we add some edges to $M$. Note that in each iteration we can update $T$, $T'$ and $T''$ by only checking the two hop neighborhood of selected leaves. Since the two hop neighborhood is of constant size, we can update the trees in $O(1)$ time in each iteration. 
Thus, the total running time of Step 1, Step 2, and the base cases is $O(n)$ and the total running time of the algorithm is $O(n\log n)$.
\end{proof}

\section{Conclusions}
\label{bm:conclusion}
We considered the NP-hard problem of computing a bottleneck plane perfect matching of a point set. Abu-Affash et al. \cite{Abu-Affash2014} presented a $2\sqrt{10}$-approximation for this problem. We used the maximum plane matching problem in unit disk graphs (UDG) as a tool for approximating a bottleneck plane perfect matching. In Section \ref{bm:three-approximation} we presented a polynomial time algorithm which computes a plane matching of size $\frac{n}{6}$ in UDG. We also presented a $\frac{2}{5}$-approximation algorithm for computing a maximum matching in UDG. By extending this algorithm we showed how one can compute a bottleneck plane matching of size $\frac{n}{5}$ with edges of optimum-length. A modification of this algorithm gives us a plane matching of size at least $\frac{2n}{5}$ with  edges of length at most $\sqrt{2}+\sqrt{3}$ times the optimum. 

\bibliographystyle{abbrv}
\bibliography{../thesis}
\chapter{Bottleneck Plane Matchings in Bipartite Graphs}
\label{ch:bmb}

Given a set of $n$ red points and $n$ blue points in the plane, we are interested to match the red points with the blue points by straight line segments in such a way that the segments do not cross each other and the length of the longest segment is minimized. In general, this problem in NP-hard. We give exact solutions for some special cases of the input point set. 

\vspace{10pt}
This chapter is published is published in the proceedings of the 26th Canadian Conference on Computational Geometry (CCCG 2014) \cite{Biniaz2014-bichromatic}. 

\section{Introduction}
We study the problem of computing a bottleneck non-crossing matching of red and blue points in the plane.
Let $R=\{r_1, \dots, r_n\}$ be a set of $n$ red points and $B=\{b_1,\dots,b_n\}$ be a set of $n$ blue points in the plane. A {\em RB-matching} is a non-crossing perfect matching of the points by straight line segments in such a way that each segment has one endpoint in $B$ and one in $R$.  
The length of the longest edge in an RB-matching $M$ is known as {\em bottleneck} which we denote by $\lambda_M$. The {\em bottleneck bichromatic matching} (BBM) problem is to find a non-crossing matching $M^*$ with minimum bottleneck $\lambda^*$. Carlsson et al. \cite{Carlsson2010} showed that the bottleneck bichromatic matching problem is NP-hard. Moreover, when all the points have the same color, the bottleneck non-crossing perfect matching problem is NP-hard \cite{Abu-Affash2014}.

Notice that, the bottleneck (possibly crossing) perfect matching of red and blue points can be computed exactly in $O(n^{1.5}\log n)$ time \cite{Efrat2001}.
In addition, a non-crossing perfect matching of red and blue points always exists and can be computed in $O(n\log n)$ time by applying the ham sandwich cut recursively. In \cite{Aloupis2013} the authors considered the problem of non-crossing matching of points with different geometric objects. 

In this chapter we present exact solutions for some special cases of the BBM problem when the points are arranged in convex position, boundary of a circle, and on a line. For simplicity, in the rest of the chapter we refer to a RB-matching as a ``matching''.

\section{Points in Convex Position}
\label{bmb:convex}
In this section we deal with the case when $R\cup B$ form the vertices of a convex polygon. Carlsson et al. \cite{Carlsson2010} presented an $O(n^4\log n)$-time algorithm for points on convex position. We improve their result to $O(n^3)$ time. Let $P$ denote the union of $R$ and $B$, that is $P=\{r_1,\dots,r_n, b_1, \dots, b_n\}$. We have the following observation:

\begin{observation}
\label{bmb:obs1}
 Let $(r_i, b_j)$ be an edge in any RB-matching of $P$, then there are the same number of red and blue points on each side of the line passing through $r_i$ and $b_j$.
\end{observation}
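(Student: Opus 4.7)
The plan is to argue directly from the non-crossing property of the matching combined with the fact that $P=R\cup B$ is in convex position. First I would note that since the $2n$ points of $P$ are the vertices of a convex polygon, the chord $\overline{r_ib_j}$ partitions the remaining $2n-2$ vertices into two sets $S_1$ and $S_2$, namely the vertices lying strictly on each of the two sides of the line through $r_i$ and $b_j$. Our goal is to show $|S_1\cap R|=|S_1\cap B|$ and $|S_2\cap R|=|S_2\cap B|$.

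The main step is the following observation: every edge of the matching other than $(r_i,b_j)$ lies entirely in $S_1$ or entirely in $S_2$. Indeed, suppose some edge $(r_k,b_l)$ of the matching had one endpoint in $S_1$ and the other in $S_2$. Because all four points $r_i,b_j,r_k,b_l$ lie on the boundary of a convex polygon and $r_k,b_l$ are separated by the chord $\overline{r_ib_j}$, the segment $\overline{r_kb_l}$ must cross $\overline{r_ib_j}$. This contradicts the non-crossing property of the RB-matching.

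Consequently, the matching restricted to $S_1$ is a perfect matching of $S_1$, and likewise for $S_2$. Since every edge of an RB-matching pairs one red point with one blue point, the equalities $|S_1\cap R|=|S_1\cap B|$ and $|S_2\cap R|=|S_2\cap B|$ follow immediately. I do not anticipate any real obstacle here: the whole argument is a short structural consequence of convex position together with planarity of the matching, and no case analysis or quantitative estimate is needed.
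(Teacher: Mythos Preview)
Your argument is correct and is exactly the natural justification the paper implicitly relies on; the paper states this as an Observation without proof, and your proof is the standard one-paragraph verification using convex position and the non-crossing property.
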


Using Observation \ref{bmb:obs1}, we present a dynamic programming algorithm which solves the BBM problem for $P$. For simplicity of notation, let $P=\{p_1 ,\dots, p_{2n}\}$ denote the sequence of the vertices of the convex polygon in counter clockwise order, starting at an arbitrary vertex $p_1$; see Figure \ref{bmb:convex-fig}. 
By Observation \ref{bmb:obs1}, we denote $(p_i,p_j)$ as a {\em feasible edge} if $p_i$ and $p_j$ have different colors and the sequence $p_{i+1},\dots, p_{j-1}$ contains the same number of red and blue points. In other words we say that $p_j$ is a {\em feasible match} for $p_i$, and vice versa. Let $F_i$ denote the set of feasible matches for $p_i$. Figure \ref{bmb:convex-fig} shows that $F_1=\{p_4, p_8, p_{10}\}$. Therefore, we define a weight function $w$ which assigns a weight $w_{i,j}$ to each pair $(p_i,p_j)$, where
$$w_{i,j} =\left\{
  \begin{array}{l l}
    |p_ip_j| &  \text{: if $(p_i,p_j)$ is a feasible edge}\\
    +\infty &  \text{: otherwise}
  \end{array} \right.$$

\begin{figure}[ht]
  \centering
    \includegraphics[width=0.4\textwidth]{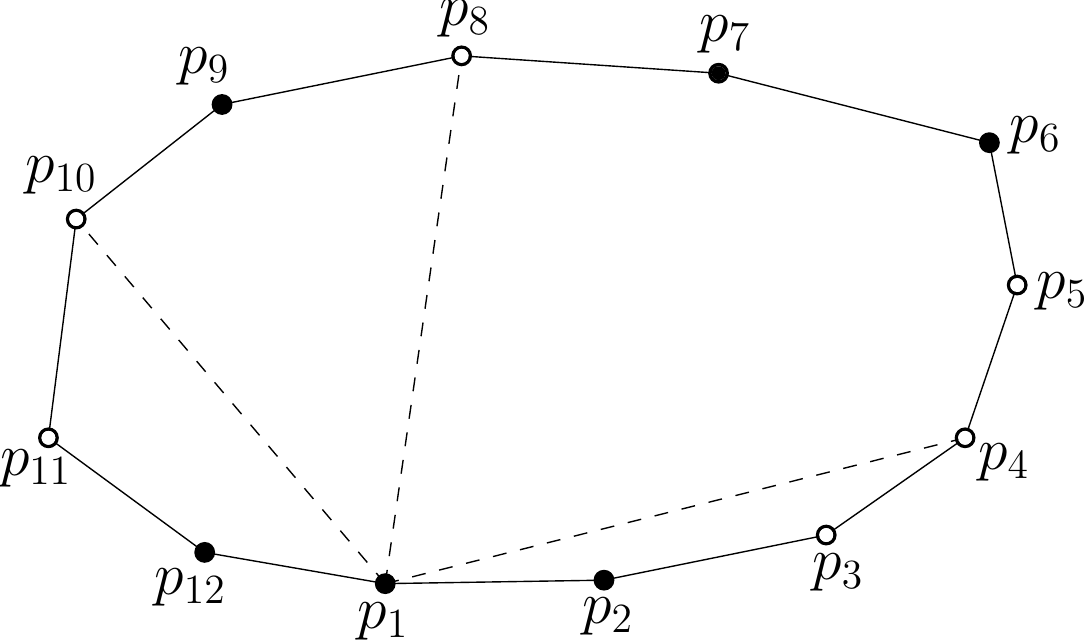}
  \caption{Points arranged on convex position.}
\label{bmb:convex-fig}
\end{figure}

Consider any subsequence $P_{i,j}=\{p_i, \dots,p_j\}$ of $P$, where $1\le i<j\le2n$. Let
$A[i, j]$ denote the bottleneck of the optimal matching in $P_{i,j}$ if $P_{i,j}$ has an RB-matching; otherwise, $A[i,j]=+\infty$. So $A[1, 2n]$
denotes the optimal solution for $P$. We use dynamic programming to compute $A[1, 2n]$.
We derive a recurrence for $A[i,j]$. For a feasible edge $(p_i,p_k)$ where $i+1\le k\le j$ and $p_k\in F_i$, the values of the sub-problems to the left and right of $(p_i,p_k)$ are $A[i+1,k-1]$ and $A[k+1,j]$. We match $p_i$ to a feasible point $p_k$ which minimizes the bottleneck. Thus,
$$
\begin{tabular}{l}
$A[i,j]=\min\limits_{\shortstack{$\scriptstyle i+1\le k\le j $\\$\scriptstyle p_k\in F_i$}}\{\max\{w_{i,k},A[i+1,k-1],A[k+1,j]\}\}.$
\end{tabular}
$$

The size of $A$ (which is the total number of sub-problems) is $O(n^2)$. For each sub-problem $A[i,j]$ we have at most $k=j-i$ lookups in $A$. Therefore, the total running time is $O(n^3)$.

\begin{theorem}
Given a set $B$ of $n$ blue points and a set $R$ of $n$ red points in convex position, one can compute a bottleneck non-crossing RB-matching in time $O(n^3)$ and in space $O(n^2)$.
\end{theorem}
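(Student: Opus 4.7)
The plan is to verify that the dynamic-programming recurrence displayed just above the theorem statement correctly computes the bottleneck of an optimal RB-matching, and then to argue the running time and space bounds. First I would establish correctness. Fix an optimal non-crossing RB-matching $M^*$ of $P_{i,j}=\{p_i,\dots,p_j\}$ and let $(p_i,p_k)$ be the edge of $M^*$ incident to $p_i$. Because $M^*$ is non-crossing, no edge of $M^*\setminus\{(p_i,p_k)\}$ can have one endpoint in $\{p_{i+1},\dots,p_{k-1}\}$ and the other in $\{p_{k+1},\dots,p_j\}$; hence $M^*$ restricts to an RB-matching on each of the two subsequences $P_{i+1,k-1}$ and $P_{k+1,j}$. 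In particular, each subsequence is color-balanced, so by the convex-position version of Observation~\ref{bmb:obs1}, $(p_i,p_k)$ is a feasible edge, i.e. $p_k\in F_i$. The bottleneck of $M^*$ is therefore $\max\{w_{i,k},A[i+1,k-1],A[k+1,j]\}$, and the recurrence picks the feasible $k$ that minimizes this quantity. The converse (that any choice of feasible $k$ together with optimal subsolutions yields a valid non-crossing matching of $P_{i,j}$) is immediate, since the feasibility of $(p_i,p_k)$ ensures both subsequences have matchings, and convex position together with the split at $(p_i,p_k)$ prevents crossings between the two parts.

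Next I would handle base cases and indexing: $A[i,j]=0$ when $j<i$ (empty subsequence), and $A[i,j]=+\infty$ whenever $j-i+1$ is odd or when $P_{i,j}$ is not color-balanced; a single-edge case $A[i,i+1]=w_{i,i+1}$ falls out of the recurrence. I would fill the table $A[i,j]$ in order of increasing length $j-i+1$, restricted to even lengths, so that $A[i+1,k-1]$ and $A[k+1,j]$ are always available when $A[i,j]$ is computed. The final answer is $A[1,2n]$, and the matching itself is reconstructed in the usual way by storing, for each $A[i,j]$, the argmin index $k$ and recursing.

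For the complexity I would precompute, in $O(n^2)$ time and space, the prefix counts of red and blue points so that color-balance of any $P_{i,j}$ can be tested in $O(1)$; this lets the feasibility test $p_k\in F_i$ and the weight $w_{i,k}$ be evaluated in constant time. There are $\Theta(n^2)$ entries $A[i,j]$, and each is computed by minimizing over at most $j-i=O(n)$ choices of $k$, giving the claimed $O(n^3)$ time and $O(n^2)$ space.

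I do not expect a genuine obstacle here, since the recurrence is already written down; the only delicate point is the correctness argument that an optimal matching of $P_{i,j}$ really splits along the edge incident to $p_i$. The hard part will be stating this cleanly: one must combine the non-crossing property with convex position to conclude that the split is faithful, and then invoke Observation~\ref{bmb:obs1} on the sub-polygon to get feasibility of the splitting edge. Everything else is straightforward bookkeeping.
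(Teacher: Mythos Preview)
Your proposal is correct and follows essentially the same approach as the paper: the paper sets up exactly this recurrence, justifies it via Observation~\ref{bmb:obs1}, and then gives the same $O(n^2)$-subproblems, $O(n)$-lookups-each complexity analysis. Your write-up is in fact more careful than the paper's about the correctness of the split (the paper leaves that implicit), but there is no methodological difference.
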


Note that in \cite{Abu-Affash2014} the authors showed that for points in convex position and when all the points have the same color, a bottleneck plane matching can be computed in $O(n^3)$ time and $O(n^2)$ space via dynamic programming. They obtained the same time and space complexities for the bichromatic set of points. 

\subsection{Points on Circle}
\label{bmb:circle}
In this section we consider the BBM problem when the points in $R$ and $B$ are arranged on the boundary of a circle. Clearly, we can use the same algorithm as for points in convex position to solve this problem in $O(n^3)$ time. But for points on a circle we can do better; we present an algorithm running in $O(n^2)$ time. Consider $P=\{p_1 ,\dots, p_{2n}\}$ as the sequence of the points in counter clockwise order on a circle. We prove that there is an optimal matching $M^*$, such that each point $p_i\in P$ is connected to its first feasible match in the clockwise or counter clockwise order from $p_i$.

\begin{figure}[ht]
  \centering
\setlength{\tabcolsep}{0in}
  $\begin{tabular}{cc}
  \multicolumn{1}{m{.5\columnwidth}}{\centering\includegraphics[width=.3\columnwidth]{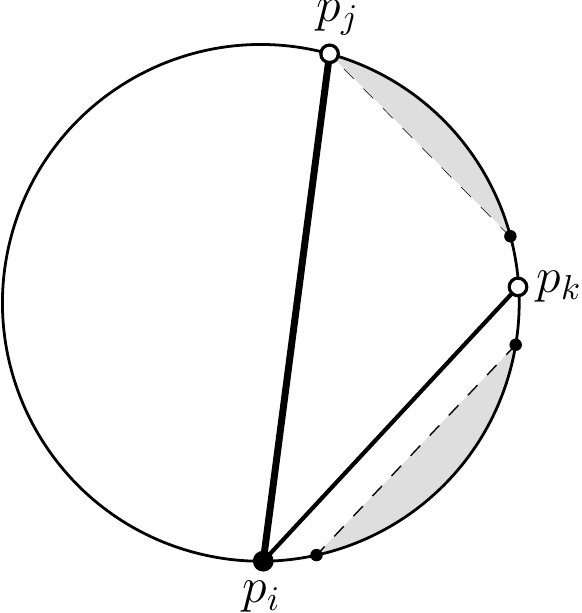}}
  &\multicolumn{1}{m{.5\columnwidth}}{\centering\includegraphics[width=.3\columnwidth]{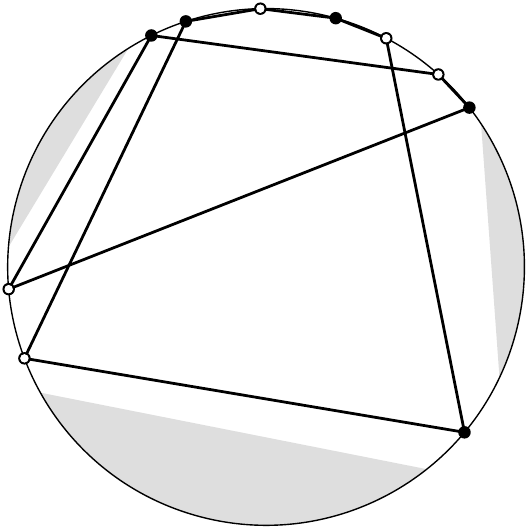}}\\
  (a) & (b)
  \end{tabular}$
  \caption{(a) illustrating the proof of Lemma \ref{bmb:circle-lemma}, (b) the resulting graph of procedure CompareToOpt.}
\label{bmb:circle-fig}
\end{figure}

\begin{lemma}
\label{bmb:circle-lemma}
There is an optimal RB-matching for a point set $P$ on a circle, such that each 
$p_i\in P$ is connected to its first feasible match in the clockwise or counter clockwise order from $p_i$. 
\end{lemma}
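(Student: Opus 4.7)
I plan to prove this lemma by an exchange argument. Among all optimal non-crossing RB-matchings, take $M^{*}$ minimizing the secondary quantity $\Sigma(M)=\sum_{(p,q)\in M}\theta(p,q)$, where $\theta(p,q)$ is the length of the shorter circular arc between $p$ and $q$. Suppose, for contradiction, that some $(p_i,p_j)\in M^{*}$ is \emph{violating}: $p_j$ is neither the first feasible clockwise nor the first feasible counterclockwise match of $p_i$. At least one of the two arcs from $p_i$ to $p_j$ has length $\le \pi$; by relabeling the two directions, assume the clockwise arc from $p_i$ to $p_j$ is this minor arc, of angular length $\alpha\le\pi$. Let $p_k$ be the first feasible clockwise match of $p_i$; by the violation hypothesis $p_k$ lies strictly between $p_i$ and $p_j$ on this minor clockwise arc.

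Since $M^{*}$ is non-crossing and the chord $(p_i,p_j)$ separates the disk, the partner $p_l$ of $p_k$ in $M^{*}$ also lies on the clockwise arc from $p_i$ to $p_j$. In the main subcase, cyclic order $p_i,p_k,p_l,p_j$, I perform the swap $(p_i,p_j),(p_k,p_l)\;\longrightarrow\;(p_i,p_k),(p_l,p_j)$ and verify: (i) feasibility of $(p_l,p_j)$ via an additive count of reds and blues over the three sub-arcs $p_i$-$p_k$, $p_k$-$p_l$, $p_l$-$p_j$, using the feasibility of $(p_i,p_j),(p_i,p_k),(p_k,p_l)$ and the fact that the four endpoints alternate in color; (ii) non-crossing, since the new chords lie inside the clockwise arc from $p_i$ to $p_j$ on the same side of $(p_i,p_j)$ as the removed chord $(p_k,p_l)$; (iii) the bottleneck bound, using that $\alpha\le\pi$ implies that the map $\beta\mapsto 2R\sin(\beta/2)$ is monotone on $[0,\alpha]$, so both new chords are bounded by $|p_ip_j|\le\lambda^{*}$. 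Hence the resulting $M'$ is also optimal, while $\Sigma(M')<\Sigma(M^{*})$, contradicting the choice of $M^{*}$.

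In the remaining subcase, cyclic order $p_i,p_l,p_k,p_j$ (so $p_l$ is between $p_i$ and $p_k$), the direct swap above would introduce crossings. I handle it by applying the same argument recursively inside the strictly smaller clockwise minor arc from $p_i$ to $p_k$, which has equal reds and blues by feasibility of $(p_i,p_k)$ and contains $p_l$: I take the first feasible clockwise match of $p_l$ within this sub-arc and reduce to the main subcase. The recursion terminates because each step strictly decreases the number of points enclosed.

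The main obstacle is the bottleneck inequality in step (iii), since on the circle the chord length $2R\sin(\beta/2)$ is not monotone in the arc $\beta$ once $\beta$ exceeds $\pi$. The cure is exactly the WLOG reduction to the minor arc (length $\le\pi$) from $p_i$ to $p_j$: this is always available because the violation of both the clockwise and counterclockwise first-feasible conditions forces the relevant first feasible match to lie strictly inside the corresponding minor arc. On that arc, monotonicity of $\sin(\cdot/2)$ on $[0,\pi/2]$ yields the required bottleneck bound and makes each swap in the recursion safe.
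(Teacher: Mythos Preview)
Your exchange argument has a real gap in step~(ii), the non-crossing check for the main subcase. Saying that the new chords $(p_i,p_k)$ and $(p_l,p_j)$ ``lie on the same side of $(p_i,p_j)$ as the removed chord $(p_k,p_l)$'' is not enough. In $M^{*}$ there can be edges $(x,y)$ with $x$ in the open arc $(p_i,p_k)$ and $y$ in the open arc $(p_l,p_j)$: such an edge crosses neither $(p_i,p_j)$ nor $(p_k,p_l)$, so it is perfectly compatible with $M^{*}$, but after your swap it crosses \emph{both} new chords. Concretely, take clockwise order $p_i,x,x',p_k,p_l,y',y,p_j$ on the minor arc with colors $R,R,B,B,R,R,B,B$ and $M^{*}$ containing $(p_i,p_j),(p_k,p_l),(x,y),(x',y')$; then $p_k$ is the first feasible clockwise match of $p_i$, the cyclic order is $p_i,p_k,p_l,p_j$, and your swap to $(p_i,p_k),(p_l,p_j)$ creates crossings with $(x,y)$ and $(x',y')$. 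You would need an additional argument that the $\Sigma$-minimizer cannot contain such ``spanning'' edges, and none is given. The treatment of the remaining subcase is also too sketchy: ``take the first feasible clockwise match of $p_l$ within this sub-arc and reduce to the main subcase'' does not explain which two edges are swapped or why the result is globally non-crossing.

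The paper's proof avoids this difficulty by not attempting a local two-edge swap at all. It simply observes that every chord between two points of the minor arc $\widehat{p_ip_j}$ has length at most $|p_ip_j|$, and therefore one may rematch \emph{all} points on that minor arc (including $p_i$ and $p_j$) recursively so that each point is connected to its first feasible match within the arc; any such rematching has bottleneck at most $|p_ip_j|\le\lambda^{*}$. Repeating this for every edge of $M^{*}$ yields the desired matching. The key difference is that the paper discards the entire sub-matching inside the minor arc rather than trying to preserve it, which sidesteps the crossing issue your swap runs into.
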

\begin{proof}
Consider an optimal matching $M^*$ with an edge $(p_i,p_j)$. Consider two arcs $\widehat{p_ip_j}$ and $\widehat{p_jp_i}$. W.l.o.g. let $\widehat{p_ip_j}$ be the smaller one. Clearly, the distance between any two points on $\widehat{p_ip_j}$ is at most $|p_ip_j|$. If $p_{i+1},\dots,p_{j-1}$ contains no feasible match for $p_i$, then $p_j$ is the first feasible match to the right of $p_i$. Otherwise, let $p_k$ be the first feasible match for $p_i$ in $\widehat{p_ip_j}$; see Figure \ref{bmb:circle-fig}(a). By connecting $p_i$ to $p_k$ we have two smaller arcs $\widehat{p_{i+1}p_{k-1}}$ and $\widehat{p_{k+1}p_j}$. Obviously,  $|p_ip_k|<|p_ip_j|$, and any matching of the vertices on $\widehat{p_{i+1}p_{k-1}}$ and $\widehat{p_{k+1}p_j}$ have the bottleneck smaller than $|p_ip_j|$. By repeating this process for all edges of $M^*$ and all new edges, we obtain a matching $M$ which satisfies the statement of the lemma and $\lambda_M\le \lambda^*$. 
\end{proof}

As a result of Lemma \ref{bmb:circle-lemma}, for each point $p_i \in P$, we consider at most two feasible matches in $F_i$. Thus, using the dynamic programming idea of the previous section, for each sub-problem $A[i,j]$ we have at most two lookups in $A$. Thus, it takes $O(n^2)$ time to fill the table $A$. By preprocessing $P$, for each point $p_i\in P$ we can find its first matched points in $O(n^2)$ time. Thus, the total running time of the algorithm is $O(n^2)$.

\subsubsection{A Faster Algorithm}
Let $R=\{r_1, \dots, r_n\}$ be a set of $n$ red points and $B=\{b_1,\dots,b_n\}$ be a set of $n$ blue points on the boundary of a circle $C$. Without loss of generality let $P=\{p_1,\dots,p_{2n}\}$ be the clockwise ordered set of all the points. In this section we present an $O(n\log n)$ time algorithm which solves the BBM problem for $P$. 

Let $F_i$ denote the first feasible matches of $p_i$ in clockwise and counter-clockwise order. Note that $|F_i|\le 2$. We describe how one can compute $F_i$ for all points in $P$ in linear time. First, consider the case that we are looking for the first clockwise-feasible match for each red point. We make a copy $P'$ of $P$. Consider an empty stack, and start from an arbitrary red point $r_{start}$ and walk on $P'$ clockwise. If we see a red point, push it onto the stack. If we see a blue point $p_j$ and the stack is not empty, we pop a red point $p_i$ from the stack and add $p_j$ to $F_i$, and delete $p_i$ and $p_j$ from $P'$. If we see a blue point $p_j$ and stack is empty, we do nothing. The process stops as soon as we find the proper match for each red vertex. As we visit each point in $P$ at most twice, this step takes linear time. We can do the same process for the counter-clockwise order. Therefore, $F_i$ for all $1\le i\le 2n$ can be computed in $O(n)$ time. 

Let $F$ denote the set of all feasible edges, in sorted order of their lengths. Let $G$ be the graph with vertex set $P$ and edge set $F$. Note that the degree of each vertex in $G$ is at most two and hence the total number of edges is $2n$. Let $G_\lambda$ be the subgraph of $G$ containing all the edges of length at most $\lambda$.
Our algorithm performs a binary search on the edges in $G$ and for each considered edge $e$, we use the following procedure to decide whether $G_\lambda$, where $\lambda=|e|$, has a non-crossing perfect matching. The running time of the algorithm is $O(n\log n)$. 

\begin{figure*}[ht]
  \centering
\setlength{\tabcolsep}{0in}
  $\begin{tabular}{ccc}
  \multicolumn{1}{m{.33\textwidth}}{\centering\includegraphics[width=.31\textwidth]{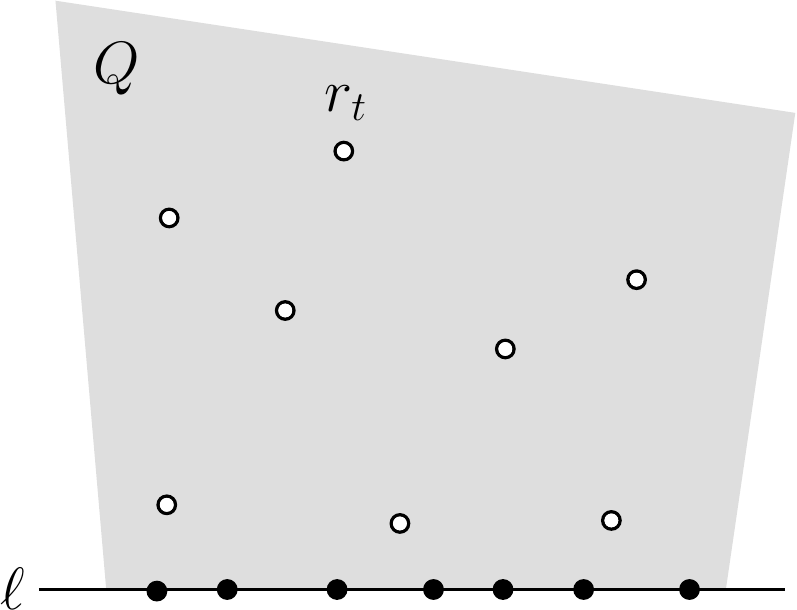}}
  &\multicolumn{1}{m{.33\textwidth}}{\centering\includegraphics[width=.31\textwidth]{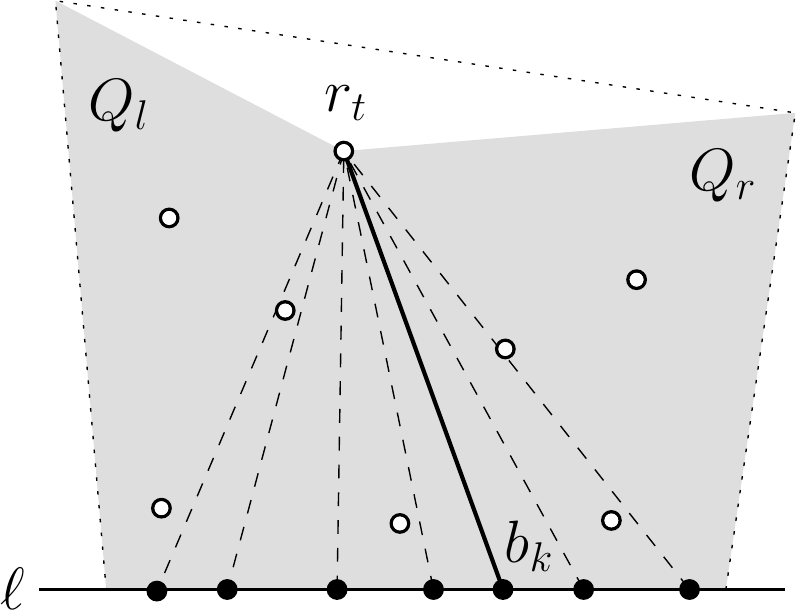}}
  &\multicolumn{1}{m{.33\textwidth}}{\centering\includegraphics[width=.31\textwidth]{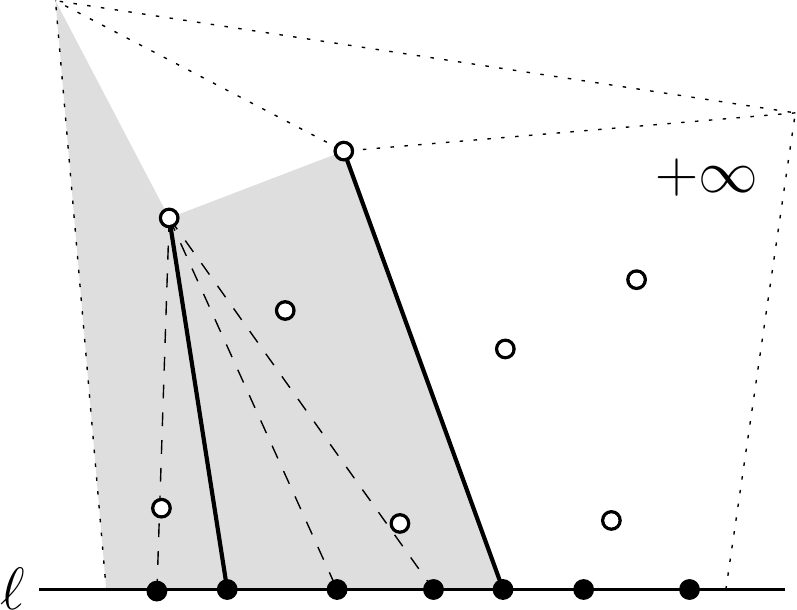}}
  \\
  (a) & (b)&(c)
  \end{tabular}$
  \caption{(a) definition of a sub-problem, (b) possible matching edges for $r_t$, and (c) $Q_r$ returns $+\infty$ as it does not contain a matching; recurse on $Q_l$.}
\label{bmb:quadrilateral}
\end{figure*}

For each edge $e=(p_i,p_j)$ in $G_\lambda$ let $I_e$ be the set of all vertices of $P$ in the smaller arc between $p_i$ and $p_j$, including $p_i$ and $p_j$. Let $P_0$ and $P_1$ be the lists of vertices of degree zero and one in $G_\lambda$, respectively. If $P_0$ is non-empty, then it is obvious that a perfect matching does not exist. If $P_0$ is empty and $P_1$ is non-empty, then for each point $p\in P_1$, do the following. Let $e=(p, q)$ be the only edge incident to $p$. It is obvious that any perfect matching in $G_\lambda$ should contain $e$. In addition, $(p,q)$ is a feasible edge, and then all the points in $I_e$ can be matched properly. Thus, we can remove the points of $I_e$ from $G_\lambda$. Note that this changes the lists $P_0$ and $P_1$. The algorithm CompareToOpt receives $G_\lambda$ as input and decides whether it has a perfect non-crossing matching.

\begin{algorithm}                      
\caption{CompareToOpt$(G_\lambda)$}          
\label{bmb:alg1} 
\require{a graph $G_\lambda$}\\
\ensure{TRUE, if $G_\lambda$ has a non-crossing perfect matching, FALSE, otherwise}
\begin{algorithmic}[1]
      \State $P_0 \gets$ vertices of degree zero in $G_\lambda$
      \State $P_1 \gets$ vertices of degree one in $G_\lambda$
      \While{$P_0\neq \emptyset$ or $P_1\neq \emptyset$}
	  \If {$P_0\neq \emptyset$} \Return {FALSE}\EndIf
	  \State $p \gets$ a vertex in $P_1$
	  \State $q \gets$ the vertex adjacent to $p$ in $G_\lambda$
	  \For {each $r$ in $I_{(p,q)}$}
	      \State remove $r$ and its adjacent edges from $G_\lambda$
	      \State update $P_0$ and $P_1$
	  \EndFor
      \EndWhile
      \State \Return {TRUE}
\end{algorithmic}
\end{algorithm} 

The algorithm CompareToOpt consider each vertex and each edge once, so it executes in linear in the size of $G_\lambda$.
At the end of the while loop, we have $P_0=P_1=\emptyset$. All the vertices of the remaining part of $G_\lambda$ have degree two and this case is the same as the problem that we started with (BBM problem) and by Lemma~\ref{bmb:circle-lemma}, it has a perfect non-crossing matching, thus we return TRUE. See Figure~\ref{bmb:circle-fig}(b). 

Notice that, if the procedure returns FALSE for some $\lambda$, then we know that $\lambda<\lambda^*$. Let $e$ be the shortest edge for which the procedure returns TRUE. Thus $|e| \ge \lambda^*$, and a bottleneck RB-matching is contained in $G_\lambda$, where $\lambda=|e|$. 

\begin{theorem}
Given a set $B$ of $n$ blue points and a set $R$ of $n$ red points on a circle, one can compute 
a bottleneck non-crossing RB-matching in time $O(n\log n)$ and in space $O(n)$.
\end{theorem}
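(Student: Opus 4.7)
The plan is to verify correctness and complexity of the algorithm already outlined in the excerpt: preprocess to obtain the feasible-edge graph $G$, sort its edges by length, then binary search for the smallest $\lambda$ such that CompareToOpt$(G_\lambda)$ returns TRUE. I would split the proof into three claims: (i) $G$ contains a bottleneck non-crossing RB-matching; (ii) CompareToOpt is sound and complete; (iii) the overall algorithm runs in $O(n \log n)$ time and $O(n)$ space. Claim (i) follows immediately from Lemma~\ref{bmb:circle-lemma}, since some optimal matching uses only first-feasible-match edges and every such edge lies in $G$ by construction; moreover every vertex has degree at most two in $G$, so $|E(G)|\le 2n$.

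For the complexity, I would first verify that the stack-walk described in the excerpt computes every $F_i$ in $O(n)$ total time, so $G$ is built in $O(n)$ time and space and its edges sorted in $O(n\log n)$ time. The binary search performs $O(\log n)$ iterations, each invoking CompareToOpt on a subgraph of $G$. Maintaining $P_0$ and $P_1$ as doubly linked lists with back-pointers into the adjacency lists of $G_\lambda$ allows each removal in the while loop to be done in constant amortized time, and each vertex/edge is processed only once, giving an $O(n)$ bound per call. Hence the total time is $O(n\log n)$ and the space is $O(n)$.

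The main obstacle is proving correctness of CompareToOpt. Soundness (TRUE $\Rightarrow$ $G_\lambda$ has a non-crossing perfect matching of bottleneck $\le \lambda$) I would prove by induction on the size of $G_\lambda$: when a degree-one vertex $p$ with unique neighbour $q$ is detected, the edge $(p,q)$ is feasible so the arc $I_{(p,q)}$ contains the same number of red and blue points strictly between $p$ and $q$; inductively, those interior points admit their own non-crossing RB-matching using only edges of $G_\lambda$ internal to the arc (hence disjoint from every edge outside the arc), and removing $I_{(p,q)}$ reduces the problem to a strictly smaller circular bichromatic instance. When the loop terminates with $P_0=P_1=\emptyset$, the residual graph has every vertex of degree exactly two, which is again a circular bichromatic instance; Lemma~\ref{bmb:circle-lemma} guarantees a non-crossing perfect matching there, and its edges lie in $G_\lambda$ by construction.

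Completeness (FALSE $\Rightarrow$ no non-crossing perfect matching in $G_\lambda$) is the subtler half. If $P_0$ ever becomes nonempty, the isolated vertex is unmatchable in $G_\lambda$. Otherwise, Lemma~\ref{bmb:circle-lemma} applied to $G_\lambda$ forces any candidate non-crossing RB-matching to use first-feasible-match edges; consequently any degree-one vertex $p$ in $G_\lambda$ must be matched to its unique neighbour $q$, and the non-crossing property then forces all points of $I_{(p,q)}$ to be matched among themselves, so the removals performed by the algorithm are forced. The technical lemma to formalise is the invariant that after each such forced removal the residual graph is again the first-feasible-match graph of a well-defined circular bichromatic subinstance, so the induction applies. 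Given soundness and completeness, the smallest $\lambda$ for which CompareToOpt returns TRUE is a length of an edge in $G$ and equals $\lambda^*$; its witness matching, reconstructed by recording the forced edges and recursing on the degree-two residue via Lemma~\ref{bmb:circle-lemma}, is the desired bottleneck non-crossing RB-matching, completing the proof.
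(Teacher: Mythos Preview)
Your proposal is correct and follows the same approach the paper outlines; in fact you supply considerably more detail than the paper, which essentially just states the algorithm and asserts the key steps. Two small remarks: the invocation of Lemma~\ref{bmb:circle-lemma} in your completeness argument is not what does the work there---the forced inclusion of the edge $(p,q)$ follows simply from $p$ having degree one in $G_\lambda$, not from the lemma---and the invariant you flag (``the residual graph is again the first-feasible-match graph of a circular subinstance'') holds in the clean form you state only at termination, when every vertex has degree two; during intermediate steps a degree-one vertex may be missing one of its first-feasible-match edges with respect to the residual point set, but this does not affect the induction. Both are cosmetic and your overall plan goes through.
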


\section{Blue Points on Straight Line}
\label{bmb:line}
In this section we deal with the case where the blue points are on a horizontal line and the red points are on one side of the line. 
Formally, given a sequence $B_{1,n}=b_1,\dots,b_n$ of $n$ blue points on a horizontal line $\ell$ and $n$ red points above $\ell$, we are interested to find a non-crossing matching $M$ between the points in $R$ and $B$, such that the length of the longest edge in $M$ is minimized. We show how to build dynamic programming algorithms that solve this problem. In Section \ref{bmb:line-alg1} we present a bottom-up dynamic programming algorithm that solves this problem in $O(n^5)$ time. In Section \ref{bmb:line-alg2} we present a top-down dynamic programming algorithm for this problem running in $O(n^4)$ time.

\subsection{First algorithm}
\label{bmb:line-alg1}
In this section we present a dynamic programming algorithm for the problem. We define a subproblem $(R',B')$ in the following way: given a quadrilateral $Q$ with one face on $\ell$, we are looking for a bottleneck RB-matching in $Q$, where $R'=R\cap Q$ and $B'=B\cap Q$. For simplicity, we may refer to the sub-problem $(R',B')$ as its bounding box $Q$. In the top level we imagine a bounding quadrilateral which contains all the points of $R$ and $B$. See Figure \ref{bmb:quadrilateral}(a). Let $b(Q)$ denote the bottleneck of the sub-problem $Q$. If $Q$ is empty, we set $b(Q)=0$. If $Q$ is not empty but $|R'|\neq|B'|$, we set $b(Q)=+\infty$, as it is not possible to have a RB-matching for $(R',B')$. Otherwise, we have $|R'|=|B'|>0$; let $r_t$ be the topmost red point in $R'$ in $Q$. It has at most $|B'|$ possible matching edges. Each of the matching edges defines two new independent sub-problems $Q_l$ and $Q_r$ to its left and right sides, respectively. See Figures \ref{bmb:quadrilateral}(b) and \ref{bmb:quadrilateral}(c). Thus, we can compute the bottleneck of a sub-problem $Q$, using the following recursion:
$$b(Q)=\min\limits_{b_k\in B'}\{\max\{|r_tb_k|, b(Q_l),b(Q_r)\}\}.$$ 
Note that the $y$-coordinate of all the red points in $Q_l$ and $Q_r$ are smaller than $y$-coordinate of $r_t$. If we recurse this process on $Q_l$ and $Q_r$, it is obvious that each sub-problem $(R',B')$ is bounded by the left and right sides of its corresponding quadrilateral. Thus, each sub-problem is defined by a pair of edges (or possibly the edges of the outer bounding box). 

Note that the total number of edges is $n^2+2$ (including the edges of the outer box). 
The dynamic programming table contains $n^2+2$ rows and $n^2+2$ columns, each corresponds to an edge. The cells correspond to sub-problems.
The dynamic programming table contains $O(n^4)$ cells, and for each we have at most $n$ pairs of possible sub-problems, which implies at most $2n$ lookups in the table. Therefore, the algorithm runs in time $O(n^5)$ and space $O(n^4)$.

\subsection{Second algorithm}
\label{bmb:line-alg2}
In this section we present a top-down dynamic programming algorithm that improves the result of Section \ref{bmb:line-alg1}. Consider the problem $(R,B)$, where $B=B_{1,n}=\{b_1,\dots,b_n\}$. Let $r_t$ be the topmost red point. In any solution $M$ to the problem, consider the edge $(r_t,b_k)\in M$ which matches $r_t$ to a point $b_k$ in $B$, then there is no edge in $M$ that intersects $(r_t,b_k)$. Thus, $(r_t,b_k)$ is a feasible edge if on each side of $(r_t,b_k)$ the number of red points equals the number of blue points. In this case, $b_k$ is a feasible match for $r_t$. Recall that $F_t$ denotes the set of all feasible matches for $r_t$. See Figure \ref{bmb:trapezoidal}(a). In other words,
$$F_t=\{k: (r_t,b_k) \text{ is a feasible edge}\}.$$

For a feasible edge $(r_t,b_k)$, let $R_l$ (resp. $R_r$) and $B_l$ (resp. $B_r$) be the red and blue points to the left (resp. right) of $(r_t,b_k)$, respectively. That is, the edge $(r_t,b_k)$ divides the $(R,B)$ problem into two sub-problems $(R_l,B_l)$ and $(R_r,B_r)$, where $|R_l|=|B_l|$ and $|R_r|=|B_r|$. Clearly, $B_l=B_{1,k-1}=\{b_1,\dots, b_{k-1}\}$ and $B_r=B_{k+1,n}=\{b_{k+1},\dots,\allowbreak  b_n\}$. We develop the following recurrence to solve the problem:
$$b(R,B)=\min\limits_{k\in F_t}\{\max\{|r_tb_k|, b(B_l,R_l),b(B_r,R_r)\}\}.$$

Let $l_t$ denote the horizontal line passing through $r_t$. Note that the $y$-coordinate of all red points in $R_l$ and $R_r$ is smaller than the $y$-coordinate of $r_t$, and hence they lie below $l_t$. This implies that the left (resp. right) sub-problem is contained in a trapezoidal region $T_l$ (resp. $T_r$) with bounding edges $\ell$, $l_t$, and $(r_t,b_k)$. See Figure \ref{bmb:trapezoidal}(a). Since, in each step we have two sub-problems, in the rest of this section we describe the process for the right sub-problem; the process for the left sub-problem is symmetric. Note that $r_t$ is the top-left corner of the right sub-problem. Thus, given $B_r$ and $r_t$, we know that $r_t$ is connected to a blue point immediately to the left of $B_r$. In addition, we can find the red points assigned to the right sub-problem in the following way. Stand at a blue point immediately to the right of $B_r$ and scan the plane clockwise, starting from $\ell$. Count the red points in $T_r$ while scanning, and stop as soon as the number of red points seen equals the number of blue points in $B_r$. These red points form the set $R_r$. See Figures \ref{bmb:trapezoidal}(b) and \ref{bmb:trapezoidal}(c).

Since, $r_t$ defines the right (resp. left) and top boundaries of $T_l$ (resp. $T_r$) which contains the left (resp. right) sub-problem, we call $r_t$ a ``boundary vertex''. We define a sub-problem as a sequence $B_{i,j}=\{b_i,\dots,b_j\}$ of blue points, a boundary vertex, $r_t$, connected to $b_{j+1}$ (resp. $b_{i-1}$) for the left (resp. right) sub-problem. More precisely, a sub-problem $(B_{i,j},r_t,d)$ consists of an interval $B_{i,j}$, a boundary vertex $r_t$, and a direction $d=\{left,right\}$ which indicates that $r_t$ is connected to a point immediately to the left or to the right of $B_{i,j}$. 
For a sub-problem $(B_{i,j},t,d)$, where $d=left$ we find the vertex set $R_{i,j}$ in the following way. Scan the plane by a clockwise rotating line $s$ anchored at $b_{j+1}$. Count the red points in trapezoidal region formed by $\ell$, $l_t$, and $(r_t,b_{i-1})$, and stop as soon as $j-i+1$ red points have been encountered. These red points form the set $R_{i,j}$. See Figures \ref{bmb:trapezoidal}(b) and \ref{bmb:trapezoidal}(c). 

\begin{figure*}[ht]
  \centering
\setlength{\tabcolsep}{0in}
  $\begin{tabular}{ccc}
  \multicolumn{1}{m{.32\textwidth}}{\centering\includegraphics[width=.3\textwidth]{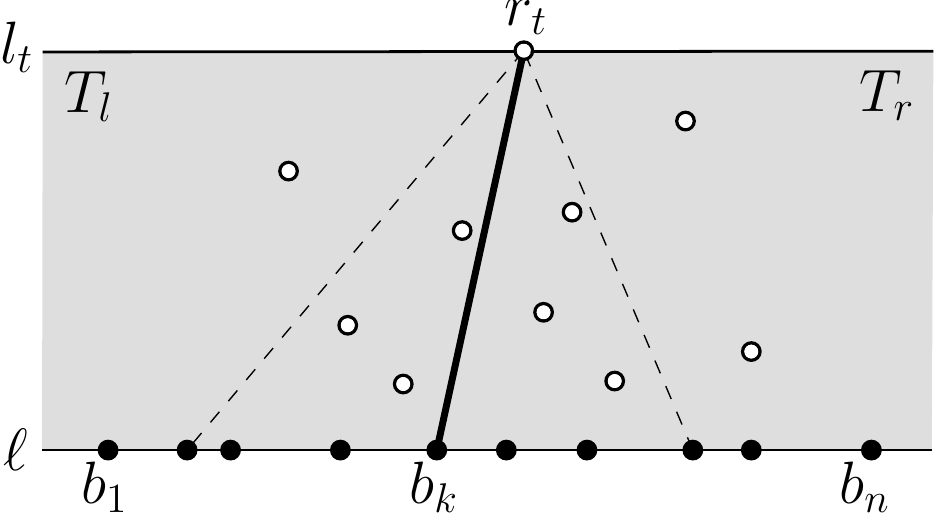}}
  &\multicolumn{1}{m{.34\textwidth}}{\centering\includegraphics[width=.32\textwidth]{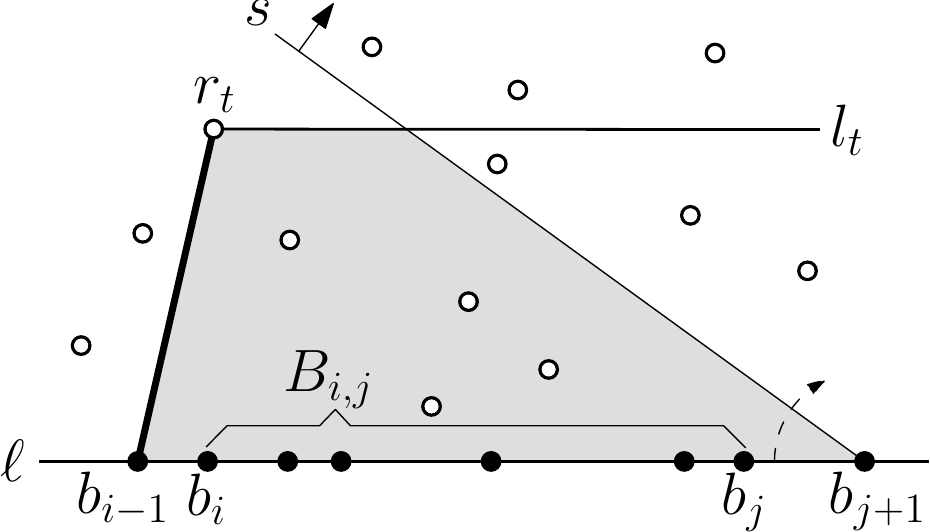}}
  &\multicolumn{1}{m{.34\textwidth}}{\centering\includegraphics[width=.32\textwidth]{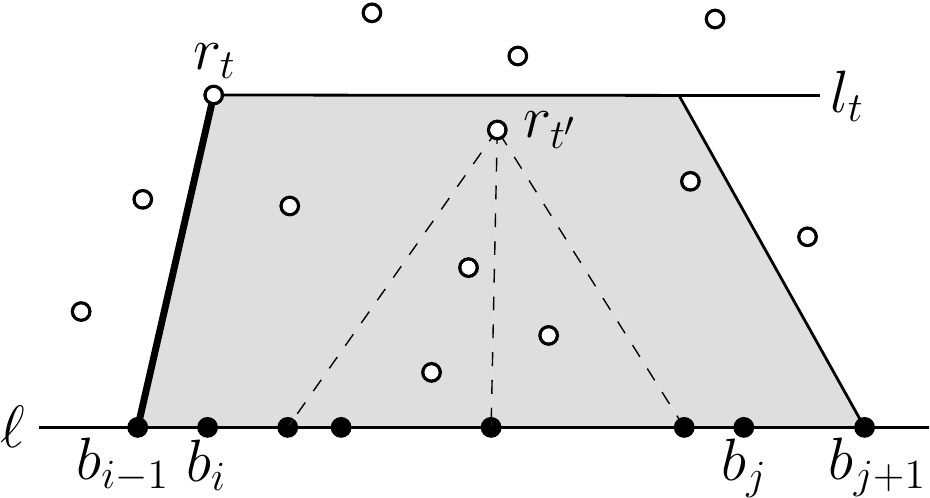}}
  \\
  (a) & (b)&(c)
  \end{tabular}$
  \caption{(a) feasible matches for $r_t$, (b) scanning the red points in the trapezoidal region (shaded area), and (c) the trapezoidal region which contains the same number of red and blue points.}
\label{bmb:trapezoidal}
\end{figure*}

In the top level, we add points $b_0$ and $b_{n+1}$ on $\ell$ to the left and right of $B$, respectively. We add a point $r_0$ as the boundary vertex of the $(R,B)$ problem in such a way that $R$ and $B$ are contained in the trapezoid formed by $\ell$, $l_0$, and the line segment $r_0b_0$. Thus in the top level we have the sub-problem $(B_{1,n},r_0,left)$.

The dynamic programming table is a four-dimensional table $A[1..n,\allowbreak 1..n,\allowbreak 0..n,\allowbreak 1..2]$, where the first and second dimensions correspond to an interval of blue points, the third dimension corresponds a boundary vertex, and the fourth dimension corresponds to the directions. For simplicity we use $l$ and $r$ for $left$ and $right$ directions, respectively. Each cell $A[i,j,t, d]$ stores the bottleneck of the sub-problem $(B_{i,j}, r_t, d)$, and we are looking for $A[1, n, 0, l]$ which corresponds to the bottleneck of $M^*$. We fill $A$ in the following way: 
$$
\begin{tabular}{l}
$A[i,j,t,d]= \min\limits_{k\in F_{t'}}\{\max\{|r_{t'}b_k|,A[i,k-1,t',r],A[k+1,j,t',l]\}\},$
\end{tabular}
$$
where $r_{t'}$ is the topmost red point in the point set $R_{i,j}$ assigned to $(B_{i,j},t,d)$.

Algorithm~\ref{bmb:alg2} computes the bottleneck of each subproblem using top-down dynamic programming.
In the top level, we execute LineM\-atching($1,n,0,l$). Before running algorithm LineMatching, for each point, we pre-sort the red points in the following way. For each red point $r$, we keep a sorted list of all the red points below $l_r$ in clockwise order. For each blue point, we keep two sorted lists of red points in clockwise and counter-clockwise orders. This step takes $O(n^2\log n)$ time.

\begin{algorithm}                      
\caption{LineMatching$(i,j,t,d)$}          
\label{bmb:alg2} 
\require{sequence $B_{i,j}$, top point $r_t$, and direction $d$.}\\
\ensure{bottleneck of $M^*$.}
\begin{algorithmic}[1]
      \If {$A[i,j,t,d]>0$}
	  \State \Return $A[i,j, t,d]$
      \EndIf
      \If {$i>j$}
	  \State \Return $A[i,j,t,d]\gets 0$
      \EndIf
      \State $R_{i,j}\gets$ $j-i+1$ red points assigned to $B_{i,j}$
      \State $t'\gets top$-$index(R_{i,j})$
      \If {$i=j$}
	  \State \Return $A[i,j,t,d]\gets |r_{t'}b_i|$
      \EndIf
      \State $b\gets +\infty$
      \State $F_{t'}\gets$ indices of feasible blue points for $r_{t'}$ 
      \For {each $k \in F_{t'}$}
	  \State $A[i,k-1,t',r]\gets$ LineMatching$(i,k-1,t',r)$
	  \State $A[k+1,j,t',l]\gets$ LineMatching$(k+1,j,t',l)$
	  \State $m\gets\max\{|r_{t'}b_k|, A[i,k-1,t',r], A[k+1,j,t',l]\}$
	  \If {$m<b$}
	      \State $b\gets m$
	  \EndIf
      \EndFor
      \State \Return $A[i,j,t,d]\gets b$
\end{algorithmic}
\end{algorithm}

\begin{lemma}
 Algorithm LineMatching computes the bottleneck of $M^*$ in $O(n^4)$ time. 
\end{lemma}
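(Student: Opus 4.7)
The plan is to bound separately (i) the total number of distinct subproblems stored in $A$, (ii) the work done inside a single invocation of \textsf{LineMatching} excluding recursive calls, and (iii) the cost of the one-time preprocessing, and then add these contributions. Because \textsf{LineMatching} is memoized via the test on line~1, each table cell is evaluated from scratch at most once; all other calls return in $O(1)$ after the lookup.

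\emph{Counting cells.} The table $A[i,j,t,d]$ has first two indices ranging over $\{1,\dots,n\}$, third index ranging over $\{0,1,\dots,n\}$, and a binary fourth index, so it has $O(n^3)$ cells in total. Thus the first task is to show that a single evaluation of \textsf{LineMatching}$(i,j,t,d)$ (excluding the cost charged to the memoized recursive calls) takes $O(n)$ time; multiplying then yields the claimed $O(n^4)$ bound.

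\emph{Work per cell.} A single invocation on $(i,j,t,d)$ needs three nontrivial pieces of information: the set $R_{i,j}$ of $j-i+1$ red points assigned to $B_{i,j}$, the topmost red point $r_{t'}$ in $R_{i,j}$, and the index set $F_{t'}$ of feasible matches of $r_{t'}$ inside $B_{i,j}$. Using the pre-sorted clockwise/counter-clockwise order of red points around each blue endpoint of $B_{i,j}$ (built in $O(n^2 \log n)$ time during preprocessing), the rotational sweep described in Section~\ref{bmb:line-alg2} produces $R_{i,j}$ in $O(n)$ time, and the topmost element $r_{t'}$ is obtained in the same scan. Given $r_{t'}$, the feasibility test ``equal numbers of red and blue points on each side of $(r_{t'},b_k)$'' for every $k\in\{i,\dots,j\}$ reduces to a second rotational sweep around $r_{t'}$ that maintains a running balance of red minus blue points; a zero balance at $b_k$ certifies $k\in F_{t'}$. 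This sweep also takes $O(n)$ time. Finally, the \texttt{for}-loop on line~13 iterates over at most $|F_{t'}|\le n$ indices, each iteration performing $O(1)$ work beyond the two recursive calls. Since the recursive calls are memoized and their cost is accounted for when those cells are first evaluated, the amortized cost of one cell is $O(n)$.

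\emph{Totaling.} Preprocessing contributes $O(n^2\log n)$, which is dominated. The memoized evaluation contributes $O(n^3)$ cells times $O(n)$ per cell, i.e.\ $O(n^4)$. The top-level call returns $A[1,n,0,l]$, which by the recurrence at the end of Section~\ref{bmb:line-alg2} equals the bottleneck of $M^*$. The main subtlety in carrying this out cleanly is that the identifiers $r_{t'}$, $R_{i,j}$, and $F_{t'}$ are determined by $(i,j,t,d)$ rather than being tabulated directly; the only obstacle is to verify that the two rotational sweeps described above recover them in linear time using the precomputed orders, after which the arithmetic is immediate.
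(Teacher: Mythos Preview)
Your proposal is correct and follows essentially the same approach as the paper: count $O(n^3)$ distinct subproblems indexed by $(i,j,t,d)$, argue that for each subproblem the sets $R_{i,j}$, the topmost point $r_{t'}$, and the feasible set $F_{t'}$ can be computed in $O(n)$ time using the precomputed sorted radial orderings, and multiply to get $O(n^4)$. Your treatment is a bit more explicit about memoization and the mechanics of the rotational sweeps, but the decomposition and the key observations are the same as in the paper.
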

\begin{proof}
Each cell $A[i,j,t,d]$ corresponds to a sub-problem formed by an interval $B_{i,j}$, a boundary vertex $r_t$, and a direction $d$. The total number of possible $B_{i,j}$ intervals is ${n\choose 2}+n$ ($i$ can be equal to $j$). For each interval, any of the $n$ red points can be the corresponding boundary vertex, which can be connected to the left or right side of the interval. Thus, the total number of subproblems is $2n{n\choose 2}+2n^2=O(n^3)$. In order to compute $R_{i,j}$ for each sub-problem, we use the sorted lists assigned to $b_{i-1}$ (or $b_{j+1}$) and scan for the red points in the trapezoidal region. To compute the feasible blue vertices for $r_{t'}\in R_{i,j}$, we use the sorted list assigned to $r_{t'}$ and keep track of feasible matches for $r_{t'}$ in $B_{i,j}$. Thus, for each sub-problem, we can compute $R_{i,j}$, $r_{t'}$, and $F_{t'}$ in linear time. Therefore, the total running time of the algorithm is $O(n^4)$.
\end{proof}

Finally, we reconstruct $M^*$ from $A$ in linear time.
\begin{theorem}
Given a set $B$ of $n$ blue points on a horizontal line $\ell$, a set $R$ of $n$ red points above $\ell$, one can compute a bottleneck non-crossing RB-matching in time $O(n^4)$ and in space $O(n^3)$.
\end{theorem}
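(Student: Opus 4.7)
The plan is to establish the theorem in three stages: correctness of the recurrence that underlies \textsf{LineMatching}, the complexity bounds promised by the statement, and the reconstruction of an optimal matching $M^*$ from the table $A$.

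For correctness, I would first argue that a subproblem $(B_{i,j}, r_t, d)$ is well-defined: given the topmost red point $r_t$ and the edge it contributes (to $b_{i-1}$ or $b_{j+1}$ depending on $d$), the trapezoid bounded by $\ell$, $l_t$, and this edge cuts off a region whose red points are exactly the set $R_{i,j}$ obtained by the clockwise sweep described before the algorithm; a short counting argument shows $|R_{i,j}|=|B_{i,j}|$, so the sweep terminates correctly. I would then use the same topmost-point argument from Section~\ref{bmb:line-alg2}: the topmost red point $r_{t'}$ of $R_{i,j}$ must be matched in any non-crossing RB-matching of the subproblem to some blue point $b_k\in B_{i,j}$, and the edge $(r_{t'},b_k)$ splits the remaining points into two independent subproblems bounded by trapezoids rooted at $r_{t'}$. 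Feasibility of $k$ (i.e., $k\in F_{t'}$) is precisely the condition that both sides contain equal numbers of red and blue points, which is necessary for each side to admit a perfect RB-matching. Therefore the recurrence
\[
A[i,j,t,d]= \min_{k\in F_{t'}}\bigl\{\max\{|r_{t'}b_k|,\,A[i,k-1,t',r],\,A[k+1,j,t',l]\}\bigr\}
\]
equals the bottleneck of the subproblem, and the base cases $i>j$ (empty) and $i=j$ (one forced edge) are immediate.

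For complexity, I would reuse the subproblem count from the preceding lemma: there are $O(n^3)$ cells (choice of interval $B_{i,j}$, boundary vertex $r_t$, and direction $d$), and for each cell the sweep that determines $R_{i,j}$, the identification of its topmost element $r_{t'}$, and the enumeration of $F_{t'}$ can all be done in $O(n)$ time using the sorted lists precomputed in $O(n^2\log n)$. This gives the $O(n^4)$ running time. The space requirement is dominated by the four-dimensional table $A$, which contains $O(n^3)$ meaningful entries, and by the precomputed sorted lists of size $O(n^2)$; so $O(n^3)$ space suffices.

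Finally, reconstruction: once $A[1,n,0,l]$ is known, I would walk top-down through $A$, and at each cell $A[i,j,t,d]$ recover the index $k\in F_{t'}$ that attained the minimum (stored alongside the value, or recomputed in $O(n)$ per cell), emit the edge $(r_{t'},b_k)$, and recurse on $(i,k-1,t',r)$ and $(k+1,j,t',l)$. Since the recursion tree has $n$ leaves corresponding to the $n$ matched pairs, the reconstruction runs in $O(n)$ time if pointers are stored during the fill, matching the overall $O(n^4)$ time and $O(n^3)$ space bounds claimed by the theorem. The main subtlety I anticipate is verifying that the definition of a subproblem as $(B_{i,j}, r_t, d)$ really captures every state reachable by the top-down recursion---in particular, that the trapezoidal region associated with $(B_{i,j}, r_t, d)$ always contains exactly $|B_{i,j}|$ red points and no extraneous ones---but this follows from an induction on the depth of recursion together with the observation that the dividing edge $(r_{t'},b_k)$ never separates a blue point from $B_{i,j}$ from its matched red partner.
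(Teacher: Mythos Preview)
Your proposal is correct and follows essentially the same approach as the paper: the paper presents the top-down dynamic program \textsf{LineMatching}, proves in the preceding lemma that it runs in $O(n^4)$ time over $O(n^3)$ subproblems (with the $O(n^2\log n)$ preprocessing of sorted lists), and then remarks that $M^*$ is reconstructed from $A$ in linear time; the theorem is stated as a direct consequence. Your write-up is actually more explicit than the paper's about the well-definedness of the subproblem $(B_{i,j},r_t,d)$ and about why the trapezoid contains exactly $|B_{i,j}|$ red points, which the paper treats informally.
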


\bibliographystyle{abbrv}
\bibliography{../thesis}
\chapter{Plane Matchings in Complete Multipartite Graphs}
\label{ch:pm}

Let $P$ be a set of $n$ points in general position in the plane which is partitioned into {\em color} classes. $P$ is said to be {\em color-balanced} if the number of points of each color is at most $\lfloor n/2\rfloor$. Given a color-balanced point set $P$, a {\em balanced cut} is a line which partitions $P$ into two color-balanced point sets, each of size at most $2n/3  + 1$. A {\em colored matching} of $P$ is a perfect matching in which every edge connects two points of distinct colors by a straight line segment. A {\em plane colored matching} is a colored matching which is non-crossing. In this chapter, we present an algorithm which computes a balanced cut for $P$ in linear time. Consequently, we present an algorithm which computes a plane colored matching of $P$ optimally in $\Theta(n\log n)$ time.

\vspace{10pt}
This chapter is published in the proceedings of the 14th International Symposium on Algorithms and Data Structures (WADS'15)~\cite{Biniaz2015-RGB}. 

\vspace{10pt}
We also extended the results of this chapter to the case where the points are in a simple polygon; the extension is not included in this thesis. However, the results have been published in the Proceedings of the First international conference on Topics in Theoretical Computer Science (TTCS'15) \cite{Biniaz2015-geodesic},
and also published in the journal of Computational Geometry: Theory and Applications~\cite{Biniaz2016-CGTA-geodesic}.

\section{Introduction}
\label{introduction-section}
Let $P$ be a set of $n$ points in general position (no three points on a line) in the plane. Assume $P$ is partitioned into {\em color} classes, i.e., each point in $P$ is colored by one of the given colors. $P$ is said to be {\em color-balanced} if the number of points of each color is at most $\lfloor n/2\rfloor$. In other words, $P$ is color-balanced if no color is in strict majority. For a color-balanced point set $P$, we define a {\em feasible cut} as a line $\ell$ which partitions $P$ into two point sets $Q_1$ and $Q_2$ such that both $Q_1$ and $Q_2$ are color-balanced. In addition, if the number of points in each of $Q_1$ and $Q_2$ is at most $2n/3+1$, then $\ell$ is said to be a {\em balanced cut}. 
The well-known ham-sandwich cut (see~\cite{Lo1994}) is a balanced cut: given a set of $2m$ red points and $2m$ blue points in general position in the plane, a ham-sandwich cut is a line $\ell$ which partitions the point set into two sets, each of them having $m$ red points and $m$ blue points. Feasible cuts and balanced cuts are useful for convex partitioning of the plane and for computing plane structures, e.g., plane matchings and plane spanning trees. 

Assume $n$ is an even number. Let $\{R,B\}$ be a partition of $P$ such that $|R|=|B|=n/2$. Let $K_n(R,B)$ be the complete bipartite geometric graph on $P$ which connects every point in $R$ to every point in $B$ by a straight-line edge. An $RB${\em -matching} in $P$ is a perfect matching in $K_n(R,B)$. Assume the points in $R$ are colored red and the points in $B$ are colored blue. An $RB$-matching in $P$ is also referred to as a {\em red-blue matching} or a {\em bichromatic matching}. 
A {\em plane} $RB${\em -matching} is an $RB$-matching in which no two edges cross.
Let $\{P_1,\dots,P_k\}$, where $k\ge 2$, be a partition of $P$. Let $K_n(P_1,\dots,P_k)$ be the complete multipartite geometric graph on $P$ which connects every point in $P_i$ to every point in $P_j$ by a straight-line edge, for all $1\le i<j\le k$. Imagine the points in $P$ to be colored, such that all the points in $P_i$ have the same color, and for $i\neq j$, the points in $P_i$ have a different color from the points in $P_j$. We say that $P$ is a $k$-{\em colored} point set. A {\em colored matching} of $P$ is a perfect matching in $K_n(P_1,\dots,P_k)$. A {\em plane colored matching} of $P$ is a perfect matching in $K_n(P_1,\dots,P_k)$ in which no two edges cross. See Figure~\ref{RB-fig}(a). 

In this chapter we consider the problem of computing a balanced cut for a given color-balanced point set in general position in the plane. We show how to use balanced cuts to compute plane matchings in multipartite geometric graphs.

\vspace{-5pt}
\begin{figure}[htb]
  \centering
\setlength{\tabcolsep}{0in}
  $\begin{tabular}{cc}
\multicolumn{1}{m{.5\columnwidth}}{\centering\includegraphics[width=.28\columnwidth]{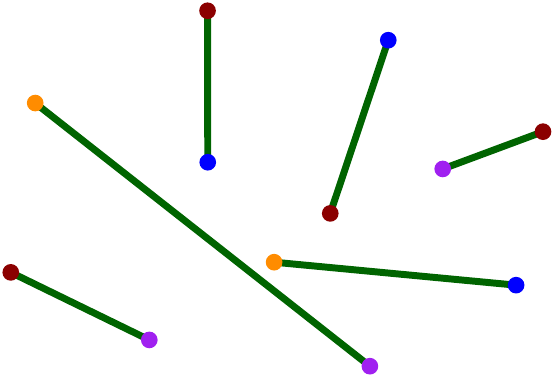}}
&\multicolumn{1}{m{.5\columnwidth}}{\centering\includegraphics[width=.3\columnwidth]{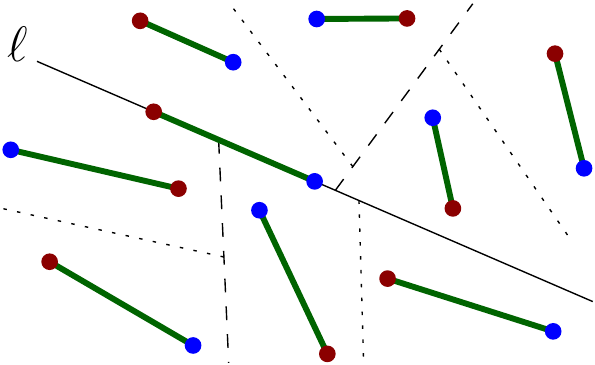}} \\
(a) & (b)
\end{tabular}$
\vspace{-5pt}
  \caption{(a) A plane colored matching. (b) Recursive ham sandwich cuts.}
\label{RB-fig}
\end{figure}
\vspace{-25pt}
\addtocontents{toc}{\protect\setcounter{tocdepth}{1}}
\subsection{Previous Work}
\addtocontents{toc}{\protect\setcounter{tocdepth}{2}}
\label{previous-work}

\subsubsection{2-Colored Point Sets}
Let $P$ be a set of $n=2m$ points in general position in the
plane. Let $\{R, B\}$ be a partition of $P$ such that $|R| = |B|=m$. Assume the points in $R$ are colored red and the points in $B$ are colored blue.
It is well-known that $K_n(R,B)$ has a plane $RB$-matching \cite{Putnam1979}. In fact, a minimum weight $RB$-matching, i.e., a perfect matching that minimizes the sum of Euclidean
length of the edges, is plane. A minimum weight $RB$-matching in $K_n(R,B)$ can be computed in $O(n^{2.5}\log n)$ time~\cite{Vaidya1989}, or even in $O(n^{2+\epsilon})$ time~\cite{Agarwal1999}. Consequently, a plane $RB$-matching can be computed in $O(n^{2+\epsilon})$ time.
As a plane $RB$-matching is not necessarily a minimum weight $RB$-matching, one may compute a plane $RB$-matching faster than computing a minimum weight $RB$-matching.
Hershberger and Suri~\cite{Hershberger1992} presented an $O(n\log n)$ time algorithm for computing a plane $RB$-matching. They also proved a lower bound of $\Omega(n\log n)$ time for computing a plane $RB$-matching, by providing a reduction from sorting.

Alternatively, one can compute a plane $RB$-matching  by recursively applying the ham sandwich theorem; see Figure~\ref{RB-fig}(b). We say that a line $\ell$ {\em bisects} a point set $R$ if both sides of $\ell$ have the same number of points of $R$; if $|R|$ is odd, then $\ell$ contains one point of $R$. 
\begin{theorem}[Ham Sandwich Theorem]
\label{ham-sandwich-thr}
 For a point set $P$ in general position in the plane which is partitioned into sets $R$ and $B$, there exists a line that simultaneously bisects $R$ and $B$.
\end{theorem}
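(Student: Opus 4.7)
I would prove the Ham Sandwich Theorem in the plane by a rotation argument combined with a discrete intermediate value theorem. Parameterize unit normal directions by $\theta \in [0,2\pi)$. For each $\theta$, look at the orthogonal projections of the points of $R$ onto the line spanned by the direction $\theta$. By general position we may assume (after an infinitesimal rotation if necessary) that these projected values are distinct. Taking the median of these projections, I define $\ell_R(\theta)$ to be a line perpendicular to direction $\theta$ that bisects $R$: if $|R|$ is even, $\ell_R(\theta)$ is chosen to have exactly $\lfloor|R|/2\rfloor$ points of $R$ strictly on each side; if $|R|$ is odd, $\ell_R(\theta)$ passes through the median point of $R$ along the $\theta$-direction. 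This bisector is uniquely defined except at finitely many ``critical'' angles where two projected $R$-values coincide.

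Next I would introduce the discrepancy function
\[
  f(\theta) \;=\; |\{b \in B : b \in \ell_R(\theta)^+\}| \;-\; |\{b \in B : b \in \ell_R(\theta)^-\}|,
\]
where $\ell_R(\theta)^+$ and $\ell_R(\theta)^-$ denote the open half-planes determined by $\ell_R(\theta)$ with the convention that $\ell_R(\theta)^+$ is the one pointed into by the normal $\theta$. The key symmetry is that reversing the normal swaps the two half-planes while leaving the underlying (unoriented) bisector unchanged, so
\[
  f(\theta + \pi) \;=\; -f(\theta).
\]
My goal is a $\theta^\ast$ at which $f(\theta^\ast)=0$ and simultaneously no point of $B$ lies on $\ell_R(\theta^\ast)$, for then $\ell_R(\theta^\ast)$ bisects $B$ as well.

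The plan is then to track how $f$ changes as $\theta$ rotates continuously from $0$ to $\pi$. The value of $f$ is integer-valued and can change only at finitely many events: either (i)~a point of $B$ crosses $\ell_R(\theta)$, changing $f$ by $\pm 2$ (or by $\pm 1$ if we count a boundary point specially), or (ii)~the identity of the bisector $\ell_R(\theta)$ itself changes at a critical angle because the median of $R$'s projections switches. By general position I would argue that these events can be separated by a small perturbation of $\theta$, so events happen one at a time. Between events $f$ is constant. Combined with the antipodal relation $f(\theta+\pi)=-f(\theta)$, the function $f$ must switch sign somewhere on $[0,\pi]$, and a case analysis of the unit step at the transition produces an angle $\theta^\ast$ where $f(\theta^\ast)=0$ with no point of $B$ strictly crossed. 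When $|B|$ is odd, the corresponding ``zero'' means $\ell_R(\theta^\ast)$ passes through a single point of $B$; when $|B|$ is even, it means exactly $|B|/2$ points lie on each side.

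\textbf{Main obstacle.} The delicate point is the simultaneous handling of the two sources of discontinuity, namely jumps of the bisector $\ell_R(\theta)$ itself (type~ii) and points of $B$ crossing it (type~i). I would separate cases by parity of $|R|$ and $|B|$: the even/even case reduces cleanly to a sign change of $f$; the odd cases require the bisecting line to pass through specific median points and one must verify that a simultaneous bisector of $B$ is attained on the correct side of an event. I expect the cleanest way to formalize the argument is to list the events in cyclic order around $\theta \in [0, 2\pi)$, pair antipodal events, and apply the discrete intermediate value theorem to the sequence $f(\theta_0), f(\theta_1), \dots$ of values between consecutive events, using the antipodal identity to locate the desired sign change.
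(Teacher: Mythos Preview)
The paper does not give its own proof of this statement; the Ham Sandwich Theorem is quoted as a classical result and only the algorithmic reference~\cite{Lo1994} is cited. So there is nothing in the paper to compare your argument against.

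Your rotation/intermediate-value approach is the standard elementary proof and is correct in outline. One simplification removes what you flag as the main obstacle: with the right choice of bisector, $\ell_R(\theta)$ varies \emph{continuously} in $\theta$, so there are no type-(ii) jumps at all. When $|R|$ is odd, at a critical angle the two candidate median points of $R$ have equal projection, so the two candidate bisecting lines coincide. When $|R|$ is even, define $\ell_R(\theta)$ to pass through the midpoint of the two middle projected values; since projections are continuous in $\theta$ and the median (or average of the two medians) is a continuous function of its inputs, $\ell_R(\theta)$ is continuous. With that in hand, $f$ is piecewise constant, changes only when a point of $B$ lies on $\ell_R(\theta)$, and the antipodal relation $f(\theta+\pi)=-f(\theta)$ forces either a value $f=0$ on some interval (done), or a sign change across an event, at which the event line itself bisects $B$ (with one $B$-point on it in the odd case). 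General position of $R\cup B$ guarantees that at such an event at most one point of $B$ lies on $\ell_R(\theta)$, so the parity bookkeeping is clean.
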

A line $\ell$ that simultaneously bisects $R$ and $B$ can be computed in $O(|R|+|B|)$ time, assuming $R\cup B$ is in general position in the plane~\cite{Lo1994}. By recursively applying Theorem~\ref{ham-sandwich-thr}, we can compute a plane $RB$-matching in $\Theta(n\log n)$ time.

\subsubsection{3-Colored Point Sets}
Let $P$ be a set of $n=3m$ points in general position in the
plane. Let $\{R, G, B\}$ be a partition of $P$ such that $|R| = |G|=|B|=m$. Assume the points in $R$ are colored red, the points in $G$ are colored green, and the points in $B$ are colored blue.
A lot of research has been done to generalize the ham sandwich theorem to 3-colored point sets, see e.g.~\cite{Bereg2015,Bereg2012,Kano2013}. It is easy to see that there exist configurations of $P$ such that there exists no line which bisects $R$, $G$, and $B$, simultaneously. Furthermore, for some configurations of $P$, for any $k\in\{1,\dots, m-1\}$, there does not exist any line $\ell$ such that an open half-plane bounded by $\ell$ contains $k$ red, $k$ green, and $k$ blue points (see~\cite{Bereg2012} for an example).
For the special case, where the points on the convex hull of $P$ are monochromatic, Bereg and Kano~\cite{Bereg2012} proved that  there exists an integer $1\le k\le m-1$ and an open half-plane containing exactly $k$ points from each color.

Bereg et al.~\cite{Bereg2015} proved that if the points of $P$ are on any closed Jordan curve $\gamma$, then for every integer $k$ with $0 \le k \le m$ there exists a pair of disjoint intervals on $\gamma$ whose union contains exactly $k$ points of each color. In addition, they showed that if $m$ is even, then there exists a double wedge that contains exactly $m/2$ points of each color.

Now, let $P$ be a 3-colored point set of size $n$ in general position in the plane, with $n$ even. Assume the points in $P$ are colored red, green, and blue such that $P$ is color-balanced. Let $R$, $G$, and $B$ denote the set of red, green, and blue points, respectively. Note that $|R|$, $|G|$, and $|B|$ are at most $\lfloor n/2\rfloor$, but, they are not necessarily equal. Kano et al.~\cite{Kano2013} proved the existence of a feasible cut, when the points on the convex hull of $P$ are monochromatic. 
\begin{theorem}[Kano et al.~\cite{Kano2013}]
\label{Kano-thr}
Let $P$ be a 3-colored point set in general position in the plane, such that $P$ is color-balanced and $|P|$ is even. If the points on the convex hull of $P$ are monochromatic, then there exists a line $\ell$ which partitions $P$ into $Q_1$ and $Q_2$ such that both $Q_1$ and $Q_2$ are color-balanced and have an even number of points and $2\le |Q_i|\le |P|-2$, for $i=1,2$.
\end{theorem}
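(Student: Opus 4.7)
My plan is to prove the theorem by a rotational-sweep argument combined with a parity analysis. Without loss of generality assume every vertex of the convex hull of $P$ is blue, so $|B|\geq 3$. Since $|P|$ is even and $P$ is color-balanced, the triple $(|R|,|G|,|B|)$ either has all entries even or has exactly two odd entries; I will treat these parity regimes uniformly by means of the following construction, distinguishing sub-cases only at the end.

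First, I would pick a blue hull vertex $b^\ast$ and parametrize candidate partitions by a pair $(\theta,k)$: the angle $\theta\in[0,\pi)$ specifies the direction of a line through $b^\ast$, and $k$ specifies the number of points of $P$ on one side (after an infinitesimal shift of the line so that $b^\ast$ joins the $k$-point side). Then $Q_1(\theta,k)$ consists of $b^\ast$ together with $k-1$ other points and $Q_2(\theta,k)$ of the remaining $n-k$ points; as $k$ increases by one, $Q_1$ gains exactly one point and $Q_2$ loses exactly one point. Theorem~\ref{ham-sandwich-thr} applied to pairs of color classes will serve to select an initial direction $\theta_0$ in which two of the three colors are already nearly balanced on either side, leaving only the third color and the parity of $k$ to be controlled.

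Next I would introduce, for each color $c\in\{R,G,B\}$, the deficit
\[
  d_c(\theta,k)=\max\bigl\{0,\,|Q_1\cap c|-\lfloor |Q_1|/2\rfloor\bigr\}+\max\bigl\{0,\,|Q_2\cap c|-\lfloor |Q_2|/2\rfloor\bigr\},
\]
and search for $(\theta,k)$ with $k$ even, $2\leq k\leq n-2$, and $d_R=d_G=d_B=0$. Because $k\mapsto k+1$ alters the membership of a single point, at most one deficit shifts by $1$ at each step; combined with the fact that $P$ itself is color-balanced and that $b^\ast$ is blue, a discrete intermediate-value argument applied along the $k$-axis (restricted to even $k$) will force $d_R=d_G=d_B=0$ at some interior value. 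The bounds $2\leq|Q_i|\leq n-2$ then exclude only $k\in\{0,1,n-1,n\}$, and these extremes are themselves obstructed by color-balance: when $Q_1=\{b^\ast\}$, blue occupies $1>0=\lfloor 1/2\rfloor$ places in $Q_1$, so $d_B>0$.

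The main obstacle will be arranging (i) color-balance of both $Q_1$ and $Q_2$, (ii) even parity of $|Q_1|$, and (iii) the bounds $2\leq|Q_i|\leq n-2$ \emph{simultaneously}. The monochromatic-hull hypothesis is essential for (i): it guarantees a reservoir of blue hull vertices that can be exchanged across $\ell$ to rebalance when blue threatens majority. Concretely, if for a fixed $\theta_0$ every even $k$ leaves some color in strict majority on one side, I would rotate $\theta$ infinitesimally so that a single pair of points (including at least one blue hull vertex) swaps sides while $|Q_1|$ is preserved; tracking the monovariant $d_R+d_G+d_B$ under such swaps, I would argue that it strictly decreases until it vanishes. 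This exchange argument, combined with the parity case analysis on $(|R|,|G|,|B|)$, is where the bulk of the technical work will go.
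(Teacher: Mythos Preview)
The paper does not prove this theorem; it is quoted from Kano et al.\ and only the \emph{algorithmic outline} of their argument is reproduced: pick a hull vertex $p$ (necessarily of the hull color), sort $P\setminus\{p\}$ angularly around $p$, and scan this list---a line through $p$ at the right angular position gives the desired partition. So the ``paper's proof'' is really a one-parameter rotational sweep about a fixed hull vertex; your two-parameter scheme $(\theta,k)$ with a free basepoint $b^\ast$ is already more complicated than what is needed.

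More importantly, your plan has a genuine gap at the step you flag as ``a discrete intermediate-value argument applied along the $k$-axis (restricted to even $k$) will force $d_R=d_G=d_B=0$.'' Two problems arise. First, your deficits $d_c$ are nonnegative by construction, so there is no sign change for an intermediate-value argument to detect; knowing $d_B>0$ at $k=1$ tells you nothing about where $d_B$ vanishes. You would need a \emph{signed} quantity that is forced to take opposite signs at the two ends of the scan, and you have not built one. Second, once you restrict to even $k$, consecutive admissible values of $k$ differ by $2$, so two points change sides between them and your claim that ``at most one deficit shifts by $1$ at each step'' fails on the very sequence you intend to scan. The fallback of rotating $\theta$ to swap a pair including a blue hull vertex, with $d_R+d_G+d_B$ as a monovariant, is asserted but not argued: there is no reason a single swap must strictly decrease this sum.

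The Kano et al.\ argument sidesteps all of this by keeping the line through a fixed hull vertex $p$. Then each step of the angular scan moves exactly one point across $\ell$, the size of the growing side increases by exactly one, and one can track a single signed imbalance (rather than three nonnegative deficits) whose endpoints have opposite behavior because the first and last points seen are both hull vertices and hence of the same color as $p$. If you want to salvage your approach, fix $b^\ast$ on the line, drop the parameter $k$, and replace the deficits $d_c$ by a signed counter whose endpoint values you can actually compute.
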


They also proved the existence of a plane perfect matching in $K_n(R,\allowbreak G,B)$ by recursively applying Theorem~\ref{Kano-thr}. Their proof is constructive. Although they did not analyze the running time, it can be shown that their algorithm runs in $O(n^2\log n)$ time as follows. If the size of the largest color class is exactly $n/2$, then consider the points in the largest color class as $R$ and the other points as $B$, then compute a plane $RB$-matching; and we are done. If there are two adjacent points of distinct colors on the convex hull, then match these two points and recurse on the remaining points. Otherwise, if the convex hull is monochromatic, pick a point $p\in P$ on the convex hull and sort the points in $P\setminus\{p\}$ around $p$. A line $\ell$\textemdash partitioning the point set into two color-balanced point sets\textemdash is found by scanning the sorted list. Then recurse on each of the partitions. To find $\ell$ they spend $O(n\log n)$ time. The total running time of their algorithm is $O(n^2\log n)$. 

Based on the algorithm of Kano et al.~\cite{Kano2013}, we can show that a plane perfect matching in $K_n(R,G,B)$ can be computed in $O(n\log^3 n)$ time. We can prove the existence of a feasible cut for $P$, even if the points on the convex hull of $P$ are not monochromatic. To find feasible cuts recursively, we use the dynamic convex hull structure of Overmars and Leeuwen~\cite{Overmars1981}, which uses $O(\log^2 n)$ time for each insertion and deletion. Pick a point $p\in P$ on the convex hull of $P$ and look for a point $q\in P\setminus\{p\}$, such that the line passing through $p$ and $q$ is a feasible cut. Search for $q$, alternatively, in clockwise and counterclockwise directions around $p$. To do this, we repeatedly check if the line passing through $p$ and its (clockwise and counterclockwise in turn) neighbor on the convex hull, say $r$, is a feasible cut. If the line through $p$ and $r$ is not a feasible cut, then we delete $r$. At some point we find a feasible cut $\ell$ which divides $P$ into $Q_1$ and $Q_2$. Add the two points on $\ell$ to either $Q_1$ or $Q_2$ such that they remain color-balanced. Let $|Q_1|=k$ and $|Q_2|\ge k$. In order to compute the data structure for $Q_2$, we use the current data structure and undo the deletions on the side of $\ell$ which contains $Q_2$. We rebuild the data structure for $Q_1$. Then, we recurse on $Q_1$ and $Q_2$. The running time can be expressed by $T(n)= T(n-k)+T(k)+O(k \log^2 n)$, where $k\le n-k$. This recurrence solves to $O(n \log^3 n)$. Notice that, because we undo the deletions on one side of $\ell$ and rebuild the data structure for the points on the other side of $\ell$, any dynamic data structure that performs insertions and deletions in faster amortized time may not be feasible.
 
\subsubsection{Multicolored Point Sets}
Let $\{P_1,\dots, P_k\}$, where $k \ge 2$, be a partition of $P$ and $K_n(P_1,\dots,\allowbreak P_k)$ be the complete multipartite geometric graph on $P$. 
A necessary and sufficient condition for the existence of a perfect matching in $K_n(P_1,\dots, P_k)$ follows from the following result of Sitton~\cite{Sitton1996}.

\begin{theorem}[Sitton~\cite{Sitton1996}]
\label{Sitton-thr}
The size of a maximum matching in any complete multipartite graph $K_{n_1,\dots,n_k}$, with $n=n_1+\dots+n_k$ vertices, where $n_1\ge \dots \ge n_k$, is
$$|M_{max}|=\min\left\{\sum_{i=2}^{k}{n_i},\left\lfloor\frac{1}{2}\sum_{i=1}^{k}{n_i}\right\rfloor\right\}.$$
\end{theorem}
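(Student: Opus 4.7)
The plan is to prove matching upper and lower bounds that together establish the formula, organizing the argument into an easy counting-based upper bound and a two-case constructive lower bound.

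For the upper bound I would observe that every edge of $K_{n_1,\dots,n_k}$ has its two endpoints in distinct parts, so each matching edge has at most one endpoint inside $P_1$. Hence a matching of size $m$ uses at least $m$ vertices drawn from $P_2\cup\dots\cup P_k$, which forces $m \le n - n_1 = \sum_{i=2}^{k} n_i$. Combined with the trivial bound $m \le \lfloor n/2\rfloor$ coming from the fact that $m$ edges cover $2m$ distinct vertices, this gives $|M_{\max}| \le \min\bigl\{\sum_{i=2}^{k} n_i,\;\lfloor n/2\rfloor\bigr\}$.

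For the lower bound I would split into two cases depending on which term of the minimum is active. If $n_1 \ge \sum_{i=2}^{k} n_i$, it suffices to assign each vertex of $P_2\cup\dots\cup P_k$ to a distinct partner in $P_1$; this is possible because $|P_1|$ is large enough and every inter-part pair is an edge, producing a matching of size exactly $\sum_{i=2}^{k} n_i$. If instead $n_1 < \sum_{i=2}^{k} n_i$, equivalently $2n_1 \le n$, I would induct on $n$: select a vertex $u$ from the currently largest part and a vertex $v$ from the currently second-largest part (both exist since $n_1 < n$), place the edge $uv$ in the matching, and recurse on the $n-2$ remaining vertices, which still induce a complete multipartite graph.

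The main obstacle will be verifying that the invariant $2\,(\text{largest part size}) \le (\text{total size})$ is preserved after deleting $u$ and $v$, so that the inductive hypothesis applies to the residual instance. Writing $n_i'$ for the new part sizes, one has $n_1' = n_1 - 1 \le (n-2)/2$ and $n_2' = n_2 - 1 \le n_1 - 1 \le (n-2)/2$. For $i \ge 3$, if $n_i > (n-2)/2$ then $n_1 \ge n_2 \ge n_i > (n-2)/2$, forcing $n_1 + n_2 + n_i > 3(n-2)/2$, which contradicts $n_1 + n_2 + n_i \le n$ whenever $n \ge 6$; the finitely many small cases $n \le 5$ can be dispatched by direct inspection (where the claim is immediate). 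Together with the trivial base case $n \le 1$, the induction produces a matching of size $\lfloor n/2\rfloor$, which matches the upper bound and completes the proof.
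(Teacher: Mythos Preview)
The paper does not prove this statement; it is quoted as a known result of Sitton and used as a black box (the paper later gives a related but separate algorithmic construction in Theorem~\ref{max-matching}). Your argument is correct and self-contained. One small wording issue: the phrase ``equivalently $2n_1\le n$'' is not quite right, since $n_1<\sum_{i\ge 2}n_i$ is equivalent to $2n_1<n$, not $2n_1\le n$. What your induction actually needs, however, is precisely the weaker invariant $2n_1\le n$; the boundary case $2n_1=n$ is already covered by your Case~1, so the induction is sound once you state the hypothesis as $2n_1\le n$ rather than as the strict inequality.

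For comparison, the construction hidden in the proof of Theorem~\ref{max-matching} takes a different route to the lower bound: when no part is in strict majority it first merges the $k$ classes down to three (always combining the two smallest, which preserves the ``no majority'' condition) and then exhibits a perfect matching of the resulting tripartite graph by an explicit split of the smallest class. That approach trades your induction and small-case check for the merging lemma; your greedy ``peel one vertex from each of the two largest parts'' argument is more elementary and arguably cleaner as a pure existence proof.
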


Theorem~\ref{Sitton-thr} implies that if $n$ is even and $n_1\le \frac{n}{2}$, then $K_{n_1,\dots,n_k}$ has a perfect matching. It is obvious that if $n_1>\frac{n}{2}$, then $K_{n_1,\dots,n_k}$ does not have any perfect matching. Therefore,

\begin{corollary}
\label{colored-matching-cor}
Let $k\ge 2$ and consider a partition $\{P_1,\dots,P_k\}$ of a point set $P$, where $|P|$ is even. Then, $K_n(P_1,\dots,P_k)$ has a colored matching if and only if $P$ is color-balanced. 
\end{corollary}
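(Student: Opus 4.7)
\begin{proofof}{Corollary~\ref{colored-matching-cor}}
The plan is to deduce the corollary as a direct instantiation of Theorem~\ref{Sitton-thr}. Since a colored matching is, by definition, a perfect matching of $K_n(P_1,\dots,P_k)$, it suffices to show that the maximum matching size given by Sitton's formula equals $n/2$ precisely when $P$ is color-balanced.

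Let $n_i = |P_i|$ and, without loss of generality, assume $n_1 \ge n_2 \ge \dots \ge n_k$, so that color-balancedness reduces to the single condition $n_1 \le \lfloor n/2 \rfloor = n/2$ (using that $n$ is even). Theorem~\ref{Sitton-thr} gives
\[
|M_{max}| \;=\; \min\left\{\,n - n_1,\ \tfrac{n}{2}\,\right\}.
\]
For the forward direction, I will assume $P$ is color-balanced and verify that the first term dominates: $n - n_1 \ge n - n/2 = n/2$, so $|M_{max}| = n/2$ and a perfect matching (hence a colored matching) exists. For the converse, I will assume $P$ is not color-balanced, so $n_1 > n/2$, and deduce $n - n_1 < n/2$, whence $|M_{max}| < n/2$ and no perfect matching can exist.

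The argument is essentially a one-line case analysis on which term achieves the minimum in Sitton's formula, so there is no genuine obstacle beyond correctly handling the parity of $n$ (which is assumed even, making $\lfloor n/2\rfloor = n/2$). The only subtlety worth flagging is that color-balancedness is stated as $|P_i| \le \lfloor n/2 \rfloor$ for all $i$, and under the WLOG ordering this is equivalent to the single bound on the largest class; this equivalence should be mentioned explicitly so that the reduction to Sitton's formula is transparent.
\end{proofof}
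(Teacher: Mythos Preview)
Your proposal is correct and matches the paper's own reasoning almost exactly: the paper derives the corollary directly from Sitton's formula, observing that when $n$ is even and $n_1 \le n/2$ the minimum in Theorem~\ref{Sitton-thr} equals $n/2$, and that when $n_1 > n/2$ no perfect matching can exist. Your write-up is simply a more explicit version of the same argument.
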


Aichholzer et al.~\cite{Aichholzer2010}, and Kano et al.~\cite{Kano2013} show that the same condition as in Corollary~\ref{colored-matching-cor} is necessary and sufficient for the existence of a plane colored matching in $K_n(P_1,\dots,P_k)$:

\begin{theorem}[Aichholzer et al.~\cite{Aichholzer2010}, and Kano et al.~\cite{Kano2013}]
\label{Aichholzer-thr}
Let $k\ge 2$ and consider a partition $\{P_1,\dots,P_k\}$ of a point set $P$, where $|P|$ is even. Then, $K_n(P_1,\dots,P_k)$ has a plane colored matching if and only if $P$ is color-balanced. 
\end{theorem}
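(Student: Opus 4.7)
The necessity direction is immediate: any plane colored matching is in particular a colored matching, so by Corollary~\ref{colored-matching-cor} the point set $P$ must be color-balanced. The substance of the theorem is the sufficiency direction, and my plan is to prove it by strong induction on $n=|P|$, using a recursive partitioning scheme modeled on the way Theorem~\ref{ham-sandwich-thr} yields a plane $RB$-matching for $k=2$.

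For the base case $n=2$, color-balance forces the two points to carry distinct colors, and the single segment joining them is a plane colored matching. For the inductive step with $n\ge 4$, the key tool I intend to invoke is the Balanced Cut Theorem (Theorem~\ref{balanced-cut-thr-intro}): since $P$ is color-balanced and $|P|$ is even, there exists a line $\ell$, containing no point of $P$, that partitions $P$ into two subsets $Q_1$ and $Q_2$, each color-balanced, each of even size, and each of size at most $\tfrac{2}{3}n+1 < n$. By the induction hypothesis applied separately to $Q_1$ and $Q_2$, the complete multipartite geometric graphs they induce admit plane colored matchings $M_1$ and $M_2$. Setting $M=M_1\cup M_2$ produces a colored perfect matching of $P$, and because $Q_1$ and $Q_2$ lie in opposite open half-planes bounded by $\ell$, no edge of $M_1$ can cross any edge of $M_2$; since $M_1$ and $M_2$ are individually plane, $M$ itself is plane.

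The two clauses in the Balanced Cut Theorem that really do the work here are the preservation of color-balance in each side and the preservation of parity; without the parity clause the induction would break, since one of $Q_1,Q_2$ could have odd cardinality and then admit no perfect matching at all. The main obstacle in executing this plan is therefore not the matching construction itself but establishing the Balanced Cut Theorem, which must guarantee these three properties (color-balance of both sides, even size of both sides, and the size bound that gives strict reduction $|Q_i|<n$) simultaneously. I would prove that separately, handling the case where some color class has size exactly $n/2$ by reducing to the bipartite ham-sandwich setting, and otherwise locating an appropriate pivot point on the convex hull of $P$ and rotating a line through it until the color-count invariant flips, then adjusting to ensure the parity and size conditions. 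Once that lemma is in hand, the inductive matching argument above is immediate, and the recursion depth $O(\log n)$ together with linear-time balanced-cut computation will also yield the claimed $\Theta(n\log n)$ running time for the algorithmic version.
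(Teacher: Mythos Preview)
Your inductive argument via balanced cuts is correct and is precisely the constructive proof that underlies the paper's $\Theta(n\log n)$ algorithm in Section~\ref{algorithm-section}. However, the paper does not itself prove Theorem~\ref{Aichholzer-thr}; it cites it, and the proof it attributes to Aichholzer et al.\ is quite different: the minimum-weight colored matching in $K_n(P_1,\dots,P_k)$ is plane, by the standard uncrossing argument (two crossing edges of distinct-color endpoints can be swapped to reduce total length while preserving the bichromatic constraint). That argument is shorter and needs no cut lemma, but yields only an $O(n^3)$ algorithm; your approach trades a harder auxiliary lemma for the fast divide-and-conquer. One caution on your sketch of that auxiliary lemma: the convex-hull-pivot-and-rotate strategy you describe is essentially the Kano et al.\ method (Theorem~\ref{Kano-thr}), which produces a feasible cut with parity but \emph{not} the $\tfrac{2}{3}n$ size bound---that is enough for the existence induction you need here, but not for the $O(n\log n)$ running time you mention at the end. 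The paper obtains the genuine balanced cut differently, by reducing to three colors and applying a ham-sandwich cut to $R$ versus $G\cup X$ for a carefully chosen subset $X\subseteq B$ (Lemma~\ref{balanced-cut-lemma} and Theorem~\ref{even-cut-cor}).
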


In fact, they show something stronger. Aichholzer et al.~\cite{Aichholzer2010} show that a minimum weight colored matching in $K_n(P_1,\dots,P_k)$, which minimizes the total Euclidean length of the edges, is plane. Gabow \cite{Gabow1990} gave an implementation of Edmonds' algorithm which computes a minimum weight matching in general graphs in $O(n(m+n\log n))$ time, where $m$ is the number of edges in $G$. Since $P$ is color-balanced, $K_n(P_1,\dots,P_k)$ has $\Theta(n^2)$ edges. Thus, a minimum weight colored matching in $K_n(P_1,\dots,P_k)$, and hence a plane colored matching in $K_n(P_1,\dots,P_k)$, can be computed in $O(n^3)$ time. Kano et al.~\cite{Kano2013} extended their $O(n^2\log n)$-time algorithm for the 3-colored point sets to the multicolored case. 

Since the problem of computing a plane $RB$-matching in $K_n(R,B)$ is a special case of the problem of computing a plane colored matching in $K_n(P_1,\dots,P_k)$, the $\Omega(n\log n)$ time lower bound for computing a plane $RB$-matching holds for computing a plane colored matching.

\addtocontents{toc}{\protect\setcounter{tocdepth}{1}}
\subsection{Our Contribution}
\addtocontents{toc}{\protect\setcounter{tocdepth}{2}}
Our main contribution, which is presented in Section~\ref{balanced-cut-section}, is the following: given any color-balanced point set $P$ in general position in the plane, there exists a balanced cut for $P$. Further, we show that if $n$ is even, then there exists a balanced cut which partitions $P$ into two point sets each of even size, and such a balanced cut can be computed in linear time. In Section~\ref{algorithm-section}, we present a divide-and-conquer algorithm which computes a plane colored matching in $K_n(P_1,\dots,P_k)$ in $\Theta(n\log n)$ time, by recursively finding balanced cuts in color-balanced subsets of $P$. In case $P$ is not color-balanced, then $K_n(P_1,\dots,P_k)$ does not admit a perfect matching; we describe how to find a plane colored matching with the maximum number of edges in Section~\ref{maximum-section}. In addition, we show how to compute a maximum matching in any complete multipartite graph in linear time. 

\section{Balanced Cut Theorem}
\label{balanced-cut-section}

Given a color-balanced point set $P$ with $n\ge 4$ points in general position in the plane, recall that a {\em balanced cut} is a line which partitions $P$ into two point sets $Q_1$ and $Q_2$, such that both $Q_1$ and $Q_2$ are color-balanced and $\max\{|Q_1|,|Q_2|\}\le \frac{2n}{3}+1$. Let $\{P_1,\dots,P_k\}$ be a partition of $P$, where the points in $P_i$ are colored $C_i$. In this section we prove the existence of a balanced cut for $P$. Moreover, we show how to find such a balance cut in $O(n)$ time.
 
If $k=2$, the existence of a balanced cut follows from the ham sandwich cut theorem. If $k\ge 4$, we reduce the $k$-colored point set $P$ to a three colored point set. Afterwards, we prove the statement for $k=3$.

\begin{lemma}
\label{k-to-3}
Let $P$ be a color-balanced point set of size $n$ in the plane with $k\ge4$ colors. In $O(n)$ time $P$ can be reduced to a color-balanced point set $P'$ with 3 colors such that any balanced cut for $P'$ is also a balanced cut for $P$.
\end{lemma}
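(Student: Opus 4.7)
The plan is to obtain $P'$ by \emph{merging} color classes, producing a $3$-coloring $\{A,B,C\}$ of the same point set $P$ with two properties: (i) $P'$ is color-balanced, i.e.\ $\max\{|A|,|B|,|C|\}\le \lfloor n/2\rfloor$, and (ii) every original class $P_i$ lies entirely inside one of $A$, $B$, $C$. Property (ii) is the key observation: it forces, for any set $Q\subseteq P$ and any original class $P_i$, the inequality $|P_i\cap Q|\le |X\cap Q|$, where $X\in\{A,B,C\}$ is the merged class containing $P_i$. Hence if a line $\ell$ gives a balanced cut for $P'$, so that on both sides $|X\cap Q_j|\le \lfloor |Q_j|/2\rfloor$ for every $X\in\{A,B,C\}$, then automatically $|P_i\cap Q_j|\le \lfloor |Q_j|/2\rfloor$ for every original color class. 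The cardinality constraint $\max\{|Q_1|,|Q_2|\}\le \tfrac{2n}{3}+1$ is about sizes only and is preserved for free.

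The construction: in $O(n)$ time, compute $|P_i|$ for each $i$ and pick a class $P_{i^\ast}$ of maximum size, writing $m:=|P_{i^\ast}|$. Because $P$ is color-balanced, $m\le \lfloor n/2\rfloor$. Set $A:=P_{i^\ast}$ and initialize $B,C:=\emptyset$. Then sweep through the remaining $k-1\ge 3$ classes in any order; each class is inserted as a whole into whichever of $B$ and $C$ is currently smaller (ties broken arbitrarily). Maintaining the two running totals $|B|$ and $|C|$ is trivially $O(k)=O(n)$, so the whole reduction runs in linear time and does not require sorting the $k$ class sizes.

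The main step is the bound $\max\{|B|,|C|\}\le \lfloor n/2\rfloor$. Without loss of generality assume $|B|\ge |C|$ at the end. If $B=\emptyset$ there is nothing to check; otherwise let $P_j$ be the last class that was placed into $B$. At the moment of that placement, $B$ was the smaller bin, so $|B|-|P_j|\le |C|$, giving $|B|\le |C|+|P_j|$. Combining with $|B|+|C|=n-m$ yields
\[
2|B| \;\le\; (n-m)+|P_j|.
\]
Since $P_{i^\ast}$ has maximum size, $|P_j|\le m$, so $2|B|\le n$ and hence $|B|\le \lfloor n/2\rfloor$. Together with $|A|=m\le \lfloor n/2\rfloor$, this shows $P'$ is color-balanced, completing (i). Property (ii) holds by construction.

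The step I expect to be the main obstacle is exactly the size bound above: a priori, a greedy assignment into two bins could overshoot $n/2$, and indeed generic LPT scheduling on $k-1$ items only guarantees a max load of about $2/3$ of the total. What rescues the argument here is that $P_{i^\ast}$ has been removed first, so every class still to be placed satisfies $|P_j|\le m$; plugging this single inequality into the "smaller-bin" property collapses the bound to $n/2$ without any case analysis on $k$ or on the distribution of sizes. Everything else (running time, and the transfer of the balanced-cut property from $P'$ to $P$) is routine.
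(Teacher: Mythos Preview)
Your proof is correct and takes a genuinely different route from the paper's. The paper reduces $k$ to $3$ by \emph{repeatedly merging the two smallest color classes}: at each step, if $P_{k-1}$ and $P_k$ are the two smallest among $k\ge 4$ classes, then $|P_{k-1}|+|P_k|\le |P_1|+\dots+|P_{k-2}|$, so the merged class still has at most $\lfloor n/2\rfloor$ points and the invariant is maintained down to three colors. To get linear time for these $O(k)$ merges, the paper invokes a monotone priority queue on the class sizes.

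Your construction is more direct: isolate the single largest class as $A$, then greedily pack the remaining $k-1$ classes into two bins. The crux is the inequality $2|B|\le (n-m)+|P_j|\le n$, which works precisely because removing the global maximum first forces every item in the greedy phase to have size at most $m$; this is a nice observation. Your approach avoids the priority-queue machinery entirely (one pass to find the maximum, one pass to fill the bins), so the $O(n)$ bound is immediate. The paper's iterative merging, on the other hand, is agnostic about which class is largest and produces three classes that are somewhat more balanced among themselves, though this extra balance is not needed for the lemma.
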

\begin{proof}
We repeatedly merge the color families in $P$ until we get a color-balanced point set $P'$ with three colors. Afterwards, we show that any balanced cut for $P'$ is also a balanced cut for $P$. 

Without loss of generality assume that $C_1,\dots, C_k$ is a non-increasing order of the color classes according to the number of points in each color class. That is, $\lfloor|P|/2\rfloor\ge |P_1|\ge \dots\ge|P_k|\ge 1$ (note that $P$ is color-balanced). In order to reduce the $k$-colored problem to a 3-colored problem, we repeatedly merge the two color families of the smallest cardinality. In each iteration we merge the two smallest color families, $C_{k-1}$ and $C_k$, to get a new color class, $C'_{k-1}$, where $P'_{k-1}=P_{k-1}\cup P_k$. In order to prove that $P'=P_1\cup\dots\cup P_{k-2}\cup P'_{k-1}$ is color-balanced with respect to the coloring $C_1,\dots, C_{k-2},C'_{k-1}$ we have to show that $|P'_{k-1}|\le \lfloor|P'|/2\rfloor$. Note that before the merge we have $|P|=|P_1|+\dots+|P_{k-2}|+|P_{k-1}|+|P_k|$, while after the merge we have $|P'|=|P_1|+\dots+|P_{k-2}|+|P'_{k-1}|$, where $|P'_{k-1}|=|P_{k-1}|+|P_k|$. Since $P_{k-1}$ and $P_k$ are the two smallest and $k\ge4$, $|P'_{k-1}|\le |P_1|+\dots+|P_{k-2}|$. This implies that after the merge we have $|P'_{k-1}|\le \lfloor|P'|/2\rfloor$. Thus $P'$ is color-balanced. By repeatedly merging the points of the two smallest color families, at some point we get a 3-colored point set $P'$ which is color-balanced. Without loss of generality assume that $P'$ is colored by $R$, $G$, and $B$. Consider any balanced cut $\ell$ for $P'$; $\ell$ partitions $P'$ into two sets $Q_1$ and $Q_2$, each of size at most $\frac{2}{3}n+1$, such that the number points of each color in $Q_i$ is at most $\lfloor |Q_i|/2\rfloor$, where $i=1,2$. Note that the set of points in $P$ colored $C_j$, for $1\le j\le k$, is a subset of points in $P'$ colored either $R$, $G$, or $B$. Thus, the number of points colored $C_j$ in $Q_i$ is at most $\lfloor |Q_i|/2\rfloor$, where $j=1,\dots,k$ and $i=1,2$. Therefore, $\ell$ is a balanced cut for $P$.

In order to merge the color families, a monotone priority queue (see~\cite{Cherkassky1999}) can be used, where the priority of each color $C_j$ is the number of points colored $C_j$. The monotone priority queue offers {\sf insert} and {\sf extract-min} operations where the priority of an inserted element is greater than the priority of the last element extracted from the queue. We store the color families in a monotone priority queue of size $\frac{n}{2}$ (because all elements are in the range of 1 up to $\frac{n}{2}$). Afterwards, we perform a sequence of $O(k)$ {\sf extract-min} and {\sf insert} operations. Since $k\le n$, the total time to merge $k$ color families is $O(n)$.
\qed\end{proof}

According to Lemma~\ref{k-to-3}, from now on we assume that $P$ is a color-balanced point set consisting of $n$ points colored by three colors.

\begin{lemma}
\label{balanced-cut-lemma}
Let $P$ be a color-balanced point set of $n\ge 4$ points in general position in the plane with three colors. In $O(n)$ time we can compute a line $\ell$ such that
\begin{enumerate}
  \item $\ell$ does not contain any point of $P$.
  \item $\ell$ partitions $P$ into two point sets $Q_1$ and $Q_2$, where
      \begin{enumerate}
	\item both $Q_1$ and $Q_2$ are color-balanced,
	\item both $Q_1$ and $Q_2$ contains at most $\frac{2}{3}n+1$ points.
      \end{enumerate}
\end{enumerate}
\end{lemma}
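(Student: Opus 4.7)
The plan is first to reduce from a $k$-coloring (for $k \geq 4$) to a 3-coloring using Lemma~\ref{k-to-3}, so I may assume that $P$ is partitioned into at most three color classes $R$, $G$, $B$ with $|R| \geq |G| \geq |B|$ and $|R| \leq \lfloor n/2 \rfloor$. The principal tool will be the ham-sandwich theorem (Theorem~\ref{ham-sandwich-thr}) applied to a carefully chosen bipartition of the three color classes, combined with a case analysis on how close $|R|$ is to $\lfloor n/2 \rfloor$.

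If $|B| = 0$, then $P$ is effectively 2-colored and the standard ham-sandwich theorem produces a line that simultaneously bisects $R$ and $G$; each side then has $n/2$ points with $|R|/2$ red and $|G|/2$ green, which is trivially color-balanced and size-balanced, and the line is computable in $O(n)$ time by the algorithm of Lo~et~al.~\cite{Lo1994}. For the "critical" case in which $|R| = \lfloor n/2 \rfloor$, I plan to apply ham-sandwich to the bipartition $\{R,\, G \cup B\}$: this yields a line whose two sides each contain $\lfloor|R|/2\rfloor$ or $\lceil|R|/2\rceil$ red points and the complementary number from $G \cup B$, for a total of $\lceil n/2\rceil$ points per side. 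Because $|R| = \lfloor n/2 \rfloor$, the total number of non-red points on each side is at most $\lceil n/4 \rceil$, which is automatically an upper bound for both $g_i$ and $b_i$ individually; combined with the red bound $\lceil|R|/2\rceil \leq \lceil n/4 \rceil$, color-balance holds on both sides.

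In the non-critical case $|R| < \lfloor n/2 \rfloor$, I plan to consider the three ham-sandwich cuts arising from the bipartitions $\{R, G \cup B\}$, $\{G, R \cup B\}$, and $\{B, R \cup G\}$ and argue that at least one of them produces a color-balanced partition. The key observation will be that each candidate cut can fail to be color-balanced only if some specific color exceeds half of one side, and a pigeonhole/counting argument over the three candidates, together with the color-balance of $P$ as a whole (no color has more than $\lfloor n/2 \rfloor$ points), will rule out simultaneous failure of all three. If any failure remains, sliding the cut infinitesimally off a point of $P$ handles the requirement that $\ell$ contain no point. The size bound $\max(|Q_1|,|Q_2|) \leq \tfrac{2n}{3}+1$ follows immediately since each ham-sandwich cut produces sides of size $\lceil n/2 \rceil \leq \tfrac{2n}{3} + 1$ for $n \geq 4$ (small-$n$ cases can be checked directly). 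Each ham-sandwich cut is computed in $O(n)$ time~\cite{Lo1994} and color-verification takes $O(n)$ time; trying at most three candidates gives the overall $O(n)$ running time.

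The main obstacle will be the combinatorial argument in the non-critical case that among the three candidate ham-sandwich cuts at least one must satisfy the color-balance condition; this requires a careful accounting of how the third color can be distributed by a ham-sandwich on the remaining two. Should the three-way approach turn out to be too fragile, I would fall back on a rotational sweep: fix a vertex $p$ on the convex hull of $P$, angularly sort the remaining points around $p$, and apply a discrete intermediate-value argument along the resulting family of lines through $p$ (then slide off $p$) to exhibit some $i \in [\,n/3-1,\,2n/3+1\,]$ whose associated partition is color-balanced on both sides.
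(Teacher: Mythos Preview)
Your proposal has a genuine gap in the non-critical case. The claim that among the three ham-sandwich cuts for the bipartitions $\{R, G\cup B\}$, $\{G, R\cup B\}$, $\{B, R\cup G\}$ at least one yields a color-balanced partition on both sides is not justified, and I do not see why it should hold. These are three different lines, produced by three independent applications of the ham-sandwich algorithm; nothing links how the ``third'' color distributes across them. For instance, with $|R|=|G|=|B|=n/3$, the cut bisecting $R$ and $G\cup B$ produces sides of size $n/2$, each with $n/3$ non-red points, and nothing prevents all $n/3$ green points from landing on one side, making green a strict majority there. The same failure mode is available to each of the other two cuts independently, and a pigeonhole over three unrelated lines cannot rule out all three failing at once. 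Your own wording (``the main obstacle will be\dots'') signals that you sense this; indeed this is where the argument breaks.

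The paper avoids the difficulty with a single, well-chosen ham-sandwich cut. With $|R|\ge|G|\ge|B|$, it moves an arbitrary subset $X\subseteq B$ of size $|R|-|G|$ into the green class, so that $|G\cup X|=|R|$, and applies ham-sandwich to $R$ versus $G\cup X$. Since the two bisected sets now have equal size, each side satisfies $|R_i|=|G_i|+|X_i|$, which immediately bounds red and green by half the side. The leftover blue points $Y=B\setminus X$ are uncontrolled by the cut, but $|B|\le|G|$ forces $|B_i|\le |R_i|+|G_i|$ on each side, giving the blue bound. The size bound $2n/3+1$ then follows not from ``each side is $n/2$'' (false here, since all of $Y$ may fall on one side) but from $|R|\ge n/3$, which forces each side to contain at least $2\lfloor|R|/2\rfloor$ points. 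Your fallback rotational sweep is essentially the approach behind Theorem~\ref{Kano-thr}, which needs a monochromatic convex hull and does not deliver the $2n/3$ bound; getting both color-balance and the size constraint simultaneously via intermediate value would require considerably more than what you sketch.
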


\begin{figure}[htb]
  \centering
\setlength{\tabcolsep}{0in}
  $\begin{tabular}{cc}
\multicolumn{1}{m{.5\columnwidth}}{\centering\includegraphics[width=.35\columnwidth]{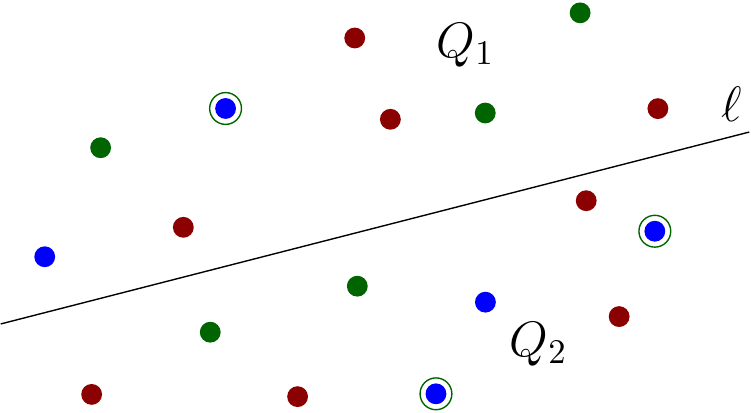}}
&\multicolumn{1}{m{.5\columnwidth}}{\centering\includegraphics[width=.35\columnwidth]{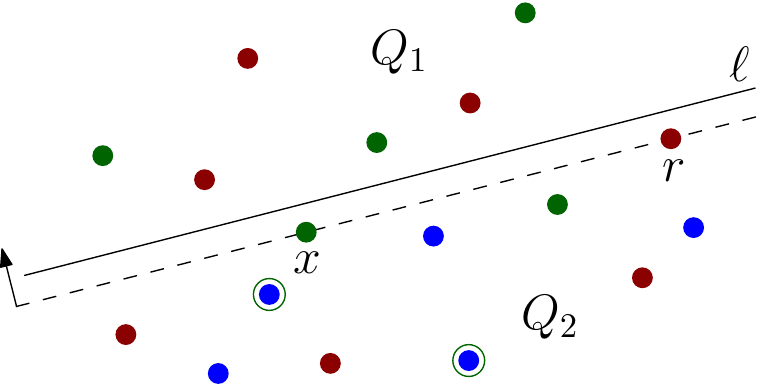}} \\
(a) & (b)
\end{tabular}$

  \caption{Illustrating the balanced cut theorem. The blue points in $X$ are surrounded by circles. The line $\ell$ is a balanced cut where: (a) $|R|$ is even, and (b) $|R|$ is odd.}
\label{balanced-cut-fig}
\end{figure}

\begin{proof}
Assume that the points in $P$ are colored red, green, and blue.
Let $R$, $G$, and $B$ denote the set of red, green, and blue points, respectively. Without loss of generality assume that $1\le|B|\le|G|\le|R|$. Since $P$ is color-balanced, $|R|\le \lfloor\frac{n}{2}\rfloor$. Let $X$ be an arbitrary subset of $B$ such that $|X|=|R|-|G|$; note that $X=\emptyset$ when $|R|=|G|$, and $X=B$ when $|R|=\frac{n}{2}$ (where $n$ is even). Let $Y=B-X$. Let $\ell$ be a ham sandwich cut for $R$ and $G\cup X$ (pretending that the points in $G\cup X$ have the same color). Let $Q_1$ and $Q_2$ denote the set of points on each side of $\ell$; see Figure~\ref{balanced-cut-fig}(a). If $|R|$ is odd, then $|G\cup X|$ is also odd, and thus $\ell$ contains a point $r\in R$ and a point $x\in G\cup X$; see Figure~\ref{balanced-cut-fig}(b). In this case without loss of generality assume that the number of blue points in $Q_2$ is at least the number of blue points in $Q_1$; slide $\ell$ slightly such that $r$ and $x$ lie in the same side as $Q_2$, i.e. $Q_2$ is changed to $Q_2\cup\{r,x\}$. We prove that $\ell$ satisfies the statement of the theorem. The line $\ell$ does not contain any point of $P$ and by the ham sandwich cut theorem it can be computed in $O(n)$ time. 

Now we prove that both $Q_1$ and $Q_2$ are color-balanced. Let $R_1$, $G_1$, and $B_1$ be the set of red, green, and blue points in $Q_1$. Let $X_1=X\cap Q_1$ and $Y_1=Y\cap Q_1$. Note that $B_1=X_1\cup Y_1$. Similarly, define $R_2$, $G_2$, $B_2$, $X_2$, and $Y_2$ as subsets of $Q_2$.
Since $|R|=|G\cup X|$ and $\ell$ bisects both $R$ and $G\cup X$, we have $|R_1|=\lfloor|R|/2\rfloor$ and $|G_1|+|X_1|=|R_1|$. In the case that $|R|$ is odd, we add the points on $\ell$ to $Q_2$ (assuming that $|B_2|\ge |B_1|$). Thus, in either case ($|R|$ is even or odd) we have $|R_2|=\lceil|R|/2\rceil$ and $|G_2|+|X_2|=|R_2|$. Therefore,
\begin{align}
\label{PRGX}
|Q_1|& \ge |R_1|+|G_1|+|X_1|=2\lfloor|R|/2\rfloor, \nonumber\\
|Q_2|& \ge |R_2|+|G_2|+|X_2|=2\lceil|R|/2\rceil.
\end{align}
Let $t_1$ and $t_2$ be the total number of red and green points in $Q_1$ and $Q_2$, respectively. Then, we have the following inequalities: 

\begin{equation}
\label{tt}
\begin{tabular}{lll}
{$\begin{aligned}
t_1 & = |R_1|+|G_1| \\
       & = 2|R_1|-|X_1| \\
       & \ge 2|R_1|-|X| \\
       &= 2\lfloor|R|/2\rfloor -(|R|-|G|) \\
      &= \left\{
	\begin{array}{l l}
	  |G| & \quad \text{if $|R|$ is even}\\
	  |G|-1 & \quad \text{if $|R|$ is odd,}
      \end{array} \right.
\end{aligned}$}
&&
{$\begin{aligned}
t_2 & = |R_2|+|G_2|\\
       & = 2|R_2|-|X_2|\\
       & \ge 2|R_2|-|X|\\
       &= 2\lceil|R|/2\rceil -(|R|-|G|)\\
      &= \left\{
	\begin{array}{l l}
	  |G| & \quad \text{if $|R|$ is even}\\
	  |G|+1 & \quad \text{if $|R|$ is odd.}
      \end{array} \right.
\end{aligned}$}
\end{tabular}
\end{equation}
In addition, we have the following equations:
\begin{align}
\label{PtB}
|Q_1| = t_1+|B_1|\quad\text{ and }\quad
|Q_2| = t_2+|B_2|.
\end{align}

Note that $|R_1|=\lfloor|R|/2\rfloor$ and $|G_1|\le |Q_1\cap (G\cup X)|=|R_1|$, thus, by Inequality~(\ref{PRGX}) we have $|R_1|\le \lfloor|Q_1|/2\rfloor$ and $|G_1|\le \lfloor|Q_1|/2\rfloor$. Similarly, $|R_2|\le \lfloor|Q_2|/2\rfloor$ and $|G_2|\le \lfloor|Q_2|/2\rfloor$. Therefore, in order to argue that $Q_1$ and $Q_2$ are color-balanced, it only remains to show that $|B_1|\le \lfloor|Q_1|/2\rfloor$ and $|B_2|\le \lfloor|Q_2|/2\rfloor$. Note that $|B_1|,|B_2|\le|B|$ and by initial assumption $|B|\le|G|$. We differentiate between two cases where $|R|$ is even and $|R|$ is odd. If $|R|$ is even, by Inequalities~(\ref{tt}) we have $t_1,t_2\ge |G|$. Therefore, by the fact that $\max\{|B_1|,|B_2|\}\le |B|\le |G|$ and Equation~(\ref{PtB}), we have $|B_1|\le \lfloor|Q_1|/2\rfloor$ and $|B_2|\le \lfloor|Q_2|/2\rfloor$. If $|R|$ is odd, we slide $\ell$ towards $Q_1$; assuming that $|B_2|\ge |B_1|$. In addition, since $|B_1|+|B_2|=|B|$ and $|B|\ge 1$, $|B_2|\ge 1$. Thus, $|B_1|\le |B|-1\le |G|-1$, while by Inequality~(\ref{tt}), $t_1\ge |G|-1$. Therefore, Equality~(\ref{PtB}) implies that $|B_1|\le \lfloor|Q_1|/2\rfloor$. Similarly, by Inequality~(\ref{tt}) we have $t_2\ge |G|+1$ while $|B_2|\le |G|$. Thus, Equality~(\ref{PtB}) implies that $|B_2|\le \lfloor|Q_2|/2\rfloor$. Therefore, both $Q_1$ and $Q_2$ are color-balanced.

We complete the proof by providing the following upper bound on the size of $Q_1$ and $Q_2$. Since we assume that $R$ is the largest color class, $|R|\ge \lceil\frac{n}{3}\rceil$. By Inequality~(\ref{PRGX}), $\min\{|Q_1|,\allowbreak |Q_2|\}\allowbreak\ge 2\lfloor|R|/2\rfloor$, which implies that
$$\max\{|Q_1|, |Q_2|\}\le n-2\left\lfloor\frac{|R|}{2}\right\rfloor\le n-2\left(\frac{|R|-1}{2}\right)\le n-\frac{n}{3}+1=\frac{2n}{3}+1.$$
\qed\end{proof}

Therefore, by Lemma~\ref{k-to-3} and Lemma~\ref{balanced-cut-lemma}, we have proved the following theorem:
\begin{theorem}[Balanced Cut Theorem]
\label{balanced-cut-thr}
Let $P$ be a color-balanced point set of $n\ge 4$ points in general position in the plane. In $O(n)$ time we can compute a line $\ell$ such that
\begin{enumerate}
  \item $\ell$ does not contain any point of $P$.
  \item $\ell$ partitions $P$ into two point sets $Q_1$ and $Q_2$, where
      \begin{enumerate}
	\item both $Q_1$ and $Q_2$ are color-balanced,
	\item both $Q_1$ and $Q_2$ contains at most $\frac{2}{3}n+1$ points.
      \end{enumerate}
\end{enumerate}
\end{theorem}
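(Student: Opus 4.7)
The plan is to reduce the theorem to the two preparatory lemmas via a case split on the number $k$ of color classes, adding a short argument for $k=2$ which is not covered by either lemma.

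First I would handle $k = 2$. Color-balance with two classes forces $|P_1| + |P_2| = n$ and $|P_1|, |P_2| \le \lfloor n/2 \rfloor$, so $n$ is even and $|P_1| = |P_2| = n/2$. The linear-time ham sandwich algorithm produces a line simultaneously bisecting $P_1$ and $P_2$; an infinitesimal translation moves any incident points to one side without breaking color-balance, and since $n/2 \le \tfrac{2n}{3} + 1$ for $n \ge 4$, all three conditions of the theorem are satisfied.

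For $k \ge 3$, the plan is to invoke Lemma~\ref{k-to-3} first whenever $k \ge 4$. This repacks the colors into exactly three classes in $O(n)$ time, leaves $P$ color-balanced under the new coloring, and ensures that any balanced cut for the new $3$-coloring remains balanced for the original $k$-coloring: each original class lies inside a merged class, so the cap $\lfloor |Q_i|/2 \rfloor$ on the merged class caps each of its subclasses automatically. Then, with a $3$-colored instance in hand, I would apply Lemma~\ref{balanced-cut-lemma} to obtain the line $\ell$ in $O(n)$ time, inheriting the three listed properties verbatim. The two stages combined still run in $O(n)$.

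The main obstacle is not at the theorem level but inside Lemma~\ref{balanced-cut-lemma}: one must argue that the ham sandwich cut computed for $R$ against the doctored set $G \cup X$ (with $X \subseteq B$ chosen so $|G \cup X| = |R|$) also keeps the blue count bounded by $\lfloor |Q_i|/2 \rfloor$ on each side. The parity split on $|R|$, together with sliding $\ell$ toward the side with more blue points when $|R|$ is odd, is what makes the blue inequality hold, and the $\tfrac{2n}{3}+1$ size bound is an immediate consequence of $|R| \ge \lceil n/3 \rceil$ since $R$ is the largest class.
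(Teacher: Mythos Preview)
Your proposal is correct and matches the paper's approach exactly: the paper also handles $k=2$ via a direct ham-sandwich argument, reduces $k\ge 4$ to three colors through Lemma~\ref{k-to-3}, and then invokes Lemma~\ref{balanced-cut-lemma} for the three-color case. Your closing paragraph correctly identifies where the real work sits (inside Lemma~\ref{balanced-cut-lemma}), but that is outside the scope of the theorem itself, which the paper likewise proves in one line by citing the two lemmas.
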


By Theorem~\ref{Aichholzer-thr}, if $P$ has even number of points and no color is in strict majority, then $P$ admits a plane perfect matching. By Theorem~\ref{balanced-cut-thr}, we partition $P$ into two sets $Q_1$ and $Q_2$ such that in each of them no point is in strict majority. But, in order to apply the balanced cut theorem, recursively, to obtain a perfect matching on each side of the cut, we need both $Q_1$ and $Q_2$ to have an even number of points. Thus, we extend the result of Theorem~\ref{balanced-cut-thr} to a restricted version of the problem where $|P|$ is even and we are looking for a balanced cut which partitions $P$ into $Q_1$ and $Q_2$ such that both $|Q_1|$ and $|Q_2|$ are even. The following theorem describes how to find such a balanced cut.

\begin{figure}[htb]
  \centering
\setlength{\tabcolsep}{0in}
  $\begin{tabular}{cc}
\multicolumn{1}{m{.5\columnwidth}}{\centering\includegraphics[width=.35\columnwidth]{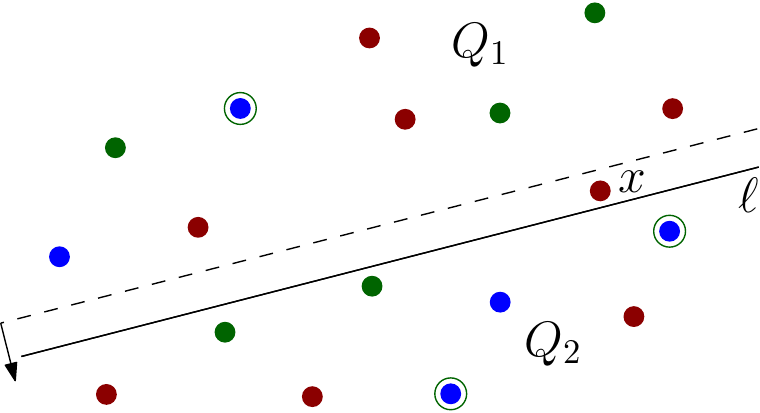}}
&\multicolumn{1}{m{.5\columnwidth}}{\centering\includegraphics[width=.35\columnwidth]{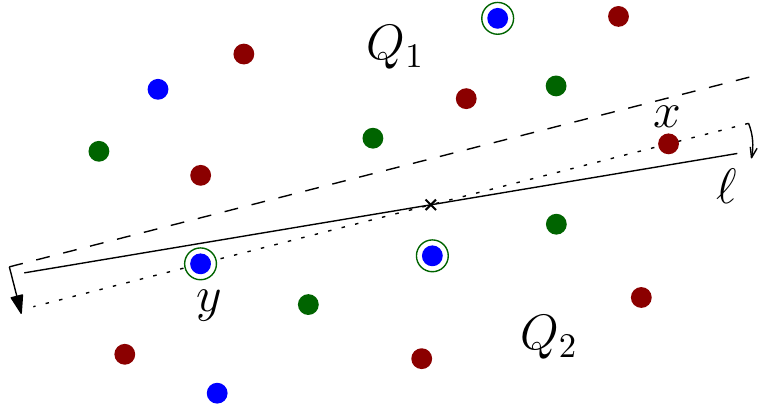}} \\
(a) & (b)
\end{tabular}$
  \caption{Updating $\ell$ to make $|Q_1|$ and $|Q_2|$ even numbers, where: (a) $\ell$ passes over one point, and (b) $\ell$ passes over two points.}
\label{even-balanced-cut-fig}
\end{figure}

\begin{theorem}
\label{even-cut-cor}
Let $P$ be a color-balanced point set of $n\ge 4$ points in general position in the plane with $n$ even and three colors. In $O(n)$ time we can compute a line $\ell$ such that
\begin{enumerate}
  \item $\ell$ does not contain any point of $P$.
  \item $\ell$ partitions $P$ into two point sets $Q_1$ and $Q_2$, where
      \begin{enumerate}
	\item both $Q_1$ and $Q_2$ are color-balanced,
	\item both $Q_1$ and $Q_2$ have even number of points,
	\item both $Q_1$ and $Q_2$ contains at most $\frac{2}{3}n+1$ points.
      \end{enumerate}
\end{enumerate}
\end{theorem}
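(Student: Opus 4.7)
The plan is to reduce the problem to Lemma~\ref{balanced-cut-lemma} and then repair the parity of the two sides by a local perturbation of the cut. First I would apply Lemma~\ref{balanced-cut-lemma} to $P$ in $O(n)$ time to obtain a line $\ell$ that misses every point of $P$ and splits $P$ into color-balanced sets $Q_1$ and $Q_2$ of sizes at most $2n/3+1$. If $|Q_1|$ and $|Q_2|$ happen to be even we are done; otherwise, since $|Q_1|+|Q_2|=n$ is even, both $|Q_1|$ and $|Q_2|$ are odd. Assume without loss of generality that $|Q_1|\ge|Q_2|$. The plan is to translate $\ell$ perpendicular to itself, toward the side of $Q_1$, stopping just after it has swept exactly one point $p\in Q_1$ across to the other side. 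Call the new line $\ell'$ and let $Q_1'=Q_1\setminus\{p\}$, $Q_2'=Q_2\cup\{p\}$, so that $|Q_1'|=|Q_1|-1$ and $|Q_2'|=|Q_2|+1$ are both even.

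The key verification is that $Q_1'$ and $Q_2'$ remain color-balanced and still satisfy the size bound. If $p$ has color $C_j$, then in $Q_1'$ the count of color $C_j$ decreases by one while all other color counts are unchanged; since $|Q_1|$ is odd, $\lfloor|Q_1'|/2\rfloor=\lfloor|Q_1|/2\rfloor$, so every color class in $Q_1'$ still has cardinality at most $\lfloor|Q_1'|/2\rfloor$. Similarly, in $Q_2'$ the count of $C_j$ increases by one, but $\lfloor|Q_2'|/2\rfloor=\lfloor|Q_2|/2\rfloor+1$, so the bound for $C_j$ is preserved, while the bound for every other color is unaffected. Hence $Q_1'$ and $Q_2'$ are color-balanced. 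For the size bound, $|Q_1'|=|Q_1|-1\le 2n/3$ and, because we chose to move a point away from the larger side, $|Q_2'|=|Q_2|+1\le|Q_1|\le 2n/3+1$, so condition 2(c) still holds.

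The technical obstacle is ensuring that the perpendicular translation crosses exactly one point rather than two or more simultaneously, since the general-position hypothesis forbids only three collinear points and permits two points to share the same perpendicular distance from $\ell$. This is the content of the ``passes over one point'' versus ``passes over two points'' dichotomy depicted in Figure~\ref{even-balanced-cut-fig}. To resolve this, I would, if necessary, first perform an arbitrarily small rotation of $\ell$ about a single point $q$ of $Q_1$ that is strictly closest to $\ell$ on the $Q_1$-side; such a rotation keeps $q$ on the line (hence still avoids swapping it between sides) while perturbing the perpendicular distances of every other point of $P$ and breaking any ties among the remaining points. After this infinitesimal rotation, an infinitesimal translation toward $Q_1$ now crosses exactly one point $p\ne q$ of $Q_1$, and the analysis above applies.

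Finally, the additional work is linear: detecting the parities of $|Q_1|$ and $|Q_2|$, identifying the closest point(s) of $Q_1$ to $\ell$, performing the rotation/translation, and recomputing the two sides all take $O(n)$ time, so the overall running time matches the $O(n)$ bound of Lemma~\ref{balanced-cut-lemma}.
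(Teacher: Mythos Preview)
Your overall strategy is correct and, in one respect, cleaner than the paper's. Both proofs start from Lemma~\ref{balanced-cut-lemma}, detect the odd-odd case, and slide $\ell$ across a single point to flip the parities. The paper slides toward the side $Q_2$ singled out in the ham-sandwich construction and then, to show that the side that \emph{loses} a point stays color-balanced, unwinds the internal structure of that construction (the sets $R,G,B,X,Y$ and three subcases according to the color of the transferred point). Your argument bypasses all of that: since $|Q_i|$ is odd, $\lfloor (|Q_i|\pm 1)/2\rfloor = \lfloor |Q_i|/2\rfloor$ on the shrinking side and $\lfloor |Q_i|/2\rfloor+1$ on the growing side, so color-balancedness is preserved automatically regardless of which point moves or what color it has. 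This elementary observation suffices and is genuinely simpler.

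Two small repairs are needed. First, your size-bound chain $|Q_2'|=|Q_2|+1\le|Q_1|$ fails when $|Q_1|=|Q_2|$ (which occurs when $n\equiv 2\pmod 4$ and the cut happens to bisect). The fix is immediate: from $|Q_2|\le |Q_1|$ and $|Q_1|+|Q_2|=n$ you get $|Q_2|\le n/2$, hence $|Q_2'|\le n/2+1\le 2n/3+1$ for $n\ge 4$. Second, your handling of the case where the slide meets two points simultaneously is garbled: you speak of rotating $\ell$ about a point $q\in Q_1$ ``keeping $q$ on the line,'' but $q$ is not on $\ell$ to begin with, and a pre-rotation risks changing the partition before you slide. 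The paper's fix is the clean one and matches Figure~\ref{even-balanced-cut-fig}(b): slide first; if $\ell$ reaches two points $x,y$ of $Q_1$ at once (at most two, by general position), rotate $\ell$ slightly so that exactly one of $x,y$ crosses to the $Q_2$ side. With these two corrections your proof is complete and arguably more transparent than the paper's.
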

\begin{proof}
 Let $\ell$ be the balanced cut obtained in the proof of Lemma~\ref{balanced-cut-lemma}, which divides $P$ into $Q_1$ and $Q_2$. Note that $\ell$ does not contain any point of $P$. If $|Q_1|$ is even, subsequently $|Q_2|$ is even, thus $\ell$ satisfies the statement of the theorem and we are done. Assume that $|Q_1|$ and $|Q_2|$ are odd. Let $R_1$, $G_1$, and $B_1$ be the set of red, green, and blue points in $Q_1$. Let $X_1=X\cap Q_1$ and $Y_1=Y\cap Q_1$. Note that $B_1=X_1\cup Y_1$. Similarly, define $R_2$, $G_2$, $B_2$, $X_2$, and $Y_2$ as subsets of $Q_2$. Note that $|Q_1|=|R_1|+|G_1|+|X_1|+|Y_1|$ and $|Q_2|=|R_2|+|G_2|+|X_2|+|Y_2|$. Recall that $|R_1|=|G_1|+|X_1|=\lfloor|R|/2\rfloor$ and $|R_2|=|G_2|+|X_2|=\lceil|R|/2\rceil$. Thus, $|R_1|+|G_1|+|X_1|$ and $|R_2|+|G_2|+|X_2|$ are even. In order to make $|Q_1|$ and $|Q_2|$ to be odd numbers, both $|Y_1|$ and $|Y_2|$ have to be odd numbers. Thus, $|Y_1|\ge 1$ and $|Y_2|\ge 1$, which implies that 
\begin{align}
\label{QRGXY}
|Q_1|& = |R_1|+|G_1|+|X_1|+|Y_1|\ge 2\lfloor|R|/2\rfloor+1, \nonumber\\
|Q_2|& = |R_2|+|G_2|+|X_2|+|Y_2|\ge 2\lceil|R|/2\rceil+1.
\end{align}
In addition,
\begin{align}
\label{BXY}
|B_1|& = |B|-(|X_2|+|Y_2|)\le |B|-1, \nonumber\\
|B_2|& = |B|-(|X_1|+|Y_1|)\le |B|-1.
\end{align}

Note that $Q_1$ is color-balanced. That is, $|R_1|,|G_1|,|B_1|\le \lfloor|Q_1|/2\rfloor$, where $|Q_1|$ is odd. Thus, by addition of one point (of any color) to $Q_1$, it still remain color-balanced. Therefore, we slide $\ell$ slightly towards $Q_2$ and stop as soon as it passes over a point $x\in Q_2$; see Figure~\ref{even-balanced-cut-fig}(a). If $\ell$ passes over two points $x$ and $y$, rotate $\ell$ slightly, such that $x$ lies on the same side as $Q_1$ and $y$ remains on the other side; see Figure~\ref{even-balanced-cut-fig}(b). We prove that $\ell$ satisfies the statement of the theorem. It is obvious that updating the position of $\ell$ takes $O(n)$ time. Let $Q'_1=Q_1\cup \{x\}$ and $Q'_2=Q_2- \{x\}$. By the previous argument $Q'_1$ is color-balanced. Now we show that $Q'_2$ is color-balanced as well. Note that $|Q'_2|=|Q_2|-1$, thus, by Inequality~(\ref{QRGXY}) we have $$|Q'_2|\ge 2\lceil|R|/2\rceil.$$ 

Let $R'_2$, $G'_2$, and $B'_2$ be the set of red, green, and blue points in $Q'_2$, and let $t'_2$ be the total number of red and green points in $Q'_2$. Then, 
\begin{align}
\label{PtB2}
|Q'_2|& = t'_2+|B'_2|.
\end{align}
To prove that $Q'_2$ is color-balanced we differentiate between three cases, where $x\in R_2$, $x\in G_2$, or $x\in B_2$:
\begin{itemize}
  \item $x\in R_2$. In this case: (i) $|R'_2|=|R_2|-1=\allowbreak \lceil|R|/2\rceil-1 \allowbreak\le \lfloor|Q'_2|/2\rfloor$. (ii) $|G'_2|\allowbreak =|G_2|\allowbreak \le |R_2|\allowbreak =\lceil|R|/2\rceil\le \lfloor|Q'_2|/2\rfloor$. (iii) $t'_2=t_2-1\ge |G|-1$, while $|B'_2|\allowbreak =|B_2|\allowbreak \le |B|-1\allowbreak\le |G|-1$; Inequality~(\ref{PtB2}) implies that  $|B'_2|\allowbreak\le\lfloor|Q'_2|/2\rfloor$.

  \item $x\in G_2$. In this case: (i) $|R'_2|=|R_2|=\lceil|R|/2\rceil\allowbreak \le \lfloor|Q'_2|/2\rfloor$. (ii) $|G'_2|\allowbreak =|G_2|-1\allowbreak \le |R_2|-1=\lceil|R|/2\rceil-1\allowbreak \le \lfloor|Q'_2|/2\rfloor$. (iii) $t'_2=t_2-1\ge |G|-1$, while $|B'_2|\allowbreak=|B_2|\allowbreak\le |B|-1\allowbreak \le |G|-1$; Inequality~(\ref{PtB2}) implies that  $|B'_2|\le\lfloor|Q'_2|/2\rfloor$.

  \item $x\in B_2$. In this case: (i) $|R'_2|=|R_2|=\lceil|R|/2\rceil\allowbreak \le \lfloor|Q'_2|/2\rfloor$. (ii) $|G'_2|\allowbreak =|G_2|\allowbreak \le |R_2|=\lceil|R|/2\rceil\le \lfloor|Q'_2|/2\rfloor$. (iii) $t'_2=t_2\allowbreak \ge |G|$, while $|B'_2|\allowbreak=|B_2|-1\allowbreak\le |B|-2\allowbreak \le |G|-2$; Inequality~(\ref{PtB2}) implies that $|B'_2|\le\lfloor|Q'_2|/2\rfloor$.
\end{itemize}
In all cases $|R'_2|, |G'_2|,|B'_2|\le \lfloor|Q'_2|/2\rfloor$, which imply that $Q'_2$ is color-balanced. 

As for the size condition, $$\min\{|Q'_1|,\allowbreak |Q'_2|\}\allowbreak= \min\{|Q_1|+1,\allowbreak |Q_2|-1\}\allowbreak\ge 2\lfloor|R|/2\rfloor,$$ where the last inequality resulted from Inequality~(\ref{QRGXY}). This implies that
$\max\{|Q'_1|,\allowbreak |Q'_2|\}\allowbreak\le \frac{2n}{3}+1$. 
Thus, $\ell$ satisfies the statement of the theorem, with $Q_1=Q'_1$ and $Q_2=Q'_2$.
\qed\end{proof}

We note that the constant $\frac{2}{3}$ in Theorem~\ref{even-cut-cor} is tight. Consider an equilateral triangle with vertices $a$, $b$, and $c$. Consider a color-balanced point set $P$ of $\frac{n}{3}$ red points, $\frac{n}{3}$ green points, and $\frac{n}{3}$ blue points such that the red points are located around $a$, the green points are located around $b$, and the red points are located around $c$. Then, any feasible cut for $P$ partitions it into two point sets such that one of them has size at least $\frac{2}{3}n$.

Note that both Theorem~\ref{even-cut-cor} and Theorem~\ref{Kano-thr} prove the existence of a line $\ell$ which partitions a color-balanced point set $P$ into two color-balanced point sets $Q_1$ and $Q_2$. But, there are two main differences: (i) Theorem~\ref{even-cut-cor} can be applied on any color-balanced point set $P$ in general position. Theorem~\ref{Kano-thr} is only applicable on color-balanced point sets in general position, where the points on the convex hull are monochromatic. (ii) Theorem~\ref{even-cut-cor} proves the existence of a balanced cut such that $\frac{n}{3}-1\le |Q_i|\le \frac{2n}{3}+1$, while the cut computed by Theorem~\ref{Kano-thr} is not necessarily balanced, as $2\le |Q_i|\le n-2$, where $i=1,2$. In addition, the balanced cut in Theorem~\ref{even-cut-cor} can be computed in $O(n)$ time, while the cut in Theorem~\ref{Kano-thr} is computed in $O(n\log n)$ time.

\section{Plane Colored Matching Algorithm}
\label{algorithm-section}
Let $P$ be a color-balanced point set of $n$ points in general position in the plane with respect to a partition $\{P_1,\dots,P_k\}$, where $n$ is even and $k\ge 2$. In this section we present an algorithm which computes a plane colored matching in $K_n(P_1,\dots,P_k)$ in $\Theta(n\log n)$ time. 

Let $\{C_1,\dots,C_k\}$ be a set of $k$ colors. Imagine all the points in $P_i$ are colored $C_i$ for all $1\le i\le k$. Without loss of generality, assume that $|P_1|\ge|P_2|\ge\dots\ge|P_k|$. If $k=2$, then we can compute an $RB$-matching in $O(n\log n)$ time by recursively applying the ham sandwich theorem. If $k\ge 4$, as in Lemma~\ref{k-to-3}, in $O(n)$ time, we compute a color-balanced point set $P$ with three colors. Any plane colored matching for $P$ with respect to the three colors, say $(R,G,B)$, is also a plane colored matching with respect to the coloring $C_1,\dots, C_k$. Hereafter, assume that $P$ is a color-balanced point set which is colored by three colors.

By Theorem~\ref{even-cut-cor}, in linear time we can find a line $\ell$ that partitions $P$ into two sets $Q_1$ and $Q_2$, where both $Q_1$ and $Q_2$ are color-balanced with an even number of points, such that $\max\{|Q_1|,\allowbreak |Q_2|\}\le \frac{2n}{3}+1$. Since $Q_1$ and $Q_2$ are color-balanced, by Corollary~\ref{colored-matching-cor}, both $Q_1$ and $Q_2$ admit plane colored matchings. Let $M(Q_1)$ and $M(Q_2)$ be plane colored matchings in $Q_1$ and $Q_2$, respectively. Since $Q_1$ and $Q_2$ are separated by $\ell$, $M(Q_1)\cup M(Q_2)$ is a plane colored matching for $P$. Thus, in order to compute a plane colored matching in $P$, one can compute plane colored matchings in $Q_1$ and $Q_2$ recursively, as described in Algorithm~\ref{alg1}.
The $\RGB$ function receives a colored point set $P$ of $n$ points, where $n$ is even and the points of $P$ are colored by three colors, and computes a plane colored matching in $P$. The $\BC$ function partitions $P$ into $Q_1$ and $Q_2$ where both are color-balanced and have even number of points.
 
\begin{algorithm}                      
\caption{\RGB$(P)$}          
\label{alg1} 
\require{a color-balanced point set $P$ with respect to $(R,G,B)$, where $|P|$ is even.}\\
\ensure{a plane colored matching in $P$.}
\begin{algorithmic}[1]
    \If {$P$ is 2-colored}
	  \State \Return $\RB(P)$
    \Else
    \State $\ell \gets \BC(P)$
    \State $Q_1\gets$ points of $P$ to the left of $\ell$
    \State $Q_2\gets$ points of $P$ to the right of $\ell$
    \State \Return $\RGB(Q_1)\cup\RGB(Q_2)$
    \EndIf
\end{algorithmic}
\end{algorithm}

Now we analyze the running time of the algorithm. If $k=2$, then in $O(n\log n)$ time we can find a plane $RB$-matching for $P$. If $k\ge 4$, then by Lemma~\ref{k-to-3}, in $O(n)$ time we reduce the $k$-colored problem to a 3-colored problem. Then, the function $\RGB$ computes a plane colored matching in $P$. Let $T(n)$ denote the running time of $\RGB$ on the 3-colored point set $P$, where $|P|=n$. 
As described in Theorem~\ref{balanced-cut-thr} and Theorem~\ref{even-cut-cor}, in linear time we can find a balanced cut $\ell$ in line 4 in Algorithm~\ref{alg1}. The recursive calls to $\RGB$ function in line 7 takes $T(|Q_1|)$ and $T(|Q_2|)$ time. Thus, the running time of $\RGB$ can be expressed by the following recurrence:
$$T(n)= T(|Q_1|)+T(|Q_2|)+O(n).$$

Since $|Q_1|,|Q_2| \le\frac{2n}{3}+1$ and $|Q_1|+|Q_2|=n$, this recurrence solves to $T(n)=O(n\log n)$.
\begin{theorem}
 \label{running-time-thr}
Given a color-balanced point set $P$ of size $n$ in general position in the plane with $n$ even, a plane colored matching in $P$ can be computed in $\Theta(n\log n)$ time.
\end{theorem}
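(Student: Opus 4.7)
The plan is to follow exactly the algorithm $\RGB$ described just above the theorem (Algorithm~\ref{alg1}), and to bound its running time by solving the resulting divide-and-conquer recurrence. The upper bound will be established in three stages: a base case for two colors, a reduction from many colors to three colors, and a recursive ``balanced-cut'' step that drives the recursion in the three-colored regime. The matching $\Omega(n\log n)$ lower bound will be imported from the known lower bound for the bichromatic case.

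First I would handle the base case $k=2$. Here a plane $RB$-matching can be obtained in $\Theta(n\log n)$ time by recursive application of the ham-sandwich cut, as discussed in Section~\ref{previous-work}; I would just invoke that fact. Next, if the input is $k$-colored with $k\ge 4$, I would appeal to Lemma~\ref{k-to-3} to collapse $P$ in $O(n)$ time into a $3$-colored color-balanced point set $P'$ such that any plane colored matching of $P'$ is also a plane colored matching for the original $k$-coloring. This lets me restrict attention from here on to the $3$-colored case, at an additive cost of $O(n)$.

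For the main three-colored recursion, I would invoke Theorem~\ref{even-cut-cor} to compute, in $O(n)$ time, a line $\ell$ that avoids $P$ and splits it into two color-balanced subsets $Q_1,Q_2$, both of even size, with $\max\{|Q_1|,|Q_2|\}\le \tfrac{2n}{3}+1$. Since $\ell$ contains no point of $P$, any two edges of plane colored matchings of $Q_1$ and $Q_2$ lie on opposite sides of $\ell$ and therefore cannot cross. By Corollary~\ref{colored-matching-cor}, both $Q_i$ admit colored matchings, and by induction the recursive calls return plane colored matchings $M(Q_1),M(Q_2)$; their union is then a plane colored matching of $P$. Letting $T(n)$ denote the running time on $n$ points, this gives the recurrence
\[
T(n)=T(|Q_1|)+T(|Q_2|)+O(n),\qquad |Q_1|+|Q_2|=n,\qquad \max\{|Q_1|,|Q_2|\}\le \tfrac{2n}{3}+1.
\]
A standard recursion-tree (or Akra--Bazzi) argument, using that the depth is $O(\log n)$ because the larger side shrinks by a constant factor $\tfrac23+o(1)$, yields $T(n)=O(n\log n)$.

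Finally, for the $\Omega(n\log n)$ lower bound, I would cite the reduction from sorting due to Hershberger and Suri~\cite{Hershberger1992} that already gives this bound even for the bichromatic case ($k=2$), which is a special case of our problem; hence no algorithm can beat $\Theta(n\log n)$. The only step I expect to require a bit of care is the verification of the recurrence: one must check that the recursion bottoms out gracefully at small $n$ (e.g., $n\le 3$ when $Q_i$ could be small after a cut), which is handled by noting that the base case of $\RB$ or a trivial single-edge match applies. Apart from that, every ingredient---the ham-sandwich recursion, the $k$-to-$3$ merge via Lemma~\ref{k-to-3}, and the balanced cut of Theorem~\ref{even-cut-cor}---has already been built in the preceding sections, so the proof will be essentially a clean assembly of these pieces together with the recurrence analysis.
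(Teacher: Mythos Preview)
Your proposal is correct and follows essentially the same approach as the paper: reduce $k\ge4$ colors to three via Lemma~\ref{k-to-3}, handle $k=2$ by the ham-sandwich recursion, and in the three-colored case recurse on the two sides of the balanced cut from Theorem~\ref{even-cut-cor}, obtaining the recurrence $T(n)=T(|Q_1|)+T(|Q_2|)+O(n)$ with $\max\{|Q_1|,|Q_2|\}\le\tfrac{2n}{3}+1$, which solves to $O(n\log n)$; the lower bound is the Hershberger--Suri reduction. The only tiny quibble is your remark about ``$n\le 3$'': since both $|Q_1|$ and $|Q_2|$ are even by Theorem~\ref{even-cut-cor}, the recursion bottoms out at $n=2$, where a single edge is the matching.
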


\subsection{Maximum Matching}
\label{maximum-section}
If $P$ is not color-balanced, then $K_n(P_1,\dots,P_k)$ does not admit a perfect matching. In this case we compute a maximum matching. 
\begin{theorem}
\label{max-plane-matching}
Given a colored point set $P$ of size $n$ in general position in the plane, a maximum plane colored matching $M$ in $P$ can be computed optimally in $\Theta(n+|M|\log |M|)$ time.
\end{theorem}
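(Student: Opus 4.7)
The plan is to reduce the general maximum plane colored matching problem to the perfect matching case handled by Theorem~\ref{running-time-thr}. First I would scan $P$ in $O(n)$ time to determine the cardinalities $n_1 \geq n_2 \geq \dots \geq n_k$ of the color classes $P_1,\dots,P_k$. By Theorem~\ref{Sitton-thr}, the size of any maximum (not necessarily plane) colored matching is $|M|=\min\{n-n_1,\lfloor n/2\rfloor\}$, so this value is known immediately.

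Next I would construct a carefully chosen color-balanced subset $P'\subseteq P$ of even cardinality exactly $2|M|$, on which Theorem~\ref{Aichholzer-thr} guarantees the existence of a plane colored perfect matching. There are three cases. If $n_1 > \lfloor n/2\rfloor$, let $P'$ consist of all $n-n_1$ non-majority points together with any $n-n_1$ arbitrarily chosen points from $P_1$; the largest color class of $P'$ then has size exactly $|P'|/2$, so $P'$ is color-balanced. If $n_1 \leq \lfloor n/2\rfloor$ and $n$ is even, set $P'=P$. If $n_1 \leq \lfloor n/2\rfloor$ and $n$ is odd, form $P'$ by discarding any single point; a short check shows the largest color class in $P'$ has size at most $\lfloor n/2\rfloor = |P'|/2$, so $P'$ remains color-balanced. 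In all three cases $|P'|=2|M|$ and $P'$ satisfies the hypotheses of Theorem~\ref{running-time-thr}. Selecting $P'$ takes $O(n)$ time.

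Now I would invoke Theorem~\ref{running-time-thr} on $P'$, obtaining a plane colored perfect matching $M$ of $P'$ in $\Theta(|P'|\log|P'|)=\Theta(|M|\log|M|)$ time. Because edges are straight line segments, a plane matching of $P'$ viewed in the plane is automatically a plane matching of $P$: the discarded points of $P\setminus P'$ are simply isolated. The output matching has size $|M|$, matching the Sitton bound, so it is a maximum plane colored matching of $P$. The total running time is $O(n)+\Theta(|M|\log|M|)=\Theta(n+|M|\log|M|)$.

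For the matching lower bound, I would combine the trivial $\Omega(n)$ input-reading bound with the $\Omega(n\log n)$ lower bound of Hershberger and Suri~\cite{Hershberger1992} for plane $RB$-matching, which is based on a reduction from sorting. Applying their construction to $2|M|$ points and padding with $n-2|M|$ additional monochromatic points placed in a region disjoint from the sorting gadget yields an instance of our problem that still forces the sorting reduction on the $2|M|$ gadget points, hence requires $\Omega(|M|\log|M|)$ time, while the input size remains $n$. The main subtlety in the whole argument is ensuring that the constructed $P'$ is genuinely color-balanced (and has even size) in every case; once this is established, everything else follows directly from the previously developed machinery.
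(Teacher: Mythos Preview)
Your proof is correct and follows essentially the same strategy as the paper: extract a color-balanced subset $P'$ of size $2|M|$ and run the perfect-matching algorithm on it. The only noteworthy differences are that in the majority case the paper recolors $P'$ into just two colors and invokes the bichromatic $RB$-matching algorithm directly (slightly simpler than routing through Theorem~\ref{running-time-thr}), and that you are more explicit than the paper about the odd-$n$ color-balanced case and about the $\Omega(n+|M|\log|M|)$ lower bound via padding the Hershberger--Suri instance.
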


\begin{proof}
Let $\{P_1,\dots,P_k\}$, where $k\ge 2$, be a partition of the points in $P$ such that the points in $P_i$ colored $C_i$ for $1\le i\le k$. Without loss of generality assume that $|P_1|\ge \dots\ge|P_k|$. If $|P_1|\le\lfloor|P|/2\rfloor$, then $P$ is color-balanced, and hence, by Theorem~\ref{running-time-thr} we can compute a plane colored matching in $\Theta(n\log n)$ time. Assume $|P_1|>\lfloor|P|/2\rfloor$. Then $P$ is not color-balanced, and hence, $P$ does not admit a perfect matching. In this case, by Theorem~\ref{Sitton-thr}, the size of any maximum matching, say $M$, is
$$|M|=\sum_{i=2}^{k}{|P_i|}.$$ 
Let $P'_1$ be any arbitrary subset of $P_1$ such that $|P'_1|=|P_2|+\dots+|P_{k}|$. Imagine the points in $P_2\cup\dots\cup P_{k}$ are colored red and the points in $P'_{1}$ are colored blue. Let $P'=P'_1\cup P_2\cup\dots\cup P_{k}$. Any plane $RB$-matching in $P'$ is a maximum plane colored matching in $P$, and has $|P_1|+\dots+|P_k|=|M|$ edges. An $RB$-matching of size $|M|$ can be computed in $\Theta(|M|\log |M|)$ time.  
\qed\end{proof}

\begin{theorem}
\label{max-matching}
Given any complete multipartite graph $K_n(V_1,\dots, V_k)$ on $n$ vertices and $k\ge 2$, a maximum matching in $K_n(V_1,\dots, V_k)$ can be computed optimally in $\Theta(n)$ time.
\end{theorem}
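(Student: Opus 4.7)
The plan is to sidestep the full complexity of the geometric problem in Theorem~\ref{max-plane-matching} by noting that, without a planarity constraint, Sitton's formula (Theorem~\ref{Sitton-thr}) not only tells us the maximum matching size but also suggests an explicit construction that runs in linear time. First I would scan the input once to compute the class sizes $n_1, n_2, \dots, n_k$ and identify the largest class $V_1$ with $n_1 = \max_i n_i$; this costs $O(n)$ and gives us $n$ and $\sum_{i \ge 2} n_i = n - n_1$. Since any algorithm must at least read the input, the matching $\Omega(n)$ lower bound is trivial, so the entire content is in showing an $O(n)$ upper bound.

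By Sitton's formula, two cases arise. \emph{Case 1:} $n_1 \ge \lceil n/2 \rceil$, equivalently $\sum_{i \ge 2} n_i \le \lfloor n/2 \rfloor$, and the maximum matching has size $n - n_1$. Here I would simply iterate through $V_2 \cup \dots \cup V_k$, pairing each vertex with a distinct (arbitrary) vertex of $V_1$; every such pair is a valid edge since the endpoints lie in different classes. This clearly runs in $O(n)$ and achieves the optimal size $n-n_1$.

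\emph{Case 2:} $n_1 < \lceil n/2 \rceil$, so every class satisfies $n_j \le \lfloor n/2 \rfloor$, and the maximum matching has size $m := \lfloor n/2 \rfloor$. The key idea is the following ``shift by $m$'' construction: write all $n$ vertices as a single list $u_1, u_2, \dots, u_n$ obtained by concatenating the classes (all vertices of $V_1$, then all of $V_2$, and so on), and output the $m$ edges $(u_i, u_{i+m})$ for $i = 1, \dots, m$. To see that each such edge is valid, suppose for contradiction that $u_i$ and $u_{i+m}$ lie in the same class $V_j$; because vertices of $V_j$ occupy a consecutive run of length $n_j$ in the list, this would force $n_j \ge m+1$, contradicting $n_j \le n_1 \le m$. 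Constructing the list and emitting the pairs both take $O(n)$ time.

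The main obstacle, and really the only one, is verifying that the Case 2 construction actually produces cross-class edges; everything else reduces to careful bookkeeping of indices and a single application of Sitton's formula. Once that lemma is in hand, both cases output a matching of the optimal size in linear time, establishing the $\Theta(n)$ bound.
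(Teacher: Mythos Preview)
Your proposal is correct and in fact cleaner than the paper's argument in the balanced case. Both proofs handle Case~1 (one class in majority) identically, by pairing each minority vertex with a distinct majority vertex. In Case~2, however, the paper first reduces $k$ colors to three via the priority-queue merging of Lemma~\ref{k-to-3}, and then explicitly decomposes the resulting tripartite graph into two complete bipartite pieces, each of which trivially admits a perfect matching. Your ``shift by $m$'' construction bypasses both steps: concatenating the classes and pairing $u_i$ with $u_{i+m}$ directly yields $\lfloor n/2\rfloor$ valid edges, with the one-line pigeonhole argument that a same-class pair would force a run of length $m+1>n_1$. This is more elementary, avoids the auxiliary data structure, and handles odd $n$ without a separate preprocessing step (vertex $u_n$ is simply left unmatched). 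The paper's route has the minor advantage of reusing machinery already developed for the geometric version, but as a stand-alone proof yours is shorter.
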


\begin{proof}
 If $n$ is odd, then by Theorem~\ref{Sitton-thr}, we can remove a vertex from the largest vertex set, without changing the size of a maximum matching. Thus, assume that $n$ is even. Without loss of generality assume that $|V_1|\ge |V_2|\ge \dots\ge|V_k|$. If $|V_1|\ge n/2$, then let $R$ be an arbitrary subset of $V_1$ such that $|R|=|V_2|+\dots+|V_{k}|$, and let $B=V_2\cup \dots\cup V_k$. Then, any maximal matching in $K_n(R,B)$\textemdash which is also a perfect matching\textemdash is a maximum matching in $K_n(V_1,\dots, V_k)$. 

If $|V_1|< n/2$, then by a similar argument as in Lemma~\ref{k-to-3}, in $O(n)$ time we merge $V_1, \dots, V_k$ to obtain a partition $\{R,G,B\}$ of vertices, such that $\max\{|R|,\allowbreak|G|,\allowbreak|B|\}\le n/2$ and $K_n(R,G,B)$ is a subgraph of $K_n(V_1,\dots, V_k)$. Now we describe how to compute a perfect matching in $K_n(R,G,B)$. Without loss of generality assume that $|R|\ge |G|\ge |B|$. Let $m=n-2\cdot|R|$; observe that $m$ is an even number. Since $m=n-2\cdot|R|\le n-(|R|+|G|)$, we have $m\le|B|$, and subsequently $m\le|G|$. Let $G'$ (resp. $B'$) be an arbitrary subset of $G$ (resp. $B$) of size $m/2$. Thus, $|B'|=|G'|=m/2$. Let $G''=G\setminus G'$ and $B''=B\setminus B'$. Thus, 
\begin{align}
|G''\cup B''|&=\notag n-|R|-|G'\cup B'|=\notag n-|R|-m=\notag n-|R|-(n-2|R|)=\notag |R|.
\end{align}

Thus, both $K_m(G',B')$ and $K_{n\text{-}m}(R, G''\cup B'')$ have perfect matchings. Therefore, the union of any maximal matching in $K_m(G',B')$ and any maximal matching in $K_{n\text{-}m}(R, G''\cup B'')$ is a perfect matching in $K_n(R,G,B)$, and subsequently in $K_n(V_1,\dots, V_k)$. 

Since a maximal matching in a complete bipartite graph is also a maximum matching and can be computed in linear time, the presented algorithm takes $O(n)$ time.   
\qed\end{proof}

\bibliographystyle{abbrv}
\bibliography{../thesis}
\chapter{Packing Matchings into a Point Set}
\label{ch:mp}

Given a set $P$ of $n$ points in the plane, where $n$ is even, we consider the following question: How many plane perfect matchings can be packed into $P$? For points in general position we prove the lower bound of $\lfloor\log_2{n}\rfloor-1$. For some special configurations of point sets, we give the exact answer. We also consider some restricted variants of this problem.

\vspace{10pt}
This chapter is published in the journal of Discrete Mathematics {\&} Theoretical Computer Science~\cite{Biniaz2015-packing}. 

\section{Introduction}
\label{mp:introduction-section}
Let $P$ be a set of $n$ points in general position in the plane (no three points on a line). A {\em geometric graph} $G=(P,E)$ is a graph whose vertex set is $P$ and whose edge set $E$ is a set of straight-line segments with endpoints in $P$. We say that two edges of $G$ {\em cross} each other if they have a point in common that is interior to both edges. Two edges are {\em disjoint} if they have no point in common. A subgraph $S$ of $G$ is said to be {\em plane} ({\em non-crossing} or {\em crossing-free}) if its edges do not cross. A {\em plane matching} is a plane graph consisting of pairwise disjoint edges. Two subgraphs $S_1$ and $S_2$ are {\em edge-disjoint} if they do not share any edge. A {\em complete geometric graph} $\Kn{(P)}$ is a geometric graph on $P$ which contains a straight-line edge between every pair of points in $P$. 

We say that a set of subgraphs of $\Kn{(P)}$ is {\em packed into} $\Kn{(P)}$, if the subgraphs in the set are pairwise edge-disjoint. In a packing problem, we ask for the largest number of subgraphs of a given type that can be packed into $\Kn{(P)}$. Among all subgraphs of $\Kn{(P)}$, {\em plane perfect matchings}, {\em plane spanning trees}, and {\em plane spanning paths} are of interest~\cite{Aichholzer2015-disjoint-compatibility, Aichholzer2009-compatible-matchings, Aichholzer2010-edge-removal, Aichholzer2014-packing, Aloupis2013, Bose2006, Ishaque2013, Nash-Williams1961}. That is, one may look for the maximum number of plane spanning trees, plane Hamiltonian paths, or plane perfect matchings that can be packed into $\Kn{(P)}$. Since $\Kn{(P)}$ has $\frac{n(n-1)}{2}$ edges, at most $\frac{n}{2}$ spanning trees, at most $\frac{n}{2}$ spanning paths, and at most $n-1$ perfect matchings can be packed into it. In this chapter we consider perfect matchings. A {\em perfect matching} in $\Kn{(P)}$ is a set of edges that do not share any endpoint and cover all the points in $P$. 

A long-standing open question is to determine if the edges of $\Kn{(P)}$ (where $n$ is even) can be partitioned into $\frac{n}{2}$ plane spanning trees. In other words, is it possible to pack $\frac{n}{2}$ plane spanning trees into $\Kn{(P)}$? If $P$ is in convex position, the answer in the affirmative follows from the result of Bernhart and Kanien~\cite{Bernhart1979}. For $P$ in general position, Aichholzer et al.~\cite{Aichholzer2014-packing} prove that $\Omega(\sqrt{n})$ plane spanning trees can be packed into $\Kn{(P)}$. They also show the existence of at least 2 edge-disjoint plane spanning paths. 

In this chapter we consider a closely related question: How many plane perfect matchings can be packed into $\Kn{(P)}$, where $P$ is a set of $n$ points in general position in the plane, with $n$ even? 

\addtocontents{toc}{\protect\setcounter{tocdepth}{1}}
\subsection{Previous Work}
\addtocontents{toc}{\protect\setcounter{tocdepth}{2}}
\label{mp:previous-work-section}
\subsubsection{Existence of Plane Subgraphs}
The existence of certain plane subgraphs in a geometric graph on a set $P$ of $n$ points is one of the classic problems in combinatorial and computational geometry. 

One of the extremal problems in geometric graphs which was first studied by Avital and Hanani~\cite{Avital1966}, Kuptiz~\cite{Kupitz1979}, Erd\H{o}s~\cite{Erdos1946}, and Perles (see reference \cite{Toth1999}) is the following. What is the smallest number $\e{k}{n}$ such that any geometric graph with $n$ vertices and more than $\e{k}{n}$ edges contains $k+1$ pairwise disjoint edges, i.e., a plane matching of size at least $k+1$. Note that $k\le\lfloor n/2\rfloor-1$. By a result of Hopf and Pannwitz~\cite{Hopf1934}, Sutherland~\cite{Sutherland1935}, and Erd\H{o}s~\cite{Erdos1946}, $\e{1}{n}=n$, i.e., any geometric graph with $n+1$ edges contains a pair of disjoint edges, and there are some geometric graphs with $n$ edges which do not contain any pair of disjoint edges. 

Alon and Erd\H{o}s~\cite{Alon1989} proved that $\e{2}{n}<6n-5$, i.e., any geometric graph with $n$ vertices and at least $6n-5$ edges contains a plane matching of size three. This bound was improved to $\e{2}{n}\le 3n$ by Goddard et al.~\cite{Goddard1996}. Recently {\v{C}}ern{\'{y}}~\cite{Cerny2005} proved that $\e{2}{n}\le \lfloor 2.5 n\rfloor$; while the lower bound of $\e{2}{n}\ge \lceil2.5n\rceil-3$ is due to Perles (see \cite{Cerny2005}). For $\e{3}{n}$, Goddard et al.~\cite{Goddard1996} showed that $3.5n-6\le\e{3}{n}\le 10n$, which was improved by T{\'{o}}th and Valtr~\cite{Toth1999} to $4n-9\le\e{3}{n}\le 8.5n$.

For general values of $k$, Akiyama and Alon~\cite{Akiyama1989} gave the upper bound of $\e{k}{n}=O(n^{2-1/(k+1)})$. Goddard et al.~\cite{Goddard1996} improved the bound to $\e{k}{n}=O(n(\log n)^{k-3})$. Pach and T{\"{o}}r{\H{o}}csik~\cite{Pach1994} obtained the upper bound of $\e{k}{n} \le k^4n$; which is the first upper bound that is linear in $n$. The upper bound was improved to $k^3(n+1)$ by T{\'{o}}th and Valtr~\cite{Toth1999}; they also gave the lower bound of $\e{k}{n}\ge \frac{3}{2}(k-1)n-2k^2$. T{\'{o}}th~\cite{Toth2000} improved the upper bound to $\e{k}{n}\le 2^9k^2n$, where the constant has been improved to $2^8$ by Felsner~\cite{Felsner2004}. It is conjectured that $\e{k}{n}\le ckn$ for some constant $c$.

For the maximum value of $k$, i.e., $k=\frac{n}{2}-1$, with $n$ even, Aichholzer et al.~\cite{Aichholzer2010-edge-removal} showed that $\e{n/2-1}{n}={n \choose 2}-\frac{n}{2}=\frac{n(n-2)}{2}$. That is, by removing $\frac{n}{2}-1$ edges from any complete geometric graph, the resulting graph has $k+1=\frac{n}{2}$ disjoint edges, i.e., a plane perfect matching. This bound is tight; there exist complete geometric graphs, such that by removing $\frac{n}{2}$ edges, the resulting graph does not have any plane perfect matching. Similar bounds were obtained by Kupitz and Perles for complete convex graphs, i.e., complete graphs of point sets in convex position. Kupitz and Perles showed that any convex geometric graph with $n$ vertices and more than $kn$ edges contains $k+1$ pairwise disjoint edges; see \cite{Goddard1996} (see also \cite{Akiyama1989} and \cite{Alon1989}). In particular, in the convex case, $2n + 1$ edges guarantee a plane matching of size three. In addition, Keller and Perles~\cite{Keller2012} gave a characterization of all sets of $\frac{n}{2}$ edges whose removal prevents the resulting graph from having a plane perfect matching.

\v{C}ern{\'{y}} et al.~\cite{Cerny2007} considered the existence of Hamiltonian paths in geometric graphs. They showed that after removing at most $\sqrt{n}/(2\sqrt{2})$ edges from any complete geometric graph of $n$ vertices, the resulting graph still contains a plane Hamiltonian path. Aichholzer et al.~\cite{Aichholzer2010-edge-removal} obtained tight bounds on the maximum number of edges that can be removed from a complete geometric graph, such that the resulting graph contains a certain plane subgraph; they considered plane perfect matchings, plane subtrees of a given size, and triangulations. 
\subsubsection{Counting Plane Graphs}
The number of plane graphs of a given type in a set of $n$ points is also of interest. In 1980, Newborn and Moser~\cite{Newborn1980} asked for the maximal number of plane Hamiltonian cycles; they give an upper bound of $2\cdot6^{n-2}\lfloor\frac{n}{2}\rfloor!$, but conjecture that it should be of the form $c^n$, for some constant $c$. In 1982, Ajtai et al.~\cite{Ajtai1982} proved that the number of plane graphs is at most $10^{13n}$. Every plane graph is a subgraph of some triangulation (with at most $3n-6$ edges). Since a triangulation has at most $2^{3n-6}$ plane subgraphs, as noted in~\cite{Garcia2000}, any bound of $\alpha^n$ on the number of triangulations implies a bound of $2^{3n-6}\alpha^n<(8\alpha)^n$ on the number of plane graphs. The best known upper bound of $30^n$, for the number of triangulations is due to Sharir and Sheffer~\cite{Sharir2011}. This implies the bound $240^n$ for plane graphs. As for plane perfect matchings, since a perfect matching has $\frac{n}{2}$ edges, Dumitrescu~\cite{Dumitrescu1999} obtained an upper bound of ${{3n-6} \choose {n/2}}\alpha^n\le (3.87\alpha)^n$, where $\alpha=30$. Sharir and Welzl~\cite{Sharir2006} improved this bound to $O(10.05^n)$. They also showed that the number of all (not necessarily perfect) plane matchings is at most $O(10.43^n)$. 

Garc{\'{\i}}a et al.~\cite{Garcia2000} showed that the number of plane perfect matchings of a fixed size set of points in the plane is minimum when the points are in convex position. Motzkin~\cite{Motzkin1948} showed that points in convex position have $C_{n/2}$ many perfect matchings (classically referred to as non-crossing configurations of chords on a circle), where $C_{n/2}$ is the $(n/2)^{th}$ {\em Catalan number}; $C_{n/2}=\Theta(n^{-3/2}2^n)$. Thus, the number of plane perfect matchings of $n$ points in the plane is at least $C_{n/2}$. Garc{\'{\i}}a et al.~\cite{Garcia2000} presented a configuration of $n$ points in the plane which has $\Omega(n^{-4}3^n)$ many plane perfect matchings. See Table~\ref{mp:table1}.

\subsubsection{Counting Edge-Disjoint Plane Graphs}

The number of edge-disjoint plane graphs of a given type in a point set $P$ of $n$ points is also of interest. Nash-Williams~\cite{Nash-Williams1961} and Tutte~\cite{Tutte1961} independently considered the number of (not necessarily plane) spanning trees. They obtained necessary and sufficient conditions for a graph to have $k$ edge-disjoint spanning trees. Kundu~\cite{Kundu1974} showed that any $k$-edge-connected graph contains at least $\lceil\frac{k-1}{2}\rceil$ edge-disjoint spanning trees. 

As for the plane spanning trees a long-standing open question is to determine if the edges of $\Kn{(P)}$ (where $n$ is even) can be partitioned into $\frac{n}{2}$ plane spanning trees. In other words, is it possible to pack $\frac{n}{2}$ plane spanning trees into $\Kn{(P)}$? If $P$ is in convex position, the answer in the affirmative follows from the result of Bernhart and Kanien~\cite{Bernhart1979}. In \cite{Bose2006}, the authors characterize the partitions of the complete convex graph into plane spanning trees. They also describe a sufficient condition, which generalizes the convex case, for points in general position. Aichholzer et al.~\cite{Aichholzer2014-packing} showed that if the convex hull of $P$ contains $h$ vertices, then $\Kn{(P)}$ contains at least $\lfloor\frac{h}{2}\rfloor$ edge-disjoint plane spanning trees, and if $P$ is in a ``regular wheel configuration'', $\Kn{(P)}$ can be partitioned into $\frac{n}{2}$ spanning trees. For $P$ in general position they showed that $\Kn{(P)}$ contains $\Omega(\sqrt{n})$ edge-disjoint plane spanning trees. They obtained the following trade-off between the number of edge-disjoint plane spanning trees and the maximum vertex degree in each tree: For any $k\le \sqrt{n/12}$, $\Kn{(P)}$ has $k$ edge-disjoint plane spanning trees with maximum vertex degree $O(k^2)$ and diameter $O(\log(n/k^2))$. They also showed the existence of at least 2 edge-disjoint plane Hamiltonian paths. 
\addtocontents{toc}{\protect\setcounter{tocdepth}{1}}
\subsection{Our Results}
\addtocontents{toc}{\protect\setcounter{tocdepth}{2}}
\label{mp:our-results-section}

Given a set $P$ of $n$ points in the plane, with $n$ even, we consider the problem of packing plane perfect matchings into $\Kn{(P)}$. 
From now on, a {\em matching} will be a {\em perfect matching}. 

In Section~\ref{mp:edge-disjoint-plane-section} we prove bounds on the number of plane matchings that can be packed into $\Kn{(P)}$. 
In Section~\ref{mp:convex-position-section} we show that if $P$ is in convex position, then $\frac{n}{2}$ plane matchings can be packed into $\Kn{(P)}$; this bound is tight. 

The points in wheel configurations are considered in Section~\ref{mp:wheel-section}. We show that if $P$ is in regular wheel configuration, then $\frac{n}{2}-1$ edge-disjoint plane matchings can be packed into $\Kn{(P)}$; this bound is tight as well. In addition, for a fixed size set of points, we give a wheel configuration of the points which contains at most $\lceil\frac{n}{3}\rceil$ edge-disjoint plane matchings. 

Point sets in general position are considered in Section~\ref{mp:general-position-section}. We show how to find three edge-disjoint plane matchings in any set of at least 8 points. If $n$ is a power of two, we prove that $\Kn{(P)}$ contains at least $\log_2n$ many edge-disjoint plane matchings. For the general case, where $n$ is an even number, we prove that $\Kn{(P)}$ contains at least $\lceil\log_2n\rceil-2$ edge-disjoint plane matchings. 

In Section~\ref{mp:non-crossing-matching-section} we count the number of pairwise non-crossing plane matchings. Two plane matchings $M_1$ and $M_2$ are called {\em non-crossing} (or {\em compatible}) if the edges of $M_1$ and $M_2$ do not cross each other. We show that $\Kn{(P)}$ contains at least two and at most five non-crossing plane matchings; these bounds are tight. Table~\ref{mp:table1} summarizes the results. 

In Section~\ref{mp:persistency-section} we study the concept of {\em matching persistency} in a graph. A graph $G$ is called {\em matching-persistent}, if by removing any perfect matching $M$ from $G$, the resulting graph, $G-M$, still contains a perfect matching. We define the {\em plane matching persistency} of a point set $P$, denoted by $\pmp{P}$, to be the smallest number of edge-disjoint plane matchings such that, if we remove them from $\Kn{(P)}$ the resulting graph does not have any plane perfect matching. In other words, $\pmp{P}=|\mathcal{M}|$, where $\mathcal{M}$ is the smallest set of edge-disjoint plane matchings such that $\Kn{(P)}-\bigcup_{M\in\mathcal{M}}{M}$ does not have any plane perfect matching. Here, the challenge is to find point sets with high plane matching persistency. We show that $\pmp{P}\ge 2$ for all point sets $P$. We give a configuration of $P$ with $\pmp{P}\ge 3$. 
Concluding remarks and open problems are presented in Section~\ref{mp:conclusion}.

\begin{table}
\caption{Number of plane perfect matchings in a point set $P$ of $n$ points ($n$ is even).}
\label{mp:table1}
\centering
    \begin{tabular}{|l|c|c|c|c|}
         \hline
             Matching 	& $\forall P: \ge$ &$\exists P:\le$&$\exists P: \ge$&$\forall P:\le$  \\ \hline\hline
             total& 	$2^n$\cite{Garcia2000, Motzkin1948}&$2^n$\cite{Motzkin1948}&$3^n$\cite{Garcia2000}& $O(10.05^n)$\cite{Sharir2006}\\\hline\hline
		 edge-disjoint&$\lfloor\log_2{n}\rfloor-1$ &$\lceil \frac{n}{3}\rceil$& $\frac{n}{2}$&$n-1$\\
             plane edge-disj.& 2 & 2&5&5 \\
         \hline
    \end{tabular}
\end{table}

\section{Preliminaries}
\label{mp:preliminaries}
\addtocontents{toc}{\protect\setcounter{tocdepth}{1}}
\subsection{Graph-Theoretical Background}
\addtocontents{toc}{\protect\setcounter{tocdepth}{2}}
\label{mp:graph-background-section}
Consider a graph $G=(V,E)$ with vertex set $V$ and edge set $E$. If $G$ is a complete graph on a vertex set $V$ of size $n$, then $G$ is denoted by $K_n$.
A $k${\em -factor} is a regular graph of degree $k$. If $G$ is the union of pairwise edge-disjoint $k$-factors, their union is called a $k${\em-factorization} and $G$ itself is $k${\em-factorable}~\cite{Harary1991}. A {\em matching} in a graph $G$ is a set of edges that do not share vertices. A perfect matching of $G$ is a 1-factor of $G$. 
In this chapter only perfect matchings are considered and they are simply called matchings. Since a perfect matching is a regular graph of degree one, it is a $1$-factor. It is well-known that for $n$ even, the complete graph $K_n$ is 1-factorable (See~\cite{Harary1991}).
Note that $K_n$ has $\frac{n(n-1)}{2}$ edges and every $1$-factor has $\frac{n}{2}$ edges. Thus, $K_n$ can be partitioned into at most $n-1$ edge-disjoint perfect matchings.

On the other hand it is well-known that the edges of a complete graph $K_n$, where $n$ is even, can be colored by $n-1$ colors such that any two adjacent edges have a different color. Each color is assigned to $\frac{n}{2}$ edges, so that each color defines a $1$-factor. The following geometric construction of a coloring, which uses a ``regular wheel configuration'', is provided in \cite{Soifer2009}. In a regular wheel configuration, $n-1$ equally spaced points are placed on a circle and one point is placed at the center of the circle. For each color class, include an edge $e$ from the center to one of the boundary vertices, and all of the edges perpendicular to the line through $e$, connecting pairs of boundary vertices.

The number of perfect matchings in a complete graph $K_n$ (with $n$ even), denoted by $M(n)$, is given by the double factorial; $M(n)=(n-1)!!$ \cite{Callan2009}, where $(n-1)!!=1\cdot 3\cdot 5\cdots (n-3)\cdot(n-1)$. 
\addtocontents{toc}{\protect\setcounter{tocdepth}{1}}
\subsection{Plane Matchings in Colored Point Sets}
\addtocontents{toc}{\protect\setcounter{tocdepth}{2}}
\label{mp:colored-matching-section}
Let $P$ be a set of $n$ colored points in general position in the plane with $n$ even. A {\em colored matching} of $P$, is a perfect matching such that every edge connects two points of distinct colors. A {\em plane colored matching} is a colored matching which is non-crossing.  
A special case of a plane colored matching, where $P$ is partitioned into a set $R$ of $\frac{n}{2}$ red points and a set $B$ of $\frac{n}{2}$ blue points, is called {\em plane bichromatic matching}, also known as {\em red-blue matching} ($RB${\em-matching}). In other words, an $RB$-matching of $P$ is a non-crossing perfect matching such that every edge connects a red point to a blue point. It is well-known that if no three points of $P$ are collinear, then $P$ has an $RB$-matching~\cite{Putnam1979}. As shown in Figure~\ref{mp:RB-fig}(a), some point sets have a unique $RB$-matching. Hershberger and Suri~\cite{Hershberger1992} construct an $RB$-matching in $O(n\log n)$ time, which is optimal.

\begin{figure}[htb]
  \centering
\setlength{\tabcolsep}{0in}
  $\begin{tabular}{cc}
\multicolumn{1}{m{.5\columnwidth}}{\centering\includegraphics[width=.28\columnwidth]{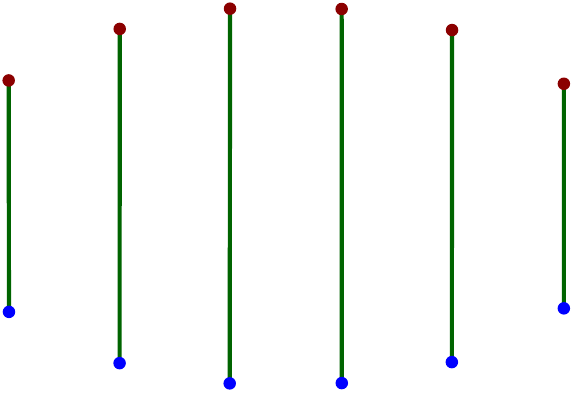}}
&\multicolumn{1}{m{.5\columnwidth}}{\centering\includegraphics[width=.3\columnwidth]{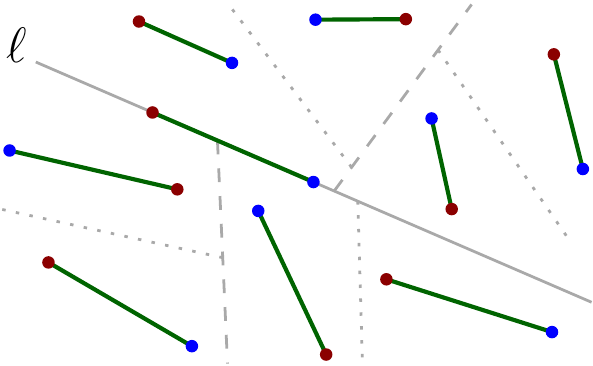}} \\
(a) & (b)
\end{tabular}$
  \caption{(a) A point set with a unique $RB$-matching, (b) Recursive ham sandwich cuts: first cut is in solid, second-level cuts are in dashed, and third-level cuts are in dotted lines.}
\label{mp:RB-fig}
\end{figure}

We review some proofs for the existence of a plane perfect matching between $R$ and $B$:
\begin{itemize}
\item {\Min{R}{B}:} Consider a matching $M$ between $R$ and $B$ which minimizes the total Euclidean length of the edges. The matching $M$ is plane. To prove this, suppose that two line segments $r_1b_1$ and $r_2b_2$ in $M$ intersect. By the triangle inequality, $|r_1b_2|+|r_2b_1|<|r_1b_1|+|r_2b_2|$. This implies that by replacing $r_1b_1$ and $r_2b_2$ in $M$ by $r_1b_2$ and $r_2b_1$, the total length of the matching is decreased; which is a contradiction.

\item {\Cut{R}{B}:} The {\em ham sandwich theorem} implies that there is a line $\ell$, known as a {\em ham sandwich cut}, that splits both $R$ and $B$ exactly in half; if the size of $R$ and $B$ is odd, the line passes through one of each. Match the two points on $\ell$ (if there are any) and recursively solve the problem on each side of $\ell$; the recursion stops when each subset has one red point and one blue point. By matching these two points in all subsets, a plane perfect matching for $P$ is obtained. See Figure~\ref{mp:RB-fig}(b). A ham sandwich cut can be computed in $O(n)$ time \cite{Lo1994}, and hence the running time can be expressed as the recurrence $T(n)=O(n)+2\cdot T(\lfloor\frac{n}{2}\rfloor)$. Therefore, an $RB$-matching can be computed in $O(n\log n)$ time. 

\item {\Tangent{R}{B}:} If $R$ and $B$ are separated by a line, we can compute an $RB$-matching in the following way. W.l.o.g. assume that $R$ and $B$ are separated by a vertical line $\ell$. Let \CH{R} and \CH{B} denote the convex hulls of $R$ and $B$. Compute the upper tangent $rb$ of \CH{R} and \CH{B} where $r\in R$ and $b\in B$. Match $r$ and $b$, and recursively solve the problem for $R-\{r\}$ and $B-\{b\}$; the recursion stops when the two subsets are empty. In each iteration, all the remaining points are below the line passing through $r$ and $b$, thus, the line segments representing a matched pair in the successor iterations do not cross $rb$. Therefore, the resulting matching is plane.
\end{itemize}

Consider a set $P$ of $n$ points where $n$ is even, and a partition $\{P_1,\dots,P_k\}$ of $P$ into $k$ color classes. Sufficient and necessary conditions for the existence of a colored matching in $P$ follows from the following theorem by Sitton~\cite{Sitton1996}:

\begin{theorem}[Sitton~\cite{Sitton1996}]
\label{mp:Sitton}
Let $K_{n_1,\dots,n_k}$ be a complete multipartite graph with $n$ vertices, where $n_1\le\dots\le n_k$. If $n_k\allowbreak \le \allowbreak n_1+\dots+\allowbreak n_{k-1}$, then $K_{n_1,\dots,n_k}$ has a matching of size $\lfloor\frac{n}{2}\rfloor$. 
\end{theorem}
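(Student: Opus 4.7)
The plan is to apply the Tutte--Berge formula rather than argue by induction or via a direct construction. Recall that the Tutte--Berge formula states that the size of a maximum matching of a graph $G$ on $n$ vertices equals $\frac{1}{2}(n - d(G))$, where $d(G) = \max_{S \subseteq V(G)} (o(G-S) - |S|)$ and $o(H)$ denotes the number of odd-order components of $H$. Since $d(G) \ge 0$ (attained at $S = \varnothing$) and $d(G) \equiv n \pmod 2$ (because $n - d(G)$ must be even and nonnegative), my task reduces to establishing $d(G) \le 1$ for $G = K_{n_1,\ldots,n_k}$ under the hypothesis. Once that is done, the parity constraint forces $d(G) = 0$ when $n$ is even and $d(G) = 1$ when $n$ is odd, and in both cases the resulting maximum matching has size $\lfloor n/2 \rfloor$, as desired.

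To bound $d(G)$, I would fix an arbitrary $S \subseteq V(G)$ and set $s_i = |S \cap P_i|$. The key structural observation is that $G - S$ is itself a complete multipartite graph, with part sizes $n_i - s_i$. I would then split into two cases. If at least two of these reduced parts are nonempty, then $G - S$ is connected, so $o(G-S) \le 1$ and hence $o(G-S) - |S| \le 1$. Otherwise at most one part, say $P_j$, survives with $n_j - s_j$ vertices, and $G - S$ is an independent set of $n_j - s_j$ isolated vertices while $|S| = (n - n_j) + s_j$; a short computation gives
$$o(G-S) - |S| \;=\; (n_j - s_j) - \bigl((n - n_j) + s_j\bigr) \;=\; 2(n_j - s_j) - n \;\le\; 2n_j - n.$$
The hypothesis $n_k \le n_1 + \cdots + n_{k-1}$ rearranges to $n_k \le n/2$, so $n_j \le n_k \le n/2$ forces this last quantity to be non-positive, and in particular $\le 1$. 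Combining both cases yields $d(G) \le 1$, completing the argument.

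The main obstacle, such as it is, will be mostly notational: verifying that the degenerate subcase in which $S$ exhausts every part but one is handled correctly is precisely where the hypothesis enters in an essential way. Indeed, if the hypothesis were violated, then taking $S = V(G) \setminus P_k$ would immediately give $d(G) \ge 2n_k - n > 0$, so a matching of size $\lfloor n/2 \rfloor$ could not exist; this neatly confirms that the condition stated in the theorem is in fact sharp.
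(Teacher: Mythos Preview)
Your argument via the Tutte--Berge formula is correct. The case split is clean: when at least two parts survive in $G-S$ the graph is connected so $o(G-S)\le 1$, and when at most one part $P_j$ survives the deficiency computation $2(n_j-s_j)-n\le 2n_k-n\le 0$ uses the hypothesis exactly. The parity remark then pins down $d(G)$ and yields $\lfloor n/2\rfloor$.

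There is nothing to compare against in the paper itself: the thesis states this result as a citation to Sitton and gives no proof. That said, the Tutte--Berge formula is already part of the thesis' toolkit (it appears as Theorem~\ref{gg:Berge} and Theorem~\ref{td:Berge}), so your approach fits naturally into the surrounding material. An alternative, constructive route is implicit in the proof of Theorem~\ref{max-matching}: one first merges colour classes pairwise (smallest two at a time) to reduce to at most three parts, none in majority, and then exhibits an explicit perfect matching in $K_n(R,G,B)$ by splitting off a balanced bipartite piece. That argument yields an $O(n)$ algorithm as a bonus, whereas your Tutte--Berge proof is non-constructive but shorter and makes the sharpness of the hypothesis (your closing paragraph) transparent.
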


Aichholzer et al.~\cite{Aichholzer2010-edge-removal} showed that if $K_{n_1,\dots,n_k}$ is a geometric graph corresponding to a colored point set $P$, then the minimum-weight colored matching of $P$ is non-crossing. Specifically, they extend the proof of 2-colored point sets to multi-colored point sets:

\begin{theorem}[Aichholzer et al.~\cite{Aichholzer2010-edge-removal}]
\label{mp:Aichholzer}
Let $P$ be a set of colored points in general position in the plane with $|P|$ even. Then $P$
admits a non-crossing perfect matching such that every edge connects two points of distinct colors if and only if at most half the points in $P$ have the same color.
\end{theorem}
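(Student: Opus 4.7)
The plan is to prove the easy direction first and then reduce the hard direction to a length-minimization argument in the spirit of the two-color proof \Min{R}{B} described just before the statement.

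For the only-if direction, I would simply note that if some color class $P_i$ has $|P_i|>n/2$, then in any perfect matching some edge must have both endpoints in $P_i$ by pigeonhole, so no colored matching at all exists, let alone a non-crossing one. This takes one line.

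For the if direction I would argue in two steps. First, invoke Theorem~\ref{mp:Sitton} (Sitton) applied to the complete multipartite graph $K_{n_1,\dots,n_k}$ induced by the color classes: the hypothesis that no color class exceeds $n/2$ is exactly $n_k\le n_1+\dots+n_{k-1}$, so a colored perfect matching exists. Among all colored perfect matchings of $P$, let $M^{*}$ be one of minimum total Euclidean length. The claim to establish is that $M^{*}$ is non-crossing, which settles the theorem.

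The core of the argument is a local exchange at a crossing, and this is where I expect the one genuine obstacle to sit. Suppose, for contradiction, that two edges $e_1=(p_1,q_1)$ and $e_2=(p_2,q_2)$ of $M^{*}$ cross at a point $c$ in their interiors, with $p_1,q_1$ of distinct colors $A,B$ and $p_2,q_2$ of distinct colors $C,D$. Consider the two possible rematchings at this crossing:
\begin{itemize}
 \item[(S1)] replace $\{e_1,e_2\}$ by $\{(p_1,q_2),(p_2,q_1)\}$, which is a colored swap iff $A\ne D$ and $C\ne B$;
 \item[(S2)] replace $\{e_1,e_2\}$ by $\{(p_1,p_2),(q_1,q_2)\}$, which is a colored swap iff $A\ne C$ and $B\ne D$.
\end{itemize}
In either swap, the general-position assumption rules out collinearity of $\{p_1,c,p_2\}$ or $\{p_1,c,q_2\}$, so the strict triangle inequality applied at $c$ yields $|p_1q_2|+|p_2q_1|<|p_1q_1|+|p_2q_2|$ for (S1) and $|p_1p_2|+|q_1q_2|<|p_1q_1|+|p_2q_2|$ for (S2). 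Hence whichever swap is color-legal produces a shorter colored matching, contradicting the minimality of $M^{*}$.

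The main obstacle is therefore the combinatorial check that \emph{at least one} of (S1) and (S2) is always color-legal, given only $A\ne B$ and $C\ne D$. My plan is to verify this by a short case analysis on how (S1) can fail: if $A=D$ then $A=C$ would force $C=D$ and $B=D$ would force $A=B$, so (S2) is legal; symmetrically if $B=C$, then $A=C$ forces $A=B$ and $B=D$ forces $C=D$, so again (S2) is legal. The simultaneous failure of (S1) via $A=D$ and $B=C$ is handled the same way. This exhausts all obstructions and confirms that the exchange step always goes through, completing the proof.
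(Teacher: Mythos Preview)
Your proposal is correct and follows exactly the approach the paper attributes to Aichholzer et al.: the paper does not give its own detailed proof but states that ``the minimum-weight colored matching of $P$ is non-crossing'' and that this ``extend[s] the proof of 2-colored point sets to multi-colored point sets,'' which is precisely your argument via Sitton's theorem for existence followed by the local exchange at a crossing. Your case analysis showing that at least one of the two swaps is always color-legal is the only extra ingredient beyond the bichromatic case, and you have handled it correctly.
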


\section{Packing Plane Matchings into Point Sets}
\label{mp:edge-disjoint-plane-section}
Let $P$ be a set of $n$ points in the plane with $n$ even. In this section we prove lower bounds on the number of plane matchings that can be packed into $\Kn{(P)}$. It is obvious that every point set has at least one plane matching, because a minimum weight perfect matching in $\Kn{(P)}$, denoted by ${\sf Min}(P)$, is plane. A trivial lower bound of 2 (for $n\ge4$) is obtained from a minimum weight Hamiltonian cycle in $\Kn{(P)}$, because this cycle is plane and consists of two edge-disjoint matchings. We consider points in convex position (Section~\ref{mp:convex-position-section}), wheel configuration (Section~\ref{mp:wheel-section}), and general position (Section~\ref{mp:general-position-section}).

\subsection{Points in Convex Position}
\label{mp:convex-position-section}
In this section we consider points in convex position. We show that if $P$ is in convex position, $\frac{n}{2}$ plane matchings can be packed into $\Kn{(P)}$; this bound is tight.
\begin{lemma}
\label{mp:two-convex-edges}
 If $P$ is in convex position, where $|P|$ is even and $|P|\ge4$, then every plane matching in $P$ contains at least two edges of \CH{P}.
\end{lemma}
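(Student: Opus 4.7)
The plan is to prove Lemma~\ref{mp:two-convex-edges} by induction on $n$, supported by a preliminary combinatorial observation (the ``minimum arc'' argument) that guarantees a first hull edge in any plane matching on convex position. Label the points $p_1,\dots,p_n$ in clockwise order along \CH{P}, so that each matching chord $(p_i,p_j)\in M$ splits the boundary into two arcs whose internal vertices must be matched among themselves (otherwise some chord would cross $(p_i,p_j)$, violating planarity). In particular, the number of vertices on each side of any chord of $M$ is even.

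The preliminary observation is: every plane matching $M$ of $P$ contains at least one hull edge of $P$. To see this, pick the chord $e^{\ast}=(p_i,p_j)\in M$ minimizing the number of vertices strictly contained on its smaller side. If that number is $0$ then $e^{\ast}$ is a hull edge and we are done. Otherwise the smaller side contains at least two vertices, which themselves admit a plane sub-matching; any chord of that sub-matching has strictly fewer vertices on its own smaller side, contradicting the minimality of $e^{\ast}$. Hence $e^{\ast}$ must be a hull edge.

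With this subroutine in hand, the induction is essentially bookkeeping. For the base case $n=4$, the only two plane matchings of four convex points are $\{(p_1,p_2),(p_3,p_4)\}$ and $\{(p_1,p_4),(p_2,p_3)\}$, and both consist entirely of hull edges, so the claim holds. For the inductive step with $n\ge 6$, apply the subroutine to obtain a hull edge $e=(p_i,p_{i+1})\in M$, and consider $P'=P\setminus\{p_i,p_{i+1}\}$ together with $M'=M\setminus\{e\}$. The set $P'$ has $n-2\ge 4$ points still in convex position and $M'$ is a plane perfect matching of $P'$, so the inductive hypothesis provides two hull edges of $P'$ in $M'$.

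The crux is to relate the hulls of $P$ and $P'$. Removing the two consecutive hull vertices $p_i,p_{i+1}$ deletes the three hull edges $(p_{i-1},p_i)$, $(p_i,p_{i+1})$, $(p_{i+1},p_{i+2})$ and creates exactly one new hull edge, namely $(p_{i-1},p_{i+2})$; every other hull edge of $P'$ is also a hull edge of $P$. Hence, among the two hull edges of $P'$ guaranteed by the induction in $M'$, at most one can equal the new edge $(p_{i-1},p_{i+2})$, so at least one is a genuine hull edge of $P$. Together with $e$ this yields two hull edges of $P$ in $M$, completing the induction. The only conceptually nontrivial part is the subroutine; the rest is elementary tracking of which boundary edges survive the deletion.
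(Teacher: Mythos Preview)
Your proof is correct but takes a different route from the paper's. The paper also argues by induction on $|P|$, but in the inductive step it does not first locate a hull edge. Instead, it assumes $M$ contains some \emph{diagonal} $pq$ (otherwise all edges of $M$ are hull edges and there is nothing to prove), and splits $P$ along $pq$ into two convex subsets $P_1,P_2$, each containing $p$ and $q$. Induction on each side yields two hull edges of $P_1$ in $M_1$ and two hull edges of $P_2$ in $M_2$; since the only edge common to $M_1$ and $M_2$ is $pq$, and $pq$ is the only boundary edge of $P_1$ or $P_2$ that is not a boundary edge of $P$, at least two genuine hull edges of $P$ survive. So the paper recurses twice on a diagonal and avoids any auxiliary lemma, whereas you recurse once after peeling off a hull edge, at the cost of the separate ``minimum-arc'' subroutine. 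Your subroutine is a clean standalone fact (every plane matching on convex points contains at least one hull edge), and your inductive bookkeeping about which hull edges survive the deletion of two consecutive vertices is slightly more delicate than the paper's; the paper's split-on-a-diagonal argument is a bit more symmetric and needs no preliminary lemma. Both arguments are equally elementary.
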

\begin{proof}
Let $M$ be a plane matching in $P$. We prove this lemma by induction on the size of $P$. If $|P|=4$, then $|M|=2$. None of the diagonals of $P$ can be in $M$, thus, the two edges in $M$ belong to \CH{P}. If $|P|>4$ then $|M|\ge 3$. If all edges of $M$ are edges of \CH{P}, then the claim in the lemma holds. Assume that M contains a diagonal edge $pq$, where $pq$ is not an edge
of \CH{P}. Let $P_1$ and $P_2$ be the sets of points of $P$ on each side of $\ell(p,q)$ (both including $p$ and $q$). Let $M_1$ and $M_2$ be the edges of $M$ in $P_1$ and $P_2$, respectively. It is obvious that $P_1$ (resp. $P_2$) is in convex position and $M_1$ (resp. $M_2$) is a plane matching in $P_1$ (resp. $P_2$). By the induction hypothesis $M_1$ (resp. $M_2$) contains two edges of \CH{P_1} (resp. \CH{P_2}). Since $\CH{P}=\CH{P_1}\cup \CH{P_2}$ and $|M_1\cap M_2|=1$, $M$ contains at least two edges of \CH{P}.
\end{proof}

\begin{theorem}
\label{mp:convex}
For any set $P$ of $n$ points in convex position in the plane, with $n$ even, the maximum number of plane matchings that can be packed into $\Kn{(P)}$ is $\frac{n}{2}$.
\end{theorem}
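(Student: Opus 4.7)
The upper bound follows almost immediately from Lemma~\ref{mp:two-convex-edges}. Since every plane perfect matching of $P$ (with $n\ge 4$) contains at least two edges of $CH(P)$, and $CH(P)$ consists of exactly $n$ edges in total, any family of pairwise edge-disjoint plane matchings can contain at most $\lfloor n/2 \rfloor = n/2$ members.

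For the matching lower bound, the plan is to exhibit an explicit decomposition of a subgraph of $K_n(P)$ into $n/2$ plane matchings. Label the vertices of $CH(P)$ as $p_0,p_1,\dots,p_{n-1}$ in convex position order, and for each $i\in\{0,1,\dots,n/2-1\}$ define
\[
M_i \;=\; \bigl\{\,(p_j,p_{(2i+1-j)\bmod n}) : j = 0,1,\dots,n/2-1\,\bigr\}.
\]
Geometrically, $M_i$ is the set of chords perpendicular to the axis of symmetry that passes through the midpoints of the two hull edges $(p_i,p_{i+1})$ and $(p_{i+n/2},p_{i+1+n/2})$, i.e.\ it is a family of parallel chords. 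Since parallel chords of a convex polygon never cross, $M_i$ is plane. Because $n$ is even and $2i+1$ is odd, no edge of $M_i$ is a loop, so $M_i$ is a perfect matching.

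It then remains to verify edge-disjointness. The plan is to observe that each edge of $M_i$ joins two vertices whose indices sum to $2i+1\pmod n$; as $i$ ranges over $\{0,1,\dots,n/2-1\}$, the residues $2i+1\pmod n$ take the $n/2$ distinct odd values in $\{1,3,\dots,n-1\}$. Consequently, edges in distinct matchings $M_i$ and $M_{i'}$ correspond to different index sums modulo $n$ and hence are distinct edges of $K_n(P)$. This yields $n/2$ pairwise edge-disjoint plane matchings, matching the upper bound.

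The main technical hurdle is checking that the index-sum invariant genuinely forces disjointness (as opposed to two edges with the same multiset of endpoints being counted twice), and that the ``parallel chords'' description coincides with the index-formula description; both are routine once the labeling is fixed. The rest of the argument is a direct application of Lemma~\ref{mp:two-convex-edges}.
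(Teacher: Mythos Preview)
Your approach matches the paper's: the same upper bound via Lemma~\ref{mp:two-convex-edges}, and the same lower-bound construction using $n/2$ matchings whose edges have a fixed odd index-sum modulo $n$ (the paper phrases it as ``edge $p_ip_{i-1}$ and all edges parallel to $p_ip_{i-1}$'' and cites Bernhart--Kainen). Your disjointness argument via distinct odd residues is correct and is exactly what underlies the paper's construction.

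Two small fixes are needed, though. First, your explicit formula is mis-indexed: with $j=0,\dots,n/2-1$ the list $(p_j,p_{(2i+1-j)\bmod n})$ repeats edges and omits others (e.g.\ for $n=6$, $i=0$ you list $\{p_0,p_1\}$ twice and miss $\{p_3,p_4\}$); the paper avoids this by parametrizing as $p_{i+j-1}p_{n+i-j}$ for $j=1,\dots,n/2$. Second, the chords are only literally parallel, and there is only an ``axis of symmetry'', when $P$ is a regular $n$-gon; for arbitrary convex position you should argue non-crossing directly from the index-sum description (the edges with a fixed odd sum $s$ form two nested families, one on the arc $\{0,\dots,s\}$ and one on the complementary arc, hence pairwise non-crossing).
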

\begin{proof}
By Lemma~\ref{mp:two-convex-edges}, every plane matching in $P$ contains at least two edges of \CH{P}. On the other hand, \CH{P} has $n$ edges. Therefore, the number of plane matchings that can be packed into $\Kn{(P)}$ is at most $\frac{n}{2}$. Bernhart and Kainen~\cite[Theorem 3.4]{Bernhart1979} showed that the book thickness of the complete graph is $\frac{n}{2}$.
Their construction of $\frac{n}{2}$ edge-disjoint paths directly
carries over to packing the same amount of plane matchings into $\Kn{(P)}$.

In order to be self-contained, we show how to pack $\frac{n}{2}$ plane matchings into $\Kn{(P)}$.
Let $P=\{p_0,\dots,p_{n-1}\}$, and w.l.o.g. assume that $p_0,p_1,\dots,p_{n-1}$ is the radial ordering of the points in $P$ with respect to a fixed point in the interior of \CH{P}. For each $p_i$ in the radial ordering, where $0\le i<\frac{n}{2}$, let $M_i=\{p_{i+j-1}p_{n+i-j}:j=1,\dots,\frac{n}{2}\}$ (all indices are modulo $n$). Informally speaking, $M_i$ is a plane perfect matching obtained from edge $p_ip_{i-1}$ and all edges parallel to $p_ip_{i-1}$; see Figure~\ref{mp:convex-fig}. Let $\mathcal{M}=\{M_i:i=0,\dots,\frac{n}{2}-1\}$. The matchings in $\mathcal{M}$ are plane and pairwise edge-disjoint. Thus, $\mathcal{M}$ is a set of $\frac{n}{2}$ plane matchings that can be packed into $\Kn{(P)}$.
\end{proof}

\begin{figure}[htb]
  \centering
\setlength{\tabcolsep}{0in}
\includegraphics[width=.27\columnwidth]{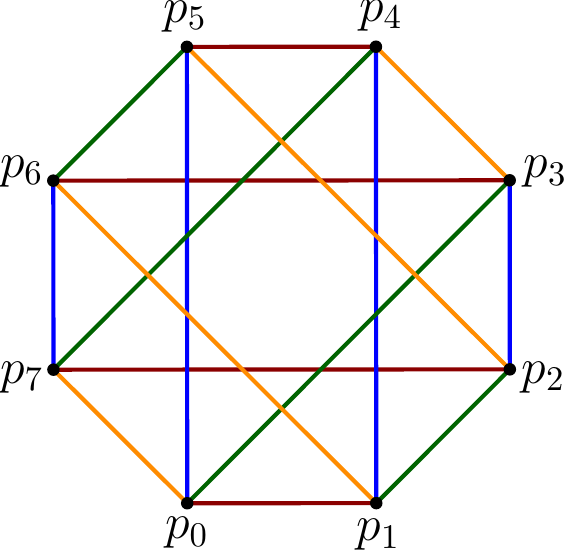}
  \caption{Points in convex position.}
\label{mp:convex-fig}
\end{figure}

\subsection{Points in Wheel Configurations}
\label{mp:wheel-section}
A point set $P$ of $n$ points is said to be in ``regular wheel configuration'' in the plane, if $n-1$ points of $P$ are equally spaced on a circle $C$ and one point of $P$ is at the center of $C$.
We introduce a variation of the regular wheel configuration as follows. 
Let the point set $P$ be partitioned into $X$ and $Y$ such that $|X|\ge 3$ and $|X|$ is an odd number. The points in $X$ are equally spaced on a circle $C$. For any two distinct points $p, q\in X$ let $\ell(p,q)$ be the line passing through $p$ and $q$. Since $X$ is equally spaced on $C$ and $|X|$ is an odd number, $\ell(p,q)$ does not contain the center of $C$. Let $H(p,q)$ and $H'(p,q)$ be the two half planes defined by $\ell(p,q)$ such that $H'(p,q)$ contains the center of $C$. Let $C'=\bigcap_{p,q\in X}H'(p,q)$. The points in $Y$ are in the interior of $C'$; see Figure~\ref{mp:n-over-3-fig}(a). For any two points $p,q\in X$, the line segment $pq$ does not intersect the interior of $C'$. The special case when $|Y|=1$ is the regular wheel configuration.

\begin{figure}[htb]
  \centering
\setlength{\tabcolsep}{0in}
  $\begin{tabular}{ccc}
\multicolumn{1}{m{.5\columnwidth}}{\centering\includegraphics[width=.43\columnwidth]{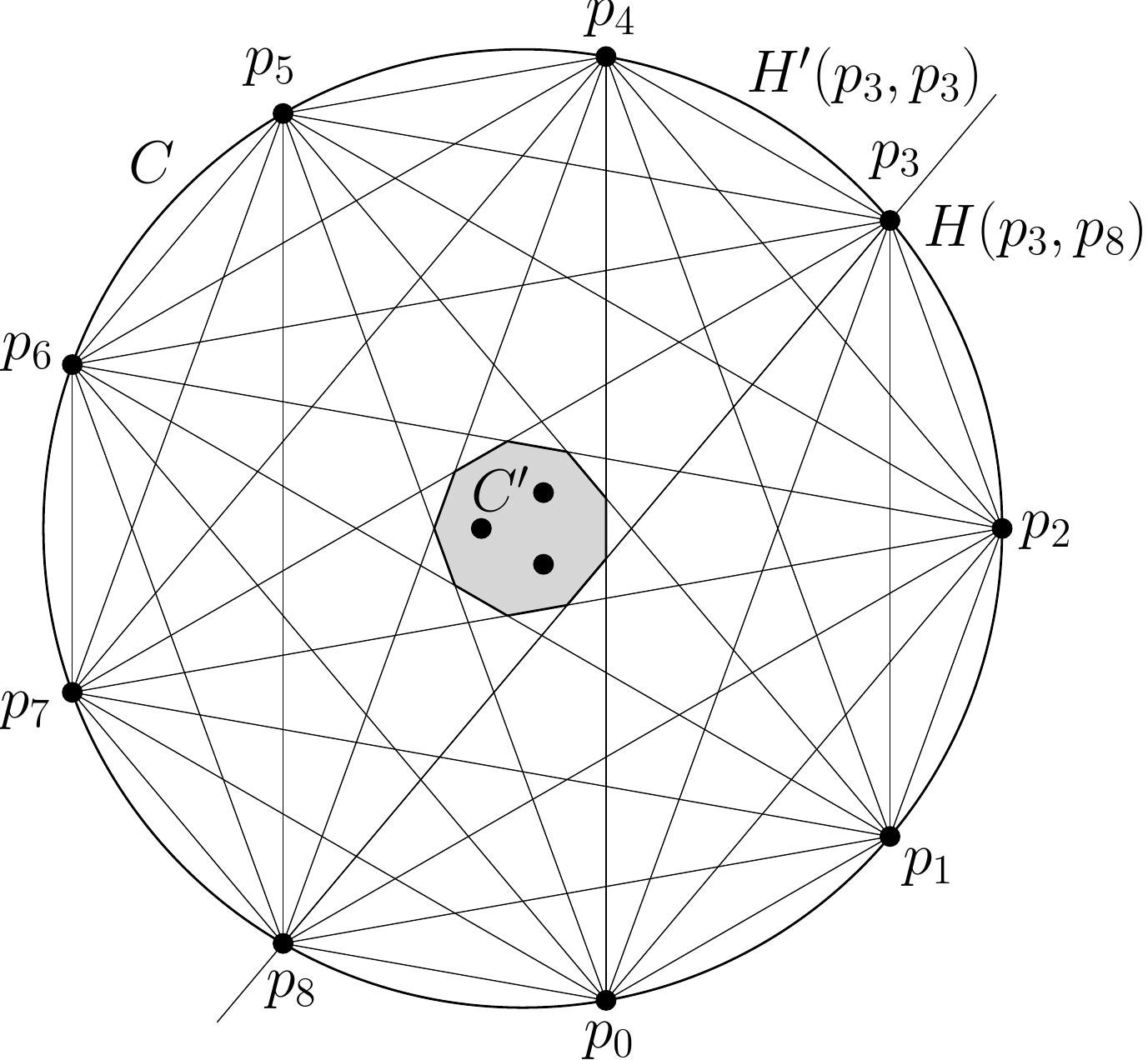}}&
\multicolumn{1}{m{.5\columnwidth}}{\centering\includegraphics[width=.43\columnwidth]{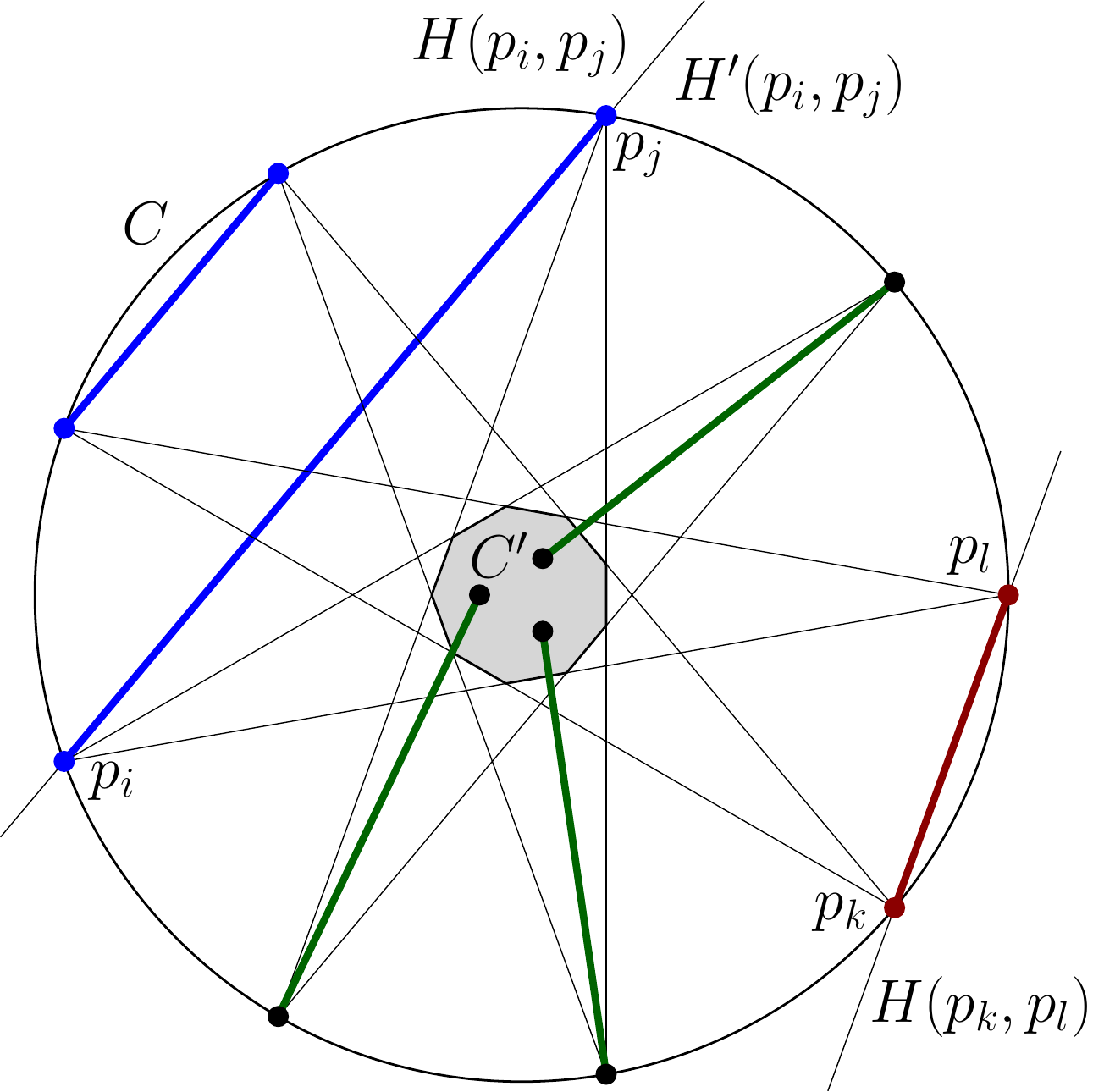}}
\\
(a) & (b)
\end{tabular}$
  \caption{(a) A variation of the regular wheel configuration. (b) Illustration of Lemma~\ref{mp:two-wheel-edges}. The points of $A$ and the edges of $M(A)$ are in blue, and the points of $B$ and the edges of $M(B)$ are in red.}
\label{mp:n-over-3-fig}
\end{figure}

\begin{lemma}
\label{mp:two-wheel-edges}
 Let $P$ be a set of points in the plane where $|P|$ is an even number and $|P|\ge 6$. Let $\{X,Y\}$ be a partition of the points in $P$  such that $|X|$ is an odd number and $|Y|\le 2\lfloor\frac{|P|}{6}\rfloor-1$. If $P$ is in the wheel configuration described above, then any plane matching in $P$ contains at least two edges of \CH{P}.
\end{lemma}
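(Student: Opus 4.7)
The plan is to analyse the structure of any plane matching $M$ of $P$ by exploiting where the $Y$-vertices can lie relative to chords of the circle on which $X$ lies. Partition $X$ into $X'$ (matched by $M$ to another $X$-vertex) and $X''$ (matched to a $Y$-vertex), and set $m=|X''|$. Since each $X$-$Y$ edge of $M$ uses one distinct $Y$-vertex, $m\le|Y|$, and the hypothesis $|Y|\le 2\lfloor|P|/6\rfloor-1$ gives $|X|\ge 2|Y|+3\ge 2m+3$, hence $|X'|=|X|-m\ge|Y|+3\ge 4$ (using $|Y|\ge 1$, which holds because $|P|$ is even while $|X|$ is odd).

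The pivotal geometric observation is that every $y\in Y$ lies on the major-arc side of every chord $x_jx_k$ with $x_j,x_k\in X$: by definition of the region $C'$ containing $Y$, every such $y$ is on the center-side of each line $\ell(x_j,x_k)$, and the center of a circle lies on the major-arc side of any of its chords. From this I will deduce the key claim that no vertex of $X''$ can lie on the minor arc of any $X$-$X$ chord of $M$, for otherwise such an $X''$-vertex and its matched $Y$-partner would sit on opposite sides of the chord, forcing their matching edge to cross it. Two structural consequences follow. The $m$ vertices of $X''$ partition the cyclic order of $X$ into $m$ arcs $A_1,\dots,A_m$ of consecutive $X'$-vertices, and an $X$-$X$ chord joining two different arcs would have $X''$-vertices on both of its arcs and hence on the minor one, so every $X$-$X$ chord of $M$ is confined to a single $A_p$; in particular each arc size $a_p$ is even. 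Moreover, every chord inside $A_p$ must take its internal arc (through the vertices of $A_p$) as its minor arc, which bounds its span in the cyclic order of $X$ by $(|X|-1)/2$.

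Since $A_p$ is a block of consecutive $X$-vertices, an \emph{ear} of $M|_{A_p}$\textemdash a matched pair adjacent within $A_p$\textemdash is exactly a hull edge of $\CH{P}=\CH{X}$, and a standard induction on the partner of the leftmost point of $A_p$ shows that the non-crossing matching inside any non-empty $A_p$ contains at least one such ear. So if two or more arcs are non-empty, $M$ already contains two hull edges and we are done. The remaining and main case is when all $X''$-vertices are consecutive in $X$, so that a single arc $A_1$ of size $a_1=|X'|\ge 4$ carries all of $M_{XX}$. Here the span bound specialises to $(|X|-1)/2\le a_1-2$ (using $|X|\ge 2m+3$), so the leftmost vertex of $A_1$ cannot be matched to the rightmost; a case analysis on its partner $j$ then shows that both sub-intervals $\{2,\dots,j-1\}$ and $\{j+1,\dots,a_1\}$ are non-empty (since $j\le a_1-2$), each contributes at least one ear by the same induction, and together they yield the required second hull edge. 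I expect this \emph{span-restricted two-ear} statement in the single-arc case to be the main obstacle to formalise cleanly.
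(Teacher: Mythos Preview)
Your argument is correct and takes a genuinely different route from the paper. The paper's proof is more direct: it picks the \emph{longest} $X$--$X$ edge $p_ip_j$ of $M$, whose minor-arc side $A$ (including $p_i,p_j$) lies entirely in $X$ and is in convex position, so Lemma~\ref{mp:two-convex-edges} applied to $A$ yields at least one $\CH{X}$-edge in $M(A)$. On the other side $A'$ one checks $|X\cap A'|>|Y|$, which forces a second $X$--$X$ edge $p_kp_l$; its minor-arc side $B$ is disjoint from $A$ (by maximality of $|p_ip_j|$), and the same lemma gives the second hull edge. Your approach instead sets up a global combinatorial constraint\textemdash no $X''$-vertex on the minor arc of any $X$--$X$ edge of $M$\textemdash which confines all $X$--$X$ edges to the arcs cut out by $X''$. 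This gives a pleasant structural picture and makes the multi-arc case immediate, at the price of a separate single-arc analysis.

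Two small remarks on that single-arc case. First, your assertion that ``both sub-intervals are non-empty (since $j\le a_1-2$)'' fails when $j=2$; but then the edge between positions $1$ and $2$ is itself a hull edge, and the remaining interval $\{3,\dots,a_1\}$ (of even size $\ge 2$) supplies the second ear, so the conclusion survives. Second, once you know the closing edge $1$--$a_1$ is forbidden by the span bound, you can bypass the case analysis entirely: since $a_1\ge 4$ and $A_1$ is in convex position, Lemma~\ref{mp:two-convex-edges} applied to $A_1$ already gives two edges of $\CH{A_1}$ in $M$, and every edge of $\CH{A_1}$ other than the forbidden closing edge is an edge of $\CH{X}$.
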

\begin{proof}
 Consider a plane matching $M$ of $P$. It is obvious that $\CH{P}=\CH{X}$; we show that $M$ contains at least two edges of $\CH{X}$. Note that $|X|=|P|-|Y|$, and both $|X|$ and $|Y|$ are odd numbers. Observe that $|X|\ge 4\lfloor\frac{|P|}{6}\rfloor+1=2|Y|+3\ge 5$; which implies that $|Y|\le \frac{|X|-1}{2}-1$. Thus, $|X|>|Y|$, and hence there is at least one edge in $M$ with both endpoints in $X$. Let $p_ip_j$ be the longest such edge. Recall that $C'\subset H'(p_i,p_j)$. Let $A$ be the set of points of $P$ in $H(p_i,p_j)$ (including $p_i$ and $p_j$), and let $A'$ be the set of points of $P$ in $H'(p_i,p_j)$ (excluding $p_i$ and $p_j$). By definition, $H(p_i,p_j)\cup \ell(p_i,p_j)$ does not contain any point of $Y$. Thus, $A\subset X$ and $A$ is in convex position with $|A|\le \frac{|X|-1}{2}$ (note that $|X|$ is an odd number). Let $M(A)$ and $M(A')$ be the edges of $M$ induced by the points in $A$ and $A'$, respectively. Clearly, $\{M(A),M(A')\}$ is a partition of the edges of $M$, and hence $M(A)$ (resp. $M(A')$) is a plane perfect matching for $A$ (resp. $A'$). We show that each of $M(A)$ and $M(A')$ contains at least one edge of \CH{X}. First we consider $M(A)$. If $|A|=2$, then $p_ip_j$ is the only edge in $M(A)$ and it is an edge of \CH{X}. Assume that $|A|\ge 4$. By Lemma~\ref{mp:two-convex-edges}, $M(A)$ contains at least two edges of \CH{A}. On the other hand each edge of \CH{A}, except for $p_ip_j$, is also an edge of \CH{X}; see Figure~\ref{mp:n-over-3-fig}(b). This implies that $M(A)-\{p_ip_j\}$ contains at least one edge of \CH{X}. Now we consider $M(A')$. Let $X'=A'\cap X$, that is, $\{A,X'\}$ is a partition of the points in $X$. Since $|A|\le \frac{|X|-1}{2}$, we have $|X'|\ge\frac{|X|+1}{2}$. Recall that $|Y|\le \frac{|X|-1}{2}-1$. Thus, $|Y|<|X'|$, and hence there is an edge $p_kp_l\in M(A')$ with both $p_k$ and $p_l$ in $X'$. Let $B$ be the set of points of $P$ in $H(p_k,p_l)$ (including $p_k$ and $p_l$). By definition, $H(p_k,p_l)\cup \ell(p_k,p_l)$ does not contain any point of $Y$. Thus, $B\subset X$ and $B$ is in convex position. On the other hand, by the choice of $p_ip_j$ as the longest edge, $A$ cannot be a subset of $B$ and hence $B\subset X'$. Let $M(B)$  be the edges of $M(A')$ induced by the points in $B$. We show that $M(B)$ contains at least one edge of \CH{X}. If $|B|=2$, then $p_kp_l$ is the only edge in $M(B)$ and it is an edge of \CH{X}. Assume that $|B|\ge 4$. By Lemma~\ref{mp:two-convex-edges}, $M(B)$ contains at least two edges of \CH{B}. On the other hand, each edge of \CH{B}, except for $p_kp_l$, is also an edge of \CH{X}; see Figure~\ref{mp:n-over-3-fig}(b). This implies that $M(B)-\{p_kp_l\}$ contains at least one edge of \CH{X}. This completes the proof. 
\end{proof}

\begin{figure}[htb]
  \centering
\setlength{\tabcolsep}{0in}
  $\begin{tabular}{cc}

\multicolumn{1}{m{.5\columnwidth}}{\centering\includegraphics[width=.33\columnwidth]{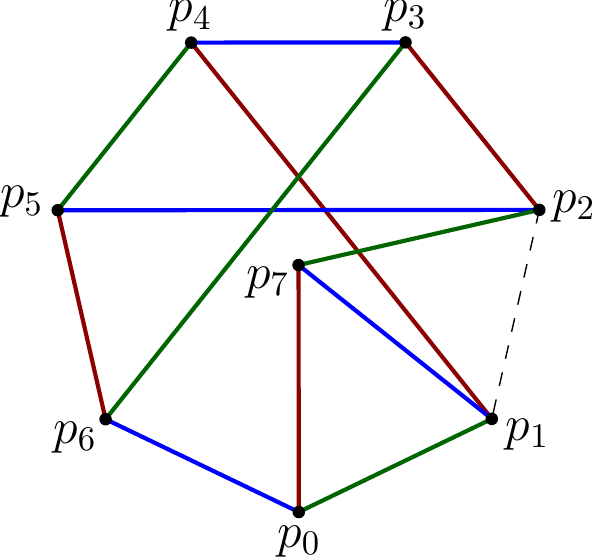}}
&\multicolumn{1}{m{.5\columnwidth}}{\centering\includegraphics[width=.33\columnwidth]{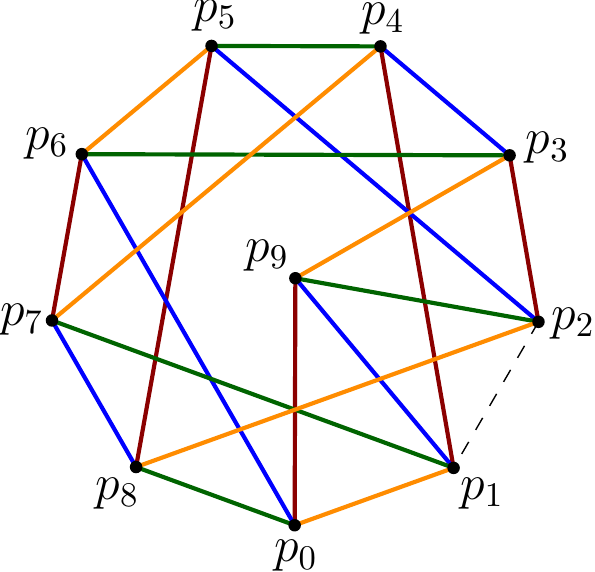}} \\
(a) & (b)
\end{tabular}$
  \caption{Points in the regular configuration with (a) $n=4k$ and (b) $n=4k+2$; one of the edges in \CH{P} cannot be matched.}
\label{mp:wheel-fig}
\end{figure}

\begin{theorem}
\label{mp:wheel}
For a set $P$ of $n\ge 6$ points in the regular wheel configuration in the plane with $n$ even, the maximum number of plane matchings that can be packed into $\Kn{(P)}$ is $\frac{n}{2}-1$.
\end{theorem}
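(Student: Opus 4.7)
I apply Lemma~\ref{mp:two-wheel-edges} with $X$ the $n-1$ equally-spaced boundary points and $Y = \{c\}$ the center. The hypothesis $|Y| = 1 \le 2\lfloor n/6\rfloor - 1$ reduces exactly to $n \ge 6$, the set $Y$ trivially lies in the required intersection of half-planes, and $\CH{P}$ coincides with the regular $(n-1)$-gon spanned by $X$. Hence every plane perfect matching of $P$ uses at least two of the $n - 1$ hull edges, and since $n - 1$ is odd, any collection of pairwise edge-disjoint plane perfect matchings has cardinality at most $\lfloor (n-1)/2\rfloor = (n-2)/2 = n/2 - 1$.

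\textbf{Lower bound.} I construct $n/2-1$ pairwise edge-disjoint plane perfect matchings. Label the boundary points $p_0,\ldots,p_{n-2}$ clockwise. The geometry of the regular wheel gives a clean safety criterion: a chord $p_ap_b$ crosses the radius $cp_j$ if and only if $p_j$ lies on the shorter of the two arcs between $p_a$ and $p_b$; in particular, every hull edge not incident to $p_j$ is safe. As a first attempt, for each $j$ define the hull-only matching $A_j = \{cp_j\} \cup \{p_{j+1}p_{j+2},\, p_{j+3}p_{j+4},\,\ldots,\,p_{j+n-3}p_{j+n-2}\}$ (indices mod $n-1$). Each $A_j$ is plane, but the hull-edge conflict graph among the $A_j$'s is a cycle of length $n-1$ whose maximum clique is $2$, so the $A_j$'s alone yield only two edge-disjoint matchings. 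To reach $n/2-1$, I augment the family by ``swapped'' matchings: for appropriately chosen $j$, replace a pair of adjacent hull edges of $A_j$ by a shorter nested pattern in which one hull edge is kept and the others are replaced by longer chords lying on the side of $c$ opposite to $p_j$ (and hence safe). Each swap frees one hull edge for use by another matching, and a rotational assignment of the swaps across different values of $j$ produces the required $n/2-1$ pairwise edge-disjoint plane matchings.

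\textbf{The main obstacle} is making the augmented construction explicit and verifying edge-disjointness for all even $n \ge 6$. The hull-edge budget is extremely tight: the $n/2-1$ matchings must collectively use at least $2(n/2-1) = n-2$ of the $n-1$ hull edges, each at most once, which forces most matchings to use the minimum of two hull edges and relegates the remaining $(n-4)/2$ edges per matching to safe non-hull chords, all of which must be pairwise distinct across the family. The base cases illustrate the pattern: $n=6$ is tight with just $A_0, A_1$; for $n=8$, an explicit packing of size $n/2-1 = 3$ is $\{A_0, B_3, B_5\}$, where $B_j = \{cp_j,\, p_{j+1}p_{j+2},\, p_{j+3}p_{j+6},\, p_{j+4}p_{j+5}\}$ uses two hull edges together with one safe diagonal. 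An inductive construction on $n$ (or an explicit rotational scheme in which one $A_j$ is combined with $n/2-2$ $B$-style swaps at radii spaced evenly around the circle) then extends the packing to general even $n \ge 6$, and planarity of each constructed matching follows from the arc-based safety criterion above.
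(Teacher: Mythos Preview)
Your upper bound is correct and matches the paper's argument exactly: apply Lemma~\ref{mp:two-wheel-edges} with $|Y|=1$, use that $\CH{P}$ has $n-1$ edges, and round down.

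Your lower bound, however, has a genuine gap. You yourself identify it: ``the main obstacle is making the augmented construction explicit and verifying edge-disjointness for all even $n \ge 6$.'' You verify $n=6$ and $n=8$ by hand and then appeal to ``an inductive construction on $n$ (or an explicit rotational scheme \ldots)'' without providing either. The hull-edge budget is indeed tight, and your $A_j$/$B_j$ family is not obviously extendable: as $n$ grows, each matching needs roughly $n/2-2$ non-hull chords, all of which must be safe with respect to their own radius \emph{and} pairwise distinct across the whole family. Nothing in your sketch controls collisions among these diagonals.

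The paper's construction avoids this difficulty by abandoning the hull-heavy approach entirely. Label the boundary points $p_0,\dots,p_{n-2}$ cyclically and the center $p_{n-1}$. For each $i=0,\dots,\frac{n}{2}-2$, set
\[
M_i=\{p_ip_{n-1}\}\;\cup\;\underbrace{\{\,p_{i+j}\,p_{\,i+2\lceil(n-2)/4\rceil-j+1}\;:\;1\le j\le \lceil(n-2)/4\rceil\,\}}_{R_i}\;\cup\;\underbrace{\{\,p_{i-j}\,p_{\,i-2\lfloor(n-2)/4\rfloor+j-1}\;:\;1\le j\le \lfloor(n-2)/4\rfloor\,\}}_{L_i}
\]
(indices mod $n-1$). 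In words: match $p_i$ to the center, then split the remaining $n-2$ boundary points into two consecutive arcs of sizes $2\lceil(n-2)/4\rceil$ and $2\lfloor(n-2)/4\rfloor$ on either side of $p_i$, and within each arc take the nested ``rainbow'' matching. Each $M_i$ is plane because the two rainbows lie on opposite sides of the radius $p_ip_{n-1}$ and nested chords never cross. Edge-disjointness of $M_0,\dots,M_{n/2-2}$ follows directly from the explicit index formulas: a boundary chord $p_ap_b$ lies in $M_i$ only for the specific $i$ determined by the pair $(a,b)$, and the radii $p_0p_{n-1},\dots,p_{n/2-2}\,p_{n-1}$ are trivially distinct. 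This closes the lower bound with no induction and no case analysis beyond the $\lceil\cdot\rceil$/$\lfloor\cdot\rfloor$ split.
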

\begin{proof}
In the regular wheel configuration, $P$ is partitioned into a point set $X$ of size $n-1$ and a point set $Y$ of size 1. The points of $X$ are equally spaced on a circle $C$ and the (only) point of $Y$ is the center of $C$. By Lemma~\ref{mp:two-wheel-edges}, every plane matching in $P$ contains at least two edges of \CH{P}. On the other hand, \CH{P} has $n-1$ edges. Therefore, the number of plane matchings that can be packed into $\Kn{(P)}$ is at most $\frac{n-1}{2}$. Since $n$ is an even number and the number of plane matchings is an integer, we can pack at most $\frac{n}{2}-1$ plane matchings into $\Kn{(P)}$.

Now we show how to pack $\frac{n}{2}-1$ plane matchings into $\Kn{(P)}$.
Let $P=\{p_0,\dots,p_{n-1}\}$, and w.l.o.g. assume that $p_{n-1}$ is the center of $C$. Let $P'=P-\{p_{n-1}\}$, and let $p_0,p_1,\dots,p_{n-2}$ be the radial ordering of the points in $P'$ with respect to $p_{n-1}$. For each $p_i$ in the radial ordering, where $0\le i\le \frac{n}{2}-2$, let $$R_i=\{p_{i+j}p_{i+2\lceil(n-2)/4\rceil-j+1}:j=1,\dots,\lceil(n-2)/4\rceil\},$$ and $$L_i=\{p_{i-j}p_{i-2\lfloor(n-2)/4\rfloor+j-1}:j=1,\dots,\lfloor(n-2)/4\rfloor\}$$ (all indices are modulo $n-1$). Let $M_i=R_i\cup L_i\cup \{p_ip_{n-1}\}$; informally speaking, $M_i$ is a plane perfect matching obtained from edge $p_ip_{n-1}$ and edges parallel to $p_ip_{n-1}$. See Figure~\ref{mp:wheel-fig}(a) for the case where $n=4k$ and Figure~\ref{mp:wheel-fig}(b) for the case where $n=4k+2$. Let $\mathcal{M}=\{M_i:i=0,\dots,\frac{n}{2}-2\}$. The matchings in $\mathcal{M}$ are plane and pairwise edge-disjoint. Thus, $\mathcal{M}$ is a set of $\frac{n}{2}-1$ plane matchings that can be packed into $\Kn{(P)}$. 
\end{proof}

In the following theorem we use the wheel configuration to show that for any even integer $n\ge 6$, there exists a set $P$ of $n$ points in the plane, such that no more than $\lceil\frac{n}{3}\rceil$ plane matchings can be packed into $\Kn{(P)}$. 

\begin{theorem}
\label{mp:n-over-3-thr}
For any even number $n\ge 6$, there exists a set $P$ of $n$ points in the plane such that no more than $\lceil\frac{n}{3}\rceil$ plane matchings can be packed into $\Kn{(P)}$.
\end{theorem}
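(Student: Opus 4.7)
The plan is to exhibit an explicit configuration of $P$ that fits the wheel-like configuration from Lemma~\ref{mp:two-wheel-edges}, and then invoke that lemma to upper-bound the number of edge-disjoint plane matchings by $\lfloor |X|/2 \rfloor$, where $X$ is the outer set of points. The key arithmetic target is to choose $|X|$ as small as possible (making the convex hull small) while still satisfying the hypotheses of Lemma~\ref{mp:two-wheel-edges}. Since the hypothesis allows $|Y|$ as large as $2\lfloor n/6\rfloor - 1$, we can make $|X|$ roughly $2n/3$, which gives $\lfloor |X|/2\rfloor \approx \lceil n/3 \rceil$.

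I would split into cases based on $n \bmod 6$. In each case I pick an odd integer $|Y|$ as large as permitted (so $|X|=n-|Y|$ is also odd, as required by the lemma), place the $|X|$ points equally spaced on a circle $C$, and place the $|Y|$ points inside the inner region $C'$ defined before Lemma~\ref{mp:two-wheel-edges}. Specifically, writing $n=6k+r$ with $r\in\{0,2,4\}$, I would set:
\begin{itemize}
\item $r=0$: $|Y|=2k-1$, $|X|=4k+1$, giving $\lfloor|X|/2\rfloor = 2k = \lceil n/3\rceil$.
\item $r=2$: $|Y|=2k-1$, $|X|=4k+3$, giving $\lfloor|X|/2\rfloor = 2k+1 = \lceil n/3\rceil$.
\item $r=4$: $|Y|=2k-1$, $|X|=4k+5$, giving $\lfloor|X|/2\rfloor = 2k+2 = \lceil n/3\rceil$.
\end{itemize}
In each case $|Y|\le 2\lfloor n/6\rfloor - 1$ is immediate, both $|X|$ and $|Y|$ are odd, and $|X|\ge 3$ because $n\ge 6$. (The smallest case $n=6$ falls under $r=0$, $k=1$: $|X|=5$, $|Y|=1$, which is just the regular wheel configuration and matches $\lceil 6/3\rceil=2$.)

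By Lemma~\ref{mp:two-wheel-edges}, every plane perfect matching in this $P$ uses at least two edges of $\mathrm{CH}(P)=\mathrm{CH}(X)$. Since $\mathrm{CH}(X)$ contains exactly $|X|$ edges, any collection of pairwise edge-disjoint plane matchings has cardinality at most $\lfloor|X|/2\rfloor$, which by the case analysis above equals $\lceil n/3\rceil$. This yields the desired upper bound.

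The main obstacle is not conceptual but bookkeeping: ensuring that for every even $n\ge 6$ the chosen pair $(|X|,|Y|)$ simultaneously satisfies (i) $|X|+|Y|=n$, (ii) $|X|$ odd (hence $|Y|$ odd, since $n$ is even), (iii) $|Y|\le 2\lfloor n/6\rfloor - 1$, and (iv) $\lfloor |X|/2\rfloor = \lceil n/3\rceil$. The case split above handles this uniformly, and the geometric realizability in the wheel configuration is immediate: place $|X|$ points equally spaced on a large circle (so $|X|$ odd guarantees the inner region $C'$ is nonempty), and scatter the $|Y|$ interior points anywhere inside $C'$ while keeping $P$ in general position.
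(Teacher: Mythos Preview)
Your proof is correct and takes essentially the same approach as the paper: the same wheel configuration with $|Y|=2\lfloor n/6\rfloor-1$, the same invocation of Lemma~\ref{mp:two-wheel-edges}, and the same counting argument on convex-hull edges. The only cosmetic difference is that you verify $\lfloor |X|/2\rfloor=\lceil n/3\rceil$ by a clean case split on $n\bmod 6$, whereas the paper reaches the same bound via a short chain of inequalities.
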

\begin{proof}
The set $P$ of $n$ points is partitioned into $X$ and $Y$, where $|Y|=2\lfloor\frac{n}{6}\rfloor-1$ and $|X|=n-|Y|$. The points in $X$ are equally spaced on a circle $C$ and the points in $Y$ are in the interior $\bigcap_{p,q\in X}H'(p,q)$. By Lemma~\ref{mp:two-wheel-edges}, any plane matching in $P$ contains at least two edges of \CH{P}.
Since $\CH{P}=\CH{X}$, any plane matching of $P$ contains at least two edges of \CH{X}. Thus, if $\mathcal{M}$ denotes any set of plane matchings which can be packed into $\Kn{(P)}$, we have (note that $|X|$ is odd)
$$|\mathcal{M}|\le \frac{|X|-1}{2}=\frac{n-2\lfloor n/6\rfloor}{2}=\frac{n}{2}-\lfloor\frac{n}{6}\rfloor\le \frac{n}{2}-\frac{n-5}{6}\le \lceil\frac{n}{3}\rceil.$$
\end{proof}

\subsection{Points in General Position}
\label{mp:general-position-section}
In this section we consider the problem of packing plane matchings for point sets in general position (no three points on a line) in the plane. 
Let $P$ be a set of $n$ points in general position in the plane, with $n$ even. Let $M(P)$ denote the maximum number of plane matchings that can be packed into $\Kn{(P)}$. As mentioned earlier, a trivial lower bound of $2$ (when $n \ge 4$) is obtained from a minimum weight Hamiltonian cycle, which is plane and consists of two edge-disjoint perfect matchings. 

In this section we show that at least $\lfloor\log_2{n}\rfloor-1$ plane matchings can be packed into $\Kn{(P)}$. As a warm-up, we first show that if $n$ is a power of two, then $\log_2{n}$ plane matchings can be packed into $\Kn{(P)}$. Then we extend this result to get a lower bound of $\lfloor\log_2{n}\rfloor-1$ for every point set with an even number of points. We also show that if $n\ge 8$, then at least three plane matchings can be packed into $\Kn{(P)}$, which improves the result for $n=10$, $12$, and $14$. Note that, as a result of Theorem~\ref{mp:n-over-3-thr}, there exists a set of $n=6$ points such that no more than $\lceil\frac{n}{3}\rceil=2$ plane matchings can be packed into $\Kn{(P)}$. First consider the following observation.

\begin{observation}
\label{mp:partition-obs}
 Let $\mathcal{P}=\{P_1,\dots, P_k\}$ be a partition of the point set $P$, such that $|P_i|$ is even and $\CH{P_i}\cap\CH{P_j}=\emptyset$ for all $1\le i,j\le k$ where $i\neq j$. Let $i$ be an index such that, $M(P_i)=\min\{M(P_j):1\le j\le k\}$. Then, $M(P)\ge M(P_i)$.
\end{observation}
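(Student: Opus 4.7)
The plan is to show the stronger statement that the edge-disjoint plane matchings of each $P_j$ can be combined in a component-wise fashion to produce $M(P_i)$ edge-disjoint plane matchings of the whole set $P$. The hypothesis that every $|P_j|$ is even is precisely what guarantees that the union of one perfect matching from each part is itself a perfect matching of $P$, while the disjoint convex hulls hypothesis is what rules out crossings between edges coming from different parts.

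Concretely, I would proceed as follows. First, for each $j \in \{1,\ldots,k\}$, fix a collection
\[
\mathcal{M}_j = \{M_{j,1}, M_{j,2}, \ldots, M_{j,M(P_i)}\}
\]
of $M(P_i)$ pairwise edge-disjoint plane perfect matchings of $K_n(P_j)$; such a collection exists because $M(P_j) \ge M(P_i)$ by the choice of $i$. For each $t \in \{1,\ldots,M(P_i)\}$, define
\[
M_t \;=\; \bigcup_{j=1}^{k} M_{j,t}.
\]
Since $\{P_1,\ldots,P_k\}$ partitions $P$ and every $|P_j|$ is even, and since each $M_{j,t}$ is a perfect matching of $P_j$, the union $M_t$ is a perfect matching of $P$.

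Next, I would verify planarity of each $M_t$. Two edges of $M_t$ either lie in the same $M_{j,t}$, in which case they do not cross because $M_{j,t}$ is plane, or they lie in different components $M_{j,t}$ and $M_{j',t}$ with $j \neq j'$. In the latter case the two edges are contained in $\CH{P_j}$ and $\CH{P_{j'}}$ respectively, and these convex hulls are disjoint by hypothesis, so the edges cannot cross.

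Finally, I would show that the matchings $M_1,\ldots,M_{M(P_i)}$ are pairwise edge-disjoint. Any edge $e \in M_t$ has both endpoints inside a single part $P_j$ (its two endpoints lie in the same $M_{j,t}$), so if $e$ also belonged to $M_{t'}$ with $t' \neq t$, then $e$ would lie in $M_{j,t'}$, contradicting the fact that $\mathcal{M}_j$ consists of edge-disjoint matchings. Hence $\Kn{(P)}$ contains $M(P_i)$ edge-disjoint plane perfect matchings, which gives $M(P) \ge M(P_i)$. The only delicate point in the argument is the crossing-freeness across parts, and this is handled cleanly by the disjoint convex hulls assumption; there is no significant obstacle beyond being careful with the bookkeeping.
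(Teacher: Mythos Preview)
Your proof is correct and is precisely the argument the paper leaves implicit: the statement is recorded as an observation without proof, and your component-wise combination of matchings---using evenness for perfectness, disjoint convex hulls for planarity across parts, and edge-disjointness within each $\mathcal{M}_j$ for global edge-disjointness---is exactly the intended reasoning.
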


\begin{lemma}
\label{mp:n-power2}
For a set $P$ of $n$ points in general position in the plane, where $n$ is a power of 2, at least $\log_2{n}$ plane matchings can be packed into $\Kn{(P)}$.
\end{lemma}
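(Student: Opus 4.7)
My plan is to prove this by induction on $n$. The base case $n = 2$ is immediate, since the single edge between the two points is a plane matching and $\log_2 2 = 1$. For the inductive step I would take any line $\ell$ missing all points of $P$ that bisects $P$ into two subsets $P_1$ and $P_2$ of size $n/2$ each; such a line exists by a straightforward sweeping argument (pick a direction in which no two points share a coordinate, then slide an orthogonal line). Applying the induction hypothesis to each $P_i$ (whose size $n/2$ is again a power of $2$) produces $\log_2 n - 1$ edge-disjoint plane perfect matchings of $P_i$.

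Next, for each $j \in \{1, \ldots, \log_2 n - 1\}$ I would take the union of the $j$-th inductive matchings of $P_1$ and $P_2$ to obtain a perfect matching $M_j$ of $P$. Planarity of $M_j$ is automatic: the two sub-matchings live in disjoint closed half-planes of $\ell$, so no edge from $P_1$'s side can cross an edge from $P_2$'s side, and within each side non-crossing is guaranteed by induction. Pairwise edge-disjointness of $M_1, \ldots, M_{\log_2 n - 1}$ is inherited directly from the two inductive families.

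To produce the final, $(\log_2 n)$-th matching, I would color $P_1$ red and $P_2$ blue and apply the Tangent construction recalled in Section~\ref{mp:colored-matching-section}: since $R$ and $B$ are linearly separated by $\ell$ and $|R|=|B|$, that construction delivers a plane bichromatic matching $M_{\log_2 n}$. Every edge of $M_{\log_2 n}$ crosses $\ell$, whereas every edge of $M_1, \ldots, M_{\log_2 n - 1}$ lies entirely in a single closed half-plane of $\ell$, so $M_{\log_2 n}$ is automatically edge-disjoint from all previously constructed matchings.

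I do not foresee a real obstacle: the whole argument rests on choosing $\ell$ so that it both bisects $P$ and strictly separates $P_1$ from $P_2$, after which both planarity and edge-disjointness between the ``old'' matchings and the ``new'' bichromatic one drop out of the half-plane separation. Unrolling the recursion produces exactly the binary tree $T$ alluded to in the thesis summary, with one plane matching harvested at each of its $\log_2 n$ levels.
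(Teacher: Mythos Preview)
Your proposal is correct and follows essentially the same inductive approach as the paper: bisect $P$ by a line, combine the $\log_2(n/2)$ matchings from each half, and add one bichromatic matching between the halves. The only cosmetic difference is that the paper uses $\Cut{R}{B}$ rather than $\Tangent{R}{B}$ for the bichromatic step, but both are drawn from Section~\ref{mp:colored-matching-section} and the edge-disjointness argument (bichromatic edges cross $\ell$, inductive edges do not) is identical.
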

\begin{proof}
We prove this lemma by induction. The statement of the lemma holds for the base case, where $n=2$. Assume that $n\ge 4$. Recall that $M(P)$ denotes the maximum number of plane matchings that can be packed into $\Kn{(P)}$. W.l.o.g. assume that a vertical line $\ell$ partitions $P$ into sets $R$ and $B$, each of size $\frac{n}{2}$. By the induction hypothesis, $M(R),M(B)\ge\log_2{(\frac{n}{2})}$. By Observation~\ref{mp:partition-obs}, $M(P)\ge \min\{M(R),\allowbreak M(B)\}\allowbreak \ge\allowbreak \log_2{(\frac{n}{2})}$. That is, by pairing a matching $M_R$ in $R$ with a matching $M_B$ in $B$ we get a plane matching $M_P$ in $\Kn{(P)}$, such that each edge in $M_P$ has both endpoints in $R$ or in $B$. If we consider the points in $R$ as red and the points in $B$ as blue, \Cut{R}{B} (see Section~\ref{mp:colored-matching-section}) gives us a plane perfect matching $M'_P$ in $\Kn{(P)}$, such that each edge in $M'_P$ has one endpoint in $R$ and one endpoint in $B$. That is $M'_P\cap M_P=\emptyset$. Therefore, we obtain one more plane matching in $\Kn{(P)}$, which implies that $M(P)\ge \log_2{(\frac{n}{2})}+1=\log_2n$.
\end{proof}

Let $R$ and $B$ be two point sets which are separated by a line. A {\em crossing tangent} between $R$ and $B$ is a line $l$ touching \CH{R} and \CH{B} such that $R$ and $B$ lie on different sides of $l$. Note that $l$ contains a point $r\in R$, a point $b\in B$, and consequently the line segment $rb$; we say that $l$ is subtended from $rb$. It is obvious that there are two (intersecting) crossing tangents between $R$ and $B$; see Figure~\ref{mp:three-matching-fig}. 

\begin{figure}[htb]
  \centering
\setlength{\tabcolsep}{0in}
  $\begin{tabular}{cc}
\multicolumn{1}{m{.5\columnwidth}}{\centering\includegraphics[width=.38\columnwidth]{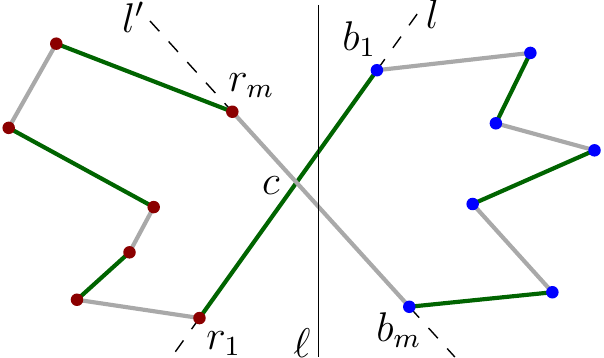}}
&\multicolumn{1}{m{.5\columnwidth}}{\centering\includegraphics[width=.35\columnwidth]{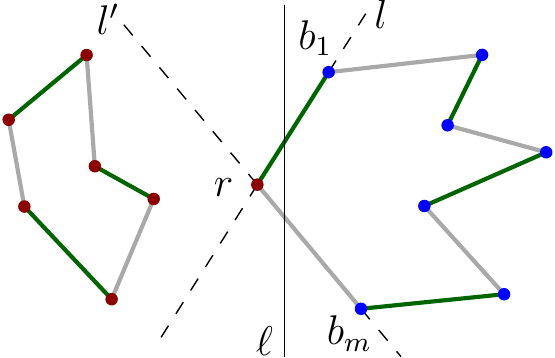}} \\
(a) & (b)
\end{tabular}$
  \caption{(a) The crossing tangents intersect at a point $c\notin P$: $R$ and $B$ are sorted clockwise around $c$, (b) The crossing tangents intersect at a point $r\in R$: $B$ is sorted clockwise around $r$. $M_1$ and $M_2$ are shown by green and gray line segments.}
\label{mp:three-matching-fig}
\end{figure}

\begin{lemma}
\label{mp:3-matching-theorem}
For a set $P$ of $n\ge 8$ points in general position in the plane with $n$ even, at least three plane matchings can be packed into $\Kn{(P)}$.
\end{lemma}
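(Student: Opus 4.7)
The plan is to partition $P$ by a line $\ell$ into two even halves and then combine two intra-half Hamiltonian cycles with one inter-half bichromatic matching to obtain three edge-disjoint plane perfect matchings $M_1, M_2, M_3$ of $P$.

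First I would pick a vertical line $\ell$ that partitions $P$ into $P_1$ and $P_2$ such that both $|P_1|, |P_2|$ are even and each is at least $4$. When $n \equiv 0 \pmod 4$ take $|P_1|=|P_2|=n/2 \ge 4$; when $n \equiv 2 \pmod 4$, so $n \ge 10$, take $|P_1|=n/2-1$ and $|P_2|=n/2+1$. In either case such an $\ell$ exists by a standard vertical sweep, using that no two points share an $x$-coordinate.

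Next, in each $P_i$ I would compute a minimum-weight Hamiltonian cycle; by the usual triangle-inequality swap argument it is plane. A plane Hamiltonian cycle on an even number of vertices decomposes canonically into two edge-disjoint plane perfect matchings $N_i^a, N_i^b$ of $P_i$. Setting $M_1 = N_1^a \cup N_2^a$ and $M_2 = N_1^b \cup N_2^b$ yields two edge-disjoint plane perfect matchings of $P$ in which every edge lies strictly on one side of $\ell$.

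Finally, to build $M_3$ I would use a bichromatic matching between $P_1$ and $P_2$. When $|P_1|=|P_2|$, the \Tangent{P_1}{P_2} construction from Section~\ref{mp:colored-matching-section} gives a plane bichromatic perfect matching of $P$; each of its edges crosses $\ell$ and is therefore automatically edge-disjoint from both $M_1$ and $M_2$. When $|P_1|<|P_2|$, I would apply \Tangent to $P_1$ against a suitably chosen size-$|P_1|$ subset of $P_2$, producing a plane bichromatic matching covering $P_1$ and all but two points of $P_2$, and then close $M_3$ with an intra-$P_2$ edge on the two leftover points chosen to be disjoint from the edges already used in $M_1$ and $M_2$ and to not cross any bichromatic edge.

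The main obstacle is the closing intra-$P_2$ edge in the case $n \equiv 2 \pmod 4$. Each of $M_1, M_2$ uses only $|P_2|/2$ edges inside $P_2$, so $M_1 \cup M_2$ occupies a small number of the $\binom{|P_2|}{2}$ possible edges, leaving many candidate pairs. The task is to show that one can always select which two points of $P_2$ to leave unmatched by the \Tangent step so that the segment joining them avoids the forbidden edges and does not cross the bichromatic part. This should follow from a short geometric argument using the flexibility of which size-$|P_1|$ subset of $P_2$ is fed into \Tangent, together with the planarity structure of the Hamiltonian cycle in $P_2$.
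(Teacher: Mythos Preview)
Your treatment of the case $n\equiv 0\pmod 4$ is essentially the paper's: split into two equal halves, take $M_1,M_2$ from plane Hamiltonian cycles inside each half, and take a plane bichromatic matching across $\ell$ for $M_3$. (The paper uses $\Cut{R}{B}$ rather than $\Tangent$, but either works since every edge of $M_3$ crosses $\ell$ and is therefore disjoint from $M_1\cup M_2$.)

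For $n\equiv 2\pmod 4$ your plan diverges from the paper and has a genuine gap, which you yourself flag. Splitting into even parts of sizes $n/2-1$ and $n/2+1$ forces $M_3$ to contain one intra-$P_2$ edge $ab$ on the two leftover points. You then need simultaneously (i) $ab\notin M_1\cup M_2$, i.e.\ $ab$ is not an edge of the plane Hamiltonian cycle $H_2$ on $P_2$, and (ii) $ab$ crosses no bichromatic edge. Requirement (ii) can be arranged: if you run $\Tangent$ until $P_1$ is exhausted, the two leftover points of $P_2$ lie strictly below every tangent line used, so the segment joining them avoids all bichromatic edges. But (i) is a real obstruction. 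The leftover pair is \emph{determined} by the tangent sweep, and you cannot in general choose $H_2$ to avoid a prescribed edge: when $P_2$ is in convex position its \emph{only} plane Hamiltonian cycle is $\CH{P_2}$, so if $ab$ happens to be a hull edge of $P_2$ (as it may well be, being the last pair in a tangent sweep), every admissible $H_2$ contains it. The ``flexibility of which subset of $P_2$ is fed into $\Tangent$'' does not obviously help either, since any choice that keeps $M_3$ plane tends to force $a,b$ onto $\CH{P_2}$.

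The paper handles $n=4k+2$ by a different idea that avoids the closing-edge issue altogether. It keeps the equal split $|R|=|B|=n/2$ (both odd) and uses the two \emph{crossing} tangents between $\CH{R}$ and $\CH{B}$ to build a \emph{single} plane Hamiltonian cycle on all of $P$: the two tangent edges $r_1b_1$, $r_mb_m$ together with the radial orderings of $R$ and of $B$ around the tangents' intersection point (or around the common apex, in the degenerate case). The matchings $M_1,M_2$ come from this global cycle, and each contains exactly one bichromatic edge (one of the crossing tangents). For $M_3$ the paper takes $\Tangent{R}{B}$, which is purely bichromatic, and observes that neither crossing tangent can be selected by $\Tangent$ because the supporting line of a crossing tangent has unequal numbers of red and blue points below it, whereas every edge chosen by $\Tangent$ has equal counts below. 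Hence $M_3$ is edge-disjoint from $M_1\cup M_2$.
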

\begin{proof}
We describe how to extract three edge-disjoint plane matchings, $M_1,M_2,M_3$, from $\Kn{(P)}$. Let $\ell$ be a vertical line which splits $P$ into sets $R$ and $B$, each of size $\frac{n}{2}$. Consider the points in $R$ as red and the points in $B$ as blue. We differentiate between two cases: (a) $n=4k$ and (b) $n= 4k+2$, for some integer $k>1$.

In case (a), both $R$ and $B$ have an even number of points. Let \M{1}{R} and \M{2}{R} (resp. \M{1}{B} and \M{2}{B}) be two edge-disjoint plane matchings in $R$ (resp. $B$) obtained by a minimum length Hamiltonian cycle in $R$ (resp. $B$). Let $M_1=\M{1}{R}\cup \M{1}{B}$ and $M_2=\M{2}{R}\cup \M{2}{B}$. Clearly $M_1$ and $M_2$ are edge-disjoint plane matchings for $P$. Let $M_3=\Cut{R}{B}$. It is obvious that $M_3$ is edge-disjoint from $M_1$ and $M_2$, which completes the proof in the first case.

In case (b), both $R$ and $B$ have an odd number of points and we cannot get a perfect matching in each of them. Let $l$ and $l'$ be the two crossing tangents between $R$ and $B$, subtended from $rb$ and $r'b'$, respectively. We differentiate between two cases: (i) $l$ and $l'$ intersect in the interior of $rb$ and $r'b'$, (ii) $l$ and $l'$ intersect at an endpoint of both $rb$ and $r'b'$; see Figure~\ref{mp:three-matching-fig}.
\begin{itemize}
\item In case (i), let $c$ be the intersection point; see Figure~\ref{mp:three-matching-fig}(a). Let $r_1, r_2, \dots, r_m$ and $b_1, b_2, \allowbreak\dots, \allowbreak b_m$ be the points of $R$ and $B$, respectively, sorted clockwise around $c$, where $m=\frac{n}{2}$, $r_1=r,r_m=r', b_1=b, b_m=b'$. Consider the Hamiltonian cycle $H=\{r_ir_{i+1}:1\le i< m\}\cup \{b_ib_{i+1}:1\le i< m\}\cup \{r_1b_1,r_mb_m\}$. Let $M_1$ and $M_2$ be the two edge-disjoint matchings obtained from $H$. Note that $r_1b_1$ and $r_mb_m$ cannot be in the same matching, thus, $M_1$ and $M_2$ are plane. Let $M_3=\Tangent{R}{B}$. As described in Section~\ref{mp:colored-matching-section}, $M_3$ is a plane matching for $P$. In order to prove that $M_3\cap (M_1\cup M_2)=\emptyset$, we show that $rb$ and $r'b'$\textemdash which are the only edges in $M_1\cup M_2$ that connect a point in $R$ to a point in $B$\textemdash do not belong to $M_3$. Note that \Tangent{R}{B} iteratively selects an edge which has the same number of red and blue points below its supporting line, whereas the supporting lines of $rb$ and $r'b'$ have different numbers of red and blue points below them. Thus $rb$ and $r'b'$ are not considered by \Tangent{R}{B}. Therefore $M_3$ is edge-disjoint from $M_1$ and $M_2$. 

\item In case (ii), w.l.o.g. assume that $l$ and $l'$ intersect at the red endpoint of $rb$ and $r'b'$, i.e., $r=r'$; See Figure~\ref{mp:three-matching-fig}(b). Let $R'=R\setminus \{r\}$ and $B'=B\cup \{r\}$. Note that both $R'$ and $B'$ have an even number of points and $|R'|,|B'|\ge 4$. Let \M{1}{R'} and \M{2}{R'} be two edge-disjoint plane matchings in $R'$ obtained by a minimum length Hamiltonian cycle in $R'$. Let $b_1, b_2, \dots, b_m$ be the points of $B$ sorted clockwise around $r$, where $m=\frac{n}{2}$, $b_1=b, b_m=b'$. Consider the Hamiltonian cycle $\HC{R'}=\{b_ib_{i+1}:1\le i< m\}\cup \{rb_1,rb_m\}$. Let $M_1(B')$ and $M_2(B')$ be the two edge-disjoint plane matchings in $B'$ obtained from $\HC{B'}$. Let $M_1=\M{1}{R'}\cup \M{1}{B'}$ and $M_2=\M{2}{R'}\cup \M{2}{B'}$. Clearly $M_1$ and $M_2$ are edge-disjoint plane matchings in $P$. Let $M_3=\Tangent{R}{B}$. As described in case (i), $M_3$ is a plane matching in $P$ and $M_3\cap (M_1\cup M_2)=\emptyset$. Therefore, $M_3$ is edge-disjoint from $M_1$ and $M_2$.
\end{itemize}
\end{proof}

As a direct consequence of Lemma~\ref{mp:n-power2} and Lemma~\ref{mp:3-matching-theorem} we have the following theorem.
\begin{theorem}
For a set $P$ of $n=2^i\cdot m$ points in general position in the plane with $n$ even, $m\ge 4$, $i\ge 0$ , at least $i+2$ plane matchings can be packed into $\Kn{(P)}$.
\end{theorem}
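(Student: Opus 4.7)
The plan is to prove this by induction on $i$, mimicking the structure of the proof of Lemma~\ref{mp:n-power2} but with an enhanced base case that uses Lemma~\ref{mp:3-matching-theorem} (or rather just the trivial Hamiltonian-cycle lower bound of $2$) instead of the trivial base case $n=2$.

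For the base case $i=0$, we have $n=m\ge 4$, so $P$ admits a minimum-weight Hamiltonian cycle, which is plane and splits into two edge-disjoint plane matchings. Hence $M(P)\ge 2 = i+2$ as required. (If $m\ge 8$ one could even invoke Lemma~\ref{mp:3-matching-theorem} for a stronger bound, but $2$ is all we need.)

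For the inductive step, assume the statement for $i-1$. Given $|P|=2^i\cdot m$, take any vertical line $\ell$ that does not pass through any point of $P$ and splits $P$ into two halves $R$ and $B$ of size $2^{i-1}\cdot m$ each; such a line exists because $P$ is in general position and the points can be split by an $x$-median. By the inductive hypothesis, each of $\Kn(R)$ and $\Kn(B)$ admits $i+1$ edge-disjoint plane matchings. Pair up the matchings one-by-one and take unions; by Observation~\ref{mp:partition-obs} (applied to the partition $\{R,B\}$, whose convex hulls are separated by $\ell$), this yields $i+1$ edge-disjoint plane matchings of $\Kn(P)$, each using only edges that stay within $R$ or within $B$. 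Now treat the points of $R$ as red and those of $B$ as blue and invoke \Cut{R}{B} to obtain a plane $RB$-matching $M_{RB}$ of $\Kn(P)$. Every edge of $M_{RB}$ crosses $\ell$, so $M_{RB}$ is edge-disjoint from all previously constructed matchings. This produces $(i+1)+1 = i+2$ edge-disjoint plane matchings, completing the induction.

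There is no real obstacle here: the separation of $R$ from $B$ by a straight line guarantees that the recursively obtained intra-side matchings are automatically edge-disjoint from any cross-side matching, and the existence of $M_{RB}$ follows from the ham-sandwich construction described in Section~\ref{mp:colored-matching-section}. The only thing to be careful about is that the base case genuinely furnishes two plane matchings, which is exactly why the hypothesis $m\ge 4$ is needed.
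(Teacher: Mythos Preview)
Your proof is correct and follows the same recursive-halving strategy as the paper: split $P$ by a vertical line, recurse on each side, and add one $RB$-matching across the cut at each level. The only organizational difference is where the recursion bottoms out: the paper stops at subsets of size $2m\ge 8$ and invokes Lemma~\ref{mp:3-matching-theorem} to get three matchings there (yielding $3+(i-1)=i+2$), whereas you recurse one level further to subsets of size $m\ge 4$ and use only the Hamiltonian-cycle bound of two (yielding $2+i=i+2$). Your version is marginally more elementary since it does not rely on Lemma~\ref{mp:3-matching-theorem}.
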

\begin{proof}
If $i=0$, then a minimum weight Hamiltonian cycle in $\Kn{(P)}$ consists of two plane matchings. Assume $i\ge 1$. Partition $P$ by vertical lines, into $2^{i-1}$ point sets, each of size $2m$. By Lemma~\ref{mp:3-matching-theorem}, at least three plane matchings can be packed into each set. Considering these sets as the base cases in Lemma~\ref{mp:n-power2}, we obtain $i-1$ plane matchings between these sets. Thus, in total, $i+2$ plane matchings can be packed into $\Kn{(P)}$.
\end{proof}

\begin{theorem}
For a set $P$ of $n$ points in general position in the plane, with $n$ even, at least $\lfloor\log_2{n}\rfloor-1$ plane matchings can be packed into $\Kn{(P)}$.
\end{theorem}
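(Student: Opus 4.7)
The plan is to prove the theorem by induction on $n$, generalizing the argument used in Lemma~\ref{mp:n-power2} for powers of two. Let $k=\lfloor\log_2 n\rfloor$, so we aim to pack $k-1$ plane matchings into $\Kn{(P)}$. For the base case, any even $n$ with $\lfloor\log_2 n\rfloor-1\le 1$ is handled by the existence of a single plane perfect matching (e.g.\ the minimum-weight one) or, for $n\ge 4$, by the plane Hamiltonian cycle argument giving two edge-disjoint plane matchings. For the inductive step, the key step mirrors the power-of-two proof: we choose a vertical line $\ell$ that splits $P$ into two sets $R$ (red, on the left) and $B$ (blue, on the right) of even size, apply the induction hypothesis to $R$ and to $B$ to obtain $k-2$ edge-disjoint plane matchings in each, pair them arbitrarily to form $k-2$ plane matchings of $\Kn{(P)}$, and then add a plane bichromatic matching via \Cut{R}{B} as the $(k-1)$-th matching. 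Since Cut edges always connect a red to a blue point while the previous matchings use only monochromatic edges, the new matching is edge-disjoint from all previous ones and plane by the ham-sandwich-based construction of Section~\ref{mp:colored-matching-section}.

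The main obstacle, which is the only new ingredient beyond Lemma~\ref{mp:n-power2}, is producing a vertical split whose two parts are both even in size and both large enough to support the inductive hypothesis. Concretely, we need $|R|,|B|\ge 2^{k-1}$, because then $\lfloor\log_2|R|\rfloor-1\ge k-2$ and likewise for $B$. The plan is a case analysis on $n\bmod 4$:
\begin{itemize}
\item If $n\equiv 0\pmod 4$, take $\ell$ so that $|R|=|B|=n/2$; both sides are even and $n/2\ge 2^{k-1}$ trivially.
\item If $n\equiv 2\pmod 4$, the straight half-half split gives odd-sized sets, so instead take $\ell$ so that $|R|=n/2-1$ and $|B|=n/2+1$, both even.
\end{itemize}
In the second case one must verify $n/2-1\ge 2^{k-1}$. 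The point is that in this case $n$ cannot equal $2^k$ (since $2^k$ with $k\ge 2$ is divisible by $4$, contradicting $n\equiv 2\pmod 4$), so $n\ge 2^k+2$ and hence $n/2-1\ge 2^{k-1}$. The parity argument showing that $2j\ne 2^{k-1}-1$ (because the left side is even and the right side is odd when $k\ge 2$) is what prevents an off-by-one loss.

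Once both parts are known to be even and of size at least $2^{k-1}$, the rest is immediate: induction yields the required plane matchings inside $R$ and inside $B$, pairing preserves planarity because $R$ and $B$ are separated by $\ell$, and \Cut{R}{B} produces the final matching whose edges all cross $\ell$ and hence share no edge with the previous monochromatic matchings. This gives $k-1=\lfloor\log_2 n\rfloor-1$ edge-disjoint plane matchings, completing the induction. I do not expect technical difficulties beyond verifying the number-theoretic inequality above; the geometric content is entirely inherited from Lemma~\ref{mp:n-power2} and the standard properties of \Cut{R}{B}.
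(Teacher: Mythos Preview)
There is a genuine gap in the case $n\equiv 2\pmod 4$. You split into $|R|=n/2-1$ and $|B|=n/2+1$ and then invoke $\Cut{R}{B}$, but $\Cut{R}{B}$ produces a bichromatic perfect matching only when $|R|=|B|$; with unequal color classes there is simply no perfect matching in which every edge crosses~$\ell$. So the $(k-1)$-th matching in your scheme does not exist as described.

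The natural patch is to match the two rightmost points $a,b\in B$ to each other and then apply $\Cut{R}{B\setminus\{a,b\}}$, which now have equal size. But this introduces one monochromatic edge $ab$ inside $B$, and that edge can collide with an edge produced by the recursion on $B$. Concretely, take $n=10$: you split into $R=\{p_1,\dots,p_4\}$, $B=\{p_5,\dots,p_{10}\}$, and the top-level extra edge is $p_9p_{10}$. Recursing on $B$ (size $6\equiv 2\pmod 4$) you again split unevenly and again use the rightmost pair of $B$ as the extra edge --- which is the \emph{same} edge $p_9p_{10}$. The two matchings are no longer edge-disjoint. This is exactly the obstacle the paper's proof is built to avoid: it records, for each subproblem, whether it arose as a left or a right child, and takes the extra pair from the \emph{opposite} end accordingly (rightmost for a left child, leftmost for a right child). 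Proving that this alternation eliminates all conflicts is the content of Claim~2 there, and it is the missing ingredient in your argument.
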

\begin{proof}
If $n$ is a power of two, then by Lemma~\ref{mp:n-power2} at least $\log_2 n\ge \lfloor \log_2 n\rfloor -1$ matchings can be packed into 
$\Kn{(P)}$. Assume $n$ is not a power of two. We describe how to pack a set $\mathcal{M}$ of $\lfloor\log_2{n}\rfloor-1$ plane perfect matchings into $\Kn{(P)}$. The construction consists of the following three main steps which we will describe in detail.

\begin{enumerate}
  \item Building a binary tree $T$.
  \item Assigning the points of $P$ to the leaves of $T$.
  \item Extracting $\mathcal{M}$ from $P$ using internal nodes of $T$.
\end{enumerate}

\begin{paragraph}{\em \small 1. Building the tree T.}
In this step we build a binary tree $T$ such that each node of $T$ stores an even number, and each internal node of $T$ has a left and a right child. For an internal node $u$, let \LC{u} and \RC{u} denote the left child and the right child of $u$, respectively. Given an even number $n$, we build $T$ in the following way:
\begin{itemize}
  \item The root of $T$ stores $n$.
  \item If a node of $T$ stores $2$, then that node is a leaf.
  \item For a node $u$ storing $m$, with $m$ even and $m\ge 4$, we store the following even numbers into \LC{u} and \RC{u}:
    \begin{itemize}
	\item If $m$ is divisible by $4$, we store $\frac{m}{2}$ in both \LC{u} and \RC{u}; see Figure~\ref{mp:tree-construction-fig}(a).
	\item If $m$ is not divisible by $4$ and $u$ is the root or the left child of its parent then we store $2\lfloor\frac{m}{4}\rfloor$ in \LC{u} and $m-2\lfloor\frac{m}{4}\rfloor$ in \RC{u}; see Figure~\ref{mp:tree-construction-fig}(b).
	\item If $m$ is not divisible by $4$ and $u$ is the right child of its parent then we store $m-2\lfloor\frac{m}{4}\rfloor$ in \LC{u} and $2\lfloor\frac{m}{4}\rfloor$ in \RC{u}; see Figure~\ref{mp:tree-construction-fig}(c).
    \end{itemize}
Note that in the last two cases\textemdash where $m$ is not divisible by four\textemdash the absolute difference between the values stored in \LC{u} and \RC{u} is exactly 2. See Figure~\ref{mp:matching-example-fig}.
\end{itemize}
\end{paragraph}

\begin{figure}[htb]
  \centering
\setlength{\tabcolsep}{0in}
  $\begin{tabular}{ccc}
\multicolumn{1}{m{.33\columnwidth}}{\centering\includegraphics[width=.2\columnwidth]{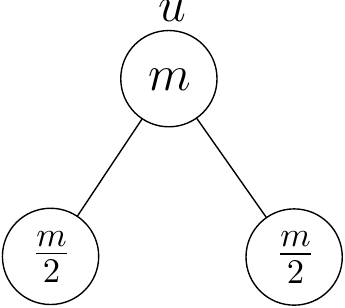}}&
\multicolumn{1}{m{.33\columnwidth}}{\centering\includegraphics[width=.24\columnwidth]{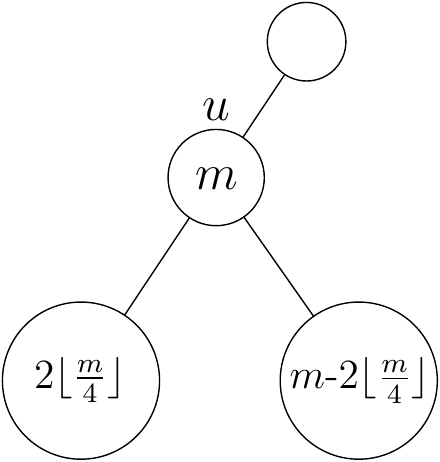}}
&\multicolumn{1}{m{.33\columnwidth}}{\centering\includegraphics[width=.24\columnwidth]{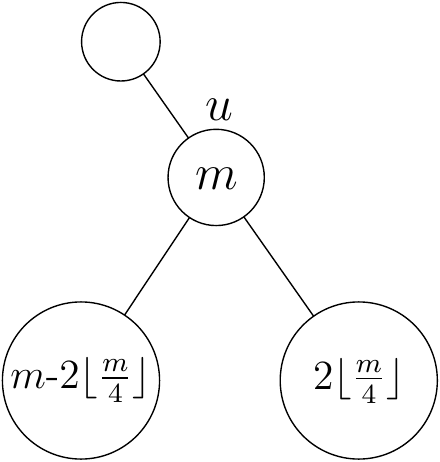}} \\
(a) & (b)&(c)
\end{tabular}$
  \caption{(a) $m$ is divisible by four, (b) $m$ is not divisible by four and $u$ is a left child, and (c) $m$ is not divisible by four and $u$ is a right child.}
\label{mp:tree-construction-fig}
\end{figure}

\begin{paragraph}{\em \small 2. Assigning the points to the leaves of the tree.}
In this step we describe how to assign the points of $P$, in pairs, to the leaves of $T$. We may assume without loss of generality that no two points of $P$ have the same $x$-coordinate. Sort the points of $P$ in a increasing order of their $x$-coordinate.
Assign the first two points to the leftmost leaf, the next two points to the second leftmost leaf, and so on. Note that $T$ has $\frac{n}{2}$ leaves, and hence all the points of $P$ are assigned to the leaves of $T$. See Figure~\ref{mp:matching-example-fig}.  
\end{paragraph}

\begin{paragraph}{\em \small 3. Extracting the matchings.}
Let $L$ be the number of edges in a shortest path from the root to any leaf in $T$; in Figure~\ref{mp:matching-example-fig}, $L=3$. For an internal node $u\in T$, let $T_u$ be the subtree rooted at $u$. Let $L_u$ and $R_u$ be the set of points assigned to the left and right subtrees of $T_u$, respectively, and let $P_u=L_u\cup R_u$. Consider the points in $L_u$ as red and the points in $R_u$ as blue. Since the points in $L_u$ have smaller $x$-coordinates than the points in $R_u$, we say that $L_u$ and $R_u$ are separated by a vertical line $\ell(u)$. For each internal node $u$ where $u$ is in level $0\le i< L$ in $T$\textemdash assuming the root is in level $0$\textemdash we construct a plane perfect matching $M_u$ in $P_u$ in the following way. Let $m$ be the even number stored at $u$.
\begin{itemize}
  \item If $m$ is divisible by $4$ (Figure~\ref{mp:tree-construction-fig}(a)), then let $M_u=\Min{L_u}{R_u}$; see Section~\ref{mp:colored-matching-section}. Since $|L_u|=|R_u|$, $M_u$ is a plane perfect matching for $P_u$. See vertices $u_2, u_3$ in Figure~\ref{mp:matching-example-fig}.

  \item If $m$ is not divisible by $4$ and $u$ is the root or a left child (Figure~\ref{mp:tree-construction-fig}(b)), then $|R_u|-|L_u|=2$. Let $a,b$ be the two points assigned to the rightmost leaf in $T_u$, and let $M_u=\{ab\}\cup\\\allowbreak \Min{L_u}{R_u-\{a,b\}}$. Since $|L_u|=|R_u-\{a,b\}|$, $M_u$ is a perfect matching in $P_u$. In addition, $a$ and $b$ are the two rightmost points in $P_u$, thus, $ab$ does not intersect any edge in $\\\Min{L_u}{R_u-\{a,b\}}$, and hence $M_u$ is plane. See vertices $u_0, u_1,\allowbreak u_5$ in Figure~\ref{mp:matching-example-fig}.

  \item If $m$ is not divisible by $4$ and $u$ is a right child (Figure~\ref{mp:tree-construction-fig}(c)), then $|L_u|-|R_u|=2$. Let $a,b$ be the two points assigned to the leftmost leaf in $T_u$ and let $M_u=\{ab\}\cup\Min{L_u-\{a,b\}}{R_u}$. Since $|L_u-\{a,b\}|=|R_u|$, $M_u$ is a perfect matching in $P_u$. In addition, $a$ and $b$ are the two leftmost points in $P_u$, thus, $ab$ does not intersect any edge in $\Min{L_u-\{a,b\}}{R_u}$, and hence $M_u$ is plane. See vertices $u_4, u_6$ in Figure~\ref{mp:matching-example-fig}.
\end{itemize}

For each $i$, where $0\le i< L$, let $S(i)$ be the set of vertices of $T$ in level $i$; see Figure~\ref{mp:matching-example-fig}. For each level $i$ let $M_i=\bigcup_{u\in S(i)}{M_u}$. Let $\mathcal{M}=\{M_i:0\le i< L\}$. 
In the rest of the proof, we show that $\mathcal{M}$ contains $\lfloor\log_2{n}\rfloor-1$ edge-disjoint plane matchings in $P$.
\end{paragraph}

\begin{figure}[htb]
  \centering
  \includegraphics[width=.75\columnwidth]{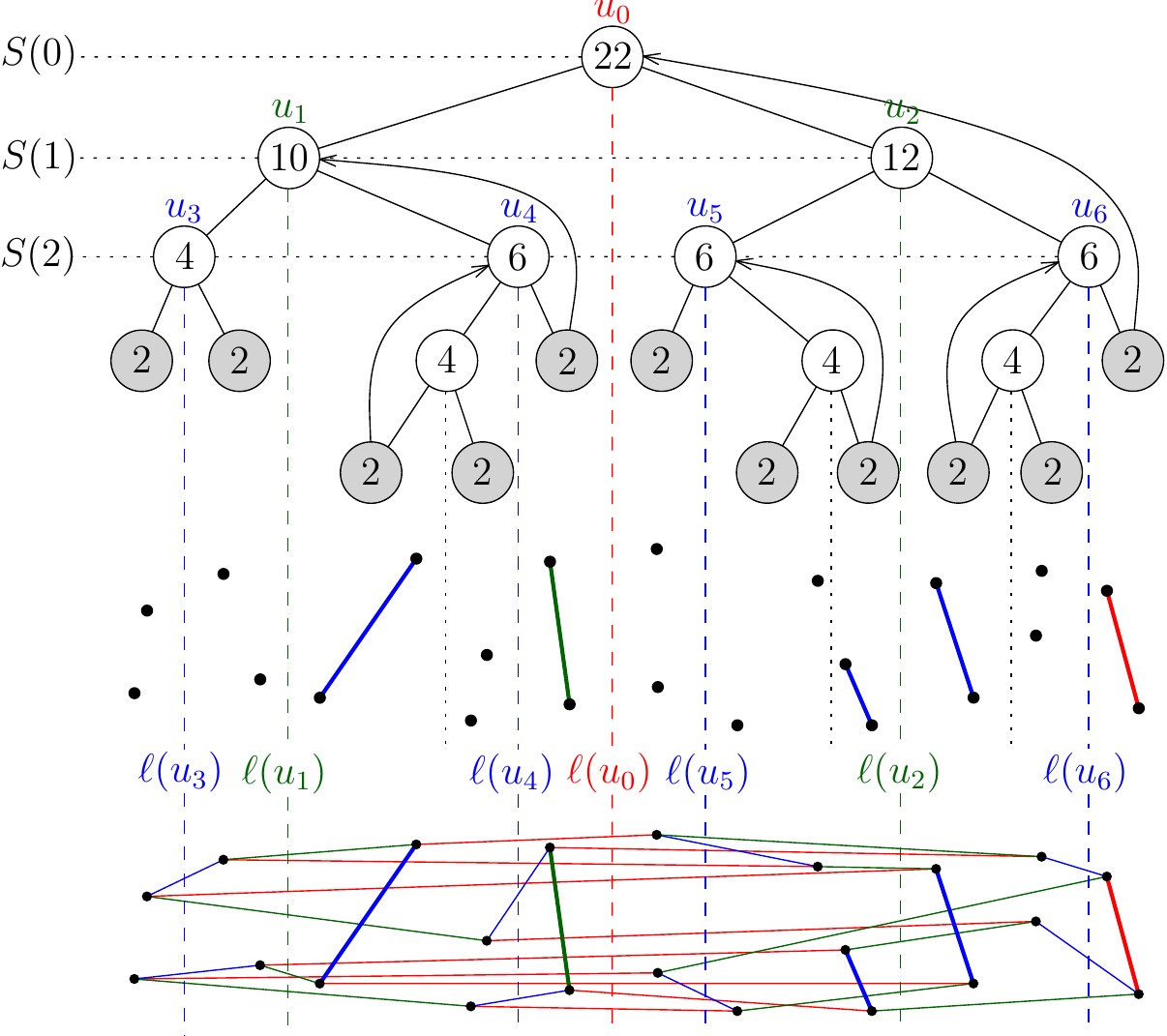}
 \caption{The points in $P$ are assigned, in pairs, to the leaves of $T$, from left to right. The point set $P$ with the edge-disjoint plane matchings is shown as well. $M_0$ contains the bold red edge and the red edges crossing $\ell(u_0)$. $M_1$ contains the bold green edge and the green edges crossing $\ell(u_1), \ell(u_2)$. $M_2$ contains the bold blue edges and the blue edges crossing $\ell(u_3), \allowbreak \ell(u_4), \allowbreak\ell(u_5), \allowbreak\ell(u_6)$.}
  \label{mp:matching-example-fig}
\end{figure}

\begin{paragraph}{Claim 1.}{\em For each $i$, where $0\le i< L$, $M_i$ is a plane perfect matching in $P$.}\end{paragraph} 

Note that if $u$ is the root of the tree, then $P_u=P$. In addition, for each internal node $u$ (including the root), $\{L_u,R_u\}$ is a partition of the point set $P_u$. This implies that in each level $i$ of the tree, where $0\le i <L$, we have $P=\bigcup_{u\in S(i)}{P_u}$. Moreover, the points in $P$ are assigned to the leaves of $T$ in non-decreasing order of their $x$-coordinate. Thus, $\mathcal{P}_i=\{P_u: u\in S(i)\}$ is a partition of the point set $P$; the sets $P_u$ with $u\in S(i)$ are separated by vertical lines; see Figure~\ref{mp:matching-example-fig}. Therefore, $M_i$ is a perfect plane matching in $P$; which proves the claim.

\begin{paragraph}{Claim 2.}{\em For all $M_i, M_j\in \mathcal{M}$, where $0\le i,j<L$ and $i\neq j$, $M_i\cap M_j=\emptyset$.}\end{paragraph}

In order to prove that $M_i$ and $M_j$ are edge-disjoint, we show that for each pair of distinct internal nodes $u$ and $v$, $M_u\cap M_v=\emptyset$. If $u$ and $v$ are in the same level, then $P_u$ and $P_v$ are separated by $\ell(u)$, thus, $M_u$ and $M_v$ do not share any edge. Thus, assume that $u\in S(i)$ and $v\in S(j)$ such that $0\le i, j< L$, $i\neq j$, and w.l.o.g. assume that $i< j$. If $v\notin T_u$, then $P_u$ and $P_v$ are separated by line $\ell(w)$, where $w$ is the lowest common ancestor of $u$ and $v$; this implies that $M_u$ and $M_v$ do not share any edge. Therefore, assume that $v\in T_u$, and w.l.o.g. assume that $v$ is in the left subtree of $T_u$. Thus, $P_v$\textemdash and consequently $M_v$\textemdash is to the left of $\ell(u)$. The case where $v$ is in the right subtree of $T_u$ is symmetric. Let $m\ge 4$ be the number stored at $u$. We differentiate between three cases:

\begin{itemize}
  \item If $m$ is divisible by $4$, then all the edges in $M_u$ cross $\ell(u)$, while the edges in $M_v$ are to the left of $\ell(u)$. This implies that $M_u$ and $M_v$ are disjoint. 

  \item If $m$ is not divisible by $4$ and $u$ is the root or a left child, then all the edges of $M_u$ cross $\ell(u)$, except the rightmost edge $ab$ which is to the right of $\ell(u)$. Since $M_v$ is to the left of $\ell(u)$, it follows that $M_u$ and $M_v$ are disjoint. 

  \item If $m$ is not divisible by $4$ and $u$ is a right child, then all the edges of $M_u$ cross $\ell(u)$, except the leftmost edge $ab$. If $a,b\notin P_v$, then $ab\notin M_v$, and hence $M_u$ and $M_v$ are disjoint. If $a,b\in P_v$ then $v$ is the left child of its parent and all the edges in $M_v$ cross $\ell(v)$ (possibly except one edge which is to the right of $\ell(v)$), while $ab$ is to the left of $\ell(v)$. Therefore $M_u$ and $M_v$ do not share any edge. This completes the proof of the claim.
\end{itemize}

\begin{paragraph}{Claim 3.}{\em For every two nodes $u$ and $v$ in the same level of $T$ which store $m$ and $m'$, respectively, $|m-m'| \le 2$.}\end{paragraph}

We prove the claim inductively for each level $i$ of $T$. For the base case, where $i=1$: (a) if $n$ is divisible by four, then both $u$ and $v$ store $\frac{n}{2}$ and the claim holds, (b) if $n$ is not divisible by four then $u$ stores $2\lfloor\frac{n}{4}\rfloor$ and $v$ stores $n-2\lfloor\frac{n}{4}\rfloor$; as $0\le n-2\lfloor\frac{n}{4}\rfloor-2\lfloor\frac{n}{4}\rfloor\le 2$, the claim holds for $i=1$. 
Now we show that if the claim is true for the $i$th level of $T$, then the claim is true for the $(i+1)$th level of $T$.
Let $u$ and $v$, storing $m$ and $m'$, respectively, be in the $i$th level of $T$. By the induction hypothesis, the claim holds for the $i$th level, i.e., $|m-m'|\le 2$. We prove that the claim holds for the $(i+1)$th level of $T$, i.e., for the children of $u$ and $v$. Since $m$ and $m'$ are even numbers, $|m-m'|\in\{0,2\}$. If $|m-m'|=0$, then $m=m'$, and by a similar argument as in the base case, the claim holds for the children of $u$ and $v$. If $|m-m'|=2$, then w.l.o.g. assume that $m'=m+2$. Let $\alpha$ be the smallest number and $\beta$ be the largest number stored at the children of $u$ and $v$ (which are at the $(i+1)$th level). We show that $\beta-\alpha \le 2$. It is obvious that $\alpha= 2\lfloor\frac{m}{4}\rfloor$ and $\beta= m'-2\lfloor\frac{m'}{4}\rfloor$. Thus,
\begin{align}
\label{mp:alpha-beta}
\beta-\alpha & = m'-2\left\lfloor\frac{m'}{4}\right\rfloor-2\left\lfloor\frac{m}{4}\right\rfloor= m+2-2\left\lfloor\frac{m+2}{4}\right\rfloor-2\left\lfloor\frac{m}{4}\right\rfloor 
\end{align}
Now, we differentiate between two cases, where $m=4k$ or $m=4k+2$. If $m=4k$, then by Equation~\ref{mp:alpha-beta},
\begin{align}
\beta-\alpha & = 4k+2-2\left\lfloor\frac{4k+2}{4}\right\rfloor-2\left\lfloor\frac{4k}{4}\right\rfloor= 4k+2-2k-2k= 2.\nonumber
\end{align}
If $m=4k+2$, then by Equation~\ref{mp:alpha-beta},
\begin{align}
\beta-\alpha & = 4k+4-2\left\lfloor\frac{4k+4}{4}\right\rfloor-2\left\lfloor\frac{4k+2}{4}\right\rfloor = 4k+4-2(k+1)-2k= 2\nonumber
\end{align}
which completes the proof of the claim.

\begin{paragraph}{Claim 4.}{$L\ge\lfloor\log_2{n}\rfloor-1$.}\end{paragraph}

It follows from Claim 3 that all the leaves of $T$ are in the last two levels. Since $T$ has $\frac{n}{2}$ leaves, $T$ has $n-1$ nodes. Recall that $L$ is the number of edges in a shortest path from the root to any leaf in $T$. Thus, $L\ge h-1$, where $h$ is the height of $T$. To give a lower bound on $h$, one may assume that the last level of $T$ is also full, thus,

$$n-1\le 2^0+2^1+2^2+\dots+2^h\le 2^{h+1}-1$$
and hence, $h\ge \log_2{n} -1$. Therefore, $L\ge h-1\ge \log_2{n} -2$.  Since $L$ is an integer and $n$ is not a power of two, $L\ge\lceil\log_2{n}\rceil-2=\lfloor\log_2{n}\rfloor-1$; which proves the claim.

\vspace{20pt}

Claim 1 and Claim 2 imply that $\mathcal{M}$ contains $L$ edge-disjoint plane perfect matchings. Claim 4 implies that $L\ge\lfloor\log_2{n}\rfloor-1$, which proves the statement of the theorem.
\end{proof}

\subsection{Non-crossing Plane Matchings}
\label{mp:non-crossing-matching-section}
In this section we consider the problem of packing plane matchings into $\Kn{(P)}$ such that any two different matchings in the packing are non-crossing.
Two edge-disjoint plane matchings $M_1$ and $M_2$ are {\em non-crossing} (or {\em compatible}), if no edge in $M_1$ crosses any edge in $M_2$. For a set $P$ of $n$ points in general position in the plane, with $n$ even, we give tight lower and upper bounds on the number of pairwise non-crossing plane perfect matchings that can be packed into $\Kn{(P)}$. The union of $k$ disjoint perfect matchings in $\Kn{(P)}$ has $\frac{kn}{2}$ edges. By Euler's formula for planar graphs we know that the number of edges of a planar graph is at most $3n-6$. This implies the following lemma.

\begin{lemma}
\label{mp:5-non-crossing}
For a set $P$ of $n$ points in general position in the plane, with $n$ even, at most five pairwise non-crossing plane matchings can be packed into $\Kn{(P)}$.
\end{lemma}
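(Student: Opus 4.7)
The plan is to leverage the hint already given in the paragraph preceding the lemma, namely the edge bound coming from Euler's formula. First I would observe that if $\mathcal{M} = \{M_1, \dots, M_k\}$ is a collection of pairwise non-crossing plane perfect matchings packed into $\Kn{(P)}$, then the union $G = M_1 \cup \dots \cup M_k$ is itself a plane graph: each $M_i$ is plane by assumption, and pairwise non-crossingness of the matchings ensures no two edges from different $M_i$'s cross either.

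Next I would count edges in two ways. Since the $M_i$'s are pairwise edge-disjoint (they are part of a packing) and each $M_i$ is a perfect matching with exactly $\frac{n}{2}$ edges, $G$ has $|E(G)| = \frac{kn}{2}$ edges. On the other hand, $G$ is a simple plane graph on $n$ vertices, so Euler's formula gives $|E(G)| \le 3n - 6$. Combining these bounds yields
\[
\frac{kn}{2} \le 3n - 6,
\]
which rearranges to $k \le 6 - \frac{12}{n} < 6$. Since $k$ is an integer, $k \le 5$, establishing the claimed upper bound.

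There is essentially no hard part here; the only subtlety to verify carefully is that pairwise non-crossingness plus individual planarity of the $M_i$'s indeed implies that the union graph is plane (this uses that the matchings are edge-disjoint so no two edges of $G$ coincide, hence crossings between distinct edges are the only obstruction). Everything else is a direct application of Euler's inequality. In the following sections of the chapter, the authors presumably complement this bound with a matching construction showing that five pairwise non-crossing plane matchings can actually be realized on some point set, proving tightness.
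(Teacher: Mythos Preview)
Your proposal is correct and takes essentially the same approach as the paper: the paper simply states that the union of $k$ edge-disjoint perfect matchings has $\frac{kn}{2}$ edges and invokes Euler's $3n-6$ bound, leaving the arithmetic implicit. Your write-up just fills in the details (planarity of the union, the inequality $k \le 6 - \tfrac{12}{n} < 6$) that the paper omits.
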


\begin{figure}[htb]
  \centering
  \includegraphics[width=.35\columnwidth]{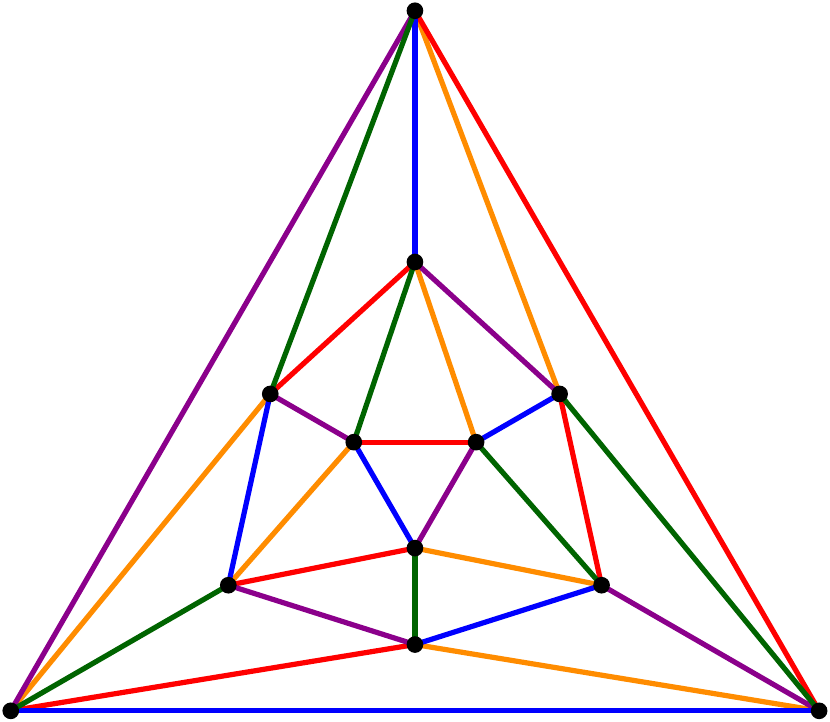}
 \caption{A point set with five non-crossing edge-disjoint plane perfect matchings.}
  \label{mp:5-regular-fig}
\end{figure}

Figure~\ref{mp:5-regular-fig} shows a $5$-regular geometric graph on a set of 12 points in the plane which contains five non-crossing edge-disjoint plane matchings. In \cite{Hasheminezhad2011}, the authors showed how to generate an infinite family of 5-regular planar graphs using the graph in Figure~\ref{mp:5-regular-fig}. By an extension of the five matchings shown in Figure~\ref{mp:5-regular-fig}, five non-crossing matchings for this family of graphs is obtained. Thus, the bound provided by Lemma~\ref{mp:5-non-crossing} is tight.  

It is obvious that if $P$ contains at least four points, the minimum length Hamiltonian cycle in $\Kn{(P)}$ contains two non-crossing edge-disjoint plane matchings. Alternatively, in \cite{Ishaque2013} the following result is proved. Given a plane perfect matching $M_1$ in $\Kn{(P)}$, we can always find another plane perfect matching $M_2$ that is disjoint from $M_1$ and for which $M_1\cup M_2$ is plane. In the following lemma we show that there exist point sets which contain at most two non-crossing edge-disjoint plane matchings.
\begin{lemma}
 \label{mp:2-non-crossing}
For a set $P$ of $n\ge 4$ points in convex position in the plane, with $n$ even, at most two pairwise non-crossing plane matchings can be packed into $\Kn{(P)}$.
\end{lemma}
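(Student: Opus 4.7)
The plan is to argue by contradiction, showing that any three pairwise non-crossing plane perfect matchings in $K_n(P)$ would force an outerplanar graph to be $3$-regular, which is impossible.

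First, I would suppose for contradiction that $\mathcal{M}=\{M_1,M_2,M_3\}$ is a set of three pairwise non-crossing plane perfect matchings of $P$ that can be packed into $K_n(P)$. Since they are pairwise edge-disjoint and pairwise non-crossing, their union $G=M_1\cup M_2\cup M_3$ is a plane straight-line graph on the vertex set $P$. Each vertex of $P$ is matched exactly once in each $M_i$, so every vertex of $G$ has degree exactly $3$; thus $G$ is a $3$-regular plane graph with $|E(G)|=3n/2$.

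Next I would use the convex-position hypothesis to argue that $G$ is outerplanar. Because $P$ is in convex position, every point of $P$ lies on the boundary of $\mathrm{CH}(P)$, so in the straight-line drawing of $G$ every vertex is incident to the unbounded face. Hence $G$ admits a planar embedding with all vertices on the outer face, i.e.\ $G$ is outerplanar.

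The main (and in fact only substantive) step is then to combine $3$-regularity with outerplanarity. I would invoke the standard fact that every outerplanar graph is $2$-degenerate, i.e.\ contains a vertex of degree at most $2$; if desired, this can be verified quickly by Euler's formula applied to an outerplanar graph (yielding $|E|\le 2n-3$ and hence average degree strictly less than $4$, refined by examining the outer face to force a vertex of degree $\le 2$). This contradicts the fact that every vertex of $G$ has degree $3$, and the contradiction completes the proof for $n\ge 6$. The small case $n=4$ can be handled directly by noting that $K_4$ drawn on four convex points has its two diagonals crossing, so only two of its three perfect matchings are plane and the claim is immediate.

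The potential obstacle is verifying that $G$ is genuinely outerplanar rather than merely planar; the convex-position hypothesis is exactly what rules out the bad case (e.g.\ $K_4$, which is planar but not outerplanar and occurs only when $n=4$, handled separately). Once outerplanarity is established, the $2$-degeneracy argument is routine.
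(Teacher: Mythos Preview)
Your proof is correct and follows essentially the same argument as the paper: assume three pairwise non-crossing plane matchings, observe that their union is a $3$-regular plane graph which is outerplanar (by convex position), and derive a contradiction from the fact that every outerplanar graph has a vertex of degree at most $2$. Your separate treatment of $n=4$ is unnecessary---the outerplanarity argument works uniformly, since a crossing-free straight-line drawing on convex points automatically has all vertices on the outer face---but it does no harm.
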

\begin{proof}
 The proof is by contradiction. Consider three pairwise non-crossing plane matchings $M_1$, $M_2$, $M_3$ in $\Kn{(P)}$. Let $G$ be the induced subgraph of $\Kn{(P)}$ by $M_1\cup M_2\cup M_3$. Note that $G$ is a $3$-regular plane graph. Moreover, $G$ is an outerplanar graph. It is well known that every outerplanar graph has a vertex of degree at most 2. This contradicts that every vertex in $G$ has degree 3.
\end{proof}

We conclude this section with the following theorem.
\begin{theorem}
For a set $P$ of $n\ge 4$ points in general position in the plane, with $n$ even, at least two and at most five pairwise non-crossing plane matchings can be packed into $\Kn{(P)}$. These bounds are tight.
\end{theorem}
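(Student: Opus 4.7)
The plan is to assemble this theorem directly from the pieces already developed in the excerpt, since all four claims (lower bound, upper bound, and their tightness) have essentially been established. My proof will have four short parts.

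First, for the lower bound of two, I would invoke the minimum-length Hamiltonian cycle on $P$. Since $n \ge 4$ is even, such a cycle has even length and decomposes into two edge-disjoint perfect matchings; because it is non-self-crossing (a standard exchange argument using the triangle inequality shows a shortest Hamiltonian cycle in $\Kn{(P)}$ is plane), the two matchings are themselves plane, edge-disjoint, and mutually non-crossing. Alternatively, I would cite the stronger result of Ishaque et al.~\cite{Ishaque2013} that every plane perfect matching $M_1$ admits a disjoint plane perfect matching $M_2$ with $M_1 \cup M_2$ plane.

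Second, for the upper bound of five, I would simply appeal to Lemma~\ref{mp:5-non-crossing}: the union of $k$ pairwise non-crossing edge-disjoint perfect matchings is a planar graph with $kn/2$ edges, and Euler's formula gives $kn/2 \le 3n-6$, which forces $k \le 5$ (with slack, so the bound is $k \le 5$).

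Third, for tightness of the upper bound, I would cite the construction in Figure~\ref{mp:5-regular-fig}: the $5$-regular plane geometric graph on $12$ points that decomposes into five pairwise non-crossing plane perfect matchings, together with the infinite family of $5$-regular planar graphs generated from it in~\cite{Hasheminezhad2011}, whose $5$-regular structure extends the five-matching decomposition to arbitrarily large $n$.

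Fourth, for tightness of the lower bound, I would invoke Lemma~\ref{mp:2-non-crossing}: any $n \ge 4$ even points in convex position admit at most two pairwise non-crossing plane perfect matchings, because the union of three such matchings would be a $3$-regular outerplanar graph, contradicting the well-known fact that every outerplanar graph has a vertex of degree at most two. The only obstacle worth flagging is a cosmetic one: combining these ingredients into a single clean statement, and making sure the two tightness witnesses (convex $n$-gons for the lower bound, the $5$-regular family for the upper bound) together cover arbitrarily large even $n$, so that the word ``tight'' is justified for all admissible $n$ rather than just for a single example.
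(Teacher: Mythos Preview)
Your proposal is correct and mirrors the paper's approach exactly: the theorem is stated without its own proof in the paper, being assembled from Lemma~\ref{mp:5-non-crossing}, Lemma~\ref{mp:2-non-crossing}, the minimum-length Hamiltonian cycle observation (or the result of~\cite{Ishaque2013}), and the $5$-regular construction of Figure~\ref{mp:5-regular-fig} extended via~\cite{Hasheminezhad2011}. Your cosmetic concern about tightness for all admissible $n$ is not addressed in the paper either; ``tight'' here simply means each bound is attained by some point set.
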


\section{Matching Removal Persistency}
\label{mp:persistency-section}
In this section we define the matching persistency of a graph. A graph $G$ is {\em matching persistent} if by removing any perfect matching $M$ from $G$, the resulting graph, $G-M$, has a perfect matching. We define the {\em matching persistency} of $G$, denoted by $mp(G)$, as the size of the smallest set $\mathcal{M}$ of edge-disjoint perfect matchings that can be removed from $G$ such that $G-\mathcal{M}$ does not have any perfect matching. 
In other words, if $mp(G)=k$, then

\begin{enumerate}
 \item by removing an arbitrary set of $k-1$ edge-disjoint perfect matchings from $G$, the resulting graph still contains a perfect matching, and 
\item there exists a set of $k$ edge-disjoint perfect matchings such that by removing these matchings from $G$, the resulting graph does not have any perfect matching.
\end{enumerate}

In particular, $G$ is matching persistent iff $mp(G)\ge 2$.

\begin{figure}[htb]
  \centering
\setlength{\tabcolsep}{0in}
  $\begin{tabular}{cc}
\multicolumn{1}{m{.5\columnwidth}}{\centering\includegraphics[width=.33\columnwidth]{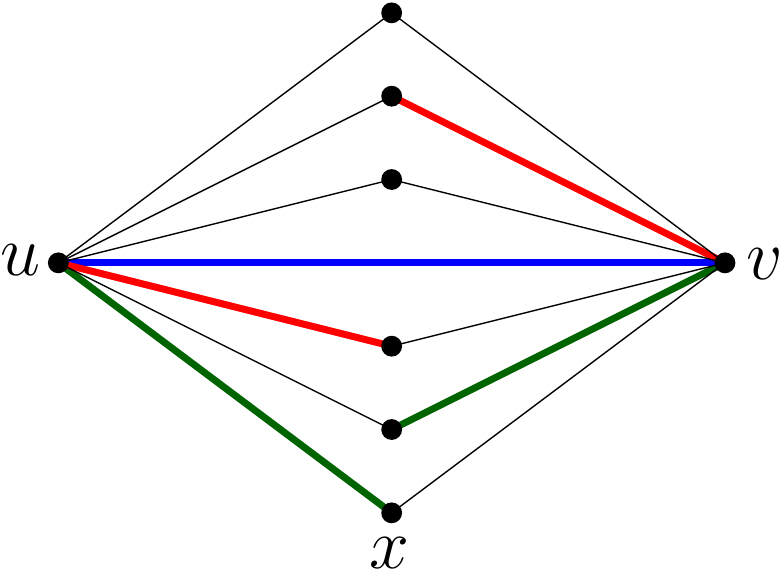}}&
\multicolumn{1}{m{.5\columnwidth}}{\centering\includegraphics[width=.15\columnwidth]{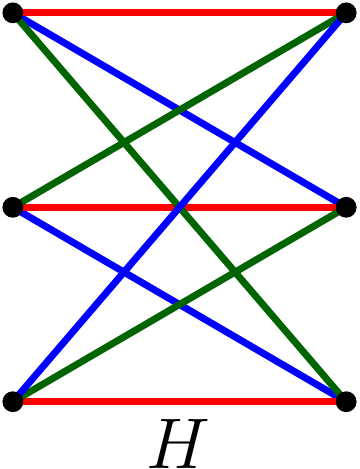}}\\
(a) & (b)
\end{tabular}$
  \caption{(a) By removing any matching (red, green, or blue) from $K_n$, at most two paths between $u$ and $v$ disappear. (b) The edges of $H$ are partitioned to $\frac{n}{2}$ perfect matchings, thus, $K_n-H$ does not have any perfect matching.}
\label{mp:path-matching}
\end{figure}

\begin{lemma}
\label{mp:regularity-connectivity}
Let $K_n$ be a complete graph with $n$ vertices, where $n$ is even, and let $\mathcal{M}$ be a set of $k$ edge-disjoint perfect matchings in $K_n$. Then, $K_n-\mathcal{M}$ is an $(n-1-k)$-regular graph which is $(n-1-2k)$-connected.
\end{lemma}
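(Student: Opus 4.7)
\medskip

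\noindent\textbf{Proof plan.} The statement splits naturally into two independent claims, and I would prove them in order.

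\emph{Regularity.} This part is immediate. The graph $K_n$ is $(n-1)$-regular, and each perfect matching in $\mathcal{M}$ contributes exactly one edge to every vertex. Since the $k$ matchings in $\mathcal{M}$ are pairwise edge-disjoint, the edges they delete at any vertex $v$ are distinct, so the degree of $v$ drops by exactly $k$. Hence $K_n-\mathcal{M}$ is $(n-1-k)$-regular.

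\emph{Connectivity.} The plan is to apply Menger's theorem: I will show that for every pair of distinct vertices $u,v$, the graph $G:=K_n-\mathcal{M}$ contains at least $n-1-2k$ internally vertex-disjoint $u$--$v$ paths, and conclude that $G$ is $(n-1-2k)$-connected. In $K_n$ itself there are $n-1$ canonical internally vertex-disjoint $u$--$v$ paths: the direct edge $uv$ and the $n-2$ length-two paths $u\text{-}w\text{-}v$ for $w\in V\setminus\{u,v\}$. So it suffices to upper-bound how many of these canonical paths are destroyed by the removal of $H:=\bigcup_{M\in\mathcal{M}}M$.

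Let $N_H(x)$ denote the $H$-neighbourhood of a vertex $x$. Since each $M\in\mathcal{M}$ contributes exactly one edge at every vertex and the matchings are edge-disjoint, $H$ is a $k$-regular spanning subgraph of $K_n$, so $|N_H(u)|=|N_H(v)|=k$. A length-two path $u\text{-}w\text{-}v$ is destroyed precisely when $w\in N_H(u)\cup N_H(v)$, and the direct edge $uv$ is destroyed precisely when $uv\in H$. I would split on whether $uv\in H$. If $uv\notin H$, then at most $|N_H(u)\cup N_H(v)|\le 2k$ of the length-two paths are lost and the direct edge survives, leaving at least $(n-2)-2k+1=n-1-2k$ paths. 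If $uv\in H$, then $v\in N_H(u)$ and $u\in N_H(v)$, so $|(N_H(u)\cup N_H(v))\setminus\{u,v\}|\le 2k-2$, destroying at most $2k-2$ of the length-two paths and the direct edge, leaving at least $(n-2)-(2k-2)=n-2k\ge n-1-2k$ paths. In either case we obtain $n-1-2k$ internally vertex-disjoint $u$--$v$ paths in $G$, so by Menger's theorem $G$ is $(n-1-2k)$-connected.

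\emph{Main obstacle.} There is no real obstacle once one realises that short (length $\le 2$) paths already suffice; the only point that requires a little care is the case analysis $uv\in H$ versus $uv\notin H$ to ensure the count is never worse than $n-1-2k$. No additional combinatorial lemmas beyond Menger's theorem are needed.
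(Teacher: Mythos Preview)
Your proof is correct and follows essentially the same approach as the paper: both use the $n-1$ canonical short $u$--$v$ paths in $K_n$ (the direct edge plus the $n-2$ length-two paths) and bound how many survive. The only cosmetic difference is in the counting: the paper observes that each single matching, having exactly one edge at $u$ and one at $v$, destroys at most two of these paths, giving the bound $n-1-2k$ directly without your case split on whether $uv\in H$; your argument via $|N_H(u)\cup N_H(v)|$ reaches the same conclusion with slightly more bookkeeping.
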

\begin{proof}
The regularity is trivial, because $K_n$ is $(n-1)$-regular and every vertex has degree $k$ in $\mathcal{M}$, thus, $K_n-\mathcal{M}$ is an $(n-1-k)$-regular graph. Now we prove the connectivity. Consider two vertices $u$ and $v$ in $V(K_n)$. There are $n-1$ many edge-disjoint paths between $u$ and $v$ in $K_n$: $n-2$ many paths of length two of the form $(u,x,v)$, where $x\in V(K_n)-\{u,v\}$ and a path $(u,v)$ of length one; see Figure~\ref{mp:path-matching}(a). By removing any matching in $\mathcal{M}$ from $K_n$, at most two paths disappear, because $u$ and $v$ have degree one in each matching. Thus in $G-\mathcal{M}$, there are $(n-1-2k)$ many edge-disjoint paths between $u$ and $v$, which implies that $K_n-\mathcal{M}$ is $(n-1-2k)$-connected.
\end{proof}

\begin{lemma}
\label{mp:mp-thr}
Let $n$ to be an even number. Then, $mp(K_n)\ge\frac{n}{2}$.
\end{lemma}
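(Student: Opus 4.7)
The plan is to show the contrapositive/forward direction directly: if we remove any set $\mathcal{M}$ of fewer than $n/2$ edge-disjoint perfect matchings from $K_n$, the resulting graph $K_n - \mathcal{M}$ still contains a perfect matching. By the definition of $mp$, this immediately gives $mp(K_n) \geq n/2$.

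The key observation is that Lemma~\ref{mp:regularity-connectivity} already does the heavy lifting on the structure of $K_n - \mathcal{M}$. Specifically, if $|\mathcal{M}| = k \leq n/2 - 1$, then Lemma~\ref{mp:regularity-connectivity} tells us that $K_n - \mathcal{M}$ is $(n-1-k)$-regular, and since $k \leq n/2 - 1$ we obtain the minimum degree bound
$$\delta(K_n - \mathcal{M}) = n - 1 - k \geq n - 1 - \left(\tfrac{n}{2} - 1\right) = \tfrac{n}{2}.$$

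So the first step is simply invoking Lemma~\ref{mp:regularity-connectivity} for an arbitrary choice of $k \leq n/2 - 1$ edge-disjoint perfect matchings. The second step is to appeal to Dirac's theorem, which asserts that any simple graph on $n \geq 3$ vertices with minimum degree at least $n/2$ contains a Hamiltonian cycle. (The case $n = 2$ is trivial, since $mp(K_2) = 1 \geq n/2 = 1$.) Hence $K_n - \mathcal{M}$ contains a Hamiltonian cycle $C$. The third step is to note that $n$ is even, so the edges of $C$ can be $2$-colored alternately, yielding two edge-disjoint perfect matchings contained in $C \subseteq K_n - \mathcal{M}$. In particular, $K_n - \mathcal{M}$ has a perfect matching, which is what we needed.

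Since this argument works for any choice of $k \leq n/2 - 1$ edge-disjoint perfect matchings, we conclude that no set of fewer than $n/2$ edge-disjoint perfect matchings can be removed from $K_n$ so as to destroy all perfect matchings. Therefore $mp(K_n) \geq n/2$, as claimed. There is no real obstacle here: the structural lemma is already stated, and the only external fact required is the classical Dirac theorem, so the proof is essentially a two-line deduction once the degree count has been carried out.
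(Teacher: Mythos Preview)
Your proof is correct and follows essentially the same approach as the paper: invoke Lemma~\ref{mp:regularity-connectivity} to get the degree bound $n-1-k \ge n/2$, apply Dirac's theorem to obtain a Hamiltonian cycle, and extract a perfect matching from it since $n$ is even. The paper additionally notes the connectivity conclusion from Lemma~\ref{mp:regularity-connectivity}, but this is redundant for Dirac's theorem, so your argument is equally complete.
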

\begin{proof}
We show that by removing any set $\mathcal{M}$ of $k$ edge-disjoint perfect matchings from $K_n$, where $0\le k<\frac{n}{2}$, the resulting graph still has a perfect matching. By Lemma~\ref{mp:regularity-connectivity}, the graph $K_n-\mathcal{M}$ is an $(n-1-k)$-regular graph which is $(n-1-2k)$-connected. Since $k<\frac{n}{2}$, $K_n-\mathcal{M}$ is a connected graph and the degree of each vertex is at least $\frac{n}{2}$. Thus, by a result of 
Dirac~\cite{Dirac1952}, $K_n-\mathcal{M}$ has a Hamiltonian cycle and consequently a perfect matching. Therefore, by removing $k$ arbitrary perfect matchings from $K_n$, where $k<\frac{n}{2}$, the resulting graph still has a perfect matching, which proves the claim.

\end{proof}

\begin{lemma}
\label{mp:mp-thr2}
If $n\equiv 0 \mod 4$, then $mp(K_n)\ge\frac{n}{2} + 1$.
\end{lemma}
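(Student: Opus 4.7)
The plan is to strengthen Lemma~\ref{mp:mp-thr} by showing that when $n \equiv 0 \pmod{4}$, removing any family $\mathcal{M}$ of $n/2$ edge-disjoint perfect matchings from $K_n$ still leaves a perfect matching. Write $G := K_n - \mathcal{M}$. By Lemma~\ref{mp:regularity-connectivity} $G$ is $(n/2-1)$-regular, which is one below the Dirac threshold exploited in the proof of Lemma~\ref{mp:mp-thr}; so Dirac is unavailable and I would instead argue via the Tutte--Berge formula, exploiting the parity hypothesis through a degree-counting inequality on the odd components of $G - S$.

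Suppose for contradiction that $G$ has no perfect matching. By Tutte--Berge there exists $S \subseteq V(K_n)$ with $o(G-S) \ge |S| + 2$; set $s := |S|$ and $o := o(G-S)$. The crux is the following lower bound: for any component $C$ of $G-S$ of size $c$, each vertex of $C$ has at most $c-1$ neighbors inside $C$ and hence at least $(n/2 - 1) - (c-1) = n/2 - c$ neighbors in $S$, giving
\[
e_C \;\ge\; c\bigl(n/2 - c\bigr)
\]
edges between $C$ and $S$ in $G$ whenever $c \le n/2$.

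Here the hypothesis $n \equiv 0 \pmod{4}$ is decisive: since $n/2$ is even, every odd integer at most $n/2$ is in fact at most $n/2 - 1$, and at most one component of $G - S$ can exceed size $n/2$. Hence at least $o - 1 \ge s + 1$ of the odd components have size $c \in \{1, 3, \ldots, n/2 - 1\}$, and on this set the concave function $c(n/2 - c)$ attains its minimum value $n/2 - 1$ at both endpoints $c = 1$ and $c = n/2 - 1$. Summing gives
\[
\sum_{C} e_C \;\ge\; (s+1)(n/2 - 1),
\]
while double counting edges between $S$ and $V \setminus S$ from the $S$-side gives the reverse bound $s(n/2-1)$, since every vertex of $S$ has degree $n/2 - 1$ in $G$. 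Comparing the two estimates yields $n/2 - 1 \le 0$, a contradiction for $n \ge 4$.

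The main obstacle is recognizing why the hypothesis $n \equiv 0 \pmod{4}$ is essential: when $n \equiv 2 \pmod{4}$, an odd component of size exactly $n/2$ would make the lower bound $c(n/2 - c) = 0$ vacuous, but the evenness of $n/2$ rules out this case and pins the minimum contribution at precisely the per-vertex bound $n/2 - 1$, allowing the two sums to collide.
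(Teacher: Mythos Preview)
Your argument is correct and takes a genuinely different route from the paper's.

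The paper argues via Hamiltonicity: $G:=K_n-\mathcal{M}$ is $(n/2-1)$-regular, and if it has a Hamiltonian cycle we are done; otherwise a theorem of Cranston and O forces $G$ to be disconnected, whence the regularity pins $G$ down as two disjoint copies of $K_{n/2}$, and the hypothesis $n\equiv 0\pmod 4$ enters only at the end to guarantee that $n/2$ is even so each copy has a perfect matching. Your proof instead goes straight through Tutte--Berge and an edge-count across a bad cut, using the parity of $n/2$ to exclude odd components of size exactly $n/2$ and thereby force every small odd component to contribute at least $n/2-1$ edges into $S$. What you gain is self-containment: you avoid the external Cranston--O result entirely and use only Tutte--Berge plus elementary counting. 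What the paper's approach buys is a structural description of the extremal case (two cliques of order $n/2$), which your counting argument does not produce but also does not need.
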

\begin{proof}
By Lemma~\ref{mp:mp-thr}, $mp(K_n)\ge\frac{n}{2}$. Let $\mathcal{M}$ be any set of $\frac{n}{2}$ edge-disjoint perfect matchings in $K_n$. We will show that $K_n-\mathcal{M}$ contains a perfect matching. If $K_n-\mathcal{M}$ contains a Hamiltonian cycle then it has a perfect matching and we are done. Assume $K_n-\mathcal{M}$ does not contain any Hamiltonian cycle, while it is a $(\frac{n}{2}-1)$-regular graph. A result of Cranston and O~\cite{Cranston2013} implies that $K_n-\mathcal{M}$ is disconnected. In order for $K_n-\mathcal{M}$ to be $(\frac{n}{2}-1)$-regular each component has to
have at least $\frac{n}{2}$ vertices. Thus, $K_n-\mathcal{M}$ consists of two disjoint copies of $K_{\frac{n}{2}}$. Each of these components has a Hamiltonian cycle, and hence a perfect matching. Therefore, the union of these two components has a perfect matching.
\end{proof}

\begin{theorem}
\label{mp:bipartite-matchings-lemma}
 If $n\equiv 2 \mod 4$, then $mp(K_n)=\frac{n}{2}$.
\end{theorem}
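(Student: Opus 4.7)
The plan is to combine the lower bound $mp(K_n) \ge \tfrac{n}{2}$ already established in Lemma~\ref{mp:mp-thr} with a matching upper bound: exhibit an explicit family $\mathcal{M}$ of $\tfrac{n}{2}$ edge-disjoint perfect matchings of $K_n$ whose removal destroys every perfect matching. The construction is essentially the one hinted at in Figure~\ref{mp:path-matching}(b).

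Since $n \equiv 2 \pmod 4$, write $n = 4k+2$, so $\tfrac{n}{2} = 2k+1$ is odd. Partition the vertex set $V(K_n)$ into two parts $A$ and $B$ with $|A| = |B| = \tfrac{n}{2}$, and let $H$ be the spanning bipartite subgraph of $K_n$ with parts $A$ and $B$; thus $H \cong K_{n/2,n/2}$. Because $|A| = |B|$, any perfect matching of $H$ saturates every vertex of $K_n$ and is therefore a perfect matching of $K_n$. By the standard 1-factorization of the complete bipartite graph $K_{m,m}$ (e.g.\ via cyclic rotations of a fixed matching), the edge set of $H$ decomposes into exactly $\tfrac{n}{2}$ edge-disjoint perfect matchings; call this family $\mathcal{M}$. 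Every element of $\mathcal{M}$ is a perfect matching of $K_n$, and the matchings are pairwise edge-disjoint.

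Now observe that $K_n - \mathcal{M} = K_n - E(H)$ consists of exactly two connected components, namely the cliques $K_n[A]$ and $K_n[B]$, each on $\tfrac{n}{2}$ vertices. Since $\tfrac{n}{2} = 2k+1$ is odd, neither component has a perfect matching, hence $K_n - \mathcal{M}$ has no perfect matching at all. This shows $mp(K_n) \le \tfrac{n}{2}$, and together with Lemma~\ref{mp:mp-thr} yields $mp(K_n) = \tfrac{n}{2}$.

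There is no real obstacle here beyond noting that the parity $n \equiv 2 \pmod 4$ is precisely what makes $\tfrac{n}{2}$ odd, so that the two residual cliques on $A$ and $B$ individually fail to admit a perfect matching. This is exactly the parity gap that is unavailable in the case $n \equiv 0 \pmod 4$, which is why Lemma~\ref{mp:mp-thr2} gives the stronger bound $mp(K_n) \ge \tfrac{n}{2}+1$ there. No further combinatorial input (Tutte's theorem, Dirac's theorem, etc.) is needed for the upper bound beyond the 1-factorization of $K_{m,m}$, which is completely elementary.
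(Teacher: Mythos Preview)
Your proof is correct and follows essentially the same approach as the paper: both invoke Lemma~\ref{mp:mp-thr} for the lower bound, then remove a $1$-factorization of the spanning $K_{n/2,n/2}$ to leave two odd cliques with no perfect matching. The only cosmetic difference is that the paper justifies the $1$-factorization via Hall's theorem applied iteratively, whereas you cite the cyclic construction directly.
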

\begin{proof}
By Lemma~\ref{mp:mp-thr}, $mp(K_n)\ge\frac{n}{2}$. In order to complete the proof, we show that $mp(K_n)\le\frac{n}{2}$.
Let $H=K_{\frac{n}{2},\frac{n}{2}}$ be a complete bipartite subgraph of $K_n$. Note that $\frac{n}{2}$ is an odd number and $H$ is an $\frac{n}{2}$-regular graph.
According to Hall's marriage theorem \cite{Hall1935}, for $k\ge 1$, every $k$-regular bipartite graph contains a perfect matching \cite{Harary1991}. Since by the iterative removal of perfect matchings  from $H$ the resulting graph is still regular, the edges of $H$ can be partitioned into $\frac{n}{2}$ perfect matchings; see Figure~\ref{mp:path-matching}(a). It is obvious that $K_n-H$ consists of two connected components of odd size. Thus, by removing the $\frac{n}{2}$ matchings in $H$, the resulting graph, $K_n-H$, does not have any perfect matching. This proves the claim.
\end{proof}

\begin{theorem}
\label{mp:bipartite-matchings-lemma2}
 If $n\equiv 0 \mod 4$, then $mp(K_n)=\frac{n}{2}+1$.
\end{theorem}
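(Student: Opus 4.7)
The lower bound $mp(K_n) \geq \frac{n}{2}+1$ is already established in Lemma~\ref{mp:mp-thr2}, so the plan is to prove the matching upper bound by exhibiting an explicit set of $\frac{n}{2}+1$ edge-disjoint perfect matchings whose removal destroys all perfect matchings. The strategy is modeled on the proof of Theorem~\ref{mp:bipartite-matchings-lemma}, but must be adapted because for $n \equiv 0 \mod 4$ the ``balanced'' bipartition of $V(K_n)$ has even parts, so a decomposition of $K_{n/2,n/2}$ alone is not enough to eliminate all perfect matchings from the complement.

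Write $n = 4k$ and partition $V(K_n)$ into $A$ and $B$ with $|A| = 2k-1$ and $|B| = 2k+1$, so that both sides are odd. Label $A = \{a_1,\dots,a_{2k-1}\}$ and $B = \{b_0,\dots,b_{2k}\}$. Let $H$ be the subgraph of $K_n$ consisting of all edges of the complete bipartite graph $K_{A,B}$ together with the Hamiltonian cycle $b_0 b_1 \cdots b_{2k} b_0$ on $B$. A quick degree count shows that every vertex of $A$ has degree $2k+1$ in $H$ and every vertex of $B$ has degree $(2k-1)+2 = 2k+1$ in $H$, so $H$ is $(\frac{n}{2}+1)$-regular with exactly $(2k+1)\cdot\frac{n}{2}$ edges, matching the total edge count of $\frac{n}{2}+1$ perfect matchings of $K_n$.

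The key step is to decompose $H$ into $\frac{n}{2}+1$ perfect matchings explicitly. For each $i \in \{0,1,\dots,2k\}$ (indices taken mod $2k+1$) define
\[
M_i \;=\; \{\,(b_i,b_{i+1})\,\} \;\cup\; \{\,(a_j, b_{i+1+j}) : j=1,\dots,2k-1\,\}.
\]
Each $M_i$ covers $b_i,b_{i+1}$ via the cycle edge, the remaining $2k-1$ vertices of $B$ via the bipartite edges, and all $2k-1$ vertices of $A$; since $j\in\{1,\dots,2k-1\}$ forces $i+1+j \notin \{i,i+1\}\pmod{2k+1}$ and the indices $i+1+j$ are pairwise distinct, $M_i$ is a perfect matching of $K_n$ entirely contained in $H$. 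Pairwise edge-disjointness follows because the cycle edge $(b_i,b_{i+1})$ determines $i$ uniquely, and for fixed $j$ the map $i \mapsto i+1+j$ is a bijection on $\mathbb{Z}_{2k+1}$, so $a_j$ is matched to each vertex of $B$ exactly once across the family $\{M_i\}$. Hence $\bigcup_i M_i$ exhausts every edge of $H$.

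Finally, after removing the set $\mathcal{M}=\{M_0,\dots,M_{2k}\}$, the remaining graph $K_n - \mathcal{M} = K_n - H$ has no edge between $A$ and $B$ and therefore splits into two components whose vertex sets are $A$ and $B$ of odd sizes $2k-1$ and $2k+1$; neither component admits a perfect matching, so $K_n-\mathcal{M}$ has none either. This gives $mp(K_n)\le \frac{n}{2}+1$, and combined with Lemma~\ref{mp:mp-thr2} yields the equality. The main obstacle is producing the explicit decomposition of $H$ into matchings so that the combinatorial count works out exactly; the cyclic shift idea above handles this uniformly for all $k\ge 1$, and the small boundary case $k=1$ (where $H = K_4$) degenerates gracefully into the standard $1$-factorization of $K_4$.
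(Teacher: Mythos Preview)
Your proof is correct and follows essentially the same approach as the paper: the same unbalanced partition $|A|=2k-1$, $|B|=2k+1$ and the same cyclic family of matchings $M_i=\{b_ib_{i+1}\}\cup\{a_jb_{i+1+j}:1\le j\le 2k-1\}$. Your write-up is in fact more detailed than the paper's, explicitly verifying edge-disjointness and naming the union $H$; the one minor imprecision is calling $K_n-H$ a two-component graph (for $k=1$ the $B$-side is three isolated vertices), but this does not affect the argument since all that matters is the absence of $A$--$B$ edges together with the odd parities.
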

\begin{proof}
By Lemma~\ref{mp:mp-thr2}, $mp(K_n)\ge\frac{n}{2}+1$. In order to complete the proof, we show that $mp(K_n)\le\frac{n}{2}+1$. 
Assume $n=4k$. Let $A = \{a_1,\dots,a_{2k-1}\}$ and $B = \{b_1,\dots,b_{2k+1}\}$ be a partition of vertices of $K_n$. Let $M_i$ be a matching consisting of the edges $b_ib_{i+1}$ and $a_jb_{j+i+1}$, where $+$ is modulo $2k +1$ and $j$ runs from $1$ to $2k-1$. It is easy to see that $M_1, \dots,M_{2k+1}$ are disjoint perfect matchings, and after removing them we have a complete graph on $A$ and a graph on $B$, which are disjoint. Since each of $A$ and $B$ has an odd number of points, there is no more perfect matching.
This proves the claim.
\end{proof}

In the rest of this section we consider plane matching removal from geometric graphs.

Let $P$ be a set of $n$ points in general position in the plane, with $n$ even. Given a geometric graph $G$ on $P$, we say that $G$ is {\em plane matching persistent} if by removing any plane perfect matching $M$ from $G$, the resulting graph, $G-M$, has a plane perfect matching. We define the {\em plane matching persistency} of $G$, denoted by $pmp(G)$ as the size of the smallest set $\mathcal{M}$ of edge-disjoint plane perfect matchings that can be removed from $G$ such that $G-\mathcal{M}$ does not have any plane perfect matching. In particular, $G$ is plane matching persistent iff $pmp(G)\ge 2$.

Aichholzer et al.~\cite{Aichholzer2010-edge-removal} and Perles (see~\cite{Keller2012}) showed that by removing any set of at most $\frac{n}{2}-1$ edges from $\Kn{(P)}$, the resulting graph has a plane perfect matching. This bound is tight~\cite{Aichholzer2010-edge-removal}; that is, there exists a point set $P$ such that by removing a set $H$ of $\frac{n}{2}$ edges from $\Kn{(P)}$ the resulting graph does not have any plane perfect matching. In the examples provided by~\cite{Aichholzer2010-edge-removal}, the $\frac{n}{2}$ edges in $H$ form a connected component which has $\frac{n}{2}+1$ vertices. 

Thus, one may think if the removed edges are disjoint, it may be possible to remove more than $\frac{n}{2}-1$ edges while the resulting graph has a plane perfect matching.
In the following lemma we show that by removing any plane perfect matching, i.e., a set of $\frac{n}{2}$ disjoint edges, from $\Kn{(P)}$, the resulting graph still has a perfect matching.

\begin{lemma}
\label{mp:pmp2-lemma}
Let $P$ be a set of $n$ points in general position in the plane with $n$ even, then $pmp(\Kn{(P)})\ge 2$.
\end{lemma}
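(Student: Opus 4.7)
The plan is to apply the result of Ishaque, Souvaine, and T\'oth (cited as~\cite{Ishaque2013} and already quoted in Section~\ref{mp:non-crossing-matching-section} of this chapter), which asserts that for any plane perfect matching $M_1$ of $\Kn{(P)}$ with $n\ge 4$, there exists another plane perfect matching $M_2$ of $\Kn{(P)}$ that is edge-disjoint from $M_1$ and, in fact, compatible with it (i.e., $M_1\cup M_2$ is itself plane). Since $M_2$ is plane and contains no edge of $M_1$, it is a plane perfect matching of $\Kn{(P)}-M_1$. Thus, no single plane perfect matching can be removed from $\Kn{(P)}$ so as to destroy plane-perfect-matching admissibility, and by the definition of $pmp$ we immediately conclude that $pmp(\Kn{(P)})\ge 2$ whenever $n\ge 4$.

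Since $M$ in the statement of the lemma is an \emph{arbitrary} plane perfect matching, the argument has to work uniformly in $M$ and not just for one specially chosen matching. The Ishaque--Souvaine--T\'oth result is exactly tailored to this: its input is an arbitrary plane perfect matching $M_1$. I would therefore simply invoke it with $M_1:=M$ to obtain the desired $M_2$, and no further geometric work is needed.

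If one wanted a self-contained proof, the natural first attempt would be to consider the minimum-weight perfect matching $M'$ of $\Kn{(P)}$ in which the edges of $M$ are forbidden (equivalently, assigned infinite weight), and then try to argue that $M'$ is plane via the standard triangle-inequality uncrossing swap. The hard part\textemdash and the main obstacle to this direct route\textemdash is that the triangle-inequality swap that uncrosses two crossing edges $(p_1,q_1)$ and $(p_2,q_2)$ in $M'$ replaces them with $(p_1,q_2)$ and $(p_2,q_1)$, and it may happen that one of those replacement edges lies in the forbidden set $M$. Handling this forbidden-swap case rigorously is precisely what the Ishaque--Souvaine--T\'oth construction manages (via more delicate local rerouting around such obstructions), which is why I would cite their theorem rather than reprove it here.
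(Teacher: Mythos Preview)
Your proof is correct: invoking the Ishaque--Souvaine--T\'oth compatible-matching theorem immediately yields, for any plane perfect matching $M$, an edge-disjoint plane perfect matching in $\Kn{(P)}-M$, which is exactly what $pmp(\Kn{(P)})\ge 2$ requires.

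The paper, however, takes a different and somewhat more elementary route. Given $M$, it assigns $n/2$ colors to $P$ so that the two endpoints of each edge of $M$ receive the same color; since each color class has size $2\le n/2$, Theorem~\ref{mp:Aichholzer} (Aichholzer et al.) guarantees a plane \emph{colored} perfect matching $M'$, and every edge of $M'$ joins two distinct colors, hence $M'\cap M=\emptyset$. This is precisely a clean realization of the ``minimum-weight with forbidden edges'' idea you sketched: the coloring forces any colored matching to avoid $M$, and the minimum-weight colored matching is automatically plane, so the forbidden-swap obstruction you identified never arises. Your approach cites a strictly stronger black box (it even gives $M'$ compatible with $M$, not just edge-disjoint), while the paper's argument stays within the colored-matching machinery already developed in Section~\ref{mp:colored-matching-section}. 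Both require $n\ge 4$, as you noted.
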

\begin{proof}
 Let $M$ be any plane perfect matching in $\Kn{(P)}$. Assign $\frac{n}{2}$ distinct colors to the points in $P$ such that both endpoints of every edge in $M$ have the same color. By Theorem~\ref{mp:Aichholzer}, $P$ has a plane colored matching, say $M'$. Since both endpoints of every edge in $M$ have the same color while the endpoints of every edge in $M'$ have distinct colors, $M$ and $M'$ are edge-disjoint. Therefore, by removing any plane perfect matching from $\Kn{(P)}$, the resulting graph still has a plane perfect matching, which implies that $pmp(\Kn{(P)})\ge 2$.
\end{proof}

\begin{theorem}
For a set $P$ of $n\ge 4$ points in convex position in the plane with $n$ even, $pmp(\Kn{(P)})=2$.
\end{theorem}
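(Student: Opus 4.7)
By Lemma~\ref{mp:pmp2-lemma} we already have $pmp(\Kn{(P)})\ge 2$, so the plan is to exhibit two edge-disjoint plane perfect matchings $M_1,M_2$ whose removal destroys every plane perfect matching. The key leverage is Lemma~\ref{mp:two-convex-edges}: when $P$ is in convex position, every plane perfect matching must use at least two edges of $\CH{P}$. Since $\CH{P}$ has exactly $n$ edges, if $M_1\cup M_2$ exhausts all hull edges, then no plane perfect matching can remain.

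I would construct these two matchings explicitly from the hull itself. Label the points $p_0,p_1,\dots,p_{n-1}$ in convex order along $\CH{P}$ and set
\[
M_1=\{p_{2i}p_{2i+1}:0\le i\le n/2-1\},\qquad M_2=\{p_{2i+1}p_{2i+2}:0\le i\le n/2-1\},
\]
where indices are taken modulo $n$. Because $n$ is even, each $M_j$ contains $n/2$ pairwise disjoint hull edges and so is a perfect matching; since hull edges never cross each other, each $M_j$ is plane. The two matchings are edge-disjoint (the even-indexed and odd-indexed hull edges), and $M_1\cup M_2$ equals the full hull boundary.

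After removing $M_1\cup M_2$ from $\Kn{(P)}$, no edge of $\CH{P}$ survives. By Lemma~\ref{mp:two-convex-edges}, every plane perfect matching of $P$ contains at least two hull edges, which now forces any candidate to use edges outside the remaining graph. Therefore $\Kn{(P)}-M_1-M_2$ admits no plane perfect matching, giving $pmp(\Kn{(P)})\le 2$. Combined with the lower bound this yields $pmp(\Kn{(P)})=2$.

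There is no real obstacle here: the construction is immediate once one recognizes that Lemma~\ref{mp:two-convex-edges} makes the $n$ hull edges a bottleneck resource, and that two alternating hull matchings are exactly enough to deplete that resource while each remaining plane.
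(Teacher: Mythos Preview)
Your proof is correct and follows essentially the same approach as the paper: use Lemma~\ref{mp:pmp2-lemma} for the lower bound, then partition the $n$ hull edges into two alternating plane perfect matchings $M_1,M_2$ and invoke Lemma~\ref{mp:two-convex-edges} to conclude that $\Kn{(P)}-M_1-M_2$ has no plane perfect matching. The paper merely says ``let $M_1$ and $M_2$ be two edge-disjoint plane matchings obtained from $\CH{P}$'' without writing the explicit formulas, but your construction is exactly what is meant.
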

\begin{proof}
By Lemma~\ref{mp:pmp2-lemma}, $pmp(\Kn{(P)})\ge2$. In order to prove the theorem, we need to show that $pmp(\Kn{(P)})\le2$. Let $M_1$ and $M_2$ be two edge-disjoint plane matchings obtained from \CH{P}. By Lemma~\ref{mp:two-convex-edges}, any plane perfect matching in $\Kn{(P)}$ contains at least two edges of \CH{P}, while $\Kn{(P)}-\{M_1\cup M_2\}$ does not have convex hull edges, and hence does not have any plane perfect matching. Therefore, $pmp(\Kn{(P)})\le2$. 
\end{proof}

\begin{observation}
\label{mp:even-cycle-obs}
The union of two edge-disjoint perfect matchings in any graph is a set of even cycles.
\end{observation}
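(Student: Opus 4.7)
The plan is to argue via a degree count and then an alternation argument. Let $M_1$ and $M_2$ be edge-disjoint perfect matchings in a graph $G$, and let $H$ be the subgraph with vertex set $V(G)$ and edge set $M_1 \cup M_2$. Since $M_1$ is a perfect matching, every vertex of $G$ is incident to exactly one edge of $M_1$, and the same holds for $M_2$. Because $M_1$ and $M_2$ are edge-disjoint, the degree contributions add, so every vertex of $H$ has degree exactly $2$.

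From here I would invoke the standard fact that a finite graph in which every vertex has degree exactly $2$ is a disjoint union of simple cycles. This immediately establishes that $H$ decomposes into cycles; the only remaining issue is parity.

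To show each cycle has even length, I would trace any cycle $C = v_0, v_1, \dots, v_{k-1}, v_0$ in $H$ and consider, for each edge $v_i v_{i+1}$, whether it belongs to $M_1$ or to $M_2$. If two consecutive edges $v_{i-1}v_i$ and $v_i v_{i+1}$ of $C$ both lay in $M_1$, then vertex $v_i$ would be incident to two edges of $M_1$, contradicting that $M_1$ is a matching; the same applies to $M_2$. Hence the edges of $C$ must strictly alternate between $M_1$ and $M_2$, which forces $k$ to be even.

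The argument is straightforward, with no real obstacle: the crux is just the observation that each vertex has degree $2$ in $H$ (by disjointness and the perfectness of the matchings), combined with the alternation argument that rules out odd cycles. No geometric properties of the underlying point set are used, which is consistent with the observation being stated for any graph.
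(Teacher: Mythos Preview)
Your argument is correct and is the standard one: the $2$-regularity of $M_1 \cup M_2$ forces a decomposition into cycles, and alternation of edges between $M_1$ and $M_2$ forces each cycle to have even length. The paper states this as an observation without proof, so there is nothing to compare against; your write-up supplies exactly the justification one would expect.
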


\begin{lemma}
There exists a point set $P$ in general position such that $pmp(K(P))\ge 3$.
\end{lemma}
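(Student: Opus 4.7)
The plan is to exhibit a very small explicit point set. Take $P$ to consist of $n=4$ points: three points $a,b,c$ forming a (non-degenerate) triangle, together with a fourth point $d$ placed in the interior of $\triangle abc$. This $P$ is in general position. First I would observe that $K(P)=K_4$ has exactly three perfect matchings, which moreover partition $E(K_4)$:
$$M_1=\{ab,cd\},\quad M_2=\{ac,bd\},\quad M_3=\{ad,bc\}.$$
In particular any two of them are automatically edge-disjoint, and removing any two leaves exactly the third.

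The key step is to verify that each of the three matchings is plane. This is purely geometric and follows from the fact that $d$ lies strictly inside $\triangle abc$: each of the segments $ad$, $bd$, $cd$ is contained in the closed triangle, and therefore cannot cross any of the three triangle sides $ab$, $bc$, $ca$. So $M_1$, $M_2$, $M_3$ are all plane perfect matchings of $K(P)$.

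To conclude, let $\mathcal{M}$ be any set of at most two edge-disjoint plane perfect matchings of $K(P)$. Since $K(P)$ has only the three perfect matchings $M_1,M_2,M_3$, the set $\mathcal{M}$ is a (possibly empty) subset of $\{M_1,M_2,M_3\}$ of size at most $2$, so at least one $M_i$ remains in $K(P)\setminus\mathcal{M}$, providing a plane perfect matching. Therefore no set $\mathcal{M}$ of size $\le 2$ witnesses the definition of $pmp$, and $pmp(K(P))\ge 3$.

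I do not anticipate a real obstacle: the planarity check is elementary, and the fact that the three matchings of $K_4$ exhaust its edges makes the counting trivial. The only subtlety to keep in mind is that in the definition of $pmp$ one quantifies over \emph{all} choices of the removed matchings, which is handled here because there are only three plane perfect matchings in total in this example.
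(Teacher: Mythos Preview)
Your argument is correct. With $d$ in the interior of $\triangle abc$, each of the three perfect matchings of $K_4$ pairs a side of the triangle with the segment from the opposite vertex to $d$; since that segment lies in the closed triangle and meets the boundary only at a vertex not on the opposite side, the two edges are disjoint, so all three matchings are plane. As they partition $E(K_4)$, removing any two leaves the third, giving $pmp(K(P))\ge 3$ (in fact $=3$).

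This is a genuinely different and much more elementary route than the paper's proof. The paper constructs, for every even $n$, a set of $3n$ points arranged as $n$ nested triangles $t_1,\dots,t_n$. Given any two edge-disjoint plane matchings $M_1,M_2$, their union $G$ is $2$-regular with no odd cycle, so each triangle $t_i$ has at least one side missing from $G$; the paper selects such a side from every $t_i$ and then, working on consecutive pairs of triangles, performs a small local case analysis to extend these sides to a full plane perfect matching disjoint from $G$. Your $4$-point example bypasses all of this by exploiting the special structure of $K_4$ (exactly three perfect matchings that partition the edge set). The trade-off is that the paper's construction produces an infinite family of arbitrarily large point sets with $pmp\ge 3$, showing the phenomenon is not confined to small cases, whereas your argument establishes the lemma as stated with a single minimal witness.
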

\begin{proof}
We prove this lemma by providing an example. Figure~\ref{mp:pmp3-fig}(a) shows a set $P=\{a_1,\dots,a_n,\allowbreak b_1,\dots,b_n,\allowbreak c_1,\dots,c_n\}$ of $3n$ points in general position, where $n$ is an even number. In order to prove that $pmp(K(P))\ge 3$, we show that by removing any two edge-disjoint plane matchings from $K(P)$, the resulting graph still has a plane perfect matching. Let $M_1$ and $M_2$ be any two plane perfect matchings in $K(P)$. Let $G$ be the subgraph of $K(P)$ induced by the edges in $M_1\cup M_2$. Note that $G$ is a 2-regular graph and by Observation~\ref{mp:even-cycle-obs} does not contain any odd cycle. For each $1\le i\le n$, let $t_i$ be the triangle which is defined by the three points $a_i$, $b_i$, and $c_i$. Let $\mathcal{T}$ be the set of these $n$ (nested) triangles. Since $G$ does not have any odd cycle, for each $t_i\in\mathcal{T}$, at least one edge of $t_i$ is not in $G$. Let $M_3$ be the matching containing an edge $e_i$ from each $t_i\in \mathcal{T}$ such that $e_i\notin G$. See Figure~\ref{mp:pmp3-fig}(b). Now we describe how to complete $M_3$, i.e., complete it to a perfect matching. Partition the triangles in $\mathcal{T}$ into $\frac{n}{2}$ pairs of consecutive triangles. For each pair $(t_i,t_{i+1})$ of consecutive triangles we complete $M_3$ locally\textemdash on $a_i,b_i,c_i,a_{i+1},b_{i+1},c_{i+1}$\textemdash in the following way. Let $t_i=(a_i,b_i,c_i)$ and $t_{i+1}=(a_{i+1},b_{i+1},c_{i+1})$. See Figure~\ref{mp:pmp3-fig}(c). W.l.o.g. assume that $M_3$ contains $a_ib_i$ and $a_{i+1}c_{i+1}$, that is $a_ib_i\notin G$ and $a_{i+1}c_{i+1}\notin G$. If $c_ib_{i+1}\notin G$, then we complete $M_3$ by adding $c_ib_{i+1}$. If $c_ib_{i+1}\in G$, then $a_{i+1}b_{i+1}\notin G$ or $c_{i+1}b_{i+1}\notin G$ because $b_{i+1}$ has degree two in $G$. W.l.o.g. assume that $a_{i+1}b_{i+1}\notin G$. Then we modify $M_3$ by removing $a_{i+1}c_{i+1}$ and adding $a_{i+1}b_{i+1}$. Now, if $c_ic_{i+1}\notin G$, then we complete $M_3$ by adding $c_ic_{i+1}$. If $c_ic_{i+1}\in G$, then by Observation~\ref{mp:even-cycle-obs}, $b_{i+1}c_{i+1}\notin G$. We modify $M_3$ by removing $a_{i+1}b_{i+1}$ and adding $b_{i+1}c_{i+1}$. At this point, since $c_ib_{i+1}$ and $c_ic_{i+1}$ are in $G$, $c_ia_{i+1}\notin G$ and we complete $M_3$ by adding $c_ia_{i+1}$.
\end{proof}
\begin{figure}[htb]
  \centering
\setlength{\tabcolsep}{0in}
  $\begin{tabular}{ccc}
\multicolumn{1}{m{.33\columnwidth}}{\centering\includegraphics[width=.31\columnwidth]{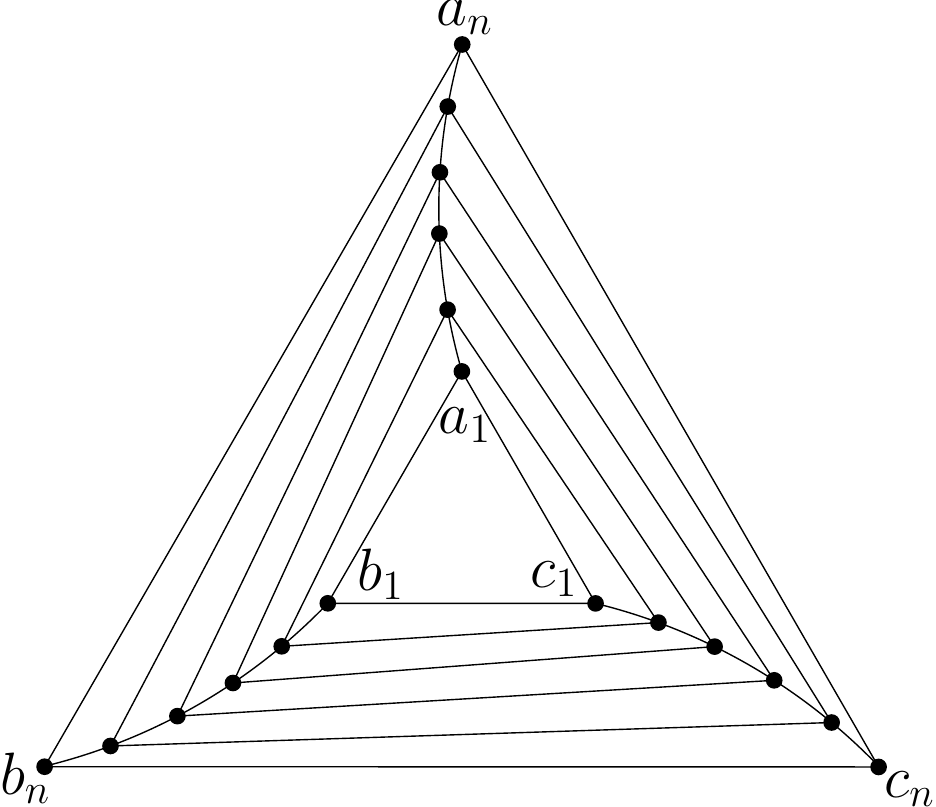}}&
\multicolumn{1}{m{.33\columnwidth}}{\centering\includegraphics[width=.31\columnwidth]{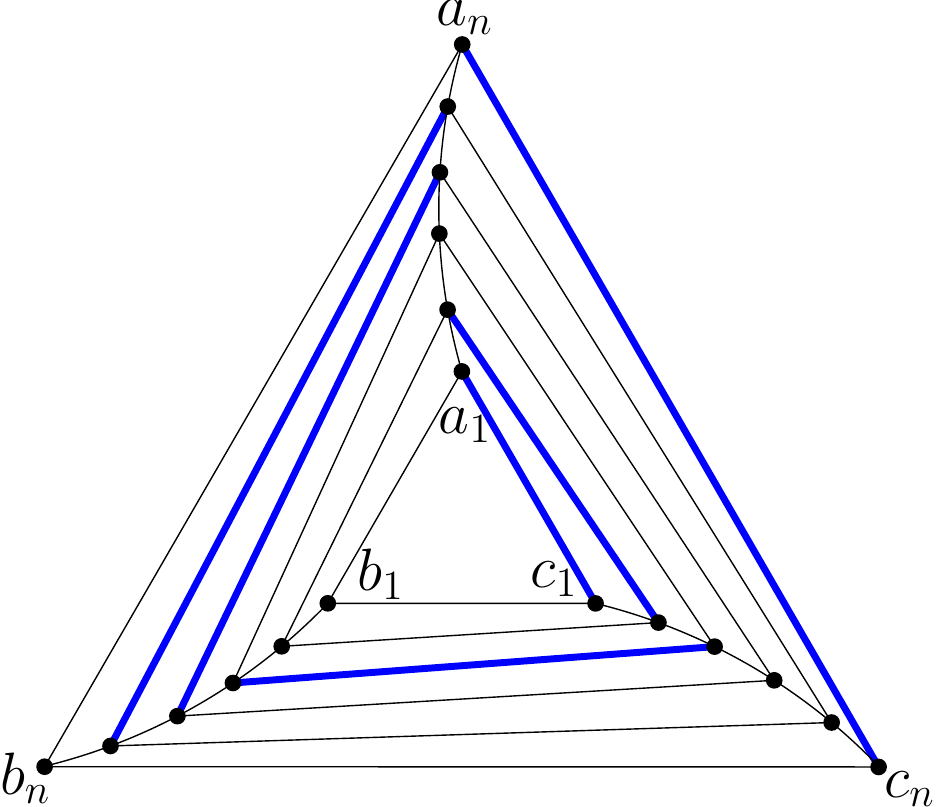}}
&\multicolumn{1}{m{.33\columnwidth}}{\centering\includegraphics[width=.31\columnwidth]{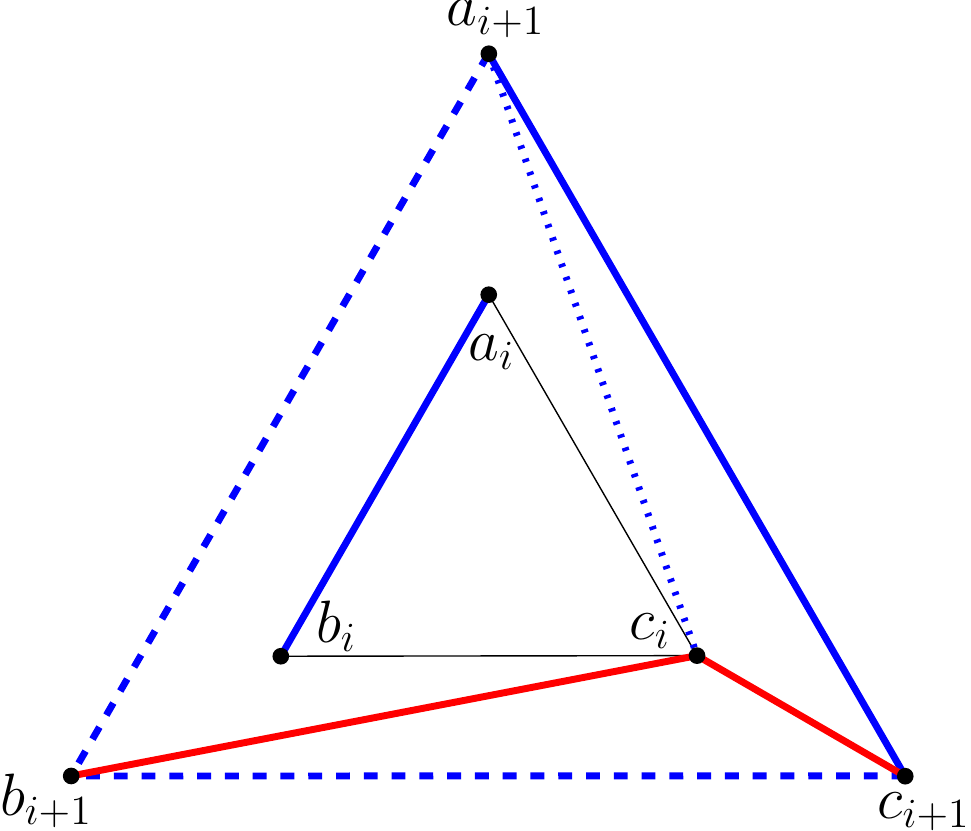}} \\
(a) & (b)&(c)
\end{tabular}$
\caption{(a) Set $P$ of $3n$ points in general position. (b) $M_3$ contains one edge from each triangle. (c) Locally converting $M_3$ to a perfect matching, for $t_i$ and $t_{i+1}$.}
\label{mp:pmp3-fig}
\end{figure}

\section{Conclusions}
\label{mp:conclusion}
We considered the problem of packing edge-disjoint plane perfect matchings into a set $P$ of $n$ points in the plane. If $P$ is in general position, we showed how to pack $\lfloor\log_2{n}\rfloor-1$ matchings. We also looked at some special cases and variants of this problem.
We believe that the number of such matchings is linear. A natural open problem is to improve either the provided lower bound or the trivial upper bound of $n-1$, where $n>6$. Another problem is to provide point sets with large plane matching persistency.

\bibliographystyle{abbrv}
\bibliography{../thesis}

\cleardoublepage\pagestyle{empty}

\hfill

\vfill

\pdfbookmark[0]{Colophon}{colophon}
\section*{Colophon}
This document was typeset using the typographical look-and-feel \texttt{classicthesis} developed by Andr\'e Miede. 
The style was inspired by Robert Bringhurst's seminal book on typography ``\emph{The Elements of Typographic Style}''. 
\texttt{classicthesis} is available for both \LaTeX\ and \mLyX: 
\begin{center}
\url{http://code.google.com/p/classicthesis/}
\end{center}
Happy users of \texttt{classicthesis} usually send a real postcard to the author, a collection of postcards received so far is featured here: 
\begin{center}
\url{http://postcards.miede.de/}
\end{center}
 
\bigskip

\noindent\finalVersionString


%
%

\end{document}